\documentclass{article}
\usepackage[letterpaper, left=1in, right=1in, top=1in,bottom=1in]{geometry}

\usepackage{graphicx}       
\usepackage{amsmath}        
\usepackage[T1]{fontenc}        
\usepackage{url}              
\usepackage{amsfonts}       
\usepackage{nicefrac}       
\usepackage{microtype}      
\usepackage{lipsum}         
\usepackage{fancyhdr}       
\usepackage{graphicx}       
\graphicspath{{media/}}     
\usepackage{subfigure}      
\usepackage{enumitem}       
\usepackage{booktabs}       
\usepackage{mathrsfs}       
\usepackage{tikz}           
\usepackage{cancel}         
\usepackage{amssymb}        
\usepackage{mathtools}   
\usepackage{amsthm}         
\usepackage{algorithm}
\usepackage{algpseudocode}
\usepackage{natbib}

\usepackage{auxiliary}      

\usepackage{color}        

\usepackage[normalem]{ulem} 

\usepackage{hyperref}       
\usepackage[capitalise]{cleveref}   

\title{Optimal Exploration of New Products under Assortment Decisions}
\author{
Jackie Baek\thanks{Stern School of Business, New York University, \texttt{baek@stern.nyu.edu}} \and 
Atanas Dinev\thanks{Massachusetts Institute of Technology, \texttt{adinev@mit.edu}} \and Thodoris Lykouris\thanks{Massachusetts Institute of Technology, \texttt{lykouris@mit.edu}}
}
\date{}

\begin{document}

\maketitle

\begin{abstract}
We study online learning for new products on a platform that makes capacity-constrained assortment decisions on which products to offer. For a newly listed product, its quality is initially unknown, and quality information propagates through social learning: when a customer purchases a new product and leaves a review, its quality is revealed to both the platform and future customers. Since reviews require purchases, the platform must feature new products in the assortment (``explore'') to generate reviews to learn about new products. Such exploration is costly because customer demand for new products is lower than for incumbent products. We characterize the optimal assortments for exploration to minimize regret, addressing two questions. 
\emph{(1)} Should the platform offer a new product alone or alongside incumbent products?
The former maximizes the purchase probability of the new product but yields lower short-term revenue. Despite the lower purchase probability, we show it is always optimal to pair the new product with the top incumbent products.
\emph{(2)} With multiple new products, should the platform explore them simultaneously or one at a time? We show that the optimal number of new products to explore simultaneously has a simple threshold structure: it increases with the ``potential'' of the new products and, surprisingly, does not depend on their individual purchase probabilities. 
We also show that two canonical bandit algorithms, UCB and Thompson Sampling, both fail in this setting for opposite reasons: UCB over-explores while Thompson Sampling under-explores. Our results provide structural insights on how platforms should learn about new products through assortment decisions.
\end{abstract}

\section{Introduction}\label{sec:intro}

Online marketplaces such as Amazon, Airbnb, and Etsy continuously expand their offerings by adding new listings each day, yet every new item arrives with no transaction history. Customers rely heavily on reviews and ratings to inform their purchase decisions, making a product's review history a key driver of demand \citep{chevalier2006effect,zhu2010impact}. Without any reviews, neither customers nor the platform can assess a new product's quality, keeping potentially high-quality products hidden beneath incumbent best-sellers. We study a setting where quality information propagates through \textit{social learning}: early purchases generate reviews that reveal quality to future customers and the platform \citep{crapis2017monopoly, besbes2018information,ifrach2019bayesian,chen2021reviews,acemoglu2022learning,guo2024leveraging,baek2024social}.

We study this problem through the lens of \textit{capacitated assortment} decisions. Because online marketplaces host millions of products, only a small subset can be shown to a customer at any time. For example, upon a keyword search for ``sweater,'' the platform must choose which among thousands of results appear on the first page, where nearly all customer attention is concentrated. Since social learning requires purchases, and purchases require exposure, the platform's assortment decisions directly govern when and how quickly quality information is revealed. We refer to the act of deliberately featuring a new product in the assortment as \textit{exploration}. Exploration is costly: because customers prefer established products with known appeal over new products of uncertain quality \citep{muthukrishnan2009ambiguity}, featuring new product displaces known products that would have generated reliable revenue. Yet exploration is necessary to learn: a new product may be of higher quality than existing products, but learning this requires purchases, which cannot happen without exposure. This paper studies how a platform should design exploration policies that balance these competing objectives.

We consider a platform that makes sequential assortment decisions over two types of products: \textit{incumbents} (existing products with an established track record) and \textit{entrants} (newly listed products). We model customer choice using the multinomial logit (MNL) framework, a standard in revenue management, where each product carries an \textit{attraction parameter} reflecting its quality and appeal to customers. For incumbents, this parameter is known to both the platform and customers. For entrants, the parameter is unknown to both, but a prior distribution over it is common knowledge.  Information about an entrant's attraction parameter is gained only when a customer purchases the product and leaves a review (we assume all purchases lead to reviews).
Thus, to learn about the entrant, the platform must include it in the assortment, and it must be purchased by a customer. Since the assortment is capacity-constrained, including an entrant displaces an incumbent and hence decreases short-term revenue, creating the central trade-off of our problem.

A key feature of our model is that customer behavior is governed by the current state of reviews. The platform knows the choice model at any point in time; what is unknown (to both the platform and customers) is the \textit{true} attraction parameter of each entrant. 
Learning this parameter only happens when a customer purchases the entrant and leaves a review. Observed choices that do not involve an entrant purchase carry no new information. This distinguishes our setting from MNL bandits, where the choice model itself is unknown and is learned from every customer choice.

The central question this paper addresses is \textit{how} to explore efficiently. This is in contrast to the classical exploration-exploitation tradeoff in multi-armed bandits, which centers on \textit{whether} to explore. In our setting, exploration is necessary: the platform must include every entrant in the assortment to learn about it. The non-trivial question is how to structure exploration assortments given the combinatorial decision space. We address this through two concrete questions:

\begin{enumerate}
  \item \emph{Should the platform offer an entrant product alone, or offer it together with incumbents?}
  Offering an assortment with \textit{only} an entrant product maximizes its purchase probability and thus minimizes the time until a review arrives. However, every round spent showing only the entrant sacrifices the revenue that incumbents would have generated. Pairing the new product with strong incumbents raises per-round revenue but lowers the new product's purchase probability through substitution, extending the time until the new product is purchased.

  \item \emph{With multiple entrants, is it better to explore them simultaneously or one at a time?}
  Including many entrants speeds up the rate at which some entrant is purchased, but each individual entrant's purchase probability drops, and the per-round revenue may suffer from displacing too many incumbents.
\end{enumerate}

\subsection{Our contribution}

\paragraph{Model.} 
The platform decides on an assortment of at most $c$ items to offer at each time step, and customers choose among these items according to a multinomial logit (MNL) model. The attraction parameter $w_i$ for product $i$ is known for incumbent products and unknown for entrants.
For entrants, there is a known prior distribution $\mathcal{F}$ from which their attraction parameters are drawn. When a customer purchases an entrant, they leave a review for it\footnote{Our results easily extend if customers leave a review with some probability.}, which reveals its true attraction parameter $w_i$ to future customers. Prior to a purchase, the entrant has a cold-start attraction parameter $\tildew$. This paper focuses on the noiseless setting where a single purchase fully reveals the true attraction parameter (we discuss this assumption below). We study an infinite-horizon setting and measure performance by regret, defined as the revenue loss relative to the revenue from the optimal assortment when all attraction parameters are known. The full model is formalized in Section~\ref{sec:model}.

\paragraph{Our results.} 
Under this model, we characterize the optimal exploration policy that minimizes total regret, when all items have the same revenue. In particular, our results are summarized below.
\begin{enumerate}
  \item \textbf{Single entrant}: With one entrant and multiple incumbents (Section~\ref{sec:exploring_single_entrant}), the optimal assortment combines the entrant with the $c-1$ most attractive incumbents. This is optimal for any prior distribution $\mathcal{F}$ where there is a non-zero probability that the optimal assortment contains the entrant (irrespective of the incumbent or nominal attraction parameters). This result offers a clear, easy-to-implement guideline for platforms: include top incumbents in the assortment along with the entrant.
  \item \textbf{Multiple entrants}: With multiple entrants (Section~\ref{sec:optimal_exploration_multiple_entrants}), the optimal assortment displays $\ell$ of them together with $c-\ell$  of the top incumbents. We derive a closed-form expression for the optimal $\ell$, which increases when entrants are more likely to be valuable compared to  incumbents (based on the prior $\mathcal{F}$ and the weights $w_i$). That is, when entrants have high potential, then exploring with more entrants simultaneously is beneficial so that the platform can discover a high-value entrant   sooner. 

  \item \textbf{Canonical bandit algorithms fail}: 
  In Section~\ref{sec:comparison_to_simple_strategies}, we show that two widely used bandit algorithms, Upper Confidence Bounds (UCB) and Thompson Sampling  (TS), can incur regret that is arbitrarily larger than the optimal exploration policy. Interestingly, we show that UCB and TS fail due to opposite reasons: UCB tends to over-explore (include too many entrants), while TS tends to under-explore (include too few entrants). Although our model has resemblance to classical bandit models, this negative result shows that canonical tools for the classical setting fundamentally fail. We also show that simple heuristic policies such as ``explore as many entrants as possible'' or ``explore one entrant at a time'' also fail for similar reasons to UCB and TS respectively.
  
  \item \textbf{Extensions.} Finally, in Section~\ref{sec:extensions}, we provide extensions of the basic model that allow for heterogeneous incumbent rewards, heterogeneous entrant priors (albeit only for two entrants), and noisy observations upon purchase (albeit only for a single entrant). Extending those results to more general settings is an interesting open direction that seems to require new techniques (as we discuss in Section~\ref{sec:conclusion}).
\end{enumerate}

Together, these contributions provide structural insights that translate into prescriptive guidance for platforms. 
Our results show that optimal exploration strategies have clean characterizations which scale naturally from single to multiple entrants, and cannot be replaced by off-the-shelf bandit algorithms.

\paragraph{Main modeling assumptions.} We make two crucial simplifying modeling assumptions: \emph{undiscounted infinite horizon}, \emph{homogeneous entrant priors}, and \emph{noiseless learning} (a single review reveals the attraction parameter exactly). Under a finite horizon, it may be optimal not to explore at all due to a short horizon, which reintroduces the question of \emph{whether} to explore. 
Therefore, the infinite horizon assumption ensures that exploration is always necessary, allowing us to isolate the problem of designing optimal exploration policies. 
The  latter two assumptions are related and are made for simplicity, as the problem is interesting and non-trivial even with these assumptions.
Relaxing the setting to noisy learning, where each review reveals a noisy signal of the attraction parameter, would require using a finite horizon (as the regret of all policies  will be infinite under an infinite horizon), as well as handling heterogeneous entrant priors.
We make progress in relaxing this assumption in Sections \ref{sec:heterogeneous_priors} and \ref{sec:single_entrant_noisy_obs}, and we further discuss this in future directions in Section~\ref{sec:conclusion}.

\subsection{Related work} \label{sec:related_work}

\paragraph{Multi-armed bandits.}
Our problem is related to, but distinct from, the classical multi‑armed bandit literature \citep{lai1985asymptotically}. In canonical bandit problems, the fundamental question is deciding whether to explore or to exploit, and algorithms such as UCB~\citep{auer2002finite} or Thompson sampling~\citep{thompson1933likelihood} achieve this trade-off optimally in many bandit environments. By contrast, as described above, this work focuses on a different question of how to explore efficiently. Even when the platform decides to explore a new product, the optimal way to explore it remains non-obvious. Our work focuses on the design of optimal exploration strategies rather than the classical exploration–exploitation trade-off.

\paragraph{MNL-bandits.}
Our model resembles the line of work that studies assortment decisions under the MNL-bandit model, where customer choices follow an unknown multinomial logit model. Explore-then-exploit policies have been analyzed in this setting \citep{rusmevichientong2010dynamic,saure2013optimal}, and the same holds for UCB-based \citep{agrawal2019mnl} and TS-based~\citep{agrawal2017thompson} approaches, with lower bounds established in \cite{chen2018note}. Contextual extensions include UCB and TS policies \citep{chen2020dynamic,oh2019thompson,oh2021multinomial} with optimal regret bounds\citep{lee2024nearly,lee2025improved} and in the adversarial setting \citep{perivier2022dynamic}. \cite{dong2023pasta} and \cite{han2025learning} study learning optimal assortments in an offline setting. 

A key assumption in this line of work is that customers behave according to a \textit{fixed} choice model, but this choice model is \textit{unknown} to the platform. The analogous interpretation of MNL-bandits in our setting would be that the \textit{customers know the true parameters} of all products and act based on this knowledge, while the platform does not. In contrast, we study a setting where, when a new product appears, \textit{neither the platform nor the customers know its true parameter}, and all knowledge is common across the platform and customers. Therefore, unlike in MNL-bandits, the current choice model is always known to the platform, but this choice model can change over time as the quality of an entrant is revealed through customer reviews. This distinction separates our framework from the standard MNL-bandit formulation. 

\paragraph{Assortment optimization under MNL without learning.}
Assortment optimization is a foundational topic in operations and revenue management, where a seller selects a subset of products to maximize expected revenue under substitution effects modeled by discrete choice. The multinomial logit model, originating from \cite{luce1959individual} and \cite{mcfadden1972conditional}, has become the standard due to its tractability and interpretability. Under MNL, the unconstrained assortment problem admits a simple structure: the optimal assortment is revenue-ordered, containing all products whose prices exceed a threshold \citep{talluri2004revenue}.
This result allows polynomial-time algorithms and linear-programming formulations \citep{gallego2015general}.

Subsequent work generalized this classical setting to include cardinality and capacity constraints \citep{rusmevichientong2010dynamic,rusmevichientong2014assortment,sumida2021revenue,desir2022capacitated}. Our work considers cardinality constraints, motivated by a limited number of products that can be shown on a webpage. We assume that the revenue of all products are equal; therefore, if all parameters were known, the optimal assortment is to simply offer the products with the highest attraction parameters. We leave as future direction to consider the setting where revenues differ. 

There is also recent work that studies assortment decisions under visibility or fairness constraints, where there are constraints to ensure a minimum requirement for the offering or the market share of each product \citep{chen2022fair,lu2023simple,barre2024assortment,housni2024assortment,zhu2025unified}. One motivation of such constraints is to provide visibility to newer or less established products; our work shares this motivation but differs by explicitly modeling the learning dynamics. \cite{besbes2024fault} study repeated assortment recommendations where a platform optimizes measurable engagement rather than utilities, and they show this can systematically under-recommend niche items even when those items deliver substantial utility to a minority of users.

\paragraph{Endogenous behavior and social learning.}
Our setting is motivated by social learning in marketplaces where information diffuses through reviews. Prior works show how such dynamics can shape demand and platform decisions \citep{crapis2017monopoly, besbes2018information,ifrach2019bayesian,chen2021reviews,acemoglu2022learning,guo2024leveraging,baek2024social}. 
\cite{bai2025endogenous} develop a choice model with a reference effect driven by customer feedback; like ours, only purchases provide feedback, but they focus on optimization under fixed parameters rather than learning dynamics.
Endogenous dynamics have also been studied via reference-price effects in dynamic pricing \citep{popescu2007dynamic,chen2017efficient,den2022dynamic} and within-assortment reference effects \citep{bai2023assortment}.  There is also a line of work in mechanism design where customers do not know their own value and learn them over time \citep{weed2016online,feng2018learning,papadimitriou2016complexity,chawla2016simple,kandasamy2020vcg}.

\section{Model}\label{sec:model}
Consider a set $\mathcal{N}=\{1,\ldots,n\}$ of $n$ products, where the first $m \leq n$ products are \textit{entrants}, and the rest are \textit{incumbents}. Each product $i \in \mathcal{N}$ has an attraction parameter $w_i$, and we denote by 0 the outside option which has an attraction parameter $w_0 = 1$. The attraction parameter of each incumbent is known, while the attraction parameter of each entrant is initially unknown. For each entrant, their attraction parameter is independently and identically distributed (i.i.d.) from a known prior $\mathcal{F}$, i.e., $w_i \sim_{i.i.d} \mathcal{F}$ for $i =1 , \ldots, m$. Let the support of $\mathcal{F}$ be $[\underline{\theta}, \overline{\theta}]$. Every product has the same reward of 1.

The attraction parameter of an entrant is revealed after the product is purchased. As this information is gained over time, the set of products for which the attraction parameters are known expands. Let $\mathcal{I}_t$ be the set of products whose attraction parameters are known before round $t$. Initially, $\mathcal{I}_{1} = \{m+1, \ldots,n\}$. Letting $\mathcal{H}_t = \{(S_{s}, Y_{s}, w_{Y_{s}})\}_{s=1}^{t-1}$ be the history of all information observed up to time $t$, the following occurs at each time $t = 1, \dots, T$: 
\begin{enumerate}
    \item The platform offers a set of products $S_t$ of size at most $c$ based on a policy $\pi$ that maps the history $\mathcal{H}_t$ to a (possibly randomized) assortment $S_t$. (For simplicity, we assume at least $c$ incumbents.\footnote{This can be achieved by introducing dummy incumbents with attraction parameter equal to $0$. This assumption allows a revenue-maximizing assortment that is restricted to known products to be always well defined, which simplifies exposition.}). 
    
    \item The customer chooses a product from $S_t$ or the outside option according to the choice model described below. Let $Y_t \in S_t \cup \{0\}$ denote their choice.
    
    \item The platform obtains a reward of $R_t = 1$ if a product is chosen ($Y_t \in S_t$), and a reward of $R_t = 0$ otherwise ($Y_t = 0$). 
    
    \item If an unknown product is chosen ($Y_t \not \in \mathcal{I}_t \cup \{0\}$), 
    the attraction parameter $w_{Y_t}$ is observed and the set of known products $\mathcal{I}_t$ is updated, i.e., $\mathcal{I}_{t+1} = \mathcal{I}_t \cup \{Y_t\}$. Otherwise, the set of known products is not updated, i.e, $\mathcal{I}_{t+1} = \mathcal{I}_t$.
\end{enumerate}

Customers choose among the offered products according to a multinomial logit (MNL) model, where each product $i$ has attraction parameter $w_i(t)$ at time $t$.
For each known product $i \in \mathcal{I}_t$, $w_i(t) = w_i$, and for each unknown product $i \notin \mathcal{I}_t$, $w_i(t)$ is the same and equal to a common \emph{cold-start attraction parameter}
$\tildew$. We take $\tildew$ as a known primitive of the model; it can be thought of as a prior-based estimate of an entrant's attractiveness (e.g., the mean or a percentile of $\mathcal{F}$), though our results do not rely on any particular form. The weight of the outside option stays constant at $w_0(t) = w_0$. Specifically, if $S$ is offered at time $t$, the probability that item $i \in S$ is chosen is:
$$\prob[\text{$i \in S$ is chosen} \;|\; \text{$S$ is offered at round $t$}] = 
 \frac{w_{i}(t)}{\sum_{j \in S} w_{j}(t) + w_0(t)}.
$$
Correspondingly, the expected revenue of offering $S$ at round $t$ is 
\begin{equation}\label{notation:revenue_assortment}
    \textsc{Rev}_t(S) = \frac{\sum_{i \in S} w_i(t)}{\sum_{i \in S} w_i(t) + w_0(t)}.
\end{equation}
The quantities $(n, m, \{w_i\}_{i=m+1}^n, \mathcal{F}, c)$ define an instance of our problem.

\paragraph{Notation.} For a set of known products $\mathcal{I}$ and $k \leq |\mathcal{I}|$, let $w^{\star}_{(k)}(\mathcal{I})$ be the attraction parameter of the $k$-th most attractive product in $\mathcal{I}$, and let $W^{\star}_{(k)}(\mathcal{I}) = \sum_{i=1}^{k} w^{\star}_{(i)}(\mathcal{I})$ be the sum of the attraction parameters of the $k$ most attractive products in $\mathcal{I}$. Let $S^{\star}_{(k)}(\mathcal{I}) \in \argmax_{S \subseteq \mathcal{I},|S| \leq k} \sum_{i \in S} w_i$ be any set of $k$ most attractive products in $\mathcal{I}$. 

\paragraph{Objective.}
The total expected reward obtained by a policy $\pi$ for a horizon $T$ is $\expect\Big[\sum_{t=1}^{T} R_t \Big]$.
We compare this to a benchmark where the attraction parameters of all products are known, in which it is optimal to offer the $c$ products with the highest attraction parameters.  
Formally, we define the infinite-horizon regret (or regret as a shorthand) of a policy $\pi$ as 

\begin{equation*}\label{eq:regret_definition}
    \textsc{Reg}(\pi) = \liminf_{T \to \infty} \expect \Bigg[\sum_{t=1}^{T} \left( \textsc{OPT} -R_t \right) \Bigg] \quad \text{ where } \quad \textsc{OPT} = \frac{W^{\star}_{(c)}(\mathcal{N})}{W^{\star}_{(c)}(\mathcal{N})+w_0}.
\end{equation*}
 and  the expectation is over the realizations $w_i \sim_{i.i.d.} \mathcal{F}$ for $i = 1, \ldots, m$, the randomness in the customers' choices and the (potential) randomness in the platform's policy $\pi$. A policy $\pi$ is \textit{optimal} if it minimizes the infinite-horizon regret. Note that once the attraction parameters of all products are known, the per-round regret becomes zero. Therefore, the expected regret will be finite for any policy that ensures exploration of all unknown products and then offers the revenue-maximizing assortment. 

\section{Optimal exploration with a single entrant}\label{sec:exploring_single_entrant}
In this section, we consider a single entrant with multiple incumbents.
We consider instances where offering the myopically revenue-maximizing assortment is not the optimal policy. Specifically, we assume that the entrant is not already included in the myopic revenue-maximizing assortment, i.e., $\tildew <  w^{\star}_{(c)}(\mathcal{I}_1)$, and we assume that there is a positive probability that the entrant would be in the optimal assortment once its attraction parameter is known, i.e.,  $\overline{\theta} > w^{\star}_{(c)}(\mathcal{I}_1)$. Under these conditions, exploration (offering the entrant) is necessary but results in short-term revenue loss. We characterize the platform's optimal exploration policy.

Since it is necessary to offer the entrant until it is purchased and its attraction parameter is revealed, the main question becomes: which \emph{other} products should be offered?
On one extreme, the platform can offer the entrant on its own; this minimizes the number of rounds until a purchase of the entrant but also minimizes the per-round revenue. On the other extreme, the platform can offer the entrant with the $c-1$ most attractive incumbents; this maximizes the per-round revenue, but also maximizes the number of rounds until a purchase of the entrant. We show that the optimal policy is the latter. 

For a set $S \subseteq \mathcal{I}_1$ of incumbents, let $\pi(S)$ be the policy that offers the entrant along with $S$ until the entrant is purchased. We refer to the set of rounds up to and including the unknown entrant's purchase as the \emph{exploration period}. After the exploration period, $\pi(S)$ offers the $c$ products with the highest attraction parameters.
To ease notation, we omit the dependence on $\mathcal{I}_1$ and write
$S^\star_{(c-1)} = S^\star_{(c-1)}(\mathcal{I}_1)$, $w^{\star}_{(c)} = w^{\star}_{(c)}(\mathcal{I}_1)$.

\begin{theorem}\label{thm:explore_one_unknown_with_best}
    For any instance with one unknown entrant and $\overline{\theta} > w^{\star}_{(c)}$, 
    $\pi(S_{(c-1)}^\star)$ is an optimal policy.
    That is, it is optimal to explore the unknown entrant with the $c-1$ most attractive incumbents. 
\end{theorem}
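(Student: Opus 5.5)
The plan is to reduce the problem to a one-state stopping problem and then compare explicit closed-form regrets. With a single entrant, the information state has only two values: \emph{unknown}, before the entrant is purchased, and \emph{known}, after it. Once the entrant is known, offering $S^\star_{(c)}(\mathcal{N})$ gives zero per-round regret, so every optimal policy does this. All regret is therefore incurred in the unknown state.

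While the entrant is unknown, its true parameter $w_1$ is independent of everything the platform has observed so far. By the tower property, the expected regret of a round in which assortment $S$ is offered is $\mu-\textsc{Rev}_1(S)$, where $\mu=\expect[\textsc{OPT}]$ is a constant. Since $\textsc{OPT}\ge W^\star_{(c)}(\mathcal{I}_1)/(W^\star_{(c)}(\mathcal{I}_1)+1)$ pointwise, and the inequality is strict on the event $w_1>w^\star_{(c)}$, which has positive probability because $\overline{\theta}>w^\star_{(c)}$, we get $\mu>\textsc{Rev}_1(S)$ for every $S$ not containing the entrant. We also have $\mu<1$.

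\textbf{Step 1: restrict to stationary deterministic policies.} The unknown state is the only transient state and it recurs identically, so the problem is a stochastic shortest-path problem with a single non-terminal state. Each action (an assortment) has an expected cost and a termination probability: $0$ if the entrant is excluded, and $\tildew/(W_S+\tildew+1)$ if the entrant is offered with incumbent set $S$, where $W_S=\sum_{i\in S}w_i$.

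\begin{itemize}
\item A round that excludes the entrant costs a strictly positive amount and makes no progress. Deleting such rounds therefore strictly lowers regret.
\item A policy that never offers the entrant has infinite regret.
\item For a possibly history-dependent or randomized policy that always offers the entrant, the total expected regret can be written as $\sum_t \expect[\text{cost}_t\mathbf{1}\{\text{unknown at }t\}]$. Conditioning on the action taken at each round, this is a ratio of mixtures of costs to mixtures of termination probabilities. Equivalently, the optimal value $V$ satisfies the Bellman equation $V=\min_S\{c(S)+(1-p(S))V\}$. In either form, the value is at least $\min_S c(S)/p(S)$.
\end{itemize}

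Hence it suffices to minimize $c(S)/p(S)$ over the stationary policies $\pi(S)$. I expect this reduction to be the main technical obstacle: handling the $\liminf$ and general adaptive, randomized policies rigorously. I will do it through the Bellman inequality $V\ge c(S_t)+(1-p(S_t))V$ applied round by round, then telescoped.

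\textbf{Step 2: closed form for $\pi(S)$.} Under $\pi(S)$, the exploration period is geometric with success probability $p=\tildew/(W_S+\tildew+1)$, and each round costs $\mu-(W_S+\tildew)/(W_S+\tildew+1)$ in expectation. Multiplying the per-round cost by the expected length $1/p$ gives
\[
\textsc{Reg}(\pi(S))=\frac{\mu(W_S+\tildew+1)-(W_S+\tildew)}{\tildew}=\frac{\mu+(\mu-1)(W_S+\tildew)}{\tildew}.
\]
Because $\mu<1$, this expression is strictly decreasing in $W_S$. Over incumbent sets with $|S|\le c-1$, the sum $W_S$ is maximized by $S^\star_{(c-1)}$. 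Combining this with Step 1 shows that $\pi(S^\star_{(c-1)})$ is optimal.

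A notable feature of the computation is that the expected cost per round and the expected exploration length share the same factor $W_S+\tildew+1$. It cancels, leaving regret linear in $W_S$. This is why the revenue gain from strong incumbents always outweighs the slowdown they cause in purchases of the entrant.
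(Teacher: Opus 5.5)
Your proposal is correct and is essentially the paper's proof. The paper also lower-bounds any policy by its exploration-period regret $\sum_S(\alpha^\star-\textsc{Rev}_1(S))\expect[N^\pi(S)]$, discards entrant-free assortments, and applies your ratio-of-mixtures bound via $\sum_{S\ni 1}q(S)\expect[N^\pi(S)]=1$; it handles policies with $\expect[Z^\pi]=\infty$, including those that offer the entrant only sporadically, through a separate infinite-regret case rather than your ``deletion'' heuristic, and then reaches your closed form written as $(\alpha^\star-1)(\tau(S)-o(S))+\alpha^\star o(S)$ with $o(S)=w_0/\tildew$ independent of $S$. The one slip is the direction of the inequality you plan to telescope: you need $c(S_t)+(1-p(S_t))V\ge V$, i.e.\ $c(S)\ge p(S)\,V$ for every assortment (automatic for entrant-free $S$ since $c(S)>0=p(S)$), because $V\ge c(S_t)+(1-p(S_t))V$ would only give an upper bound.
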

To prove the theorem, the following lemma (proven in Appendix~\ref{appendix_sec:proof_lemma_there_exists_an_optimal_policy_which_offers_the_same_assortment_until_the_entrant_is_purchased}) shows that it suffices to consider policies which offer the entrant repeatedly with the same assortment during the exploration period and subsequently offer the revenue-maximizing assortment. Intuitively, if the entrant is not purchased, the set of known products remains the same and thus so does the consumer purchase behavior. As a result, it is optimal to offer the same assortment containing the entrant until its purchase. 

\begin{lemma}\label{lemma:there_exists_an_optimal_policy_which_offers_the_same_assortment_until_the_entrant_is_purchased}
     For any instance with one unknown entrant and $\overline{\theta} > w^{\star}_{(c)}$, there exists an optimal policy which offers the entrant with a fixed set of incumbents until the entrant is purchased and offers the $c$ products with the highest attraction parameters in subsequent rounds.
\end{lemma}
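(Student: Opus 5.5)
We will view the problem as a Markov decision process whose state is the set of known products $\mathcal{I}_t$ together with the revealed parameters. With one entrant there are only two kinds of states: the initial state $\mathcal{I}_1$, where the entrant is unknown, and the absorbing family of states where the entrant's parameter $w_1$ has been revealed. In the revealed states the choice model is fixed and fully known, so offering $S^\star_{(c)}(\mathcal{N})$ has per-round regret zero. Any other assortment has nonnegative per-round regret. Hence, after the entrant's purchase, replacing any policy's continuation by the revenue-maximizing assortment weakly decreases regret. This reduces the problem to choosing what to offer while in the initial state.

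Before the purchase the state never changes, since non-entrant choices reveal nothing. So the history up to the first entrant purchase carries no payoff-relevant information, and the optimization is a stationary, undiscounted, expected-total-cost problem with a single transient state. If the entrant is never offered with positive probability, exploration never ends. Because $\overline{\theta} > w^\star_{(c)}$, with positive probability the entrant belongs to the optimal assortment, and each round then incurs regret bounded below by a positive constant. Such policies therefore have infinite regret, and the set of policies worth considering consists of those that eventually offer the entrant.

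For a stationary choice of assortment $A$ in the initial state, let $r(A)=\textsc{OPT}$-expected loss in one round, i.e.\ $\expect[\textsc{OPT}]-\textsc{Rev}_1(A)$. Let $p(A)$ be the probability the entrant is purchased, with $p(A)=0$ if the entrant is not in $A$. The regret is then
\[
V(A)=\frac{r(A)}{p(A)}+C,
\]
where $C$ is the expected regret after reveal. $C$ is zero under our continuation, and the expression is $+\infty$ when $p(A)=0$. The key step is to show that an arbitrary, possibly randomized and history-dependent, policy $\pi$ in the initial state has regret at least $\min_A V(A)$. Write $q_t$ for the distribution over assortments at round $t$ conditional on no purchase yet. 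Then the regret equals
\[
\sum_t \Pr(\text{no purchase before } t)\,\expect_{q_t}[r(A)],
\]
and $\Pr(\text{no purchase before } t+1)=\Pr(\text{no purchase before } t)\,(1-\expect_{q_t}[p(A)])$. Using $r(A)\ge V^\star p(A)$ with $V^\star=\min_A V(A)$, each term is at least $V^\star$ times the one-step purchase mass. Summing telescopes to at least $V^\star$ times the total purchase probability, which equals $1$ for policies with finite regret. Here $r\ge 0$ is used to handle the $p=0$ assortments. Equality holds for the stationary minimizer. Since $V$ depends only on the set of incumbents in $A$, this set is the fixed set of the lemma, with the entrant included by finiteness.

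The main obstacle is the bookkeeping around the $\liminf$ in the regret definition and the expectation over the unknown $w_1$. The per-round loss must be taken as $\expect[\textsc{OPT}]$ minus revenue, which is legitimate because $w_1$ is independent of everything before the reveal. One must also check that the minimum over the finitely many assortments is attained and that the telescoping bound survives when the total purchase probability is less than one; in that case the regret is infinite, which is exactly the argument above.
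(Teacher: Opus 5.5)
Your proposal is correct and is essentially the paper's proof. The paper likewise lower-bounds the regret by the exploration-period regret with $\textsc{OPT}$ replaced by $\alpha^\star=\expect[\textsc{OPT}]$, using independence of $w_1$ and $Z^\pi$. It then applies $\alpha^\star-\textsc{Rev}_1(S)\ge q(S)\min_{S'}\frac{\alpha^\star-\textsc{Rev}_1(S')}{q(S')}$ weighted by $\expect[N^\pi(S)]$, with $\sum_S q(S)\expect[N^\pi(S)]=1$; this is your time-indexed telescoping regrouped by assortment.

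The only difference is the case split. The paper splits on $\expect[Z^\pi]<\infty$. When $\expect[Z^\pi]=\infty$, it notes that each entrant-containing $S$ is offered at most $1/q(S)$ times in expectation. Hence some incumbent-only $\tilde S$, which has $\alpha^\star-\textsc{Rev}_1(\tilde S)>0$, is offered infinitely often in expectation. That observation is the clean way to make rigorous your ``exactly the argument above'' step for the case $\prob[Z<\infty]<1$.
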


\begin{proof}[Proof of Theorem~\ref{thm:explore_one_unknown_with_best}]

Let $\alpha^{\star}= \expect_{w_1 \sim \mathcal{F}} [\textsc{OPT} ]$ be the expected ex-post optimum. 
For a policy $\pi(S)$, let $\tau(S)$ be the expected total number of rounds during the exploration period, and $o(S)$ be the expected number of rounds that the outside option is chosen during the exploration period.

During the exploration period of the policy $\pi(S)$, the regret is $\alpha^{\star}$ when the outside option is chosen, which happens $o(S)$ times in expectation.
When a product is purchased, the reward is 1 and hence the regret is $\alpha^{\star}-1$; this happens $\tau(S) - o(S)$ times in expectation.
Since $\pi(S)$ does not incur any regret after the exploration period, its total regret is
\begin{equation} \label{eq:exploration_regret}
\textsc{Reg}(\pi(S)) = (\alpha^{\star}-1) (\tau(S) - o(S)) + \alpha^{\star} o(S).
\end{equation}
Next, we show that $o(S)$ is independent of $S$. Using $\tau(S) = 1/\prob[\text{entrant is chosen}|S]$, we can write
\begin{equation*}
    o(S) = \prob[\text{outside option is chosen}|S] \tau(S) 
    = \frac{\prob[\text{outside option is chosen}|S]}{\prob[\text{entrant is chosen}|S]}= \frac{\frac{w_0}{\sum_{i \in S} w_i  + \tildew + w_0} }{\frac{\tildew}{\sum_{i \in S} w_i  + \tildew + w_0} }=
    \frac{w_0}{\tildew}.
\end{equation*}
We note the lack of dependence on $S$ follows from the independence of irrelevant alternatives property of the MNL choice model.
Combining this with \eqref{eq:exploration_regret} and the fact that $\alpha^{\star}< 1$ (as all products have a reward of $1$ and there exists an outside option), minimizing $\textsc{Reg}(\pi(S))$ is equivalent to maximizing $\tau(S)$. Using
$$\tau(S) = \frac{1}{\prob[\text{entrant is chosen}|S]} = \frac{\sum_{i \in S} w_i  + \tildew + w_0}{\tildew},$$
$\tau(S)$ is maximized when $\sum_{i \in S} w_i$ is maximized, which occurs when $S =  S^{\star}_{(c-1)}$. 
As a result, $\pi(S^{\star}_{(c-1)})$ yields lower regret than any other policy $\pi(S)$. Given that there exists an optimal policy which is $\pi(S)$ for some $S$ by Lemma~\ref{lemma:there_exists_an_optimal_policy_which_offers_the_same_assortment_until_the_entrant_is_purchased}, $\pi(S^{\star}_{(c-1)})$ is an optimal policy. 
\end{proof}

\section{Optimal exploration with multiple entrants}\label{sec:optimal_exploration_multiple_entrants}
In the previous section (where there is a single entrant), we focus on \emph{which incumbents} to include in the assortment, along with the entrant.
With multiple entrants, the question becomes \emph{how many entrants} to explore at once and which incumbents to include.
We provide an optimal algorithm for exploring with multiple entrants, described in Section~\ref{subsec:algorithm_main_theorem}.
The rest of this section proves the optimality of this algorithm.

\subsection{Optimal policy for exploring multiple entrants}\label{subsec:algorithm_main_theorem}

Recall that the algorithm in Section \ref{sec:exploring_single_entrant} includes the unknown entrant if it has a positive probability of being in the optimal assortment, and offers it the $c-1$ most attractive incumbents. To generalize this approach to multiple entrants, our algorithm, which we term \textsc{Exploration with Fictitious Assortments} or \textsc{EFA} for short (Algorithm~\ref{alg:optimal_explorer}) explores at least one unknown entrant if it has a positive probability of being in the optimal assortment. In that case, the algorithm explores $\ell$ unknown entrants,\footnote{Note that all unknown entrants have the same attraction parameter $\tildew$ prior to a purchase.} we denote their set by $U_{\ell} \subseteq \mathcal{N} \setminus \mathcal{I}_t$, together with the $c-\ell$ most attractive known products (line \ref{alg_line:offer_set} of Algorithm~\ref{alg:optimal_explorer}), where we find the optimal $\ell$ to minimize regret. 

Given a history of $\mathcal{H}_t =  \{(S_s, Y_s, w_{Y_s})\}_{s=1}^{t-1}$ of all information observed until round $t$, the expected ex-post optimum is\footnote{Note that for any algorithm which explores all unknown entrants eventually $(\mathcal{I}_t \to_{t \to \infty} \mathcal{N})$ and the expected ex-post optimum converges to the ex-post optimum, i.e., $\textsc{OPT}_t \to_{t \to \infty} \textsc{OPT}$. }
\begin{equation}\label{eq:a-posteriori_opt}
    \textsc{OPT}_t = \expect_{w_j \sim \mathcal{F}, j \not \in \mathcal{I}_t} \Big[ \textsc{OPT} \Big| \mathcal{H}_t\Big].
\end{equation}  
Our algorithm explores at least one unknown product if the expected ex-post optimum is higher than the revenue of the myopic assortment $S^{\star}_{(c)}(\mathcal{I}_t)$ (line~\ref{alg_line:worth_it_to_explore} of Algorithm~\ref{alg:optimal_explorer}). 

The crux of our algorithm is to optimize $\ell$, the number of unknown products to explore. To do so, a useful object is a collection of \textit{fictitious} assortments of known products denoted by $S_t(\ell) \subseteq \mathcal{I}_t$ for $\ell \in \{1, \ldots, c\}$. The fictitious assortment $S_t(\ell)$ contains the $c-\ell$ most attractive known products and $\ell$ replicas of the $(c-\ell+1)$-th most attractive known product. Note that $S_t(1)$ is the assortment of the $c$ most attractive known products (which corresponds to the myopic revenue-maximizing assortment), and $S_t(c)$ is the assortment with $c$ replicas of the most attractive known product.
We denote the revenue of $S_t(\ell)$ by 
\begin{equation}\label{eq:alpha_l_I_t}
    \alpha_{t}(\ell) = \textsc{Rev}_t(S_t(\ell)) =  \frac{W_{(c-\ell)}^{\star}(\mathcal{I}_t) + \ell  w_{(c-\ell+1)}^{\star}(\mathcal{I}_t)}{W_{(c-\ell)}^{\star}(\mathcal{I}_t) + \ell  w_{(c-\ell+1)}^{\star}(\mathcal{I}_t) + w_0}.
\end{equation} 
The revenue $\alpha_t(\ell)$ is non-decreasing as $\ell$ increases by the definition of the fictitious assortments. 
We illustrate the construction of the fictitious assortments in the following example. 

\begin{example}\label{ex:example_algorithm_step}
Consider an instance with $n = 9$ products, $m = 5$ unknown products, assortment capacity $c = 4$, and each unknown entrant's attraction parameter is $10$ with probability $0.1$ and $5$ otherwise. Suppose at round $t$ the known products are $\mathcal{I}_t = \{5, 6, 7, 8, 9\}$, and their attraction parameters are $w_5= 5, w_6 = 6, w_7 = 7, w_8 = 8, w_9 = 9$. Using \eqref{eq:a-posteriori_opt}, the expected ex-post optimum is $\textsc{OPT}_t = 0.969$. The fictitious assortments and their respective revenues are:
\begin{align*}
    S_t(1) &= \{9,8,7,6\}, \alpha_t(1) = 0.967 \\
    S_t(2) &= \{9,8, 7, 7\}, \alpha_t(2) = 0.968 \\
    S_t(3) &= \{9,8,8,8\}, \alpha_t(3) = 0.970 \\
    S_t(4) &= \{9,9,9,9\}, \alpha_t(4) = 0.972. 
\end{align*}
\end{example}

The number of unknown products to explore, $\ell_t$, is chosen by taking the largest number such that $\textsc{OPT}_t \geq \alpha_t(\ell_t)$ (line~\ref{alg_line:set_number_products} of Algorithm~\ref{alg:optimal_explorer}).
In Example~\ref{ex:example_algorithm_step}, $\ell_t = 2$ as $\alpha_t(2) < \textsc{OPT}_t < \alpha_t(3)$.
The algorithm then offers $\ell_t$ unknown products along with $c-\ell_t$ most attractive known products.

\begin{algorithm}[!htbp]
\caption{\textsc{Exploration with Fictitious Assortments} (\textsc{EFA})}\label{alg:optimal_explorer} 
\begin{algorithmic}[1]
\State \text{Initialize }$I_1 = \{m+1, \ldots, n\}$ and $m_1 = m$.
\State \textbf{for} $t = 0,1,2, \ldots$:
\State \hspace*{1em} Calculate $\textsc{Rev}_t(S^{\star}_{(c)}(\mathcal{I}_t))$, $\textsc{OPT}_t$, and $\alpha_{t}(\ell)$ for all $\ell = 1, \ldots, c$ using  \eqref{notation:revenue_assortment}, \eqref{eq:a-posteriori_opt}, and \eqref{eq:alpha_l_I_t}.  
\State \hspace*{1em} \textbf{if} $\textsc{OPT}_t > \textsc{Rev}_t(S^{\star}_{(c)}(\mathcal{I}_t)):$ \Comment{ exploration-worthy condition} \label{alg_line:worth_it_to_explore}
\State \hspace*{1em}\hspace*{1em}  Set $\ell_t =  \max\{ \ell \in \{1, \ldots, k_t\}: \textsc{OPT}_{t} \geq \alpha_{t}(\ell) \}$ where $k_t = \min(c, m_t)$. \label{alg_line:set_number_products}
\State \hspace*{1em}\hspace*{1em}  Offer set $S_t  =U_{\ell_t} \cup S^{\star}_{(c-\ell_t)}(\mathcal{I}_t)$;  observe choice $Y_t \in S_t \cup \{0\}$. \label{alg_line:offer_set}
\State \hspace*{1em}\hspace*{1em}  If $Y_t \in U_{\ell_t}$ then $\mathcal{I}_{t+1} = \mathcal{I}_t \cup \{Y_t\}$ and $m_{t+1} = m_t-1$; else $\mathcal{I}_{t+1} = \mathcal{I}_t $ and $m_{t+1} = m_t$. 
\State \hspace*{1em} \textbf{else}: Offer set $S_t = S^{\star}_{(c)}(\mathcal{I}_t)$; observe choice $Y_t \in S_t \cup \{0\}$; $\mathcal{I}_{t+1} = \mathcal{I}_t $ and $m_{t+1} = m_t$.  \label{alg_line:not_exploration_worthy_case}
\end{algorithmic}
\end{algorithm}

Our main result is that \textsc{EFA} (Algorithm~\ref{alg:optimal_explorer}) minimizes the expected infinite-horizon regret. 

\begin{theorem}\label{thm:optimality_of_algo}
    For any instance, \textsc{EFA} (Algorithm \ref{alg:optimal_explorer}) is optimal. 
\end{theorem}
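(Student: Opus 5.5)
The plan is to cast the problem as a Markov decision process whose state is the set of known products $\mathcal{I}_t$, equivalently the multiset of revealed entrant parameters plus the number $m_t$ of still-unknown entrants. I will then show that \textsc{EFA} satisfies the Bellman optimality equation for the regret-to-go. As in Lemma~\ref{lemma:there_exists_an_optimal_policy_which_offers_the_same_assortment_until_the_entrant_is_purchased}, the state changes only when an unknown entrant is purchased. So I first reduce to stationary policies that, in each state, offer a fixed assortment until the next entrant purchase. By symmetry of the unknown entrants, such an assortment is described by a number $\ell$ of unknown entrants together with a set $S$ of known products. The argument proceeds by induction on $m_t$. For $m_t=0$ the regret-to-go is $0$. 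A state where the exploration-worthy test fails is absorbing for \textsc{EFA}. There I must show that stopping exploration forever is optimal, which should follow from a convexity/supermartingale argument: if exploring is not worth it now, revealing more entrants only lowers the expected value of further exploration.

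Within a single state, the key computation generalizes the proof of Theorem~\ref{thm:explore_one_unknown_with_best}. Let $V(\mathcal{I})$ denote the optimal regret-to-go and $\alpha^\star$ the limiting per-round benchmark. Instead of tracking regret directly, I rewrite regret as $\sum_t(\textsc{OPT}_t - R_t)$ plus a martingale term; this is valid because $\textsc{OPT}_t$ is a martingale that converges to $\textsc{OPT}$. Offering $\ell$ unknowns with known set $S$, the phase length is geometric with success probability $\ell\tildew/(\ell\tildew+W_S+w_0)$. The expected cost of the phase is then
\[
\frac{\ell\tildew+W_S+w_0}{\ell\tildew}\Big(\textsc{OPT}_t-\frac{\ell\tildew+W_S}{\ell\tildew+W_S+w_0}\Big)=\frac{(\textsc{OPT}_t-1)(\ell\tildew+W_S)+\textsc{OPT}_t w_0}{\ell\tildew}.
\]
The continuation value is $\expect[V(\mathcal{I}\cup\{j\})]$, and because the entrants are i.i.d.\ it does not depend on $(\ell,S)$. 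Hence the per-state problem decouples: minimize the displayed expression over $\ell$ and $S$. Since $\textsc{OPT}_t<1$, for fixed $\ell$ we take $S=S^\star_{(c-\ell)}(\mathcal{I}_t)$, which recovers the structure of Theorem~\ref{thm:explore_one_unknown_with_best}. The objective becomes $(\textsc{OPT}_t-1)(1+W^\star_{(c-\ell)}/(\ell\tildew))+\textsc{OPT}_t w_0/(\ell\tildew)$, i.e.\ a function of the form $A+g(\ell)/\ell$.

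Next I compare $\ell$ with $\ell+1$. The difference involves $(\textsc{OPT}_t-1)W^\star_{(c-\ell)}+\textsc{OPT}_t w_0$ evaluated at the two values of $\ell$, and a short algebraic manipulation should show that $\ell+1$ beats $\ell$ iff $\textsc{OPT}_t \ge \alpha_t(\ell+1)$. This is exactly where the fictitious assortment enters: $W^\star_{(c-\ell-1)}+(\ell+1)w^\star_{(c-\ell)}$ is the weight obtained by linearly interpolating $W^\star_{(c-k)}$ between consecutive values of $k$. Since $\alpha_t(\ell)$ is nondecreasing, the comparison is unimodal, so the optimal $\ell$ is the largest one with $\textsc{OPT}_t\ge\alpha_t(\ell)$, capped at $k_t$. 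Comparing exploring with not exploring then gives the threshold in line~\ref{alg_line:worth_it_to_explore}.

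The main obstacle is the decoupling claim. The continuation is not literally independent of the action: it is cleanly independent only after the regret is rewritten against the martingale benchmark $\textsc{OPT}_t$. I would verify carefully that $\expect[\textsc{OPT}_{t+1}\mid\text{purchase of an unknown}]=\textsc{OPT}_t$, so that phase costs with benchmark $\textsc{OPT}_t$ telescope to the true regret. I would also justify the reduction to stationary per-state policies, including the liminf and any randomization, via Lemma~\ref{lemma:there_exists_an_optimal_policy_which_offers_the_same_assortment_until_the_entrant_is_purchased}-style arguments. A second delicate point is the absorbing states: once \textsc{EFA} stops, it never restarts exploring, so I must show that no sequence of future explorations is profitable. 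For this I would use that $\textsc{OPT}_t$ is a martingale while the myopic revenue is a submartingale.
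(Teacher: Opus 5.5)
Apart from one step, your plan is the paper's proof. The phase decomposition against the benchmark $\textsc{OPT}_t$ is Lemma~\ref{lemma:epoch_optimal_policy_is_optimal}; there the continuation does not depend on the action, because the purchased entrant's value is a fresh draw from $\mathcal{F}$. The top-$(c-\ell)$ completion is Lemma~\ref{lemma:optimal_completion_given_ell_unknown_prods}. Your direct comparison of $\ell$ with $\ell+1$ is Lemma~\ref{lemma:expression_ratio_reward_loss_time_gain} in another form, and your algebra there is correct. Your worry about the continuation is milder than you suggest. The phase length is independent of the unrevealed values, so the ex-post benchmark already averages to $\textsc{OPT}_t$ by the tower property.

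The gap is in the absorbing-state step. Your martingale/monotonicity argument uses that the best known revenue is pathwise nondecreasing and that $\textsc{OPT}_t$ is a martingale dominating it. That only shows that once $\textsc{OPT}_t=\textsc{Rev}_t(S^\star_{(c)}(\mathcal{I}_t))$, this equality persists under every policy (Lemma~\ref{lemma:if_H_1_is_terminal_H_t_is_terminal_with_probability_1}). It does not show that offering unknowns is unprofitable there. You also need that, at such a state, every assortment containing unknown entrants valued at $\tilde w$ has revenue at most $\textsc{OPT}_t$; this is Lemma~\ref{lemma:if_H_is_terminal_the_expected_opt_equals_the_maximum_instantaneous_revenue}. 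Otherwise a deviation collects negative per-round regret for one phase, lands back in a terminal state, and beats \textsc{EFA}'s zero.

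This is a static comparison between $\tilde w$ and $\mathcal{F}$, not a martingale fact. In your notation: at a terminal state $\textsc{OPT}_t=\alpha_t(1)$, so your comparison makes the phase cost nondecreasing in $\ell$. At $\ell=1$ with the top $c-1$ known products it equals $w_0\,(w^\star_{(c)}(\mathcal{I}_t)-\tilde w)/(\tilde w\,(W^\star_{(c)}(\mathcal{I}_t)+w_0))$, which is nonnegative iff $\tilde w\le w^\star_{(c)}(\mathcal{I}_t)$. Terminality only gives that the essential supremum of $\mathcal{F}$ is at most $w^\star_{(c)}(\mathcal{I}_t)$. So you need the cold start to be compatible with the prior, e.g.\ $\Pr_{w\sim\mathcal{F}}[w\ge\tilde w]>0$, which the paper uses at exactly this point via $\tilde w\in[\underline{\theta},\overline{\theta}]$.

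Without such an assumption the conclusion fails. Take $c=1$, one incumbent of weight $1$, $\mathcal{F}$ a point mass at $1/2$, and $\tilde w=10$. Then \textsc{EFA} never offers the entrant and has regret $0$. Offering the entrant until its purchase has regret $(1/2-10/11)\cdot 11/10=-9/20$.
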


\begin{remark}
    Calculating $\textsc{OPT}_t$ can be achieved by enumeration when the support of $\mathcal{F}$ is finite and by a Monte Carlo simulation if the support of $\mathcal{F}$ is infinite or continuous. 
\end{remark}

\subsection{Optimality of EFA (Proof of Theorem~\ref{thm:optimality_of_algo})}\label{subsec:proof_algorithm_optimality}
To prove optimality of \textsc{EFA}, we partition the time horizon into \textit{epochs}, where in each epoch, the set of unknown products stays the same. That is, a new epoch is triggered when an unknown product is purchased. We show that the regret can be decomposed by epoch, and we show that \textsc{EFA} chooses the assortment that minimizes the regret in each epoch.

Let $\mathcal{S}_t = \{S \subseteq \mathcal{N}, |S| \leq c, S \cap (\mathcal{N} \setminus \mathcal{I}_t) \neq \emptyset\}$ be the set of all assortments of size at most $c$ which contain at least one unknown entrant at round $t$. Let $\tau(S;\mathcal{H}_t)$ be the expected number of rounds and $r(S;\mathcal{H}_t)$ be the expected total reward in an epoch if the platform offers an assortment $S \in \mathcal{S}_t$ and the history at round $t$ is $\mathcal{H}_t$. A useful quantity in our analysis is the \emph{expected regret in an epoch}: $$\textsc{EpochReg}_t(S) = \textsc{OPT}_t \tau(S; \mathcal{H}_t) - r(S; \mathcal{H}_t).$$ 
The following lemma establishes that a policy which minimizes the expected regret in every epoch minimizes the infinite-horizon regret.

\begin{lemma}\label{lemma:epoch_optimal_policy_is_optimal}
A policy $\pi^{\star}$ is optimal if for any round $t$ and any history $\mathcal{H}_t$ it satisfies 
\begin{itemize}
    \item $\pi^{\star}(\mathcal{H}_t) \in \argmin\limits_{S \in \mathcal{S}_t} \textsc{EpochReg}_t(S)$ if $\textsc{OPT}_t > \textsc{Rev}_t(S^{\star}_{(c)}(\mathcal{I}_t))$, and 
    \item  $\pi^{\star}(\mathcal{H}_t) \in \argmax\limits_{S \subseteq \mathcal{I}_t, |S| \leq c} \textsc{Rev}_t(S)$ otherwise.
\end{itemize}
\end{lemma}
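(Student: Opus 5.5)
We prove Lemma~\ref{lemma:epoch_optimal_policy_is_optimal} by a dynamic-programming argument over epochs. The state is the set of known products $\mathcal{I}_t$ together with their revealed weights; by the noiseless model this is a sufficient statistic of $\mathcal{H}_t$, since customer behavior and the posterior on the unknown weights depend only on it. We induct on the number of unknown entrants $k = |\mathcal{N}\setminus\mathcal{I}_t|$.

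The key identity is a martingale decomposition of regret. Write $\textsc{OPT}-R_t = (\textsc{OPT}-\textsc{OPT}_t) + (\textsc{OPT}_t - R_t)$. First, $\textsc{OPT}_t$ is a martingale with respect to $\mathcal{H}_t$, and it only changes when an entrant is purchased. Since $\textsc{OPT}$ is bounded in $[0,1]$, we get $\expect[\textsc{OPT}-\textsc{OPT}_t]=0$ for each $t$. Hence the expected regret up to $T$ equals $\expect[\sum_{t\le T}(\textsc{OPT}_t - R_t)]$. Because $\textsc{OPT}_t$ is constant within an epoch, the contribution of one epoch started at history $\mathcal{H}_t$ is $\textsc{OPT}_t\cdot(\#\text{rounds}) - (\text{reward})$.

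First, consider the case where the policy repeats a fixed $S\in\mathcal{S}_t$ throughout the epoch. The epoch contribution is then exactly $\textsc{EpochReg}_t(S)$, and both $\tau$ and $r$ are finite because $\tildew>0$. A general policy may also change assortments within the epoch, or play no entrant at all. We handle this by the standard argument of Lemma~\ref{lemma:there_exists_an_optimal_policy_which_offers_the_same_assortment_until_the_entrant_is_purchased}: within an epoch the state is frozen, so an optimal continuation can be taken stationary. We consider the two cases of the lemma in turn.

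\begin{itemize}
\item When $\textsc{OPT}_t \le \textsc{Rev}_t(S^\star_{(c)}(\mathcal{I}_t))$, the condition persists forever, since the state is frozen. Offering known-only assortments gives nonpositive per-round terms $\textsc{OPT}_t-R_t$, and offering $S^\star_{(c)}$ is pointwise best.
\item When $\textsc{OPT}_t > \textsc{Rev}_t(S^\star_{(c)})$, any round spent on a known-only assortment incurs a positive drift $\textsc{OPT}_t-\textsc{Rev}_t(S)>0$ and makes no progress. So such rounds can be deleted, which weakly reduces regret. After deletion, the policy is a randomized mixture over assortments in $\mathcal{S}_t$ round by round.
\end{itemize}

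For the mixture in the second case, we use a renewal-reward (ratio) argument. A mixed stationary policy has epoch regret of the form $\sum_S q_S\,\textsc{EpochReg}$-weighted terms. Concretely, each round with $S$ ends the epoch with probability $p_S$. Writing the epoch regret as $\sum_t \prod(1-p)\cdot g_S$, a one-step Bellman argument shows that the minimal epoch regret $V$ satisfies $V=\min_S[g_S+(1-p_S)V]$, i.e. $V=\min_S g_S/p_S=\min_S \textsc{EpochReg}_t(S)$.

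Then we induct on $k$. The continuation value after the epoch depends only on the new state, and by induction it is minimized by $\pi^\star$. Moreover, the next state's distribution, namely which entrant is bought and its weight drawn from $\mathcal{F}$, does not depend on which $S\in\mathcal{S}_t$ was used. This holds because all unknown entrants are exchangeable with common weight $\tildew$. Hence the total regret separates into the current epoch term plus an $S$-independent expected future term, and minimizing each piece gives optimality.

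The main obstacle is the limit and interchange step with $\liminf_{T\to\infty}$. We need that a policy which never finishes exploring cannot beat this bound, and that the martingale identity survives passing to the limit. For the first point, note that never exploring in the exploration-worthy state leaves positive per-round drift, so regret is $+\infty$. For the second, bounded convergence of $\textsc{OPT}_T\to\textsc{OPT}$ on the event that exploration completes handles the limit.
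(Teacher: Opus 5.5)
Most of your proposal is sound and close to the paper. The martingale rewriting, dropping known-only rounds in the exploration-worthy case, the renewal-reward identity $V=\min_S g_S/p_S$ (where the paper instead uses $\sum_S q(S)\expect[N(S)]=1$), and exchangeability of the post-purchase state all track the paper's epoch decomposition.

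\textbf{The gap is in the other bullet.} You claim that when $\textsc{OPT}_t \le \textsc{Rev}_t(S^\star_{(c)}(\mathcal{I}_t))$ the condition ``persists forever, since the state is frozen,'' and you then compare $S^\star_{(c)}$ only against known-only assortments. But optimality must hold against \emph{every} policy, including one that keeps offering unknown entrants in such a state.
\begin{itemize}
\item For that policy the state is not frozen: an entrant can be bought, which changes $\mathcal{I}_t$ and $\textsc{OPT}_t$.
\item Its per-round term $\textsc{OPT}_t-\textsc{Rev}_t(S_t)$, with entrants weighted at the cold-start value $\tildew$, is not obviously nonnegative.
\item Your argument never uses how $\tildew$ relates to $\mathcal{F}$, and without such a relation the claim is false. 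Take $\mathcal{F}$ a point mass at $\theta<w^\star_{(c)}$ with $\tildew>w^\star_{(c)}$. The state is not exploration-worthy and $\pi^\star$ has regret $0$. Yet offering the entrant with the top $c-1$ incumbents until it is bought earns strictly more than $\textsc{OPT}$ each round, so its regret is strictly negative.
\end{itemize}
A small sign slip as well: for known-only $S$ the per-round terms are nonnegative, not nonpositive.

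\textbf{How the paper closes this.} It proves two facts about a terminal history $\mathcal{H}$.
\begin{enumerate}
\item $\textsc{OPT}(\mathcal{H})=\max_{S\subseteq\mathcal{N},|S|\le c}\textsc{Rev}(S;\mathcal{H})$, i.e., showing entrants at weight $\tildew$ cannot beat $\textsc{OPT}(\mathcal{H})$. Suppose some $S'$ did. On the event that every unknown in $S'$ has $w_i\ge\tildew$, one gets $\textsc{OPT}\ge\textsc{Rev}(S';\mathcal{H})>\textsc{OPT}(\mathcal{H})$. This event has positive probability, which the paper justifies from $\tildew\in[\underline{\theta},\overline{\theta}]$. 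On the complement, $\textsc{OPT}\ge\textsc{OPT}(\mathcal{H})$. Together these contradict $\expect[\textsc{OPT}\mid\mathcal{H}]=\textsc{OPT}(\mathcal{H})$.
\item Terminality is preserved almost surely under \emph{any} policy, by the squeeze $\textsc{OPT}(\mathcal{H}_t)\ge\max_{S\subseteq I(\mathcal{H}_t),|S|\le c}\textsc{Rev}(S;\mathcal{H}_t)\ge\textsc{OPT}(\mathcal{H}_1)=\expect[\textsc{OPT}(\mathcal{H}_t)\mid\mathcal{H}_1]$, which forces equality.
\end{enumerate}
Together these give nonnegative expected per-round regret for every policy started from a terminal history, which is exactly what your second bullet needs.

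Your first bullet depends on this too. The state reached after an epoch may be terminal, and your induction needs $\pi^\star$ to be optimal there against arbitrary continuations.

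Relatedly, when the expected epoch length is infinite you also need a lower bound on $\pi^\star$'s continuation regret (the paper shows it is $\ge -m$). Epoch regret can be negative, so without that bound ``$+\infty$ plus future'' does not automatically give $+\infty$.
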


The proof of the lemma uses that all entrants' attraction parameters are drawn from the same prior and is provided in Appendix~\ref{appendix_sec:there_exists_optimal_policy_which_is_stationary}. 

Next, we fix the number $\ell$ of the unknown entrants to include in the assortment, and we characterize the set of known products to include as well.
Recall that in the single entrant setting from Section~\ref{sec:exploring_single_entrant}, it is optimal to explore the entrant with the $c-1$ most attractive products (Theorem~\ref{thm:explore_one_unknown_with_best}). The following lemma extends this result by showing it is optimal to explore the $\ell$ unknown entrants with the $c-\ell$ most attractive known products. Let $k_t = \min(c, m_t)$ be the minimum between the assortment capacity $c$ and the number of remaining unknown products $m_t$ (line \ref{alg_line:set_number_products}). 

\begin{lemma}\label{lemma:optimal_completion_given_ell_unknown_prods}
For any round $t$ and any $\ell \in \{1, \ldots, k_t\}$, the assortment $S$ that minimizes $\textsc{EpochReg}_t(S)$ subject to including exactly $\ell$ unknown products $U_{\ell}$ complements them with the set $S^{\star}_{(c-\ell)}(\mathcal{I}_t)$ of the $c-\ell$ most attractive known products, i.e., 
$S^{\star}_{(c-\ell)}(\mathcal{I}_t) \in \argmin\limits_{S' \subseteq \mathcal{I}_t, |S'| \leq c- \ell} \textsc{EpochReg}_t(U_{\ell} \cup S')$.
\end{lemma}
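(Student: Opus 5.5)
We mirror the proof of Theorem~\ref{thm:explore_one_unknown_with_best}, exploiting the independence of irrelevant alternatives. Fix a round $t$, the history $\mathcal{H}_t$, the number $\ell$ of unknown entrants, and an arbitrary complement $S' \subseteq \mathcal{I}_t$ with $|S'| \le c-\ell$. Write $W(S') = \sum_{i \in S'} w_i$. The assortment $U_\ell \cup S'$ is offered throughout the epoch, because the set of unknown products, and hence the choice model, does not change until some entrant is purchased. The epoch length is therefore geometric, and
\begin{equation*}
\tau(U_\ell \cup S';\mathcal{H}_t) = \frac{W(S') + \ell\tildew + w_0}{\ell \tildew}.
\end{equation*}

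We next track outcomes within the epoch. In each round the customer picks the outside option, a known product, or an entrant (which ends the epoch). By the same ratio computation as in the single-entrant proof, the expected number of outside-option rounds is
\begin{equation*}
o = \frac{P[\text{outside}]}{P[\text{some entrant}]} = \frac{w_0}{\ell\tildew},
\end{equation*}
which does not depend on $S'$. Every round that does not select the outside option earns reward $1$, including the final round. Hence $r(U_\ell\cup S';\mathcal{H}_t) = \tau - o$. Note that $\textsc{OPT}_t$ depends only on $\mathcal{H}_t$, not on $S'$. Substituting,
\begin{equation*}
\textsc{EpochReg}_t(U_\ell\cup S') = \textsc{OPT}_t\,\tau - (\tau - o) = (\textsc{OPT}_t - 1)\,\tau + \frac{w_0}{\ell\tildew}.
\end{equation*}

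Since $\textsc{OPT}_t < 1$ (there is an outside option with $w_0 = 1$ and all weights are finite), epoch regret is strictly decreasing in $\tau$. In turn, $\tau$ is increasing in $W(S')$. Under the constraint $|S'| \le c-\ell$, the sum $W(S')$ is maximized by taking the $c-\ell$ most attractive known products, that is, $S' = S^\star_{(c-\ell)}(\mathcal{I}_t)$. This set is well defined because there are at least $c$ incumbents, padded with dummy incumbents if necessary.

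The only delicate point is justifying that the epoch assortment is fixed, so that the geometric formulas hold. Here $\textsc{EpochReg}_t(S)$ is defined for offering $S$ until the epoch ends. Since the choice probabilities depend only on $\mathcal{I}_t$, which is constant within an epoch, this interpretation is consistent. A secondary check is that the reward of the terminating purchase is correctly counted as $1$, which the expression $\tau - o$ handles. I expect no real obstacle beyond this, since the argument is a direct generalization of the single-entrant computation with $\tildew$ replaced by $\ell\tildew$.
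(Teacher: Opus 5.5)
Your proof is correct and follows essentially the same argument as the paper. Both write the epoch regret as $(\textsc{OPT}_t-1)(\tau-o)+\textsc{OPT}_t\,o$. By IIA, the expected number $o$ of outside-option rounds does not depend on $S'$, so since $\textsc{OPT}_t<1$, minimizing epoch regret amounts to maximizing $\tau$, i.e., maximizing $\sum_{i\in S'} w_i$, which is achieved by $S^{\star}_{(c-\ell)}(\mathcal{I}_t)$.
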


The proof follows a similar approach to that of Theorem~\ref{thm:explore_one_unknown_with_best} and is provided in Appendix~\ref{appendix_sec_proof_optimal_completion_given_ell_unknown_prods}. 
Our technical crux is characterizing how to select the optimal number of unknown products at round $t$. This is shown in the following lemma, whose proof we discuss in detail in Section~\ref{subsec:proof_lemma_characterization_time_diff_reward_diff}.

\begin{lemma}\label{lemma:num_unknown_ORFA_minimizes_epoch_reg}
For any round $t$, the number of unknown entrants $\ell_{t}$ selected by Algorithm~\ref{alg:optimal_explorer} minimizes $\textsc{EpochReg}_t( U_{\ell} \cup S^{\star}_{(c-\ell)}(\mathcal{I}_t))$ subject to $\ell \in \{1, \ldots, k_t\}$.
\end{lemma}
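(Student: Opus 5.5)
The plan is to compute $\textsc{EpochReg}_t(U_\ell \cup S^{\star}_{(c-\ell)}(\mathcal{I}_t))$ in closed form as a function of $\ell$, and then compare consecutive values of $\ell$. The comparison between $\ell$ and $\ell+1$ should reduce exactly to the test $\textsc{OPT}_t \geq \alpha_t(\ell+1)$. Throughout I work in the exploration-worthy case $\textsc{OPT}_t > \textsc{Rev}_t(S^{\star}_{(c)}(\mathcal{I}_t))$, which is the only case where $\ell_t$ is defined. Fix $t$ and abbreviate $W_\ell = W^{\star}_{(c-\ell)}(\mathcal{I}_t)$ and $D_\ell = W_\ell + \ell\tildew + w_0$.

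\textbf{Step 1 (closed form).} During the epoch the offered set is fixed and the choice model does not change. In each round the epoch ends with probability $\ell\tildew/D_\ell$ (some unknown entrant is bought), so $\tau = D_\ell/(\ell\tildew)$. By the same IIA computation as in the proof of Theorem~\ref{thm:explore_one_unknown_with_best}, the expected number of outside-option choices in the epoch is $w_0/(\ell\tildew)$. Every other round earns reward $1$, so $r = \tau - w_0/(\ell\tildew)$. Substituting into $\textsc{EpochReg}_t = \textsc{OPT}_t\,\tau - r$ and simplifying gives
\[
\textsc{EpochReg}_t(U_\ell \cup S^{\star}_{(c-\ell)}(\mathcal{I}_t)) = \frac{1}{\tildew}\cdot\frac{\textsc{OPT}_t\, w_0 - (1-\textsc{OPT}_t)\, W_\ell}{\ell} \;-\; (1-\textsc{OPT}_t).
\]
Hence minimizing over $\ell\in\{1,\dots,k_t\}$ is equivalent to minimizing $f(\ell) = (a - bW_\ell)/\ell$, where $a = \textsc{OPT}_t w_0$ and $b = 1-\textsc{OPT}_t > 0$.

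\textbf{Step 2 (consecutive comparison).} Use $W_{\ell+1} = W_\ell - w^{\star}_{(c-\ell)}(\mathcal{I}_t)$. Cross-multiplying, $f(\ell+1)\le f(\ell)$ is equivalent to
\[
b\bigl(W_\ell + \ell\, w^{\star}_{(c-\ell)}(\mathcal{I}_t)\bigr) \le a .
\]
The quantity in parentheses equals $W^{\star}_{(c-\ell-1)}(\mathcal{I}_t) + (\ell+1)\,w^{\star}_{(c-\ell)}(\mathcal{I}_t)$, which is the total weight $A(\ell+1)$ of the fictitious assortment $S_t(\ell+1)$. Since $x\mapsto x/(x+w_0)$ is increasing, $(1-\textsc{OPT}_t)A \le \textsc{OPT}_t w_0$ holds exactly when $A/(A+w_0)\le \textsc{OPT}_t$. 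So the condition is precisely $\alpha_t(\ell+1)\le \textsc{OPT}_t$, with equality cases matching. This identity is what explains the fictitious-assortment construction.

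\textbf{Step 3 (unimodality and conclusion).} The paper notes that $\alpha_t(\ell)$ is non-decreasing in $\ell$. Therefore the set $\{\ell : \alpha_t(\ell)\le \textsc{OPT}_t\}$ is an initial segment of $\{1,\dots,k_t\}$. It is nonempty because $\alpha_t(1) = \textsc{Rev}_t(S^{\star}_{(c)}(\mathcal{I}_t)) < \textsc{OPT}_t$ in the exploration-worthy case. By Step 2, $f$ is non-increasing from $\ell$ to $\ell+1$ while $\ell+1$ stays in this segment, and strictly increasing afterwards. So $f$ is unimodal on $\{1,\dots,k_t\}$, and its minimum is attained at the largest $\ell\le k_t$ with $\alpha_t(\ell)\le\textsc{OPT}_t$, which is exactly $\ell_t$.

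I expect the main obstacle to be Step 2: carrying out the cross-multiplication carefully and recognizing the resulting expression as the fictitious assortment's weight. Once that identity is established, monotonicity of $\alpha_t$ gives the unimodality in Step 3 directly. A secondary care point is the sign convention: the factor $1/\tildew$ is positive, so minimizing $\textsc{EpochReg}_t$ is indeed the same as minimizing $f$.
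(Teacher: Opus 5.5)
Your proof is correct and follows essentially the same route as the paper. The paper also computes $r^{\star}_t(\ell)$ and $\tau^{\star}_t(\ell)$ in closed form and reduces the comparison of $\ell$ with $\ell+1$ to $\textsc{OPT}_t \geq \alpha_t(\ell+1)$, packaged as ``reward loss over time gain'' in Lemma~\ref{lemma:expression_ratio_reward_loss_time_gain}. It then uses monotonicity of $\alpha_t$, so your direct cross-multiplication is the same algebra, and your explicit check that $\alpha_t(1)=\textsc{Rev}_t(S^{\star}_{(c)}(\mathcal{I}_t))<\textsc{OPT}_t$ makes the candidate set nonempty, a point the paper leaves implicit.
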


\begin{proof}[Proof of Theorem~\ref{thm:optimality_of_algo}]
By Lemma~\ref{lemma:epoch_optimal_policy_is_optimal}, a policy is optimal if it minimizes within-epoch regret when it is worth it to explore and maximizes revenue otherwise. EFA maximizes revenue when it is not worth it to explore (line \ref{alg_line:not_exploration_worthy_case}) and by Lemmas \ref{lemma:optimal_completion_given_ell_unknown_prods} and \ref{lemma:num_unknown_ORFA_minimizes_epoch_reg} it minimizes within-epoch regret when it is worth it to explore (line \ref{alg_line:worth_it_to_explore}). Thus, EFA is optimal. 
\end{proof}

\subsection{Optimal number of unknown entrants within an epoch (Lemma~\ref{lemma:num_unknown_ORFA_minimizes_epoch_reg})
}\label{subsec:proof_lemma_characterization_time_diff_reward_diff}
For a number of unknown entrants $\ell$, let $\tau^{\star}_t(\ell) = \tau(U_{\ell} \cup S^{\star}_{(c-\ell)}(\mathcal{I}_t); \mathcal{H}_t)$, $r^{\star}_t(\ell) = r(U_{\ell} \cup S^{\star}_{(c-\ell)}(\mathcal{I}_t); \mathcal{H}_t)$, and $\textsc{EpochReg}_t^{\star}(\ell) = \textsc{EpochReg}_t(U_{\ell} \cup S^{\star}_{c-\ell})$. We compare the regret of including $\ell$ unknown entrants to that of including $\ell+1$ unknown entrants. Including $\ell+1$ entrants yields a lower regret than including $\ell$ entrants if and only if the expected ex-post optimum is at least the ratio of the reward loss to the time gain:
\begin{align}\label{ineq:ell_plus_one_smaller_epoch_regret_thanell} \textsc{EpochReg}_t^{\star}(\ell) \geq \textsc{EpochReg}_t^{\star}(\ell+1) 
   &\iff \textsc{OPT}_t \tau^{\star}_t(\ell)-r^{\star}_t(\ell) \geq  \textsc{OPT}_t \tau^{\star}_t(\ell+1)-r^{\star}_t(\ell+1) \nonumber\\
    &\iff \textsc{OPT}_t  \geq \frac{ r^{\star}_t(\ell)-r^{\star}_t(\ell+1)}{ \tau^{\star}_t(\ell)-\tau^{\star}_t(\ell+1)}.
\end{align}
The following lemma (proven at the end of the section) shows that the ratio of the reward loss to the time gain for $\ell$ unknown entrants equals the revenue $\alpha_t(\ell+1)$ of the fictitious assortment $S_t(\ell+1)$. 

\begin{lemma}\label{lemma:expression_ratio_reward_loss_time_gain}
    For any round $t$ and any number of unknown entrants $\ell$, 
    $$\frac{ r^{\star}_t(\ell)-r^{\star}_t(\ell+1)}{ \tau^{\star}_t(\ell)-\tau^{\star}_t(\ell+1)} = \alpha_t(\ell+1) = \frac{W^{\star}_{(c-\ell-1)}(\mathcal{I}_t) + (\ell + 1) w^{\star}_{(c-\ell)}(\mathcal{I}_t)}{W^{\star}_{(c-\ell-1)}(\mathcal{I}_t) + (\ell + 1) w^{\star}_{(c-\ell)}(\mathcal{I}_t)+w_0}.$$
\end{lemma}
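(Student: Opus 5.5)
We compute $\tau^{\star}_t(\ell)$ and $r^{\star}_t(\ell)$ in closed form and take the ratio directly. Write $A_\ell = W^{\star}_{(c-\ell)}(\mathcal{I}_t)$ for the total attraction of the $c-\ell$ most attractive known products, so the offered assortment $U_\ell \cup S^{\star}_{(c-\ell)}(\mathcal{I}_t)$ has total attraction $A_\ell + \ell\tildew$.

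\emph{Step 1: epoch length.} Within an epoch the choice model is stationary, so the epoch ends at the first purchase of any of the $\ell$ unknown entrants. This is a geometric number of rounds with success probability $\ell\tildew/(A_\ell+\ell\tildew+w_0)$. Hence
\[
\tau^{\star}_t(\ell) = \frac{A_\ell+\ell\tildew+w_0}{\ell\tildew}.
\]

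\emph{Step 2: epoch reward.} Every round in which some product (rather than the outside option) is chosen yields reward $1$. So $r^{\star}_t(\ell)=\tau^{\star}_t(\ell)-o_t(\ell)$, where $o_t(\ell)$ is the expected number of outside-option choices in the epoch. By the same IIA argument as in the proof of Theorem~\ref{thm:explore_one_unknown_with_best},
\[
o_t(\ell)=\frac{\prob[\text{outside}]}{\prob[\text{some unknown chosen}]}=\frac{w_0}{\ell\tildew}.
\]
Therefore
\[
r^{\star}_t(\ell)=\frac{A_\ell+\ell\tildew}{\ell\tildew}=1+\frac{A_\ell}{\ell\tildew}.
\]

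\emph{Step 3: ratio.} Since $\tau^{\star}_t(\ell)=r^{\star}_t(\ell)+w_0/(\ell\tildew)$, we have
\[
\frac{r^{\star}_t(\ell)-r^{\star}_t(\ell+1)}{\tau^{\star}_t(\ell)-\tau^{\star}_t(\ell+1)} = \frac{D}{D+w_0\left(\frac{1}{\ell}-\frac{1}{\ell+1}\right)/\tildew},
\qquad D=\frac{A_\ell}{\ell\tildew}-\frac{A_{\ell+1}}{(\ell+1)\tildew}.
\]
Multiply numerator and denominator by $\ell(\ell+1)\tildew$. The numerator becomes $(\ell+1)A_\ell-\ell A_{\ell+1}$, and the extra term in the denominator becomes $w_0$. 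Now use $A_\ell=A_{\ell+1}+w^{\star}_{(c-\ell)}(\mathcal{I}_t)$, which gives
\[
(\ell+1)A_\ell-\ell A_{\ell+1}=A_{\ell+1}+(\ell+1)w^{\star}_{(c-\ell)}(\mathcal{I}_t)=W^{\star}_{(c-\ell-1)}(\mathcal{I}_t)+(\ell+1)w^{\star}_{(c-\ell)}(\mathcal{I}_t).
\]
The ratio is therefore $X/(X+w_0)$ with $X$ equal to this quantity, which is exactly $\alpha_t(\ell+1)$ as defined in \eqref{eq:alpha_l_I_t}. Notably, $\tildew$ cancels, which explains why the threshold does not depend on the entrants' purchase probabilities.

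\emph{Main obstacle.} The algebra is routine. The point needing care is the bookkeeping of what counts as reward and what counts as the epoch:
\begin{itemize}
\item The round of the entrant purchase is included in both $\tau$ and $r$.
\item Purchases of other known products and of other unknown entrants in $U_\ell$ all count as reward $1$.
\item The denominator is nonzero, since $\tau^{\star}_t(\ell)-\tau^{\star}_t(\ell+1)>0$. This holds because $A_\ell+w_0>0$ (as $w_0=1$) and $A_\ell$ is nonincreasing in $\ell$.
\item When $c-\ell-1=0$, the empty sum $W^{\star}_{(0)}$ is $0$, and the identity $A_\ell=A_{\ell+1}+w^{\star}_{(c-\ell)}(\mathcal{I}_t)$ still holds; the dummy incumbents guarantee that enough known products exist for this.
\end{itemize}
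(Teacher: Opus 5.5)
Your proof is correct and takes essentially the same route as the paper. Both derive closed forms for $\tau^{\star}_t(\ell')$ and $r^{\star}_t(\ell')$ from the MNL/IIA structure, use the relation $\tau^{\star}_t(\ell') = r^{\star}_t(\ell') + w_0/(\ell'\tildew)$, and apply the same common-denominator algebra. The only cosmetic difference is that you obtain $r^{\star}_t(\ell')$ as $\tau^{\star}_t(\ell')$ minus the outside-option count, whereas the paper sums the known products' expected purchase counts $w_i/(\ell'\tildew)$ and adds one for the entrant purchase.
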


\begin{proof}[Proof of Lemma~\ref{lemma:num_unknown_ORFA_minimizes_epoch_reg}]
To prove the lemma it suffices to show that
 $$
\begin{aligned}
    \textsc{EpochReg}_t^{\star}(\ell) &\geq \textsc{EpochReg}_t^{\star}(\ell+1) \text{ for $\ell \in \{0, \ldots, \ell_{t}-1\}$, and }\\
   \textsc{EpochReg}_t^{\star}(\ell) &< \textsc{EpochReg}_t^{\star}(\ell+1) \text{ for $\ell \in \{\ell_{t}, \ldots, k_t-1\}$.} 
\end{aligned} (\Delta)
$$
Using \eqref{ineq:ell_plus_one_smaller_epoch_regret_thanell} $\textsc{EpochReg}_t^{\star}(\ell) \geq \textsc{EpochReg}_t^{\star}(\ell+1)$ is equivalent to $\textsc{OPT}_t \geq \frac{ r^{\star}_t(\ell)-r^{\star}_t(\ell+1)}{ \tau^{\star}_t(\ell)-\tau^{\star}_t(\ell+1)}$. Using Lemma~\ref{lemma:expression_ratio_reward_loss_time_gain} the right hand side of the former is $$
\frac{ r^{\star}_t(\ell)-r^{\star}_t(\ell+1)}{ \tau^{\star}_t(\ell)-\tau^{\star}_t(\ell+1)} = \alpha_{t}(\ell+1)=\frac{W^{\star}_{(c-\ell-1)}(\mathcal{I}_t) + (\ell + 1) w^{\star}_{(c-\ell)}(\mathcal{I}_t)}{W^{\star}_{(c-\ell-1)}(\mathcal{I}_t) + (\ell + 1) w^{\star}_{(c-\ell)}(\mathcal{I}_t) + w_0} .
$$
Thus, 
\begin{equation}\label{prop:cond_worth_it_to_add_unknown_prod}
    \textsc{EpochReg}_t^{\star}(\ell) \geq \textsc{EpochReg}_t^{\star}(\ell+1) \text{ if and only if }  \textsc{OPT}_t  \geq \alpha_{t}(\ell+1).
\end{equation}
Express $\alpha_t(\ell) = \frac{x_{t}(\ell)}{x_{t}(\ell) + w_0}$
where $x_{\ell} =  W^{\star}_{(c-\ell)}(\mathcal{I}_t) + \ell w^{\star}_{(c-\ell+1)}(\mathcal{I}_t)$. Observe that $0 \leq x_{\ell} \leq x_{\ell+1} $ as the difference $x_{\ell} - x_{\ell+1} $ is
\begin{equation*}
    W^{\star}_{(c-\ell)}(\mathcal{I}_t) + \ell w^{\star}_{(c-\ell+1)}(\mathcal{I}_t) - (W^{\star}_{(c-(\ell+1))}(\mathcal{I}_t) + (\ell+1) w^{\star}_{(c-(\ell+1)+1)}(\mathcal{I}_t)) = \ell (w^{\star}_{(c-\ell+1)}(\mathcal{I}_t)-w^{\star}_{(c-\ell)}(\mathcal{I}_t)) \leq 0.
\end{equation*}
Combining this with the fact that the function $x \to \frac{x}{x+w_0}$ is strictly increasing in $x$ (as $w_0 > 0$) yields
\begin{equation}\label{ineq:alpha_ell_non_decreasing}
    \alpha_t(\ell) \leq \alpha_{t}(\ell+1) \text{ for $\ell \in \{0,\ldots, k_t-1\}$}.
\end{equation}
Let $\ell \in \{0, \ldots, \ell_{t}-1\}$. Then $\ell + 1\leq \ell_{t}$. Combining this with \eqref{ineq:alpha_ell_non_decreasing} and the fact that $\alpha_t(\ell_{t}) \leq \textsc{OPT}_t$ (line \ref{alg_line:set_number_products}, Algorithm \ref{alg:optimal_explorer}) yields $\alpha_{t}(\ell+1) \leq \alpha_{t}(\ell_{t}) \leq \textsc{OPT}_t$. Thus, $\textsc{EpochReg}_t^{\star}(\ell) \geq \textsc{EpochReg}_t^{\star}(\ell+1)$ by \eqref{prop:cond_worth_it_to_add_unknown_prod}. Let $\ell \in \{\ell_{t}, \ldots, k_t-1\}$. Thus, $\ell +1 \geq \ell_{t} + 1$ and therefore $\alpha_t(\ell+1)> \textsc{OPT}_t$ by the definition of $\ell_{t}$ (line \ref{alg_line:set_number_products}, Algorithm \ref{alg:optimal_explorer}). By \eqref{prop:cond_worth_it_to_add_unknown_prod}, $\textsc{EpochReg}_t^{\star}(\ell) < \textsc{EpochReg}_t^{\star}(\ell+1)$, yielding $(\Delta)$. 
\end{proof}
    
\subsection{Technical crux: Characterizing the ratio of reward loss to time gain (Lemma~\ref{lemma:expression_ratio_reward_loss_time_gain})}

\begin{proof}[Proof of Lemma~\ref{lemma:expression_ratio_reward_loss_time_gain}]
For a number of unknown products $\ell'$, let $S_t^{\star}(\ell') = U_{\ell'}\cup S^{\star}_{(c-\ell')}(\mathcal{I}_t)$. An important quantity in our analysis is the number of rounds $N_i$ option $i \in S^{\star}_{(c-\ell')}(\mathcal{I}_t) \cup \{0\}$ is chosen in an epoch if $S^{\star}_t(\ell')$ is offered. The expected reward in an epoch if $S^{\star}_t(\ell')$ is offered is
$$r^{\star}_t(\ell') = \expect \Bigg[\sum_{i \in S^{\star}_{(c-\ell')}(\mathcal{I}_t)} N_i \Bigg] +1  = \sum_{i \in S^{\star}_{(c-\ell')}(\mathcal{I}_t)} \expect[N_i] + 1.$$
Given that the probability that an unknown entrant is purchased if $S_{t}^{\star}(\ell')$ is offered is $\frac{\ell' \tildew}{W^{\star}_{(c-\ell')}(\mathcal{I}_t) + \ell' \tildew + w_0}$, the expected number of rounds in an epoch is $\tau_{t}^{\star}(\ell') = \frac{W^{\star}_{(c-\ell')}(\mathcal{I}_t) + \ell' \tildew + w_0}{\ell' \tildew}$. Thus, we can write $\expect[N_i]$ as
\begin{equation*}
    \expect[N_i] = \prob[\text{$i$ is chosen}|S^{\star}_t(\ell')] \cdot \tau_{t}^{\star}(\ell') = \frac{w_i}{\cancel{W^{\star}_{(c-\ell')}(\mathcal{I}_t) + \ell'\tildew + w_0}} \cdot  \frac{\cancel{W^{\star}_{(c-\ell')}(\mathcal{I}_t) + \ell' \tildew + w_0}}{\ell' \tildew}= \frac{w_i}{\ell' \tildew}.
\end{equation*}
Then, we have that
\begin{equation*}\label{eq:formula_reward}
    r^{\star}_t(\ell') = \sum_{i \in S^{\star}_{(c-\ell')}(\mathcal{I}_t)} \expect[N_i] +1 = \sum_{i \in S^{\star}_{(c-\ell')}(\mathcal{I}_t)} \frac{w_i}{\ell' \tildew} +1 = \sum_{k=1}^{c-\ell'} \frac{w^{\star}_{(k)}(\mathcal{I}_t)}{\ell' \tildew} + 1.
\end{equation*}
Applying the above for $\ell' \in \{\ell, \ell+1\}$, taking a difference yields 
\begin{align}\label{eq:reward_loss}
    r^{\star}_t(\ell) - r^{\star}_t(\ell+1) &= \sum_{k=1}^{c-\ell} \frac{w^{\star}_{(k)}(\mathcal{I}_t)}{\ell \tildew} - \sum_{k=1}^{c-\ell-1} \frac{w^{\star}_{(k)}(\mathcal{I}_t)}{(\ell+1) \tildew} \nonumber \\
    &= \sum_{k=1}^{c-\ell-1} w_{(k)}^{\star}(\mathcal{I}_t)\Big(\frac{1}{\ell \tildew} - \frac{1}{(\ell+1) \tildew} \Big) + \frac{w^{\star}_{(c-\ell)}(\mathcal{I}_t)}{\ell \tildew} = \sum_{k=1}^{c-\ell-1} \frac{w_{(k)}^{\star}(\mathcal{I}_t)}{\ell(\ell+1) \tildew}  + \frac{w^{\star}_{(c-\ell)}(\mathcal{I}_t)}{\ell \tildew} \nonumber \\
    &= \frac{\sum_{k=1}^{c-\ell-1} w_{(k)}^{\star}(\mathcal{I}_t) + (\ell+1) w^{\star}_{(c-\ell)}(\mathcal{I}_t)}{\ell(\ell+1) \tildew} = \frac{W^{\star}_{(c-\ell-1)}(\mathcal{I}_t) + (\ell + 1) w^{\star}_{(c-\ell)}(\mathcal{I}_t)}{\tildew \ell (\ell + 1)}
\end{align}
where the third equality uses $\frac{1}{\ell \tildew} - \frac{1}{(\ell+1) \tildew} = \frac{1}{\ell(\ell+1) \tildew}$ and the fourth equality brings all fractions under a common denominator by multiplying the numerator and denominator of $\frac{w^{\star}_{(c-\ell)}(\mathcal{I}_t)}{\ell \tildew}$ by $\ell+1$.

Given that the reward from each product is $1$, the expected number of rounds in the epoch is $\tau^{\star}_t(\ell') = r^{\star}_t(\ell') + \expect[N_0] = r^{\star}_t(\ell')+ \frac{w_0}{\ell' \tildew}$. Applying this for $\ell' \in \{\ell, \ell+1\}$, taking a difference and using \eqref{eq:reward_loss} yields 
\begin{align}\label{eq:time_gain}
    \tau^{\star}_t(\ell) - \tau^{\star}_t(\ell + 1) &= r^{\star}_t(\ell) +\frac{w_0}{\ell \tildew} -\Big(r^{\star}_t(\ell+1) +\frac{w_0}{(\ell+1) \tildew} \Big) = r^{\star}_t(\ell)-r^{\star}_t(\ell+1) + w_0 \Big(\frac{1}{\ell \tildew} - \frac{1}{(\ell+1) \tildew} \Big) \nonumber\\
    &= r^{\star}_t(\ell)-r^{\star}_t(\ell+1) + \frac{w_0}{\tildew \ell(\ell+1)} = \frac{W^{\star}_{(c-\ell-1)}(\mathcal{I}_t) + (\ell + 1) w^{\star}_{(c-\ell)}(\mathcal{I}_t) + w_0}{\tildew \ell (\ell + 1)}.
\end{align}
Taking the ratio of \eqref{eq:reward_loss} and \eqref{eq:time_gain} the denominator is the same and cancels yielding the lemma. 
\end{proof}

\section{Comparison to canonical algorithms for bandit learning}\label{sec:comparison_to_simple_strategies}

We compare \textsc{EFA} to four natural baselines that capture widely used algorithmic approaches for bandit learning problems. We consider the following policies:
\begin{itemize}
 \item \textsc{ExploreAll}: For any round $t$, the algorithm $\textsc{ExploreAll}$ offers as many unknown entrants in the assortment as possible along with the most attractive known products if $\textsc{OPT}_t > \textsc{Rev}_t(S^{\star}_{(c)}(\mathcal{I}_t))$ and offers the $c$ most attractive known products otherwise.
    \item \textsc{ExploreOne}: For any round $t$, the algorithm $\textsc{ExploreOne}$ explores one unknown entrant along with the $c-1$ most attractive known products if $\textsc{OPT}_t > \textsc{Rev}_t(S^{\star}_{(c)}(\mathcal{I}_t))$ and offers the $c$ most attractive known products otherwise.
    \item Upper confidence bound (\textsc{UCB}): For a non-decreasing sequence $\boldsymbol{p} = (p_t)_{t=1}^\infty$, with $p_t \in [0,1]$ for all $t$, the algorithm $\textsc{UCB}(\boldsymbol{p})$ assigns an upper confidence bound $\textsc{u}_t(i)$ for every round $t$ and product $i$ such that $\textsc{u}_t(i)$ is the $p_t$-th quantile of the prior $\mathcal{F}$ if $i$ is an unknown entrant and $\textsc{u}_t(i) = w_i$ if $i$ is a known product at round $t$. For any round $t$, $\textsc{UCB}(\boldsymbol{p})$ offers the assortment of $c$ products with the highest upper confidence bounds. In particular, this definition includes the Bayes-UCB algorithm in \cite{kaufmann2012bayesian}.
    \item Thompson Sampling (\textsc{TS}): For any round $t$, the algorithm $\textsc{TS}$ samples attraction parameters from the prior for each unknown entrant and offers the revenue-maximizing assortment under the sampled attraction parameters for the unknown entrants and the true attraction parameters for the known products. 
\end{itemize}

\textsc{ExploreAll} and \textsc{ExploreOne} are natural and simple policies for our framework. \textsc{UCB} and \textsc{TS} are widely used algorithmic approaches in the multi-armed bandit framework \cite{lai1985asymptotically, agrawal2012analysis} as well as in the MNL bandit framework \cite{agrawal2017thompson,agrawal2019mnl}. We show that each of these policies can have an arbitrarily larger regret than \textsc{EFA}. Interestingly, the reason for failure is that \textsc{UCB} and \textsc{ExploreAll} tend to \textit{over}-explore (include too many entrants), while \textsc{TS} and \textsc{ExploreOne} \textit{under}-explore (include too few entrants). We analyze these two failure modes in Section~\ref{subsec:overexploration_approaches} and Section~\ref{subsec:underexploration_approaches} respectively.

\subsection{Failure due to overexploration}\label{subsec:overexploration_approaches}

We begin by analyzing the \textsc{ExploreAll} policy. Formally, \textsc{ExploreAll} explores as many unknown entrants as possible, along with the most attractive known products. Specifically, at round $t$, let $k_t = \min(c, m_t)$ be the minimum between the assortment capacity $c$ and the number of remaining unknown entrants $m_t$. \textsc{ExploreAll} offers $k_t$ unknown entrants along with the most attractive known products $S^{\star}_{(c - k_t)}(\mathcal{I}_t)$.  We show that \textsc{ExploreAll} can perform arbitrarily worse than \textsc{EFA}.

For a capacity $c \in \mathbb{N}$ and probability $q > 0$ we consider the following instance which we denote by $\mathcal{I}(c, q)$. There are $m = c$ unknown entrants and each unknown entrant has a prior $\mathcal{F} = \bern(q)$. There are $c$ incumbents where $\frac{c}{2}$ of them have an attraction parameter of $0.9$ and the remaining $\frac{c}{2}$ have an attraction parameter of $2q$. The customer's cold-start attraction parameter $\tilde{w}$ is the mean of the prior $q$.

We show that the regret of $\textsc{ExploreAll}$ becomes arbitrarily worse than the regret of $\textsc{EFA}$ as the prior probability $q$ goes to $0$.  Intuitively, $\textsc{ExploreAll}$ will explore too many unknown entrants relative to $\textsc{EFA}$ which will induce it to have a higher regret. In particular, under the instance $\mathcal{I}(c,q)$,  $\textsc{EFA}$ will always explore at most $\frac{c}{2}$ unknown entrants while $\textsc{ExploreAll}$ will always explore all unknown entrants. In the first epoch, given that $\textsc{ExploreAll}$ explores all of the unknown entrants it takes at least $\frac{1}{cq}$ rounds in expectation to obtain a purchase and the regret per round is at least a constant. Thus, the regret of $\textsc{ExploreAll}$ is at least $\frac{A}{q}$ for some constant $A$. In comparison, the regret of $\textsc{EFA}$ can be shown to be upper-bounded by a constant which does not depend on $q$. This intuition is formalized in the following proposition (proof in Appendix~\ref{appendix_subsec:prop_explore_all_is_arbitrarily_suboptimal_proof}). 

\begin{proposition}\label{prop:explore_all_is_arbitrarily_suboptimal}
   For any capacity $c \in \mathbb{N}$ and probability $q < \frac{0.9}{2+2(c+1)^2}$, $\textsc{Reg}(\textsc{ExploreAll}) \geq \frac{1}{q} \cdot \frac{9}{29c} - 1$ and $\textsc{Reg}(\textsc{EFA}) \leq 2c^2(c+1)$ on the instance $\mathcal{I}(c,q)$.
\end{proposition}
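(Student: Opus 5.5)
The proof has two parts: a lower bound on the first epoch of \textsc{ExploreAll}, and an upper bound on \textsc{EFA} that does not depend on $q$.

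\textbf{Lower bound for \textsc{ExploreAll}.} On $\mathcal{I}(c,q)$, every entrant has $w_i\in\{0,1\}$ and $\tildew=q$. The known products are the incumbents, $c/2$ with weight $0.9$ and $c/2$ with weight $2q$.

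First we check that exploration is worthwhile at round $1$. We compare $\textsc{OPT}_1$ with $\textsc{Rev}_1(S^\star_{(c)}(\mathcal{I}_1))$. The latter is $\frac{0.45c+cq}{0.45c+cq+1}$. With probability at least $q$, some entrant has weight $1$, which replaces an incumbent of weight $2q<1$. This raises $\textsc{OPT}$ by an amount of order $q(1-2q)/(c+1)^2$. Since the myopic set already contains all incumbents, $\textsc{OPT}_1 > \textsc{Rev}_1$ holds strictly.

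So \textsc{ExploreAll} offers all $c$ entrants and no incumbents. The purchase probability of some entrant is $\frac{cq}{cq+1}\le cq$, so the first epoch lasts at least $1/(cq)$ rounds in expectation.

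During this epoch the per-round revenue is $\frac{cq}{cq+1}\le cq$. The ex-post optimum always contains the $c/2$ incumbents of weight $0.9$, so $\textsc{OPT}\ge \frac{0.45c}{0.45c+1}\ge \frac{0.45}{1.45}=\frac{9}{29}$ for $c\ge1$. The per-round regret is therefore at least $\frac{9}{29}-cq$.

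Multiplying gives regret at least $\frac{1}{cq}\bigl(\frac{9}{29}-cq\bigr)=\frac{1}{q}\cdot\frac{9}{29c}-1$. Later epochs contribute nonnegative regret, because every per-round term $\textsc{OPT}-R_t$ has expectation at least $0$ given the realized weights. So we can drop them. The liminf is handled by monotonicity of the partial sums in expectation.

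\textbf{Upper bound for \textsc{EFA}.} Here we bound $\ell_t\le c/2$ in every epoch and then bound each epoch's regret.

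For $\ell=c/2+1$, the fictitious assortment $S_t(\ell)$ consists of $c/2-1$ products of weight $0.9$ plus $c/2+1$ replicas of the $(c/2)$-th best known product. The $(c/2)$-th best known product is at least $0.9$, or $1$ if entrants of weight $1$ have been revealed. So $\alpha_t(c/2+1)$ exceeds $\textsc{Rev}$ of the myopic set by roughly $\frac{(1-2q)\cdot \text{const}}{(c+1)^2}$.

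On the other hand, $\textsc{OPT}_t$ exceeds the myopic revenue by at most $\Pr[\text{some unknown entrant is }1]\cdot(\text{gain})\le cq$ times a bounded gain. The assumption $q<\frac{0.9}{2+2(c+1)^2}$ is exactly what makes $\textsc{OPT}_t<\alpha_t(c/2+1)$. Thus $\ell_t\le c/2$ and \textsc{EFA} keeps all $c/2$ top known products. I expect this comparison to be the main obstacle: it needs careful algebra across all histories, since revealed weight-$1$ entrants shift the top-$c$ known set.

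Given this, consider any exploration epoch. By the computation in Lemma~\ref{lemma:expression_ratio_reward_loss_time_gain}, the expected number of outside-option rounds is $w_0/(\ell\tildew)=1/(\ell q)$. The expected number of incumbent purchases is $W/(\ell q)$. The regret per epoch is then
\[
\textsc{OPT}_t\tau-r=\frac{\textsc{OPT}_t(1+W)-W}{\ell q}+(\textsc{OPT}_t-1),
\]
where $W$ is the total weight of the $c-\ell$ known products offered.

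The numerator equals $(1+W)\bigl(\textsc{OPT}_t-\tfrac{W}{1+W}\bigr)$. The bracket is at most $\textsc{OPT}_t$ minus the myopic revenue, which is $O(cq)$, and $W\le c$. This gives per-epoch regret at most about $2c(c+1)$ after bounding constants.

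Non-exploration rounds incur zero conditional expected regret in this instance once further exploration is not worth it. Since there are at most $c$ epochs, summing gives $\textsc{Reg}(\textsc{EFA})\le 2c^2(c+1)$.
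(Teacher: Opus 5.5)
Your lower bound for \textsc{ExploreAll} is the paper's argument. You restrict to the first epoch, use $\tau=\frac{cq+1}{cq}$ and $\textsc{OPT}\ge\frac{9}{29}$, and drop the later rounds as nonnegative, so that half is fine.

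The upper bound for \textsc{EFA} has a genuine gap: your claim that \emph{every} exploration epoch has regret bounded independently of $q$ is false. The faulty step is ``the bracket is at most $\textsc{OPT}_t$ minus the myopic revenue.'' Since $W$ is the weight of only the top $c-\ell$ known products, $\frac{W}{1+W}\le\textsc{Rev}_t(S^\star_{(c)}(\mathcal{I}_t))$. So the bracket is \emph{at least} the myopic gap, and you must add the loss $\textsc{Rev}_t(S^\star_{(c)}(\mathcal{I}_t))-\frac{W}{1+W}$ from the $\ell$ dropped products. That loss is $O(q\ell)$ only if the dropped products are the weight-$2q$ incumbents.

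Once $k\ge c/2$ entrants have been revealed with weight $1$, the myopic set consists of those entrants and $c-k$ of the $0.9$-incumbents. Exploration is still worthwhile, since an unknown entrant may be $1$ and displace a $0.9$. So \textsc{EFA} drops $\ell$ of the $0.9$-products, and your own formula gives epoch regret at least $\frac{0.9}{(c+1)q}-1$. For instance, with $c=2$: after the first entrant is revealed to be $1$, \textsc{EFA} offers it together with the unknown entrant. The per-round regret is about $0.15$ and the epoch lasts about $2/q$ rounds.

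So $\ell_t\le c/2$, which is trivial in these histories, does not rescue a uniform per-epoch bound. Moreover, with $1\le k<c/2$ revealed ones, the gap $\alpha_t(c/2+1)-\textsc{Rev}_t(S^\star_{(c)}(\mathcal{I}_t))$ has numerator only $(c/2-k)(0.9-2q)$. Hence the hypothesis on $q$ is not ``exactly'' what makes your crude comparison go through there.

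The missing idea is that such histories are rare, so you should condition on the event $\mathcal{E}^0$ that all entrants have weight $0$.

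On $\mathcal{E}^0$, every revealed entrant is $0$. For $\ell>c/2$ one then has $\alpha_t(\ell)=\frac{0.9c}{0.9c+1}$ and $\textsc{OPT}_t\le\frac{0.45c+cq}{0.45c+cq+1}+cq$. The hypothesis $q<\frac{0.9}{2+2(c+1)^2}$ is exactly what makes the former exceed the latter. Hence \textsc{EFA} keeps all $0.9$-incumbents, and only weight-$2q$ products are dropped. The per-round regret against the true $\textsc{OPT}$, which on $\mathcal{E}^0$ equals the myopic revenue, is then at most $cq$. With at most $c$ epochs of expected length at most $\frac{c+1}{q}$, this part contributes $c^2(c+1)$.

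On the complement, bound the per-round regret by $1$ and use the same epoch-length bound. Multiplying by $\Pr[\text{complement}]\le 1-(1-q)^c\le cq$ gives another $c^2(c+1)$. This is the paper's proof, and it also removes the case analysis over histories with revealed weight-$1$ entrants that you identified as the main obstacle.
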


We next analyze the performance of the $\textsc{UCB}$ policy on the instance $\mathcal{I}(c,q)$. For any non-decreasing sequence $\mathbf{p} = (p_t)_{t=1}^{\infty}$, $\textsc{UCB}(\mathbf{p})$ first assigns only $0$ to each unknown entrant (and thus does not explore), and it then assigns only $1$ (and thus explores all); it thus incurs a large regret due to a reason similar to \textsc{ExploreAll}. This is formalized in the following proposition (proof in Appendix~\ref{appendix_subsec:ucb_is_arbitrarily_suboptimal_proof}). 

\begin{proposition}\label{prop:ucb_performance_bound_on_instance}
    For any non-decreasing sequence $\mathbf{p} = (p_t)_{t=1}^{\infty}$, with $p_t \in [0,1]$ for all $t$, capacity $c \in \mathbb{N}$, and probability $q  < \frac{0.9}{2+2(c+1)^2}$, $\textsc{Reg}(\textsc{UCB}(\mathbf{p})) \geq \frac{1}{q} \cdot \frac{9}{29c} - 1$ and $\textsc{Reg}(\textsc{EFA}) \leq 2c^2(c+1)$ on the instance $\mathcal{I}(c,q)$.
\end{proposition}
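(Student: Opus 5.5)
The EFA bound $\textsc{Reg}(\textsc{EFA}) \leq 2c^2(c+1)$ is the same as in Proposition~\ref{prop:explore_all_is_arbitrarily_suboptimal} on the same instance, so we reuse it and only prove the lower bound for $\textsc{UCB}(\mathbf{p})$. The plan is to show that on $\mathcal{I}(c,q)$ the trajectory of $\textsc{UCB}(\mathbf{p})$ has two phases. In the first, it never explores. In the second, it behaves exactly like \textsc{ExploreAll}. We then bound the regret of the second phase by the first-epoch argument behind Proposition~\ref{prop:explore_all_is_arbitrarily_suboptimal}.

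\emph{Step 1: UCB indices are $0$ or $1$.} Since $\mathcal{F}=\bern(q)$, the $p_t$-th quantile of $\mathcal{F}$ is $0$ when $p_t \le 1-q$ and $1$ when $p_t > 1-q$ (with the usual convention at the boundary). So every unknown entrant has index $\textsc{u}_t(i)\in\{0,1\}$, and all unknown entrants share the same index because the quantile does not depend on $i$. Since $\mathbf{p}$ is non-decreasing, there is a (possibly infinite) switching time $t_0=\min\{t: p_t>1-q\}$.
\begin{itemize}
\item For $t<t_0$, every unknown index is $0$, which is below every incumbent weight ($0.9$ and $2q>0$). So UCB offers only incumbents, no entrant is ever purchased, and $\mathcal{I}_t=\mathcal{I}_1$ throughout.
\item For $t\ge t_0$, every unknown index is $1$, which exceeds both $0.9$ and $2q$ (since $q<0.45$). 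Known entrants have weight $0$ or $1$. So UCB ranks all unknown entrants above every incumbent and offers $\min(c,m_t)$ unknown entrants plus the top known products, which is exactly the \textsc{ExploreAll} assortment.
\end{itemize}

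\emph{Step 2: if $t_0=\infty$, regret is infinite.} No entrant is ever explored. Because $\Pr[\exists\, i: w_i=1]>0$, and on that event $\textsc{OPT}$ strictly exceeds the incumbent-only revenue, the per-round expected regret is bounded below by a positive constant forever. The $\liminf$ is therefore $+\infty$ and the bound holds trivially.

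\emph{Step 3: if $t_0<\infty$, compare to \textsc{ExploreAll}.} Regret in rounds before $t_0$ is nonnegative in expectation. This holds because the benchmark $\textsc{OPT}$ in expectation dominates the revenue of any assortment of known products, given that $\expect[\textsc{OPT}]\ge \textsc{Rev}(S^\star_{(c)}(\mathcal{I}_1))$. From $t_0$ on, the state is still the initial one, so the process is distributed exactly as \textsc{ExploreAll} started at round $1$. One caveat: \textsc{ExploreAll} has a stopping rule based on exploration-worthiness, whereas UCB keeps exploring all remaining entrants. Only the first epoch matters, though. In that epoch all $c$ entrants are offered with cold-start weight $q$, so the expected epoch length is $(cq+1)/(cq)\ge 1/(cq)$.

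We then reuse the per-round constant lower bound on regret from the proof of Proposition~\ref{prop:explore_all_is_arbitrarily_suboptimal}. Per-round revenue is $cq/(cq+1)\le cq$, while $\expect[\textsc{OPT}]$ is at least the revenue of the $c/2$ incumbents at $0.9$, which is a constant of order $9/29$ after accounting for the normalization. Combining these gives regret $\ge \frac{1}{q}\cdot\frac{9}{29c}-1$. Later epochs contribute nonnegative expected regret, by the same nonnegativity argument applied conditionally.

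\emph{Main obstacle.} The delicate point is the nonnegativity of regret outside the first \textsc{ExploreAll} epoch, for both the pre-$t_0$ rounds and the later epochs where UCB may over-explore. I would handle it by conditioning on the history and using $\textsc{OPT}_t\ge$ the revenue of any assortment. This holds because $\expect[\textsc{OPT}\mid\mathcal{H}_t]$ is at least the revenue of the best known-product set. The cold-start weight $q$ equals the prior mean, and the Bernoulli entrants' ex-post weights dominate it in the ranking sense needed.
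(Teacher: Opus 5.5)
Your proposal is correct and matches the paper's proof: all unknown UCB indices equal a common, non-decreasing $\bern(q)$ quantile $\psi_t\in\{0,1\}$, so either UCB never explores (infinite regret) or, after a switching time, it behaves like \textsc{ExploreAll} from the initial state, whose first-epoch regret gives $\frac{1}{q}\cdot\frac{9}{29c}-1$; the EFA bound is reused from Proposition~\ref{prop:explore_all_is_arbitrarily_suboptimal}. One small correction: per-round regret is nonnegative here because the cold-start weight $q$ is below every incumbent weight ($2q$ and $0.9$), which gives $\max_{|S|\le c}\textsc{Rev}_t(S)=\textsc{Rev}_t(S^{\star}_{(c)}(\mathcal{I}_t))\le\textsc{OPT}_t$; the fact that the cold start equals the prior mean is not the reason, and by concavity of $x/(x+1)$ that property alone would not suffice.
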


\subsection{Failure due to underexploration}\label{subsec:underexploration_approaches}
We first compare \textsc{EFA} to the $\textsc{TS}$ policy. Recall that, for any round $t$, $\textsc{TS}$ samples attraction parameters from the prior for each unknown entrant and offers the revenue-maximizing assortment under the sampled attraction parameters for the unknown entrants and the true attraction parameters for the known products.

We show that the regret of $\textsc{TS}$ is arbitrarily worse than that of $\textsc{EFA}$ on the instance $\mathcal{I}(2,q)$ when the prior probability $q$ goes to $0$. Intuitively, to obtain a purchase of an unknown product $\textsc{TS}$ needs to (i) obtain a posterior sample of $1$ (with probability $q$) and (ii) obtain a purchase of an unknown entrant given the offered assortment (with probability at most $q$). Thus, $\textsc{TS}$ induces a purchase with a probability of $q^2$ which implies that the first epoch contains at least $\frac{1}{q^2}$ rounds in expectation. Given that the regret per round is at least $Bq$ for some constant $B$, the regret of $\textsc{TS}$ will be at least $\frac{B}{q}$. This intuition is formalized in the following proposition (proof in Appendix~\ref{appendix_subsec:thompson_sampling_is_arbitrarily_suboptimal_proof}).

\begin{proposition}\label{prop:TS_is_arbitrarily_suboptimal}
    For any probability $q <0.045$, $\textsc{Reg}(\textsc{TS}) \geq \frac{1}{q} \cdot \frac{(1-q)^2}{9}$ and $\textsc{Reg}(\textsc{EFA}) \leq 24$ on the instance $\mathcal{I}(2,q)$.
\end{proposition}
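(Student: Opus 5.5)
The plan has two parts: a lower bound for $\textsc{TS}$ on $\mathcal{I}(2,q)$ obtained from the length of the first epoch, and an upper bound for $\textsc{EFA}$ obtained by comparing it, epoch by epoch, with the simple ``explore one entrant with the best incumbent'' choice. Recall the instance: $c=2$, two entrants with prior $\bern(q)$, incumbents with weights $0.9$ and $2q$, and $\tildew=q$. Since $q<0.045$, we have $2q<0.9<1$.

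\emph{Lower bound for $\textsc{TS}$.}
\begin{enumerate}
\item In the first epoch (no entrant purchased yet), a sampled value of $0$ never beats the incumbents $0.9$ and $2q$, while a sampled value of $1$ does. So $\textsc{TS}$ includes an entrant only if at least one of the two samples equals $1$, which happens with probability $1-(1-q)^2\le 2q$.
\item When entrants are offered, customers see them with weight $\tildew=q$. At most two are offered, so the purchase probability of some entrant is at most $2q/(1+2q)\le 2q$.
\item Hence the per-round probability of ending the epoch is at most $4q^2$, and the expected length of the first epoch is at least $1/(4q^2)$.
\item For the per-round regret in this epoch, the key observation is that customer behaviour depends only on $\tildew$, not on the true $w_1,w_2$. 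Conditional on the epoch still running, the true weights are therefore still i.i.d.\ $\bern(q)$.
\item So the expected benchmark is at least $(1-q)^2\frac{0.9+2q}{1.9+2q}+(1-(1-q)^2)\frac{1.9}{2.9}$.
\item Every assortment available to $\textsc{TS}$ in this epoch has revenue at most $\frac{0.9+2q}{1.9+2q}$, because replacing an incumbent by an entrant of displayed weight $q\le 2q$ can only lower the total weight.
\item The per-round expected regret is thus at least $(2q-q^2)\bigl(\frac{1.9}{2.9}-\frac{0.9+2q}{1.9+2q}\bigr)$, which is of order $q(1-q)$ times a constant for $q<0.045$.
\end{enumerate}

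It remains to show that later rounds contribute nonnegative expected regret, so that they can be dropped. For any history and any assortment $S$, the map $x\mapsto x/(x+1)$ is concave and $\tildew=q$ equals the posterior mean of each unknown weight. Jensen then gives $\textsc{Rev}_t(S)\le \expect[\text{revenue of } S \text{ under true weights}\mid\mathcal{H}_t]\le\textsc{OPT}_t$. Combining an epoch of expected length at least $1/(4q^2)$ with per-round regret of order $q(1-q)$ yields a bound of the form $\frac{1}{q}\cdot\frac{(1-q)^2}{9}$. I expect the constant-chasing here to be routine but fiddly; in particular one has to be careful to get the factor $(1-q)^2$ rather than $(1-q)$.

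\emph{Upper bound for $\textsc{EFA}$.}
\begin{enumerate}
\item By the epoch decomposition behind Lemma~\ref{lemma:epoch_optimal_policy_is_optimal}, there are at most two exploration epochs, and non-exploring rounds after exploration stops contribute zero regret. So $\textsc{Reg}(\textsc{EFA})$ is at most the sum over at most two epochs of $\min_S \textsc{EpochReg}_t(S)$.
\item Each such term is at most $\textsc{EpochReg}^\star_t(1)$, the choice of one entrant together with the best known product. Write $a$ for that product's weight, where $a\in\{0.9,1\}$.
\item For this choice, $\tau=(a+q+1)/q\le 3/q$, and $r/\tau=\frac{a+q}{a+q+1}$ is its per-round revenue, so $\textsc{EpochReg}^\star_t(1)=\tau\,(\textsc{OPT}_t-r/\tau)$.
\item The only way $\textsc{OPT}_t$ can exceed the revenue of the known top-$2$ assortment is if some remaining entrant has weight $1$, which happens with probability at most $2q$. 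Therefore $\textsc{OPT}_t-\frac{a+q}{a+q+1}$ is at most $2q$ plus the difference $\textsc{Rev}_t(S^\star_{(2)}(\mathcal{I}_t))-\frac{a+q}{a+q+1}$.
\item That difference is $O(q)$, since the second known weight is either $2q$ or is $1$ versus $q$. In the latter case, however, exploring is no longer worthwhile once $1,1$ would be needed, so I would check that case separately.
\item Each epoch then contributes $O(1/q)\cdot O(q)=O(1)$, and a crude bound gives at most $12$ per epoch, hence $24$ in total.
\end{enumerate}

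The main obstacle is the case analysis in the second epoch after an entrant of weight $1$ has been revealed. There I would verify directly either that $\textsc{EFA}$ stops exploring, giving zero regret, or that the same $O(q)\cdot O(1/q)$ estimate still applies.
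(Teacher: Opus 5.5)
Both halves have gaps, though each is repairable. \textbf{Thompson Sampling.} Your estimates do not reach the stated constant. An ending probability of at most $4q^2$ together with per-round regret at least $(2q-q^2)\bigl(\frac{1.9}{2.9}-\frac{0.9+2q}{1.9+2q}\bigr)=(2q-q^2)\frac{1-2q}{2.9(1.9+2q)}$ yields only $\frac{(2-q)(1-2q)}{11.6\,(1.9+2q)\,q}\approx\frac{1}{11q}$. This is strictly below $\frac{(1-q)^2}{9q}$ on all of $(0,0.045)$; for example, at $q=0.01$ it gives $8.76<10.89$.

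The loss is in your step~2. When exactly one sample equals $1$, \textsc{TS} offers that single entrant next to the $0.9$ incumbent, so the purchase probability is $\frac{q}{1.9+q}\le q$. The bound $2q$ is needed only when both samples are $1$, which has probability $q^2$. Hence the ending probability is at most $2q(1-q)\,q+q^2\cdot 2q=2q^2$, giving about $\frac{1}{5.5q}$, which suffices. This is essentially the paper's computation of $\mathbb{E}[\textsc{OPT}-\textsc{Rev}(S)]/\mathbb{E}[q(S)]$.

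Your Jensen step is also backwards. Concavity of $f(x)=x/(x+1)$ gives $f(\mathbb{E}X)\ge\mathbb{E}f(X)$, so revenue evaluated at the posterior mean $\tilde{w}=q$ is \emph{at least} the expected true revenue, and in general instances per-round regret can be negative. On $\mathcal{I}(2,q)$ nonnegativity holds for an instance-specific reason instead. An unknown entrant's displayed weight $q$ is below both incumbent weights $0.9$ and $2q$, so no assortment has displayed revenue above the best known pair, which is at most $\textsc{OPT}_t$.

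\textbf{EFA.} The case you flag breaks your per-epoch bound, and neither proposed resolution works. After a weight-$1$ entrant is revealed, the known weights are $1,0.9,2q$ and $\textsc{OPT}_t=(1-q)\frac{19}{29}+\frac{2q}{3}>\frac{19}{29}$, precisely because the last entrant could complete $\{1,1\}$. So \textsc{EFA} does not stop; it explores with $\{1,u\}$, whose revenue is $\frac{1+q}{2+q}\approx\frac12$. The per-round regret is therefore about $0.155$, not $O(q)$. The epoch regret is $\frac{(2+q)\textsc{OPT}_t-(1+q)}{q}\approx\frac{9}{29q}$, which is already the minimum over all assortments in that state and exceeds $24$ once $q<0.012$.

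The missing step is to weight this epoch by the probability of reaching it. The first revealed weight is $\mathrm{Bern}(q)$ independently of the exploration dynamics. So $\textsc{Reg}(\textsc{EFA})$ equals the first-epoch regret, plus $(1-q)$ times the second-epoch regret after a $0$, plus $q$ times the second-epoch regret after a $1$. The last term is only about $\frac{9}{29}$, and with your $O(1)$ bounds for the other two terms the total is far below $24$. This epoch-by-epoch route differs from the paper's, which specializes its general bound $2c^2(c+1)$ to $c=2$; that bound comes from splitting on whether all entrants have weight $0$.
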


We next compare our algorithm to the $\textsc{ExploreOne}$ policy. 
The following proposition shows that the regret of this policy can be a factor of $c$ larger compared to our algorithm (proof in Appendix~\ref{appendix_subsec:explore_one_at_a_time_c_factor_lower_bound_proof}). 

\begin{proposition}\label{prop:explore_one_at_a_time_c_factor}
    For any assortment capacity $c$ there exists an instance where $\frac{\textsc{Reg}(\textsc{ExploreOne})}{\textsc{Reg}(\textsc{EFA})} \geq \frac{c}{2}$.
\end{proposition}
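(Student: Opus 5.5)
We construct, for each $c$, an instance where \textsc{EFA} explores all $c$ entrants in parallel, while \textsc{ExploreOne} has to go through them one at a time. The template is $\mathcal{I}(c,q)$, modified so that exploration is cheap relative to its benefit. We take $m=c$ entrants with prior $\mathcal{F}=\bern(q)$ and cold-start parameter $\tildew=q$. We take $c$ incumbents, all with a very small attraction parameter $\epsilon$, where $\epsilon\ll q$.

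\textbf{Step 1: \textsc{EFA} explores everything at once.} When $\epsilon$ is small relative to $q$, the fictitious revenues $\alpha_t(\ell)$ are all at most about $c\epsilon/(c\epsilon+1)$. Meanwhile $\textsc{OPT}_t$ is at least of order $q/(q+1)$ whenever an entrant remains unknown. So line~\ref{alg_line:set_number_products} gives $\ell_t=k_t$: \textsc{EFA} offers all remaining unknown entrants in every epoch. We then compute its regret by summing epoch regrets. With $j$ unknowns offered together, the epoch length is about $(1+j q)/(jq)$ rounds, and the per-round regret is about $\textsc{OPT}_t$. So
\[
\textsc{Reg}(\textsc{EFA})\approx \sum_{j=1}^c \frac{\textsc{OPT}_t}{jq},
\]
which is of order $\frac{\log c}{q}\cdot \expect[\textsc{OPT}]$. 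A cleaner option is to take $q$ close to $1$ in an expected-value sense. Then with probability near $1$ the first purchase already reveals a high entrant. It is simplest to keep $\textsc{OPT}$ as a common factor.

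\textbf{Step 2: \textsc{ExploreOne} pays roughly $c/q$ epochs' worth.} \textsc{ExploreOne} explores one entrant at a time for as long as $\textsc{OPT}_t>\textsc{Rev}_t(S^\star_{(c)}(\mathcal{I}_t))$. Each of its epochs lasts about $1/q$ rounds, and the known-product revenue is negligible. With $q$ small, the entrants are mostly $0$, so the exploration-worthy condition persists through all $c$ entrants with probability $(1-q)^{c-1}$, which we keep bounded below. Its regret is therefore at least about $\sum_{j}\textsc{OPT}_t\cdot\frac1q$.

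\textbf{Step 3: compare, and the main obstacle.} The comparison is the main obstacle. In Step 1 the harmonic sum yields a ratio of $c/H_c$, not $c/2$, so Step 1 alone does not suffice. The remedy is to make $\textsc{OPT}_t$ drop sharply once a good entrant is found. Concretely, we choose the prior so that each entrant is high-value with probability near $1$: a two-point prior with a high value $h$ with probability $p\to1$, and cold-start $\tildew$ small. Then \textsc{EFA}'s first epoch, of length about $1/(c\tildew)$, almost surely reveals a high entrant, after which little exploration remains. This requires choosing $h$ so that the remaining entrants no longer matter, for instance taking $c=1$-capacity-like dominance with $h\gg$ everything. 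Meanwhile \textsc{ExploreOne} needs about $1/\tildew$ rounds for its first epoch. This gives a ratio of about $c$, and the remaining error terms are absorbed to yield a ratio of at least $c/2$. The careful bookkeeping, namely bounding the subsequent epochs of \textsc{EFA} by a constant multiple of the first, is where I expect the work to lie. I would verify it via the explicit formulas for $\tau^\star_t(\ell)$ from the proof of Lemma~\ref{lemma:expression_ratio_reward_loss_time_gain}.
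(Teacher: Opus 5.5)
\paragraph{The gap.} Step 3 rests on a false claim, and the proof fails there. Steps 1--2 do not help either: with $\tilde{w}=q$ and incumbents $\epsilon\ll q$, offering $j$ unknowns earns about $jq$ per round while $\textsc{OPT}_t\approx jq/2$. So the per-round regret there is not $\approx\textsc{OPT}_t$; it is in fact negative.

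In your modified instance it is not true that ``little exploration remains'' after the first high entrant is found. Nor are EFA's later epochs a constant multiple of the first. Take the cleanest version, $p=1$: $\mathcal{F}$ is a point mass at $h>\tilde{w}$ and the incumbents are zero-weight dummies.
\begin{itemize}
\item With $i<c$ entrants revealed, $\textsc{OPT}_t=\frac{ch}{ch+1}>\frac{ih}{ih+1}=\alpha_t(\ell)$ for every $\ell\le c-i$. So EFA keeps offering all $c-i$ unknowns alongside the $i$ revealed ones.
\item From $\tau=\frac{ih+(c-i)\tilde{w}+1}{(c-i)\tilde{w}}$ and $r=\frac{ih}{(c-i)\tilde{w}}+1$, each of EFA's $c$ epochs has regret exactly $\frac{h-\tilde{w}}{\tilde{w}(ch+1)}\approx\frac{1}{c\tilde{w}}$.
\item Taking $h$ huge does not make the later entrants irrelevant. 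The per-round regret shrinks like $1/h$, but the revealed high entrant absorbs purchases and stretches the epoch by a factor of order $h$.
\end{itemize}
Hence $\textsc{Reg}(\textsc{EFA})=\frac{c(h-\tilde{w})}{\tilde{w}(ch+1)}\approx\frac{1}{\tilde{w}}$. This is essentially the regret of \textsc{ExploreOne}'s first epoch alone, so comparing against that first epoch gives a ratio of about $1$, not $c$.

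\paragraph{The missing idea.} The factor comes from \textsc{ExploreOne}'s later epochs. With $i$ entrants revealed, its epoch regret is $\frac{(c-i)h-\tilde{w}}{\tilde{w}(ch+1)}$, proportional to the number $c-i$ of still-missing good products. EFA's epoch regret stays constant because it spreads exploration over the $c-i$ free slots. Summing over $i=0,\dots,c-1$ gives $\textsc{Reg}(\textsc{ExploreOne})=\frac{c(c+1)h/2-c\tilde{w}}{\tilde{w}(ch+1)}$, hence a ratio $\frac{(c+1)h/2-\tilde{w}}{h-\tilde{w}}\ge\frac{c+1}{2}$.

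This is exactly the paper's mechanism. On $\mathcal{J}(m,c,q,\delta)$ it uses $m\ge 16c^2$ $\mathrm{Bern}(q)$ entrants, $c$ incumbents of weight $q+\delta$, $\tilde{w}=q$, and $q=\delta=m^{-1/2}$. On the event that at least $c$ entrants are good, EFA's epoch regret is the constant $\frac{1-q}{(c+1)q}$, while \textsc{ExploreOne}'s is at least $\frac{(c-i)(1-q-\delta)}{(c+1)q}$. The paper multiplies these by the $\approx 1/q$ epochs between consecutive good entrants and controls the complementary event with a Chernoff bound. The resulting ratio tends to $\frac{c+1}{2}$.

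With the bookkeeping corrected, your point-mass instance gives a shorter, exact proof. The price is a cold start far below the entrants' deterministic true value, whereas the paper keeps $\tilde{w}$ equal to the prior mean.
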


\begin{remark}\label{remark:explore_one_at_a_time_c_factor_upper_bound}
    We also show that $\frac{\textsc{Reg}(\textsc{ExploreOne})}{\textsc{Reg}(\textsc{EFA})} \leq c$ for any instance where the attraction parameter of the $c$-th most attractive incumbent is larger than $\tildew$ (see Appendix~\ref{appendix_subsec:explore_one_at_a_time_c_factor_upper_bound_proof}). 
\end{remark}

\section{Extensions}\label{sec:extensions}
In this section we add several extensions to our model. First, we consider a model where the incumbents have different rewards, and we characterize the optimal policy. Second, we consider an extension to different priors and two entrants. Third, we consider an extension to a single entrant and noisy observations and characterize the optimal policy. 

\subsection{Heterogeneous rewards}\label{subsec:heterogeneous_rewards}
We extend our model so that each incumbent $i \in \{m+1, \ldots, n\}$ has a different reward $r_i$ and each entrant $j \in \{1, \ldots, m\}$ has the same reward $r_{j} = r$.\footnote{
Our algorithm treats the entrants identically and this allows us to decompose our regret analysis into epochs without reasoning about which entrant is explored within a particular epoch. Extending this to settings where the entrant rewards are ex-ante heterogeneous seems to this require new techniques and is an exciting open direction.
} 
All rewards are known to both the platform and the customers at the beginning. All other aspects of the model remain the same as in Section~\ref{sec:model}. As before, the attraction parameters $w_i$ are known for each incumbent $i\in \{m+1,\ldots, n\}$ and are drawn from a known distribution $\mathcal{F}$ for each entrant $i\in\{1,\ldots,m\}$. Given a set of known products $\mathcal{I}_t$ at round $t$, the expected ex-post optimum (including unknown entrants) and the revenue of known products are denoted respectively by
$$\textsc{OPT}_t = \expect\limits_{w_i \sim \mathcal{F}, i \not \in \mathcal{I}_t}\Bigg[  \max_{S \subseteq \mathcal{N}, |S| \leq c} \frac{\sum_{i \in S} r_i w_i}{\sum_{i \in S} w_i + w_0}\Bigg] \quad \text{ and } \quad \textsc{Rev}^{\star}_{t} = \max_{S \subseteq \mathcal{I}_t, |S| \leq c} \frac{\sum_{i \in S} r_i w_i}{\sum_{i \in S} w_i + w_0}.
$$
An important quantity in extending our results to heterogeneous rewards is the \emph{scaled interim regret}, $\textsc{SIR}_i(t) = (\textsc{OPT}_t-r_i) \frac{w_i}{w_0}$. This corresponds to a scaled version of the regret contribution of a known product $i$ at round $t$.\footnote{The exact interim regret of product $i$ if the platform offers $\ell$ unknown entrants is $\textsc{SIR}_i(t) = (\textsc{OPT}_t-r_i) \frac{w_i}{\ell \tildew}$ since product $i$ is purchased $\frac{w_i}{\ell \tildew}$ rounds in expectation, and each purchase contributes $\textsc{OPT}_t-r_i$ to the total regret.
} We note that the regret here is not comparing against the ex-post optimum but rather against the \emph{interim} expected optimum conditioning on the set of known products $\mathcal{I}_t$ and their corresponding attraction parameters. Our algorithm only considers the set of known products with negative scaled interim regret; we denote this set by $\mathcal{I}^{<0}_t = \{i \in \mathcal{I}_t: \textsc{SIR}_i(t) < 0\}$ and its cardinality by $n^{<0}_t$.

First, we characterize the optimal policy for a single entrant; we again consider instances where exploration is worthwhile, i.e., $\textsc{OPT}_1 > \textsc{Rev}_1^{\star}$. Recall that under homogeneous rewards, it is optimal to explore the entrant with the $c-1$ incumbents with the highest values $w_i$. To extend this to heterogeneous rewards, we offer the entrant only with incumbents with negative scaled interim regret and include as many of these incumbents as possible; this yields $\min(n^{<0}_1,c-1)$ incumbents. We prioritize the incumbents with the lowest scaled interim regret and refer to the corresponding set as $S_1=\argmin_{S\subseteq \mathcal{I}_1, |S|\leq c-1} \sum_{i\in S} \textsc{SIR}_i(1)$. Similar to Section~\ref{sec:exploring_single_entrant}, we define policy $\pi(S)$ as the one that offers the unknown entrant together with the set $S$ until the entrant is purchased and offers the most attractive known products afterwards. The following result is a generalization of Theorem~\ref{thm:explore_one_unknown_with_best} and its proof is provided in Appendix~\ref{appendix_subsec:optimal_policy_different_rewards_single_entrant}. 

\begin{theorem}\label{thm:optimal_policy_different_rewards_single_entrant}
    For any instance with $\textsc{OPT}_1 > \textsc{Rev}^{\star}_1$, $\pi(S_1)$ is an optimal policy. That is, it is optimal to explore the entrant with the $\min(n^{<0}_1, c-1)$ products with lowest scaled interim regret.
\end{theorem}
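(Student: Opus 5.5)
I would follow the template of the proof of Theorem~\ref{thm:explore_one_unknown_with_best}. The first step is to reduce to stationary exploration policies. The argument of Lemma~\ref{lemma:there_exists_an_optimal_policy_which_offers_the_same_assortment_until_the_entrant_is_purchased} carries over verbatim to heterogeneous rewards. The state of the system does not change until the entrant is purchased, and once it is purchased all parameters are known, so the revenue-maximizing assortment (of the true parameters) incurs zero further regret. Moreover, since $\textsc{OPT}_1 > \textsc{Rev}^{\star}_1$, any policy that never offers the entrant has infinite regret. Hence it suffices to compare policies $\pi(S)$ for incumbent sets $S\subseteq\mathcal{I}_1$ with $|S|\le c-1$.

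The second step is to compute the regret of $\pi(S)$ exactly. Write $D(S)=\sum_{i\in S} w_i+\tildew+w_0$. Let $\tau(S)=D(S)/\tildew$ be the expected length of the exploration period, and let $N_i$ be the number of times option $i\in S\cup\{0\}$ is chosen in that period. As in the proof of Lemma~\ref{lemma:expression_ratio_reward_loss_time_gain}, IIA gives
\[
\expect[N_i]=\frac{w_i}{D(S)}\cdot\frac{D(S)}{\tildew}=\frac{w_i}{\tildew}
\]
for every $i$, independently of $S$. The regret incurred after exploration is zero. Each round contributes $\textsc{OPT}_1$ minus the realized reward; this uses $\expect[\textsc{OPT}]=\textsc{OPT}_1$ and the fact that the entrant's value is independent of the exploration dynamics. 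Therefore
\[
\textsc{Reg}(\pi(S))=\textsc{OPT}_1\,\tau(S)-\sum_{i\in S} r_i\frac{w_i}{\tildew}-r
=\frac{w_0}{\tildew}\sum_{i\in S}\textsc{SIR}_i(1)+\Big(\textsc{OPT}_1\frac{\tildew+w_0}{\tildew}-r\Big).
\]
The last term does not depend on $S$. So minimizing regret is exactly minimizing $\sum_{i\in S}\textsc{SIR}_i(1)$ over $|S|\le c-1$.

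The final step is the combinatorial minimization. The objective is additive and the constraint is cardinality only, so the minimizer takes the products with the most negative $\textsc{SIR}_i(1)$, up to $c-1$ of them, and never includes one with nonnegative SIR (ties at zero are harmless). That set is $S_1$, of size $\min(n^{<0}_1,c-1)$.

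I expect the main obstacle to be the first step: rigorously justifying the stationary reduction and the regret accounting in the heterogeneous setting. One has to confirm that after the purchase the policy achieves zero per-round regret in expectation against the ex-post $\textsc{OPT}$. One must also handle carefully the liminf definition of regret together with the conditioning $\expect[\textsc{OPT}\mid\mathcal{H}]=\textsc{OPT}_1$ during exploration. I would handle these by a Wald-type identity on the stopping time of the first entrant purchase, noting that $w_1$ is independent of everything before that time.
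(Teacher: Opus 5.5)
Your proposal is correct and follows the paper's proof: it uses the same IIA computation $\expect[N_i]=w_i/\tildew$, the same regret expression $\frac{w_0}{\tildew}\sum_{i\in S}\textsc{SIR}_i(1)+\textsc{OPT}_1\frac{w_0}{\tildew}+(\textsc{OPT}_1-r)$, and the same reduction to minimizing $\sum_{i\in S}\textsc{SIR}_i(1)$ under a cardinality constraint. The only difference is that the paper obtains the stationary reduction as a special case of its multi-entrant epoch lemma, whereas you re-run the single-entrant argument; that argument is not quite verbatim, since the bound $\alpha^\star<1$ in the finiteness step must be replaced by $r_{\text{max}}$ and the hypothesis $\overline{\theta}>w^{\star}_{(c)}$ by $\textsc{OPT}_1>\textsc{Rev}^{\star}_1$.
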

Next, we characterize the optimal policy in the presence of multiple entrants. 
We show that a natural extension to the EFA algorithm is optimal in this setting. The main difference with EFA is that we sort known products according to their scaled interim regret. Note that the negative of the scaled interim regret of product $i$ corresponds to the myopic benefit of product $i$ compared to the expected interim optimum. This plays a similar role to the attraction parameter in \textsc{EFA}. Hence, for $\ell \in \{ 1, \ldots, c\}$ an important quantity in our algorithm is the cumulative benefit of the $\ell$-th fictitious assortment
$$\beta_t(\ell) = -\sum_{i=1}^{c-\ell} \textsc{SIR}_{(i)}(t) - \ell \times \textsc{SIR}_{(c-\ell+1)}(t)$$ 
where $\textsc{SIR}_{(k)}(t)$ is the $k$-th lowest scaled interim regret. The cumulative benefit of the $\ell$-th fictitious assortment $\beta_t(\ell)$ is an analogue of the revenue of the $\ell$-th fictitious assortment $\alpha_t(\ell)$ in Section~\ref{subsec:algorithm_main_theorem}. For $k \in \{1, \ldots, c\}$, denote by $V^{\star}_{(k)}(\mathcal{I}_t)$ a set of $k$ known products with the lowest scaled interim regret values. Choose $S^{\star}_{t} \in \argmax_{S \subseteq \mathcal{I}_t, |S| \leq c} \frac{\sum_{i \in S} r_i w_i}{\sum_{i \in S} w_i + w_0}$ to be a revenue-maximizing set of known products. Algorithm~\ref{alg:optimal_explorer_different_rewards}, which we term \textsc{Heterogeneous Exploration with Fictitious Assortments (HEFA)}, is a generalization of \textsc{EFA}. The number of unknown entrants $\ell_t^{\star}$ is the maximum between $c-n_t^{<0}$ and the largest number $\ell_t$ such that $\textsc{OPT}_t \geq \beta_t(\ell_t)$ (line~\ref{alg_different_rewards_line:set_number_products}, Algorithm~\ref{alg:optimal_explorer_different_rewards}). Then, our algorithm offers $\ell_t^{\star}$ unknown entrant together with the $c-\ell_t^{\star}$ known products with the lowest scaled interim regrets (line~\ref{alg_different_rewards_line:offer_set}, Algorithm~\ref{alg:optimal_explorer_different_rewards}). In Appendix~\ref{appendix_subsec:beta_connection_to_rewards_one}
, we show that, when rewards are homogeneous, HEFA reduces to EFA. The following theorem (proof in Appendix~\ref{appendix_subsec:optimal_policy_multiple_entrants}) shows that \textsc{HEFA} minimizes the expected infinite-horizon regret.

\begin{algorithm}[h!]
\caption{\textsc{Heterogeneous Exploration with Fictitious Assortments} (\textsc{HEFA})}\label{alg:optimal_explorer_different_rewards} 
\begin{algorithmic}[1]
\State \text{Initialize }$I_1 = \{m+1, \ldots, n\}$ and $m_1 = m$.
\State \textbf{for} $t = 0,1,2, \ldots$:
\State \hspace*{1em} Calculate $\textsc{Rev}^{\star}_{t}$, $\textsc{OPT}_t$, and $\beta_{t}(\ell)$ for $\ell \in \{1, \ldots, c\}$.  
\State \hspace*{1em} \textbf{if} $\textsc{OPT}_t > \textsc{Rev}^{\star}_{t}:$ \Comment{ exploration-worthy condition} \label{alg_different_rewards_line:worth_it_to_explore}
\State \hspace*{1em}\hspace*{1em}  Set $\ell_t =  \max\{ \ell \in \{1, \ldots, k_t\}: \textsc{OPT}_{t} \geq \beta_{t}(\ell) \}$ where $k_t = \min(c, m_t)$; set $\ell_t^{\star} = \max(\ell_t, c-n_t^{<0})$.\label{alg_different_rewards_line:set_number_products}
\State \hspace*{1em}\hspace*{1em}  Offer set $S_t  =U_{\ell_t^{\star}} \cup V^{\star}_{(c-\ell_t^{\star})}(\mathcal{I}_t)$;  observe choice $Y_t \in S_t \cup \{0\}$. \label{alg_different_rewards_line:offer_set}
\State \hspace*{1em}\hspace*{1em}  If $Y_t \in U_{\ell_t}$ then $\mathcal{I}_{t+1} = \mathcal{I}_t \cup \{Y_t\}$ and $m_{t+1} = m_t-1$; else $\mathcal{I}_{t+1} = \mathcal{I}_t $ and $m_{t+1} = m_t$. 
\State \hspace*{1em} \textbf{else}: Offer set $S_t = S^{\star}_{t}$; observe choice $Y_t \in S_t \cup \{0\}$; $\mathcal{I}_{t+1} = \mathcal{I}_t $ and $m_{t+1} = m_t$.  \label{alg_different_rewards_line:not_exploration_worthy_case}
\end{algorithmic}
\end{algorithm}

\begin{theorem}\label{thm:optimal_policy_multiple_entrants} For any instance with heterogeneous rewards,    \textsc{HEFA} (Algorithm~\ref{alg:optimal_explorer_different_rewards}) is optimal.
\end{theorem}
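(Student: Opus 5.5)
The plan is to follow the three-step structure of the proof of Theorem~\ref{thm:optimality_of_algo}. First, an epoch decomposition reduces optimality to per-epoch regret minimization. Second, for a fixed number $\ell$ of unknown entrants, I identify the optimal set of known companions. Third, I optimize over $\ell$ through a marginal comparison of $\ell$ versus $\ell+1$.

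\textbf{Step 1 (epoch decomposition).} I would first establish the analogue of Lemma~\ref{lemma:epoch_optimal_policy_is_optimal} with $\textsc{Rev}^\star_t$ in place of $\textsc{Rev}_t(S^\star_{(c)}(\mathcal{I}_t))$. A policy is optimal if it does two things:
\begin{itemize}
\item it minimizes $\textsc{EpochReg}_t(S)=\textsc{OPT}_t\,\tau(S;\mathcal{H}_t)-r(S;\mathcal{H}_t)$ over assortments containing an unknown entrant whenever $\textsc{OPT}_t>\textsc{Rev}^\star_t$;
\item it plays a revenue-maximizing known assortment otherwise.
\end{itemize}
The appendix argument should carry over. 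It relies on three facts: the entrants share a prior and a reward $r$, so the state within an epoch is just $\mathcal{I}_t$ together with the revealed parameters; the choice model is stationary within an epoch; and $\textsc{OPT}_t$ is a martingale, so regret against $\textsc{OPT}$ telescopes into regret against $\textsc{OPT}_t$ epoch by epoch. None of these uses homogeneous incumbent rewards. I expect this step to be the main obstacle, and I would re-check it carefully; the concern is the martingale telescoping once $\textsc{OPT}$ becomes a max of revenue-weighted ratios.

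\textbf{Step 2 (closed form for epoch regret and choice of companions).} Fix $\ell$ unknown entrants and a set $S\subseteq\mathcal{I}_t$ of known companions, and write $W_S=\sum_{i\in S}w_i$. As in the proof of Lemma~\ref{lemma:expression_ratio_reward_loss_time_gain}, the epoch length is $\tau=(W_S+\ell\tildew+w_0)/(\ell\tildew)$. Each $i\in S$ is chosen $w_i/(\ell\tildew)$ times in expectation, and the epoch ends with one entrant purchase worth $r$. Substituting and regrouping gives
\begin{equation*}
\textsc{EpochReg}_t(U_\ell\cup S)=\textsc{OPT}_t-r+\frac{w_0}{\ell\tildew}\Big(\textsc{OPT}_t+\sum_{i\in S}\textsc{SIR}_i(t)\Big).
\end{equation*}
For fixed $\ell$, the regret is therefore minimized by taking the $\min(c-\ell,n^{<0}_t)$ known products with the lowest, and negative, $\textsc{SIR}$. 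This proves the analogue of Lemma~\ref{lemma:optimal_completion_given_ell_unknown_prods} and, specialized to $\ell=1$, also gives Theorem~\ref{thm:optimal_policy_different_rewards_single_entrant}.

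\textbf{Step 3 (optimizing $\ell$).} Write $g(\ell)=\textsc{OPT}_t+\sum_{i=1}^{\min(c-\ell,n^{<0}_t)}\textsc{SIR}_{(i)}(t)$. Since $w_0/\tildew>0$ is a constant, minimizing the epoch regret over $\ell$ is equivalent to minimizing $g(\ell)/\ell$ over $\ell\in\{1,\dots,k_t\}$.
\begin{itemize}
\item \emph{Regime $c-\ell\le n^{<0}_t$.} Here $g(\ell)/\ell\ge g(\ell+1)/(\ell+1)$ is equivalent to $(\ell+1)g(\ell)\ge \ell g(\ell+1)$. After cancellation this becomes $\textsc{OPT}_t+\sum_{i=1}^{c-\ell-1}\textsc{SIR}_{(i)}+(\ell+1)\textsc{SIR}_{(c-\ell)}\ge0$, i.e.\ $\textsc{OPT}_t\ge\beta_t(\ell+1)$. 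This plays the role of Lemma~\ref{lemma:expression_ratio_reward_loss_time_gain}. Moreover, $\beta_t(\ell+1)-\beta_t(\ell)=\ell(\textsc{SIR}_{(c-\ell+1)}-\textsc{SIR}_{(c-\ell)})\ge0$ because the $\textsc{SIR}$ values are sorted ascending, so $\beta_t$ is non-decreasing. The argument $(\Delta)$ from the proof of Lemma~\ref{lemma:num_unknown_ORFA_minimizes_epoch_reg} then applies verbatim.
\item \emph{Regime $c-\ell>n^{<0}_t$.} Here $g(\ell)$ is constant in $\ell$. The condition $\textsc{OPT}_t>\textsc{Rev}^\star_t$ says that for every known $S$ with $|S|\le c$ we have $\textsc{OPT}_t(W_S+w_0)>\sum_{i\in S}r_iw_i$, i.e.\ $\textsc{OPT}_t+\sum_{i\in S}\textsc{SIR}_i>0$. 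Applying this to the set of all negative-$\textsc{SIR}$ products gives $g>0$, so $g/\ell$ is strictly decreasing in this regime.
\end{itemize}
Combining the two regimes, the minimizer is $\ell^\star_t=\max(\ell_t,c-n^{<0}_t)$ (capped at $k_t$), with companions $V^\star_{(c-\ell^\star_t)}(\mathcal{I}_t)$. These are exactly the choices of HEFA, which together with Step 1 gives Theorem~\ref{thm:optimal_policy_multiple_entrants}.
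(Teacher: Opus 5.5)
Your Steps 2 and 3 reproduce the paper's argument. The closed form for $\textsc{EpochReg}_t(U_\ell\cup S)$ is Lemma~\ref{lemma:expression_for_the_epoch_regret_of_an_assortment_containg_ell_unknown_entrants}, and your two regimes are Lemmas~\ref{lemma:comaparison_ell_ell_plus_one_larger_than_c_minus_n} and~\ref{lemma:c_minus_n_unknown_entrants_are_better_than_any_smaller_number}, packaged more compactly through $g(\ell)/\ell$. Your explicit check that $\beta_t$ is non-decreasing supplies a step the paper uses without proof. The cap at $k_t$ is a real improvement, since $\max(\ell_t,c-n^{<0}_t)$ exceeds $k_t$ whenever $m_t<c-n^{<0}_t$. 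In that case, by your own Step 2, the companions should be $V^\star_{(\min(c-\ell^\star_t,\,n^{<0}_t))}(\mathcal{I}_t)$, not $V^\star_{(c-\ell^\star_t)}(\mathcal{I}_t)$, which can contain products with positive $\textsc{SIR}$.

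Step 1 is also the paper's route, but your explanation of why it carries over puts the difficulty in the wrong place.
\begin{itemize}
\item The martingale step is harmless: $\expect[\textsc{OPT}-R_t]=\expect[\textsc{OPT}(\mathcal{H}_t)-\textsc{Rev}(S_t;\mathcal{H}_t)]$ is just the tower property and holds for any bounded $\textsc{OPT}$.
\item What does not transfer verbatim, contrary to your claim that nothing uses homogeneous rewards, is Lemma~\ref{lemma:if_H_is_terminal_the_expected_opt_equals_the_maximum_instantaneous_revenue}. This lemma is what guarantees that no policy earns negative regret from a terminal history. Its proof starts from an assortment $S'$ containing unknown entrants with $\textsc{Rev}(S';\mathcal{H})>\textsc{OPT}(\mathcal{H})$. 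On the event that those entrants realize at least $\tildew$, it bounds $\textsc{OPT}$ below by the cold-start revenue of $S'$, using that revenue increases with the entrants' total weight.
\item With heterogeneous rewards, $x\mapsto (rx+\sum_{i\in S^K}r_iw_i)/(x+\sum_{i\in S^K}w_i+w_0)$ is increasing exactly when $r>\textsc{Rev}(S^K;\mathcal{H})$, where $S^K$ is the known part of $S'$. You must extract this from $\textsc{Rev}(S';\mathcal{H})>\textsc{OPT}(\mathcal{H})\geq\textsc{Rev}(S^K;\mathcal{H})$, the latter holding by terminality. This is exactly the modification the paper makes.
\item The bounds $-m$ and $m/\tildew$ on the optimal regret in Lemma~\ref{lemma:for_any_history_with_k_unknown_entrants_the_regret_of_pi_star_is_lower_bounded_and_upper_bounded_by_quantities_which_depend_on_k} are used to dispose of the case $\expect[Z]=\infty$. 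They must be replaced by $-mr$ (from $\textsc{EpochReg}\geq -r$) and a finite bound involving $r_{\text{max}}$.
\end{itemize}
With these two patches your proof is complete.
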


\subsection{Heterogeneous entrant priors} 
\label{sec:heterogeneous_priors}
We extend our model from Section~\ref{sec:model} to entrants with heterogeneous priors for the belief of the attraction parameters. In the original model with homogeneous prior beliefs, it did not matter exactly \textit{which} entrants were explored first, since all entrants were identical.
In that case, the optimal algorithm EFA characterized \textit{how many} entrants to explore with at once, but it did not have to specify exactly which entrants to include.
When the prior beliefs are heterogeneous, the main question we ask is \textit{which} entrant should be explored first, and how does this depend on the prior beliefs? 

We answer this question for a simple class of instances $\mathcal{C}$ with $n = 4$ products, two entrants with prior beliefs $\mathcal{F}_1$ and $\mathcal{F}_2$, two incumbents with attraction parameters $w_3$ and $w_4$ with $w_3 \geq w_4$, and capacity $c = 2$. Entrant $i$ has a cold-start attraction parameter $\tildew_i$ for $i \in \{1,2\}$. For a subset of entrants $S \subseteq \{1,2\}$ let $\pi^{S}$ be the policy which explores $S$ with the most attractive $2-|S|$ incumbents until an entrant is purchased, subsequently explores the remaining entrant with the most attractive known product until the remaining entrant is purchased, and offers the revenue-maximizing assortment when both entrants have been purchased. We assume that $\prob_{w_1 \sim \mathcal{F}_1}[w_1 > w_3] > 0$ and $\prob_{w_2 \sim \mathcal{F}_2}[w_2 > w_3] > 0$. Under this assumption, it is always worth it to explore an unknown entrant. Then, the optimal policy is one of $\pi^{\{1\}}$, $\pi^{\{2\}}$, and $\pi^{\{1, 2\}}$.
The following theorem (proof in  Appendix~\ref{appendix_subsec:characterization_of_the_optimal_policy_with_two_entrants}) characterizes the optimal policy as a function of the parameters.

\begin{theorem}
\label{thm:characterization_of_the_optimal_policy_with_two_unknown_entrants_with_heterogeneous_priors}
    For any instance in the class $\mathcal{C}$, 
    \begin{itemize}
        \item $\textsc{Reg}(\pi^{\{1,2\}})-\textsc{Reg}(\pi^{\{1\}}) > 0$ if and only if 
    $$\expect_{\substack{w_1 \sim \mathcal{F}_1 \\ w_2 \sim \mathcal{F}_2}} \Bigg[\Big(\frac{\tildew_2}{\tildew_1 + \tildew_2} \Big(\frac{\max(w_2, w_3)}{\tildew_1}-\frac{\max(w_1, w_3)}{\tildew_2} \Big) - \frac{w_3}{\tildew_1}  \Big)(\textsc{OPT}-1) \Bigg] > \expect_{\substack{w_1 \sim \mathcal{F}_1 \\ w_2 \sim \mathcal{F}_2}} \Bigg[ \frac{1}{\tildew_1+ \tildew_2} \textsc{OPT} \Bigg].$$
    \item $\textsc{Reg}(\pi^{\{1,2\}}) - \textsc{Reg}(\pi^{\{2\}}) > 0$ if and only if 
$$\expect_{\substack{w_1 \sim \mathcal{F}_1 \\ w_2 \sim \mathcal{F}_2}} \Bigg[\Big( \frac{\tildew_1}{\tildew_1 + \tildew_2} \Big(\frac{\max(w_1, w_3)}{\tildew_2} -\frac{\max(w_2, w_3)}{\tildew_1}\Big)-\frac{w_3}{\tildew_2} \Big)(\textsc{OPT}-1) \Bigg] > \expect_{\substack{w_1 \sim \mathcal{F}_1 \\ w_2 \sim \mathcal{F}_2}} \Bigg[ \frac{1}{\tildew_1 + \tildew_2} \textsc{OPT} \Bigg], $$
\item $\textsc{Reg}(\pi^{\{2\}})-\textsc{Reg}(\pi^{\{1\}}) >0 $  if and only if 
$$\expect_{\substack{w_1 \sim \mathcal{F}_1 \\ w_2 \sim \mathcal{F}_2}} \Bigg[ \Bigg(\frac{w_3}{\tildew_2} + \frac{\max(w_2,w_3)}{\tildew_1}\Bigg) (\textsc{OPT}-1) \Bigg] > \expect_{\substack{w_1 \sim \mathcal{F}_1 \\ w_2 \sim \mathcal{F}_2}} \Bigg[ \Bigg( \frac{w_3}{\tildew_1} +\frac{\max(w_1,w_3)}{\tildew_2} \Bigg)(\textsc{OPT}-1) \Bigg].$$
 \end{itemize}
\end{theorem}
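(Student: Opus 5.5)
The plan is to compute $\textsc{Reg}(\pi^{\{1\}})$, $\textsc{Reg}(\pi^{\{2\}})$ and $\textsc{Reg}(\pi^{\{1,2\}})$ in closed form, epoch by epoch, and then subtract. The computation reuses the counting identities from the proof of Theorem~\ref{thm:explore_one_unknown_with_best} and Lemma~\ref{lemma:expression_ratio_reward_loss_time_gain}. Suppose an assortment of known products $S$ is offered together with a set $U$ of unknown entrants until the first entrant purchase. Then, by the IIA property of MNL, the expected number of rounds in which option $i\in S\cup\{0\}$ is chosen equals $w_i/\sum_{j\in U}\tildew_j$, and exactly one entrant purchase ends the epoch. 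Each round in which the outside option is chosen contributes $\textsc{OPT}$ to the regret, and each purchase contributes $\textsc{OPT}-1$. Here $\textsc{OPT}$ is the ex-post optimum, a function of $(w_1,w_2)$.

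The first step is a key observation: the epoch lengths and choice counts depend only on the cold-start parameters $\tildew_i$ and on already revealed parameters, and never on unrevealed ones. The $\tildew_i$ are fixed primitives, and the second-epoch assortment depends only on the revealed $w_i$. I would therefore write each regret as a single expectation over $(w_1,w_2)$ of (counts)$\times \textsc{OPT}$ and (counts)$\times(\textsc{OPT}-1)$, by conditioning on the revealed value and using that the unrevealed one is independent. I would also check that when only one entrant remains, the best known product is $\max(w_i,w_3)$; this uses $w_3\ge w_4$.

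Concretely, with $w_0=1$, the regret of $\pi^{\{1\}}$ should be
\[
\expect\Big[\textsc{OPT}\Big(\tfrac{1}{\tildew_1}+\tfrac{1}{\tildew_2}\Big)+(\textsc{OPT}-1)\Big(\tfrac{w_3}{\tildew_1}+1+\tfrac{\max(w_1,w_3)}{\tildew_2}+1\Big)\Big],
\]
and $\pi^{\{2\}}$ is obtained by swapping the indices $1$ and $2$.

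For $\pi^{\{1,2\}}$, the first epoch offers $\{1,2\}$ with no incumbent. Its outside-option count is $1/(\tildew_1+\tildew_2)$, and it ends with one entrant purchase. The purchased entrant is $i$ with probability $p_i=\tildew_i/(\tildew_1+\tildew_2)$, independently of the realized $w$'s. After that, a second epoch explores the other entrant $j$ alongside $\max(w_i,w_3)$. This gives
\begin{align*}
\textsc{Reg}(\pi^{\{1,2\}})=\expect\Big[&\tfrac{\textsc{OPT}}{\tildew_1+\tildew_2}+(\textsc{OPT}-1)
+p_1\Big(\tfrac{\textsc{OPT}}{\tildew_2}+(\textsc{OPT}-1)\big(\tfrac{\max(w_1,w_3)}{\tildew_2}+1\big)\Big)\\
&+p_2\Big(\tfrac{\textsc{OPT}}{\tildew_1}+(\textsc{OPT}-1)\big(\tfrac{\max(w_2,w_3)}{\tildew_1}+1\big)\Big)\Big].
\end{align*}

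The final step is algebra. For the third bullet, the $\textsc{OPT}/\tildew$ terms and the constant "$+1$" purchase terms are identical in $\pi^{\{1\}}$ and $\pi^{\{2\}}$, so they cancel. What remains is exactly the stated comparison of $\frac{w_3}{\tildew_2}+\frac{\max(w_2,w_3)}{\tildew_1}$ against $\frac{w_3}{\tildew_1}+\frac{\max(w_1,w_3)}{\tildew_2}$, each weighted by $(\textsc{OPT}-1)$.

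For the first bullet, I subtract $\textsc{Reg}(\pi^{\{1\}})$ from $\textsc{Reg}(\pi^{\{1,2\}})$. The coefficients of $\textsc{OPT}$ alone simplify via $p_1+p_2=1$: the $\pi^{\{1\}}$ coefficient is $\frac1{\tildew_1}+\frac1{\tildew_2}$, and the $\pi^{\{1,2\}}$ coefficient is $\frac{1}{s}+\frac{p_1}{\tildew_2}+\frac{p_2}{\tildew_1}$ with $s=\tildew_1+\tildew_2$. Their difference is a multiple of $1/s$. The remaining terms, after grouping $p_2\big(\tfrac{\max(w_2,w_3)}{\tildew_1}-\tfrac{\max(w_1,w_3)}{\tildew_2}\big)$ and the $-w_3/\tildew_1$ term, form the $(\textsc{OPT}-1)$-weighted expression on the left. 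The second bullet follows symmetrically.

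The main obstacle I expect is this bookkeeping: correctly redistributing the pure-$\textsc{OPT}$ terms against the $(\textsc{OPT}-1)$ purchase terms so that the result matches the stated inequality exactly. I would do it by writing every term in the form $a\,\textsc{OPT}+b(\textsc{OPT}-1)$ before subtracting, and then verifying the $1/s$ coefficient.
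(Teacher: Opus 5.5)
Your proposal is correct and follows the paper's proof: the paper likewise computes $\textsc{Reg}(\pi^{\{1\}})$, $\textsc{Reg}(\pi^{\{2\}})$ and $\textsc{Reg}(\pi^{\{1,2\}})$ epoch by epoch using the IIA count $w_i/W^U(S)$ for each known option, arrives at exactly your three closed forms, and subtracts them. The only algebra needed is the identity $\frac{1}{\tildew_1}+\frac{1}{\tildew_2}-\frac{1}{\tildew_1+\tildew_2}\big(1+\frac{\tildew_1}{\tildew_2}+\frac{\tildew_2}{\tildew_1}\big)=\frac{1}{\tildew_1+\tildew_2}$, which pins your ``multiple of $1/s$'' down to exactly $-\textsc{OPT}/(\tildew_1+\tildew_2)$; your explicit argument that the identity of the first-purchased entrant and the choice counts do not depend on the unrevealed $w$'s is a step the paper leaves implicit.
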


We provide two interpretations of this result under certain conditions.
The following proposition (proof in Appendix~\ref{appendix_subsec_stochastic_dominance}) shows that if $\mathcal{F}_1$ stochastically dominates $\mathcal{F}_2$, then $\pi^{\{2\}}$ is never optimal.
That is, loosely speaking, it is better to first explore with the product with a ``higher'' attraction parameter.

\begin{proposition}\label{corollary:stochastic_dominance}
    For any instance in the class $\mathcal{C}$ where $\mathcal{F}_1$ stochastically dominates $\mathcal{F}_2$, $\tildew_1 > \tildew_2 $, and $\prob_{w_1 \sim \mathcal{F}_1}[w_1 > w_3] > 0$, $\pi^{\{2\}}$ is not an optimal policy. 
\end{proposition}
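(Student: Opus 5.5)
Our plan is to use the third bullet of Theorem~\ref{thm:characterization_of_the_optimal_policy_with_two_unknown_entrants_with_heterogeneous_priors}: we show that $\textsc{Reg}(\pi^{\{2\}})-\textsc{Reg}(\pi^{\{1\}})>0$, so $\pi^{\{2\}}$ is beaten by $\pi^{\{1\}}$. Set $g = 1-\textsc{OPT} = 1/(W+1)$, where $W$ is the sum of the two largest values among $w_1,w_2,w_3,w_4$. Then $g>0$, $g$ is symmetric in $(w_1,w_2)$, and $g$ is nonincreasing in each argument. Since $\textsc{OPT}-1=-g$, the third condition is equivalent to $D>0$, where
$$D=\expect\Big[g\Big(\frac{w_3}{\tildew_1}+\frac{\max(w_1,w_3)}{\tildew_2}\Big)\Big]-\expect\Big[g\Big(\frac{w_3}{\tildew_2}+\frac{\max(w_2,w_3)}{\tildew_1}\Big)\Big].$$
Grouping the terms by $1/\tildew_2$ and $1/\tildew_1$, and using $\max(x,w_3)-w_3=(x-w_3)^+$, this becomes
$$D=\frac{1}{\tildew_2}\expect\big[g\,(w_1-w_3)^+\big]-\frac{1}{\tildew_1}\expect\big[g\,(w_2-w_3)^+\big].$$

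Let $A=\expect[g(w_1-w_3)^+]$ and $B=\expect[g(w_2-w_3)^+]$. We have $A>0$ because $\prob[w_1>w_3]>0$ and $g>0$. Also $1/\tildew_2>1/\tildew_1$. Hence it suffices to prove $A\ge B$, since then $D\ge A(1/\tildew_2-1/\tildew_1)>0$.

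To prove $A \ge B$, write $g=g(w_1,w_2)$, treating $w_3,w_4$ as fixed. Define $\psi(a,b)=g(a,b)\big[(a-w_3)^+-(b-w_3)^+\big]$, so that $A-B=\expect[\psi(w_1,w_2)]$. We use the quantile coupling: $w_1=F_1^{-1}(U)$, $w_2=F_2^{-1}(V)$ with $U,V$ independent uniform, and $w_1'=F_2^{-1}(U)$. Stochastic dominance gives $w_1\ge w_1'$ pointwise. The pair $(w_1',w_2)$ is i.i.d., and $\psi$ is antisymmetric because $g$ is symmetric, so $\expect[\psi(w_1',w_2)]=0$. It therefore remains to show that $\psi(a,b)$ is nondecreasing in $a$; this gives $\expect[\psi(w_1,w_2)]\ge\expect[\psi(w_1',w_2)]=0$.

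We expect this monotonicity check to be the main obstacle. The piece $-g(a,b)(b-w_3)^+$ is nondecreasing in $a$ because $g$ is nonincreasing in $a$. For the piece $h(a)=(a-w_3)^+/(W(a)+1)$, we argue by cases on the rank of $a$.
\begin{itemize}
\item If $a\le w_3$, then $h=0$.
\item For $a>w_3$, the value $a$ is among the top two whenever it contributes to $W$. On any interval where the top two include $a$, we have $W=a+K$ with $K\ge w_3\ge0$. Then $h=(a-w_3)/(a+K+1)$, whose derivative has numerator $K+1+w_3>0$.
\item On intervals where $a$ is not among the top two, $W$ is constant, so $h$ is increasing.
\end{itemize}
Since $h$ is continuous in $a$, it is nondecreasing overall, which completes the argument.
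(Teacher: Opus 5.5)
Your proof is correct and follows essentially the same route as the paper. You rewrite $\textsc{Reg}(\pi^{\{2\}})-\textsc{Reg}(\pi^{\{1\}})$ as $A/\tildew_2-B/\tildew_1$ using $\textsc{OPT}-1=-1/(W+1)$, get strictness from $\tildew_1>\tildew_2$ and $\prob[w_1>w_3]>0$, and prove $A\ge B$ by coupling $w_1$ above an independent $\mathcal{F}_2$-copy, combined with monotonicity of $\psi(\cdot,b)$ and antisymmetry. Your monotonicity check, which splits $\psi$ into $g\,(a-w_3)^+$ and $-g\,(b-w_3)^+$, is if anything cleaner than the case analysis in the paper's Lemma~\ref{lemma:rev_diff_func_incr_in_fst_wgt}.
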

    
Next, we consider a setting where $\mathcal{F}_1$ and $\mathcal{F}_2$ have the \textit{same mean} $\mu$ but have different tails.
We consider Bernoulli-like priors, where $w_1$ equals $\mu/p_1$ with probability $p_1$ and 0 otherwise, and $w_2$ equals $\mu/p_2$ with probability $p_2$ and 0 otherwise, where $p_1 > p_2$. 
We refer to $\mu/p_i$ as the \textit{upside} of item $i$. Entrant 1 has a lower upside than entrant 2, but a higher probability of realizing this upside. We assume that $\tilde{w}_i = \expect_{w_i \sim \mathcal{F}_i}[w_i]$ is the mean of entrant $i$'s prior for $i \in \{1,2\}$. We fix $p_1$ and evaluate the optimal policy as a function of $p_2 \in (0, p_1)$.

Surprisingly, we show that either $\pi^{\{1\}}$ or $\pi^{\{2\}}$ can be optimal. 
That is, it is not the case that always exploring with the entrant with a higher upside first is better, nor is it optimal to always explore the lower-upside entrant first.
We show that it is better to explore with entrant 1 first if $p_2$ is \textit{much} smaller than~$p_1$, but it is better to explore with entrant 2 first if $p_2$ is \textit{slightly} smaller than~$p_1$.
We formalize this in the following proposition (proof in Appendix~\ref{appendix_subsec:different_priors}), where we show that the optimal policy depends on whether $p_2$ is above or below a threshold $\theta$.

\begin{proposition}\label{prop:different_priors_bernoulli}
    For any $\mu, p_1,w_3, w_4$ with $p_1 \in (0, \frac{1}{6})$, $w_3< \frac{\mu}{p_1} < 2 w_3$, $w_4 = \frac{w_3}{2}$, $w_4 > \mu$, and $w_3 > 4$, there exists some $\theta( \mu , p_1,w_3) \in (0, p_1)$ 
    such that the optimal policy is $\pi^{\{1\}}$ if $p_2 < \theta(\mu , p_1,w_3)$ and the optimal policy $\pi^{\{2\}}$ if $p_2 > \theta( \mu , p_1,w_3)$.
\end{proposition}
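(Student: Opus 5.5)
The plan is to reduce everything to an explicit function of $p_2$ using Theorem~\ref{thm:characterization_of_the_optimal_policy_with_two_unknown_entrants_with_heterogeneous_priors}, and then carry out a sign analysis.

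\textbf{Step 1: specialize the theorem.} Since $\tildew_1=\tildew_2=\mu$, the factors $1/\tildew_i$ are all equal to $1/\mu$. The third bullet then says that $\pi^{\{1\}}$ beats $\pi^{\{2\}}$ if and only if
\[
D(p_2):=\expect\big[(\max(w_1,w_3)-\max(w_2,w_3))(1-\textsc{OPT})\big]>0 .
\]
Here I have used $\textsc{OPT}-1<0$ to flip the sign.

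\textbf{Step 2: enumerate $\textsc{OPT}$.} I will write $\textsc{OPT}$ over the four realizations of $(w_1,w_2)\in\{\mu/p_1,0\}\times\{\mu/p_2,0\}$. With $c=2$ and $\mu/p_2>\mu/p_1>w_3>w_4$, the top-two sums are as follows:
\begin{itemize}
\item $\mu/p_1+\mu/p_2$ when both entrants are high;
\item $\mu/p_1+w_3$ when only entrant 1 is high;
\item $\mu/p_2+w_3$ when only entrant 2 is high;
\item $w_3+w_4=\tfrac32 w_3$ when neither is high.
\end{itemize}
In each case $\max(w_i,w_3)$ is either $\mu/p_i$ or $w_3$. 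Thus $D(p_2)$ becomes an explicit rational function of $p_2$, weighted by the products $p_1p_2$, $p_1(1-p_2)$, and so on.

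\textbf{Step 3: rule out $\pi^{\{1,2\}}$.} I will show that the first two bullets of the theorem are satisfied, so that $\pi^{\{1,2\}}$ is strictly worse than both single-entrant policies. After substitution, each of these bullets requires a positive quantity of order $w_3/\mu$ times $\expect[1-\textsc{OPT}]$ to exceed $\expect[\textsc{OPT}]/(2\mu)$. I expect the assumptions $w_4=w_3/2>\mu$ and $w_3>4$ to be exactly what makes this hold. Indeed, $1-\textsc{OPT}\ge 1/(1+\mu/p_1+\mu/p_2)$ is bounded below, while the bracketed coefficient is at least roughly $w_3/\mu\cdot\tfrac12$.

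\textbf{Step 4: single crossing of $D$.} I will check three things:
\begin{enumerate}
\item $D(p_2)\to 0$ as $p_2\to p_1$, by symmetry of the two entrants.
\item $D'(p_1^-)<0$, so that $D<0$ just below $p_1$ and $\pi^{\{2\}}$ is better there. Intuitively, entrant 2's larger upside $\mu/p_2$ increases $\expect[\max(w_2,w_3)(1-\textsc{OPT})]$ to first order.
\item $D(p_2)>0$ as $p_2\to 0$. In this limit entrant 2 is almost surely $0$, so $\max(w_2,w_3)=w_3$ except on a set of probability $p_2$, on which $\mu/p_2\cdot(1-\textsc{OPT})\approx \mu/p_2\cdot p_2/\mu\to 1$ stays bounded. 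Meanwhile $\expect[(\max(w_1,w_3)-w_3)(1-\textsc{OPT})]$ is a fixed positive constant.
\end{enumerate}

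The main obstacle is showing that there is exactly one sign change on $(0,p_1)$, which yields the threshold $\theta$. My first attempt will be to clear denominators: multiply $D$ by the positive product of the terms $(1+\text{sum})$. This gives $p_2\,D(p_2)\cdot(\text{positive})=P(p_2)$, a low-degree polynomial with a known root at $p_2=p_1$. After factoring out $(p_1-p_2)$, the remaining factor should be monotone in $p_2$ on $(0,p_1)$, and I would verify this using $p_1<1/6$, $w_3<\mu/p_1<2w_3$, and $w_3>4$. If that fails, the fallback is to show directly that $D$ is strictly decreasing, by bounding its derivative termwise under the same assumptions.
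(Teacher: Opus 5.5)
Your route is the paper's: specialize the two-entrant formulas to $\tilde w_1=\tilde w_2=\mu$, enumerate the four Bernoulli outcomes, exclude $\pi^{\{1,2\}}$, then clear denominators and factor the root $p_2=p_1$ out of $D(p_2)=\mu\bigl(\textsc{Reg}(\pi^{\{2\}})-\textsc{Reg}(\pi^{\{1\}})\bigr)$. Done explicitly, this gives
\[
D(p_2)=\frac{(p_2-p_1)\,Q(p_2)}{\bigl(p_2(\mu+p_1)+\mu p_1\bigr)\bigl(p_2(w_3+1)+\mu\bigr)\bigl(p_1(w_3+1)+\mu\bigr)},\qquad Q(x)=Ax^2+Bx+C .
\]
The paper uses three facts: $A>0$ (via $w_3>2\mu$), $C=\mu^2p_1(p_1w_3-\mu)<0$, and $Q(p_1)>0$ (via $\mu/p_1<2w_3$ and $w_3>2\mu$). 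From these it concludes that $Q$ has exactly one positive root $\theta\in(0,p_1)$. Your monotone-cofactor variant also works, because $B=\mu p_1^2\bigl[2w_3^2-(u-w_3)^2+(2w_3-u)+p_1u(u-w_3)\bigr]>0$ with $u=\mu/p_1\in(w_3,2w_3)$. The paper's argument simply does not need the sign of $B$.

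Three points in your plan are wrong as stated.

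First, Step~4(2) has the sign reversed. Since $D(p_1)=0$, having $D<0$ just below $p_1$ requires $D'(p_1^-)>0$, i.e.\ $Q(p_1)>0$. With $D'(p_1^-)<0$, your three checks are consistent with $D>0$ on all of $(0,p_1)$, so they would not produce any threshold.

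Second, the fallback of showing $D$ strictly decreasing cannot work. We have $D(0^+)=p_1(\mu-p_1w_3)/(\mu+p_1(w_3+1))>0$, $D<0$ on $(\theta,p_1)$, and $D(p_1)=0$, so $D$ must rise near $p_1$. Monotonicity can only be expected of $Q$. It fails for $D$, and also for $p_2Q(p_2)$, whose derivative at $0$ is $C<0$.

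Third, the heuristic in Step~3 is false. After multiplying by $2\mu$, the first bullet reads
\[
p_1p_2\,\frac{2(w_3-\mu/p_2)}{\mu/p_1+\mu/p_2+1}+(1-p_1)p_2\,\frac{2(w_3-\mu/p_2)}{\mu/p_2+w_3+1}+(1-p_1)(1-p_2)\,\frac{w_3}{3w_3+2}>0,
\]
and its first two terms are negative.
\begin{itemize}
\item Your coefficient $w_3+\tfrac12(\max(w_1,w_3)-\max(w_2,w_3))$ equals $\tfrac32w_3-\mu/(2p_2)$ when only entrant~2 is high, which is unbounded below as $p_2\to0$.
\item The bound $1-\textsc{OPT}\ge 1/(1+\mu/p_1+\mu/p_2)$ is not uniform in $p_2$.
\end{itemize}
What actually makes the inequality hold is that the both-low term outweighs the negative ones. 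Using $\mu<w_3/2$ and $p_1,p_2<1/6$, the paper bounds the left side below by $(1-p_2)w_3(2w_3-7)/\bigl((4w_3+1)(18w_3+12)\bigr)$; this is where $w_3>4$ is needed. Also, one bullet suffices to exclude $\pi^{\{1,2\}}$; the paper only shows $\textsc{Reg}(\pi^{\{1\}})<\textsc{Reg}(\pi^{\{1,2\}})$.
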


The exact conditions of \cref{prop:different_priors_bernoulli} are chosen to ensure that $\pi^{\{1, 2\}}$ is never the optimal policy, so that this provides a clean characterization of which of $\pi^{\{1\}}$ and $\pi^{\{2\}}$ is optimal.

This result shows that which entrant to explore with first has a complex dependence on the priors.
This is in contrast to the traditional multi-armed bandit setting, where a common  approach is to simply order all arms by their UCBs, and explore those with the highest UCBs first.
\cref{prop:different_priors_bernoulli} shows that simply using UCBs does not work in our setting.
We leave as an open direction of whether there exists an index that is a function of the prior, where it is optimal to explore those with a higher index first.

\subsection{Single entrant and noisy observations.}\label{sec:single_entrant_noisy_obs}
We also extend our model from Section~\ref{sec:model} when there is a single entrant to noisy observations. For ease of exposition, we assume that a Beta prior for the entrant, i.e., $\mathcal{F} = \mathrm{Beta}(a,b)$. After the $i$-th purchase of the entrant the platform and the customers learn a review feedback for the entrant given by $u_i \sim \bern(w_1)$ for $i \in \{1, \ldots, k-1\}$. To ensure that the optimal policy has finite regret, we assume that the entrant's attraction parameter is fully realized after the $k$-th purchase, i.e., $u_i = w_1$ for $i \geq k$. The entrant's attraction parameter given $i \leq k-1$ purchases is $h(\mathcal{F}_i)$ where $\mathcal{F}_i$ is the entrant's posterior given the first $i$ review feedbacks and $h$ is an arbitrary mapping (e.g., the mean, the $X$-th percentile, etc.). When there are no purchases ($i = 0$), the entrant's attraction parameter $h(\mathcal{F})$ is the cold-start attraction parameter $\tildew$. We formalize below the parts of the model which differ from Section~\ref{sec:model}.
\begin{itemize}
    \item 
    The history $\mathcal{H}_t$ now keeps track of the historical reviews for the entrant. Letting $p_s$ be the number of purchases for the entrant before round $s$, the history at round $t$ is $\mathcal{H}_t = \{(S_s, Y_s, i_s)\}_{s=1}^{t-1}$ where $i_s = \perp$ if $Y_s$ is an incumbent ($Y_s \neq 1$) and $i_s = u_{p_s+1}$ if $Y_s$ is the entrant ($Y_s = 1$).
    \item 
    If the entrant is chosen in round $t$ ($Y_t = 1$), a new review for the entrant is obtained given by $u_{p_t+1} \sim \bern(w_1)$ if $p_t \leq k-2$ and $u_{p_t + 1} = w_1$ otherwise;  $\mathcal{H}_{t+1} = \mathcal{H}_t \cup (S_t, 1, u_{p_t+1})$ and $p_{t+1} = p_t + 1$ respectively. If an incumbent is chosen ($Y_t \neq 1$), $\mathcal{H}_{t+1} = \mathcal{H}_t \cup (S_t, Y_t, \perp)$ and $p_{t+1} = p_t$ respectively.
    \item The customer's posterior for the entrant $\mathcal{F}(\mathcal{H}_t)$ depends on the entrant reviews $(u_1, \ldots, u_{p_t})$ obtained by round $t$. The customer's attraction parameter for the entrant at round $t$ is $w_1(t) = h(\mathcal{F}(\mathcal{H}_t))$.
\end{itemize}
We assume that there is a positive probability that the entrant would be in the optimal assortment once its attraction parameter is known $(\overline{\theta} > w^{\star}_{(c)}(\mathcal{I}_1))$. The following theorem (proof in Appendix~\ref{appendix_subsec:proof_characterization_opt_policy_single_entrant_noisy_obs}) establishes that it is optimal to explore the entrant with the $c-1$ most attractive incumbents. This shows that the result from Theorem~\ref{thm:explore_one_unknown_with_best} extends to noisy observations. 

\begin{theorem}\label{theorem:single_entrant_non-full_realizability}
    For any instance with $\overline{\theta} > w^{\star}_{(c)}(\mathcal{I}_1)$, it is optimal to offer the entrant with the $c-1$ most attractive incumbents until the $k$-th review, and offer the $c$ most attractive products in subsequent rounds. 
\end{theorem}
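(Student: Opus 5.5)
The proof generalizes Theorem~\ref{thm:explore_one_unknown_with_best} from one purchase phase to $k$ of them. It rests on a dynamic-programming decomposition into \emph{phases}: phase $j\in\{0,\ldots,k-1\}$ consists of the rounds during which the entrant has been purchased exactly $j$ times. Within a phase the state is the review vector $(u_1,\ldots,u_j)$. This vector is unchanged until the next entrant purchase, so the customer's attraction parameter for the entrant, $w_1(t)=h(\mathcal{F}(\mathcal{H}_t))$, and the posterior-expected optimum $\textsc{OPT}_t=\expect[\textsc{OPT}\mid u_1,\ldots,u_j]$ are both constant during the phase. After the $k$-th purchase, $w_1$ is known and the $c$ most attractive products are optimal with zero regret.

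\textbf{Step 1: reduction to stationary policies within each phase.} I will argue as in Lemma~\ref{lemma:there_exists_an_optimal_policy_which_offers_the_same_assortment_until_the_entrant_is_purchased}. Since the state does not change until an entrant purchase, there is an optimal policy that in each phase-state either (a) offers a fixed assortment containing the entrant until it is purchased, or (b) never offers the entrant again and plays the best incumbents forever. Option (b) incurs regret $\textsc{OPT}_t-\textsc{Rev}(S^\star_{(c)})$ in every future round. Under the assumption $\overline{\theta}>w^\star_{(c)}$, together with the martingale property of posteriors, this per-round regret is positive with positive probability on reachable states, so the infinite-horizon regret of (b) is infinite in expectation there. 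Hence (a) is chosen, at least along any path that matters. This step needs care: on posterior states that rule out $w_1>w^\star_{(c)}$, stopping is harmless. For this reason I will phrase optimality as ``weakly optimal among all policies,'' and I will show that offering the entrant with $S^\star_{(c-1)}$ is optimal whenever the entrant is offered at all.

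\textbf{Step 2: per-phase regret.} Fix a phase with state $\sigma$. Let $\tilde w_\sigma$ denote the entrant's current attraction, and let $S$ be the incumbent set offered. The argument of Theorem~\ref{thm:explore_one_unknown_with_best} applies verbatim with $\tildew$ replaced by $\tilde w_\sigma$ and $\alpha^\star$ replaced by the interim benchmark. The phase regret is
$$(\textsc{OPT}-1)\,(\tau_\sigma(S)-o_\sigma)+\textsc{OPT}\cdot o_\sigma ,$$
where $o_\sigma=w_0/\tilde w_\sigma$ is independent of $S$ by IIA, and $\tau_\sigma(S)=(\sum_{i\in S}w_i+\tilde w_\sigma+w_0)/\tilde w_\sigma$. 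The expectation here is over the true $w_1$ conditional on $\sigma$. Because the benchmark $\textsc{OPT}$ is fixed ex post, the infinite-horizon regret equals the total of $\textsc{OPT}-R_t$ summed over rounds. Taking conditional expectations, the phase's contribution is linear in $\tau_\sigma(S)$ with coefficient $\expect[\textsc{OPT}\mid\sigma]-1<0$, plus a term not depending on $S$.

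\textbf{Step 3: decoupling across phases.} The main obstacle is showing that the choice of $S$ in one phase does not affect future phases. It does not. The distribution of the next review $u_{j+1}$ is determined only by $w_1$; it is independent of which incumbents were offered and of how many rounds the phase lasted. So the distribution of the next phase-state is independent of $S$, and total regret is a sum of per-phase terms, each minimized separately, for instance by backward induction on $j$ from $k-1$ down to $0$. Each term is minimized by maximizing $\sum_{i\in S}w_i$, that is, by $S=S^\star_{(c-1)}$. This gives the policy described in the theorem.
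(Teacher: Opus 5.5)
Your phase decomposition, the IIA computation $o_\sigma=w_0/\tilde w_\sigma$ that makes each phase's regret affine in $\tau_\sigma(S)$ with slope $\textsc{OPT}(\sigma)-1<0$, and the observation that the law of the next review does not depend on $S$ are exactly the paper's argument (Lemma~\ref{lemma:analogue_lemma4.1_noisy_observations} plus the IIA calculation). The gap is in Step~1, where you retreat to proving only that $S^{\star}_{(c-1)}$ is optimal ``whenever the entrant is offered at all.'' That does not establish the statement, which asserts that the entrant should be offered at \emph{every} history with fewer than $k$ reviews. If some reachable state made stopping harmless, stopping there would incur zero regret while exploring could incur positive phase regret plus continuation, so the stated policy could be strictly suboptimal. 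Your martingale remark only shows that stopping regret is positive with positive probability, not at every state, so it cannot rule this out.

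The missing fact is that such states do not exist in this model. After $x+y<k$ Bernoulli reviews the posterior is $\mathrm{Beta}(a+x,b+y)$, which has full support on $[0,1]$. Since $\overline{\theta}=1>w^{\star}_{(c)}(\mathcal{I}_1)$, this gives $\Pr[w_1>w^{\star}_{(c)}(\mathcal{I}_1)\mid\mathcal{H}]>0$, and hence $\textsc{OPT}(\mathcal{H})>\textsc{Rev}(S;\mathcal{H})$ for every entrant-free $S$ at every non-terminal $\mathcal{H}$. Therefore any policy that, from a reachable non-terminal state, stops offering the entrant (or more generally has infinite expected time to the next review) incurs $+\infty$ regret. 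This is the ``otherwise'' case of Lemma~\ref{lemma:epoch_regret_of_any_policy_is_at_least_the_minimum_epoch_regret_over_fixed_assortments_single_entrant_noisy}, and it is why the paper can call a history terminal exactly when it has $k$ reviews.

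A smaller point: Step~1 cannot be copied verbatim from Lemma~\ref{lemma:there_exists_an_optimal_policy_which_offers_the_same_assortment_until_the_entrant_is_purchased}, whose proof uses that regret after the exploration period is nonnegative. In the noisy model the customers' parameter $h(\mathcal{F}_{\mathbf{u}})$ can overstate the entrant, so per-round and per-phase regret can be negative. Your backward induction in Step~3 therefore has to do two things:
\begin{itemize}
\item compare an arbitrary policy's continuation regret with that of your policy via the induction hypothesis;
\item use a uniform lower bound to rule out $\infty-\infty$ in the infinite-regret argument above.
\end{itemize}
The lower bound is easy. Offering $S^K\cup\{1\}$ gives phase regret at least $-w_0/(\sum_{i\in S^K}w_i+w_0)\ge -1$, because $\textsc{OPT}(\mathcal{H})\ge\textsc{Rev}(S^K;\mathcal{H})$, so regret from any history is at least $-k$. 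With the full-support fact and this bound added, your argument coincides with the paper's.
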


\section{Conclusion}\label{sec:conclusion}
This paper studies how an online marketplace should learn the quality of new products when learning is driven solely by purchase behavior and customer reviews. In contrast to the classical multi-armed bandit literature which centers on the question of \textit{whether} to explore, we show that another primary lever for minimizing revenue loss in our setting is optimizing \textit{how exploration is structured}. We characterize the optimal exploration policy, EFA, which keeps the best-known incumbents in the assortment and uses a simple threshold rule to determine how many entrants to explore simultaneously. We further show that standard bandit policies such as UCB and Thompson sampling can be arbitrarily worse than EFA, because they tend to drive exploration intensity toward extremes rather than the intermediate level that is often optimal.

Our paper opens up several avenues of future work which we summarize here. First, in our model, a purchase of an entrant reveals its attraction parameter exactly. A natural extension is one where each purchase provides a noisy signal, so the platform and customers can use this to update their belief about entrants over time; Section~\ref{sec:single_entrant_noisy_obs} takes a first step towards this direction.  In that setting, the platform is never fully certain about an entrant’s quality, and the usual multi-armed bandit question of \emph{whether} an entrant is worth continued exploration reappears. Second, a prerequisite for analyzing the above setting is to allow heterogeneous entrant priors. Section \ref{sec:heterogeneous_priors} takes an initial step in this direction by studying heterogeneous priors in a simplified two-entrant setting, but a full characterization remains open. Third, we adopt an infinite-horizon formulation to remove horizon-driven stopping considerations. With a finite horizon, the platform must additionally decide whether exploration is worthwhile given the remaining time. Finally, our characterization relies on the \emph{independence of irrelevant alternatives} property that is satisfied by the MNL choice model. Understanding the structure of optimal exploration policies in choice models that do not satisfy this property is an interesting open direction.

\subsection*{Acknowledgements}
We thank Marouane Ibn Brahim for providing detailed feedback that helped us improve the presentation of our results.

\bibliographystyle{alpha}
\bibliography{References}

\appendix

\section{Optimality of stationary and deterministic policies (Lemma~\ref{lemma:there_exists_an_optimal_policy_which_offers_the_same_assortment_until_the_entrant_is_purchased}) }\label{appendix_sec:proof_lemma_there_exists_an_optimal_policy_which_offers_the_same_assortment_until_the_entrant_is_purchased}
Recall that $\pi(S)$ offers the entrant along with a set of at most $c-1$ incumbents $S$ until the entrant is purchased and offers the $c$ products with the highest attraction parameters in subsequent rounds. Lemma~\ref{lemma:there_exists_an_optimal_policy_which_offers_the_same_assortment_until_the_entrant_is_purchased} posits that there exists an optimal policy $\pi^{\star} = \pi(S)$ for some set of at most $c-1$ incumbents $S$.

We first derive a lower bound on the regret of any policy. For a policy $\pi$, let $Z^{\pi}$ be the first round at which the unknown entrant is purchased; note that $Z^{\pi}$ is a random variable which depends on the policy $\pi$. For a policy $\pi$ and an assortment $S$, let $N^{\pi}(S)$ be the number of rounds in which $S$ is offered during the exploration period of $\pi$; note that $N^{\pi}(S)$ is a random variable which depends on $\pi$ and $S$. Recall that $\alpha^{\star} = \expect_{w_1 \sim \mathcal{F}}[\textsc{OPT}]$ is
the expected ex-post optimum (defined in the proof of Theorem~\ref{thm:explore_one_unknown_with_best}). The next lemma lower bounds the regret of any policy $\pi$ in terms of the total regret incurred over all assortments during the exploration period. Recall that $\textsc{Rev}_1(S)$ is the instantaneous revenue of $S$ before the entrant is purchased. 

\begin{lemma}\label{lemma:lower_bound_regret_of_any_policy_total_regret_incurred_over_all_assortments_during_the_exploration_period}
 The regret of any policy $\pi$ is lower bounded by $\textsc{Reg}(\pi) \geq \sum\limits_{S \subseteq \mathcal{N}, |S| \leq c} (\alpha^{\star} - \textsc{Rev}_1(S)) \expect[N^{\pi}(S)]$.
\end{lemma}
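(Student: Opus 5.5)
Fix a policy $\pi$ and a horizon $T$. The plan is to split the expected cumulative regret $\expect[\sum_{t=1}^T(\textsc{OPT}-R_t)]$ at the random time $Z^{\pi}$. This gives an exploration part over rounds $t \le \min(Z^{\pi},T)$ and a post-exploration part over rounds $Z^{\pi} < t \le T$. The two parts are handled separately and then $T\to\infty$ is taken.

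\emph{Post-exploration part.} After the entrant is purchased, all attraction parameters are known. In any round $t>Z^{\pi}$ the reward satisfies $\expect[R_t\mid\mathcal{H}_t] = \textsc{Rev}_t(S_t) \le \textsc{OPT}$. This holds because, with all parameters revealed, the $c$ most attractive products maximize revenue. Hence every post-exploration round contributes a nonnegative conditional expected regret, and this part can be dropped in a lower bound.

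\emph{Exploration part.} For rounds $t \le Z^{\pi}$ the entrant is still unknown, so customers behave according to $\textsc{Rev}_1$, and $\expect[R_t\mid\mathcal{H}_t]=\textsc{Rev}_1(S_t)$. The delicate point is that $\textsc{OPT}$ depends on $w_1$, which is correlated with nothing observed before $Z^{\pi}$.

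\begin{itemize}
\item Before the purchase, the history $\mathcal{H}_t$ consists only of choices whose distribution depends on $\tildew$ and the known $w_i$, not on $w_1$. So $w_1$ is independent of $\mathcal{H}_t$ on the event $\{t\le Z^{\pi}\}$.
\item The event $\{t \le Z^{\pi}\}=\{Z^{\pi}>t-1\}$ is $\mathcal{H}_t$-measurable.
\item Therefore $\expect[\textsc{OPT}\,\mathbf{1}\{t\le Z^{\pi}\}] = \alpha^{\star}\prob[t\le Z^{\pi}]$.
\end{itemize}

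Combining this with the tower property gives
\[
\expect\Bigl[\sum_{t\le \min(Z^{\pi},T)}(\textsc{OPT}-R_t)\Bigr]=\sum_{t\le T}\expect\bigl[(\alpha^{\star}-\textsc{Rev}_1(S_t))\mathbf{1}\{t\le Z^{\pi}\}\bigr].
\]
Grouping rounds by the offered assortment, the right-hand side equals $\sum_S(\alpha^{\star}-\textsc{Rev}_1(S))\,\expect[N^{\pi}_T(S)]$, where $N^{\pi}_T(S)$ counts exploration rounds up to $T$ in which $S$ is offered.

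\emph{Passing to the limit.} By monotone convergence, $\expect[N_T^{\pi}(S)]\uparrow\expect[N^{\pi}(S)]$ as $T\to\infty$. This expectation may be infinite, in which case the bound is trivially consistent. The one obstacle is that the coefficients $\alpha^{\star}-\textsc{Rev}_1(S)$ can be negative, for example for the myopic assortment $S^\star_{(c)}$, whose revenue may exceed $\alpha^{\star}$. I would handle this in two cases.

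\begin{itemize}
\item If some $S$ with a negative coefficient has $\expect[N^{\pi}(S)]=\infty$, then the probability that the entrant is ever purchased is less than one, since such an $S$ excludes the entrant and makes no progress. On that event the policy permanently forgoes $\textsc{OPT}$ in the states where $w_1>w^\star_{(c)}$. Because $\overline{\theta}>w^\star_{(c)}$, this gives $\textsc{Reg}(\pi)=\infty$, and the inequality holds.
\item Otherwise, each negative coefficient multiplies a finite expectation. The finitely many assortments can then be split into positive and negative terms, with the limit taken separately, and $\liminf$ of the sum equals the sum of the limits.
\end{itemize}

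The infinite-regret argument in the first case is the step requiring the most care. Concretely, I would lower-bound the regret on $\{Z^{\pi}=\infty\}\cap\{w_1>w^\star_{(c)}\}$ by a positive constant per round, and use independence of $w_1$ from the pre-purchase history to show this event has positive probability.
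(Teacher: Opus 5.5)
Your main line is the paper's proof. You split at $Z^{\pi}$, drop the post-exploration rounds since $\textsc{Rev}_t(S_t)\le\textsc{OPT}$ once $w_1$ is revealed, replace $\textsc{OPT}$ by $\alpha^{\star}$ because $Z^{\pi}$ is independent of $w_1$, and regroup rounds by assortment. Working at a finite horizon $T$ and then letting $T\to\infty$ is a real improvement. The paper instead writes $\textsc{Reg}(\pi)$ directly as the expectation of an infinite sum and exchanges sums, expectations and limits without comment.

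Your handling of negative coefficients rests on a false premise, and the argument you give for it would not work.
\begin{itemize}
\item For every incumbent-only $S$ and every realization of $w_1$, $\textsc{Rev}_1(S)\le\textsc{Rev}_1(S^{\star}_{(c)})\le\textsc{OPT}$, because the incumbents belong to $\mathcal{N}$. So $\alpha^{\star}-\textsc{Rev}_1(S)\ge 0$, and in particular $S^{\star}_{(c)}$ can never beat $\alpha^{\star}$.
\item Under the section's assumption $\tildew<w^{\star}_{(c)}$, every $S\ni 1$ even has $\textsc{Rev}_1(S)<\textsc{Rev}_1(S^{\star}_{(c)})\le\alpha^{\star}$.
\end{itemize}
Hence your Case 1 never occurs. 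That is fortunate, because its argument fails in two places:
\begin{itemize}
\item $\expect[N^{\pi}(S)]=\infty$ does not imply $\prob[Z^{\pi}=\infty]>0$. A policy may postpone exploration for an almost surely finite but heavy-tailed number of rounds.
\item If a coefficient really were negative, the same independence argument would give expected per-round regret $\alpha^{\star}-\textsc{Rev}_1(S)<0$ while $S$ is offered. No $+\infty$ lower bound could then follow.
\end{itemize}

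Replace the case split with one observation. Any $S$ with $\expect[N^{\pi}(S)]=\infty$ must exclude the entrant, since otherwise $\expect[N^{\pi}(S)]\le 1/q(S)$, where $q(S)$ is the entrant's purchase probability under $S$. Such an $S$ therefore has a nonnegative coefficient. Since $\expect[N^{\pi}_T(S)]\uparrow\expect[N^{\pi}(S)]$ for each of the finitely many $S$, the right-hand side converges to the stated sum with no $\infty-\infty$ issue, and your proof is complete.
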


Let $\tau(S')$ be the expected number of rounds and $r(S')$ the expected total reward during the exploration period if the entrant is offered repeatedly with a set $S'$ of at most $c-1$ incumbents. Let $S^{\star} \in \argmin\limits_{S' \subseteq \mathcal{I}_1, |S'| \leq c-1} \alpha^{\star} \tau(S') - r(S')$ be a set of incumbents which yields the lowest regret during the exploration period. The following lemma upper bounds the regret of $\pi(S^{\star})$. 

\begin{lemma}\label{lemma:the_regret_of_pi_S_star_is_bounded}
   The regret of the policy $\pi(S^\star)$ is upper bounded by $\textsc{Reg}(\pi(S^{\star}))\leq \frac{\sum_{i \in S^{\star}} w_i + \tildew + w_0}{\tildew}$.
\end{lemma}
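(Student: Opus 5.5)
We need to bound the regret of $\pi(S^\star)$ by $\tau(S^\star) = \frac{\sum_{i\in S^\star} w_i + \tildew + w_0}{\tildew}$. The plan is to reuse the exploration-period decomposition from the proof of Theorem~\ref{thm:explore_one_unknown_with_best}.

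\textbf{Step 1 (regret after the exploration period is zero).} Once the entrant is purchased, its true parameter $w_1$ is revealed and $\mathcal{I}_t=\mathcal{N}$. From then on, $\pi(S^\star)$ offers $S^\star_{(c)}(\mathcal{N})$, whose revenue equals $\textsc{OPT}$ for that realization of $w_1$. Conditional on $w_1$, the per-round expected regret is therefore $0$ in all subsequent rounds, so the $\liminf$ of the cumulative expected regret equals the expected regret accumulated during the exploration period.

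\textbf{Step 2 (regret during the exploration period).} During exploration the offered set is fixed, namely the entrant together with $S^\star$. Its choice probabilities depend only on $\tildew$ and the $w_i$, and not on the realization $w_1$, so the exploration length $Z$ is independent of $w_1$. A Wald-type argument then gives expected exploration regret $\expect\big[\sum_{t\le Z}(\textsc{OPT}-R_t)\big]=\alpha^\star\tau(S^\star)-r(S^\star)$. Here I use that $\textsc{OPT}$ depends only on $w_1$, which is independent of the choice process, so $\expect[\textsc{OPT}\cdot Z]=\alpha^\star\tau(S^\star)$.

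\textbf{Step 3 (the bound).} Since $0\le \textsc{OPT}\le 1$ we have $\alpha^\star\le 1$, and $r(S^\star)\ge 0$, which yields
\[
\alpha^\star\tau(S^\star)-r(S^\star)\le \tau(S^\star)=\frac{1}{\prob[\text{entrant chosen}\mid S^\star]}=\frac{\sum_{i\in S^\star} w_i+\tildew+w_0}{\tildew},
\]
where the last equality uses the geometric-distribution formula for the expected number of rounds until the entrant is first purchased.

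A tighter bound, $\alpha^\star\tau-r\le \alpha^\star o(S^\star)=\alpha^\star w_0/\tildew$ via \eqref{eq:exploration_regret}, is also available but unnecessary. The only delicate point is Step 1: the infinite-horizon regret is defined with a $\liminf$ of expectations, so I must justify exchanging limit and expectation. This follows from monotone or dominated convergence, because the per-round regret after exploration is exactly zero and $Z$ has finite mean (it is geometric with positive success probability since $\tildew>0$). Hence partial sums are bounded by $Z$ and converge.
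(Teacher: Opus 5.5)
Your proposal is correct and follows the paper's argument. The paper likewise writes $\textsc{Reg}(\pi(S^\star)) = \alpha^\star\tau(S^\star) - r(S^\star)$, bounds it by $\tau(S^\star)$ using $\alpha^\star<1$ and $r(S^\star)\ge 0$, and evaluates $\tau(S^\star)$ as the mean of the geometric time until the entrant is first purchased; your Steps 1--2 just make explicit the regret identity that the paper carries over from the proof of Theorem~\ref{thm:explore_one_unknown_with_best}, though the dominated-convergence remark should be stated for the conditional expected per-round regret $\textsc{OPT}-\textsc{Rev}_t(S_t)$ (which vanishes after exploration and is bounded by $1$ in absolute value) rather than for the realized regret $\textsc{OPT}-R_t$.
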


\begin{proof}[Proof of Lemma~\ref{lemma:there_exists_an_optimal_policy_which_offers_the_same_assortment_until_the_entrant_is_purchased}.]
   For any policy $\pi$, it suffices to show that $\textsc{Reg}(\pi) \geq \textsc{Reg}(\pi(S^{\star}))$ for any policy $\pi$. 

\paragraph{Case 1: $\expect[Z^{\pi}] < +\infty$}. Let $\mathcal{E} = \{ S \subseteq \mathcal{N}:|S| \leq c, 1 \in S\}$ be the set of assortments containing the entrant. Let $q(S) = \frac{\tildew}{\sum_{i \in S \setminus \{1\}} w_i + \tildew + w_0}$ be the probability that the entrant is purchased for a set $S \in \mathcal{e}$. Then, the regret of $\pi$ is lower bounded as
\begin{align*}
    \textsc{Reg}(\pi) &\geq \sum\limits_{S \subseteq \mathcal{N}, |S| \leq c} (\alpha^{\star} - \textsc{Rev}_1(S)) \expect[N^{\pi}(S)] \geq \sum\limits_{S \in \mathcal{E}} (\alpha^{\star} - \textsc{Rev}_1(S)) \expect[N^{\pi}(S)] \nonumber \\
    &= \sum\limits_{S \in \mathcal{E}} \frac{\alpha^{\star} - \textsc{Rev}_1(S)}{q(S)} q(S)\expect[N^{\pi}(S)]  \geq \min\limits_{S \in \mathcal{E}} \Big\{ \frac{\alpha^{\star} - \textsc{Rev}_1(S)}{q(S)} \Big\} \Big(\sum\limits_{S \in \mathcal{E}} q(S)\expect[N^{\pi}(S)] \Big)\\
    &= \min\limits_{S \in \mathcal{E}} \Big\{ \frac{\alpha^{\star} - \textsc{Rev}_1(S)}{q(S)} \Big\} = \min\limits_{S' \subseteq \mathcal{I}_1, |S'| \leq c-1} \{\alpha^{\star} \tau(S') - r(S')\}  = \textsc{Reg}(\pi(S^{\star}))
\end{align*}
where the first inequality uses Lemma~\ref{lemma:lower_bound_regret_of_any_policy_total_regret_incurred_over_all_assortments_during_the_exploration_period} and the second inequality uses that $\alpha^{\star} - \textsc{Rev}_1(S) > 0$ for any $S \not \in \mathcal{E}$. The third equality uses that $\sum\limits_{S \in \mathcal{E}} q(S)\expect[N^{\pi}(S)] = 1$ as $\expect[Z^{\pi}] < \infty$ and $\sum\limits_{S \in \mathcal{E}} q(S)\expect[N^{\pi}(S)]$ is the expected number of times the entrant is purchased during the exploration period since any assortment $S$ is offered $\expect[N^{\pi}(S)]$ rounds in expectation and every time the entrant is purchased independently with probability $q(S)$. The fourth equality uses that $\min \limits_{S\in \mathcal{E}} \frac{\alpha^{\star} - \textsc{Rev}_1(S)}{q(S)} = \min\limits_{S' \subseteq \mathcal{I}_1, |S'| \leq c-1} \alpha^{\star} \tau(S') - r(S') = \textsc{Reg}(\pi(S^{\star}))$ as for any set $S = \{1\} \cup S'$ containing the entrant along with a set of at most $c-1$ incumbents $S'$, $\frac{1}{q(S)} = \tau(S')$ and $\textsc{Rev}_1(S) \cdot \frac{1}{q(S)} = r(S')$.

\paragraph{Case 2: $\expect[Z^{\pi}] = +\infty$.} Given that $Z^{\pi} = \sum\limits_{S \subseteq \mathcal{N}, |S| \leq c} N^{\pi}(S)$, there exists some $\tilde{S}$ with $\expect[N^{\pi}(\tilde{S})] = +\infty$. All assortments $S \in \mathcal{E}$ have $\expect[N^{\pi}(S)] < \infty$ since the entrant is purchased with probability $\frac{\tildew}{\sum_{i\in S \setminus \{1\}} w_i  + \tildew + w_0}$ every time $S$ is offered. As a result, $S \not \in \mathcal{E}$. Recall that that there is a positive probability that the entrant strictly increases the revenue of the revenue-maximizing assortment (as $\overline{\theta} > w^{\star}_{(c)}$). Hence, $\alpha^{\star}> \textsc{Rev}_1(\tilde{S})$ and thus $(\alpha^{\star}-\textsc{Rev}_1(\tilde{S})) \expect[N^{\pi}(\tilde{S})] = +\infty$. Combining this with Lemma~\ref{lemma:lower_bound_regret_of_any_policy_total_regret_incurred_over_all_assortments_during_the_exploration_period} yields $\textsc{Reg}(\pi) = +\infty$. On the other hand by Lemma~\ref{lemma:the_regret_of_pi_S_star_is_bounded},  $\textsc{Reg}(\pi(S^{\star})) \leq  \frac{\sum_{i \in S^{\star}} w_i + \tildew + w_0}{\tildew}$, which implies that $\textsc{Reg}(\pi) \geq \textsc{Reg}(\pi(S^{\star}))$ and completes the proof.
\end{proof}

\begin{proof}[Proof of Lemma~\ref{lemma:lower_bound_regret_of_any_policy_total_regret_incurred_over_all_assortments_during_the_exploration_period}]
The regret of any policy $\pi$ can be lower bounded by:
\begin{align*}
    \textsc{Reg}(\pi) &=  \expect\Big[\sum_{t=1}^{\infty} (\textsc{OPT} - \textsc{Rev}_t(S_t))\Big] \geq \expect\Big[\sum_{t=1}^{Z^{\pi}} (\textsc{OPT} - \textsc{Rev}_t(S_t))\Big]= \expect\Big[\sum_{t=1}^{Z^{\pi}} (\alpha^{\star} - \textsc{Rev}_1(S_t))\Big]\\
    &= \expect\Big[\sum\limits_{S \subseteq \mathcal{N}, |S| \leq c} (\alpha^{\star} - \textsc{Rev}_1(S)) N^{\pi}(S)\Big] = \sum\limits_{S \subseteq \mathcal{N}, |S| \leq c} (\alpha^{\star} - \textsc{Rev}_1(S)) \expect[N^{\pi}(S)],
\end{align*}
where the inequality uses that $\textsc{OPT} \geq \textsc{Rev}_t(S)$ if $t$ is after the exploration period, the second equality uses that $\expect[\sum_{t=1}^{Z^{\pi}} \alpha^{\star}] = \expect[\sum_{t=1}^{Z^{\pi}} \textsc{OPT}]$ (since $\alpha^{\star} = \expect_{w_1 \sim \mathcal{F}}[\textsc{OPT}]$ and $Z^{\pi}$ and $w_1$ are independent), and the third equality uses that $\textsc{Rev}_t(S_t) = \textsc{Rev}_1(S_t)$ for any round $t$ in the exploration period (as the attraction parameters of all products do not change during the exploration period). 
\end{proof}

\begin{proof}[Proof of Lemma~\ref{lemma:the_regret_of_pi_S_star_is_bounded}.]
Given that the entrant is purchased with probability $\frac{\tildew}{\sum_{i \in S^{\star}} w_i + \tildew + w_0}$, $\tau(S^{\star}) = \frac{\sum_{i \in S^{\star}} w_i + \tildew + w_0}{\tildew}$. Combining this with the fact that $\alpha^{\star} < 1$ and $r(S^{\star}) \geq 0$, 
$$\textsc{Reg}(\pi(S^{\star})) = \alpha^{\star} \tau(S^{\star}) - r(S^{\star}) \leq \tau(S^{\star}) = \frac{\sum_{i \in S^{\star}} w_i + \tildew + w_0}{\tildew}.$$
\end{proof}

\section{Omitted proofs from Section~\ref{sec:optimal_exploration_multiple_entrants}}

\subsection{Epoch-regret-minimizing policies are optimal (Lemma~\ref{lemma:epoch_optimal_policy_is_optimal})}\label{appendix_sec:there_exists_optimal_policy_which_is_stationary}
For a history $\mathcal{H} = \{(S_i, Y_i, w_{Y_i})\}_{i=1}^{t-1}$, let $M(\mathcal{H}) =\{Y_i: Y_i \in \{1, \ldots, m\}, i \in \{1, \ldots, t-1\} \}$ be the set of known entrants and $I(\mathcal{H}) = M(\mathcal{H})\cup  \{m+1, \ldots, n\}$ be the set of all known products. 

Let $\mathcal{E}(\mathcal{H}) = \{S \subseteq \mathcal{N}: |S| \leq c, S \cap (\mathcal{N} \setminus I(\mathcal{H})) \neq \emptyset \}$ be the set of all assortments which contain at least one unknown entrant given a history $\mathcal{H}$. Let $\tau(S; \mathcal{H})$ be the expected number of rounds and $r(S; \mathcal{H})$ be the expected reward until an unknown entrant is purchased if $S \in \mathcal{E}(\mathcal{H})$ is offered repeatedly starting from a history $\mathcal{H}$. Let $\textsc{OPT}(\mathcal{H}) = \expect[\textsc{OPT}|\mathcal{H}]$ be the expected ex-post optimum given a history $\mathcal{H}$. Let $\textsc{EpochReg}(S;\mathcal{H}) = \textsc{OPT}(\mathcal{H}) \cdot \tau(S;\mathcal{H}) -r(S; \mathcal{H})$ be the expected regret until an unknown entrant is purchased given a history $\mathcal{H}$ if $S \in \mathcal{E}(\mathcal{H})$ is offered repeatedly.

Let $w_i(\mathcal{H}) = w_{i}$ if $i \in I(\mathcal{H})$ and $w_i(\mathcal{H}) = \tildew$ otherwise. Let $\textsc{Rev}(S;\mathcal{H}) = \frac{\sum_{i \in S} w_i(\mathcal{H})}{\sum_{i \in S} w_i(\mathcal{H}) + w_0}$ be the expected revenue of assortment $S$ given history $\mathcal{H}$.  We say a history $\mathcal{H}$ is \textit{terminal} if an unknown entrant cannot increase the myopic optimum of known products in expectation, i.e., 
$\textsc{OPT}(\mathcal{H}) = \max\limits_{S \subseteq I(\mathcal{H}), |S| \leq c} \textsc{Rev}(S; \mathcal{H})$. 
The policy $\pi^{\star}$ of Lemma~\ref{lemma:epoch_optimal_policy_is_optimal} is such that 
\begin{enumerate}
    \item $\pi^{\star}(\mathcal{H}) \in \argmin\limits_{S \in \mathcal{E}(\mathcal{H})} \textsc{EpochReg}(S;\mathcal{H})$ if $\mathcal{H}$ is not terminal and 
    \item  $\pi^{\star}(\mathcal{H}) \in \argmax\limits_{S \subseteq I(\mathcal{H}), |S| \leq c} \textsc{Rev}(S; \mathcal{H})$ otherwise.
\end{enumerate}

For a policy $\pi$ and a history $\mathcal{H}$, let $\textsc{Rev}(\pi; \mathcal{H}) = \expect[\sum_{t=1}^{\infty} \textsc{OPT}-R_t|\mathcal{H}_1 = \mathcal{H}]$ be the regret of $\pi$ conditioned on starting from a history $\mathcal{H}$. To prove Lemma~\ref{lemma:epoch_optimal_policy_is_optimal}, the next lemma (proof in Appendix~\ref{appendix_subsec:proof_lemma_any_policy_has_nonegative_regret_given_terminal_history}) shows that any policy has a non-negative regret starting from a terminal history and that $\pi^{\star}$ has zero regret starting from a terminal history. 

\begin{lemma}\label{lemma:any_policy_has_a_nonnegative_regret_starting_from_a_terminal_history_and_pi_star_as_a_zero_regret_starting_from_a_terminal_history}
    For any terminal history $\mathcal{H}$ and any policy $\pi$, $\textsc{Reg}(\pi;\mathcal{H}) \geq 0$ and $\textsc{Reg}(\pi^{\star};\mathcal{H}) = 0$.
\end{lemma}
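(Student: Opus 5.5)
The key fact I will use is that terminality is preserved along every continuation, so that from a terminal history the platform never gains by exploring. I first need to define $\textsc{Reg}(\pi;\mathcal{H})$ properly: it is $\liminf_{T\to\infty}\expect[\sum_{t=1}^T(\textsc{OPT}-R_t)\mid\mathcal{H}_1=\mathcal{H}]$, where $\textsc{OPT}$ is the ex-post optimum and the unknown $w_j$ are drawn from $\mathcal{F}$ conditionally on $\mathcal{H}$. I will prove the two claims in the order zero regret for $\pi^\star$, then nonnegative regret for all $\pi$.

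\textbf{Step 1: $\textsc{Reg}(\pi^\star;\mathcal{H})=0$.} From a terminal history, $\pi^\star$ offers $S^\star_{(c)}(I(\mathcal{H}))$, which contains only known products. So no unknown entrant is ever purchased, the history stays terminal (the information set does not change), and $\pi^\star$ repeats the same assortment forever. The expected per-round regret is $\expect[\textsc{OPT}\mid\mathcal{H}]-\textsc{Rev}(S^\star_{(c)};\mathcal{H})$. This equals $\textsc{OPT}(\mathcal{H})-\max_S \textsc{Rev}(S;\mathcal{H})=0$ by terminality. The unknown $w_j$ are never revealed and are independent of the choices, so the conditional expectation of $\textsc{OPT}$ stays $\textsc{OPT}(\mathcal{H})$ at every round. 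Summing gives $0$.

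\textbf{Step 2: $\textsc{Reg}(\pi;\mathcal{H})\ge 0$ for any $\pi$.} I split $\sum_t(\textsc{OPT}-R_t)$ by the tower property into $\sum_t\expect[\textsc{OPT}(\mathcal{H}_t)-\textsc{Rev}(S_t;\mathcal{H}_t)]$. To justify this, note that $\textsc{OPT}(\mathcal{H}_t)=\expect[\textsc{OPT}\mid\mathcal{H}_t]$ and $\expect[R_t\mid\mathcal{H}_t,S_t]=\textsc{Rev}(S_t;\mathcal{H}_t)$. It then suffices to show that every history $\mathcal{H}'$ reachable from a terminal $\mathcal{H}$ satisfies $\textsc{OPT}(\mathcal{H}')\ge\textsc{Rev}(S;\mathcal{H}')$ for every $S$ with $|S|\le c$, including assortments that contain unknown entrants. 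For known-only $S$, this follows from $\textsc{OPT}(\mathcal{H}')\ge \max_{S\subseteq I(\mathcal{H}')}\textsc{Rev}$, since the ex-post optimum dominates every known assortment pointwise. For $S$ containing unknowns with cold-start weight $\tildew$, I need $\textsc{OPT}(\mathcal{H}')\ge\textsc{Rev}(S;\mathcal{H}')$ as well.

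\textbf{Main obstacle.} That last inequality does not follow from terminality alone when $\tildew$ is large. The intended reading must be that the relevant comparison class is known assortments plus the cold-start value, or that the argument uses $\tildew\le w^\star_{(c)}$ as in the paper's standing assumption. My plan is to argue via the martingale property $\expect[\textsc{OPT}(\mathcal{H}_{t+1})\mid\mathcal{H}_t]=\textsc{OPT}(\mathcal{H}_t)$ instead of bounding per round. Telescoping, the regret equals $\sum_t\expect[\textsc{OPT}(\mathcal{H})-R_t]$, and I then lower bound $\expect[R_t]$ by the myopic maximum over offerable assortments. I expect to need the standing assumption that unknowns do not enter the myopic optimum, i.e.\ $\tildew<w^\star_{(c)}(I(\mathcal{H}))$, so that any $S$ has $\textsc{Rev}(S;\mathcal{H}')\le\max_{S'\subseteq I(\mathcal{H}')}\textsc{Rev}(S';\mathcal{H}')\le\textsc{OPT}(\mathcal{H}')$.

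This gives nonnegative per-round terms, and the $\liminf$ of the partial sums is therefore $\ge0$. The weights of known products only improve as entrants are revealed, so the condition persists along continuations.
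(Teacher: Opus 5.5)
Your Step~1 is correct and a bit more direct than the paper's route. From a terminal history, $\pi^\star$ offers only known products, so no entrant is ever revealed and the posterior on the unknowns does not change. Hence $\textsc{OPT}(\mathcal{H}_t)=\textsc{OPT}(\mathcal{H})=\max_{S\subseteq I(\mathcal{H})}\textsc{Rev}(S;\mathcal{H})$ at every round.

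The gap is in Step~2. You claim that $\textsc{OPT}(\mathcal{H}')\ge\textsc{Rev}(S;\mathcal{H}')$ for assortments $S$ containing unknown entrants does not follow from terminality, and you import $\tildew<w^\star_{(c)}(I(\mathcal{H}))$ to get it. That is the standing assumption of the single-entrant section only. This lemma feeds Lemma~\ref{lemma:epoch_optimal_policy_is_optimal} and Theorem~\ref{thm:optimality_of_algo}, which are stated for arbitrary instances, so you cannot assume it. Note also that under your assumption the per-round regret is nonnegative from \emph{every} history, terminal or not. Terminality would then play no role in the first claim, which signals that the real content has been bypassed.

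The missing idea is that terminality does give the inequality, under the mild condition $\prob_{w\sim\mathcal{F}}[w\ge\tildew]>0$. The paper invokes this condition through $\tildew\in[\underline{\theta},\overline{\theta}]$; your worry about large $\tildew$ is legitimate only for $\tildew$ above the support. The argument runs by contradiction:
\begin{itemize}
\item Suppose $S'$ contains a set $A$ of unknowns and $\textsc{Rev}(S';\mathcal{H})>\textsc{OPT}(\mathcal{H})$.
\item On the positive-probability event that $w_i\ge\tildew$ for all $i\in A$, the ex-post optimum is at least the true revenue of $S'$. By monotonicity of $x\mapsto x/(x+w_0)$, that revenue is at least $\textsc{Rev}(S';\mathcal{H})>\textsc{OPT}(\mathcal{H})$.
\item On the complement, the ex-post optimum is at least the known-only maximum, which equals $\textsc{OPT}(\mathcal{H})$ by terminality.
\item Averaging over the two events gives $\textsc{OPT}(\mathcal{H})>\textsc{OPT}(\mathcal{H})$, a contradiction.
\end{itemize}

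You also need terminality to persist under an arbitrary $\pi$, which may reveal entrants; your Step~1 argument only covers $\pi^\star$. The paper gets this from the chain $\textsc{OPT}(\mathcal{H}_t)\ge\max_{S\subseteq I(\mathcal{H}_t)}\textsc{Rev}(S;\mathcal{H}_t)\ge\max_{S\subseteq I(\mathcal{H}_1)}\textsc{Rev}(S;\mathcal{H}_1)=\textsc{OPT}(\mathcal{H}_1)=\expect[\textsc{OPT}(\mathcal{H}_t)\mid\mathcal{H}_1]$. Since $\textsc{OPT}(\mathcal{H}_t)\ge\textsc{OPT}(\mathcal{H}_1)$ pointwise while its conditional mean equals $\textsc{OPT}(\mathcal{H}_1)$, every inequality in the chain holds with equality almost surely, so each $\mathcal{H}_t$ is terminal.

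These two facts make each term $\textsc{OPT}(\mathcal{H}_t)-\textsc{Rev}(S_t;\mathcal{H}_t)$ nonnegative, so the $\liminf$ of the partial sums is $\ge 0$. One small correction: in your telescoping remark you need an upper bound on $\expect[R_t]$, not a lower bound.
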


Let $J(\mathcal{H}) = \{w_i:i \in I(\mathcal{H})\}$ be the set of the known products' attraction parameters given a history $\mathcal{H}$. The following lemma (proof in Appendix~\ref{appendix_subsec:proof_lemma_optimal_regret_depends_only_on_known_products_attraction_parameters}) establishes that the regret of $\pi^{\star}$ given any history $\mathcal{H}$ depends on $\mathcal{H}$ only through the set of known products' attraction parameters $J(\mathcal{H})$.

\begin{lemma}\label{lemma:the_regret_of_pi_star_depends_on_H_only_through_J_H}
For any history $\mathcal{H}$, $\textsc{Reg}(\pi^{\star};\mathcal{H})$ depends on $\mathcal{H}$ only through the known products' attraction parameters $J(\mathcal{H})$. 
\end{lemma}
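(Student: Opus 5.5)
The statement to prove is Lemma~\ref{lemma:the_regret_of_pi_star_depends_on_H_only_through_J_H}: $\textsc{Reg}(\pi^{\star};\mathcal{H})$ depends on $\mathcal{H}$ only through $J(\mathcal{H})$. The plan is a coupling argument. Fix two histories $\mathcal{H}$ and $\mathcal{H}'$ with $J(\mathcal{H}) = J(\mathcal{H}')$, and show that $\pi^{\star}$ started from either one generates the same distribution of per-round regrets.

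\emph{Step 1: the primitives agree.} Because the multiset of known attraction parameters coincides and the entrant priors are i.i.d., the number of unknown entrants is the same, $m - |M(\mathcal{H})| = m - |M(\mathcal{H}')|$. The posterior over the unknown parameters is then the same product of copies of $\mathcal{F}$, up to relabeling the unknown entrants. It follows that $\textsc{OPT}(\mathcal{H}) = \textsc{OPT}(\mathcal{H}')$: $\textsc{OPT}$ is a symmetric function of the full multiset of parameters, and that multiset has the same distribution under both histories.

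Next, fix a bijection $\sigma$ between the products of the two histories. It maps known products to known products with equal weights, and unknown entrants to unknown entrants, all of which have cold-start weight $\tildew$. Under $\sigma$, $\textsc{Rev}(S;\mathcal{H}) = \textsc{Rev}(\sigma(S);\mathcal{H}')$, and similarly for $\tau$, $r$ and $\textsc{EpochReg}$, since each of these is computed from the MNL weights and $\textsc{OPT}(\mathcal{H})$. Terminality is therefore preserved, and the argmin/argmax sets defining $\pi^{\star}$ correspond under $\sigma$. So $\pi^{\star}(\mathcal{H}')$ can be taken to be $\sigma(\pi^{\star}(\mathcal{H}))$, or at least to have equal epoch regret and revenue.

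\emph{Step 2: induction over epochs.} Condition on the first epoch. Under both histories $\pi^{\star}$ offers corresponding assortments repeatedly until an unknown entrant is bought, so the epoch length and the rewards within it have the same law. The identity of the purchased entrant is a uniformly random unknown one, and its revealed weight is a fresh draw from $\mathcal{F}$. Coupling these draws, the new histories satisfy $J(\mathcal{H}_{\text{next}}) = J(\mathcal{H}'_{\text{next}})$ as multisets. At most $m$ epochs occur, because terminal histories incur zero regret under $\pi^{\star}$ by Lemma~\ref{lemma:any_policy_has_a_nonnegative_regret_starting_from_a_terminal_history_and_pi_star_as_a_zero_regret_starting_from_a_terminal_history}. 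So a backward induction on the number of unknown entrants shows that the total regret has the same expectation: the regret equals the epoch regret plus the expected continuation regret, and both depend only on $J$.

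\emph{Main obstacle: ties and what $J$ must mean.} Two points need care.
\begin{itemize}
\item $\pi^{\star}$ is defined through argmin and argmax sets, so ties must not break the argument. Here I would note that any tie-break yields the same epoch regret, and the continuation depends only on the revealed multiset. Alternatively, I would fix a tie-breaking rule that depends only on weights.
\item $J(\mathcal{H})$ must be read as a multiset of weights, so that it determines the number of known entrants. If it were read as a plain set, one would additionally need the count of known products, which is implicitly determined by $|\mathcal{H}|$-independent data. I would state this interpretation explicitly.
\end{itemize}
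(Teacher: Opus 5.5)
Your proposal follows the paper's proof. A weight-preserving bijection (the paper's map $a$ in Lemma~\ref{lemma:for_any_history_H_the_expected_ex_post_optimum_the_maximum_revenue_from_known_products_and_the_minimum_epoch_regret_depend_on_H_only_through_J(H)}) shows that $\textsc{OPT}(\mathcal{H})$, the best known-product revenue, and $\min_S \textsc{EpochReg}(S;\mathcal{H})$ agree whenever $J$ agrees as a multiset; your multiset reading is the one the paper implicitly needs. An induction on the number of unknown entrants through the one-epoch decomposition $\min_S \textsc{EpochReg}(S;\mathcal{H}) + \mathbb{E}_{w\sim\mathcal{F}}[f(J(\mathcal{H})\cup\{w\})]$ then finishes the argument.

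The one step to tighten is the first epoch. A general $\pi^\star$ may switch, history-dependently, among minimizers whose epoch lengths have different laws (for instance, at a tie $\textsc{OPT}_t=\alpha_t(\ell+1)$). So your claim that the epoch length and rewards have the ``same law'' does not hold. What you should use instead, as the paper does in Lemma~\ref{lemma:the_regret_of_pi_star_can_be_decomposed_into_the_first_epoch_regret_and_the_future_regret_general_form}, is that the expected regret up to the first unknown purchase equals $\min_S \textsc{EpochReg}(S;\mathcal{H})$ for any such selection. This holds because $\textsc{OPT}(\mathcal{H})-\textsc{Rev}(S;\mathcal{H}) = q(S;\mathcal{H})\,\textsc{EpochReg}(S;\mathcal{H})$ for each $S$ and $\sum_S q(S;\mathcal{H})\,\mathbb{E}[N(S)]=1$. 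Your alternative of fixing a weight-only tie-breaking rule would prove the lemma for just one $\pi^\star$, not for every policy meeting the conditions.
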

Hence, the regret of the optimal policy depends only on the attraction parameters of the known products in the initial history. We thus express the regret of $\pi^{\star}$ given history $\mathcal{H}$ as $\textsc{Reg}(\pi^{\star},J(\mathcal{H})) = \textsc{Reg}(\pi^{\star};\mathcal{H})$. The following lemma (proof in Appendix~\ref{appendix_subsec:proof_lemma_one_epoch_lookahead_decomposition_optimal_policy}) provides a one-epoch-lookahead decomposition of the regret of~$\pi^{\star}$.

\begin{lemma}\label{lemma:the_regret_of_pi_star_can_be_decomposed_into_the_first_epoch_regret_and_the_future_regret}
 For any non-terminal history $\mathcal{H}$, 
    $$\textsc{Reg}(\pi^{\star}, J(\mathcal{H}))= \min\limits_{S \in \mathcal{E}(\mathcal{H})}  \textsc{EpochReg}(S;\mathcal{H})+\expect_{w \sim \mathcal{F}} [ \textsc{Reg}(\pi^{\star},J(\mathcal{H}) \cup \{w\})].$$
\end{lemma}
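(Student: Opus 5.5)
We prove the decomposition by conditioning on what happens during the first epoch from $\mathcal{H}$. Since $\mathcal{H}$ is non-terminal, $\pi^{\star}$ offers some $S^{\star}\in\argmin_{S\in\mathcal{E}(\mathcal{H})}\textsc{EpochReg}(S;\mathcal{H})$. While no unknown entrant is purchased, the history changes but $I(\cdot)$, and hence $J(\cdot)$, $\textsc{OPT}(\cdot)$, terminality and the set $\mathcal{E}(\cdot)$ (up to relabelling of the symmetric unknown entrants), stay the same. We fix a deterministic tie-breaking rule so that $\pi^{\star}$ keeps offering $S^{\star}$ (or an equivalent relabelled assortment with the same $\textsc{EpochReg}$) throughout the epoch. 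This is consistent with the definition of $\pi^{\star}$ and, via Lemma~\ref{lemma:the_regret_of_pi_star_depends_on_H_only_through_J_H}, does not affect its regret.

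\textbf{Step 1: split the regret at the epoch boundary.} Let $Z$ be the first round an unknown entrant is purchased. $Z$ is a.s.\ finite with geometric tails, since each round has a fixed positive purchase probability. Write
\[
\textsc{Reg}(\pi^{\star};\mathcal{H})=\expect\Big[\sum_{t\le Z}(\textsc{OPT}-R_t)\Big]+\expect\Big[\sum_{t>Z}(\textsc{OPT}-R_t)\Big].
\]
For the first term, $Z$ and the realized $R_t$ for $t\le Z$ depend only on the cold-start weights and the customer randomness. They are therefore independent of the unknown $w_j$'s given $\mathcal{H}$, so we may replace $\textsc{OPT}$ by $\textsc{OPT}(\mathcal{H})$ and obtain exactly $\textsc{OPT}(\mathcal{H})\tau(S^{\star};\mathcal{H})-r(S^{\star};\mathcal{H})=\min_{S}\textsc{EpochReg}(S;\mathcal{H})$. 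Exchanging sum and expectation requires integrability, which holds because $\expect[Z]<\infty$ and per-round regret is bounded by $1$. Optional stopping (Wald's identity) justifies this step.

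\textbf{Step 2: the tail term.} At time $Z+1$ the history is $\mathcal{H}'$, with $J(\mathcal{H}')=J(\mathcal{H})\cup\{w_Y\}$, where $Y$ is the purchased entrant. We apply the strong Markov property of the process at stopping time $Z$. Conditional on $\mathcal{H}'$, the continuation of $\pi^{\star}$ is $\pi^{\star}$ started at $\mathcal{H}'$, and the remaining unknown $w_j$ are still i.i.d.\ $\mathcal{F}$, so the conditional tail regret is $\textsc{Reg}(\pi^{\star};\mathcal{H}')=\textsc{Reg}(\pi^{\star},J(\mathcal{H})\cup\{w_Y\})$ by Lemma~\ref{lemma:the_regret_of_pi_star_depends_on_H_only_through_J_H}. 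The key point is that the identity of $Y$, and whatever happened before $Z$, carries no information about $w_Y$. Customers choose using $\tildew$ for every unknown entrant, so $w_Y\sim\mathcal{F}$ independently of $Z$, $Y$ and the pre-$Z$ trajectory. This is where the homogeneous-prior assumption is used. Taking expectations gives $\expect_{w\sim\mathcal{F}}[\textsc{Reg}(\pi^{\star},J(\mathcal{H})\cup\{w\})]$.

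\textbf{Main obstacle.} The subtle part is the infinite-horizon $\liminf$ definition of regret. Splitting it into a pre-$Z$ sum plus a post-$Z$ conditional regret requires that the tail regret be well defined, i.e.\ that the limit exists. I would first show that per-round regret of $\pi^{\star}$ is eventually nonnegative in conditional expectation. Along any path, $\pi^{\star}$ reaches a terminal history after finitely many epochs (at most $m$), each of finite expected length, and from then on it incurs zero expected regret by Lemma~\ref{lemma:any_policy_has_a_nonnegative_regret_starting_from_a_terminal_history_and_pi_star_as_a_zero_regret_starting_from_a_terminal_history}. Together with bounded per-round regret and finite expected total exploration time, dominated convergence then turns the $\liminf$ into an honest limit. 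This makes the decomposition in Steps 1--2 legitimate.
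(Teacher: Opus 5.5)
Your overall route is the paper's. The paper proves this lemma by invoking the general-form decomposition (Lemma~\ref{lemma:the_regret_of_pi_star_can_be_decomposed_into_the_first_epoch_regret_and_the_future_regret_general_form}), then Lemma~\ref{lemma:the_regret_of_pi_star_depends_on_H_only_through_J_H}, then the fact that $w_{Y_Z}\sim\mathcal{F}$ is independent of $(\mathcal{H}_Z,S_Z,Y_Z)$. That is exactly your Steps 1--2. Invoking Lemma~\ref{lemma:the_regret_of_pi_star_depends_on_H_only_through_J_H} is not circular, since its proof rests on the general-form decomposition and not on this lemma. Your dominated-convergence argument is more careful than the paper, which writes the split of the $\liminf$ without comment. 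It shows the limit exists because $\pi^{\star}$ incurs zero conditional regret after a terminal time of finite expectation.

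The step that fails as written is the reduction to a frozen epoch assortment $S^{\star}$. You say the tie-breaking choice ``via Lemma~\ref{lemma:the_regret_of_pi_star_depends_on_H_only_through_J_H} does not affect its regret.'' That lemma only says that, for a \emph{fixed} policy, regret depends on the history through $J(\mathcal{H})$. It says nothing about two different policies that break ties differently among minimizers. The lemma is needed for every $\pi^{\star}$ satisfying the conditions of Lemma~\ref{lemma:epoch_optimal_policy_is_optimal}, including ones whose choice within $\mathcal{A}(\mathcal{H})=\argmin_{S\in\mathcal{E}(\mathcal{H})}\textsc{EpochReg}(S;\mathcal{H})$ varies with the within-epoch history. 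So freezing is an unproved claim, not a WLOG.

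The fix is the paper's argument: do not freeze. Since $\textsc{EpochReg}(S;\mathcal{H})=(\textsc{OPT}(\mathcal{H})-\textsc{Rev}(S;\mathcal{H}))/q(S;\mathcal{H})$, every $S\in\mathcal{A}(\mathcal{H})$ satisfies $\textsc{OPT}(\mathcal{H})-\textsc{Rev}(S;\mathcal{H})=q(S;\mathcal{H})\cdot\min_{S'\in\mathcal{E}(\mathcal{H})}\textsc{EpochReg}(S';\mathcal{H})$. Here $q(S;\mathcal{H})$ is the per-round probability that an unknown entrant is bought. Because every $S_t$ with $t\le Z$ lies in $\mathcal{A}(\mathcal{H})$, the pre-$Z$ regret equals $\min_{S'}\textsc{EpochReg}(S';\mathcal{H})\cdot\expect[\sum_{t\le Z}q(S_t;\mathcal{H})]$. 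By optional stopping applied to $\sum_t(\mathbf{1}\{\text{entrant bought at }t\}-q(S_t;\mathcal{H}))$, we get $\expect[\sum_{t\le Z}q(S_t;\mathcal{H})]=1$. Here $\expect[Z]<\infty$ follows from the uniform lower bound on $q$ over $\mathcal{E}(\mathcal{H})$. In the paper this is the identity $\sum_{S\in\mathcal{A}(\mathcal{H})}q(S;\mathcal{H})\expect[N(S)]=1$. With this change Step 1 holds for every $\pi^{\star}$, and your independence argument for $\textsc{OPT}$, Step 2, and the $\liminf$ argument are unaffected. A minor point: within an epoch the set of unknown entrants is literally unchanged, so no relabelling is needed.
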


Let $Z^{\pi, \mathcal{H}}$ be the first round at which an unknown entrant is purchased if a policy $\pi$ starts from a history $\mathcal{H}$. The following lemma (proof in Appendix~\ref{appendix_subsec:proof_lemma_epoch_regret_of_any_policy_is_at_least_minimum_epoch_regret_of_static_assortment_policies}) establishes that the expected regret of $\pi$ until $Z^{\pi, \mathcal{H}}$ is at least the minimum epoch regret achieved by repeatedly offering a fixed assortment. Let $$\textsc{EpochReg}^{\pi}(\mathcal{H}) = \expect\Bigg[\sum_{t=1}^{Z^{\pi, \mathcal{H}}} \textsc{OPT}(\mathcal{H}_t) - \textsc{Rev}(S_t, \mathcal{H}_t)|\mathcal{H}_1 = \mathcal{H}\Bigg]$$ be the expected regret of $\pi$ until the first purchase of an unknown entrant.

\begin{lemma}\label{lemma:epoch_regret_of_any_policy_is_at_least_the_minimum_epoch_regret_over_fixed_assortments}
    For any policy $\pi$ and non-terminal history $\mathcal{H}$, 
    $\textsc{EpochReg}^{\pi}(\mathcal{H}) \geq \min\limits_{S \in \mathcal{E}(\mathcal{H})}  \textsc{EpochReg}(S;\mathcal{H})$
    if $\expect[Z^{\pi, \mathcal{H}}] < \infty$, and $\textsc{EpochReg}^{\pi}(\mathcal{H}) = +\infty$ otherwise. 
\end{lemma}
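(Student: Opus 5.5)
The plan is to follow the single-entrant argument (proof of Lemma~\ref{lemma:there_exists_an_optimal_policy_which_offers_the_same_assortment_until_the_entrant_is_purchased} via Lemma~\ref{lemma:lower_bound_regret_of_any_policy_total_regret_incurred_over_all_assortments_during_the_exploration_period}), now for an epoch that starts from a general non-terminal history $\mathcal{H}$.

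\textbf{Step 1: the per-round quantities are frozen during the epoch.} For rounds $t \le Z^{\pi,\mathcal{H}}$, no unknown entrant has been purchased yet. So the known set, and hence $I(\mathcal{H}_t) = I(\mathcal{H})$ and $w_i(\mathcal{H}_t) = w_i(\mathcal{H})$, are unchanged, and $\textsc{Rev}(S;\mathcal{H}_t) = \textsc{Rev}(S;\mathcal{H})$. Purchases of known products or of the outside option reveal nothing about the unknown $w_j$. The customer's choice depends only on $w_i(\mathcal{H}_t)$, which is independent of the true unknown parameters, and the policy's randomness is independent as well. Hence the posterior is unchanged and $\textsc{OPT}(\mathcal{H}_t) = \textsc{OPT}(\mathcal{H})$. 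Writing $N(S)$ for the number of rounds in $\{1,\dots,Z^{\pi,\mathcal{H}}\}$ at which $S$ is offered, this gives
\[\textsc{EpochReg}^{\pi}(\mathcal{H}) = \sum_{|S|\le c} \big(\textsc{OPT}(\mathcal{H}) - \textsc{Rev}(S;\mathcal{H})\big)\,\expect[N(S)].\]
The interchange of sum and expectation is justified because there are finitely many $S$ and each coefficient is bounded. In the infinite case I would argue by truncation at $\min(Z,T)$ and monotone limits.

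\textbf{Step 2: drop non-exploring assortments and normalize.} Non-terminality means $\textsc{OPT}(\mathcal{H}) > \max_{S\subseteq I(\mathcal{H})}\textsc{Rev}(S;\mathcal{H})$. Therefore every $S \notin \mathcal{E}(\mathcal{H})$ has a strictly positive coefficient, and those terms can be dropped in a lower bound. For $S\in\mathcal{E}(\mathcal{H})$ containing $\ell\ge1$ unknowns, let
\[q(S)=\frac{\ell\tildew}{\sum_{i\in S}w_i(\mathcal{H})+w_0}\]
be the per-round probability that some unknown entrant is bought. Then $\tau(S;\mathcal{H}) = 1/q(S)$ and $r(S;\mathcal{H}) = \textsc{Rev}(S;\mathcal{H})/q(S)$, so that
\[\big(\textsc{OPT}(\mathcal{H}) - \textsc{Rev}(S;\mathcal{H})\big)/q(S) = \textsc{EpochReg}(S;\mathcal{H}).\]

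\textbf{Step 3 (the main obstacle): the weights sum to one.} The key identity is $\sum_{S\in\mathcal{E}(\mathcal{H})} q(S)\expect[N(S)] = 1$ when $\expect[Z^{\pi,\mathcal{H}}]<\infty$. I would prove it by a Wald / optional-stopping argument. Let $X_t$ be the indicator that an unknown is purchased at round $t$, and let $q_t = q(S_t)$ (with $q_t = 0$ if $S_t\notin\mathcal{E}$). Then $\sum_{s\le t}(X_s - q_s)$ is a martingale with bounded increments. Since $\expect[Z]<\infty$, optional stopping at $Z$ applies, giving
\[\expect\Big[\sum_{t\le Z} q_t\Big] = \expect\Big[\sum_{t\le Z} X_t\Big] = 1.\]
The left side equals $\sum_S q(S)\expect[N(S)]$. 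With weights $x_S = q(S)\expect[N(S)] \ge 0$ summing to one, we get
\[\sum_S x_S\,\textsc{EpochReg}(S;\mathcal{H}) \ge \min_{S\in\mathcal{E}(\mathcal{H})}\textsc{EpochReg}(S;\mathcal{H}),\]
which holds regardless of the signs of the individual terms.

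\textbf{Step 4: the case $\expect[Z]=\infty$.} Since $Z = \sum_S N(S)$ over finitely many $S$, some $\tilde S$ has $\expect[N(\tilde S)]=\infty$. Every $S\in\mathcal{E}(\mathcal{H})$ has $\expect[N(S)]\le 1/q(S)<\infty$: each offer of $S$ ends the epoch independently with probability $q(S)>0$, using $\tildew>0$. So $\tilde S\notin\mathcal{E}(\mathcal{H})$, and its coefficient is strictly positive. The remaining terms are finitely many and each is finite or $+\infty$, so the total is $+\infty$.
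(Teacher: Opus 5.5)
Your proposal is correct and follows essentially the same route as the paper. Both arguments freeze $\textsc{OPT}(\mathcal{H}_t)$ and $\textsc{Rev}(S;\mathcal{H}_t)$ during the epoch, write the epoch regret as $\sum_S (\textsc{OPT}(\mathcal{H})-\textsc{Rev}(S;\mathcal{H}))\expect[N(S)]$, drop the strictly positive non-exploring terms, use that the weights $q(S;\mathcal{H})\expect[N(S)]$ sum to one, and split into the same finite/infinite $\expect[Z]$ cases. Your optional-stopping argument and your handling of the sum--expectation interchange make rigorous a step the paper only justifies informally, namely that each offer of $S$ ends the epoch independently with probability $q(S;\mathcal{H})$.
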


The next lemma (proof in Appendix~\ref{appendix_subsec:regret_of_optimal_policy_is_bounded_below_and_above}) establishes that the regret of $\pi^{\star}$ for any history $\mathcal{H}$ is bounded from below and from above by constants independent of $\mathcal{H}$. 

\begin{lemma}\label{lemma:for_any_history_with_k_unknown_entrants_the_regret_of_pi_star_is_lower_bounded_and_upper_bounded_by_quantities_which_depend_on_k}
    For any history $\mathcal{H}$, $\textsc{Reg}(\pi^{\star};\mathcal{H}) \geq -m$ and  $\textsc{Reg}(\pi^{\star};\mathcal{H}) \leq \frac{m}{\tildew}$.
\end{lemma}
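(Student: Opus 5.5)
We need to prove that for any history $\mathcal{H}$, $-m \le \textsc{Reg}(\pi^{\star};\mathcal{H}) \le m/\tildew$. Throughout, let $k$ denote the number of unknown entrants in $\mathcal{H}$, so $k \le m$. The plan is to bound the regret epoch by epoch. From any history there are at most $k$ further epochs, since each epoch ends with a purchase of a previously unknown entrant. Once the history becomes terminal, Lemma~\ref{lemma:any_policy_has_a_nonnegative_regret_starting_from_a_terminal_history_and_pi_star_as_a_zero_regret_starting_from_a_terminal_history} gives that $\pi^{\star}$ incurs zero regret. The argument is by induction on $k$: the one-epoch decomposition of Lemma~\ref{lemma:the_regret_of_pi_star_can_be_decomposed_into_the_first_epoch_regret_and_the_future_regret} reduces the claim to bounding a single epoch's regret between $-1$ and $1/\tildew$.

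\emph{Base case.} If $k=0$, the history is terminal, since with no unknown entrants $\textsc{OPT}(\mathcal{H})$ equals the myopic optimum. The regret is then $0$, which lies in $[-m, m/\tildew]$.

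\emph{Inductive step.} For a non-terminal $\mathcal{H}$, Lemma~\ref{lemma:the_regret_of_pi_star_can_be_decomposed_into_the_first_epoch_regret_and_the_future_regret} gives
\[
\textsc{Reg}(\pi^{\star};\mathcal{H}) = \min_{S\in\mathcal{E}(\mathcal{H})}\textsc{EpochReg}(S;\mathcal{H}) + \expect_w[\textsc{Reg}(\pi^{\star},J(\mathcal{H})\cup\{w\})].
\]
The future term corresponds to a history with $k-1$ unknown entrants, so by the inductive hypothesis it lies in $[-(k-1),(k-1)/\tildew]$. (If some such history is terminal, its regret is $0$, which is within these bounds.) It therefore suffices to show that the minimum epoch regret lies in $[-1, 1/\tildew]$.

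\emph{Upper bound on the epoch.} Choose the assortment $S$ consisting of a single unknown entrant alone. Then $\tau(S;\mathcal{H}) = (\tildew+w_0)/\tildew = 1 + 1/\tildew$, and the reward is $r(S;\mathcal{H}) = 1$, since the only purchases before the epoch ends are of that entrant and the epoch ends at its first purchase. Using $\textsc{OPT}(\mathcal{H})<1$, we get $\textsc{EpochReg}(S;\mathcal{H}) \le \tau - r < 1/\tildew$, and the minimum over $S$ is at most this.

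\emph{Lower bound on the epoch.} This is the delicate part, because per-round regret $\textsc{OPT}(\mathcal{H}) - \textsc{Rev}$ can be negative. By the IIA computation used in Lemma~\ref{lemma:expression_ratio_reward_loss_time_gain}, an assortment with $\ell$ unknown entrants and known set $S'$ satisfies $\tau = r + w_0/(\ell\tildew)$ and $r = 1 + \sum_{i\in S'} w_i/(\ell\tildew)$. Hence
\[
\textsc{EpochReg} = \textsc{OPT}(\mathcal{H})\,\tau - r = (\textsc{OPT}(\mathcal{H}) - 1)\,r + \textsc{OPT}(\mathcal{H})\,\frac{w_0}{\ell\tildew}.
\]
The obstacle is that $r$ can be large, which would make the first term very negative. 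To handle this, compare with the ex-post optimum: for each known $i \in S'$ we have $w_i \le W^{\star}_{(c)}(\mathcal{N})$ pointwise. Write $\textsc{OPT}(\mathcal{H}) = \expect[W/(W+w_0)]$ with $W = W^{\star}_{(c)}(\mathcal{N})$, so that $1-\textsc{OPT}(\mathcal{H}) = \expect[w_0/(W+w_0)]$. Since $\sum_{i\in S'} w_i \le W$, each realization satisfies
\[
\frac{w_0}{W+w_0}\cdot\frac{\sum_{i\in S'} w_i}{\ell\tildew} \le \frac{W}{W+w_0}\cdot\frac{w_0}{\ell\tildew},
\]
because $\sum_{i\in S'} w_i \le W$. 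Taking expectations, $(1-\textsc{OPT}(\mathcal{H}))(r-1) \le \textsc{OPT}(\mathcal{H})\,w_0/(\ell\tildew)$. Substituting into the expression for $\textsc{EpochReg}$ gives $\textsc{EpochReg} \ge \textsc{OPT}(\mathcal{H}) - 1 \ge -1$.

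This lower bound is the step to verify carefully. Pointwise domination $\sum_{i\in S'} w_i \le W$ holds because $S'$ consists of at most $c$ known products whose weights are fixed, and $W$ is the sum of the top $c$ weights in $\mathcal{N}$, which include all known products. Combining the two epoch bounds with the inductive hypothesis closes the induction, yielding bounds of $-k \ge -m$ and $k/\tildew \le m/\tildew$.
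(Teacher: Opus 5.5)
Your proof is correct and follows the paper's own argument. It inducts on the number of unknown entrants, uses the one-epoch-lookahead decomposition, and reduces everything to showing that the epoch regret lies in $[-1, 1/\tildew]$. The differences are minor:
\begin{itemize}
\item \emph{Upper bound.} You bound only the minimum, using the single-entrant assortment as a witness, whereas the paper bounds every $S \in \mathcal{E}(\mathcal{H})$ by $1/\sum_{i \in S^{\textsc{u}}} w_i(\mathcal{H})$. Note that $\tau - r = 1/\tildew$ exactly, so the strict inequality belongs on $\textsc{OPT}(\mathcal{H})\tau - r < \tau - r$, not on $\tau - r < 1/\tildew$.
\item \emph{Lower bound.} Your pointwise-domination step is equivalent to the paper's key inequality $\textsc{OPT}(\mathcal{H}) \geq \textsc{Rev}(S^{\textsc{k}};\mathcal{H})$, just rewritten through the IIA purchase counts.
\end{itemize}
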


\begin{proof}[Proof of Lemma~\ref{lemma:epoch_optimal_policy_is_optimal}.]
For any policy $\pi$, we show that $\textsc{Reg}(\pi;\mathcal{H}) \geq \textsc{Reg}(\pi^{\star};\mathcal{H})$ for any history $\mathcal{H}$ by induction on the number of unknown entrants given $\mathcal{H}$. For the base case suppose that there are no unknown entrants given $\mathcal{H}$. Thus $\mathcal{H}$ is terminal and Lemma~\ref{lemma:any_policy_has_a_nonnegative_regret_starting_from_a_terminal_history_and_pi_star_as_a_zero_regret_starting_from_a_terminal_history} implies that $\textsc{Reg}(\pi;\mathcal{H}) \geq 0$ and $\textsc{Reg}(\pi^{\star};\mathcal{H}) = 0$ yielding the claim. Suppose the claim is true for any history $\mathcal{H}'$ with at most $k$ unknown entrants remaining and let $\mathcal{H}$ be a history with $k+1$ unknown entrants remaining. If $\mathcal{H}$ is terminal, then Lemma~\ref{lemma:any_policy_has_a_nonnegative_regret_starting_from_a_terminal_history_and_pi_star_as_a_zero_regret_starting_from_a_terminal_history} implies that $\textsc{Reg}(\pi;\mathcal{H}) \geq 0$ and $\textsc{Reg}(\pi^{\star};\mathcal{H}) = 0$ yielding the claim. Suppose $\mathcal{H}$ is not terminal. Let $Z = Z^{\pi, \mathcal{H}}$ as a shorthand. Recall that the random variables $\mathcal{H}_{Z}$, $ S_{Z}$,$ Y_{Z}$, and $ w_{Y_{Z}}$ are the history, the assortment offered, the unknown entrant chosen, and the attraction parameter of the unknown entrant at round $Z$ respectively. The regret of $\pi$ given $\mathcal{H}$ is lower bounded as
\begin{align}\label{eq:regret_pi_decomposition_epoch_and_after_epoch}
    \textsc{Reg}(\pi;\mathcal{H})& = \expect\Big[\sum_{t=1}^{Z} \textsc{OPT}(\mathcal{H}_t) - \textsc{Rev}(S_t, \mathcal{H}_t)|\mathcal{H}_1 = \mathcal{H}\Big] \nonumber\\
    &\hspace{0.5in}+\expect[\textsc{Reg}(\pi;\mathcal{H}_{Z} \cup (S_{Z}, Y_{Z}, w_{Y_{Z}}))|\mathcal{H}_1 = \mathcal{H}, Z< \infty] \prob[Z < \infty] \nonumber\\
    &\geq \expect\Big[\sum_{t=1}^{Z} \textsc{OPT}(\mathcal{H}_t) - \textsc{Rev}(S_t, \mathcal{H}_t)|\mathcal{H}_1 = \mathcal{H}\Big] \nonumber\\
    &\hspace{0.5in}+\expect [\textsc{Reg}(\pi^{\star};\mathcal{H}_{Z} \cup (S_{Z}, Y_{Z}, w_{Y_{Z}}))|\mathcal{H}_1 = \mathcal{H}, Z< \infty] \prob[Z < \infty]\nonumber\\
    &= \textsc{EpochReg}^{\pi}(\mathcal{H}) + \expect_{w \sim \mathcal{F}}[\textsc{Reg}(\pi^{\star},J(\mathcal{H}) \cup \{w\})]\prob[Z < \infty]
\end{align}
where the first inequality uses $\textsc{Reg}(\pi;\mathcal{H}_{Z} \cup (S_{Z}, Y_{Z}, w_{Y_{Z}})) \geq \textsc{Reg}(\pi^{\star};\mathcal{H}_{Z} \cup (S_{Z}, Y_{Z}, w_{Y_{Z}}))$ from the inductive hypothesis, and the second equality uses $\textsc{Reg}(\pi^{\star};\mathcal{H}_{Z} \cup (S_{Z}, Y_{Z}, w_{Y_{Z}})) = \textsc{Reg}(\pi^{\star}, J(\mathcal{H}) \cup \{w_{Y_Z}\})$ by Lemma~\ref{lemma:the_regret_of_pi_star_depends_on_H_only_through_J_H} and the fact that $w_{Y_Z} \sim \mathcal{F}$.

\paragraph{Case 1: $\expect[Z] = \infty$.} Lemma~\ref{lemma:epoch_regret_of_any_policy_is_at_least_the_minimum_epoch_regret_over_fixed_assortments} yields that $\textsc{EpochReg}^{\pi}(\mathcal{H}) = +\infty$ and Lemma~\ref{lemma:for_any_history_with_k_unknown_entrants_the_regret_of_pi_star_is_lower_bounded_and_upper_bounded_by_quantities_which_depend_on_k} implies that $\expect_{w \sim \mathcal{F}}[\textsc{Reg}(\pi^{\star},J(\mathcal{H}) \cup \{w\})]\prob[Z < \infty] \geq -m$. Combining this with \eqref{eq:regret_pi_decomposition_epoch_and_after_epoch} yields that $\textsc{Reg}(\pi;\mathcal{H})= + \infty$. Given that $\textsc{Reg}(\pi^{\star};\mathcal{H})< \infty$ by Lemma~\ref{lemma:for_any_history_with_k_unknown_entrants_the_regret_of_pi_star_is_lower_bounded_and_upper_bounded_by_quantities_which_depend_on_k}, $\textsc{Reg}(\pi;\mathcal{H}) \geq \textsc{Reg}(\pi^{\star};\mathcal{H})$. 

\paragraph{Case 2: $\expect[Z] < \infty$.} Thus, $\prob[Z < \infty] =1$. Combining this with \eqref{eq:regret_pi_decomposition_epoch_and_after_epoch}, the regret of $\pi$ can be lower bound as
\begin{align*}
    \textsc{Reg}(\pi;\mathcal{H}) &\geq \textsc{EpochReg}^{\pi}(\mathcal{H}) + \expect_{w \sim \mathcal{F}}[\textsc{Reg}(\pi^{\star},J(\mathcal{H}) \cup \{w\})]\prob[Z < \infty]\\
    &=  \textsc{EpochReg}^{\pi}(\mathcal{H}) + \expect_{w \sim \mathcal{F}}[\textsc{Reg}(\pi^{\star},J(\mathcal{H}) \cup \{w\})]\\
    &\geq \min\limits_{S \in \mathcal{E}(\mathcal{H})}  \textsc{EpochReg}(S;\mathcal{H})+\expect_{w \sim \mathcal{F}} [ \textsc{Reg}(\pi^{\star},J(\mathcal{H}) \cup \{w\})] = \textsc{Reg}(\pi^{\star};\mathcal{H})
\end{align*}
where the first inequality uses \eqref{eq:regret_pi_decomposition_epoch_and_after_epoch}, the first equality uses $\prob[Z < \infty] =1$, the second inequality uses $\textsc{EpochReg}^{\pi}(\mathcal{H}) \geq \min\limits_{S \in \mathcal{E}(\mathcal{H})}  \textsc{EpochReg}(S;\mathcal{H})$ by Lemma~\ref{lemma:epoch_regret_of_any_policy_is_at_least_the_minimum_epoch_regret_over_fixed_assortments}, and the last equality uses the one-epoch-lookahead decomposition of $\pi^{\star}$ by Lemma~\ref{lemma:the_regret_of_pi_star_can_be_decomposed_into_the_first_epoch_regret_and_the_future_regret}.
\end{proof}

\subsection{Optimal completion to a fixed number of unknown products (Lemma~\ref{lemma:optimal_completion_given_ell_unknown_prods})}\label{appendix_sec_proof_optimal_completion_given_ell_unknown_prods}
\begin{proof}[Proof of Lemma~\ref{lemma:optimal_completion_given_ell_unknown_prods}.]
   For a set $S' \subseteq \mathcal{I}_t$ of at most $c-\ell$ known products, let $o(U_{\ell} \cup S' ; \mathcal{H}_t)$ be the expected number of rounds in which the outside option is chosen in the current epoch if the platform offers the assortment $U_{\ell} \cup S'$. If the platform offers assortment $U_{\ell} \cup S'$, the regret is $\textsc{OPT}_t$, when the outside option is chosen which happens $o(U_{\ell} \cup S' ; \mathcal{H}_t)$ times in expectation. When a product is purchased, the reward is $1$ and hence the regret is $\textsc{OPT}_t -1$; this happens $\tau(U_{\ell} \cup S' ; \mathcal{H}_t)-o(U_{\ell} \cup S' ; \mathcal{H}_t)$ times in expectation. Thus, \begin{equation}\label{eq:epoch_regret_as_a_function_of_tau_and_o}
   \textsc{EpochReg}_t(U_{\ell} \cup S') = (\textsc{OPT}_t-1)(\tau(U_{\ell} \cup S'; \mathcal{H}_t)- o(U_{\ell} \cup S'; \mathcal{H}_t)) +\textsc{OPT}_t o(U_{\ell} \cup S'; \mathcal{H}_t).
   \end{equation}
Given that the probability that some entrant is chosen if $U_{\ell} \cup S'$ is offered is $\frac{\ell \tildew}{\sum_{i \in S'} w_i + \ell \tildew + w_0}$, $$\tau(U_{\ell} \cup S'; \mathcal{H}_t) = \frac{\sum_{i \in S'} w_i + \ell \tildew + w_0}{\ell \tildew}.$$ Next, we show that $o(U_{\ell} \cup S'; \mathcal{H}_t)$ does not depend on $S'$:
$$o(U_{\ell} \cup S'; \mathcal{H}_t) = \prob[\text{outside option is chosen}|U_{\ell} \cup S'] \tau(U_{\ell} \cup S'; \mathcal{H}_t) = \frac{\frac{w_0}{\sum_{i \in S'} w_i + \ell \tildew + w_0}}{\frac{\ell \tildew}{\sum_{i \in S'} w_i + \ell \tildew + w_0}} = \frac{w_0}{\ell \tildew}.$$
Combining this with \eqref{eq:epoch_regret_as_a_function_of_tau_and_o}, minimizing $ \textsc{EpochReg}_t(U_{\ell} \cup S')$ is equivalent to maximizing $\tau(U_{\ell} \cup S'; \mathcal{H}_t)$ (as $\textsc{OPT}_t < 1$). Using that $\tau(U_{\ell} \cup S'; \mathcal{H}_t) = \frac{\sum_{i \in S'} w_i + \ell \tildew + w_0}{\ell \tildew}$, $\tau(U_{\ell} \cup S'; \mathcal{H}_t)$ is maximized when $\sum_{i \in S'} w_i$ is maximized which occurs when $S' = S^{\star}_{(c-\ell)}(\mathcal{I}_t)$ yielding the lemma. \end{proof}

\subsection{Any policy has non-negative regret given a terminal history (Lemma~\ref{lemma:any_policy_has_a_nonnegative_regret_starting_from_a_terminal_history_and_pi_star_as_a_zero_regret_starting_from_a_terminal_history})}\label{appendix_subsec:proof_lemma_any_policy_has_nonegative_regret_given_terminal_history}

To prove Lemma~\ref{lemma:any_policy_has_a_nonnegative_regret_starting_from_a_terminal_history_and_pi_star_as_a_zero_regret_starting_from_a_terminal_history} we establish that if the initial history is terminal then any subsequent history is also terminal (Lemma~\ref{lemma:if_H_1_is_terminal_H_t_is_terminal_with_probability_1}) and that the expected-ex post optimum equals the revenue of the myopic revenue-maximizing assortment for any terminal history (Lemma~\ref{lemma:if_H_is_terminal_the_expected_opt_equals_the_maximum_instantaneous_revenue}). 

\begin{lemma}\label{lemma:if_H_1_is_terminal_H_t_is_terminal_with_probability_1}
    For any terminal initial history $\mathcal{H}_1$, round $t$, and policy $\pi$, the history $\mathcal{H}_t$ under $\pi$ is terminal with probability $1$. 
\end{lemma}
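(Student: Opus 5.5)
The plan is induction on $t$, reducing everything to one claim: from a terminal history, every history reachable in one step is again terminal almost surely. The reason this should hold is that the only way the history changes in a way that matters is when an unknown entrant is purchased. Otherwise $I(\mathcal{H}_{t+1}) = I(\mathcal{H}_t)$, and the conditional law of the unknown parameters is unchanged. No-purchase and incumbent-purchase events carry no information about the $w_j$, $j\notin I(\mathcal{H})$, because choice probabilities depend only on $\tildew$. So $\textsc{OPT}(\mathcal{H}_{t+1}) = \textsc{OPT}(\mathcal{H}_t)$, the myopic optimum is the same, and terminality is preserved trivially.

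The substantive case is when an unknown entrant $j$ is purchased at round $t$, revealing $w_j \sim \mathcal{F}$, independent of everything else. For a known set $I$ and $k$ remaining unknown entrants, write $\textsc{OPT}(I,k) = \expect[\,W^\star_{(c)}(\cdot)/(W^\star_{(c)}(\cdot)+w_0)\,]$ over the $k$ i.i.d. unknowns. Terminality of $\mathcal{H}$ means $\textsc{OPT}(I,k) = g(I)$, where $g(I) := W^\star_{(c)}(I)/(W^\star_{(c)}(I)+w_0)$ is the myopic optimum. Note $\max_{S\subseteq I}\textsc{Rev}(S;\mathcal{H})=g(I)$ since known products carry their true weights.

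Since $\textsc{OPT}$ is a pointwise maximum, $\textsc{OPT}(I,k)\ge g(I)$ always, with the integrand at least $g(I)$ everywhere. Equality therefore forces the integrand to equal $g(I)$ almost surely. That means almost surely no unknown entrant has weight exceeding $w^\star_{(c)}(I)$, i.e. $\mathcal{F}$ puts zero mass above $w^\star_{(c)}(I)$. With $k\ge1$ this is equivalent to $\overline{\theta}\le w^\star_{(c)}(I)$, interpreting the support endpoint as the essential supremum.

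Given that, the revealed $w_j$ satisfies $w_j \le w^\star_{(c)}(I)$ almost surely. Hence $g(I\cup\{j\})=g(I)$, because adding a product no larger than the current $c$-th best does not change $W^\star_{(c)}$. The same essential-supremum condition holds relative to $I\cup\{j\}$, since $w^\star_{(c)}(I\cup\{j\})\ge w^\star_{(c)}(I)$. So the remaining $k-1$ unknowns almost surely cannot improve the top-$c$ sum, giving $\textsc{OPT}(I\cup\{j\},k-1)=g(I\cup\{j\})$. The new history is terminal.

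The case $k=0$ is immediate, since then $\textsc{OPT}=g(I)$ deterministically. Induction over rounds, together with a countable union of null events, gives terminality of $\mathcal{H}_t$ with probability one for every $t$.

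The main obstacle is the equivalence step in the entrant-purchase case: passing from the expectation equality to an almost-sure statement about the prior's upper tail. This needs care with ties at $w^\star_{(c)}(I)$ and with the meaning of $\overline{\theta}$. I would handle it directly via the pointwise inequality of the integrand with $g(I)$, which avoids relying on support conventions.
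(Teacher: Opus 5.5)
Your argument is correct, but it takes a different route from the paper. The paper neither characterizes terminal histories nor inducts on $t$. It notes that for every realization
$\textsc{OPT}(\mathcal{H}_t) \ge \max_{S\subseteq I(\mathcal{H}_t),|S|\le c}\textsc{Rev}(S;\mathcal{H}_t) \ge \max_{S\subseteq I(\mathcal{H}_1),|S|\le c}\textsc{Rev}(S;\mathcal{H}_1) = \textsc{OPT}(\mathcal{H}_1)$.
The middle inequality holds because the known-products optimum can only grow as products become known, and the final equality is terminality of $\mathcal{H}_1$. The tower property gives $\mathbb{E}[\textsc{OPT}(\mathcal{H}_t)\mid\mathcal{H}_1] = \textsc{OPT}(\mathcal{H}_1)$. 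Hence both inequalities are equalities almost surely, and the first equality is exactly terminality of $\mathcal{H}_t$.

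You use the same ``bounded below by its mean'' squeeze, but one level down: over the realization of the unknown weights at a fixed history, rather than over the random path of histories. You combine it with the top-$c$ form of $\textsc{OPT}$ under equal rewards to get an explicit criterion: for $k\ge 1$, $\mathcal{H}$ is terminal iff $\mathcal{F}$ puts no mass above $w^\star_{(c)}(I)$. This gives more insight and a slightly stronger conclusion. Since $w^\star_{(c)}$ can only increase when a product is revealed, terminality is preserved for every revealed value $w_j$, not just almost surely. In fact your step $g(I\cup\{j\})=g(I)$ is not needed for this.

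The paper's martingale argument uses nothing about the structure of the optimal assortment. It needs only that $\textsc{OPT}(\mathcal{H})$ is a conditional expectation of a fixed random variable and dominates the known-products optimum. That is why the paper can reuse the lemma unchanged in the heterogeneous-rewards extension. There the optimal assortment is no longer the top $c$ products by attraction parameter, so your criterion as written does not apply. It would adapt to heterogeneous priors, entrant by entrant.
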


\begin{lemma}\label{lemma:if_H_is_terminal_the_expected_opt_equals_the_maximum_instantaneous_revenue}
    For any terminal history $\mathcal{H}$, $\textsc{OPT}(\mathcal{H}) = \max\limits_{S \subseteq \mathcal{N}, |S| \leq c}\textsc{Rev}(S;\mathcal{H})$
\end{lemma}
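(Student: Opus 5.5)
The plan is to reduce the claim to a comparison of weights. $\textsc{Rev}(S;\mathcal{H})$ is increasing in the total weight $\sum_{i\in S} w_i(\mathcal{H})$. So the maximum over $S\subseteq\mathcal{N}$, $|S|\le c$, is attained by the $c$ largest values in the multiset of known weights $J(\mathcal{H})$ together with copies of $\tildew$, one for each unknown entrant. Since there are at least $c$ incumbents (possibly dummy ones), the maximum over known products only is $W^\star_{(c)}(I(\mathcal{H}))/(W^\star_{(c)}(I(\mathcal{H}))+w_0)$. It therefore suffices to show that, whenever some entrant is still unknown, $\tildew \le w^\star_{(c)}(I(\mathcal{H}))$. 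Then adding unknown entrants cannot raise the top-$c$ weight sum, so $\max_{S\subseteq\mathcal{N}}\textsc{Rev}(S;\mathcal{H})=\max_{S\subseteq I(\mathcal{H})}\textsc{Rev}(S;\mathcal{H})$, and terminality makes this equal to $\textsc{OPT}(\mathcal{H})$. If no entrant is unknown, the two maxima coincide trivially.

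To obtain the weight inequality, I would first extract information about the prior from terminality. Pointwise in the realization of the unknown weights, $\textsc{OPT}\ge W^\star_{(c)}(I(\mathcal{H}))/(W^\star_{(c)}(I(\mathcal{H}))+w_0)$, because the ex-post optimum may always choose among the known products. Terminality says the expectations of the two sides are equal, so equality holds almost surely. Equality means that almost surely no unknown entrant has true weight strictly larger than $w^\star_{(c)}(I(\mathcal{H}))$. Otherwise that entrant would replace the $c$-th best known product and strictly increase $x\mapsto x/(x+w_0)$. Using independence of the entrants' draws from $\mathcal{F}$, this forces $\prob_{w\sim\mathcal{F}}[w>w^\star_{(c)}(I(\mathcal{H}))]=0$, that is, $\overline{\theta}\le w^\star_{(c)}(I(\mathcal{H}))$.

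The remaining step, which I expect to be the main obstacle, is to pass from $\overline{\theta}$ to $\tildew$. The model only describes $\tildew$ as a prior-based estimate, such as a mean or a percentile of $\mathcal{F}$. I would make explicit, as the paper implicitly uses, that $\tildew\le\overline{\theta}$, i.e. the cold-start parameter lies in the support of the prior. This holds for every example given (mean, percentile). Combining the two inequalities gives $\tildew\le\overline{\theta}\le w^\star_{(c)}(I(\mathcal{H}))$, which completes the argument. Without this assumption the lemma can fail: a large $\tildew$ paired with a prior concentrated below $w^\star_{(c)}(I(\mathcal{H}))$ gives a terminal history in which offering unknown entrants earns revenue above $\textsc{OPT}(\mathcal{H})$. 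So the proof should cite this condition explicitly.
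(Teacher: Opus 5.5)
Your proof is correct, but it takes a different route from the paper's.

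\textbf{The paper's route.} It never identifies the optimal assortment. It argues by contradiction: suppose some $S'$ containing unknown entrants has $\textsc{Rev}(S';\mathcal{H})>\textsc{OPT}(\mathcal{H})$, and condition on the event that every unknown entrant in $S'$ has true weight at least $\tildew$. On that event the ex-post optimum is at least the realized revenue of $S'$, hence strictly above $\textsc{OPT}(\mathcal{H})$. Off the event it is at least the known-products maximum, which equals $\textsc{OPT}(\mathcal{H})$ by terminality. Since the event has positive probability, the law of total expectation gives $\textsc{OPT}(\mathcal{H})>\textsc{OPT}(\mathcal{H})$.

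\textbf{Your route.} You use homogeneous rewards to reduce both maxima to top-$c$ weight sums. You then extract from terminality that $\prob[w>w^{\star}_{(c)}(I(\mathcal{H}))]=0$, i.e.\ $\overline{\theta}\le w^{\star}_{(c)}(I(\mathcal{H}))$, and conclude via $\tildew\le\overline{\theta}$.

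\textbf{What your route buys.} It is more informative: together with the trivial converse, it characterizes terminal histories as exactly those with $\overline{\theta}\le w^{\star}_{(c)}(I(\mathcal{H}))$. It also needs only $\tildew\le\overline{\theta}$. The paper justifies its positivity step $\prob[w\ge\tildew]>0$ by $\tildew\in[\underline{\theta},\overline{\theta}]$, and that does not suffice when $\tildew=\overline{\theta}$ and $\mathcal{F}$ has no atom at $\overline{\theta}$. Your essential-supremum argument covers that boundary case.

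\textbf{What the paper's route buys.} It uses only that the revenue of a fixed assortment is increasing in its entrants' weights, not the top-$c$ structure. That is why it transfers almost verbatim to the heterogeneous-rewards extension, where the appendix only adds a check that the entrant reward $r$ exceeds the revenue of the known part of $S'$. Your reduction would need a new structural argument there.

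Your remark that some condition tying $\tildew$ to the support of $\mathcal{F}$ is necessary is right, and your counterexample is valid. The paper relies on the same assumption, only implicitly, inside its positivity claim.
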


\begin{proof}[Proof of Lemma~\ref{lemma:any_policy_has_a_nonnegative_regret_starting_from_a_terminal_history_and_pi_star_as_a_zero_regret_starting_from_a_terminal_history}.]
Given that $\mathcal{H}_1 = \mathcal{H}$ is terminal, the expected regret of $\pi$ at any round $t$ is non-negative:
$$\expect[\textsc{OPT} -R_t] = \expect[\textsc{OPT}(\mathcal{H}_t) -\textsc{Rev}(S_t;\mathcal{H}_t)] \geq  \expect\Big[\textsc{OPT}(\mathcal{H}_t) -  \max\limits_{S \subseteq \mathcal{N}, |S| \leq c}\textsc{Rev}(S;\mathcal{H}_t) \Big] = 0$$
where the last equality uses that $\mathcal{H}_t$ is terminal with probability $1$ by  Lemma~\ref{lemma:if_H_1_is_terminal_H_t_is_terminal_with_probability_1} and the fact that $\textsc{OPT}(\mathcal{H}_t) -  \max\limits_{S \subseteq \mathcal{N}, |S| \leq c}\textsc{Rev}(S;\mathcal{H}_t) = 0$ by Lemma~\ref{lemma:if_H_is_terminal_the_expected_opt_equals_the_maximum_instantaneous_revenue}. Thus, $\textsc{Reg}(\pi;\mathcal{H}) = \liminf_{T \to \infty} \expect \Bigg[\sum_{t=1}^{T} \textsc{OPT} - R_t \Bigg] \geq 0.$ 

Given that $\textsc{Rev}(\pi^{\star}(\mathcal{H}_t);\mathcal{H}_t) = \max\limits_{S \subseteq I(\mathcal{H}_t), |S| \leq c} \textsc{Rev}(S;\mathcal{H}_t)$, the expected regret of $\pi^{\star}$ at any round $t$ is $0$:
$$\expect[\textsc{OPT} -R_t] = \expect[\textsc{OPT}(\mathcal{H}_t) -\textsc{Rev}(\pi^{\star}(\mathcal{H}_t);\mathcal{H}_t)] =  \expect\Big[\textsc{OPT}(\mathcal{H}_t) -  \max\limits_{S \subseteq I(\mathcal{H}_t), |S| \leq c}\textsc{Rev}(S;\mathcal{H}_t) \Big]=0,$$
which holds as $\mathcal{H}_t$ is terminal (Lemma~\ref{lemma:if_H_1_is_terminal_H_t_is_terminal_with_probability_1}). Thus, $\textsc{Reg}(\pi^{\star};\mathcal{H}) =\liminf_{T \to \infty} \expect \Bigg[\sum_{t=1}^{T} \textsc{OPT} -R_t \Bigg] =0.$
\end{proof}

\begin{proof}[Proof of Lemma~\ref{lemma:if_H_1_is_terminal_H_t_is_terminal_with_probability_1}.]
The maximum revenue achievable by known products at round $t$ is at least the maximum revenue from known products at round $1$:
\begin{equation}\label{eq:knwon_opt_given_H_t_is_at_least_known_opt_given_H_1}
    \max\limits_{S \subseteq I(\mathcal{H}_t), |S| \leq c} \textsc{Rev}(S; \mathcal{H}_t) \geq  \max\limits_{S \subseteq I(\mathcal{H}_1), |S| \leq c} \textsc{Rev}(S; \mathcal{H}_t) = \max\limits_{S \subseteq I(\mathcal{H}_1), |S| \leq c} \textsc{Rev}(S; \mathcal{H}_1) = \textsc{OPT}(\mathcal{H}_1)
\end{equation}
where the inequality uses the $I(\mathcal{H}_1) \subseteq I(\mathcal{H}_t)$ as any known product at round $1$ is also known at round $t$ and the equality uses that 
$\textsc{Rev}(S; \mathcal{H}_t) = \textsc{Rev}(S;\mathcal{H}_1)$ for any $S \subseteq I(\mathcal{H}_1)$ as any known product at round $1$ has the same attraction parameter at round $t$, and the second equality uses the fact that $\mathcal{H}_1$ is terminal. The expected ex-post optimum at round $t$ is lower bounded by
\begin{align}\label{ineq:expected_ex_post_opt_rount_t_is_at_least_opt_revenue_from_known_is_at_least_expected_ex_post_at_round_1}
    \textsc{OPT}(\mathcal{H}_t) & = \expect[\max_{S \subseteq \mathcal{N}, |S| \leq c}\textsc{Rev}(S; \mathcal{H}_t)]\geq \max\limits_{S \subseteq I(\mathcal{H}_t), |S| \leq c} \textsc{Rev}(S; \mathcal{H}_t) \nonumber\\
    &\geq \textsc{OPT}(\mathcal{H}_1) = \expect[\textsc{OPT}|\mathcal{H}_1] =  \expect[\expect[\textsc{OPT}|\mathcal{H}_t]|\mathcal{H}_1] =\expect[\textsc{OPT}(\mathcal{H}_t)|\mathcal{H}_1]
\end{align}
where the first inequality uses that $I(\mathcal{H}_t) \subseteq \mathcal{N}$, 
the second inequality uses \eqref{eq:knwon_opt_given_H_t_is_at_least_known_opt_given_H_1}. Therefore for all realizations of $\mathcal{H}_t$, $\textsc{OPT}(\mathcal{H}_t) = \expect[\textsc{OPT}(\mathcal{H}_t)|\mathcal{H}_1]$; thus all inequalities in \eqref{ineq:expected_ex_post_opt_rount_t_is_at_least_opt_revenue_from_known_is_at_least_expected_ex_post_at_round_1} are equalities. In particular, it holds that $\textsc{OPT}(\mathcal{H}_t)=  \max\limits_{S \subseteq I(\mathcal{H}_t), |S| \leq c} \textsc{Rev}(S; \mathcal{H}_t)$, i.e., $\mathcal{H}_t$ is terminal.
\end{proof}

\begin{proof}[Proof of Lemma~\ref{lemma:if_H_is_terminal_the_expected_opt_equals_the_maximum_instantaneous_revenue}.]
Given that the history $\mathcal{H}$ is terminal the expected ex-post optimum is equal to maximum revenue from known products which is less than the maximum revenue from all products, i.e., $$\textsc{OPT}(\mathcal{H}) = \max\limits_{S \subseteq I(\mathcal{H}), |S| \leq c}\textsc{Rev}(S;\mathcal{H}) \leq \max\limits_{S \subseteq \mathcal{N}, |S| \leq c}\textsc{Rev}(S;\mathcal{H}).$$ 
For the sake of contradiction suppose that $\textsc{OPT}(\mathcal{H}) <\max\limits_{S \subseteq \mathcal{N}, |S| \leq c}\textsc{Rev}(S;\mathcal{H})$. Thus, there exists an assortment~$S'$ with at least one unknown entrant such that $\textsc{Rev}(S';\mathcal{H}) > \textsc{OPT}(\mathcal{H}) = \max\limits_{S \subseteq I(\mathcal{H}), |S| \leq c}\textsc{Rev}(S;\mathcal{H})$. Let~$\mathcal{E}^{\geq h}$ be the event the unknown entrants in $S'$ have attraction parameters at least $\tildew$ given the history~$\mathcal{H}$ and let $\mathcal{E}^{< h}$ be the complement of~$\mathcal{E}^{\geq h}$ .  The law of total probability yields 
\begin{equation}\label{eq:law_of_total_probability_event_ell_unknown_higher_than_h}
    \textsc{OPT}(\mathcal{H}) =\expect[\textsc{OPT}|\mathcal{E}^{\geq h}]\prob[\mathcal{E}^{\geq h}] + \expect[\textsc{OPT}|\mathcal{E}^{< h}]\prob[\mathcal{E}^{< h}].
\end{equation}
We denote the number of unknown entrants by $\ell$, the set of unknown entrants by $A_{\ell}$ , and the sum of the attraction parameters of known products in $S'$ by $W_{\textsc{known}}$. Conditioned on $\mathcal{E}^{\geq h}$, the expectation in the first term is at least

\begin{align}\label{eq:lower_bound_expectation_all_at_least_h}
    \expect[\textsc{OPT}|\mathcal{E}^{\geq h}] &\geq \expect \Bigg[\frac{\sum_{i \in S'} w_i }{\sum_{i \in S'} w_i + 1} \Bigg|\mathcal{E}^{\geq h}\Bigg] = \expect \Bigg[\frac{\sum_{i \in A_{\ell}} w_i + W_{\textsc{known}}}{\sum_{i \in A_{\ell}} w_i + W_{\textsc{known}} + 1} \Bigg|\mathcal{E}^{\geq h}\Bigg] \geq \frac{\ell \tildew + W_{\textsc{known}}}{\ell \tildew + W_{\textsc{known}} + 1} \nonumber\\
    &= \textsc{Rev}(S';\mathcal{H}) > \textsc{OPT}(\mathcal{H})
\end{align}
where the second inequality uses that $w_i \geq \tildew$ for $i \in A_\ell$ conditioned on $\mathcal{E}^{\geq h}$ and that $\frac{x +W_{\textsc{known}}}{x +W_{\textsc{known}} + 1}$ is strictly increasing in $x$; the last inequality uses the choice of $S'$. The expectation in the second term is at least
\begin{equation}\label{eq:lower_bound_expectation_at_least_one_smaller_than_h}
\expect[\textsc{OPT}|\mathcal{E}^{< h}]  = \expect\Big[\max\limits_{S \subseteq \mathcal{N}, |S| \leq c}\textsc{Rev}(S;\mathcal{H})|\mathcal{E}^{< h} \Big] \geq \max\limits_{S \subseteq I(\mathcal{H}), |S| \leq c}\textsc{Rev}(S;\mathcal{H}) = \textsc{OPT}(\mathcal{H})
\end{equation}
where the inequality uses that $\mathcal{I}(\mathcal{H}) \subseteq \mathcal{N}$ and last equality uses that $\mathcal{H}$ is terminal. Finally, 
\begin{equation}\label{ineq:prob_strictly_pos}
    \prob[\mathcal{E}^{\geq h}] = \prod_{i \in A_{\ell}} \prob_{w_i \sim \mathcal{F}}[w_i \geq \tildew] > 0
\end{equation}
because all unknown entrants are independent and  $\tildew \in [\underline{\theta}, \overline{\theta}]$. Combining \eqref{eq:law_of_total_probability_event_ell_unknown_higher_than_h}, \eqref{eq:lower_bound_expectation_all_at_least_h}, \eqref{eq:lower_bound_expectation_at_least_one_smaller_than_h}, and \eqref{ineq:prob_strictly_pos}: 
$$\textsc{OPT}(\mathcal{H}) = \expect[\textsc{OPT}|\mathcal{E}^{\geq h}]\prob[\mathcal{E}^{\geq h}] + \expect[\textsc{OPT}|\mathcal{E}^{< h}]\prob[\mathcal{E}^{< h}] > \textsc{OPT}(\mathcal{H})\prob[\mathcal{E}^{\geq h}]  + \textsc{OPT}(\mathcal{H})\prob[\mathcal{E}^{< h}] = \textsc{OPT}(\mathcal{H})$$ that implies $\textsc{OPT}(\mathcal{H}) > \textsc{OPT}(\mathcal{H})$ and thus is a contradiction.
\end{proof}

\subsection{Optimal regret depends only on known products' parameters (Lemma~\ref{lemma:the_regret_of_pi_star_depends_on_H_only_through_J_H})}\label{appendix_subsec:proof_lemma_optimal_regret_depends_only_on_known_products_attraction_parameters}

To prove Lemma~\ref{lemma:the_regret_of_pi_star_depends_on_H_only_through_J_H}, the next lemma (proof in Appendix~\ref{appendix_subsec:proof_lemma_B8}) shows that the expected-ex post optimum, the myopic revenue from known products, and the minimum epoch regret given a history $\mathcal{H}$ depend on $\mathcal{H}$ only through the attraction parameters of the known products in $\mathcal{H}$.

\begin{lemma}\label{lemma:for_any_history_H_the_expected_ex_post_optimum_the_maximum_revenue_from_known_products_and_the_minimum_epoch_regret_depend_on_H_only_through_J(H)}
    For any history $\mathcal{H}$, the expected ex-post optimum $\textsc{OPT}(\mathcal{H})$, the myopic revenue from known products$\max\limits_{S \subseteq I(\mathcal{H}), |S| \leq c} \textsc{Rev}(S; \mathcal{H})$, and the minimum epoch regret $\min_{S \in \mathcal{E}(\mathcal{H})} \textsc{EpochReg}(S;\mathcal{H})$ depend on $\mathcal{H}$ only through $J(\mathcal{H})$. 
\end{lemma}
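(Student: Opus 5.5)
The plan is to show that each of the three quantities can be written as a function of the multiset $J(\mathcal{H})$ of known attraction parameters together with the number of unknown entrants. That number is $m-|M(\mathcal{H})| = n-|J(\mathcal{H})|$ when $J(\mathcal{H})$ is read as a multiset indexed by known products, so it is itself determined by $J(\mathcal{H})$. The underlying observation is that, conditional on $\mathcal{H}$, the unknown entrants' parameters are still i.i.d.\ $\mathcal{F}$. The history only reveals $w_{Y_s}$ for purchased entrants. The events ``entrant $j$ was not purchased'' depend only on the customers' choices, which are governed by the cold-start parameter $\tildew$ and not by $w_j$. By independence across entrants, the posterior of the unknown $w_j$ therefore equals the prior.

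\emph{Expected ex-post optimum.} Using this posterior fact, I will write
$$\textsc{OPT}(\mathcal{H}) = \expect_{w_j \sim \mathcal{F},\, j \notin I(\mathcal{H})}\Big[\tfrac{W^\star_{(c)}}{W^\star_{(c)}+w_0}\Big],$$
where $W^\star_{(c)}$ is the sum of the $c$ largest values in the multiset $J(\mathcal{H})\cup\{w_j\}_{j\notin I(\mathcal{H})}$. This expression depends only on $J(\mathcal{H})$ and on the count of unknowns. The key point here is that sorting discards product identities.

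\emph{Myopic revenue from known products.} This equals $W^\star_{(c)}(I(\mathcal{H}))/(W^\star_{(c)}(I(\mathcal{H}))+w_0)$, which is again a symmetric function of $J(\mathcal{H})$.

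\emph{Minimum epoch regret.} For $S\in\mathcal{E}(\mathcal{H})$ containing $\ell\ge 1$ unknown entrants and a known set $S'$, I will compute the epoch quantities in closed form as in the proof of Lemma~\ref{lemma:optimal_completion_given_ell_unknown_prods} and Lemma~\ref{lemma:expression_ratio_reward_loss_time_gain}:
$$\tau = \frac{\sum_{i\in S'} w_i+\ell\tildew+w_0}{\ell\tildew},\qquad r = \frac{\sum_{i\in S'}w_i}{\ell\tildew}+1.$$
These depend only on $\ell$ and the multiset $\{w_i\}_{i\in S'}\subseteq J(\mathcal{H})$. Hence
$$\textsc{EpochReg}(S;\mathcal{H})=\textsc{OPT}(\mathcal{H})\,\tau-r$$
is a function of $(J(\mathcal{H}),\ell,\{w_i\}_{i\in S'})$. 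The minimum is taken over $\ell\in\{1,\dots,\min(c,\text{\#unknowns})\}$ and over sub-multisets of $J(\mathcal{H})$ of size at most $c-\ell$, a range that is itself determined by $J(\mathcal{H})$.

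\emph{Main obstacle.} The main obstacle is making the posterior claim rigorous, since the policy may be randomized and history-dependent. I will handle it by writing the likelihood of $\mathcal{H}$ as a product of per-round factors. Each factor is either a policy probability, which is a function of the past, or an MNL choice probability computed from the $w_i(s)$. These involve unknown $w_j$ only through $\tildew$, and revealed ones only as already-observed values. The likelihood is therefore constant in $(w_j)_{j\notin I(\mathcal{H})}$, and Bayes' rule then gives the i.i.d.\ prior as the posterior.
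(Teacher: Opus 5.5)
Your argument is correct. It rests on the same fact as the paper's proof, namely that the three quantities are invariant under relabeling products that carry equal known parameters, but it is organized differently.

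The paper fixes $\mathcal{H}^1,\mathcal{H}^2$ with $J(\mathcal{H}^1)=J(\mathcal{H}^2)$. It builds a bijection $a$ of $\mathcal{N}$ that fixes incumbents and matches known entrants with equal parameters, so unknowns go to unknowns, and transports everything through $a$:
\begin{itemize}
\item $\textsc{Rev}(S;\mathcal{H}^1)=\textsc{Rev}(a(S);\mathcal{H}^2)$ for the myopic revenue;
\item a coupling of the unknowns' realizations for $\textsc{OPT}$;
\item $\tau(S;\mathcal{H}^1)=\tau(a(S);\mathcal{H}^2)$ and $r(S;\mathcal{H}^1)=r(a(S);\mathcal{H}^2)$ for the epoch regret, with the minimum over $\mathcal{E}(\mathcal{H})$ handled by the bijection $S\mapsto a(S)$.
\end{itemize}

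You instead write each quantity as an explicit symmetric function of the multiset $J(\mathcal{H})$: top-$c$ sums, and $\tau,r$ in terms of $\ell$ and $\sum_{i\in S'}w_i$. You then parametrize $\mathcal{E}(\mathcal{H})$ by $(\ell,\text{sub-multiset of }J(\mathcal{H}))$. This takes a little computation but is more transparent, and these formulas are the ones the paper uses later anyway.

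Your write-up also makes explicit two points the paper uses tacitly.
\begin{itemize}
\item $J$ must be read as a multiset. The bijection $a$ exists exactly under that reading. Under a literal set reading the statement fails: a revealed entrant that ties an incumbent's value leaves the set unchanged but removes an unknown, which changes $\textsc{OPT}(\mathcal{H})$ and the range of $\ell$.
\item Conditional on $\mathcal{H}$, the unknown parameters are still i.i.d.\ $\mathcal{F}$. The paper simply writes $\textsc{OPT}(\mathcal{H}^1)$ as an integral against $\prod f(v)\,dv$. Your likelihood argument actually justifies this: choice probabilities involve unknown entrants only through $\tildew$, and policy randomization is independent of the $w_j$. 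Phrasing it as an expectation rather than through a density also covers the Bernoulli priors used elsewhere in the paper.
\end{itemize}
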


The next lemma (proof in Appendix~\ref{appendix_subsec:proof_lemma_B9}) provides a one-epoch-lookahead decomposition for the regret of $\pi^{\star}$. For a history $\mathcal{H}$, let $Z^{\mathcal{H}}_{\star}$ be the first round an unknown entrant is purchased if $\pi^{\star}$ starts from a history $\mathcal{H}$. Recall that the random variables $\mathcal{H}_{Z^{\mathcal{H}}_{\star}}$, $ S_{Z^{\mathcal{H}}_{\star}}$,$ Y_{Z^{\mathcal{H}}_{\star}}$, and $ w_{Y_{Z^{\mathcal{H}}_{\star}}}$ are the history, the assortment offered, the unknown entrant chosen, and the attraction parameter of the unknown entrant at round $Z^{\mathcal{H}}_{\star}$ respectively. 

\begin{lemma}\label{lemma:the_regret_of_pi_star_can_be_decomposed_into_the_first_epoch_regret_and_the_future_regret_general_form}
    For any non-terminal history $\mathcal{H}$, the regret of $\pi^{\star}$ can be decomposed as
     $$\textsc{Reg}(\pi^{\star};\mathcal{H}) = \min\limits_{S \in \mathcal{E}(\mathcal{H})} \textsc{EpochReg}(S;\mathcal{H}) \nonumber + \expect[\textsc{Reg}(\pi^{\star};\mathcal{H}_{Z^{\mathcal{H}}_{\star}} \cup (S_{Z^{\mathcal{H}}_{\star}}, Y_{Z^{\mathcal{H}}_{\star}}, w_{Y_{Z^{\mathcal{H}}_{\star}}}))|\mathcal{H}_1 = \mathcal{H}].$$
\end{lemma}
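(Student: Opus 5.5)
The plan is to unroll $\pi^{\star}$ from $\mathcal{H}$ over its first epoch. Since $\mathcal{H}$ is non-terminal, $\pi^{\star}(\mathcal{H})$ is some fixed $S^{\mathcal{H}} \in \argmin_{S \in \mathcal{E}(\mathcal{H})} \textsc{EpochReg}(S;\mathcal{H})$. During the first epoch no unknown entrant has been purchased. Any other purchase (of a known product) or an outside-option choice adds nothing to $J(\cdot)$: the history grows, but $I(\mathcal{H}_t) = I(\mathcal{H})$ and $J(\mathcal{H}_t) = J(\mathcal{H})$. I would first argue that $\pi^{\star}$ keeps offering the same assortment throughout the epoch. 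This uses Lemma~\ref{lemma:for_any_history_H_the_expected_ex_post_optimum_the_maximum_revenue_from_known_products_and_the_minimum_epoch_regret_depend_on_H_only_through_J(H)}: $\mathcal{H}_t$ is still non-terminal, and the argmin set depends on $\mathcal{H}_t$ only through $J(\mathcal{H}_t)$, so $\pi^{\star}$ can be taken to pick the same minimizer. If needed, I would fix a tie-breaking rule that depends only on $J$; this is the one point where the definition of $\pi^{\star}$ must be made precise. Consequently $Z^{\mathcal{H}}_{\star}$ is geometric with success probability $\ell\tildew/(\sum_{i\in S^{\mathcal{H}}} w_i(\mathcal{H}) + w_0)>0$, where $\ell$ is the number of unknown entrants in $S^{\mathcal{H}}$. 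So $\expect[Z^{\mathcal{H}}_{\star}]<\infty$ and $Z^{\mathcal{H}}_{\star}<\infty$ almost surely.

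Next, I would split the regret at $Z = Z^{\mathcal{H}}_{\star}$. Write $\textsc{Reg}(\pi^{\star};\mathcal{H}) = \liminf_T \expect[\sum_{t\le T}(\textsc{OPT}-R_t)]$. By the tower property, each term $\expect[\textsc{OPT}-R_t]$ equals $\expect[\textsc{OPT}(\mathcal{H}_t)-\textsc{Rev}(S_t;\mathcal{H}_t)]$, because $R_t$ has conditional mean $\textsc{Rev}(S_t;\mathcal{H}_t)$ given $\mathcal{H}_t$ and $\textsc{OPT}$ has conditional mean $\textsc{OPT}(\mathcal{H}_t)$. For $t\le Z$, the summand is the constant $\textsc{OPT}(\mathcal{H})-\textsc{Rev}(S^{\mathcal{H}};\mathcal{H})$: $\textsc{OPT}(\mathcal{H}_t)=\textsc{OPT}(\mathcal{H})$ by Lemma~\ref{lemma:for_any_history_H_the_expected_ex_post_optimum_the_maximum_revenue_from_known_products_and_the_minimum_epoch_regret_depend_on_H_only_through_J(H)}, and the attraction parameters are unchanged. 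Hence the first-epoch part has expectation $\expect[Z]\,(\textsc{OPT}(\mathcal{H})-\textsc{Rev}(S^{\mathcal{H}};\mathcal{H}))$. Since $\tau(S^{\mathcal{H}};\mathcal{H})=\expect[Z]$ and $r(S^{\mathcal{H}};\mathcal{H}) = \expect[Z]\,\textsc{Rev}(S^{\mathcal{H}};\mathcal{H})$ (Wald's identity), this equals $\textsc{EpochReg}(S^{\mathcal{H}};\mathcal{H})$, which is the minimum. For rounds after $Z$, the strong Markov property applies: $\pi^{\star}$ is a function of the current history, and the remaining unknowns are still i.i.d. from $\mathcal{F}$ conditionally. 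So the future contribution is $\expect[\textsc{Reg}(\pi^{\star};\mathcal{H}_{Z}\cup(S_Z,Y_Z,w_{Y_Z}))]$, where the inner regret is the conditional regret from the new history.

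The main obstacle is exchanging the $\liminf_T$ with this split, because regret is defined through a $\liminf$ of partial sums and the per-round terms can be negative. I would handle it by induction on the number of unknown entrants, in the same way the proof of Lemma~\ref{lemma:epoch_optimal_policy_is_optimal} proceeds. Suppose that from any history with fewer unknowns the partial sums of $\pi^{\star}$ are controlled. Concretely, I would show that the per-round expected regret of $\pi^{\star}$ is nonzero only during epochs. There are at most $m$ epochs, each with finite expected length, and the per-round regret lies in $[-1,1]$. The total absolute regret is then integrable, with a bound of the order $m\max_S \tau(S;\cdot)$, which is finite since $\tau\le (c\overline{\theta}+c\tildew+w_0)/\tildew$ at every epoch. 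Once the sum of absolute values is integrable, dominated convergence turns the $\liminf$ into a genuine limit equal to $\expect[\sum_{t\ge1}(\textsc{OPT}(\mathcal{H}_t)-\textsc{Rev}(S_t;\mathcal{H}_t))]$. That expectation splits additively at $Z$, and conditioning on $\mathcal{F}_Z$ gives the future term. Integrability rests on the fact that regret is zero once a terminal history is reached (Lemma~\ref{lemma:any_policy_has_a_nonnegative_regret_starting_from_a_terminal_history_and_pi_star_as_a_zero_regret_starting_from_a_terminal_history} and Lemma~\ref{lemma:if_H_1_is_terminal_H_t_is_terminal_with_probability_1}), so only the finitely many epochs contribute.
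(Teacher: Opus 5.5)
Your first-epoch computation needs $\pi^{\star}$ to offer one fixed assortment throughout the epoch, and that does not follow from the definition. Here $\pi^{\star}$ is \emph{any} policy with $\pi^{\star}(\mathcal{H}_t)\in\argmin_{S\in\mathcal{E}(\mathcal{H}_t)}\textsc{EpochReg}(S;\mathcal{H}_t)$ at non-terminal histories. That is how Lemma~\ref{lemma:epoch_optimal_policy_is_optimal} is stated and used, for instance for EFA, which leaves the choice of $U_{\ell_t}$ and of tied incumbents open. Such a policy may switch between minimizers from round to round depending on the full history, or randomize over them. Fixing a tie-breaking rule therefore proves the decomposition only for one particular $\pi^{\star}$, which is weaker than the lemma.

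Two smaller inaccuracies sit in the same step. First, Lemma~\ref{lemma:for_any_history_H_the_expected_ex_post_optimum_the_maximum_revenue_from_known_products_and_the_minimum_epoch_regret_depend_on_H_only_through_J(H)} concerns the minimum value, not the argmin set. Second, a rule depending only on $J$ cannot select index sets, since $J$ forgets which entrants are known. The correct and simpler fact is this: for $t\le Z^{\mathcal{H}}_{\star}$ we have $I(\mathcal{H}_t)=I(\mathcal{H})$ with unchanged parameters. Hence $\textsc{EpochReg}(S;\mathcal{H}_t)=\textsc{EpochReg}(S;\mathcal{H})$ for every $S$, and the argmin set $\mathcal{A}(\mathcal{H})$ is constant over the epoch.

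The missing idea, which is the paper's argument, makes the fixed-assortment assumption unnecessary. Let $q(S;\mathcal{H})$ be the per-round probability that an unknown entrant is bought under $S$. Let $N(S)$ be the number of rounds up to and including $Z^{\mathcal{H}}_{\star}$ in which $S$ is offered. Then
$\textsc{EpochReg}(S;\mathcal{H})=(\textsc{OPT}(\mathcal{H})-\textsc{Rev}(S;\mathcal{H}))/q(S;\mathcal{H})$.
Moreover, $\sum_{S\in\mathcal{A}(\mathcal{H})}q(S;\mathcal{H})\,\expect[N(S)]=1$. This holds because exactly one unknown purchase ends the epoch, and $Z^{\mathcal{H}}_{\star}<\infty$ almost surely since every round offers an unknown entrant whose purchase probability is bounded below. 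Hence the first-epoch regret equals
$\sum_{S\in\mathcal{A}(\mathcal{H})}\textsc{EpochReg}(S;\mathcal{H})\,q(S;\mathcal{H})\,\expect[N(S)]=\min_{S\in\mathcal{E}(\mathcal{H})}\textsc{EpochReg}(S;\mathcal{H})$,
no matter how $\pi^{\star}$ alternates among minimizers.

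Your treatment of the $\liminf$ by dominated convergence is a real improvement over the paper, which simply writes the split. Apply it to $\textsc{OPT}(\mathcal{H}_t)-\textsc{Rev}(S_t;\mathcal{H}_t)$, which vanishes pathwise once a terminal history is reached. Do not apply it to the realized $\textsc{OPT}-R_t$, which does not vanish. Also bound epoch lengths using the largest attraction parameter rather than $\overline{\theta}$, since incumbents may exceed $\overline{\theta}$.
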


\begin{proof}[Proof of Lemma~\ref{lemma:the_regret_of_pi_star_depends_on_H_only_through_J_H}.]
To prove the lemma it suffices to show that 
\begin{equation}\label{preoperty:regret_of_pi_star_is_the_for_any_two_histories_with_the_same_set_of_known_products_attraction_parameters}
    \textsc{Reg}(\pi^{\star};\mathcal{H}^1) = \textsc{Reg}(\pi^{\star};\mathcal{H}^2) \text{ for any $\mathcal{H}^1$ and $\mathcal{H}^2$ with $J(\mathcal{H}^1) =J(\mathcal{H}^2)$}.
\end{equation}
We prove \eqref{preoperty:regret_of_pi_star_is_the_for_any_two_histories_with_the_same_set_of_known_products_attraction_parameters} by induction on the number of unknown entrants remaining given $\mathcal{H}^1$ and $\mathcal{H}^2$. 

\paragraph{Base case.} There are no unknown entrants given $\mathcal{H}^1$ and $\mathcal{H}^2$; thus both $\mathcal{H}^1$ and $\mathcal{H}^2$ are terminal. Therefore,  $\textsc{Reg}(\pi^{\star};\mathcal{H}^1) = \textsc{Reg}(\pi^{\star};\mathcal{H}^2) = 0$ by Lemma~\ref{lemma:any_policy_has_a_nonnegative_regret_starting_from_a_terminal_history_and_pi_star_as_a_zero_regret_starting_from_a_terminal_history}.

\paragraph{Induction step.}
Suppose \eqref{preoperty:regret_of_pi_star_is_the_for_any_two_histories_with_the_same_set_of_known_products_attraction_parameters} holds for any two histories $\tilde{\mathcal{H}}^1$ and $\tilde{\mathcal{H}}^2$ with $J(\tilde{\mathcal{H}}^1) = J(\tilde{\mathcal{H}}^2)$ and at most $k$ unknown entrants left given $\tilde{\mathcal{H}}^1$ and $\tilde{\mathcal{H}}^2$. Let $\mathcal{H}^1$ and $\mathcal{H}^2$ be such that $J(\mathcal{H}^1) = J(\mathcal{H}^2)$ and there are $k+1$ unknown entrants left given $\mathcal{H}^1$ and $\mathcal{H}^2$. We consider cases based on whether $\mathcal{H}^1$ is terminal. 
\begin{itemize}
    \item \textbf{$\mathcal{H}^1$ is terminal.} Then $\mathcal{H}^2$ is terminal as:
    $$\textsc{OPT}(\mathcal{H}^2) = \textsc{OPT}(\mathcal{H}^1) = \max\limits_{S \subseteq I(\mathcal{H}^1), |S| \leq c} \textsc{Rev}(S; \mathcal{H}^1) = \max\limits_{S \subseteq I(\mathcal{H}^2), |S| \leq c} \textsc{Rev}(S; \mathcal{H}^2) $$
    where the first and third equalities use $J(\mathcal{H}^1)= J(\mathcal{H}^2)$ and Lemma~\ref{lemma:for_any_history_H_the_expected_ex_post_optimum_the_maximum_revenue_from_known_products_and_the_minimum_epoch_regret_depend_on_H_only_through_J(H)}, and the second equality uses that $\mathcal{H}^1$ is terminal. Therefore, $\textsc{Reg}(\pi^{\star};\mathcal{H}^1) =\textsc{Reg}(\pi^{\star};\mathcal{H}^2) =0$ by Lemma~\ref{lemma:any_policy_has_a_nonnegative_regret_starting_from_a_terminal_history_and_pi_star_as_a_zero_regret_starting_from_a_terminal_history}.
    \item    
\textbf{$\mathcal{H}^1$ is not terminal} Thus, $\pi^{\star}$ offers at least one unknown entrant. Let $Z^1$ be the round at which the first unknown entrant is purchased under $\pi^{\star}$ given $\mathcal{H}^1$. By the induction hypothesis there exists a function $f$ such that $\textsc{Reg}(\pi^{\star};\mathcal{H}) = f(J(\mathcal{H}))$ for any history $\mathcal{H}$ with at most $k$ unknown entrants. The regret of $\pi^{\star}$ is decomposed as
\begin{align}\label{eq:one_epoch_lookahead_decomposition_pi_star_given_H1}
      \textsc{Reg}(\pi^{\star};\mathcal{H}^1) &= \min\limits_{S \in \mathcal{E}(\mathcal{H}^1)} \textsc{EpochReg}(S;\mathcal{H}^1) \nonumber \\
      &+ \expect\limits_{\mathcal{H}_{Z^1}, S_{Z^1}, Y_{Z^1}, w_{Y_{Z^1}}} [\textsc{Reg}(\pi^{\star};\mathcal{H}_{Z^1} \cup (S_{Z^1}, Y_{Z^1}, w_{Y_{Z^1}}))|\mathcal{H}_1 = \mathcal{H}^1] \nonumber \\
      &= \min\limits_{S \in \mathcal{E}(\mathcal{H}^1)} \textsc{EpochReg}(S;\mathcal{H}^1) + \expect\limits_{\mathcal{H}_{Z^1}, S_{Z^1}, Y_{Z^1}, w_{Y_{Z^1}}} [f(J(\mathcal{H}^1) \cup \{w_{Y_{Z^1}}\})|\mathcal{H}_1 = \mathcal{H}^1] \nonumber \\
      &= \min\limits_{S \in \mathcal{E}(\mathcal{H}^1)} \textsc{EpochReg}(S;\mathcal{H}^1) + \expect\limits_{w \sim \mathcal{F}} [f(J(\mathcal{H}^1) \cup \{w\})].
\end{align}
where the first equality uses Lemma~\ref{lemma:the_regret_of_pi_star_can_be_decomposed_into_the_first_epoch_regret_and_the_future_regret_general_form}, the second equality uses $\textsc{Reg}(\pi^{\star};\mathcal{H}_{Z^1} \cup (S_{Z^1}, Y_{Z^1}, w_{Y_{Z^1}})) =f(J(\mathcal{H}^1) \cup \{w_{Y_{Z^1}}\})$ as $\mathcal{H}_{Z^1} \cup (S_{Z^1}, Y_{Z^1}, w_{Y_{Z^1}})$ contains $k$ unknown entrants and $J(\mathcal{H}_{Z^1} \cup (S_{Z^1}, Y_{Z^1}, w_{Y_{Z^1}})) = J(\mathcal{H}^1) \cup \{w_{Y_{Z^1}}\}$, and the third equality uses that $w_{Y_{Z^1}}$ is independent of $\mathcal{H}_{Z^1}$, $S_{Z^1}$, and $Y_{Z^1}$ and that $w_{Y_{Z^1}} \sim \mathcal{F}$. Given that $\mathcal{H}^1$ is not terminal, $\mathcal{H}^2$ is not terminal. Due to an analogous argument for $\mathcal{H}^2$, 
\begin{equation}\label{eq:one_epoch_lookahead_decomposition_pi_star_given_H2}
     \textsc{Reg}(\pi^{\star};\mathcal{H}^2) = \min\limits_{S \in \mathcal{E}(\mathcal{H}^2)} \textsc{EpochReg}(S;\mathcal{H}^2) + \expect\limits_{w \sim \mathcal{F}} [f(J(\mathcal{H}^2) \cup \{w\})].
\end{equation}
We conclude that $\textsc{Reg}(\pi^{\star};\mathcal{H}^1) =  \textsc{Reg}(\pi^{\star};\mathcal{H}^2)$ because the first terms of the right-hand sides of \eqref{eq:one_epoch_lookahead_decomposition_pi_star_given_H1} and \eqref{eq:one_epoch_lookahead_decomposition_pi_star_given_H2} are equal (by Lemma~\ref{lemma:for_any_history_H_the_expected_ex_post_optimum_the_maximum_revenue_from_known_products_and_the_minimum_epoch_regret_depend_on_H_only_through_J(H)} as $J(\mathcal{H}^1) = J(\mathcal{H}^2)$) and the second terms are equal (as $J(\mathcal{H}^1) = J(\mathcal{H}^2)$).
\end{itemize}
\end{proof}

\subsection{Dependence of optimal values on known products' parameters (Lemma~\ref{lemma:for_any_history_H_the_expected_ex_post_optimum_the_maximum_revenue_from_known_products_and_the_minimum_epoch_regret_depend_on_H_only_through_J(H)}) }\label{appendix_subsec:proof_lemma_B8}

\begin{proof}[Proof of Lemma~\ref{lemma:for_any_history_H_the_expected_ex_post_optimum_the_maximum_revenue_from_known_products_and_the_minimum_epoch_regret_depend_on_H_only_through_J(H)}.]

Let $\mathcal{H}^1$ and $\mathcal{H}^2$ be two histories with $J(\mathcal{H}^1) = J(\mathcal{H}^2)$. To show the lemma it suffices to show that $\textsc{OPT}(\mathcal{H}^1) = \textsc{OPT}(\mathcal{H}^2)$, $\max\limits_{S \subseteq I(\mathcal{H}^1), |S| \leq c} \textsc{Rev}(S;\mathcal{H}^1) = \max\limits_{S \subseteq I(\mathcal{H}^2), |S| \leq c} \textsc{Rev}(S;\mathcal{H}^2)$, and $\min_{S \in \mathcal{E}(\mathcal{H}^1)} \textsc{EpochReg}(S;\mathcal{H}^1) =\min_{S \in \mathcal{E}(\mathcal{H}^2)} \textsc{EpochReg}(S;\mathcal{H}^2)$. A useful quantity in our analysis is a one-to-one mapping $a: \{1, \ldots, n\} \to \{1, \ldots, n\}$ which maps each known product under $\mathcal{H}^1$ to a corresponding known product under $\mathcal{H}^2$ with the same attraction parameter. Formally, $a$ is such that (i) $a$ maps each incumbent to itself, i.e, $a(i) = i$ for $i \in \{m+1, \ldots, n\}$, (ii) $a$ maps each known entrant in $\mathcal{H}^1$ to a known entrant in $\mathcal{H}^2$, so that $w_i(\mathcal{H}^1) = w_{a(i)}(\mathcal{H}^2)$ for each known entrant $i$ under $\mathcal{H}^1$. Let $a(S) = \{a(i): i \in S\}$ be the image of a set $S \subseteq \mathcal{N}$ under $a$. Let $U(\mathcal{H})$ be the set of unknown entrants given a history $\mathcal{H}$ and $U^i = U(\mathcal{H}^i)$ for $i \in \{1,2\}$ as a shorthand.

\paragraph{Establishing $\textsc{OPT}(\mathcal{H}^1) = \textsc{OPT}(\mathcal{H}^2)$.} For a set of realizations $\{v_j\}_{j \in U^2}$ for the attraction parameters of the unknown entrants under $\mathcal{H}^2$, we couple the event $\mathcal{E}^2(\{v_j\}_{j \in U^2}) = \{w_j = v_j \text{ for } j \in U^2\}$ that unknown entrant $j$ has attraction parameter $v_j$ given $\mathcal{H}^2$ to the event $\mathcal{E}^1(\{v_{a(i)}\}_{i \in U^1}) = \{w_{i} = v_{a(i)} \text{ for } i \in U^1\}$ that unknown entrant $i$ has the attraction parameter of its image under $a$, $v_{a(i)}$, given $\mathcal{H}^1$. Let $\textsc{OPT}^1(\{v_{a(i)}\}_{i \in U^1})$ be the optimum given $\mathcal{E}^1(\{v_{a(i)}\}_{i \in U^1})$ and $\mathcal{H}^1$ and $\textsc{OPT}^2(\{v_j\}_{j \in U^2})$ the optimum $\mathcal{E}^2(\{v_j\}_{j \in U^2})$ and $\mathcal{H}^2$. We show that $\textsc{OPT}^1(\{v_{a(i)}\}_{i \in U^1}) = \textsc{OPT}^2(\{v_j\}_{j \in U^2})$. For a set $S$ and a history $\mathcal{H}$, let $S^{\textsc{known}}(\mathcal{H})$ and  $S^{\textsc{unknown}}(\mathcal{H})$ be the subset of attraction parameters of the known and unknown products in $S$ given $\mathcal{H}$. For a history $\mathcal{H}$ and realization for the attraction parameters of the unknown entrants $\{v_i\}_{i \in U(\mathcal{H})}$, let $$\textsc{Rev}(S|\mathcal{H},\{w_i = v_i, i \in U(\mathcal{H})\}) = \frac{\sum_{i \in S^{\textsc{known}}(\mathcal{H})} w_i(\mathcal{H}) + \sum_{i \in S^{\textsc{unknown}}(\mathcal{H})} v_i}{\sum_{i \in S^{\textsc{known}}(\mathcal{H})} w_i(\mathcal{H}) + \sum_{i \in S^{\textsc{unknown}}(\mathcal{H})} v_i + 1}$$
be the expected revenue of $S$ given $\mathcal{H}$ and that $w_i = v_i$ for each unknown entrant $i$. The revenue of $S$ given history $\mathcal{H}^1$ and the event $\mathcal{E}^1(\{v_{a(i)}\}_{i \in U^1})$ equals the revenue of $a(S)$ given $\mathcal{E}^2(\{v_j\}_{j \in U^2})$
\begin{align}\label{eq:rev_S_given_H_1_and_v_ai_equals_rev_a_S_given_H_2_and_vi}
    \textsc{Rev}(S|\mathcal{H}^1,\{w_i = v_{a(i)}, i \in U^1\}) &= \frac{\sum_{i \in S^{\textsc{known}}(\mathcal{H}^1)} w_i(\mathcal{H}^1) + \sum_{i \in S^{\textsc{unknown}}(\mathcal{H}^1)} v_{a(i)}}{\sum_{i \in S^{\textsc{known}}(\mathcal{H}^1)} w_i(\mathcal{H}^1) + \sum_{i \in S^{\textsc{unknown}}(\mathcal{H}^1)} v_{a(i)} + 1} \nonumber \\
    &= \frac{\sum_{i \in S^{\textsc{known}}(\mathcal{H}^1)} w_{a(i)}(\mathcal{H}^2) + \sum_{i \in S^{\textsc{unknown}}(\mathcal{H}^1)} v_{a(i)}}{\sum_{i \in S^{\textsc{known}}(\mathcal{H}^1)} w_{a(i)}(\mathcal{H}^2)  + \sum_{i \in S^{\textsc{unknown}}(\mathcal{H}^1)} v_{a(i)} + 1} \nonumber\\
    &= \frac{\sum_{j \in (a(S))^{\textsc{known}}(\mathcal{H}^2)} w_{j}(\mathcal{H}^2) + \sum_{j \in (a(S))^{\textsc{unknown}}(\mathcal{H}^2)} v_{j}}{\sum_{j \in (a(S))^{\textsc{known}}(\mathcal{H}^2)} w_{j}(\mathcal{H}^2)  + \sum_{j \in (a(S))^{\textsc{unknown}}(\mathcal{H}^2)} v_{j} + 1} \nonumber\\
    &=  \textsc{Rev}(a(S)|\mathcal{H}^2,\{w_j = v_{j}, j \in U^2\})
\end{align}
where the second equality uses that $w_i(\mathcal{H}^1) = w_{a(i)}(\mathcal{H}^2)$ for known $i$ under $\mathcal{H}^1$ and the third equality uses that $a$ maps the unknown products in $S$ given $\mathcal{H}^1$ to the unknown products in $a(S)$ given $\mathcal{H}^2$. Thus, $\textsc{OPT}^1$ can be expressed as
\begin{align}\label{eq:optima_given_paired_events_are_equal}
    \textsc{OPT}^1(\{v_{a(i)}\}_{i \in U^1}) &= \max\limits_{S \subseteq \mathcal{N}, |S| \leq c} \textsc{Rev}(S|\mathcal{H}^1,\{w_i = v_{a(i)}, i \in U^1\}) \nonumber \\
    &= \max\limits_{S \subseteq \mathcal{N}, |S| \leq c}  \textsc{Rev}(a(S)|\mathcal{H}^2,\{w_j = v_{j}, j \in U^2\}) \nonumber\\
    &= \max\limits_{S \subseteq \mathcal{N}, |S| \leq c}  \textsc{Rev}(S|\mathcal{H}^2,\{w_j = v_{j}, j \in U^2\}) = \textsc{OPT}^2(\{v_j\}_{j \in U^2}).
\end{align}
where the second equality uses \eqref{eq:rev_S_given_H_1_and_v_ai_equals_rev_a_S_given_H_2_and_vi}, and the third equality uses that $a$ is a one-to-one mapping.

Letting $f(\cdot)$ be the density function of $\mathcal{F}$, the expected ex-post optimum given $\mathcal{H}^1$ is expressed as:
\begin{align*}
    \textsc{OPT}(\mathcal{H}^1) &= \int \textsc{OPT}^1(\{v_{a(i)}\}_{i \in U^1}) \Big(\prod_{i \in U^1} f(v_{a(i)}) d v_{a(i)} \Big)= \int \textsc{OPT}^2(\{v_j\}_{j \in U^2}) \Big(\prod_{i \in U^1} f(v_{a(i)}) d v_{a(i)} \Big)\\
    &= \int \textsc{OPT}^2(\{v_j\}_{j \in U^2}) \Big(\prod_{j \in U^2} f(v_{j}) d v_{j} \Big)= \textsc{OPT}(\mathcal{H}^2).
\end{align*}
where the first equality expresses the expected ex-post optimum over all realizations of the unknown entrants and uses that each the entrants' attraction parameters are independent, the second equality uses \eqref{eq:optima_given_paired_events_are_equal}, the the third equality uses the fact that $a$ maps the unknown entrants under $\mathcal{H}^1$ to the unknown entrants under $\mathcal{H}^2$. 

\paragraph{Establishing $\max\limits_{S \subseteq I(\mathcal{H}^1), |S| \leq c} \textsc{Rev}(S;\mathcal{H}^1) = \max\limits_{S \subseteq I(\mathcal{H}^2), |S| \leq c} \textsc{Rev}(S;\mathcal{H}^2)$.} For any set of known products $S \subseteq I(\mathcal{H}_1)$, using that $w_i(\mathcal{H}^1) = w_{a(i)}(\mathcal{H}^2)$ for all $i \in S$:
$$\textsc{Rev}(S;\mathcal{H}^1) = \frac{\sum_{i \in S} w_i(\mathcal{H}^1)}{\sum_{i \in S} w_i(\mathcal{H}^1) + 1} = \frac{\sum_{i \in S} w_{a(i)}(\mathcal{H}^2)}{\sum_{i \in S} w_{a(i)}(\mathcal{H}^2) + 1}  = \frac{\sum_{j \in a(S)} w_{j}(\mathcal{H}^2)}{\sum_{j \in a(S)} w_{j}(\mathcal{H}^2) + 1}  = \textsc{Rev}(a(S);\mathcal{H}^2).$$
Therefore, using that $a$ maps the known products in $\mathcal{H}^1$ to the known products in $\mathcal{H}^2$:
$$\max\limits_{S \subseteq I(\mathcal{H}^1), |S| \leq c} \textsc{Rev}(S; \mathcal{H}^1) = \max\limits_{S \subseteq I(\mathcal{H}^1), |S| \leq c} \textsc{Rev}(a(S); \mathcal{H}^2) =  \max\limits_{S \subseteq I(\mathcal{H}^2), |S| \leq c} \textsc{Rev}(S; \mathcal{H}^2).$$

\paragraph{Establishing $\min_{S \in \mathcal{E}(\mathcal{H}^1)} \textsc{EpochReg}(S;\mathcal{H}^1) =\min_{S \in \mathcal{E}(\mathcal{H}^2)} \textsc{EpochReg}(S;\mathcal{H}^2)$.} 
For any set containing at least one unknown entrant $S \in \mathcal{E}(\mathcal{H})$, $\textsc{EpochReg}(S;\mathcal{H}) = \textsc{OPT}(\mathcal{H}^1) \tau(S;\mathcal{H}) - r(S;\mathcal{H})$.
Moreover,
\begin{align*}
    \tau(S; \mathcal{H}^1) &= \frac{\sum_{i \in S}w_i(\mathcal{H}^1) + w_0}{\sum_{i \in S^{\textsc{unk}}(\mathcal{H}^1)}w_i(\mathcal{H}^1)} = \frac{\sum_{i \in S}w_{a(i)}(\mathcal{H}^2) + w_0}{\sum_{i \in S^{\textsc{unk}}(\mathcal{H}^1)}w_{a(i)}(\mathcal{H}^2)}= \frac{\sum_{j \in a(S)}w_{j}(\mathcal{H}^2) + w_0}{\sum_{j \in (a(S))^{\textsc{unk}}(\mathcal{H}^2)}w_{j}(\mathcal{H}^2)} = \tau(a(S);\mathcal{H}^2),
\end{align*}
\begin{align*}
    r(S; \mathcal{H}^1) &= \frac{\sum_{i \in S}w_i(\mathcal{H}^1)}{\sum_{i \in S^{\textsc{unk}}(\mathcal{H}^1)}w_i(\mathcal{H}^1)} = \frac{\sum_{i \in S}w_{a(i)}(\mathcal{H}^2) }{\sum_{i \in S^{\textsc{unk}}(\mathcal{H}^1)}w_{a(i)}(\mathcal{H}^2)}= \frac{\sum_{j \in a(S)}w_{j}(\mathcal{H}^2) }{\sum_{j \in (a(S))^{\textsc{unk}}(\mathcal{H}^2)}w_{j}(\mathcal{H}^2)} = r(a(S);\mathcal{H}^2).
\end{align*}
Given that $\textsc{OPT}(\mathcal{H}^1) = \textsc{OPT}(\mathcal{H}^2)$, this yields $\textsc{EpochReg}(S;\mathcal{H}^1) = \textsc{EpochReg}(a(S);\mathcal{H}^2)$. As a result, $$\min_{S \in \mathcal{E}(\mathcal{H}^1)} \textsc{EpochReg}(S;\mathcal{H}^1)  = \min_{S \in \mathcal{E}(\mathcal{H}^1)} \textsc{EpochReg}(a(S);\mathcal{H}^2) = \min_{S \in \mathcal{E}(\mathcal{H}^2)} \textsc{EpochReg}(S;\mathcal{H}^2)$$
where the last equality uses that $a$ is a one-to-one mapping from the set containing at least one unknown entrant given $\mathcal{H}^1$ to the sets containing at least unknown entrant given $\mathcal{H}^2$.
\end{proof}

\subsection{Expanded one-epoch-lookahead regret decomposition (Lemma~\ref{lemma:the_regret_of_pi_star_can_be_decomposed_into_the_first_epoch_regret_and_the_future_regret_general_form})}\label{appendix_subsec:proof_lemma_B9}

\begin{proof}[Proof of Lemma~\ref{lemma:the_regret_of_pi_star_can_be_decomposed_into_the_first_epoch_regret_and_the_future_regret_general_form}.]
Recall that $Z^{\mathcal{H}}_{\star}$ is the random variable of the first round at which an unknown entrant is purchased if $\pi^{\star}$ starts from $\mathcal{H}$. Given that $\pi^{\star}$ offers an unknown entrant for every round until $Z^{\mathcal{H}}_{\star}$ and an unknown entrant is purchased with probability at least $\frac{\tildew}{(c+1) w_{\text{max}}}$ where $w_{\text{max}} = \max\{w_i(\mathcal{H}): i \in \mathcal{N} \cup \{0\}\}$ at each round until $Z^{\mathcal{H}}_{\star}$, $\expect[Z^{\mathcal{H}}_{\star}] < \frac{(c+1) w_{\text{max}}}{\tildew}$. Thus, $\prob[Z^{\mathcal{H}}_{\star} < \infty] = 1$. Therefore, 
\begin{equation*}\label{eq:regret_pi_star_given_H_decomposition_until_Z_H_star}
    \textsc{Reg}(\pi^{\star};\mathcal{H}) = \underbrace{\expect \Bigg[\sum_{t=1}^{Z^{\mathcal{H}}_{\star}} \textsc{OPT}(\mathcal{H}_t) - \textsc{Rev}(S_t;\mathcal{H}_t) |\mathcal{H}_1 = \mathcal{H}\Bigg]}_{(\star)} + \expect[\textsc{Reg}(\pi^{\star};\mathcal{H}_{Z^{\mathcal{H}}_{\star}} \cup (S_{Z^{\mathcal{H}}_{\star}}, Y_{Z^{\mathcal{H}}_{\star}}, w_{Y_{Z^{\mathcal{H}}_{\star}}}))|\mathcal{H}_1 = \mathcal{H}].
\end{equation*}
To conclude the lemma we show below $(\star) = \min_{S \in S(\mathcal{H})} \textsc{EpochReg}(S;\mathcal{H})$. For all rounds $t \leq Z_{\star}^{\mathcal{H}}$  $\mathcal{H}_t$ and $\mathcal{H}$ contain the same set of known products and the same corresponding attraction parameters because no unknown entrant has been purchased. As a result, $\textsc{Rev}(S;\mathcal{H}_t) = \textsc{Rev}(S;\mathcal{H})$ for all assortments $S$ and rounds $t \leq Z^{\mathcal{H}}_{\star}$. Additionally, $\textsc{OPT}(\mathcal{H}_t) = \textsc{OPT}(\mathcal{H})$, $\tau(S;\mathcal{H}_t) = \tau(S;\mathcal{H})$, and $r(S;\mathcal{H}_t) = r(S;\mathcal{H})$ and thus $\textsc{EpochReg}(S;\mathcal{H}_t) = \textsc{EpochReg}(S;\mathcal{H})$ for all assortments $S$ and rounds $ t \leq Z^{\mathcal{H}}_{\star}$. 

At any round $ t \leq Z^{\mathcal{H}}_{\star}$, $\pi^{\star}$ offers an assortment in $\argmin_{S \in S(\mathcal{H})} \textsc{EpochReg}(S;\mathcal{H})$. Letting $N(S)$ be the number of times assortment $S$ is chosen until $Z^{\mathcal{H}}_{\star}$, $q(S;\mathcal{H}) = \frac{\sum_{i \in S \setminus I(\mathcal{H})} w_i(\mathcal{H})}{\sum_{i \in S} w_i(\mathcal{H}) + 1}$ be the the probability that an unknown entrant is purchased given $S$, and $\mathcal{A}(\mathcal{H}) = \argmin_{S \in S(\mathcal{H})} \textsc{EpochReg}(S;\mathcal{H})$ be the set of assortments minimizing the epoch regret, $(\star)$ equals 
\begin{align}\label{eq:expressing_regret_until_first_unknown_entrant_purchase}
    \expect \Bigg[\sum_{t=1}^{Z^{\mathcal{H}}_{\star}} \textsc{OPT}(\mathcal{H}_t) - \textsc{Rev}(S_t;\mathcal{H}_t) |\mathcal{H}_1 = \mathcal{H}\Bigg] 
    &=  \expect \Bigg[\sum_{t=1}^{Z^{\mathcal{H}}_{\star}} \textsc{OPT}(\mathcal{H}) - \textsc{Rev}(S_t;\mathcal{H}) |\mathcal{H}_1 = \mathcal{H}\Bigg] \nonumber\\
    &=\expect \Bigg[\sum_{S \in \mathcal{A}(\mathcal{H}) } (\textsc{OPT}(\mathcal{H}) - \textsc{Rev}(S;\mathcal{H})) N(S) |\mathcal{H}_1 = \mathcal{H}\Bigg] \nonumber \\
    &= \sum_{S \in \mathcal{A}(\mathcal{H}) } (\textsc{OPT}(\mathcal{H}) - \textsc{Rev}(S;\mathcal{H})) \expect[N(S) |\mathcal{H}_1 = \mathcal{H}] \nonumber\\
    &=\sum_{S \in \mathcal{A}(\mathcal{H}) } \frac{(\textsc{OPT}(\mathcal{H}) - \textsc{Rev}(S;\mathcal{H}))}{q(S;\mathcal{H})} q(S;\mathcal{H})\expect[N(S) |\mathcal{H}_1 = \mathcal{H}] \nonumber\\
    &= \min_{S \in S(\mathcal{H})} \textsc{EpochReg}(S;\mathcal{H}) \Big( \sum_{S \in \mathcal{A}(\mathcal{H}) } q(S;\mathcal{H})\expect[N(S) |\mathcal{H}_1 = \mathcal{H}] \Big) \nonumber \\
    &= \min_{S \in S(\mathcal{H})} \textsc{EpochReg}(S;\mathcal{H}) 
\end{align}
The first equality uses $\textsc{OPT}(\mathcal{H}_t) = \textsc{OPT}(\mathcal{H})$ and $\textsc{Rev}(S_t;\mathcal{H}_t) = \textsc{Rev}(S_t; \mathcal{H})$ for all $t \leq Z^{\mathcal{H}}_{\star}$. The second equality uses that $\pi^{\star}$ offers only assortments in $\mathcal{A}(\mathcal{H})$. The fifth equality uses that 
\begin{itemize}
    \item  $\textsc{EpochReg}(S;\mathcal{H}) = \frac{\textsc{OPT}(\mathcal{H}) - \textsc{Rev}(S;\mathcal{H})}{q(S;\mathcal{H})}$ as $\frac{1}{q(S;\mathcal{H})}$ is the expected time until an entrant is purchased and
    \item $\textsc{EpochReg}(S;\mathcal{H}) = \min_{S \in S(\mathcal{H})} \textsc{EpochReg}(S;\mathcal{H})$ for all $S \in \mathcal{A}(\mathcal{H})$.
\end{itemize}
 The sixth equality uses that $\sum_{S \in \mathcal{A}(\mathcal{H}) } q(S;\mathcal{H})\expect[N(S) |\mathcal{H}_1 = \mathcal{H}]  = 1$ as $\sum_{S \in \mathcal{A}(\mathcal{H}) } q(S;\mathcal{H})\expect[N(S) |\mathcal{H}_1 = \mathcal{H}]$ is the expected number of rounds in which an unknown entrant is purchased until $Z^{\mathcal{H}}_{\star}$ as assortment $S$ is offered $\expect[N(S) |\mathcal{H}_1 = \mathcal{H}]$ rounds in expectation and an unknown entrant is purchased independently with probability $q(S;\mathcal{H})$ every time $S$ is offered. As a result, $(\star) = \min_{S \in S(\mathcal{H})} \textsc{EpochReg}(S;\mathcal{H})$ concluding the proof.
\end{proof}

\subsection{One-epoch-lookahead decomposition of regret of optimal policy (Lemma~\ref{lemma:the_regret_of_pi_star_can_be_decomposed_into_the_first_epoch_regret_and_the_future_regret})}\label{appendix_subsec:proof_lemma_one_epoch_lookahead_decomposition_optimal_policy}
\begin{proof}[Proof of Lemma~\ref{lemma:the_regret_of_pi_star_can_be_decomposed_into_the_first_epoch_regret_and_the_future_regret}.]
The regret of $\pi^{\star}$ given a history $\mathcal{H}$ can be decomposed as
\begin{align*}
    \textsc{Reg}(\pi^{\star},J(\mathcal{H})) &= \textsc{Reg}(\pi^{\star};\mathcal{H}) \\
    &= \min\limits_{S \in \mathcal{E}(\mathcal{H})} \textsc{EpochReg}(S;\mathcal{H}) \nonumber + \expect[\textsc{Reg}(\pi^{\star};\mathcal{H}_{Z^{\mathcal{H}}_{\star}} \cup (S_{Z^{\mathcal{H}}_{\star}}, Y_{Z^{\mathcal{H}}_{\star}}, w_{Y_{Z^{\mathcal{H}}_{\star}}}))|\mathcal{H}_1 = \mathcal{H}]\\
    &=  \min\limits_{S \in \mathcal{E}(\mathcal{H})} \textsc{EpochReg}(S;\mathcal{H}) \nonumber + \expect_{w_{Y_{Z^{\mathcal{H}}_{\star}}} \sim \mathcal{F}}[\textsc{Reg}(\pi^{\star}, J(\mathcal{H}) \cup \{w_{Y_{Z^{\mathcal{H}}_{\star}}}\})|\mathcal{H}_1 = \mathcal{H}]\\
    &= \min\limits_{S \in \mathcal{E}(\mathcal{H})} \textsc{EpochReg}(S;\mathcal{H}) \nonumber + \expect_{w \sim \mathcal{F}}[\textsc{Reg}(\pi^{\star}, J(\mathcal{H}) \cup \{w\})|\mathcal{H}_1 = \mathcal{H}]
\end{align*}
where the first equality follows from Lemma~\ref{lemma:the_regret_of_pi_star_depends_on_H_only_through_J_H}; the second equality follows by Lemma~\ref{lemma:the_regret_of_pi_star_can_be_decomposed_into_the_first_epoch_regret_and_the_future_regret_general_form}; the third equality uses Lemma~\ref{lemma:the_regret_of_pi_star_depends_on_H_only_through_J_H}, the fact that $w_{Y_{Z^{\mathcal{H}}_{\star}}}$ is independent of $\mathcal{H}_{Z^{\mathcal{H}}_{\star}}$,$S_{Z^{\mathcal{H}}_{\star}}$, and $ Y_{Z^{\mathcal{H}}_{\star}}$, and $w_{Y_{Z^{\mathcal{H}}_{\star}}} \sim \mathcal{F}$. 
\end{proof}

\subsection{Static-assortment policies minimize the epoch regret (Lemma~\ref{lemma:epoch_regret_of_any_policy_is_at_least_the_minimum_epoch_regret_over_fixed_assortments})}\label{appendix_subsec:proof_lemma_epoch_regret_of_any_policy_is_at_least_minimum_epoch_regret_of_static_assortment_policies}

\begin{proof}[Proof of Lemma~\ref{lemma:epoch_regret_of_any_policy_is_at_least_the_minimum_epoch_regret_over_fixed_assortments}.]
Let $Z = Z^{\pi, \mathcal{H}}$ for convenience. Similar to our proof of Lemma~\ref{lemma:the_regret_of_pi_star_can_be_decomposed_into_the_first_epoch_regret_and_the_future_regret_general_form}, given that $\mathcal{H}_t$ and $\mathcal{H}$ contain the same set of known products and their attraction parameters for $t \leq Z$, $\textsc{OPT}(\mathcal{H}_t) = \textsc{OPT}(\mathcal{H})$ and $\textsc{Rev}(S;\mathcal{H}_t) = \textsc{Rev}(S;\mathcal{H})$ for $t \leq Z$. Letting $N(S)$ be the number of rounds in which assortment $S$ is offered until round $Z$, the regret of $\pi$ in the first epoch can be expressed as
\begin{align}\label{eq:regret_until_Z_expression_as_sum_over_assortments}
  \textsc{EpochReg}^{\pi}(\mathcal{H}) &=\expect\Bigg[\sum_{t=1}^{Z}\textsc{OPT}(\mathcal{H}_t) - \textsc{Rev}(S_t, \mathcal{H}_t)|\mathcal{H}_1 = \mathcal{H}\Bigg] \nonumber\\
  &=\expect\Bigg[\sum\limits_{S \subseteq \mathcal{N}, |S| \leq c} (\textsc{OPT}(\mathcal{H}) - \textsc{Rev}(S, \mathcal{H})) N(S) |\mathcal{H}_1 = \mathcal{H}\Bigg] \nonumber\\
    &= \sum\limits_{S \subseteq \mathcal{N}, |S| \leq c} (\textsc{OPT}(\mathcal{H}) - \textsc{Rev}(S, \mathcal{H})) \expect[N(S)^{\pi, \mathcal{H}}|\mathcal{H}_1 = \mathcal{H}].
\end{align}

\paragraph{Case 1: $\expect[Z] = +\infty$.} Given that $Z = \sum\limits_{S \subseteq \mathcal{N}, |S| \leq c} N(S)$, there exists some $\tilde{S}$ with \begin{equation}\label{eq:infinite_expected_number_of_visits_tilde_S}
    \expect[N(\tilde{S})|\mathcal{H}_1 = \mathcal{H}] = +\infty,
\end{equation} 
For any $S \in \mathcal{E}(\mathcal{H})$ it holds that $\expect[N(S)|\mathcal{H}_1 = \mathcal{H}\Big] \leq \frac{\sum_{i\in S} w_i(\mathcal{H}) + w_0}{\tildew}$ as an unknown entrant is purchased with probability at least $\frac{h(\mathcal{H})}{\sum_{i\in S} w_i(\mathcal{H}) + w_0}$ every time $S$ is offered. As a result $\tilde{S}$ contains only known products, i.e., $\tilde{S} \not \in \mathcal{E}(\mathcal{H})$. Given that $\mathcal{H}$ is non-terminal, $\textsc{OPT}(\mathcal{H}) - \textsc{Rev}(\tilde{S}, \mathcal{H}) > 0$, which together with \eqref{eq:regret_until_Z_expression_as_sum_over_assortments} and \eqref{eq:infinite_expected_number_of_visits_tilde_S} implies $ \textsc{EpochReg}^{\pi}(\mathcal{H}) = +\infty$. 

\paragraph{Case 2: $\expect[Z] < +\infty$.} Let $q(S;\mathcal{H}) = \frac{\sum_{i \in S \setminus I(\mathcal{H})} w_i(\mathcal{H})}{\sum_{i \in S} w_i(\mathcal{H}) + w_0}$ be the probability that an unknown entrant is purchased given $S$. Using \eqref{eq:regret_until_Z_expression_as_sum_over_assortments}, the regret of $\pi$ during the first epoch can be lower bounded as
\begin{align}\label{ineq:lower_bound_on_the_regret_of_the_first_epoch_of_any_policy}
    \textsc{EpochReg}^{\pi}(\mathcal{H}) &= \sum\limits_{S \subseteq \mathcal{N}, |S| \leq c} (\textsc{OPT}(\mathcal{H}) - \textsc{Rev}(S, \mathcal{H})) \expect[N(S)|\mathcal{H}_1 = \mathcal{H}] \nonumber\\
    &\geq \sum\limits_{S \in \mathcal{E}(\mathcal{H})} \frac{(\textsc{OPT}(\mathcal{H}) - \textsc{Rev}(S, \mathcal{H}))}{q(S;\mathcal{H})}q(S;\mathcal{H})\expect[N(S)|\mathcal{H}_1 = \mathcal{H}] \nonumber\\
    &\geq  \min \limits_{S \in \mathcal{E}(\mathcal{H})} \Bigg\{\frac{\textsc{OPT}(\mathcal{H}) - \textsc{Rev}(S; \mathcal{H})}{q(S;\mathcal{H})}\Bigg\} \Bigg[\sum\limits_{S \in \mathcal{E}(\mathcal{H})}q(S;\mathcal{H})\expect[N(S)|\mathcal{H}_1 = \mathcal{H}] \Bigg] \nonumber\\
    &= \min \limits_{S \in \mathcal{E}(\mathcal{H})} \Bigg\{\frac{\textsc{OPT}(\mathcal{H}) - \textsc{Rev}(S; \mathcal{H})}{q(S;\mathcal{H})} \Bigg\}=  \min \limits_{S \in \mathcal{E}(\mathcal{H})} \textsc{EpochReg}(S;\mathcal{H}).
\end{align}
The first inequality multiplies and divides each term by $q(S;\mathcal{H})$ and lower bounds the terms corresponding to $S \not \in \mathcal{E}(\mathcal{H})$ with $0$
because $\textsc{OPT}(\mathcal{H}) - \textsc{Rev}(S;\mathcal{H}) > 0$ for $S \not \in \mathcal{E}(\mathcal{H})$ as $\mathcal{H}$ is non-terminal. Regarding the second equality, note that $\sum\limits_{S \in \mathcal{E}(\mathcal{H})}q(S;\mathcal{H})\expect[N(S)|\mathcal{H}_1 = \mathcal{H}]$ is the expected number of times an entrant is purchased during the epoch since any assortment $S$ is offered $\expect[N^{\pi}(S)|\mathcal{H}_1 = \mathcal{H}]$ rounds in expectation and an entrant is purchased independently with probability $q(S;\mathcal{H})$ each time $S$ is offered. Given that an unknown entrant is always purchased (as $\expect[Z] < +\infty$), $\sum\limits_{S \in \mathcal{E}(\mathcal{H})}q(S;\mathcal{H})\expect[N(S)|\mathcal{H}_1 = \mathcal{H}] = 1$.
The last equality holds as $\textsc{EpochReg}(S;\mathcal{H}) = \tau(S;\mathcal{H}) \textsc{OPT}(\mathcal{H})-r(S;\mathcal{H})$; 
$\tau(S;\mathcal{H}) = \frac{1}{q(S;\mathcal{H})}$; $r(S;\mathcal{H}) = \frac{\textsc{Rev}(S;\mathcal{H})}{q(S;\mathcal{H})}$.\end{proof}

\subsection{Optimal policy regret is bounded independently of the history (Lemma~\ref{lemma:for_any_history_with_k_unknown_entrants_the_regret_of_pi_star_is_lower_bounded_and_upper_bounded_by_quantities_which_depend_on_k})}\label{appendix_subsec:regret_of_optimal_policy_is_bounded_below_and_above}

\begin{proof}[Proof of Lemma~\ref{lemma:for_any_history_with_k_unknown_entrants_the_regret_of_pi_star_is_lower_bounded_and_upper_bounded_by_quantities_which_depend_on_k}]
We show by induction on $k$ that for any history $\mathcal{H}$ with $k$ unknown entrants remaining, 
\begin{equation}\label{ineq:upper_and_lower_bound_claim_for_any_k}
    \textsc{Reg}(\pi^{\star};\mathcal{H}) \geq -k \quad \text{ and } \quad \textsc{Reg}(\pi^{\star};\mathcal{H}) \leq \frac{k}{\tildew}
\end{equation}

\paragraph{Base of induction.} If $k = 0$, $\mathcal{H}$ is terminal and Lemma~\ref{lemma:any_policy_has_a_nonnegative_regret_starting_from_a_terminal_history_and_pi_star_as_a_zero_regret_starting_from_a_terminal_history} implies that $\textsc{Reg}(\pi^{\star};\mathcal{H}) = 0$. 

\paragraph{Induction step. } Suppose \eqref{ineq:upper_and_lower_bound_claim_for_any_k} holds for $k$ and we will show it for $k+1$. If $\mathcal{H}$ is terminal, then by Lemma~\ref{lemma:the_regret_of_pi_star_depends_on_H_only_through_J_H}, $\textsc{Reg}(\pi^{\star};\mathcal{H}) = 0$ which satisfies the conditions. Suppose $\mathcal{H}$ is not terminal. Then by Lemma~\ref{lemma:the_regret_of_pi_star_can_be_decomposed_into_the_first_epoch_regret_and_the_future_regret},
\begin{equation*}
      \textsc{Reg}(\pi^{\star};\mathcal{H}) = \min\limits_{S \subseteq \mathcal{E}(\mathcal{H})} \textsc{EpochReg}(S;\mathcal{H}) + \underbrace{\expect[\textsc{Reg}(\pi^{\star};\mathcal{H}_{Z^{\mathcal{H}}_{\star}} \cup (S_{Z^{\mathcal{H}}_{\star}}, Y_{Z^{\mathcal{H}}_{\star}}, w_{Z^{\mathcal{H}}_{\star}}))|\mathcal{H}_1 = \mathcal{H}]}_{(\star)}.
\end{equation*}
By the induction hypothesis \eqref{ineq:upper_and_lower_bound_claim_for_any_k} , $(\star) \geq -k$ and  $(\star)\leq \frac{k}{\tildew}$. It remains to show that $  \textsc{Reg}(\pi^{\star};\mathcal{H}) \leq \frac{1}{\tildew}$ which we establish below in \eqref{ineq:epoch_reg_upper_bound} and $  \textsc{Reg}(\pi^{\star};\mathcal{H}) \geq -1$ which we establish below in \eqref{ineq:epoch_reg_lower_bound}.

For $S \in \mathcal{E}(\mathcal{H})$, let $S^{\textsc{k}}$ and $S^{\textsc{u}}$ be the set of known and unknown products in $S$ given $\mathcal{H}$. Then
$$\textsc{EpochReg}(S;\mathcal{H}) = \underbrace{\Bigg(\textsc{OPT}(\mathcal{H}) - \frac{\sum_{i \in S^{\textsc{k}}} w_i(\mathcal{H}) + \sum_{i \in S^{\textsc{u}}} w_i(\mathcal{H})}{\sum_{i \in S^{\textsc{k}}} w_i(\mathcal{H}) + \sum_{i \in S^{\textsc{u}}} w_i(\mathcal{H}) + 1} \Bigg)}_{\text{regret per round of $S$}} \underbrace{\frac{\sum_{i \in S^{\textsc{k}}} w_i(\mathcal{H}) + \sum_{i \in S^{\textsc{u}}} w_i(\mathcal{H}) + 1}{\sum_{i \in S^{\textsc{u}}} w_i(\mathcal{H})}}_{\text{expected number of rounds in an epoch}}.$$
Using that $\textsc{OPT}(\mathcal{H}) \leq 1$, the epoch regret is upper bounded as
\begin{align}\label{ineq:epoch_reg_upper_bound}
    \textsc{EpochReg}(S;\mathcal{H}) &\leq 
    \Bigg(1 - \frac{\sum_{i \in S^{\textsc{k}}} w_i(\mathcal{H}) + \sum_{i \in S^{\textsc{u}}} w_i(\mathcal{H})}{\sum_{i \in S^{\textsc{k}}} w_i(\mathcal{H}) + \sum_{i \in S^{\textsc{u}}} w_i(\mathcal{H}) + 1} \Bigg) \frac{\sum_{i \in S^{\textsc{k}}} w_i(\mathcal{H}) + \sum_{i \in S^{\textsc{u}}} w_i(\mathcal{H}) + 1}{\sum_{i \in S^{\textsc{u}}} w_i(\mathcal{H})} \nonumber \\
    &= \frac{1}{\cancel{\sum_{i \in S^{\textsc{k}}} w_i(\mathcal{H}) + \sum_{i \in S^{\textsc{u}}} w_i(\mathcal{H}) + 1}}\frac{\cancel{\sum_{i \in S^{\textsc{k}}} w_i(\mathcal{H}) + \sum_{i \in S^{\textsc{u}}} w_i(\mathcal{H}) + 1}}{\sum_{i \in S^{\textsc{u}}} w_i(\mathcal{H})} \leq \frac{1}{\tildew}.
\end{align}
where the last inequality uses that $w_i(\mathcal{H}) \geq \tildew$ for all $i \in S^{\textsc{u}}$. 

Using that $\textsc{OPT}(\mathcal{H}) \geq \frac{\sum_{i \in S^{\textsc{k}}} w_i(\mathcal{H}) }{\sum_{i \in S^{\textsc{k}}} w_i(\mathcal{H}) + 1}$, the epoch regret is lower bounded as
\begin{align}\label{ineq:epoch_reg_lower_bound}
    &\textsc{EpochReg}(S;\mathcal{H}) \nonumber\\
    &\geq \Bigg(\frac{\sum_{i \in S^{\textsc{k}}} w_i(\mathcal{H}) }{\sum_{i \in S^{\textsc{k}}} w_i(\mathcal{H}) + 1} - \frac{\sum_{i \in S^{\textsc{k}}} w_i(\mathcal{H}) + \sum_{i \in S^{\textsc{u}}} w_i(\mathcal{H})}{\sum_{i \in S^{\textsc{k}}} w_i(\mathcal{H}) + \sum_{i \in S^{\textsc{u}}} w_i(\mathcal{H}) + 1} \Bigg) \frac{\sum_{i \in S^{\textsc{k}}} w_i(\mathcal{H}) + \sum_{i \in S^{\textsc{u}}} w_i(\mathcal{H}) + 1}{\sum_{i \in S^{\textsc{u}}} w_i(\mathcal{H})} \nonumber \\
    &= -\frac{\sum_{i \in S^{\textsc{u}}} w_i(\mathcal{H})}{(\sum_{i \in S^{\textsc{k}}} w_i(\mathcal{H}) + 1)(\sum_{i \in S^{\textsc{k}}} w_i(\mathcal{H}) + \sum_{i \in S^{\textsc{u}}} w_i(\mathcal{H}) + 1)}\frac{\sum_{i \in S^{\textsc{k}}} w_i(\mathcal{H}) + \sum_{i \in S^{\textsc{u}}} w_i(\mathcal{H}) + 1}{\sum_{i \in S^{\textsc{u}}} w_i(\mathcal{H})} \nonumber\\
    &= -\frac{1}{\sum_{i \in S^{\textsc{k}}} w_i(\mathcal{H}) + 1} \geq -1.
\end{align}
The first equality uses $\frac{A}{A+1} -\frac{B}{B+1} = \frac{A-B}{(A+1)(B+1)}$ for $(A,B) = (\sum_{i \in S^{\textsc{k}}} w_i(\mathcal{H}), \sum_{i \in S^{\textsc{k}}} w_i(\mathcal{H}) + \sum_{i \in S^{\textsc{U}}} w_i(\mathcal{H}))$. 
\end{proof}

\section{Omitted proofs from Section~\ref{sec:comparison_to_simple_strategies}}
\subsection{Exploring all products can be arbitrarily suboptimal (Proposition~\ref{prop:explore_all_is_arbitrarily_suboptimal})}\label{appendix_subsec:prop_explore_all_is_arbitrarily_suboptimal_proof}

The following lemma lower bounds the regret of $\textsc{ExploreAll}$. 
\begin{lemma}\label{lemma:lower_bound_regret_ucb}
    For any $c \in \mathbb{N}$ and $q \in (0,1)$, $\textsc{Reg}(\textsc{ExploreAll}) \geq \frac{1}{q} \cdot \frac{9}{29c} - 1$ on the instance $\mathcal{I}(c,q)$.
\end{lemma}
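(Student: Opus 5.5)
Since regret is accumulated from round $1$ onward and, by Lemma~\ref{lemma:any_policy_has_a_nonnegative_regret_starting_from_a_terminal_history_and_pi_star_as_a_zero_regret_starting_from_a_terminal_history} together with the lower bound of Lemma~\ref{lemma:for_any_history_with_k_unknown_entrants_the_regret_of_pi_star_is_lower_bounded_and_upper_bounded_by_quantities_which_depend_on_k}-type arguments, the regret from any later history is at least $-(\text{number of remaining entrants})$, I would lower bound $\textsc{Reg}(\textsc{ExploreAll})$ by the regret of the \emph{first epoch} plus the worst-case contribution of later epochs. The later contribution is bounded below by noting that in any epoch the epoch regret of any assortment is at least $-1$ (this is exactly inequality \eqref{ineq:epoch_reg_lower_bound} applied with $\textsc{OPT}(\mathcal{H}) \geq$ the revenue of the known part), and more crudely, the only way per-round regret can be negative is through gaining reward $1$ against $\textsc{OPT}<1$; I would aim to show the total negative contribution after the first epoch is at least $-1$, or simply absorb it into the ``$-1$'' in the statement by arguing more carefully that after the first epoch the per-round regret is bounded below in a way summing to at least $-1$. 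If that is delicate, I would fall back on: future regret $\geq -\sum_{\text{epochs}} 1/(W_{\text{known}}+1)$ and use that incumbents of weight $0.9$ make $W_{\text{known}}$ large enough when $c\ge 2$; I expect the constants in the statement were tuned for exactly this.

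For the first epoch, on $\mathcal{I}(c,q)$ we have $m=c$ entrants with $\tilde w = q$, so first I check the exploration-worthy condition holds at round $1$ (so \textsc{ExploreAll} indeed offers all $c$ entrants and no incumbents): $\textsc{OPT}_1$ exceeds the known myopic revenue since with positive probability an entrant has weight $1>0.9$. Then the assortment is $U_c$, the purchase probability of an entrant is $cq/(cq+1)$, so the expected epoch length is $(cq+1)/(cq) = 1 + \frac{1}{cq}$, and the per-round revenue is $\frac{cq}{cq+1}\le cq$. The first-epoch regret is $\textsc{OPT}_1\cdot\tau - r = \textsc{OPT}_1(1+\frac{1}{cq}) - 1$.

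The key quantitative step is a constant lower bound on $\textsc{OPT}_1$: even ignoring entrants, the $c/2$ incumbents of weight $0.9$ give $\textsc{OPT}\ge \frac{0.45c}{0.45c+1}$, and for $c\ge 1$ (rounding $c/2$ appropriately) this is at least something like $0.45/1.45 = 9/29$. Hence the first-epoch regret is at least $\frac{9}{29}\cdot\frac{1}{cq} + \frac{9}{29} - 1$, giving $\frac{1}{q}\cdot\frac{9}{29c}-1$ after dropping the positive constant and combining with the nonnegativity (or the bounded negativity, absorbed by the $\frac{9}{29}$ slack) of the future regret. The main obstacle is the treatment of the post-first-epoch regret, which can be negative; I would try first to show it is $\ge -\frac{9}{29}$ or handle it via the total-reward bound, using that the slack $\frac{9}{29}$ plus the ``$-1$'' covers it.
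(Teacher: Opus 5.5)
Your first-epoch computation matches the paper's: \textsc{ExploreAll} offers $U_c$, the epoch lasts $(cq+1)/(cq)$ rounds in expectation, it earns reward $1$, and $\textsc{OPT}_1\ge \frac{0.45c}{0.45c+1}\ge\frac{9}{29}$. The gap is the step you yourself flag as the obstacle. You never establish a lower bound on the regret after the first epoch, and the tools you propose are too weak.

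\begin{itemize}
\item Going through $\pi^\star$ gives only $-(c-1)$.
\item The fallback via \eqref{ineq:epoch_reg_lower_bound} bounds each later epoch by $-1/(W_{\text{known}}+1)$. Already for $c=2$ the single later epoch can contribute $-1/1.9<-\frac{9}{29}$. In general these terms sum to something growing like $\ln c$.
\end{itemize}

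So the negative part cannot be absorbed into the $\frac{9}{29}$ slack or the $-1$. Your remark that per-round regret is negative whenever a product is bought also conflates the realized regret $\textsc{OPT}-R_t$ with its conditional expectation given $\mathcal{H}_t$. Only the latter enters the expected regret.

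The missing observation is that no absorption is needed. On $\mathcal{I}(c,q)$, the expected per-round regret $\textsc{OPT}(\mathcal{H}_t)-\textsc{Rev}(S_t;\mathcal{H}_t)$ of \textsc{ExploreAll} is nonnegative at every round. The paper's proof invokes exactly this: ``the expected regret at any round is non-negative.''

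To see it, note that all $c$ incumbents stay known forever, with weights at least $\min(0.9,2q)\ge q=\tilde w$. This holds for $q\le 0.9$, which covers the regime of Proposition~\ref{prop:explore_all_is_arbitrarily_suboptimal}.

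\begin{itemize}
\item When \textsc{ExploreAll} explores, it offers $m_t$ unknown entrants plus $c-m_t$ known products. So at least $m_t$ incumbents lie outside $S_t$.
\item Swapping each unknown entrant for a distinct such incumbent does not decrease total weight. Hence $\textsc{Rev}(S_t;\mathcal{H}_t)\le\textsc{Rev}_t(S^\star_{(c)}(\mathcal{I}_t))\le\textsc{OPT}$ for every realization.
\item When it does not explore, it offers $S^\star_{(c)}(\mathcal{I}_t)$ directly, and the same inequality holds.
\end{itemize}

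Therefore $\textsc{Reg}(\textsc{ExploreAll})\ge\textsc{EpochReg}_1(U_c)\ge\frac{1}{q}\cdot\frac{9}{29c}-1$, with no correction term.
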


The next lemma upper bounds the regret of $\textsc{EFA}$.

\begin{lemma}\label{lemma:upper_bound_regret_our_algo}
For any $c\in \mathbb{N}$ and $q < \frac{0.9}{2+2(c+1)^2}$, $\textsc{Reg}(\textsc{EFA}) \leq 2c^2(c+1)$ on the instance $\mathcal{I}(c,q)$.
\end{lemma}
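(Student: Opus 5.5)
The plan is to bound the regret of \textsc{EFA} on $\mathcal{I}(c,q)$ by combining three ingredients. The first is the one-epoch-lookahead decomposition of Lemma~\ref{lemma:the_regret_of_pi_star_can_be_decomposed_into_the_first_epoch_regret_and_the_future_regret}, which applies because \textsc{EFA} is an epoch-regret-minimizing policy by Theorem~\ref{thm:optimality_of_algo} and Lemma~\ref{lemma:epoch_optimal_policy_is_optimal}. The second is the crude history-independent bound of Lemma~\ref{lemma:for_any_history_with_k_unknown_entrants_the_regret_of_pi_star_is_lower_bounded_and_upper_bounded_by_quantities_which_depend_on_k}. The third is a case split on whether some revealed entrant turns out to have attraction $1$.

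Since $\mathcal{F}=\bern(q)$ and $\tildew=q$, every entrant is revealed to be either $0$ or $1$. Unrolling the decomposition over the at most $c$ epochs, the regret of \textsc{EFA} equals an expectation, over the revealed values, of a sum of per-epoch minimal epoch regrets $\min_{S}\textsc{EpochReg}(S;\mathcal{H})$. Here each $\mathcal{H}$ is the history at the start of an epoch, and terminal histories contribute $0$ by Lemma~\ref{lemma:any_policy_has_a_nonnegative_regret_starting_from_a_terminal_history_and_pi_star_as_a_zero_regret_starting_from_a_terminal_history}. I will split this sum at the first epoch in which an entrant is revealed to equal $1$.

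\textbf{Good histories.} Consider an epoch that starts from a history in which all revealed entrants are $0$. The known products then have top $c$ consisting of the $c/2$ incumbents of weight $0.9$ and the $c/2$ incumbents of weight $2q$. I will upper bound the minimal epoch regret by the epoch regret of the specific assortment $S=U_1\cup S^\star_{(c-1)}(\mathcal{I}_t)$, which swaps one $2q$-incumbent for an entrant of cold-start weight $q$.
\begin{itemize}
\item The per-round regret is $\textsc{OPT}_t-\textsc{Rev}_t(S)$. I will write this as $\big(\textsc{OPT}_t-\textsc{Rev}_t(S^\star_{(c)}(\mathcal{I}_t))\big)+\big(\textsc{Rev}_t(S^\star_{(c)}(\mathcal{I}_t))-\textsc{Rev}_t(S)\big)$.
\item The second bracket is at most the weight difference $2q-q=q$, because $x\mapsto x/(x+1)$ is $1$-Lipschitz.
\item For the first bracket, the ex-post optimum exceeds the known optimum only if some unknown entrant equals $1$. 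A union bound gives probability at most $cq$, and the revenue gain on that event is at most $1$. So the first bracket is at most $cq$.
\item The epoch length is $\tau(S)=(W+q+1)/q\le (0.9c/2+cq+1)/q\le (c+1)/q$.
\end{itemize}
Together these give an epoch regret of at most $(c+1)q\cdot(c+1)/q=(c+1)^2$ per good epoch. Over at most $c$ epochs this contributes $O(c(c+1)^2)$. I will tighten the constants so that this piece fits inside half of $2c^2(c+1)$; if the bound is too loose, I will replace $\textsc{OPT}_t-\textsc{Rev}_t$ by the exact expression $\sum_k \prob[k\text{ ones}]\cdot(\text{gain})$, where the gain is $O(1/c)$ per one.

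\textbf{Bad histories.} Once some entrant is revealed to be $1$, Lemma~\ref{lemma:for_any_history_with_k_unknown_entrants_the_regret_of_pi_star_is_lower_bounded_and_upper_bounded_by_quantities_which_depend_on_k} bounds the entire remaining regret by $m/\tildew=c/q$. The epoch in which the $1$ is revealed is itself a good-start epoch, so it has already been counted above. The probability that any of the $c$ entrants equals $1$ is at most $cq$, by a union bound and because revealed values are drawn i.i.d.\ from $\mathcal{F}$, independently of the history. So this part contributes at most $cq\cdot c/q=c^2$.

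The condition $q<0.9/(2+2(c+1)^2)$ is used to guarantee $2q<0.9$, so that the ordering of incumbents described above holds. It also keeps $q$ small relative to $1/(c+1)^2$ wherever a step needs that, for instance to absorb lower-order terms such as $cq$ in $\tau(S)$.

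\textbf{Main obstacle.} The delicate step is the good-history per-round bound $\textsc{OPT}_t-\textsc{Rev}_t(S)=O(cq)$, with constants tight enough to land under $2c^2(c+1)$. I would first do the crude union bound above. If that is too loose, I would compute the expected gain from ones explicitly. Replacing a $2q$-incumbent by a $1$ changes revenue by at most $1/(\text{denominator})^2\cdot 1\le 1$, which is enough for a bound of order $c(c+1)^2$. Matching the exact stated constant may require noting that the gain is at most $(1-2q)/(W+1)^2$ with $W\ge 0.45c$.
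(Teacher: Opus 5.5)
Your proof is correct, and it follows a different route from the paper's.

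\textbf{The paper's route.} The paper conditions on the realization, splitting into the event $\mathcal{E}^0$ that every entrant is $0$ and its complement. It then analyzes \textsc{EFA}'s actual choices. On $\mathcal{E}^0$ it shows $\alpha_t(\ell)>\textsc{OPT}_t$ for all $\ell>c/2$. So \textsc{EFA} never explores more than $c/2$ entrants and always keeps the weight-$0.9$ incumbents, which bounds the ex-post per-round regret by $cq$. The hypothesis $q<0.9/(2+2(c+1)^2)$ is exactly what makes this comparison hold. On the complement it uses per-round regret at most $1$, with probability at most $cq$. Each of the at most $c$ epochs lasts at most $(c+1)/q$ rounds, so each part contributes $c^2(c+1)$.

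\textbf{Your route.} You never determine $\ell_t$. Because \textsc{EFA} attains the minimal epoch regret (Lemmas~\ref{lemma:optimal_completion_given_ell_unknown_prods} and~\ref{lemma:num_unknown_ORFA_minimizes_epoch_reg}), the decomposition of Lemma~\ref{lemma:the_regret_of_pi_star_can_be_decomposed_into_the_first_epoch_regret_and_the_future_regret} lets you charge each good epoch the interim epoch regret of the benchmark $U_1\cup S^\star_{(c-1)}(\mathcal{I}_t)$. That is at most $(c+1)q\cdot(c+1)/q=(c+1)^2$. After the first revealed $1$, you pay the history-independent bound $c/q$ of Lemma~\ref{lemma:for_any_history_with_k_unknown_entrants_the_regret_of_pi_star_is_lower_bounded_and_upper_bounded_by_quantities_which_depend_on_k}, with probability at most $cq$.

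\textbf{What each buys.} Your route is more robust. It uses only mild smallness of $q$ (e.g.\ $2q<0.9$) and applies to any epoch-optimal policy; in effect it refines ``\textsc{EFA} is optimal, hence no worse than an \textsc{ExploreOne}-type policy.'' Its bad-event term $c^2$ is also much smaller than the paper's $c^2(c+1)$. The paper's argument is more elementary, since it avoids the optimality machinery of Appendix~B. It also exhibits the structural fact (at most $c/2$ entrants explored) that Section~5 contrasts with \textsc{ExploreAll}.

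\textbf{Your closing step.} Your plan for the final step would not work as stated, but you do not need it. The good part $c(c+1)^2=c^2(c+1)+c(c+1)$ can never fit inside half of $2c^2(c+1)$, so no tightening of that kind will succeed. It does not have to fit there. Adding the two pieces you already have gives
\[
c(c+1)^2+c^2=c^3+3c^2+c,
\]
which is at most $2c^3+2c^2$ if and only if $c^2\ge c+1$, i.e.\ for every $c\ge 2$. Since $\mathcal{I}(c,q)$ has $c/2$ incumbents of each type, $c$ is even, so the bound holds without any sharpening. The ``main obstacle'' you describe is therefore not an obstacle. You should state the evenness of $c$ explicitly, because the inequality fails at $c=1$.
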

\begin{proof}[Proof of Proposition~\ref{prop:explore_all_is_arbitrarily_suboptimal}.]
The proof follows by combining Lemmas \ref{lemma:lower_bound_regret_ucb} and \ref{lemma:upper_bound_regret_our_algo}
\end{proof}
\begin{proof}[Proof of Lemma~\ref{lemma:lower_bound_regret_ucb}.]
\textsc{ExploreAll} explores $c$ unknown entrants $U_c$ in the first epoch. The expected number of rounds in the epoch is $\tau(U_c;\mathcal{H}_1) = \frac{cq+1}{cq}$ as an unknown entrant is purchased with probability $\frac{cq}{cq+1}$. Note that $U_c$ does not contain known products and thus the only product purchase during the epoch is the one from the entrant at the end of the epoch, which implies $r(U_c;\mathcal{H}_1) = 1$. As a result, given that the expected regret at any round is non-negative, the regret of \textsc{ExploreAll} is at least the regret in the first epoch: \begin{align*}\label{ineq:total_regret_of_exploreall_is_at_least_regret_of_exploreall_in_first_epoch}
\textsc{Reg}(\textsc{ExploreAll}) &\geq \textsc{EpochReg}_1(U_c) =\tau(U_c;\mathcal{H}_1)\textsc{OPT}_1 - r(U_c;\mathcal{H}_1) =  \frac{cq+1}{cq} \textsc{OPT}_1 -1 \nonumber\\
    &\geq  \frac{cq+1}{cq} \cdot \frac{\frac{c}{2} \cdot 0.9}{\frac{c}{2}\cdot 0.9 + 1} - 1  \geq  \frac{cq+1}{cq} \cdot \frac{9}{29} - 1 \geq \frac{1}{cq} \cdot \frac{9}{29} - 1
\end{align*}
where the second inequality uses $\textsc{OPT}_1 \geq \frac{\frac{c}{2} \cdot 0.9}{\frac{c}{2}\cdot 0.9 + 1}$ as there are $\frac{c}{2}$ incumbents with attraction parameter $0.9$, the third inequality uses $\frac{\frac{c}{2} \cdot 0.9}{\frac{c}{2}\cdot 0.9 + 1} \geq \frac{\frac{1}{2} \cdot 0.9}{\frac{1}{2}\cdot 0.9 + 1} = \frac{9}{29}$ as $c \geq 1$, and the fourth inequality uses $cq > 0$.
\end{proof}

In the remainder of the subsection we prove Lemma~\ref{lemma:upper_bound_regret_our_algo}. Let $\mathcal{E}^{0}$ be the event that all of the unknown entrants have an attraction parameter of $0$, and $\mathcal{E}^{1}$ its complement. The following lemma upper bounds the regret per round of \textsc{EFA} conditioned on $\mathcal{E}^{0}$. 

\begin{lemma}\label{lemma:EFA_explores_at_most_c_2_unknown_entrants_conditioned_on_all_zeros}
    Conditioned on $\mathcal{E}^{0}$, the regret of \textsc{EFA} in any round $t$ is upper bounded by $\textsc{OPT} - \textsc{Rev}_t(S_t) \leq cq$
\end{lemma}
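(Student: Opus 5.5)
The plan is to show that, on $\mathcal{E}^{0}$, \textsc{EFA} never displaces any of the $\frac{c}{2}$ incumbents with attraction parameter $0.9$. The per-round regret then reduces to the small difference between the weights $2q$, $q$ and $0$ carried by the remaining products.

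First I would pin down the state of the system on $\mathcal{E}^{0}$. Every entrant revealed so far has $w_i=0$, so the ordering of the known products is the $0.9$-incumbents, then the $2q$-incumbents, then revealed entrants with weight $0$. The ex-post optimum is therefore
\[
\textsc{OPT}=g(X), \qquad g(x)=\frac{x}{x+1}, \qquad X=0.45c+cq .
\]
There are two kinds of rounds.
\begin{itemize}
\item In a non-exploration round (line~\ref{alg_line:not_exploration_worthy_case}), \textsc{EFA} offers the $c$ most attractive known products. These are all $c$ incumbents, because there are exactly $c$ of them and they dominate the revealed zeros. Hence $\textsc{Rev}_t(S_t)=\textsc{OPT}$ and the regret of that round is $0$.
\item In an exploration round, $S_t=U_{\ell_t}\cup S^{\star}_{(c-\ell_t)}(\mathcal{I}_t)$.
\end{itemize}

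The key claim is that $\ell_t\le c/2$ in every exploration round. By line~\ref{alg_line:set_number_products} and the monotonicity \eqref{ineq:alpha_l_I_t}/\eqref{ineq:alpha_ell_non_decreasing}, it suffices to show $\textsc{OPT}_t<\alpha_t(c/2+1)$. For $\ell=c/2+1$ we have $c-\ell+1=c/2$, so $w^{\star}_{(c/2)}(\mathcal{I}_t)=0.9$. The fictitious assortment therefore consists of $c$ copies of weight $0.9$, giving
\[
\alpha_t(c/2+1)=g(0.9c).
\]
To bound $\textsc{OPT}_t$, I would condition on whether some still-unknown entrant has realized value $1$.
\begin{itemize}
\item By a union bound, this event has probability $p\le m_tq\le cq$. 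On it, the ex-post optimal weight is at most $c$, since every weight is at most $1$.
\item On the complement, the unknown entrants are all $0$. The optimal weight is then exactly $0.45c+cq$, because the revealed entrants are zeros.
\end{itemize}
Hence
\[
\textsc{OPT}_t\le (1-p)\,g(0.45c+cq)+p\,g(c).
\]
It then suffices to check
\[
p\bigl(g(c)-g(0.9c)\bigr)<(1-p)\bigl(g(0.9c)-g(0.45c+cq)\bigr).
\]
Using $g(a)-g(b)=\frac{a-b}{(a+1)(b+1)}$, the left side is at most of order $cq\cdot\frac{0.1c}{(0.9c+1)(c+1)}$. The right side is at least of order $\frac{0.45c-cq}{(0.9c+1)(0.45c+cq+1)}$. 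Comparing the two should reduce to an inequality of the form $q\,(\text{poly in } c)<0.9$. This is where the hypothesis $q<\frac{0.9}{2+2(c+1)^2}$ would be used. I expect this algebra to be the main obstacle; I will have to be generous with constants (for instance bounding $1-p\ge 1/2$) so that it matches that threshold.

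Given $\ell_t\le c/2$, the completion $S^{\star}_{(c-\ell_t)}(\mathcal{I}_t)$ has at least $c/2$ slots and so contains all the $0.9$-incumbents. The current weight $Y$ of $S_t$ therefore satisfies $0.45c\le Y$. The other members of $S_t$ have nonnegative weight (namely $2q$, $q$ or $0$), and $Y\le X$ since each slot's weight is at most the corresponding slot's weight in the optimum. Since $g$ is increasing with $g'\le 1$,
\[
\textsc{OPT}-\textsc{Rev}_t(S_t)=g(X)-g(Y)\le X-Y\le (0.45c+cq)-0.45c=cq,
\]
which is the claimed per-round bound conditioned on $\mathcal{E}^{0}$.
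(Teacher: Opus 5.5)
Your proposal is correct and follows the paper's proof essentially step by step. The paper likewise splits $\textsc{OPT}_t$ according to whether every still-unknown entrant is $0$, using the union bound $cq$. It then compares with $\alpha_t(\ell)=g(0.9c)$ for $\ell>c/2$ to get $\ell_t\le c/2$, and bounds the per-round regret by noting that the $0.9$-incumbents stay in the assortment.

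The computation you left open closes in one line. Weaken your convex-combination bound to $\textsc{OPT}_t\le g(X)+cq$, as the paper does; this is valid since $p\,(g(c)-g(X))\le p\le cq$. Then use $(0.9c+1)(0.45c+cq+1)\le (c+1)^2$ to get $g(0.9c)-g(X)\ge \frac{c(0.45-q)}{(c+1)^2}>cq$, and this last inequality is exactly equivalent to the hypothesis $q<\frac{0.9}{2+2(c+1)^2}$.
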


\begin{proof}[Proof of Lemma~\ref{lemma:upper_bound_regret_our_algo}.]
Given that all incumbents have parameters smaller than $1$ and there are $c$ unknown entrants with $\bern(q)$ priors, for any round $t$ with at least one unknown entrant the expected ex-post optimum is larger than the maximum revenue from known products, $\textsc{OPT}_t> \textsc{Rev}_t(S^{\star}_{(c)}(\mathcal{I}_t))$. Thus \textsc{EFA} explores at least one unknown entrant until all entrants have been purchased. Let $\tau_i$ be the round at which the $i$-th unknown entrant is purchased for $i \in \{1, \ldots, c\}$ and $\tau_0 = 0$ for ease of notation. Given that $\textsc{EFA}$ incurs zero expected regret after $\tau_c$, its total expected regret is decomposed as
\begin{equation*}
    \textsc{Reg}(\textsc{EFA}) = \expect\Bigg[\sum_{t=0}^{\tau_c} \textsc{OPT} - \textsc{Rev}_t(S_t) \Bigg| \mathcal{E}^{0}\Bigg]\prob[\mathcal{E}^0] + \expect\Bigg[\sum_{t=0}^{\tau_c} \textsc{OPT} - \textsc{Rev}_t(S_t) \Bigg| \mathcal{E}^{1}\Bigg]\prob[\mathcal{E}^{1}].
\end{equation*}
The regret of \textsc{EFA} when $\mathcal{E}^0$ holds is upper bounded as
\begin{align*}\label{ineq:upper_bound_regret_efa_conditioned_on_all_zeros}
   &\expect\left[\sum_{t=1}^{\tau_c} \textsc{OPT} - \textsc{Rev}_t(S_t) \middle| \mathcal{E}^0 \right] \prob[\mathcal{E}^0]\leq \expect\left[\sum_{t=1}^{\tau_c} \textsc{OPT} - \textsc{Rev}_t(S_t) \middle| \mathcal{E}^0 \right] \leq \expect\left[\sum_{t=1}^{\tau_c} cq \middle| \mathcal{E}^0 \right]\\
   &=cq \expect\left[\sum_{i=0}^{c-1} \sum_{t = \tau_i+1}^{\tau_{i+1}}1 \middle| \mathcal{E}^0 \right]  = cq\sum_{i=0}^{c-1} \expect [\tau_{i+1}-\tau_{i} | \mathcal{E}^0] \leq cq\sum_{i=0}^{c-1} \frac{c+1}{q} = cq\frac{c(c+1)}{q} = c^2(c+1).
\end{align*}

The first inequality uses $\prob[\mathcal{E}^0] \leq 1$ and the fact that the expected regret per round is non-negative. The second inequality uses Lemma~\ref{lemma:EFA_explores_at_most_c_2_unknown_entrants_conditioned_on_all_zeros}. Regarding the third inequality, given that the sum of the attraction parameters in any assortment is at most $c$ and the customer's prior attraction parameter for any unknown entrant is $q$, the probability that an unknown entrant purchased is at least $\frac{q}{c+1}$. As a result, the expected number of rounds in any epoch is upper bounded by $\expect[\tau_{i+1} -\tau_{i}|\mathcal{E}^0] \leq \frac{c+1}{q}$ for $i \in \{0, \ldots, c-1\}$.

The regret of \textsc{EFA} when $\mathcal{E}^1$ holds is upper bounded as
\begin{align*}
    &\expect\Bigg[\sum_{t=1}^{\tau_c} \textsc{OPT} - \textsc{Rev}_t(S_t) \Bigg| \mathcal{E}^1 \Bigg] \prob[\mathcal{E}^1] \leq cq\expect\Bigg[\sum_{t=1}^{\tau_c} \textsc{OPT} - \textsc{Rev}_t(S_t) \Bigg| \mathcal{E}^1 \Bigg]  \leq  cq \expect\Bigg[\sum_{t=1}^{\tau_c} 1 \Bigg| \mathcal{E}^1 \Bigg]\\
    &=   cq \expect\left[\sum_{i=0}^{c-1} \sum_{t = \tau_i+1}^{\tau_{i+1}}1 \middle| \mathcal{E}^1 \right] = cq\sum_{i=0}^{c-1} \expect [\tau_{i+1}-\tau_{i} | \mathcal{E}^1] \leq cq \sum_{i=0}^{c-1} \frac{c+1}{q} = cq\frac{c(c+1)}{q} = c^2(c+1)
\end{align*}
The first inequality uses $\prob[\mathcal{E}^1] = 1- (1-q)^c \leq cq$ and the fact that the regret per round is non-negative. The second inequality uses $\textsc{OPT} -\textsc{Rev}_t(S_t) \leq 1$ as $\textsc{OPT} \leq 1$ and $\textsc{Rev}_t(S_t) \geq 0$. The third inequality uses $\expect[\tau_{i+1} -\tau_{i}|\mathcal{E}^1] \leq \frac{c+1}{q}$ for $i \in \{0, \ldots, c-1\}$ due to the same reason as in the previous part. Therefore, $ \textsc{Reg}(\textsc{EFA}) \leq 2 c^2(c+1)$.
\end{proof}

\begin{proof}[Proof of Lemma~\ref{lemma:EFA_explores_at_most_c_2_unknown_entrants_conditioned_on_all_zeros}.]
We first show that \textsc{EFA} explores at most $\frac{c}{2}$ unknown entrants at any round. Let $\mathcal{D}_t^0$ be the event that the attraction parameters of all unknown entrants at round $t$ realize to $0$, and $\mathcal{D}_t^1$ be its complement. The expected ex-post optimum at round $t$ calculated by \textsc{EFA} is be upper bounded by
\begin{equation}\label{eq:expected_ex_post_opt_law_of_total_probability_decoposition}
    \textsc{OPT}_t = \expect[\textsc{OPT}|\mathcal{H}_t] = \expect[\textsc{OPT}|\mathcal{H}_t, \mathcal{D}_t^0] \prob[\mathcal{D}_t^0|\mathcal{H}_t]+\expect[\textsc{OPT}|\mathcal{H}_t, \mathcal{D}_t^1] \prob[\mathcal{D}_t^1|\mathcal{H}_t]\leq \frac{\frac{c}{2}\cdot0.9   + \frac{c}{2} \cdot2q  }{\frac{c}{2}  \cdot  0.9+ \frac{c}{2}\cdot 2q+ 1} + cq.
\end{equation} 
To bound the first term, the inequality uses $\prob[\mathcal{D}_t^0|\mathcal{H}_t] \leq 1$ and $\expect[\textsc{OPT}|\mathcal{H}_t, \mathcal{D}_t^0]  = \frac{\frac{c}{2}\cdot0.9   + \frac{c}{2} \cdot2q  }{\frac{c}{2}  \cdot  0.9+ \frac{c}{2}\cdot 2q+ 1}$ as all entrants conditioned on $\mathcal{H}_t$ and $\mathcal{D}_t^0$ are zero because all known entrants in $\mathcal{H}_t$ are zero (as $\mathcal{E}_0$ holds) and $\mathcal{D}_t^0$ implies all remaining entrants are zero; to bound the second term the inequality uses $\expect[\textsc{OPT}|\mathcal{H}_t, \mathcal{D}_t^1] \leq 1$ (as $\textsc{OPT} \leq 1$) and 
$\prob[\mathcal{D}_t^1|\mathcal{H}_t] = 1-\prob[\mathcal{D}_t^0|\mathcal{H}_t] \leq 1- (1-q)^c \leq cq$ because $\prob[\mathcal{D}_t^0|\mathcal{H}_t] \geq (1-q)^c$ as there are at most $c$ unknown entrants in $\mathcal{H}_t$ are their parameters are independently drawn from $\bern(q)$.

For any number of unknown entrants $\ell \in \{\frac{c}{2} + 1, \ldots, c\}$, the fictitious assortment revenue is greater than the expected ex-post optimum
\begin{align}\label{ineq_equivalence:alpha_t_ell_greater_than_opt_t}
    \alpha_t(\ell) = \frac{c  \cdot 0.9}{ c \cdot 0.9 + 1} &= \frac{\frac{c}{2} \cdot 0.9 + \frac{c}{2}\cdot 2q }{ \frac{c}{2}  \cdot 0.9 +\frac{c}{2}\cdot 2q + 1} + \frac{c  \cdot 0.9}{ c \cdot 0.9 + 1}-\frac{\frac{c}{2} \cdot 0.9 + \frac{c}{2}\cdot 2q }{ \frac{c}{2}  \cdot 0.9 +\frac{c}{2}\cdot 2q + 1} \nonumber\\
    &=  \frac{\frac{c}{2} \cdot 0.9 + \frac{c}{2}\cdot 2q }{ \frac{c}{2}  \cdot 0.9 +\frac{c}{2}\cdot 2q + 1} + \frac{\frac{c}{2} (0.9 - 2q)}{(c \cdot 0.9  + 1)(\frac{c}{2}\cdot 0.9  +\frac{c}{2} \cdot 2q + 1)} \nonumber \\
    &\geq \frac{\frac{c}{2} \cdot 0.9 + \frac{c}{2}\cdot 2q }{ \frac{c}{2}  \cdot 0.9 +\frac{c}{2}\cdot 2q + 1} + \frac{\frac{c}{2} (0.9 - 2q)}{(c+1)^2} > \frac{\frac{c}{2} \cdot 0.9 + \frac{c}{2}\cdot 2q }{ \frac{c}{2}  \cdot 0.9 +\frac{c}{2}\cdot 2q + 1}  + cq \geq  \textsc{OPT}_t.
\end{align}
The first equality uses $\alpha_t(\ell) = \frac{ c \cdot 0.9}{ c\cdot 0.9 + 1}$ as $\ell > \frac{c}{2}$, there are $\frac{c}{2}$ incumbents with parameter $0.9$, and all known entrants in $\mathcal{H}_t$ have parameter $0$ (as $\mathcal{E}^0$ holds). The second equality uses the identity $\frac{A}{A+1} - \frac{B}{B+1} = \frac{A-B}{(A+1)(B+1)}$ for $A=  c\cdot 0.9+ 1$ and $B =\frac{c}{2} \cdot  0.9  + \frac{c}{2}\cdot 2q+ 1$.  The first inequality uses that $c \cdot 0.9  + 1 \leq c+1$ and $\frac{c}{2}\cdot 0.9  +\frac{c}{2} \cdot 2q + 1 \leq c+1$. The second inequality uses $\frac{\frac{c}{2} (0.9 - 2q)}{(c+1)^2} > cq $ as $q < \frac{0.9}{2+2(c+1)^2}$. The last inequality uses \eqref{eq:expected_ex_post_opt_law_of_total_probability_decoposition}. As a result of \eqref{ineq_equivalence:alpha_t_ell_greater_than_opt_t}, \textsc{EFA} explores at most $\frac{c}{2}$ unknown entrants at $t$. Thus, the regret at round $t$ is upper bounded by
\begin{equation*}\label{ineq:regret_per_round_upper_bound_all_zeros}
    \textsc{OPT} - \textsc{Rev}_t(S_t) \leq \frac{\frac{c}{2} \cdot 0.9  + \frac{c}{2} \cdot 2q }{\frac{c}{2} \cdot 0.9  + \frac{c}{2} \cdot 2q + 1} - \frac{\frac{c}{2} \cdot 0.9 +  \frac{c}{2} \cdot q}{\frac{c}{2} \cdot 0.9 +  \frac{c}{2} \cdot q + 1} = \frac{\frac{c}{2} \cdot q }{(\frac{c}{2} \cdot 0.9  + \frac{c}{2} \cdot 2q + 1)(\frac{c}{2} \cdot 0.9 +  \frac{c}{2} \cdot q + 1)} \leq cq.
\end{equation*}
The first inequality uses $\textsc{OPT} = \frac{\frac{c}{2}\cdot0.9   + \frac{c}{2}\cdot 2q }{\frac{c}{2}\cdot0.9   + \frac{c}{2}\cdot 2q + 1}$ (as all unknown entrants have realized to $0$ due to $\mathcal{E}^0$) and $\textsc{Rev}_t(S_t) \geq \frac{ \frac{c}{2}\cdot 0.9 + \frac{c}{2}\cdot q }{\frac{c}{2}\cdot 0.9  + \frac{c}{2}\cdot q  + 1}$ because \textsc{EFA} includes at least $\frac{c}{2}$ incumbents with parameters $0.9$ as it explores at most $\frac{c}{2}$ unknown entrants and each remaining product in the assortment has a parameters at least $q$. The equality uses the identity $\frac{A}{A+1} -\frac{B}{B+1} = \frac{A-B}{(A+1)(B+1)}$ for $(A,B) = (\frac{c}{2} \cdot 0.9   + \frac{c}{2} \cdot 2q ,\frac{c}{2} \cdot 0.9  + \frac{c}{2} \cdot q)$. The last inequality uses that $\frac{c}{2} \cdot 0.9   + \frac{c}{2} \cdot q +1 \geq 1$, $\frac{c}{2}  \cdot 0.9  + \frac{c}{2}  \cdot q  + 1 \geq 1$ and $\frac{c}{2} \cdot  q\leq cq$.
\end{proof}

\subsection{UCB can be arbitrarily suboptimal (Proposition~\ref{prop:ucb_performance_bound_on_instance})}\label{appendix_subsec:ucb_is_arbitrarily_suboptimal_proof}
\begin{proof}[Proof of Proposition~\ref{prop:ucb_performance_bound_on_instance}.]
The second part of the lemma follows by Lemma~\ref{lemma:upper_bound_regret_our_algo}. For the first part, note that that all unknown entrants have the same $\bern(q)$ prior. Thus, for any round $t$ the UCB of each unknown entrant is the same an equal to the $p_t$-th quantile of $\bern(q)$, which we denote by $\psi_t$. As a result, since all $c$ incumbents have parameters smaller than $1$ and greater than $0$, at every round $t$, \textsc{UCB}$(\mathbf{p})$ either does not explore ($\psi_t = 0$) or explores all unknown entrants ($\psi_t = 1$). Given that $\psi_t$ is non-decreasing in $t$ (as $p_t$ is non-decreasing in $t$), there exists some round $H$ such that no entrants are explored at reach round $t \leq H$ and and all entrants are explored at each round $t > H$. We consider cases based on whether $H$ is finite.

\paragraph{Case 1: $H = +\infty$.} This implies \textsc{UCB}$(\mathbf{p})$ never explores and thus $\textsc{Reg}(\textsc{UCB}(\mathbf{p})) = +\infty > \frac{1}{q} \cdot \frac{9}{29c} - 1$. 

\paragraph{Case 2: $H < \infty$.} As \textsc{UCB}$(\mathbf{p})$ offers only incumbents for the first $H$ rounds its regret for this period is non-negative. After round $H$, \textsc{UCB}$(\mathbf{p})$ explores all unknown entrants and thus its regret for this period equals the regret of \textsc{ExploreAll}. As a result, $\textsc{Reg}(\textsc{UCB}(\mathbf{p})) \geq \textsc{Reg}(\textsc{ExploreAll}) \geq \frac{1}{q} \cdot \frac{9}{29c} - 1$; the last inequality uses Lemma~\ref{lemma:lower_bound_regret_ucb}. 
\end{proof}

\subsection{Thompson Sampling can be arbitrarily suboptimal (Proposition \ref{prop:TS_is_arbitrarily_suboptimal})}\label{appendix_subsec:thompson_sampling_is_arbitrarily_suboptimal_proof}
\begin{proof}[Proof of Proposition~\ref{prop:TS_is_arbitrarily_suboptimal}.]
The second part of the proposition follows by applying Lemma~\ref{lemma:upper_bound_regret_our_algo} for $c=2$ and using that $\frac{0.9}{2 + 2(c+1)^2} = 0.045$ and $2c^2(c+1) = 24$.

It remains to show the first part. Each unknown entrant $i \in \{1,2\}$ can have an attraction parameter $1$ (with probability $q$) or $0$ (with probability $1-q$). As a result, the expected ex-post optimum is
\begin{equation*}\label{eq:expected_ex-post_opt_instance}
    \expect[\textsc{OPT}] = (1-q)^2 \frac{0.9 + 2q}{0.9 + 2q + 1} + 2q(1-q) \frac{1+ 0.9}{1 + 0.9 + 1} + q^2 \frac{1+1}{1+1+1}.
\end{equation*}
Let $\textsc{Rev}(S)$ and $q(S)$ be (respectively) the expected revenue and the purchase probability for an unknown product given $S$ at the start. The sample drawn by $\textsc{TS}$ for each unknown product $i \in \{1,2\}$ realizes to $1$ (with probability $q$) or $0$ (with probability $1-q$). As a result, taking an expectation over the assortment $S$ offered by \textsc{TS},
\begin{equation*}\label{eq:expected_purch_prob_TS_instance}
    \expect_{S = \textsc{TS}(\mathcal{H}_1)}[\textsc{Rev}(S)] = (1-q)^2 \frac{0.9 + 2q}{0.9 + 2q + 1} + 2q(1-q) \frac{0.9 + q}{0.9 + q + 1} + q^2 \frac{2q}{2q+1}\quad \text{and}
\end{equation*}
$$\expect_{S = \textsc{TS}(\mathcal{H}_1)}[q(S)] = (1-q)^2 \cdot 0 + 2q(1-q) \cdot \frac{q}{0.9 + q + 1} + q^2 \cdot \frac{2q}{2q+1}\leq 2q(1-q) q + q^2 \cdot 2q = 2q^2
$$
where the inequality uses $\frac{q}{0.9 + q + 1} \leq q$ (as $0.9 + q + 1 \geq 1$) and $\frac{2q}{2q+1} \leq 2q$ (as $2q + 1\geq 1$). 

Under \textsc{TS}, an unknown entrant is purchased with a probability of $\expect_{S = \textsc{TS}(\mathcal{H}_1)}[q(S)]$ independently at every round. As a result, the regret of $\textsc{TS}$ in the first epoch is $\frac{\expect_{S = \textsc{TS}(\mathcal{H}_1)}[\textsc{OPT} - \textsc{Rev}(S)]}{\expect_{S = \textsc{TS}(\mathcal{H}_1)}[q(S)]}$. The expected regret of \textsc{TS} is non-negative in every round, and thus the regret of \textsc{TS} is lower bounded by the regret in the first epoch:
\begin{align*}
    \textsc{Reg}(\textsc{TS}) &\geq \frac{\expect_{S = \textsc{TS}(\mathcal{H}_1)}[\textsc{OPT} - \textsc{Rev}(S)]}{\expect_{S = \textsc{TS}(\mathcal{H}_1)}[q(S)]} \geq \frac{1-q}{q}\Big( \frac{1+ 0.9}{1 + 0.9 + 1}- \frac{0.9 + q}{0.9 + q + 1} \Big)+ \frac{1}{2} \Big(\frac{1+1}{1+1+1}-\frac{2q}{2q+1} \Big)\\
    &= \frac{1-q}{q}\frac{1-q}{(1+0.9 + 1)(0.9 + q + 1)} + \frac{1}{2} \frac{2(1-q)}{(1+1+1)(2q+1)}\geq 
     \frac{1}{q} \cdot \frac{(1-q)^2}{9}.
\end{align*}
The first equality uses the identity $\frac{A}{A+1}-\frac{B}{B+1} = \frac{A-B}{(A+1)(B+1)}$ for $(A,B) \in \{(1+0.9+1, 0.9 + q+1), (1+1+1, 2q+1)\}$. The last inequality uses that $\frac{1-q}{q}\frac{1-q}{(1+0.9 + 1)(0.9 + q + 1)} \geq \frac{1}{q} \cdot \frac{(1-q)^2}{9} $ as $(1+0.9 + 1)(0.9 + q + 1) \leq 9$ and $\frac{1}{2} \frac{2(1-q)}{(1+1+1)(2q+1)} \geq 0$. This concludes the proof of the proposition.
\end{proof}

\subsection{Assortment-size-factor suboptimality of ExploreOne (Proposition~\ref{prop:explore_one_at_a_time_c_factor})}\label{appendix_subsec:explore_one_at_a_time_c_factor_lower_bound_proof}
We focus on the instance $\mathcal{J}(m,c,q,\delta)$ consisting of $m$ entrants with $\mathcal{F} = \bern(q)$ prior and $c$ incumbents each of which with an attraction parameter of $q + \delta$ where $q + \delta \in (0,1)$. The next two lemmas (proofs in Appendix~\ref{appendix_subsec:lower_bound_regret_EFA_proof} and Appendix~\ref{appendix_subsec:proof_lemma_exploreone_regret_lower_bound}) upper bound the regret of \textsc{EFA} and lower bound the regret of \textsc{ExploreOne}.

\begin{lemma}\label{lemma:orfa_regret_upper_bound}
    For any $m \geq 16c^2$ and $q = m^{-\frac{1}{2}}$, the regret of \textsc{EFA} on the instance $\mathcal{J}(m,c,q,\delta)$ is at most
    $$\textsc{Reg}(\textsc{EFA}) \leq  \frac{c}{c+1} \cdot \frac{1-q}{q^2} + \frac{c+1}{q}m \exp\left\{-\frac{9}{32}mq\right\}.$$
\end{lemma}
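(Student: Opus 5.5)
The plan is to split on whether enough entrants turn out to be ``good.'' On $\mathcal{J}(m,c,q,\delta)$ every entrant has $w_i\in\{0,1\}$, and the cold-start parameter is $\tildew=q$. Let $N_1\sim\mathrm{Bin}(m,q)$ be the number of entrants with $w_i=1$, and let $\mathcal{G}=\{N_1\ge c\}$.

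\textbf{Bad event $\mathcal{G}^c$.} Since $\expect[N_1]=mq=\sqrt m\ge 4c$, we have $c\le \tfrac14 mq$. A multiplicative Chernoff bound with deviation $\tfrac34$ gives
\[
\prob[\mathcal{G}^c]\le \exp\{-\tfrac{(3/4)^2}{2}mq\}=\exp\{-\tfrac{9}{32}mq\}.
\]
On $\mathcal{G}^c$ we bound the regret crudely. There are at most $m$ epochs. In every exploring round the offered assortment has total weight at most $c$ and contains an entrant of weight $q$, so an entrant is bought with probability at least $q/(c+1)$. Hence each epoch lasts at most $(c+1)/q$ rounds in expectation. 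Per-round regret is at most $1$, and after exploration ends the regret is $0$. Multiplying these gives the second term $\frac{c+1}{q}m\exp\{-\frac{9}{32}mq\}$.

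\textbf{Good event $\mathcal{G}$: what EFA does.} On $\mathcal{G}$ the ex-post optimum is exactly $\textsc{OPT}=c/(c+1)$. I will first pin down EFA's choice of $\ell_t$ in a state with $j<c$ known ``ones,'' some known zeros, and enough unknown entrants remaining.
\begin{itemize}
\item The $c$ most attractive known products are the $j$ ones and $c-j$ incumbents.
\item For $\ell\le c-j$ the fictitious assortment replicates an incumbent of weight $q+\delta$. So $\alpha_t(\ell)=\textsc{Rev}_t(S^\star_{(c)}(\mathcal{I}_t))$, which is strictly below $\textsc{OPT}_t$ because a further one is possible.
\item For $\ell>c-j$ the fictitious assortment replicates a one, so $\alpha_t(\ell)=c/(c+1)\ge\textsc{OPT}_t$. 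This inequality is strict whenever some zero outcome has positive probability, and this is the one point where strictness needs checking.
\end{itemize}
Therefore EFA offers the $j$ ones together with $c-j$ unknown entrants. Once $j=c$, the history is terminal and no further regret is incurred.

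\textbf{Good event: phase computation.} Call phase $j$ the set of rounds in which exactly $j$ ones are known. Let $A=j+(c-j)q$ be the total offered weight. Each round then has regret
\[
\frac{c}{c+1}-\frac{A}{A+1}=\frac{(c-j)(1-q)}{(c+1)(A+1)}.
\]
An entrant is purchased with probability $(c-j)q/(A+1)$, so the regret accumulated per entrant purchase is exactly $\frac{1-q}{(c+1)q}$. Each purchase reveals an independent $\mathrm{Bern}(q)$ value. Discovering a zero leaves the offered structure unchanged, while discovering a one moves to phase $j+1$. So the number of purchases in a phase is geometric with mean $1/q$, and phase $j$ contributes $\frac{1-q}{(c+1)q^2}$. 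Summing over the $c$ phases $j=0,\dots,c-1$ gives the first term $\frac{c}{c+1}\cdot\frac{1-q}{q^2}$.

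\textbf{Main obstacle.} The delicate part is making the good-event computation rigorous. The phase computation is an unconditional Wald-type argument, whereas $\mathcal{G}$ conditions on the entrant values. I would handle this by bounding the regret on $\mathcal{G}$ using the unconditional process run with an unlimited supply of $\mathrm{Bern}(q)$ entrants, which dominates the phases. I would also use that the per-round regret against $c/(c+1)$ is nonnegative, so restricting to $\mathcal{G}$ via $\prob[\mathcal{G}]\le1$ costs nothing. The exact-threshold tie for $\ell>c-j$ should be settled by noting that $\textsc{OPT}_t<c/(c+1)$ while fewer than $c$ ones are known and unknowns remain.
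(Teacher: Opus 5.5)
Your proposal is correct, and all its main steps match the paper's proof:
\begin{itemize}
\item the same split on having at least $c$ entrants equal to one, with the same Chernoff bound;
\item the same crude bad-event bound: at most $m$ epochs, each of expected length at most $\frac{c+1}{q}$, with per-round regret at most $1$;
\item the same identification that \textsc{EFA} offers the $j$ known ones together with $c-j$ unknown entrants;
\item the same per-epoch regret $\frac{1-q}{(c+1)q}$, with a geometric number (mean $\frac1q$) of epochs per phase.
\end{itemize}

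The real difference is how the conditioning on the good event is handled. The paper first uses $\prob[\mathcal{E}]\le 1$ and then bounds the \emph{conditional} expectation. That forces a separate lemma, $\expect[\tau_{i+1}\mid\mathcal{E}]\le \frac1q$, proved by coupling with an infinite i.i.d.\ $\mathrm{Bern}(q)$ sequence and the inequality $\expect[X\mid X\le Y]\le\expect[X]$ for independent $X,Y$.

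Your unlimited-supply coupling makes that lemma unnecessary, but only if you write the step unconditionally:
\[
\expect\bigl[\textsc{Reg}\cdot \mathbf{1}_{\mathcal{G}}\bigr]=\expect\bigl[\widetilde{\textsc{Reg}}\cdot \mathbf{1}_{\mathcal{G}}\bigr]\le \expect\bigl[\widetilde{\textsc{Reg}}\bigr]=\frac{c}{c+1}\cdot\frac{1-q}{q^2}.
\]
Here $\widetilde{\textsc{Reg}}=\sum_t\bigl(\frac{c}{c+1}-\textsc{Rev}_t(\tilde S_t)\bigr)\ge 0$ is the compensated regret of the infinite-supply process, and $\tilde S_t$ is its offered assortment. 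The equality holds for two reasons. First, the two processes coincide on $\mathcal{G}$. Second, $R_t$ can be replaced by $\textsc{Rev}_t(S_t)$ under $\mathbf{1}_{\mathcal{G}}$, because choice probabilities do not depend on the hidden values. The inequality is pathwise nonnegativity.

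Your phrase ``restricting to $\mathcal{G}$ via $\prob[\mathcal{G}]\le 1$'' suggests the other ordering, and that ordering is not free. If you drop $\prob[\mathcal{G}]$ first and then compare $\expect[\,\cdot\mid\mathcal{G}]$ with the unconditional expectation, you need exactly the paper's extra lemma.

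What the paper's route buys is reuse. The same conditional lemma also gives the matching lower bound on $\expect[\tau_{i+1}\mid\mathcal{E}]$ needed for the \textsc{ExploreOne} lower bound. There, nonnegativity cannot be used to drop the indicator. Your route is simpler but only works in the upper-bound direction.

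You also make explicit two points the paper leaves implicit:
\begin{itemize}
\item $\textsc{OPT}_t<\frac{c}{c+1}$ strictly, so $\ell=c-j+1$ is correctly excluded;
\item at least $c-j$ unknown entrants remain on $\mathcal{G}$, so $k_t\ge c-j$.
\end{itemize}
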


\begin{lemma}\label{lemma:exploreone_regret_lower_bound}
    For any $m \geq 16c^2$ and $q = m^{-\frac{1}{2}}$, the regret of \textsc{ExploreOne} on $\mathcal{J}(m,c,q,\delta)$ is at least
    $$\textsc{Reg}(\textsc{ExploreOne}) \geq \frac{c}{2}\cdot \frac{1-q-\delta}{q^2} - \Big(\frac{m}{q} + \frac{2c}{q^2} \Big) \exp\left\{-\frac{9}{32} mq \right\}.$$
\end{lemma}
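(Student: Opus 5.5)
The plan is to lower bound the regret of \textsc{ExploreOne} only on a high-probability ``good'' event $\mathcal{G}$, and to charge everything that happens off $\mathcal{G}$ to the exponential correction term. Throughout, $\tildew=q$ is the prior mean and $q=m^{-1/2}$.

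\textbf{Step 1: the good event.} Let $X$ be the number of entrants whose attraction parameter realizes to $1$, so $\expect[X]=mq=\sqrt{m}\ge 4c$ because $m\ge 16c^2$. By a multiplicative Chernoff bound with deviation $3/4$,
\[
\prob[X< mq/4]\le \exp\{-\tfrac{9}{32}mq\}.
\]
Let $\mathcal{G}=\{X\ge mq/4\}$. On $\mathcal{G}$ we have $X\ge c$, so $\textsc{OPT}=\frac{c}{c+1}$.

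\textbf{Step 2: epochs at level $j$ on $\mathcal{G}$.} Index epochs by the number $j\in\{0,\dots,c-1\}$ of entrants already revealed to equal $1$. While $j<c$ and unknown entrants remain, $\textsc{OPT}_t>\textsc{Rev}_t(S^\star_{(c)}(\mathcal{I}_t))$: an unrevealed entrant equals $1$ with probability $q>0$, and $1>q+\delta$. Hence \textsc{ExploreOne} keeps exploring. In such an epoch it offers one unknown entrant, the $j$ revealed ones, and $c-1-j$ incumbents of weight $q+\delta$ (revealed zeros rank below these). Write $A_j=j+(c-1-j)(q+\delta)$. The epoch then lasts $(A_j+q+1)/q$ rounds in expectation. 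As in the proof of Lemma~\ref{lemma:expression_ratio_reward_loss_time_gain}, its expected reward is $(A_j+q)/q$. Using the true $\textsc{OPT}=c/(c+1)$, the epoch regret is
\[
\frac{1}{q}\Big(\tfrac{c}{c+1}(A_j+q+1)-(A_j+q)\Big)=\frac{c-A_j-q}{(c+1)q}\ \ge\ \frac{(c-j)(1-q-\delta)}{(c+1)q}.
\]
Each explored entrant independently equals $1$ with probability $q$. So the number of epochs spent at level $j$ is geometric with mean $1/q$, provided enough entrants remain, which holds on $\mathcal{G}$. Summing over $j$ gives
\[
\sum_{j=0}^{c-1}\frac{(c-j)(1-q-\delta)}{(c+1)q^2}=\frac{c}{2}\cdot\frac{1-q-\delta}{q^2}.
\]

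\textbf{Step 3: the correction terms (main obstacle).} The main difficulty is that conditioning on $\mathcal{G}$ perturbs both the geometric epoch counts and the independence of the realizations. I would avoid conditioning altogether. Write the regret as the unconditional sum above minus its contribution on $\mathcal{G}^c$, and then bound both the $\mathcal{G}^c$ contribution and the possibly negative per-round regret on $\mathcal{G}^c$ crudely:
\begin{itemize}
\item there are at most $m$ epochs, and each lasts at most $(c+1)/q$ rounds in expectation with per-round regret of absolute value at most $1$, which gives the $\frac{m}{q}e^{-\frac{9}{32}mq}$ term;
\item the truncated level-$j$ sums, each at most $\frac{c}{q^2}$ in magnitude, give the $\frac{2c}{q^2}e^{-\frac{9}{32}mq}$ term.
\end{itemize}
Combining these with Step 2 yields the stated bound. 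The delicate part is to organize the coupling so that the geometric-count argument is applied to the unconditioned process, with only the $\mathcal{G}^c$ mass subtracted. One way is to reveal the entrants' realizations in exploration order, as an i.i.d.\ $\bern(q)$ sequence.
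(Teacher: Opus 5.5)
Your Steps 1--2 essentially match the paper. You restrict to an event with at least $c$ ones among the $m$ entrants (the paper uses $\mathcal{E}=\{X\ge c\}$; your $\mathcal{G}\subseteq\mathcal{E}$). You compute the level-$j$ epoch regret $\frac{c-A_j-q}{(c+1)q}\ge a_j:=\frac{(c-j)(1-q-\delta)}{(c+1)q}$ and sum $a_j/q$ over $j$. Writing the good-event contribution as the unconditional sum minus its bad-event part is also what the paper does inside Lemma~\ref{lemma:upper_lower_bounds_expectation_num_rounds_between_ones}.

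The gap is Step 3, which carries the real content and which you only assert. Let $\tilde\tau_1,\tilde\tau_2,\dots$ be the i.i.d.\ geometric gaps between successive ones in the exploration-order sequence, extended to an infinite i.i.d.\ Bernoulli$(q)$ sequence. Let $T_c=\sum_{i\le c}\tilde\tau_i$ and $\eta=\exp\{-\frac{9}{32}mq\}$. On the bad event these gaps are biased upward, so you need a quantitative bound on $\mathbb{E}[\tilde\tau_{j+1}\mathbf{1}_{\mathcal{G}^c}]$, and neither bullet supplies one. The numbers you attach also do not produce the stated constants:
\begin{itemize}
\item at most $m$ epochs of expected length $(c+1)/q$ cost $\frac{(c+1)m}{q}\eta$, not $\frac{m}{q}\eta$;
\item $c$ level sums each of size $\frac{c}{q^2}\eta$ cost $\frac{c^2}{q^2}\eta$, not $\frac{2c}{q^2}\eta$;
\item even the natural bound $\tilde\tau_{j+1}\le m+G$, with $G$ a fresh waiting time after position $m$, only gives $\sum_j a_j(m+\frac1q)\eta$, whose first part is of order $\frac{cm}{q}\eta$. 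That is too large by a factor of order $c$.
\end{itemize}

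Two ingredients close this.

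First, the per-round regret $\textsc{OPT}-\textsc{Rev}_t(S_t)$ is nonnegative for every realization on $\mathcal{J}(m,c,q,\delta)$. When exploring at level $j\le c-1$, the offered weight $j+q+(c-1-j)(q+\delta)$ is below $j+(c-j)(q+\delta)$, which is at most the ex-post optimal weight. When not exploring, only known products are offered. So the bad-event regret can simply be dropped, and your first bullet is unnecessary.

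Second, use exchangeability. The event $\mathcal{E}^c=\{T_c>m\}$ is symmetric in the i.i.d.\ gaps, so $\mathbb{E}[\tilde\tau_i\mid\mathcal{E}^c]=\frac1c\,\mathbb{E}[T_c\mid\mathcal{E}^c]$. On $\mathcal{E}^c$, $T_c\le m+{}$(time to collect $c$ ones after position $m$), and that waiting time is independent of the first $m$ trials. This gives $\mathbb{E}[\tilde\tau_i\mid\mathcal{E}^c]\le\frac mc+\frac1q$, hence $\mathbb{E}[\tilde\tau_i\mathbf{1}_{\mathcal{E}^c}]\le(\frac mc+\frac1q)\eta$.

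With these, your unconditional route gives $\textsc{Reg}(\textsc{ExploreOne})\ge\sum_{j}a_j\big(\frac1q-(\frac mc+\frac1q)\eta\big)\ge\frac c2\cdot\frac{1-q-\delta}{q^2}-\big(\frac mq+\frac{c}{q^2}\big)\eta$. This is slightly sharper than the stated bound, because conditioning on $\mathcal{E}$ costs the paper an extra factor $1-\eta$.
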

\begin{proof}[Proof of Proposition~\ref{prop:explore_one_at_a_time_c_factor}.]
Consider the instance $\mathcal{J}(m,c,q,\delta)$ with $m \geq 16c^2$, $q = m^{-\frac{1}{2}}$ and $\delta = m^{-\frac{1}{2}}$ where $m$ will be chosen later. By Lemmas \ref{lemma:orfa_regret_upper_bound} and \ref{lemma:exploreone_regret_lower_bound}, the ratio of the regrets of \textsc{ExploreOne} and \textsc{EFA} is at least
\begin{align*}\label{eq:lower_bound_regret_ratio}
    \frac{\textsc{Reg}(\textsc{ExploreOne})}{\textsc{Reg}(\textsc{EFA})} &\geq \frac{\frac{c}{2}\cdot \frac{1-q-\delta}{q^2} - \Big(\frac{m}{q} + \frac{2c}{q^2} \Big) \exp\left\{-\frac{9}{32} mq \right\}}{\frac{c}{c+1} \cdot \frac{1-q}{q^2} + \frac{c+1}{q}m \exp\left\{-\frac{9}{32} mq \right\}} = \frac{\frac{c}{2}(1-q-\delta) - \Big(mq + 2c \Big) \exp\left\{-\frac{9}{32} mq \right\}}{\frac{c}{c+1}  (1-q) + (c+1)mq  \exp\left\{-\frac{9}{32} mq \right\}} \nonumber\\
    &=\underbrace{\frac{\frac{c}{2} (1-2m^{-\frac{1}{2}}) - (m^{\frac{1}{2}} + 2 c) \exp\{-\frac{9}{32}m^{\frac{1}{2}}\}}{\frac{c}{c+1}(1-m^{-\frac{1}{2}})+ (c+1)m^{\frac{1}{2}} \exp\{-\frac{9}{32}m^{\frac{1}{2}}\} }}_{\eqqcolon \textsc{LB}(m,c)}. 
\end{align*}
where the first equality multiplies the numerator and denominator by $q^2$ and the last equality uses $q = m^{-\frac{1}{2}}$ and $\delta =  m^{-\frac{1}{2}}$.
Using that $\lim_{m \to \infty} m^{-\frac{1}{2}} = 0$, $ \lim_{m \to \infty} m^{\frac{1}{2}}\exp\{-\frac{9}{32}m^{\frac{1}{2}}\} = 0$, and $\lim_{m \to \infty} (m^{\frac{1}{2}} + 2c) \exp\{-\frac{9}{32}m^{\frac{1}{2}}\} = 0$ yields $\lim_{m \to \infty}\textsc{LB}(m,c) = \frac{\frac{c}{2}}{\frac{c}{c+1}} = \frac{c+1}{2}$. Thus, there exists $M > 0$ such that for $m > M$, $\textsc{LB}(m,c)> \frac{c}{2}$. Taking $m = \max(M + 1, 16c^2)$ yields the proposition.
\end{proof}

\subsection{Lower bound for the regret of EFA (Lemma~\ref{lemma:orfa_regret_upper_bound})}\label{appendix_subsec:lower_bound_regret_EFA_proof}
To prove Lemma~\ref{lemma:orfa_regret_upper_bound} and Lemma~\ref{lemma:exploreone_regret_lower_bound}, we derive two properties (Lemma~\ref{lemma:probability_bound_event_at_least_c_ones} and Lemma~\ref{lemma:upper_lower_bounds_expectation_num_rounds_between_ones}) which hold any sequence of $m$ i.i.d. $\bern(q)$ random variables. 
Let $V_i$ be independent $\bern(q)$ random variables for $i \in \{1, \ldots, m\}$. Let $\mathcal{E} = \left\{ \sum_{i=1}^m V_i \geq c \right\}$ be the event that there are at least $c$ ones amongst $\left\{V_i: i \in \{1, \ldots, m\} \right\}$ and $\mathcal{E}^{C}$ be its complement. The next lemma upper bounds the probability of $\mathcal{E}^{C}$ and thus shows that $\mathcal{E}$ happens with high probability. 

\begin{lemma}\label{lemma:probability_bound_event_at_least_c_ones}
    For any $m \geq 16c^2$ and $q = m^{-\frac{1}{2}}$, $\prob[\mathcal{E}^{C}] \leq \exp\left\{-\frac{9}{32} mq \right\}$.
\end{lemma}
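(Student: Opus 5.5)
Writing $X=\sum_{i=1}^m V_i$, the event $\mathcal{E}^C$ is $\{X\le c-1\}$. Here $X$ is binomial with mean $\mu=mq=m^{1/2}$, so I would apply the multiplicative Chernoff lower-tail bound
\[
\prob[X\le(1-\delta)\mu]\le\exp\{-\delta^2\mu/2\},\qquad \delta\in(0,1),
\]
with the fixed choice $\delta=3/4$. This choice matches the target exponent, since $\delta^2/2=(9/16)/2=9/32$.

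It then remains to check the threshold. I need $c-1\le(1-\delta)\mu=\mu/4$, because then $\{X\le c-1\}\subseteq\{X\le\mu/4\}$. By hypothesis $m\ge16c^2$, so $\mu=m^{1/2}\ge4c$ and hence $\mu/4\ge c>c-1$. The containment holds, and therefore
\[
\prob[\mathcal{E}^C]\le\prob[X\le\mu/4]\le\exp\left\{-\tfrac{9}{32}mq\right\}.
\]

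No step here is really an obstacle. The only point needing care is using the right form of the Chernoff bound. The version $\exp\{-\delta^2\mu/2\}$ is standard for sums of independent Bernoulli variables, valid for $0<\delta<1$ (and $\delta=3/4$ qualifies). It can be justified from the moment generating function, via $\expect[e^{-tX}]\le\exp\{\mu(e^{-t}-1)\}$ optimized at $t=-\ln(1-\delta)$, followed by the elementary inequality $(1-\delta)\ln(1-\delta)\ge-\delta+\delta^2/2$. I would cite this bound rather than re-derive it.
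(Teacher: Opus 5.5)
Your proof is correct and is essentially the paper's argument. Both apply the multiplicative Chernoff lower-tail bound $\exp\{-\delta^2 mq/2\}$ and use $mq = m^{1/2} \geq 4c$. The paper enlarges the event to $\{X \leq c\}$, takes $\delta = 1 - c/(mq)$, and lower-bounds $\delta$ by $3/4$; you instead fix $\delta = 3/4$ and use the containment $\{X \leq c-1\} \subseteq \{X \leq mq/4\}$, which is a cosmetic difference only.
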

\begin{proof}[Proof of Lemma~\ref{lemma:probability_bound_event_at_least_c_ones}.]
The probability of $\mathcal{E}^{C}$ is upper bounded by
    \begin{align*}
        \prob[\mathcal{E}^{C}] &= \prob \Big[\sum_{i=1}^{m} V_i \leq c-1\Big] =\prob[\textsc{Binomial}(m,q) \leq c-1] \leq \prob[\textsc{Binomial}(m,q) \leq c] \\
        &\leq \exp \Big\{-\frac{mq}{2}\Big(1 -\frac{c}{mq}\Big)^2\Big\}\leq \exp\Big\{-\frac{mq}{2}\Big(\frac{3}{4}\Big)^2\Big\} = \exp\Big\{-\frac{9}{32}mq \Big\}
    \end{align*}
where the second equality uses that $\sum_{i=1}^{m} V_i \sim \textsc{Binomial}(m,q)$, the second inequality uses the Chernoff bound, and the third inequality uses $1 -\frac{c}{mq} = 1-\frac{c}{m^{\frac{1}{2}}} \geq 1-\frac{c}{4c} = \frac{3}{4}$ as $m\geq 16c^2$. \end{proof}

Let $T_i$ be the index of the $i$-th one in the sequence $(V_1, V_2, \ldots, V_m)$ if the sequence contains at least $i$ ones and be $m$ otherwise for $i \in \{1, \ldots, c\}$. Let $\tau_i = T_{i}-T_{i-1}$ for $i \in \{1, \ldots, c\}$ be number of unknown entrants purchased between the $(i-1)$-th and $i$-th one (assume $T_0 = 0$ for convenience). The following lemma (proof in Appendix~\ref{appendix_subsec:proof_lemma_C11}) provides lower and upper bounds on $\expect[\tau_i|\mathcal{E}]$ and thus shows $\expect[\tau_i|\mathcal{E}]$ concentrates around $\frac{1}{q}$.

\begin{lemma}\label{lemma:upper_lower_bounds_expectation_num_rounds_between_ones}
    For any $m \geq 16c^2$, $q = m^{-\frac{1}{2}}$, and any $i \in \{1, \ldots, c\}$, $$\expect[\tau_i|\mathcal{E}] \geq \frac{1}{q} - \left(\frac{m}{c} + \frac{1}{q} \right)\exp \Big\{-\frac{9}{32}mq \Big\} \quad \text{and} \quad\expect[\tau_i|\mathcal{E}] \leq \frac{1}{q}.$$
\end{lemma}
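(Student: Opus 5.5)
The plan is to couple the finite sequence $(V_1,\ldots,V_m)$ with an infinite i.i.d.\ $\bern(q)$ sequence $(V_j)_{j\geq 1}$. Let $T'_i$ denote the index of the $i$-th one in this infinite sequence, and let $G_i = T'_i - T'_{i-1}$ with $T'_0 = 0$. By the strong Markov property, $G_1,\ldots,G_c$ are i.i.d.\ geometric with mean $\frac{1}{q}$. The event $\mathcal{E}$ is exactly $\{T'_c \leq m\} = \{G_1+\cdots+G_c \leq m\}$, and on $\mathcal{E}$ we have $T_i = T'_i$ for all $i \leq c$, so $\tau_i = G_i$. Hence $\expect[\tau_i|\mathcal{E}] = \expect[G_i | \mathcal{E}]$, and both bounds become statements about geometric variables.

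\textbf{Upper bound.} $G_i$ is an increasing function of the independent vector $(G_1,\ldots,G_c)$, while $\mathbf{1}_{\mathcal{E}} = \mathbf{1}\{\sum_j G_j \leq m\}$ is decreasing in each coordinate. The Harris (FKG) inequality for product measures therefore gives $\expect[G_i \mathbf{1}_{\mathcal{E}}] \leq \expect[G_i]\prob[\mathcal{E}]$. Dividing by $\prob[\mathcal{E}]$ yields $\expect[\tau_i|\mathcal{E}] \leq \frac{1}{q}$. One could instead argue directly with a one-dimensional monotone coupling, conditioning on the other gaps, but Harris is the cleanest route.

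\textbf{Lower bound.} Since $\prob[\mathcal{E}] \leq 1$,
$$\expect[\tau_i|\mathcal{E}] \geq \expect[G_i\mathbf{1}_{\mathcal{E}}] = \frac{1}{q} - \expect[G_i \mathbf{1}_{\mathcal{E}^C}].$$
The main step is to bound $\expect[G_i\mathbf{1}_{\mathcal{E}^C}]$. The $G_j$ are exchangeable and $\mathcal{E}^C$ is symmetric in them, so
$$\expect[G_i \mathbf{1}_{\mathcal{E}^C}] = \frac{1}{c}\expect\big[T'_c \mathbf{1}\{T'_c > m\}\big].$$
To bound this, condition on the first $m$ coordinates. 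On $\mathcal{E}^C$, exactly $k < c$ ones appear among $V_1,\ldots,V_m$. The remainder $T'_c - m$ depends only on $(V_j)_{j>m}$, which is independent of the first $m$ coordinates, and it is a sum of $c-k$ fresh geometrics. Its conditional mean is therefore at most $\frac{c}{q}$. This gives
$$\expect\big[T'_c\mathbf{1}_{\mathcal{E}^C}\big] \leq \Big(m + \frac{c}{q}\Big)\prob[\mathcal{E}^C],$$
and hence $\expect[G_i\mathbf{1}_{\mathcal{E}^C}] \leq \big(\frac{m}{c} + \frac{1}{q}\big)\prob[\mathcal{E}^C]$.

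Finally, Lemma~\ref{lemma:probability_bound_event_at_least_c_ones} gives $\prob[\mathcal{E}^C] \leq \exp\{-\frac{9}{32}mq\}$, which yields the stated lower bound. The step I expect to be the main obstacle is the bound on $\expect[T'_c\mathbf{1}_{\mathcal{E}^C}]$, namely justifying the conditional independence and memorylessness after index $m$ rigorously. The exchangeability reduction and the Harris step are routine.
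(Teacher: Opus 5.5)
Your proposal is correct and follows the paper's proof. The paper uses the same coupling to i.i.d.\ geometric gaps. Its lower bound uses the same exchangeability step and the same restart-after-index-$m$ bound $\frac{1}{c}\bigl(m + \frac{c}{q}\bigr)$.

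The only difference is the upper bound. You cite the Harris/FKG inequality, whereas the paper proves the one-dimensional special case $\mathbb{E}[X \mid X \le Y] \le \mathbb{E}[X]$ for independent $X, Y$ and applies it with $X$ the $i$-th gap and $Y = m - \sum_{j \neq i} \tilde{\tau}_j$. This is the ``condition on the other gaps'' alternative you mention yourself.
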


\begin{proof}[Proof of Lemma~\ref{lemma:orfa_regret_upper_bound}.]
The proof applies Lemma~\ref{lemma:probability_bound_event_at_least_c_ones} and Lemma~\ref{lemma:upper_lower_bounds_expectation_num_rounds_between_ones} where $V_i$ is the attraction parameter of the $i$-th purchased entrant under \textsc{EFA} and $\mathcal{E} = \{ \sum_{i=1}^m V_i \geq c\}$ is the event there are at least $c$ entrants with parameter one. Letting $\textsc{CondReg}(v_1, \ldots, v_m) = \expect \Big[\sum_{t=1}^{\infty} \textsc{OPT} - R_t \Big|(V_1, \ldots, V_m) = (v_1, \ldots, v_m) \Big]$ be the expected regret conditioned on a realization of the attraction parameters $\{V_i\}_{i=1}^m$, by the law of total expectation the regret of \textsc{EFA} is decomposed as
\begin{equation*}\label{eq:law_of_total_probability_regret_efa_with_respect_to_event_at_least_c_ones}
   \textsc{Reg}(\textsc{EFA}) =  \expect_{\substack{v_i \overset{\text{i.i.d.}}{\sim} \bern(q) \\ i \in \{1, \ldots, m\}}}[\textsc{CondReg}(v_1, \ldots, v_m)|\mathcal{E}]\prob[\mathcal{E}] +\expect_{\substack{v_i \overset{\text{i.i.d.}}{\sim} \bern(q) \\ i \in \{1, \ldots, m\}}}[\textsc{CondReg}(v_1, \ldots, v_m)|\mathcal{E}^{C}]\prob[\mathcal{E}^{C}].
\end{equation*}
\paragraph{Bad event.} The per-round regret is at most one and the purchase probability of an entrant in any assortment that contains at least one entrant is at least $\frac{q}{c+1}$. Thus, the expected regret of any epoch is at most $\frac{c+1}{q}$; given that there are at most $m$ epochs and $\prob[\mathcal{E}^{C}] \leq \exp\left\{-\frac{9}{32} mq \right\}$ by Lemma~\ref{lemma:probability_bound_event_at_least_c_ones}, the bad-event term is:
\begin{equation*}\label{ineq:upper_bound_freq_regret_orfa_less_than_c_ones}
    \expect_{\substack{v_i \overset{\text{i.i.d.}}{\sim} \bern(q) \\ i \in \{1, \ldots, m\}}}[\textsc{CondReg}(v_1, \ldots, v_m)|\mathcal{E}^{C}]\prob[\mathcal{E}^{C}] \leq \frac{c+1}{q}m \exp\left\{-\frac{9}{32} mq \right\}.
\end{equation*}

\paragraph{Good event.} For any round $t$ where $i$ of the entrants have realized parameters equal to $1$, given that there are $c$ incumbents with parameters $q + \delta$, the expected ex-post optimum is lower bounded by $\textsc{OPT}_t \geq \frac{i + (c-i) (q + \delta)}{i + (c-i) (q + \delta) + 1}$ and the fictitious assortment revenues at $\ell = c-i$ and $\ell = c-i+1$ unknown entrants are given by $\alpha_t(c-i+1) = \frac{c}{c+1}$ and $\alpha_t(c-i) =  \frac{i + (c-i) (q + \delta)}{i + (c-i) (q + \delta) + 1}$. As a result, when there are $i$ known entrants with parameters equal to $1$, \textsc{EFA} explores $c-i$ unknown entrants along with them (line~\ref{alg_line:set_number_products}, Algorithm~\ref{alg:optimal_explorer}); denote this assortment by $S^i$. Recall that there are $\tau_{i+1}$ epochs with $i$ known entrants with parameters equal to one. Letting $\tau(S^i)$ and $\rev(S^i)$ be the epoch length and revenue under $S^i$, the regret in the good event is at most
\begin{align*}
    & \expect_{\substack{v_i \overset{\text{i.i.d.}}{\sim} \bern(q) \\ i \in \{1, \ldots, m\}}}[\textsc{CondReg}(v_1, \ldots, v_m)|\mathcal{E}]\prob[\mathcal{E}] \leq \expect \left[\textsc{CondReg}(v_1, \ldots, v_m)\middle|\mathcal{E}\right] \nonumber\\
    &=  \expect \left[\sum_{i=0}^{c-1} \tau(S^i)\left(\textsc{OPT}-\rev(S^i) \right) \tau_{i+1} \middle|\mathcal{E}\right] = \expect \left[\sum_{i=0}^{c-1}\frac{(c-i)q + i + 1}{(c-i)q}\Big(\frac{c}{c+1}-\frac{(c-i)q+i }{(c-i)q+i+ 1} \Big) \tau_{i+1} \middle| \mathcal{E} \right] \nonumber\\
    &= \expect \left[\sum_{i=0}^{c-1}\frac{(c-i)q + i + 1}{(c-i)q}\frac{(c-i)(1-q)}{(c+1)((c-i)q+i+ 1)}  \tau_{i+1} \middle| \mathcal{E} \right] = \expect\left[\sum_{i=0}^{c-1}\frac{1-q}{(c+1)q}  \tau_{i+1} \middle| \mathcal{E} \right] \nonumber\\
    &= \sum_{i=0}^{c-1}\frac{1-q}{(c+1)q} \expect \left[ \tau_{i+1} \middle| \mathcal{E} \right] \leq \sum_{i=0}^{c-1}\frac{1-q}{(c+1)q} \frac{1}{q} = \frac{c}{c+1}\cdot \frac{1-q}{q^2}.
\end{align*}
The first inequality uses $\prob[\mathcal{E}]\leq 1$ and that the conditional regret is non-negative. The first equality expresses the conditional regret as a sum over all epoch regrets and uses that there are $\tau_{i+1}$ epochs with $i$ known entrants with parameters equal to $1$. The second equality uses $\tau(S^i) = \frac{(c-i)q + i + 1}{(c-i)q}$, $\rev(S^i) =\frac{(c-i)q+i }{(c-i)q+i+ 1} $, and $\opt = \frac{c}{c+1}$ as there are $c$ entrants with attraction parameters one (as $\mathcal{E}$ holds). The third equality uses the identity $\frac{A}{A+1} -\frac{B}{B+1} = \frac{A-B}{(A+1)(B+1)}$ for $(A,B) \in \{(c, (c-i)q + i)\}_{i=0}^{c-1}$. The second inequality uses $\expect\left[ \tau_{i+1} \middle| \mathcal{E} \right] \leq \frac{1}{q}$ by Lemma~\ref{lemma:upper_lower_bounds_expectation_num_rounds_between_ones}. Therefore, $\textsc{Reg}(\textsc{EFA}) \leq \frac{c}{c+1} \cdot \frac{1-q}{q^2} + \frac{c+1}{q}m \exp\left\{-\frac{9}{32}mq\right\}$.
\end{proof}

\subsection{Upper bound for the regret of ExploreOne (Lemma~\ref{lemma:exploreone_regret_lower_bound})}\label{appendix_subsec:proof_lemma_exploreone_regret_lower_bound}

\begin{proof}[Proof of Lemma~\ref{lemma:exploreone_regret_lower_bound}.]
The proof applies Lemma~\ref{lemma:probability_bound_event_at_least_c_ones} and Lemma~\ref{lemma:upper_lower_bounds_expectation_num_rounds_between_ones} where $V_i$ is the attraction parameter of the $i$-th purchased entrant under \textsc{ExploreOne} and $\mathcal{E} = \{ \sum_{i=1}^m V_i \geq c\}$ is the event there are at least $c$ entrants with parameter one. For a realization $\mathbf{v} = (v_1, \ldots, v_m)$ of the unknown entrants, we define $\textsc{CondReg}(\mathbf{v}) = \expect \Big[\sum_{t=1}^{\infty} \textsc{OPT} - R_t \Big|(V_1, \ldots, V_m) = \mathbf{v} \Big]$ be the expected regret conditioned on a realization of the attraction parameters $\{V_i\}_{i=1}^m$.
Using the law of total expectation and denoting the bad event probability by $\eta \coloneqq \exp\{-\tfrac{9}{32}mq\}$, the regret of \textsc{ExploreOne} is lower-bounded by 
\begin{align*}\label{eq:law_of_total_probability_regret_explore_one_with_respect_to_at_least_c_ones_event}
    \textsc{Reg}(\textsc{ExploreOne}) &= \expect_{\substack{v_i \overset{\text{i.i.d.}}{\sim} \bern(q) \\ i \in \{1, \ldots, m\}}}[\textsc{CondReg}(\mathbf{v})|\mathcal{E}]\prob[\mathcal{E}] +\expect_{\substack{v_i \overset{\text{i.i.d.}}{\sim} \bern(q) \\ i \in \{1, \ldots, m\}}}[\textsc{CondReg}(\mathbf{v})|\mathcal{E}^{C}]\prob[\mathcal{E}^{C}] \nonumber\\
    &\geq \expect_{\substack{v_i \overset{\text{i.i.d.}}{\sim} \bern(q) \\ i \in \{1, \ldots, m\}}}[\textsc{CondReg}(\mathbf{v})|\mathcal{E}]\prob[\mathcal{E}] \geq \expect_{\substack{v_i \overset{\text{i.i.d.}}{\sim} \bern(q) \\ i \in \{1, \ldots, m\}}}[\textsc{CondReg}(\mathbf{v})|\mathcal{E}](1-\eta).
\end{align*}
The first inequality uses that the conditional regret is non-negative; the second inequality holds by 
Lemma~\ref{lemma:probability_bound_event_at_least_c_ones}.

For any round $t$ with $i$ known entrants with parameters equal to $1$, $\textsc{ExploreOne}$ offers the assortment $S^i$ consisting of a single unknown entrant together $i$ known entrants with parameters $1$ and $c-i-1$ incumbents with parameters $q+\delta$. Recall that there are $\tau_{i+1}$ epochs with $i$ known entrants with parameters equal to $1$. Letting $\tau(S^i)$ and $\rev(S^i)$ be the epoch length and revenue under $S^i$, the regret when $\mathcal{E}$ holds is at least:
\begin{align*}
    &\expect \left[\textsc{CondReg}(\mathbf{v})\middle|\mathcal{E}\right] (1-\eta) =  \expect \left[\sum_{i=0}^{c-1} \tau(S^i)\left(\textsc{OPT}-\rev(S^i) \right) \tau_{i+1} \middle|\mathcal{E}\right](1-\eta) \nonumber\\
    &= \expect \left[\sum_{i=0}^{c-1} \frac{q + (c-i-1)(q + \delta) + i + 1}{q}\Big(\frac{c}{c+1} - \frac{q + (c-i-1)(q + \delta) + i }{q + (c-i-1)(q + \delta) + i + 1} \Big)\tau_{i+1} \middle|\mathcal{E}\right](1-\eta) \nonumber\\
    &= \expect \left[\sum_{i=0}^{c-1} \frac{ c-i-q - (c-i-1)(q + \delta)}{q(c+1)}\tau_{i+1} \middle|\mathcal{E}\right](1-\eta)  \geq \expect \left[\sum_{i=0}^{c-1} \frac{ (c-i)(1-q - \delta)}{q(c+1)}\tau_{i+1} \middle|\mathcal{E}\right] (1-\eta) \nonumber\\
    &\geq  \left[\sum_{i=0}^{c-1} \frac{ (c-i)(1-q - \delta)}{q(c+1)} \left(\frac{1}{q}  - \left(\frac{m}{c} + \frac{1}{q} \right)\eta \right) \right](1-\eta)
     \geq \frac{c}{2}\cdot \frac{1-q-\delta}{q^2}- \Big(\frac{m}{q}+\frac{2c}{q^2}\Big)\eta
\end{align*}
The first equality expresses the conditional regret as a sum over all epoch regrets and uses that there are $\tau_{i+1}$ epochs with $i$ known entrants with parameters equal to $1$. The second equality uses that $\tau(S^i) = \frac{q + (c-i-1)(q + \delta) + i + 1}{q}$, $\rev(S^i) = \frac{q + (c-i-1)(q + \delta) + i }{q + (c-i-1)(q + \delta) + i + 1} $, and $\opt = \frac{c}{c+1}$ as there are at least $c$ entrants with parameters one (as $\mathcal{E}$ holds). The third equality uses the identity $\frac{A}{A+1}-\frac{B}{B+1} = \frac{A-B}{(A+1)(B+1)}$ for $(A,B) \in \{(c, q + (c-i-1)(q + \delta) + i)\}_{i=0}^c$ and cancels the term $q + (c-i-1)(q + \delta) + i + 1$ from the numerator of the first term and the denominator of the second term. The first inequality uses $c-i-q - (c-i-1)(q + \delta) = (c-i)(1-q-\delta) + \delta \geq (c-i)(1-q-\delta)$. The second inequality uses $\expect \left[\tau_{i+1} \middle|\mathcal{E}\right] \geq \frac{1}{q} -\left(\frac{m}{c} + \frac{1}{q} \right)\eta $ by Lemma~\ref{lemma:upper_lower_bounds_expectation_num_rounds_between_ones}. The last inequality uses Lemma~\ref{lemma:inequality_algebraic_manipulation} (stated and proven below). 
\end{proof}

\begin{lemma}\label{lemma:inequality_algebraic_manipulation}
For any $m \in \mathbb{N}$, $c \in \mathbb{N}$, $q\in (0,1)$, and $\delta \in (0,1)$ with $q + \delta \in (0,1)$,
   $$ \underbrace{\sum_{i=0}^{c-1} \frac{ (c-i)(1-q - \delta)}{q(c+1)} \left(\frac{1}{q}  - \left(\frac{m}{c} + \frac{1}{q} \right)\eta \right) (1-\eta)}_{=\textsc{L}}\geq \frac{c}{2}\cdot \frac{1-q-\delta}{q^2}- \Big(\frac{m}{q}+\frac{2c}{q^2}\Big)\eta$$
\end{lemma}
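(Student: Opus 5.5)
The plan is a direct algebraic computation: first evaluate the sum over $i$ in closed form, then expand the product of the two $\eta$-dependent factors and discard the nonnegative $\eta^2$ term. Throughout, $\eta=\exp\{-\tfrac{9}{32}mq\}\in(0,1]$. The hypotheses $q+\delta\in(0,1)$ and $q\in(0,1)$ give $1-q-\delta>0$ and $1-\eta\ge 0$, so every coefficient we multiply through by is nonnegative.

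\emph{Step 1 (collapse the sum).} The only $i$-dependence in $\textsc{L}$ is the factor $(c-i)$, and $\sum_{i=0}^{c-1}(c-i)=\frac{c(c+1)}{2}$. Hence
$$\textsc{L}=A\Big(\frac{1}{q}-B\eta\Big)(1-\eta),\qquad A:=\frac{c}{2}\cdot\frac{1-q-\delta}{q},\quad B:=\frac{m}{c}+\frac{1}{q}.$$
Here $0\le A\le \frac{c}{2q}$.

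\emph{Step 2 (expand and drop the $\eta^2$ term).} We have
$$\Big(\frac1q-B\eta\Big)(1-\eta)=\frac1q-\frac{\eta}{q}-B\eta+B\eta^2\ \ge\ \frac1q-\eta\Big(B+\frac1q\Big),$$
since $B\eta^2\ge 0$. This holds whether or not $\frac1q-B\eta$ is positive, so no case split is needed. Multiplying by $A\ge 0$ gives
$$\textsc{L}\ \ge\ \frac{A}{q}-A\eta\Big(B+\frac1q\Big).$$
The first term equals $\frac{c}{2}\cdot\frac{1-q-\delta}{q^2}$, which is exactly the leading term of the claimed bound.

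\emph{Step 3 (bound the error term).} Using $A\le \frac{c}{2q}$ and $B+\frac1q=\frac mc+\frac2q$, we get
$$A\eta\Big(B+\frac1q\Big)\ \le\ \frac{c}{2q}\Big(\frac{m}{c}+\frac{2}{q}\Big)\eta=\Big(\frac{m}{2q}+\frac{c}{q^2}\Big)\eta\ \le\ \Big(\frac{m}{q}+\frac{2c}{q^2}\Big)\eta.$$
Combining this with Step 2 yields the lemma.

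The only point needing care is Step 2: the factor $\frac1q-B\eta$ may be negative, so one must not bound $(1-\eta)$ by $1$ directly. Full expansion avoids this issue, and the final estimate then holds with a factor-of-two slack.
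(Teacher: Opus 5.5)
Your proof is correct and follows essentially the same route as the paper: both expand $\bigl(\frac{1}{q}-(\frac{m}{c}+\frac{1}{q})\eta\bigr)(1-\eta)$ and discard the nonnegative $\eta^2$ term to obtain $\frac{1}{q}-(\frac{m}{c}+\frac{2}{q})\eta$. The only difference is cosmetic: you evaluate $\sum_{i=0}^{c-1}(c-i)=\frac{c(c+1)}{2}$ before bounding the error coefficient, whereas the paper bounds each $\frac{(c-i)(1-q-\delta)}{c+1}$ by $1$ term by term, which is why your error term comes out a factor of two smaller than required.
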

\begin{proof}
The left hand side is lower bounded by
    \begin{align*}
        &\textsc{L}\geq \sum_{i=0}^{c-1} \frac{ (c-i)(1-q - \delta)}{q^2(c+1)}  -  \sum_{i=0}^{c-1} \frac{ (c-i)(1-q - \delta)}{q(c+1)} \left(\frac{m}{c} + \frac{2}{q} \right)\eta \geq  \sum_{i=0}^{c-1} \frac{ (c-i)(1-q - \delta)}{q^2(c+1)} - \sum_{i=0}^{c-1} \frac{1}{q}\Big(\frac{m}{c} + \frac{2}{q} \Big)\eta \\
        &= \frac{1-q-\delta}{q^2(c+1)}\sum_{i=0}^{c-1} (c-i)- \frac{c}{q} \left(\frac{m}{c} + \frac{2}{q} \right)\eta = \frac{\frac{c(c+1)}{2}}{c+1} \frac{1-q-\delta}{q^2}- \left(\frac{m}{q} + \frac{2c}{q^2} \right)\eta = \frac{c}{2}\cdot \frac{1-q-\delta}{q^2}- \left(\frac{m}{q} + \frac{2c}{q^2} \right)\eta.
    \end{align*}
    The first inequality uses $\left(\frac{1}{q}  - \left(\frac{m}{c} + \frac{1}{q} \right)\eta \right) (1-\eta) \geq \frac{1}{q}  - \left(\frac{m}{c} + \frac{2}{q} \right)\eta$ and expands the difference $\left(\frac{1}{q}  - \left(\frac{m}{c} + \frac{2}{q} \right)\eta \right)$. The second inequality uses $-\frac{ (c-i)(1-q - \delta)}{q(c+1)} \geq -\frac{1}{q}$ as $\frac{(c-i)(1-q - \delta)}{c+1} \leq 1$ (since $q + \delta \in (0,1)$).
\end{proof}

\subsection{Concentration bound on expected time between two successes (Lemma~\ref{lemma:upper_lower_bounds_expectation_num_rounds_between_ones})}\label{appendix_subsec:proof_lemma_C11}
To prove Lemma~\ref{lemma:upper_lower_bounds_expectation_num_rounds_between_ones}, the following lemma establishes the expectation of any random variable decreases if we learn that it is smaller than an independent random variable. 
\begin{lemma}\label{lemma:if_X_and_Y_are independent_ev_of_X_condtioned_on_X_smaller_than_Y_is_at_most_the_ev_of_x}
    For any independent random variables $X$ and $Y$, 
    $\expect[X| X \leq Y] \leq \expect[X]$
\end{lemma}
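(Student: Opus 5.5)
We claim $\expect[X \mid X\le Y]\le \expect[X]$ for independent $X$ and $Y$. The plan is to write this as a covariance inequality between $X$ and the indicator $\mathbf{1}\{X\le Y\}$, and to exploit that this indicator is a non-increasing function of $X$ once $Y$ is integrated out.

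Throughout we assume $\prob[X\le Y]>0$, so that the conditional expectation is defined, and that $\expect[X]$ exists. In our application $X$ is a geometric-type waiting time, so it is nonnegative and integrable.

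Define $g(x)=\prob[Y\ge x]$. By independence, $\prob[X\le Y\mid X]=g(X)$. The function $g$ is non-increasing in $x$. We then have
\[
\expect[X\mathbf{1}\{X\le Y\}]=\expect\big[X\,g(X)\big] \quad\text{and}\quad \prob[X\le Y]=\expect[g(X)].
\]
The claim is therefore equivalent to
\[
\expect[X g(X)]\le \expect[X]\,\expect[g(X)],
\]
that is, $\mathrm{Cov}(X,g(X))\le 0$.

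The key step is Chebyshev's association inequality for one random variable. Let $X'$ be an independent copy of $X$. Since $g$ is non-increasing, $(X-X')(g(X)-g(X'))\le 0$ pointwise. Taking expectations gives
\[
2\expect[Xg(X)]-2\expect[X]\expect[g(X)]\le 0,
\]
which is exactly the desired covariance inequality. Dividing by $\expect[g(X)]=\prob[X\le Y]>0$ then yields $\expect[X\mid X\le Y]\le\expect[X]$.

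The only delicate point is integrability. Expanding the expectation of the product is legitimate when $X$ is integrable, because $g$ is bounded in $[0,1]$. If $X\ge 0$ and $\expect[X]=\infty$, the inequality holds trivially. I do not expect a genuine obstacle beyond stating these conventions.
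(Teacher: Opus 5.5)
Your proof is correct, but it takes a different route from the paper. The paper conditions on $Y$ rather than on $X$. It proves the stronger distributional statement that $X$ conditioned on $\{X\le Y\}$ is first-order stochastically dominated by $X$, i.e.\ $\prob[X\le a\mid X\le Y]\ge\prob[X\le a]$ for every $a$. To do so it writes $\prob[X\le a,\,X\le Y]$ as an average over $Y$ of $\min(F(a),F(Y))$, with $F$ the CDF of $X$, and applies the elementary bound $\min(F(a),F(y))\ge F(a)F(y)$. The expectation inequality is then read off from this dominance.

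You instead integrate out $Y$ to obtain the non-increasing function $g(x)=\prob[Y\ge x]$. This reduces the claim to $\mathrm{Cov}(X,g(X))\le 0$, which you prove by Chebyshev's association inequality using an independent copy of $X$.

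The paper's route yields the distributional conclusion directly, with a one-line key inequality. Your route goes straight to the expectation and makes the integrability conditions explicit. It also does not rely on a density $p_Y$, which the paper's integral notation implicitly assumes; in the application $Y=m-\sum_{j\neq i}\tilde\tau_j$ is integer-valued, so that integral should really be read as an expectation over $Y$.

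Nothing is lost in generality. Running your covariance argument with the non-decreasing function $\mathbf{1}\{X>a\}$ in place of $X$ gives $\prob[X>a,\,X\le Y]\le\prob[X>a]\,\prob[X\le Y]$. This is exactly the paper's stochastic-dominance inequality.
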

\begin{proof}[Proof of Lemma~\ref{lemma:if_X_and_Y_are independent_ev_of_X_condtioned_on_X_smaller_than_Y_is_at_most_the_ev_of_x}.]
It suffices to show that for any $a$, $\prob[X \leq a|X \leq Y] \geq \prob[X \leq a]$. This holds as:
\begin{align*}
    &\prob[X \leq a|X \leq Y] = \frac{\int_{y} \prob[X \leq a,X \leq y|Y = y] p_{Y}(y) dy}{\prob[X \leq Y]}= \frac{\int_{y} \prob[X \leq a,X \leq y] p_{Y}(y) dy}{\prob[X \leq Y]} \\
    &\geq \frac{\int_{y} \prob[X \leq a]\prob[X \leq y] p_{Y}(y) dy}{\prob[X \leq Y]} = \prob[X \leq a]\frac{\int_{y} \prob[X \leq y] p_{Y}(y) dy}{\prob[X \leq Y]}  = \prob[X \leq a]\frac{\prob[X \leq Y]}{\prob[X \leq Y]} = \prob[X \leq a]
\end{align*}
where the second and fourth equalities use that $X$ and $Y$ are independent, and the inequality uses that $\prob[X \leq a, X \leq y] = \min(\prob[X \leq a], \prob[X \leq y]) \geq \prob[X \leq a]\prob[X \leq y]$.
\end{proof}

\begin{proof}[Proof of Lemma~\ref{lemma:upper_lower_bounds_expectation_num_rounds_between_ones}.]
Let $\{\tilde{V}_i\}_{i=1}^{\infty}$ be i.i.d. $\bern(q)$ random variables coupled with $(V_1, \ldots, V_m)$ so that $V_i = \tilde{V}_i$ for $i \in \{1, \ldots, m\}$. Let $\tilde{T}_i \coloneq \min\{j: \sum_{\ell = 1}^{j} \tilde{V}_{\ell} = i\}$ and $\tilde{\tau}_i = \tilde{T}_{i} - \tilde{T}_{i-1}$ be the index of the $i$-th one and the number of indices between the $(i-1)$-th and the $i$-th one in the sequence $\{\tilde{V}_i\}_{i=1}^{\infty}$ for $i \in \mathbb{N}$. The random variables $\{\tilde{\tau}_i\}_{i=1}^{\infty}$ are i.i.d. $\text{Geometric}(q)$ as $\{\tilde{V}_i\}_{i=1}^{\infty}$ are i.i.d. $\bern(q)$. The event that there are at at least $c$ ones in $(V_1, \ldots, V_m)$, $\mathcal{E}$, is exactly the event $\sum_{i=1}^{c} \tilde{\tau}_i \leq m$. Due to the coupling, conditioned on the event $\sum_{i=1}^{c} \tilde{\tau}_i \leq m$, $\tau_i = \tilde{\tau}_i$ for $i \in \{1, \ldots, c\}$. Hence, the conditional expectation in the lemma is 
\begin{equation}\label{eq:condtional_expectation_rewriting_in_terms_of_tilde_taus}
    \expect[\tau_i|\mathcal{E}] = \expect\left[\tau_i \middle|\sum_{i=1}^{c} \tilde{\tau}_i \leq m \right] =  \expect\left[\tilde{\tau}_i \middle|\sum_{i=1}^{c} \tilde{\tau}_i \leq m \right].
\end{equation}
The last conditional expectation is upper bounded by 
\begin{equation}\label{ineq:upper_bound_conditional_expectation} 
    \expect\left[\tilde{\tau}_i \middle| \sum_{j=1}^{c} \tilde{\tau}_j \leq m\right] = \expect\left[\tilde{\tau}_i \middle| \tilde{\tau}_i \leq m-\sum_{j \in \{1, \ldots, c\} \setminus \{i\}} \tilde{\tau}_j\right] \leq \expect\left[\tilde{\tau}_i\right] = \frac{1}{q}.
\end{equation}
where the inequality uses Lemma~\ref{lemma:if_X_and_Y_are independent_ev_of_X_condtioned_on_X_smaller_than_Y_is_at_most_the_ev_of_x} for the independent variables $X = \tau_j$ and $Y = m-\sum_{j \in \{1, \ldots, c\} \setminus \{i\}}\tilde{\tau}_j$, and the last equality uses $\expect\left[\tilde{\tau}_i\right] = \frac{1}{q}$ as $\tilde{\tau}_i \sim \text{Geometric}(q)$. Combining \eqref{eq:condtional_expectation_rewriting_in_terms_of_tilde_taus} and \eqref{ineq:upper_bound_conditional_expectation} shows the second part. 

Letting $\eta \coloneq \exp\left\{-\frac{9}{32} mq \right\}$, the conditional expectation is lower bounded by 
\begin{align}\label{ineq:lower_bound_condtional_expectation}
    \expect\left[\tilde{\tau}_i \middle| \sum_{j=1}^{c} \tilde{\tau}_j \leq m\right] &\geq \expect\left[\tilde{\tau}_i \middle| \sum_{j=1}^{c} \tilde{\tau}_j \leq m\right]\prob\left[\sum_{j=1}^{c} \tilde{\tau}_j \leq m\right] = \expect\left[\tilde{\tau}_i\right] - \expect\left[\tilde{\tau}_i \middle| \sum_{j=1}^{c} \tilde{\tau}_j > m\right] \prob\left[\sum_{j=1}^{c} \tilde{\tau}_j > m\right] \nonumber\\
    &\geq \expect\left[\tilde{\tau}_i\right] -\expect\left[\tilde{\tau}_i \middle| \sum_{j=1}^{c} \tilde{\tau}_j > m\right] \eta  = \frac{1}{q}-\expect\left[\tilde{\tau}_i \middle| \sum_{j=1}^{c} \tilde{\tau}_j > m\right] \eta \geq \frac{1}{q} - \left(\frac{m}{c} + \frac{1}{q}\right) \eta.
\end{align}
The first inequality uses $\prob\left[\sum_{j=1}^{c} \tilde{\tau}_j \leq m\right]  \leq 1$. The first equality uses the law of total expectation. The second inequality uses $\prob\left[\sum_{j=1}^{c} \tilde{\tau}_j > m\right] \leq \eta$ by Lemma~\ref{lemma:probability_bound_event_at_least_c_ones}. The second equality uses $\expect\left[\tilde{\tau}_i\right] = \frac{1}{q}$. To show the last inequality we show $\expect\left[\tilde{\tau}_i \middle| \sum_{j=1}^{c} \tilde{\tau}_j > m\right] \leq \frac{m}{c} + \frac{1}{q}$. To do so, let $\hat{T}_i = \min\{j \geq 1: \sum_{\ell=m+1}^{m+j}\tilde{V}_{\ell} = i\}$ and $\hat{\tau}_i = \hat{T}_{i}-\hat{T}_{i-1}$ be (respectively) the index of the first one after index $m$ and the number of indices between the $(i-1)$-th and the $i$-th one after $m$ for $i \in \mathbb{N}$. Letting $\tilde{V}_{:m} = \sum_{i=1}^m \tilde{V}_i$ be the number of ones in the first $m$ trials, this expectation is upper bounded by
\begin{align*}
       &\expect\left[\tilde{\tau}_i \middle| \sum_{j=1}^{c} \tilde{\tau}_j > m\right] = \frac{1}{c}\expect\left[\sum_{j=1}^{c} \tilde{\tau}_j \middle| \sum_{j=1}^{c} \tilde{\tau}_j > m\right] = \frac{1}{c}\expect\left[\sum_{j=1}^{c} \tilde{\tau}_j \middle| \tilde{V}_{:m} <c\right] = \frac{1}{c}\expect\left[m + \sum_{j=1}^{c-\tilde{V}_{:m}} \hat{\tau}_j \middle| \tilde{V}_{:m} <c\right]\\
       &\leq \frac{1}{c}\expect\left[m + \sum_{j=1}^{c} \hat{\tau}_j \middle| \tilde{V}_{:m} <c\right] = \frac{1}{c}\expect\left[m + \sum_{j=1}^{c} \hat{\tau}_j\right] = \frac{m}{c} + \frac{1}{q}.
\end{align*}
The first equality uses that $\tilde{\tau}_1, \ldots, \tilde{\tau}_c$ are i.i.d.. The second equality uses the event that there are less than $c$ ones in the first $m$ trials, $\tilde{V}_{:m} <c$ is exactly the event that $\sum_{j=1}^c \tilde{\tau}_j > m$. The third equality uses that  $\sum_{j=1}^{c} \tilde{\tau}_j  =m + \sum_{j=1}^{c-\tilde{V}_{:m}} \hat{\tau}_j$ since conditioned on there being less than $c$ ones in the first $m$ trials, the time until the $c$-th one equals to $m$ plus the time until $c-\tilde{V}_{:m}$ ones are reached starting from $m$. The fourth equality uses that $\{\hat{\tau}_j\}_{j=1}^c$ and $V_{:m}$ are independent. The fifth equality uses $\expect[\hat{\tau}_j] = \frac{1}{q}$. Combining \eqref{eq:condtional_expectation_rewriting_in_terms_of_tilde_taus} and \eqref{ineq:lower_bound_condtional_expectation} yields the first part and completes the proof.
\end{proof}

\subsection{Assortment-size-factor optimality of ExploreOne (Remark~\ref{remark:explore_one_at_a_time_c_factor_upper_bound})}\label{appendix_subsec:explore_one_at_a_time_c_factor_upper_bound_proof}
The following proposition establishes that the regret of \textsc{ExploreOne} is at most $c$ times larger than the regret of \textsc{EFA}. This shows that the lower bound of $\frac{c}{2}$ on the ratio $\frac{\textsc{Reg}(\textsc{ExploreOne})}{\textsc{Reg}(\textsc{EFA})}$ established in Proposition~\ref{prop:explore_one_at_a_time_c_factor} is optimal up to a factor of two. 

\begin{proposition}\label{prop:explore_one_at_most_c_orfa}
    For any instance with $w^{\star}_{(c)}(\mathcal{I}_1) > \tildew$, $\textsc{Reg}(\textsc{ExploreOne}) \leq c\cdot\textsc{Reg}(\textsc{EFA})$.
\end{proposition}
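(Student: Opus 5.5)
The plan is to compare the two policies epoch by epoch along a common state process, and to bound ExploreOne's epoch regret by $c$ times EFA's epoch regret in every state.

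\emph{Step 1 (common decomposition).} Both \textsc{ExploreOne} and \textsc{EFA} are stationary in the sense of Lemma~\ref{lemma:epoch_optimal_policy_is_optimal}.
\begin{itemize}
\item They use the same exploration-worthy test $\textsc{OPT}_t > \textsc{Rev}_t(S^{\star}_{(c)}(\mathcal{I}_t))$.
\item In every epoch they offer a fixed assortment of the form $U_\ell \cup S^{\star}_{(c-\ell)}(\mathcal{I}_t)$, with $\ell=1$ for \textsc{ExploreOne} and $\ell=\ell_t$ for \textsc{EFA}.
\item The state at an epoch start is the multiset $J$ of known attraction parameters. After the epoch, a single revealed parameter $w\sim\mathcal{F}$ is appended to $J$, whichever entrant was purchased.
\end{itemize}
The state process is therefore the same Markov chain on $J$ under both policies. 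Repeating the one-epoch-lookahead argument of Lemmas~\ref{lemma:the_regret_of_pi_star_can_be_decomposed_into_the_first_epoch_regret_and_the_future_regret_general_form} and~\ref{lemma:the_regret_of_pi_star_can_be_decomposed_into_the_first_epoch_regret_and_the_future_regret} (by induction on the number of unknown entrants) gives
\[
\textsc{Reg}(\textsc{ExploreOne})=\expect\Big[\sum_{\text{epochs}} \textsc{EpochReg}^{\star}_t(1)\Big],\qquad
\textsc{Reg}(\textsc{EFA})=\expect\Big[\sum_{\text{epochs}} \textsc{EpochReg}^{\star}_t(\ell_t)\Big].
\]
Non-exploration rounds contribute zero to both, by Lemma~\ref{lemma:any_policy_has_a_nonnegative_regret_starting_from_a_terminal_history_and_pi_star_as_a_zero_regret_starting_from_a_terminal_history}. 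It therefore suffices to prove $\textsc{EpochReg}^{\star}_t(1)\le c\,\textsc{EpochReg}^{\star}_t(\ell)$ for every exploration-worthy state and every $\ell\in\{1,\dots,k_t\}$.

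\emph{Step 2 (closed form).} Write $X_\ell = W^{\star}_{(c-\ell)}(\mathcal{I}_t)+\ell\tildew$. Using $\tau^{\star}_t(\ell)=(X_\ell+w_0)/(\ell\tildew)$ and the expression for $r^{\star}_t(\ell)$ in the proof of Lemma~\ref{lemma:expression_ratio_reward_loss_time_gain}, we get
\[
\ell\tildew\,\textsc{EpochReg}^{\star}_t(\ell)=\textsc{OPT}_t\, w_0-(1-\textsc{OPT}_t)\,X_\ell .
\]
Next compare $X_1$ with $X_\ell$:
\[
X_1-X_\ell=\sum_{k=c-\ell+1}^{c-1} w^{\star}_{(k)}(\mathcal{I}_t)-(\ell-1)\tildew .
\]
Because $\mathcal{I}_t\supseteq\mathcal{I}_1$, we have $w^{\star}_{(k)}(\mathcal{I}_t)\ge w^{\star}_{(c)}(\mathcal{I}_1)>\tildew$ for all $k\le c$. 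Hence $X_1\ge X_\ell$. Since $\textsc{OPT}_t<1$, the closed form then gives
\[
\textsc{EpochReg}^{\star}_t(1)\le \ell\,\textsc{EpochReg}^{\star}_t(\ell).
\]

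\emph{Step 3 (sign issue).} The main obstacle is passing from $\ell\,\textsc{EpochReg}^{\star}_t(\ell)$ to $c\,\textsc{EpochReg}^{\star}_t(\ell)$, which requires $\textsc{EpochReg}^{\star}_t(\ell)\ge 0$. Epoch regret could in principle be negative if cold-start entrants inflated revenue. The hypothesis $w^{\star}_{(c)}(\mathcal{I}_1)>\tildew$ rules this out.
\begin{itemize}
\item Replacing known products by entrants of weight $\tildew$ only lowers the sum, so $\textsc{Rev}_t(U_\ell\cup S^{\star}_{(c-\ell)})\le \textsc{Rev}_t(S^{\star}_{(c)}(\mathcal{I}_t))$.
\item $\textsc{Rev}_t(S^{\star}_{(c)}(\mathcal{I}_t))\le \textsc{OPT}_t$, since the known top-$c$ set is feasible in every realization.
\end{itemize}
So the per-round regret, and hence the epoch regret, is nonnegative. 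Then $\ell\le k_t\le c$ yields $\textsc{EpochReg}^{\star}_t(1)\le c\,\textsc{EpochReg}^{\star}_t(\ell_t)$. Summing over epochs with the decomposition of Step 1 proves the proposition.
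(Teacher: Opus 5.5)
Your proposal is correct and follows the paper's proof. The paper likewise reduces to a per-epoch comparison, by induction on the number of unknown entrants, using that the revealed parameter is distributed as $\mathcal{F}$ and that the continuation regret depends only on the known parameters. It then shows $\textsc{EpochReg}_t^{\star}(1)\le \ell_t\,\textsc{EpochReg}_t^{\star}(\ell_t)\le c\,\textsc{EpochReg}_t^{\star}(\ell_t)$ from $w^{\star}_{(k)}(\mathcal{I}_t)\ge\tildew$ and $\textsc{OPT}_t<1$; your Step 3 usefully supplies the justification that the epoch regret is nonnegative, which the paper only asserts.
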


To prove the proposition, we need several auxiliary lemmas. The first lemma establishes that the regret of \textsc{ExploreOne} given any terminal history is zero, similar to the second part of Lemma~\ref{lemma:any_policy_has_a_nonnegative_regret_starting_from_a_terminal_history_and_pi_star_as_a_zero_regret_starting_from_a_terminal_history} in Appendix~\ref{appendix_sec:there_exists_optimal_policy_which_is_stationary}. 

\begin{lemma}\label{lemma:analogue_lemma_B1_exploreone}
    For any terminal history $\mathcal{H}$, $\textsc{Reg}(\textsc{ExploreOne};\mathcal{H}) = 0$.
\end{lemma}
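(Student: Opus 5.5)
The plan is to mirror the proof of the second part of Lemma~\ref{lemma:any_policy_has_a_nonnegative_regret_starting_from_a_terminal_history_and_pi_star_as_a_zero_regret_starting_from_a_terminal_history}, since on terminal histories \textsc{ExploreOne} behaves exactly like $\pi^{\star}$. Fix a terminal starting history $\mathcal{H}_1=\mathcal{H}$. The first step is to show by induction on $t$ that every history $\mathcal{H}_t$ reached by \textsc{ExploreOne} is terminal and that \textsc{ExploreOne} offers $S^{\star}_{(c)}(\mathcal{I}_t)$ at round $t$.

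\textbf{Base case and induction.} Since $\mathcal{H}$ is terminal, $\textsc{OPT}(\mathcal{H}) = \max_{S \subseteq I(\mathcal{H}), |S|\le c}\textsc{Rev}(S;\mathcal{H})$. Hence the exploration condition $\textsc{OPT}_t > \textsc{Rev}_t(S^{\star}_{(c)}(\mathcal{I}_t))$ fails, and \textsc{ExploreOne} offers $S^{\star}_{(c)}(\mathcal{I}_t)$, which contains only known products. No unknown entrant is ever offered, so none is purchased and the set of known products never changes. One can also simply invoke Lemma~\ref{lemma:if_H_1_is_terminal_H_t_is_terminal_with_probability_1}, which holds for any policy, to get that $\mathcal{H}_t$ is terminal with probability one. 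In either case, terminality propagates, and \textsc{ExploreOne} offers the myopic optimum of known products at every round.

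\textbf{Per-round regret.} Next, compute the expected per-round regret conditionally on $\mathcal{H}_t$:
$$\expect[\textsc{OPT}-R_t \mid \mathcal{H}_t] = \textsc{OPT}(\mathcal{H}_t) - \max_{S\subseteq I(\mathcal{H}_t),|S|\le c}\textsc{Rev}(S;\mathcal{H}_t) = 0.$$
This uses the tower property $\expect[\textsc{OPT}\mid\mathcal{H}_t]=\textsc{OPT}(\mathcal{H}_t)$, the fact that $R_t$ has conditional mean $\textsc{Rev}(S_t;\mathcal{H}_t)$, and terminality of $\mathcal{H}_t$. Summing over $t\le T$ and taking the $\liminf$ then gives $\textsc{Reg}(\textsc{ExploreOne};\mathcal{H})=0$.

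The only delicate point is to check that the exploration-worthy test of \textsc{ExploreOne} coincides exactly with non-terminality. Terminality means $\textsc{OPT}(\mathcal{H})=\max_{S\subseteq I(\mathcal{H})}\textsc{Rev}$, and this maximum equals $\textsc{Rev}(S^{\star}_{(c)}(\mathcal{I}_t))$ because all rewards equal $1$ and the revenue is increasing in total attraction. With that identification the argument is routine.
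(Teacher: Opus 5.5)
Your proposal is correct and takes essentially the same route as the paper. Both use Lemma~\ref{lemma:if_H_1_is_terminal_H_t_is_terminal_with_probability_1} to propagate terminality and note that \textsc{ExploreOne} then always offers the myopic optimum $S^{\star}_{(c)}(\mathcal{I}_t)$. The only difference is cosmetic: the paper observes that from $\mathcal{H}$ onward \textsc{ExploreOne} coincides with $\pi^{\star}$ and cites the zero-regret part of Lemma~\ref{lemma:any_policy_has_a_nonnegative_regret_starting_from_a_terminal_history_and_pi_star_as_a_zero_regret_starting_from_a_terminal_history}, whereas you re-derive the zero per-round regret inline.
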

\begin{proof}[Proof of Lemma~\ref{lemma:analogue_lemma_B1_exploreone}.]
Given that $\mathcal{H}_1 = \mathcal{H}$ is terminal,  Lemma~\ref{lemma:if_H_1_is_terminal_H_t_is_terminal_with_probability_1} implies that $\mathcal{H}_t$ for any round $t$ is terminal. Thus, \textsc{ExploreOne} offers the myopic revenue-maximizing assortment at any round $t$ starting from $\mathcal{H}_1 = \mathcal{H}$. As a result, starting from $\mathcal{H}$, \textsc{ExploreOne} is the same policy as the policy $\pi^{\star}$ in Lemma~\ref{lemma:epoch_optimal_policy_is_optimal}. Therefore, $\textsc{Reg}(\textsc{ExploreOne};\mathcal{H}) = \textsc{Reg}(\pi^{\star};\mathcal{H})  = 0$, where the last equality uses Lemma~\ref{lemma:any_policy_has_a_nonnegative_regret_starting_from_a_terminal_history_and_pi_star_as_a_zero_regret_starting_from_a_terminal_history}.
\end{proof}

Recall that $\mathcal{E}(\mathcal{H}) = \{S \subseteq \mathcal{N}: |S| \leq c, S \cap (\mathcal{N} \setminus I(\mathcal{H})) \neq \emptyset \}$ if the set of all assortments which contain at least one unknown entrant given a history $\mathcal{H}$ and that $\textsc{EpochReg}(S;\mathcal{H})$ is the expected regret during the epoch if an assortment $S \in \mathcal{E}(\mathcal{H})$ is offered repeatedly. The following lemma shows that the epoch regret of \textsc{ExploreOne} is at most $c$ times larger than the epoch regret of \textsc{EFA}. Recall that $U_{\ell}$ is any set of $\ell$ unknown entrants and let $S^{\star}_{(c-\ell)}(\mathcal{H})$ be a set of the $c-\ell$ most attractive known products given $\mathcal{H}$. 

\begin{lemma}\label{lemma:expoch_regret_exploreone_at_most_c_epoch_regret_orfa_any_history}
For any instance with $w^{\star}_{(c)}(\mathcal{I}_1) > \tildew$ and any non-terminal history $\mathcal{H}$,
    $$\textsc{EpochReg}(S^{\star}_{(c-1)}(\mathcal{H}) \cup U_1;\mathcal{H}) \leq c \min\limits_{S \in \mathcal{E}(\mathcal{H})} \textsc{EpochReg}(S;\mathcal{H}).$$
\end{lemma}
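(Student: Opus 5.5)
The plan is to reduce the right-hand side to a one-parameter family, using the structure already established for \textsc{EFA}, and then compare closed-form expressions. Write $h$ for the cold-start parameter $\tildew$, $W_k = W^{\star}_{(k)}(I(\mathcal{H}))$ and $\textsc{OPT}=\textsc{OPT}(\mathcal{H})$. By Lemma~\ref{lemma:optimal_completion_given_ell_unknown_prods}, for each fixed number $\ell$ of unknown entrants, the epoch-regret-minimizing completion is $S^{\star}_{(c-\ell)}(\mathcal{H})$. Hence
$$\min_{S\in\mathcal{E}(\mathcal{H})}\textsc{EpochReg}(S;\mathcal{H})=\min_{1\le \ell\le k}\textsc{EpochReg}^{\star}(\ell),$$
where $k$ is the minimum of $c$ and the number of unknown entrants. (Lemma~\ref{lemma:num_unknown_ORFA_minimizes_epoch_reg} identifies which $\ell$ attains the minimum, but we will not need that.) The left-hand side is $\textsc{EpochReg}^{\star}(1)$. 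So it suffices to show $\textsc{EpochReg}^{\star}(1)\le c\,\textsc{EpochReg}^{\star}(\ell)$ for every $\ell\in\{1,\dots,k\}$.

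Next, I will compute $\textsc{EpochReg}^{\star}(\ell)$ explicitly using the formulas from the proof of Lemma~\ref{lemma:expression_ratio_reward_loss_time_gain}. There $\tau^{\star}(\ell)=(W_{c-\ell}+\ell h+w_0)/(\ell h)$ and $r^{\star}(\ell)=W_{c-\ell}/(\ell h)+1$. These give
$$\textsc{EpochReg}^{\star}(\ell)=\frac{g(\ell)}{\ell h},\qquad g(\ell):=\textsc{OPT}\,(W_{c-\ell}+\ell h+w_0)-(W_{c-\ell}+\ell h)=\textsc{OPT}\,w_0-(1-\textsc{OPT})(W_{c-\ell}+\ell h).$$
The target inequality becomes $\ell\, g(1)\le c\, g(\ell)$. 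I will establish it by chaining two facts, $\ell g(1)\le c g(1)$ and $g(1)\le g(\ell)$.

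\textbf{Fact 1: $g(1)\ge 0$.} The identity $g(1)=(W_{c-1}+h+w_0)\big(\textsc{OPT}-\textsc{Rev}(U_1\cup S^{\star}_{(c-1)}(\mathcal{H});\mathcal{H})\big)$ reduces this to comparing $\textsc{OPT}$ with a revenue. Since $I_1\subseteq I(\mathcal{H})$, we have $w^{\star}_{(c)}(I(\mathcal{H}))\ge w^{\star}_{(c)}(\mathcal{I}_1)>h$. So the assortment $U_1\cup S^{\star}_{(c-1)}$ earns at most the myopic revenue of the top $c$ known products. Because $\mathcal{H}$ is non-terminal, that myopic revenue is strictly below $\textsc{OPT}$. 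Therefore $g(1)>0$, and since $\ell\le c$ we get $\ell g(1)\le c g(1)$.

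\textbf{Fact 2: $g(\ell)$ is nondecreasing in $\ell$, so $g(1)\le g(\ell)$.} Because $1-\textsc{OPT}>0$, this is equivalent to $W_{c-\ell}+\ell h\le W_{c-1}+h$. That in turn says $(\ell-1)h\le\sum_{j=c-\ell+1}^{c-1}w^{\star}_{(j)}(I(\mathcal{H}))$, which holds because each of those $\ell-1$ known parameters is at least $w^{\star}_{(c)}(I(\mathcal{H}))>h$.

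Combining the two facts gives $\ell g(1)\le c g(1)\le c g(\ell)$, which is the claim. The step I expect to need the most care is the sign argument in Fact 1. Without $g(1)\ge0$, multiplying by $c/\ell\ge1$ could reverse the inequality. The positivity rests precisely on the hypothesis $w^{\star}_{(c)}(\mathcal{I}_1)>\tildew$, propagated to later histories by monotonicity of the known set, together with non-terminality of $\mathcal{H}$.
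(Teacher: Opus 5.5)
Your proof is correct and is essentially the paper's argument. The paper also writes $\ell$ times the epoch regret of the best $\ell$-entrant assortment in the closed form you call $g(\ell)/h$. It uses $c\ge \ell$ together with a sign condition to do the scaling. It then replaces $\ell-1$ copies of the cold-start parameter by known weights at least as large.

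The differences are minor. The paper works only at EFA's minimizer $\ell^\star$, invoking Lemmas~\ref{lemma:optimal_completion_given_ell_unknown_prods} and~\ref{lemma:num_unknown_ORFA_minimizes_epoch_reg}, and simply asserts that the epoch regret is nonnegative. You prove the bound for every $\ell$ using only the completion lemma, and you justify the needed sign explicitly: $g(1)>0$ follows from non-terminality and from $w^{\star}_{(c)}(\mathcal{I}_1)$ exceeding the cold-start parameter.
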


\begin{proof}[Proof of Propositon~\ref{prop:explore_one_at_most_c_orfa}]
We show that $\textsc{Reg}(\textsc{ExploreOne}; \mathcal{H}) \leq c \textsc{Reg}(\textsc{EFA}; \mathcal{H})$ for any history $\mathcal{H}$ by induction on the number of unknown entrants given $\mathcal{H}$.

\paragraph{Base Case.} The history $\mathcal{H}$ contains no unknown entrants; thus $\mathcal{H}$ is terminal. As a result, it holds that $\textsc{Reg}(\textsc{ExploreOne};\mathcal{H})  = 0$ by Lemma~\ref{prop:explore_one_at_most_c_orfa} and $\textsc{Reg}(\textsc{EFA};\mathcal{H})  = 0$ by Lemma~\ref{lemma:any_policy_has_a_nonnegative_regret_starting_from_a_terminal_history_and_pi_star_as_a_zero_regret_starting_from_a_terminal_history}. 

\paragraph{Induction hypothesis.} Suppose that $\textsc{Reg}(\textsc{ExploreOne}; \mathcal{H}') \leq c \textsc{Reg}(\textsc{EFA}; \mathcal{H}')$ for any history $\mathcal{H}'$ with at most $k$ unknown entrants. 

\paragraph{Induction Step.} Let $\mathcal{H}$ be a history with $k+1$ unknown entrants.  If $\mathcal{H}$ is terminal, then $\textsc{Reg}(\textsc{ExploreOne};\mathcal{H})  = 0$ by Lemma~\ref{prop:explore_one_at_most_c_orfa} and $\textsc{Reg}(\textsc{EFA};\mathcal{H})  = 0$ by Lemma~\ref{lemma:any_policy_has_a_nonnegative_regret_starting_from_a_terminal_history_and_pi_star_as_a_zero_regret_starting_from_a_terminal_history}. Suppose $\mathcal{H}$ is not terminal. Let $Z$ be the first round at which an unknown entrant is purchased under \textsc{ExploreOne} if it starts from $\mathcal{H}$. Given that an unknown entrant is purchased with probability at least $\frac{\tildew}{(c+1)w_{\text{max}}}$ where $w_{\text{max}} = \max\{w_i(\mathcal{H}): i \in \mathcal{N} \cup \{0\}\}$, then $\expect[Z] \leq \frac{(c+1)w_{\text{max}}}{\tildew}$. As a result, $\prob[Z < \infty] = 1$. Thus, the regret of \textsc{ExploreOne} is upper bounded by 
\begin{align*}
      &\textsc{Reg}(\textsc{ExploreOne};\mathcal{H}) = \textsc{EpochReg}(S^{\star}_{(c-1)}(\mathcal{H}) \cup U_1;\mathcal{H}) \\
    &\hspace{3.6cm}+ \expect\limits_{\mathcal{H}_{Z}, S_{Z}, Y_{Z}, w_{Y_{Z}}} [\textsc{Reg}(\textsc{ExploreOne};\mathcal{H}_{Z} \cup (S_{Z}, Y_{Z}, w_{Y_{Z}}))|\mathcal{H}_1 = \mathcal{H}]\\
    &\leq \textsc{EpochReg}(S^{\star}_{(c-1)}(\mathcal{H}) \cup U_1;\mathcal{H}) + c\cdot \expect\limits_{\mathcal{H}_{Z}, S_{Z}, Y_{Z}, w_{Y_{Z}}} [\textsc{Reg}(\textsc{EFA};\mathcal{H}_{Z} \cup (S_{Z}, Y_{Z}, w_{Y_{Z}}))|\mathcal{H}_1 = \mathcal{H}]\\
    &= \textsc{EpochReg}(S^{\star}_{(c-1)}(\mathcal{H}) \cup U_1;\mathcal{H}) + c\cdot \expect\limits_{w \sim \mathcal{F}} [\textsc{Reg}(\textsc{EFA}, J(\mathcal{H}) \cup \{w\})]\\
    &\leq c \cdot \left(\min\limits_{S \in \mathcal{E}(\mathcal{H})} \textsc{EpochReg}(S;\mathcal{H}) +  \expect\limits_{w \sim \mathcal{F}} [\textsc{Reg}(\textsc{EFA}, J(\mathcal{H}) \cup \{w\})] \right) = c \cdot \textsc{Reg}(\textsc{EFA};\mathcal{H}).
\end{align*}
The first equality uses $Z < \infty$ almost surely and decomposes the regret of \textsc{ExploreOne} into the first epoch regret and the future regret. The first inequality uses $\textsc{Reg}(\textsc{ExploreOne};\mathcal{H}_{Z} \cup (S_{Z}, Y_{Z}, w_{Y_{Z}})) \leq \textsc{Reg}(\textsc{EFA};\mathcal{H}_{Z} \cup (S_{Z}, Y_{Z}, w_{Y_{Z}}))$ by the induction hypothesis as $\mathcal{H}_{Z} \cup (S_{Z}, Y_{Z}, w_{Y_{Z}})$ has $k$ unknown entrants. The second equality uses $\textsc{Reg}(\textsc{EFA};\mathcal{H}_{Z} \cup (S_{Z}, Y_{Z}, w_{Y_{Z}})) = \textsc{Reg}(\textsc{EFA}, J(\mathcal{H}) \cup \{w_Y\})$ by Lemma~\ref{lemma:the_regret_of_pi_star_depends_on_H_only_through_J_H}, noting that the set of known products' attraction parameters is augmented with $w_{Y}$ after the entrant's purchase, and that $w_{Y} \sim \mathcal{F}$. The second inequality uses $\textsc{EpochReg}(S^{\star}_{(c-1)}(\mathcal{H}) \cup U_1;\mathcal{H}) \leq c \min\limits_{S \in \mathcal{E}(\mathcal{H})} \textsc{EpochReg}(S;\mathcal{H})$ by Lemma~\ref{lemma:expoch_regret_exploreone_at_most_c_epoch_regret_orfa_any_history}. The last equality uses Lemma~\ref{lemma:the_regret_of_pi_star_can_be_decomposed_into_the_first_epoch_regret_and_the_future_regret}. 
\end{proof}

\begin{proof}[Proof of Lemma~\ref{lemma:expoch_regret_exploreone_at_most_c_epoch_regret_orfa_any_history}.]
To prove the lemma, we first derive an expression for the epoch regret of an arbitrary assortment $S =  S^{K}  \cup U_{\ell'}$ consisting of $\ell'$ unknown entrants and a subset of known products $S^{K}$. Any known product $i \in S^{K}$ is purchased $\frac{w_i(\mathcal{H})}{\ell' \tildew}$ rounds in expectation yielding a regret of $\opt-1$ per round. The outside option is purchased $\frac{w_0}{\ell' \tildew}$ rounds in expectation yielding a regret of $\opt$ per round. An unknown entrant is purchased once at the end of the epoch yielding regret $\opt-1$. Hence, the epoch regret of $S$ is 
\begin{equation}\label{eq:epoch_reg_expression_general}
    \textsc{EpochReg}(S;\mathcal{H}) = \sum_{i \in S^K} \frac{w_i(\mathcal{H})}{\ell' \tildew} (\opt-1) + \frac{w_0}{\ell' \tildew} \opt + (\opt-1).
\end{equation}
Letting $\ell^{\star} \in \{1, \ldots, c\}$ be the optimal number of unknown entrants explored by \textsc{EFA}, \textsc{EFA} offers assortment $S^{\star}_{(c-\ell^{\star})}(\mathcal{H}) \cup U_{\ell^{\star}}$ during the epoch. As a result, the $c$-scaled epoch regret of \textsc{EFA} is lower bounded by 
\begin{align*}
    &c \cdot \min\limits_{S \in \mathcal{E}(\mathcal{H})} \textsc{EpochReg}(S;\mathcal{H}) = c\cdot \textsc{EpochReg}(S^{\star}_{(c-\ell^{\star})}(\mathcal{H}) \cup U_{\ell^{\star}};\mathcal{H}) \geq \ell^{\star}\cdot \textsc{EpochReg}(S^{\star}_{(c-\ell^{\star})}(\mathcal{H}) \cup U_{\ell^{\star}};\mathcal{H}) \\
    &= \ell^{\star} \cdot \left( \sum_{k=1}^{c-\ell^{\star}} \frac{w_{(k)}(\mathcal{H})}{\ell^{\star} \tildew} (\opt-1) + \frac{w_0}{\ell^{\star} \tildew} \opt + (\opt-1)\right) \\
    &= \sum_{k=1}^{c-\ell^{\star}} \frac{w_{(k)}(\mathcal{H})}{\tildew} (\opt-1) + \frac{w_0}{\tildew} \opt + (\ell^{\star}-1)(\opt-1) + (\opt-1) \\   
    &\geq \sum_{k=1}^{c-1} \frac{w_{(k)}(\mathcal{H})}{\tildew} (\opt-1) + \frac{w_0}{\tildew} \opt + (\opt-1) = \textsc{EpochReg}(S^{\star}_{(c-1)}(\mathcal{H}) \cup U_1;\mathcal{H}).
\end{align*}
The first equality uses $\min\limits_{S \in \mathcal{E}(\mathcal{H})} \textsc{EpochReg}(S;\mathcal{H})  = \textsc{EpochReg}(S^{\star}_{(c-\ell^{\star})}(\mathcal{H}) \cup U_{\ell^{\star}};\mathcal{H})$ as \textsc{EFA} minimizes epoch regret. The first inequality uses that $c \geq \ell^{\star}$ and the the epoch regret is non-negative. The second equality uses \eqref{eq:epoch_reg_expression_general} for $S = S^{\star}_{(c-\ell^{\star})}(\mathcal{H}) \cup U_{\ell^{\star}}$. The second inequality uses that $(\ell^{\star}-1)(\opt-1) \geq \sum_{k= c-\ell^{\star}+1}^{c} \frac{w_{(k)}(\mathcal{H})}{\tildew} (\opt-1)$ as $\frac{w_{(k)}(\mathcal{H})}{\tildew} \geq 1$ (since $w^{\star}_{(c)}(\mathcal{I}_1) \geq \tildew$) and $\opt - 1 < 0$. The last equality uses \eqref{eq:epoch_reg_expression_general} for $S = S^{\star}_{(c-1)}(\mathcal{H}) \cup U_{1}$. This completes the lemma proof. 
\end{proof}

\section{Heterogeneous rewards (Section~\ref{subsec:heterogeneous_rewards})}

\subsection{Heterogeneous rewards and a single entrant (Theorem~\ref{thm:optimal_policy_different_rewards_single_entrant})}\label{appendix_subsec:optimal_policy_different_rewards_single_entrant}
We use the next lemma which is the analogue of Lemma~\ref{lemma:there_exists_an_optimal_policy_which_offers_the_same_assortment_until_the_entrant_is_purchased} in this heterogeneous-rewards model. 
\begin{lemma}\label{lemma:there_exists_an_optimal_policy_which_offers_the_same_assortment_until_the_entrant_is_purchased_different_rewards}
    For any instance with $\textsc{OPT}_1 > \textsc{Rev}^{\star}_1$, there exists an optimal policy which offers the entrant with a fixed set of incumbents until the entrant is purchased and offers the revenue-maximizing assortment in subsequent rounds. 
\end{lemma}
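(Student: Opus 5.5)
The plan mirrors the proof of Lemma~\ref{lemma:there_exists_an_optimal_policy_which_offers_the_same_assortment_until_the_entrant_is_purchased} in Appendix~\ref{appendix_sec:proof_lemma_there_exists_an_optimal_policy_which_offers_the_same_assortment_until_the_entrant_is_purchased}, with rewards weighted by $r_i$. Let $\alpha^\star = \textsc{OPT}_1$ be the expected ex-post optimum under heterogeneous rewards. Write $\textsc{Rev}_1(S) = \frac{\sum_{i\in S} r_i w_i(1)}{\sum_{i\in S} w_i(1)+w_0}$ for the instantaneous revenue of $S$ before the entrant is purchased. Let $Z^\pi$ be the round of the entrant's first purchase and $N^\pi(S)$ the number of exploration-period rounds in which $S$ is offered.

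\textbf{Step 1 (regret lower bound).} I would first prove the analogue of Lemma~\ref{lemma:lower_bound_regret_of_any_policy_total_regret_incurred_over_all_assortments_during_the_exploration_period}:
\[
\textsc{Reg}(\pi)\ge \sum_{|S|\le c}(\alpha^\star-\textsc{Rev}_1(S))\,\expect[N^\pi(S)].
\]
Truncate the regret at $Z^\pi$. This uses that after the purchase all parameters are known, so the per-round regret $\textsc{OPT}-\textsc{Rev}_t(S_t)\ge 0$. Before $Z^\pi$ the choice model is frozen, and $Z^\pi$ is independent of $w_1$, so $\expect[\sum_{t\le Z^\pi}\textsc{OPT}]=\alpha^\star\expect[Z^\pi]$.

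\textbf{Step 2 (upper bound for a fixed-set policy).} For a set $S'$ of at most $c-1$ incumbents, let $\tau(S')$ be the expected length of the exploration period and $r(S')$ the expected reward collected in it. Let $S^\star$ minimize $\alpha^\star\tau(S')-r(S')$ over such sets. Then $\pi(S^\star)$ has finite regret, bounded by $\tau(S^\star)\max_i r_i<\infty$. It has zero regret after exploration because it then offers the revenue-maximizing assortment under the realized $w_1$.

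\textbf{Step 3 (two cases).} First suppose $\expect[Z^\pi]=\infty$. Some set $\tilde S$ has $\expect[N^\pi(\tilde S)]=\infty$, and $\tilde S$ cannot contain the entrant, since the entrant is purchased with probability at least $q(\tilde S)>0$ each time it is offered. By the hypothesis $\textsc{OPT}_1>\textsc{Rev}^\star_1$, every entrant-free $\tilde S$ has $\alpha^\star-\textsc{Rev}_1(\tilde S)>0$, so $\textsc{Reg}(\pi)=\infty$.

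Now suppose $\expect[Z^\pi]<\infty$. Drop the entrant-free terms, which are nonnegative by the same hypothesis. Multiply and divide each remaining term by $q(S)=\tildew/(\sum_{i\in S\setminus\{1\}}w_i+\tildew+w_0)$, and use $\sum_{S\ni 1}q(S)\expect[N^\pi(S)]=1$, which holds because the entrant is eventually purchased exactly once. This gives
\[
\textsc{Reg}(\pi)\ge \min_{S\ni1}\frac{\alpha^\star-\textsc{Rev}_1(S)}{q(S)}=\min_{S'}\bigl(\alpha^\star\tau(S')-r(S')\bigr)=\textsc{Reg}(\pi(S^\star)).
\]
The identities needed are $\tau(S')=1/q(S')$ and $r(S')=\textsc{Rev}_1(S)/q(S)$.

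\textbf{Main obstacle.} The delicate step is the case $\expect[Z^\pi]=\infty$ together with the sign argument. In the homogeneous case, $\overline\theta>w^\star_{(c)}$ gave $\alpha^\star>\textsc{Rev}_1(S)$ for every entrant-free $S$. Here I would instead invoke $\textsc{OPT}_1>\textsc{Rev}^\star_1\ge\textsc{Rev}_1(S)$ for every $S\subseteq\mathcal I_1$. Entrant-free sets are exactly subsets of known products, so this covers them. The remaining checks are that sets containing the entrant need no sign condition, since they are handled through $q(S)$, and that equality of the two expressions for $\textsc{Reg}(\pi(S^\star))$ needs only that the regret after exploration is zero.
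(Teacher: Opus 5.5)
Your proof is correct, but it takes a different route from the paper's. The paper proves this lemma in one line: it calls it a special case of Lemma~\ref{lemma:epoch_optimal_policy_is_optimal_different_rewards}, with $\pi^\star$ taken to offer a fixed epoch-regret minimizer until the entrant is purchased. That lemma in turn rests on the whole Appendix~B induction over terminal histories, adapted to heterogeneous rewards. The adaptation includes the modified proof of Lemma~\ref{lemma:if_H_is_terminal_the_expected_opt_equals_the_maximum_instantaneous_revenue}, which needs the extra argument that $r$ exceeds the revenue of the known products in $S'$, and the new bound of Lemma~\ref{lemma:for_any_history_with_k_unknown_entrants_the_regret_of_pi_star_is_lower_bounded_and_upper_bounded_by_quantities_which_depend_on_k_diff_rewards}.

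You instead redo the direct single-entrant argument of Appendix~A (Lemmas~\ref{lemma:lower_bound_regret_of_any_policy_total_regret_incurred_over_all_assortments_during_the_exploration_period} and~\ref{lemma:the_regret_of_pi_S_star_is_bounded}) with reward-weighted revenues. The hypothesis enters only when you sign the entrant-free terms, and $\textsc{OPT}_1 > \textsc{Rev}^\star_1 \ge \textsc{Rev}_1(\tilde S)$ does this more directly than $\overline{\theta} > w^\star_{(c)}$ does in the homogeneous case. With one entrant the post-purchase regret is trivially zero, so you need no terminal-history machinery.

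Your route is self-contained and elementary. It also yields the explicit bound $\textsc{Reg}(\pi) \ge \min_{S'}\bigl(\alpha^\star\tau(S') - r(S')\bigr)$, which feeds directly into the computation in the proof of Theorem~\ref{thm:optimal_policy_different_rewards_single_entrant}. The paper's reduction costs it nothing extra, because the general lemma is needed anyway for Theorem~\ref{thm:optimal_policy_multiple_entrants}.
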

\begin{proof}[Proof of Theorem~\ref{thm:optimal_policy_different_rewards_single_entrant}]
 For a set $S$ of at most $c-1$ incumbents let $\tau(S)$ be the expected number of rounds until the entrant is purchased if the platform uses $\pi(S)$. Letting $N_i$ be the number of rounds option $i \in S \cup \{0\}$ is purchased, the expectation of $N_i$ is
\begin{equation}\label{eq:expected_number_choices_option_i_single_entrant_diff_rewards}
    \expect[N_i] = \prob[\text{i is purchased}|S] \tau(S) = \frac{\prob[\text{i is purchased}|S] }{\prob[\text{entrant is purchased}|S]} = \frac{\frac{w_i}{\sum_{i \in S} w_i + \tildew + w_0}}{\frac{\tildew}{\sum_{i \in S} w_i + \tildew + w_0}} = \frac{w_i}{\tildew}.
\end{equation}
We refer to the set of rounds up to and including the purchase of the entrant as the \emph{exploration period}. 
For any product $i \in S$, $\pi(S)$ incurs a regret of $\textsc{OPT}_1-r_i$ during the exploration period and this happens $\expect[N_i]$ rounds in expectation. Furthermore, $\pi(S)$ incurs regret of $\textsc{OPT}_1$ every time the outside option is purchased and this happens $\expect[N_0]$ rounds in expectation. Moreover, $\pi(S)$ incurs a regret of $\textsc{OPT}_1 - r$ when the unknown entrant is purchased. As a result, the regret of $\pi(S)$ equals 
\begin{align}\label{eq:regret_of_pi_in_terms_of_expected_regrets_from_each_option}
    \textsc{Reg}(\pi(S)) &= \expect \Bigg[\sum_{i \in S} (\textsc{OPT}_1-r_i) N_i + \textsc{OPT}_1 N_0 + (\textsc{OPT}_1-r)\Bigg] \nonumber\\
    &=  \sum_{i \in S} (\textsc{OPT}_1-r_i) \expect[N_i] + \textsc{OPT}_1 \expect[N_0] + (\textsc{OPT}_1-r) \nonumber\\
    &= \sum_{i \in S} (\textsc{OPT}_1-r_i) \frac{w_i}{\tildew} + \textsc{OPT}_1  \frac{w_0}{\tildew}+ (\textsc{OPT}_1-r) \nonumber\\
    &= \frac{w_0}{\tildew}\sum_{i \in S} \textsc{SIR}_i(1) + \textsc{OPT}_1 \frac{w_0}{\tildew}+ (\textsc{OPT}_1-r)
\end{align}
where the first equality uses that $\pi(S)$ incurs zero regret after the exploration period, the third equality uses \eqref{eq:expected_number_choices_option_i_single_entrant_diff_rewards} and the fourth equality uses $(\textsc{OPT}_1-r_i) \frac{w_i}{\tildew} = \frac{w_0}{\tildew} \textsc{SIR}_i(1)$ by the definition of $\textsc{SIR}_i(1)$. By \eqref{eq:regret_of_pi_in_terms_of_expected_regrets_from_each_option} minimizing $\textsc{Reg}(\pi(S))$ is equivalent to minimizing $\sum_{i \in S} \textsc{SIR}_i(1)$ which is achieved exactly when $S$ consists of the $\min(c-1, n^{<0}_1)$ incumbents with the lowest scaled interim regrets, i.e., $S = S_1$.

Thus, $\pi(S_1)$ yields a lower regret than any other policy $\pi(S)$. Given that there exists an optimal policy which is $\pi(S)$ for some $S$ by Lemma~\ref{lemma:there_exists_an_optimal_policy_which_offers_the_same_assortment_until_the_entrant_is_purchased_different_rewards},  $\pi(S_1)$ is optimal. 
\end{proof}

\begin{proof}[Proof of Lemma~\ref{lemma:there_exists_an_optimal_policy_which_offers_the_same_assortment_until_the_entrant_is_purchased_different_rewards}.]
The lemma is a special case of Lemma~\ref{lemma:epoch_optimal_policy_is_optimal_different_rewards} (stated in Appendix~\ref{appendix_subsec:optimal_policy_multiple_entrants} and proven in Appendix~\ref{appendix_subsec:proof_of_lemma:epoch_optimal_policy_is_optimal_different_rewards}) by taking $\pi^{\star}$ to offer a fixed $S' \in \argmin_{S \in \mathcal{E}_t} \textsc{EpochReg}_t(S)$ until the entrant is purchased. 
\end{proof}

\subsection{Optimal policy under different rewards and multiple entrants (Theorem~\ref{thm:optimal_policy_multiple_entrants})}\label{appendix_subsec:optimal_policy_multiple_entrants}
The proof follows a similar structure to the proof of Theorem~\ref{thm:optimality_of_algo}. Let $\mathcal{E}_t = \{S \subseteq \mathcal{N}: |S| \leq c, S \cap (\mathcal{N} \setminus \mathcal{I}_t) \neq \emptyset \}$ be the set of assortments containing at least one entrant at round $t$. For a set $S \in \mathcal{E}_t$, let $\tau(S; \mathcal{H}_t)$ be the expected number of rounds and $r(S; \mathcal{H}_t)$ be the expected total reward up to and including the first round an entrant is purchased if $S$ is offered repeatedly. Let $\textsc{EpochReg}_t(S) = \textsc{OPT}_t \tau(S; \mathcal{H}_t) - r(S; \mathcal{H}_t)$ be the expected regret until the first purchase of an unknown entrant if $S \in \mathcal{E}_t$ is offered repeatedly. We refer to the set of rounds between two consecutive unknown entrant purchases as an \textit{epoch}. 

The following lemma (which is the analogue of Lemma~\ref{lemma:epoch_optimal_policy_is_optimal} and is proven in Appendix~\ref{appendix_subsec:proof_of_lemma:epoch_optimal_policy_is_optimal_different_rewards}) shows that any policy which minimizes regret within each epoch is optimal.

\begin{lemma}\label{lemma:epoch_optimal_policy_is_optimal_different_rewards}
    A policy $\pi^{\star}$ is optimal if for any round $t$ and history $\mathcal{H}_t$, it satisfies 
    \begin{itemize}
        \item $\pi^{\star}(\mathcal{H}_t) \in \argmin_{S \in \mathcal{E}_t} \textsc{EpochReg}_t(S)$ \text{ if } $\textsc{OPT}_t > \textsc{Rev}^{\star}_{t}$
        \item $\pi^{\star}(\mathcal{H}_t) = S^{\star}_{t, \textsc{known}}$ otherwise
    \end{itemize}
\end{lemma}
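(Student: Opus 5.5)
I will follow the proof of Lemma~\ref{lemma:epoch_optimal_policy_is_optimal} almost verbatim. The structure of that argument never used homogeneous rewards except through per-round revenue formulas. The plan is to show $\textsc{Reg}(\pi;\mathcal{H}) \geq \textsc{Reg}(\pi^{\star};\mathcal{H})$ for every policy $\pi$ and every history $\mathcal{H}$, by induction on the number of unknown entrants.

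The first step is to re-establish the supporting lemmas in the heterogeneous-reward setting, with $\textsc{Rev}(S;\mathcal{H}) = \frac{\sum_{i\in S} r_i w_i(\mathcal{H})}{\sum_{i\in S} w_i(\mathcal{H})+w_0}$. These are:
\begin{itemize}
\item the terminal-history lemmas (Lemmas~\ref{lemma:any_policy_has_a_nonnegative_regret_starting_from_a_terminal_history_and_pi_star_as_a_zero_regret_starting_from_a_terminal_history}, \ref{lemma:if_H_1_is_terminal_H_t_is_terminal_with_probability_1}, \ref{lemma:if_H_is_terminal_the_expected_opt_equals_the_maximum_instantaneous_revenue});
\item the dependence only on $J(\mathcal{H})$, now meaning the multiset of pairs $(w_i,r_i)$ of known products (Lemmas~\ref{lemma:the_regret_of_pi_star_depends_on_H_only_through_J_H}, \ref{lemma:for_any_history_H_the_expected_ex_post_optimum_the_maximum_revenue_from_known_products_and_the_minimum_epoch_regret_depend_on_H_only_through_J(H)});
\item the one-epoch-lookahead decomposition (Lemmas~\ref{lemma:the_regret_of_pi_star_can_be_decomposed_into_the_first_epoch_regret_and_the_future_regret}, \ref{lemma:the_regret_of_pi_star_can_be_decomposed_into_the_first_epoch_regret_and_the_future_regret_general_form});
\item the static-assortment lower bound (Lemma~\ref{lemma:epoch_regret_of_any_policy_is_at_least_the_minimum_epoch_regret_over_fixed_assortments});
\item boundedness (Lemma~\ref{lemma:for_any_history_with_k_unknown_entrants_the_regret_of_pi_star_is_lower_bounded_and_upper_bounded_by_quantities_which_depend_on_k}).
\end{itemize}
Most of these transfer mechanically. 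The epoch identities $\tau(S;\mathcal{H}) = 1/q(S;\mathcal{H})$ and $r(S;\mathcal{H}) = \textsc{Rev}(S;\mathcal{H})/q(S;\mathcal{H})$ still hold because rewards enter only linearly. The ratio argument $\sum_{S\in\mathcal{E}} q(S)\,\expect[N(S)] = 1$ is reward-free. The mapping $a$ in Lemma~\ref{lemma:for_any_history_H_the_expected_ex_post_optimum_the_maximum_revenue_from_known_products_and_the_minimum_epoch_regret_depend_on_H_only_through_J(H)} works once it preserves rewards; this is automatic, since all entrants share reward $r$ and incumbents map to themselves.

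The induction then proceeds exactly as before. In the base case and in the terminal case, $\pi^{\star}$ plays $S^{\star}_{t,\textsc{known}}$ and has zero regret, while every policy has nonnegative regret. In the non-terminal case, I split on $\expect[Z]$:
\begin{itemize}
\item If $\expect[Z]=\infty$, some known-only assortment is played infinitely often in expectation with strictly positive per-round gap $\textsc{OPT}(\mathcal{H})-\textsc{Rev}(S;\mathcal{H})>0$, so the regret is infinite.
\item If $\expect[Z]<\infty$, I apply the inductive hypothesis after the first purchase and use that the revealed $w$ is a fresh draw from $\mathcal{F}$, independent of which entrant was bought. Combining this with Lemma~\ref{lemma:epoch_regret_of_any_policy_is_at_least_the_minimum_epoch_regret_over_fixed_assortments} and the lookahead decomposition of $\pi^{\star}$ closes the step.
\end{itemize}
The bounds needed for the infinite case become $\textsc{Reg}(\pi^{\star};\mathcal{H}) \geq -m\,r_{\max}$ and $\leq m\,r_{\max}/\tildew$ (with $r_{\max}$ the largest reward, assumed $\le$ a constant). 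They follow from the same computation, with $\textsc{OPT}\le r_{\max}$ and $\textsc{OPT}(\mathcal{H})\ge \textsc{Rev}(S^{K};\mathcal{H})$.

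The main obstacle is Lemma~\ref{lemma:if_H_is_terminal_the_expected_opt_equals_the_maximum_instantaneous_revenue}. Its proof used monotonicity of $x\mapsto \frac{x+W}{x+W+1}$ to argue that conditioning on entrants exceeding $\tildew$ raises revenue. With rewards, $\frac{rx+A}{x+B+w_0}$ is increasing in $x$ only if $r$ exceeds the current revenue, so that argument can break. I would try to sidestep this step entirely. Lemma~\ref{lemma:any_policy_has_a_nonnegative_regret_starting_from_a_terminal_history_and_pi_star_as_a_zero_regret_starting_from_a_terminal_history} really only needs two things: $\textsc{Rev}(S;\mathcal{H}_t)\le \textsc{OPT}(\mathcal{H}_t)$ for all $S$, and terminality of $\mathcal{H}_t$ to persist.

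For assortments containing unknowns, I would argue directly that $\textsc{Rev}(S;\mathcal{H}) \le \textsc{OPT}(\mathcal{H})$ via a sign case split:
\begin{itemize}
\item If $r \ge \textsc{Rev}(S^{K};\mathcal{H})$, the original monotonicity argument goes through with the event that all entrants in $S$ have $w_i \ge \tildew$.
\item Otherwise, adding unknowns with reward $r$ below the current revenue only lowers revenue, so $\textsc{Rev}(S)\le\textsc{Rev}(S^{K})\le\textsc{OPT}(\mathcal{H})$.
\end{itemize}
Persistence of terminality (Lemma~\ref{lemma:if_H_1_is_terminal_H_t_is_terminal_with_probability_1}) uses only the martingale property of $\textsc{OPT}(\mathcal{H}_t)$ and monotonicity of the known-product optimum in the known set, so it carries over unchanged.
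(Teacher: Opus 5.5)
Your proposal is correct and follows the paper's route: the paper also reruns the proof of Lemma~\ref{lemma:epoch_optimal_policy_is_optimal} with the supporting lemmas carried over unchanged. The one exception is Lemma~\ref{lemma:if_H_is_terminal_the_expected_opt_equals_the_maximum_instantaneous_revenue}, where the paper notes, as your case split does, that $\textsc{Rev}(S';\mathcal{H})>\textsc{OPT}(\mathcal{H})\ge \textsc{Rev}(S^K;\mathcal{H})$ forces $r>\textsc{Rev}(S^K;\mathcal{H})$, which restores monotonicity. One minor slip: with unequal rewards the per-assortment computation no longer gives $\textsc{EpochReg}(S;\mathcal{H})\le r_{\max}/\tildew$, because terms like $\sum_{i\in S^K}(r_{\max}-r_i)w_i$ survive; however, you only need $\textsc{Reg}(\pi^{\star};\mathcal{H})<\infty$, which the paper gets from a bound involving $\sum_{i\in I(\mathcal{H})}w_i(\mathcal{H})$ (Lemma~\ref{lemma:epoch_reg_ub_het_rewards}), and which you could also get by bounding the minimum with the single-entrant assortment $S=U_1$.
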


The next lemma (which is the analogue of Lemma~\ref{lemma:optimal_completion_given_ell_unknown_prods} and is proven in Appendix~\ref{appendix_subsec:lemma4.2_het_rewards_proof}) establishes that conditioned on including $\ell$ unknown entrants, complementing them the $\min(c-\ell, n_t^{<0})$ known entrants with the lowest scaled interim regrets minimizes the regret in the current epoch. 

\begin{lemma}\label{lemma:optimal_completion_given_ell_unknown_prods_different_rewards}
    For any round $t$ and any $\ell \in \{1, \ldots, k_t\}$, the assortment $S$ that minimizes $\textsc{EpochReg}_t(S)$ subject to including exactly $\ell$ unknown products $U_{\ell}$ complements them with the set $V^{\star}_{(\min(c-\ell, n_t^{<0}))}(\mathcal{I}_t)$ of the $\min(c-\ell, n_t^{<0})$ products with lowest scaled interim regrets, i.e., 
$$V^{\star}_{(\min(c-\ell, n_t^{<0}))}(\mathcal{I}_t) \in \argmin\limits_{S' \subseteq \mathcal{I}_t, |S'| \leq c- \ell} \textsc{EpochReg}_t(U_{\ell} \cup S').$$
\end{lemma}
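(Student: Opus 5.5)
We follow the proof of Lemma~\ref{lemma:optimal_completion_given_ell_unknown_prods} and of Theorem~\ref{thm:optimal_policy_different_rewards_single_entrant}: write the epoch regret of an assortment $U_{\ell}\cup S'$ as a sum over options of (expected number of times chosen) $\times$ (per-choice regret). Fix a round $t$ and $\ell\in\{1,\ldots,k_t\}$, and take any $S'\subseteq\mathcal{I}_t$ with $|S'|\le c-\ell$.

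\textbf{Step 1: counts in the epoch.} If $U_\ell\cup S'$ is offered repeatedly, each round ends the epoch with probability $\ell\tildew/(\sum_{i\in S'}w_i+\ell\tildew+w_0)$. Hence $\tau(U_\ell\cup S';\mathcal{H}_t)$ is the reciprocal of this probability. By the IIA property of the MNL, the expected number of times a known product $i\in S'$ is chosen is $w_i/(\ell\tildew)$, and the outside option is chosen $w_0/(\ell\tildew)$ times in expectation. Neither count depends on the rest of $S'$.

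\textbf{Step 2: regret decomposition.} Per-choice regrets are $\textsc{OPT}_t-r_i$ for a known product, $\textsc{OPT}_t$ for the outside option, and $\textsc{OPT}_t-r$ once for the terminating entrant purchase. Therefore
\begin{equation*}
\textsc{EpochReg}_t(U_\ell\cup S')=\frac{w_0}{\ell\tildew}\sum_{i\in S'}\textsc{SIR}_i(t)+\frac{w_0}{\ell\tildew}\textsc{OPT}_t+(\textsc{OPT}_t-r),
\end{equation*}
using $(\textsc{OPT}_t-r_i)\,w_i/(\ell\tildew)=\frac{w_0}{\ell\tildew}\textsc{SIR}_i(t)$. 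Only the first term depends on $S'$, and its coefficient $w_0/(\ell\tildew)$ is positive. So minimizing the epoch regret over $S'$ is equivalent to minimizing $\sum_{i\in S'}\textsc{SIR}_i(t)$ subject to $|S'|\le c-\ell$.

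\textbf{Step 3: the combinatorial minimization.} Minimizing a sum of fixed item values over subsets of size at most $c-\ell$ is solved greedily. Include items in increasing order of $\textsc{SIR}_i(t)$ while they are negative, stopping at $c-\ell$ items or when the negative items run out. This yields exactly $V^\star_{(\min(c-\ell,n_t^{<0}))}(\mathcal{I}_t)$. To verify this, note that any optimal $S'$ can be modified as follows without increasing the sum: drop items with $\textsc{SIR}_i(t)>0$ (items with zero value may be dropped freely), and swap any included item for an excluded one with smaller negative value. Ties are harmless, since the statement only asserts membership in the argmin.

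The main obstacle is Step 1's bookkeeping, namely confirming that the expected choice counts are well defined and equal to the stated ratios. This is the same Wald-type identity (purchase probability times expected epoch length) already used in Lemma~\ref{lemma:expression_ratio_reward_loss_time_gain} and Lemma~\ref{lemma:optimal_completion_given_ell_unknown_prods}, and it needs $\tildew>0$ so that the epoch has finite expected length. The rest is sign-tracking, using that $w_0/(\ell\tildew)>0$.
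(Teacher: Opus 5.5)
Your proposal is correct and follows the paper's proof essentially step for step. The paper first proves, as a separate lemma, the same identity expressing $\textsc{EpochReg}_t(U_\ell\cup S')$ as a positive multiple of $\sum_{i\in S'}\textsc{SIR}_i(t)$ plus terms independent of $S'$, using the same expected-choice-count argument, and then concludes by minimizing $\sum_{i\in S'}\textsc{SIR}_i(t)$; your Step 3 just spells out the exchange argument that the paper states in one line.
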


The following lemma (which is the analogue of Lemma~\ref{lemma:num_unknown_ORFA_minimizes_epoch_reg} and is proven in Appendix~\ref{appendix_subsec:lemma4.3_het_rewards_proof}) establishes that the optimal number of unknown entrants to explore is the maximum between the largest $\ell$ such that $\textsc{OPT}_t \geq \beta_t(\ell)$ and $c-n_t^{<0}$.

\begin{lemma}\label{lemma:num_unknown_ORFA.DR_minimizes_epoch_reg_different_rewards}
    For any round $t$, the number of unknown entrants $\ell_{t}$ selected by Algorithm~\ref{alg:optimal_explorer_different_rewards} minimizes $\textsc{EpochReg}_t(U_{\ell} \cup V^{\star}_{(\min(c-\ell, n_t^{<0}))}(\mathcal{I}_t))$ subject to $\ell \in \{1, \ldots, k_t\}$.
\end{lemma}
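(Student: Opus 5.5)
The plan is to mirror the proof of Lemma~\ref{lemma:num_unknown_ORFA_minimizes_epoch_reg}: write the epoch regret of $U_\ell \cup V^{\star}_{(\min(c-\ell, n_t^{<0}))}(\mathcal{I}_t)$ in closed form, then compare consecutive values of $\ell$. Write $h=\tildew$ for the entrants' common cold-start parameter. As in the proof of Theorem~\ref{thm:optimal_policy_different_rewards_single_entrant}, when $U_\ell\cup S'$ is offered, each known $i\in S'$ is chosen $w_i/(\ell h)$ times in expectation, the outside option $w_0/(\ell h)$ times, and an entrant exactly once. This gives
\[
\textsc{EpochReg}_t(U_\ell\cup S') = \frac{w_0}{\ell h}\Big(\sum_{i\in S'}\textsc{SIR}_i(t) + \textsc{OPT}_t\Big) + (\textsc{OPT}_t - r).
\]
Define $G(\ell) = \sum_{k=1}^{\min(c-\ell,n_t^{<0})}\textsc{SIR}_{(k)}(t) + \textsc{OPT}_t$. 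Minimizing the epoch regret over $\ell$ is then the same as minimizing $G(\ell)/\ell$, since the additive term $\textsc{OPT}_t - r$ does not depend on $\ell$.

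\textbf{Step 1 (sign of $G$).} I will show that $G(\ell)>0$ for every $\ell$. Multiplying by $w_0$, the inequality $\sum_{i\in S'}\textsc{SIR}_i(t)+\textsc{OPT}_t>0$ is equivalent to $\textsc{OPT}_t\,(w_0+\sum_{i\in S'}w_i) > \sum_{i\in S'} r_iw_i$. That is, $\textsc{OPT}_t > \textsc{Rev}_t(S')$, which holds because we are in the exploration-worthy case $\textsc{OPT}_t>\textsc{Rev}^\star_t$.

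\textbf{Step 2 (first regime).} For $\ell\le c-n_t^{<0}$ the truncation is active, so $G(\ell)$ is a positive constant. Hence $G(\ell)/\ell$ is strictly decreasing on this range, and the minimizer is at least $c-n_t^{<0}$.

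\textbf{Step 3 (second regime).} For $\ell\ge c-n_t^{<0}$ (so $c-\ell\le n_t^{<0}$), I compare $\ell$ with $\ell+1$. Here $G(\ell)-G(\ell+1)=\textsc{SIR}_{(c-\ell)}(t)$. Then $G(\ell)/\ell\ge G(\ell+1)/(\ell+1)$ is equivalent to $G(\ell)+\ell\,\textsc{SIR}_{(c-\ell)}(t)\ge 0$. Expanding $G$, this reads
\[
\textsc{OPT}_t \ge -\sum_{k=1}^{c-\ell-1}\textsc{SIR}_{(k)}(t) - (\ell+1)\textsc{SIR}_{(c-\ell)}(t) = \beta_t(\ell+1),
\]
which is the analogue of \eqref{prop:cond_worth_it_to_add_unknown_prod}.

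\textbf{Step 4 (monotonicity of $\beta$).} A direct computation gives $\beta_t(\ell+1)-\beta_t(\ell)=\ell\big(\textsc{SIR}_{(c-\ell+1)}(t)-\textsc{SIR}_{(c-\ell)}(t)\big)\ge 0$, because the $\textsc{SIR}_{(k)}$ are sorted increasingly. So $\beta_t$ is non-decreasing. The threshold argument of Lemma~\ref{lemma:num_unknown_ORFA_minimizes_epoch_reg} then shows that, on $\{c-n_t^{<0},\dots,k_t\}$, the epoch regret weakly decreases up to $\ell_t$ and strictly increases afterwards.

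\textbf{Combining.} If $\ell_t\ge c-n_t^{<0}$, the minimizer is $\ell_t$. Otherwise, monotonicity of $\beta_t$ gives $\beta_t(\ell+1)>\textsc{OPT}_t$ for all $\ell\ge c-n_t^{<0}$, so the regret increases on the second regime and the minimizer is $c-n_t^{<0}$. In both cases the minimizer is $\ell^\star_t=\max(\ell_t,c-n_t^{<0})$.

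The main obstacle I expect is bookkeeping at the regime boundary and in degenerate cases. These include $n_t^{<0}\ge c$, where the first regime is empty; $k_t< c-n_t^{<0}$, where the answer is $k_t$ and the range of $\ell$ must be intersected with $\{1,\dots,k_t\}$; and the set defining $\ell_t$ being empty, which needs a convention such as $\ell_t=1$ or $0$. I will also need to check that $\beta_t$ is only ever evaluated, on the relevant range, at indices involving negative-$\textsc{SIR}$ products, so that its monotonicity is used exactly where Step 3 applies.
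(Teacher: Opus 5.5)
Your proposal is correct and follows essentially the same route as the paper: the same $\textsc{SIR}$-based closed form for the epoch regret, strict decrease for $\ell<c-n_t^{<0}$ derived from $\textsc{OPT}_t>\textsc{Rev}^\star_t$, and the consecutive comparison that reduces to $\textsc{OPT}_t\ge\beta_t(\ell+1)$ for $\ell\ge c-n_t^{<0}$. Your explicit check that $\beta_t$ is non-decreasing fills in a step the paper's Case~2 uses without proof, and the degenerate case $k_t<c-n_t^{<0}$ you flag is a genuine caveat to the statement, since there the minimizer is $k_t$ while $\max(\ell_t,c-n_t^{<0})$ is infeasible; your other worry does not arise, because $\textsc{OPT}_t\ge\beta_t(1)$ is equivalent to $\textsc{OPT}_t\ge\textsc{Rev}_t(V^\star_{(c)}(\mathcal{I}_t))$, which holds whenever $\textsc{OPT}_t>\textsc{Rev}^\star_t$, so the set defining $\ell_t$ is never empty.
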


\begin{proof}[Proof of Theorem~\ref{thm:optimal_policy_multiple_entrants}]
By Lemma~\ref{lemma:epoch_optimal_policy_is_optimal_different_rewards} a policy is optimal if it minimizes within-epoch regret when it is worth it to explore and maximizes revenue otherwise. HEFA maximizes revenue when it is not worth it to explore (line \ref{alg_different_rewards_line:not_exploration_worthy_case}, Algorithm~\ref{alg:optimal_explorer_different_rewards}) and by Lemmas \ref{lemma:optimal_completion_given_ell_unknown_prods_different_rewards} and \ref{lemma:num_unknown_ORFA.DR_minimizes_epoch_reg_different_rewards} it minimizes within-epoch regret when it is worth it to explore (line \ref{alg_different_rewards_line:worth_it_to_explore}, Algorithm~\ref{alg:optimal_explorer_different_rewards}). Thus, HEFA is optimal. 
\end{proof}

\subsection{Epoch-regret-minimizing policies are optimal (Proof of Lemma~\ref{lemma:epoch_optimal_policy_is_optimal_different_rewards})}\label{appendix_subsec:proof_of_lemma:epoch_optimal_policy_is_optimal_different_rewards}
As in the proof of Lemma~\ref{lemma:epoch_optimal_policy_is_optimal}, Lemma~\ref{lemma:any_policy_has_a_nonnegative_regret_starting_from_a_terminal_history_and_pi_star_as_a_zero_regret_starting_from_a_terminal_history}, Lemma~\ref{lemma:the_regret_of_pi_star_depends_on_H_only_through_J_H}, Lemma~\ref{lemma:the_regret_of_pi_star_can_be_decomposed_into_the_first_epoch_regret_and_the_future_regret}, and Lemma~\ref{lemma:epoch_regret_of_any_policy_is_at_least_the_minimum_epoch_regret_over_fixed_assortments} continue to hold in this setting. The proofs of Lemma~\ref{lemma:the_regret_of_pi_star_depends_on_H_only_through_J_H}, Lemma~\ref{lemma:the_regret_of_pi_star_can_be_decomposed_into_the_first_epoch_regret_and_the_future_regret}, and Lemma~\ref{lemma:epoch_regret_of_any_policy_is_at_least_the_minimum_epoch_regret_over_fixed_assortments} remain unchanged. The proof of Lemma~\ref{lemma:any_policy_has_a_nonnegative_regret_starting_from_a_terminal_history_and_pi_star_as_a_zero_regret_starting_from_a_terminal_history} needs a small modifications and it is provided in Appendix~\ref{appendix_subsec:proof_lemma_any_policy_has_nonegative_regret_given_terminal_history_het_rewards}. Instead of Lemma~\ref{lemma:for_any_history_with_k_unknown_entrants_the_regret_of_pi_star_is_lower_bounded_and_upper_bounded_by_quantities_which_depend_on_k}, we show the following lemma (proof in Appendix~\ref{appendix_subsec:lemmaB5_het_rewards_proof}). 

\begin{lemma}\label{lemma:for_any_history_with_k_unknown_entrants_the_regret_of_pi_star_is_lower_bounded_and_upper_bounded_by_quantities_which_depend_on_k_diff_rewards}
    For any history $\mathcal{H}$, $\textsc{Reg}(\pi^{\star};\mathcal{H}) \geq -m \cdot r$ and  $\textsc{Reg}(\pi^{\star};\mathcal{H}) < \infty$.
\end{lemma}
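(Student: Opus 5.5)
The plan is to mirror the proof of Lemma~\ref{lemma:for_any_history_with_k_unknown_entrants_the_regret_of_pi_star_is_lower_bounded_and_upper_bounded_by_quantities_which_depend_on_k}. We argue by induction on the number $k$ of unknown entrants remaining given $\mathcal{H}$, and prove the stronger statement $-k\cdot r \le \textsc{Reg}(\pi^{\star};\mathcal{H}) \le k\cdot C(\mathcal{H})$. Here $C(\mathcal{H})$ is a finite constant depending only on the instance, for instance $C = r_{\max}(c\,w_{\max}+\tildew+w_0)/\tildew$, where $r_{\max}=\max(r,\max_i r_i)$ and $w_{\max}$ bounds every attraction parameter, using $\overline{\theta}$ for entrants.

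The base case $k=0$ is immediate: the history is terminal, and the heterogeneous-rewards analogue of Lemma~\ref{lemma:any_policy_has_a_nonnegative_regret_starting_from_a_terminal_history_and_pi_star_as_a_zero_regret_starting_from_a_terminal_history} gives zero regret. The same applies to any terminal $\mathcal{H}$ in the induction step. For a non-terminal $\mathcal{H}$, we use the one-epoch-lookahead decomposition (Lemma~\ref{lemma:the_regret_of_pi_star_can_be_decomposed_into_the_first_epoch_regret_and_the_future_regret}, which holds unchanged here). It writes $\textsc{Reg}(\pi^{\star};\mathcal{H})$ as $\min_{S\in\mathcal{E}(\mathcal{H})}\textsc{EpochReg}(S;\mathcal{H})$ plus the expected regret from a history with $k$ unknown entrants. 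The second term lies in $[-k r, kC]$ by the induction hypothesis, so it remains to show the epoch term lies in $[-r, C]$.

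For the upper bound, take any single $S\in\mathcal{E}(\mathcal{H})$, say one unknown entrant alone. Its epoch regret is (per-round regret)$\times\tau(S)$. The per-round regret is at most $\textsc{OPT}(\mathcal{H})\le r_{\max}$, and $\tau(S)=(\sum_{i\in S}w_i(\mathcal{H})+w_0)/\sum_{i\in S^{\textsc{u}}}w_i(\mathcal{H})\le (c\,w_{\max}+\tildew+w_0)/\tildew$. This gives finiteness; the minimum is no larger than this.

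The lower bound is the main obstacle, because $\textsc{OPT}$ now involves heterogeneous rewards and the cancellation used in \eqref{ineq:epoch_reg_lower_bound} is no longer available. I would write the epoch regret as in \eqref{eq:regret_of_pi_in_terms_of_expected_regrets_from_each_option}:
\[
\textsc{EpochReg}(S;\mathcal{H})=\sum_{i\in S^{\textsc{k}}}(\textsc{OPT}(\mathcal{H})-r_i)\frac{w_i}{\ell\tildew}+\textsc{OPT}(\mathcal{H})\frac{w_0}{\ell\tildew}+(\textsc{OPT}(\mathcal{H})-r).
\]
Writing $W_K=\sum_{i\in S^{\textsc{k}}}w_i$ and $R_K=\sum_{i\in S^{\textsc{k}}}r_iw_i$, the first two terms combine to $\frac{1}{\ell\tildew}\big(\textsc{OPT}(\mathcal{H})(W_K+w_0)-R_K\big)$. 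This is nonnegative because $\textsc{OPT}(\mathcal{H})\ge \textsc{Rev}^{\star}\ge R_K/(W_K+w_0)$, the known subset $S^{\textsc{k}}$ being feasible. Hence $\textsc{EpochReg}\ge \textsc{OPT}(\mathcal{H})-r\ge -r$. Combining the two bounds closes the induction. Since $k\le m$, this yields $\textsc{Reg}(\pi^{\star};\mathcal{H})\ge -m\cdot r$ and $\textsc{Reg}(\pi^{\star};\mathcal{H})<\infty$.
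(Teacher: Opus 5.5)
Your proof is correct and follows the paper's argument: induction on the number of unknown entrants, the one-epoch-lookahead decomposition, and the epoch-regret lower bound $-r$ obtained because $\textsc{OPT}(\mathcal{H})$ dominates the revenue of the known part of the assortment. Your per-option form of the epoch regret is equivalent to the paper's per-round-regret-times-epoch-length computation.

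The one difference is in the upper bound. The paper bounds the epoch regret of every assortment, so it must carry a history-dependent bound involving $\sum_{i\in I(\mathcal{H})} w_i(\mathcal{H})$ and $\expect_{w\sim\mathcal{F}}[w]$ through the induction. You instead bound the minimum by a single assortment. Taking that assortment to be a lone unknown entrant gives the uniform constant $r_{\text{max}}(\tildew+w_0)/\tildew$, so your $w_{\max}$ and the reliance on $\overline{\theta}<\infty$ are not actually needed.
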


\begin{proof}[Proof of Lemma~\ref{lemma:epoch_optimal_policy_is_optimal_different_rewards}.]
    The proof combines Lemma~\ref{lemma:any_policy_has_a_nonnegative_regret_starting_from_a_terminal_history_and_pi_star_as_a_zero_regret_starting_from_a_terminal_history}, Lemma~\ref{lemma:the_regret_of_pi_star_depends_on_H_only_through_J_H}, Lemma~\ref{lemma:the_regret_of_pi_star_can_be_decomposed_into_the_first_epoch_regret_and_the_future_regret}, Lemma~\ref{lemma:epoch_regret_of_any_policy_is_at_least_the_minimum_epoch_regret_over_fixed_assortments}, and Lemma~\ref{lemma:for_any_history_with_k_unknown_entrants_the_regret_of_pi_star_is_lower_bounded_and_upper_bounded_by_quantities_which_depend_on_k_diff_rewards} and uses the exact same steps as the proof of Lemma~\ref{lemma:epoch_optimal_policy_is_optimal}, where Lemma~\ref{lemma:for_any_history_with_k_unknown_entrants_the_regret_of_pi_star_is_lower_bounded_and_upper_bounded_by_quantities_which_depend_on_k} is replaced with Lemma~\ref{lemma:for_any_history_with_k_unknown_entrants_the_regret_of_pi_star_is_lower_bounded_and_upper_bounded_by_quantities_which_depend_on_k_diff_rewards}. In the case when the expected length of the first epoch is infinite (Case 1, $\expect[Z] = \infty$), we replace ``Lemma~\ref{lemma:for_any_history_with_k_unknown_entrants_the_regret_of_pi_star_is_lower_bounded_and_upper_bounded_by_quantities_which_depend_on_k} implies that $\expect_{w \sim \mathcal{F}}[\textsc{Reg}(\pi^{\star},J(\mathcal{H}) \cup \{w\})]\prob[Z < \infty] \geq -m$'' with ``Lemma~\ref{lemma:for_any_history_with_k_unknown_entrants_the_regret_of_pi_star_is_lower_bounded_and_upper_bounded_by_quantities_which_depend_on_k_diff_rewards} implies that $\expect_{w \sim \mathcal{F}}[\textsc{Reg}(\pi^{\star},J(\mathcal{H}) \cup \{w\})]\prob[Z < \infty] \geq -m \cdot r$''.
\end{proof}

\subsection{Any policy has non-negative regret given a terminal history (Lemma~\ref{lemma:any_policy_has_a_nonnegative_regret_starting_from_a_terminal_history_and_pi_star_as_a_zero_regret_starting_from_a_terminal_history})}\label{appendix_subsec:proof_lemma_any_policy_has_nonegative_regret_given_terminal_history_het_rewards}

\begin{proof}[Updated proof of Lemma~\ref{lemma:any_policy_has_a_nonnegative_regret_starting_from_a_terminal_history_and_pi_star_as_a_zero_regret_starting_from_a_terminal_history}.]
    
Lemma~\ref{lemma:if_H_1_is_terminal_H_t_is_terminal_with_probability_1} and Lemma~\ref{lemma:if_H_is_terminal_the_expected_opt_equals_the_maximum_instantaneous_revenue} continue to hold in this setting and the proof of Lemma~\ref{lemma:any_policy_has_a_nonnegative_regret_starting_from_a_terminal_history_and_pi_star_as_a_zero_regret_starting_from_a_terminal_history} is completed in the same way as in Section~\ref{appendix_subsec:proof_lemma_any_policy_has_nonegative_regret_given_terminal_history}. The proof of Lemma~\ref{lemma:if_H_1_is_terminal_H_t_is_terminal_with_probability_1} remains unchanged. The proof of Lemma~\ref{lemma:if_H_is_terminal_the_expected_opt_equals_the_maximum_instantaneous_revenue} needs a modification in inequality \eqref{eq:lower_bound_expectation_all_at_least_h}, and otherwise remains unchanged. In \eqref{eq:lower_bound_expectation_all_at_least_h} we need to show that $\expect[\textsc{OPT}|\mathcal{E}^{\geq h}] > \textsc{OPT}(\mathcal{H})$ under the assumption that there exists some $S'$ with at least one unknown entrant such that $\textsc{Rev}(S';\mathcal{H}) > \textsc{OPT}(\mathcal{H})$. To show this, let $S^{K}$ be the set of known products in $S'$ given $\mathcal{H}$. The revenue of $S'$ is at least the revenue of $S^{K}$:
\begin{align}\label{ineq:revenue_of_S_prime_is_at_least_revenue_of_S_K}
    \textsc{Rev}(S';\mathcal{H}) &= \frac{r \cdot\ell \tildew + \sum_{i \in S^K} w_i  r_i }{ \ell \tildew +  \sum_{i \in S^K} w_i + 1} >\textsc{OPT}(\mathcal{H}) = \max\limits_{S \subseteq I(\mathcal{H}), |S| \leq c} \textsc{Rev}(S;\mathcal{H})\nonumber\\
    &\geq \textsc{Rev}(S^K;\mathcal{H}) = \frac{ \sum_{i \in S^K} w_i  r_i }{  \sum_{i \in S^K} w_i + 1}.
\end{align}
The first equality uses the expression for the expected revenue of $S'$ and the first inequality uses the assumption that $\textsc{Rev}(S';\mathcal{H}) > \textsc{OPT}(\mathcal{H})$. The second equality uses that $\mathcal{H}$ is terminal and the second inequality uses that $S^K$ is a set of known products. The third equality uses the expression for the expected revenue of $S^K$. As a result of \eqref{ineq:revenue_of_S_prime_is_at_least_revenue_of_S_K} the reward of each entrant is greater than the revenue of $S^K$: 
\begin{equation}\label{ineq:entrant_reward_is_at_least_revenue_of_S_K}
    \frac{r \cdot\ell \tildew + \sum_{i \in S^K} w_i  r_i }{ \ell \tildew +  \sum_{i \in S^K} w_i + 1} > \frac{ \sum_{i \in S^K} w_i  r_i }{  \sum_{i \in S^K} w_i + 1} \iff r > \frac{ \sum_{i \in S^K} w_i  r_i }{  \sum_{i \in S^K} w_i + 1}
\end{equation}
Then, the conditional expected ex-post optimum $\expect[\textsc{OPT}|\mathcal{E}^{\geq h}]$ is greater than $\textsc{OPT}(\mathcal{H})$:
\begin{align*}
    \expect[\textsc{OPT}|\mathcal{E}^{\geq h}] &\geq \expect \Bigg[ \frac{\sum_{i \in S'} w_i r_i}{\sum_{i \in S'} w_i + 1 }\Bigg|\mathcal{E}^{\geq h} \Bigg] = \expect \Bigg[ \frac{r \cdot \sum_{i \in A_{\ell}} w_i  + \sum_{i \in S^K} w_i r_i}{ \sum_{i \in A_{\ell}} w_i  + \sum_{i \in S^K} w_i + 1 }\Bigg|\mathcal{E}^{\geq h} \Bigg] \\
    &\geq \frac{r \cdot \ell \tildew + \sum_{i \in S^K} w_i r_i}{\ell \tildew + \sum_{i \in S^K} w_i + 1} = \textsc{Rev}(S';\mathcal{H}) >\textsc{OPT}(\mathcal{H}).
\end{align*}
The second inequality uses that $\sum_{i \in A_{\ell}} w_i \geq \ell \tildew$ (due to $\mathcal{E}^{\geq h}$) and that the function $f(x) \coloneqq \frac{r \cdot x + \sum_{i \in S^K} w_i r_i }{x + \sum_{i \in S^K} w_i + 1}$ is increasing since its derivative is positive
$$f'(x) = \frac{r \cdot (\sum_{i \in S^K} w_i + 1) - \sum_{i \in S^K} w_i r_i }{(\sum_{i \in S^K} w_i + 1)^2} = (\sum_{i \in S^K} w_i + 1) \frac{r-\frac{ \sum_{i \in S^K} w_i  r_i }{  \sum_{i \in S^K} w_i + 1}}{(\sum_{i \in S^K} w_i + 1)^2} > 0$$
where the last inequality uses \eqref{ineq:entrant_reward_is_at_least_revenue_of_S_K}.
\end{proof}

\subsection{Optimal policy regret is bounded independently of the history (Lemma~\ref{lemma:for_any_history_with_k_unknown_entrants_the_regret_of_pi_star_is_lower_bounded_and_upper_bounded_by_quantities_which_depend_on_k_diff_rewards})}\label{appendix_subsec:lemmaB5_het_rewards_proof}
To prove Lemma~\ref{lemma:for_any_history_with_k_unknown_entrants_the_regret_of_pi_star_is_lower_bounded_and_upper_bounded_by_quantities_which_depend_on_k_diff_rewards} the following lemma (proven at the end of this subsection) lower bounds the epoch regret of any assortment and any history with the entrant's reward.

\begin{lemma}\label{lemma:epoch_reg_lb_het_rewards}
    For any non-terminal history $\mathcal{H}$, and set $S \in \mathcal{E}(\mathcal{H})$, $\textsc{EpochReg}(S;\mathcal{H}) \geq -r$.
\end{lemma}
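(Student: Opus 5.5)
The plan is to write the epoch regret of $S$ as the sum of per-option contributions over the epoch, and then lower bound each contribution separately, exactly as in \eqref{eq:epoch_reg_expression_general} and in the proof of Theorem~\ref{thm:optimal_policy_different_rewards_single_entrant}.

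Write $S = S^K \cup U_{\ell'}$, where $S^K$ is the set of known products in $S$ and $U_{\ell'}$ consists of $\ell' \geq 1$ unknown entrants. Each unknown entrant has attraction $\tildew$ before purchase. By independence of irrelevant alternatives, during the epoch each known $i \in S^K$ is chosen $\frac{w_i(\mathcal{H})}{\ell'\tildew}$ times in expectation and the outside option $\frac{w_0}{\ell'\tildew}$ times, and exactly one entrant purchase ends the epoch. This gives
\[
\textsc{EpochReg}(S;\mathcal{H}) = \sum_{i\in S^K}\frac{w_i(\mathcal{H})}{\ell'\tildew}\,(\textsc{OPT}(\mathcal{H})-r_i) + \frac{w_0}{\ell'\tildew}\,\textsc{OPT}(\mathcal{H}) + (\textsc{OPT}(\mathcal{H}) - r).
\]
The last term is at least $-r$, since $\textsc{OPT}(\mathcal{H})\geq 0$. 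It therefore suffices to show that the first two terms sum to something nonnegative. After multiplying by $\ell'\tildew>0$, this reduces to
\[
\textsc{OPT}(\mathcal{H})\Big(\sum_{i\in S^K} w_i + w_0\Big) \geq \sum_{i\in S^K} w_i r_i .
\]

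To prove this inequality, use that $\textsc{OPT}(\mathcal{H}) \geq \max_{S''\subseteq I(\mathcal{H}),|S''|\leq c}\textsc{Rev}(S'';\mathcal{H}) \geq \textsc{Rev}(S^K;\mathcal{H})$. The first inequality holds because the ex-post optimum always has the known products available (the same argument as the first inequality in \eqref{ineq:expected_ex_post_opt_rount_t_is_at_least_opt_revenue_from_known_is_at_least_expected_ex_post_at_round_1}). The second holds because $S^K$ is a feasible set of known products of size at most $c$. Since $\textsc{Rev}(S^K;\mathcal{H}) = \frac{\sum_{i\in S^K} w_i r_i}{\sum_{i\in S^K} w_i + w_0}$, multiplying through by the denominator gives exactly the required inequality.

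This mirrors the computation in \eqref{ineq:epoch_reg_lower_bound}, where the bound $-1$ corresponds to $r=1$. The only place that needs care is the first step: the per-option counts $w_i/(\ell'\tildew)$ must still be valid, which holds because the attraction parameters $w_i(\mathcal{H})$ are frozen throughout the epoch. The non-terminal hypothesis is not needed for the inequality itself; it only guarantees that the epoch is the relevant object.
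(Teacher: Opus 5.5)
Your proof is correct and follows essentially the same route as the paper: in both arguments the only substantive ingredient is $\textsc{OPT}(\mathcal{H}) \geq \textsc{Rev}(S^K;\mathcal{H})$. The paper writes the epoch regret as per-round regret times expected epoch length and simplifies to $\textsc{Rev}(S^K;\mathcal{H}) - r \geq -r$, whereas you split it per option and drop the nonnegative known-product/outside-option part; this is the same computation organized differently.
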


The next lemma upper bounds the epoch regret with the epoch regret by the maximum reward multiplied by the maximum epoch length. Let $r_{\text{max}} = \max\{\{r_i: i \in \{m+1, \ldots, n\}\} \cup \{r\}\}$ be the maximum reward. 

\begin{lemma}\label{lemma:epoch_reg_ub_het_rewards}
    For any non-terminal history $\mathcal{H}$, and set $S \in \mathcal{E}(\mathcal{H})$, $\textsc{EpochReg}(S;\mathcal{H}) \leq r_{\text{max}}  \left(1 + \frac{\sum_{i \in I(\mathcal{H})} w_i(\mathcal{H}) + 1}{\tildew} \right)$.
\end{lemma}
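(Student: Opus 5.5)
We will bound the epoch regret through its expression as (per-round regret) times (expected epoch length), as in the proof of Lemma~\ref{lemma:for_any_history_with_k_unknown_entrants_the_regret_of_pi_star_is_lower_bounded_and_upper_bounded_by_quantities_which_depend_on_k}. Write $S = S^{K} \cup U_{\ell'}$ with $\ell' \geq 1$ unknown entrants and known products $S^{K} \subseteq I(\mathcal{H})$. Exactly as in the proof of Theorem~\ref{thm:optimal_policy_different_rewards_single_entrant}, each known $i\in S^K$ is chosen $w_i(\mathcal{H})/(\ell'\tildew)$ times in expectation during the epoch, the outside option $w_0/(\ell'\tildew)$ times, and an entrant exactly once. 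This gives the decomposition
$$\textsc{EpochReg}(S;\mathcal{H}) = \sum_{i \in S^K} \frac{w_i(\mathcal{H})}{\ell'\tildew}(\textsc{OPT}(\mathcal{H}) - r_i) + \frac{w_0}{\ell'\tildew}\textsc{OPT}(\mathcal{H}) + (\textsc{OPT}(\mathcal{H}) - r).$$

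For the upper bound, we will use $\textsc{OPT}(\mathcal{H}) \leq r_{\text{max}}$, since every realized assortment has revenue at most $r_{\text{max}}$, and $r_i \geq 0$, $r \geq 0$. Each coefficient $\textsc{OPT}(\mathcal{H}) - r_i$, the $\textsc{OPT}(\mathcal{H})$ term, and $\textsc{OPT}(\mathcal{H}) - r$ is then at most $r_{\text{max}}$. Hence
$$\textsc{EpochReg}(S;\mathcal{H}) \leq r_{\text{max}}\Big(1 + \frac{\sum_{i\in S^K} w_i(\mathcal{H}) + w_0}{\ell'\tildew}\Big).$$
Next we use $\ell' \geq 1$, $S^K \subseteq I(\mathcal{H})$ with nonnegative weights, and $w_0 = 1$. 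These give $\sum_{i \in S^K} w_i(\mathcal{H}) + w_0 \leq \sum_{i \in I(\mathcal{H})} w_i(\mathcal{H}) + 1$, and dividing by $\ell'\tildew \geq \tildew$ yields the stated bound.

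The only point needing care is the justification of $\textsc{OPT}(\mathcal{H}) \leq r_{\text{max}}$ and the nonnegativity of the rewards. The former holds because, for every realization, $\textsc{OPT}$ is a convex combination of the rewards of the chosen products and the outside option's reward $0$, with total weight at most one. This step is the main (though mild) obstacle: the negative-regret terms must not flip sign. We avoid that by dropping $-r_i$ and $-r$ using nonnegative rewards, rather than relying on their sign relative to $\textsc{OPT}(\mathcal{H})$. Everything else is the routine counting already used in Lemma~\ref{lemma:optimal_completion_given_ell_unknown_prods}.
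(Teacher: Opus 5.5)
Your proposal is correct and is essentially the paper's argument. Both reduce the claim to $\textsc{EpochReg}(S;\mathcal{H}) \le r_{\text{max}}\,\tau(S;\mathcal{H})$ with $\tau(S;\mathcal{H}) = 1 + \big(\sum_{i\in S^K} w_i(\mathcal{H}) + 1\big)/(\ell'\tildew)$, and then use $S^K \subseteq I(\mathcal{H})$ and $\ell'\tildew \ge \tildew$. The only cosmetic difference is where $r_{\text{max}}$ enters: the paper bounds the per-round regret $\textsc{OPT}(\mathcal{H}) - \textsc{Rev}(S;\mathcal{H})$ by $r_{\text{max}}$, whereas you bound each per-outcome coefficient by $r_{\text{max}}$ in the option-count decomposition. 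This amounts to the same thing.
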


\begin{proof}[Proof of Lemma~\ref{lemma:for_any_history_with_k_unknown_entrants_the_regret_of_pi_star_is_lower_bounded_and_upper_bounded_by_quantities_which_depend_on_k_diff_rewards}.]
We show by induction on $k$ that for any history $\mathcal{H}$ with $k$ unknown entrants remaining, 
\begin{equation}\label{ineq:upper_and_lower_bound_claim_for_any_k_diff_rewards}
    \textsc{Reg}(\pi^{\star};\mathcal{H}) \geq -k \cdot r \quad \text{ and } \quad \textsc{Reg}(\pi^{\star};\mathcal{H}) < r_{\text{max}} \left( k + \frac{k \Big(\sum_{i \in I(\mathcal{H})} w_i(\mathcal{H}) + k\expect_{w \sim \mathcal{F}}[w]  + 1\Big)}{\tildew}\right)
\end{equation}

\paragraph{Base of induction.} If $k = 0$, $\mathcal{H}$ is terminal and Lemma~\ref{lemma:any_policy_has_a_nonnegative_regret_starting_from_a_terminal_history_and_pi_star_as_a_zero_regret_starting_from_a_terminal_history} implies that $\textsc{Reg}(\pi^{\star};\mathcal{H}) = 0$. 

\paragraph{Induction step.} Suppose \eqref{ineq:upper_and_lower_bound_claim_for_any_k_diff_rewards} holds for $k$ and we will show it for $k+1$. If $\mathcal{H}$ is terminal, then by Lemma~\ref{lemma:any_policy_has_a_nonnegative_regret_starting_from_a_terminal_history_and_pi_star_as_a_zero_regret_starting_from_a_terminal_history}, $\textsc{Reg}(\pi^{\star};\mathcal{H}) = 0$ which satisfies the conditions. Suppose $\mathcal{H}$ is not terminal. Then by Lemma~\ref{lemma:the_regret_of_pi_star_can_be_decomposed_into_the_first_epoch_regret_and_the_future_regret},
\begin{equation}\label{eq:regret_decomposition_pi_star_different_rewards}
      \textsc{Reg}(\pi^{\star};\mathcal{H}) = \min\limits_{S \subseteq \mathcal{E}(\mathcal{H})} \textsc{EpochReg}(S;\mathcal{H}) + \underbrace{\expect[\textsc{Reg}(\pi^{\star};\mathcal{H}_{Z^{\mathcal{H}}_{\star}} \cup (S_{Z^{\mathcal{H}}_{\star}}, Y_{Z^{\mathcal{H}}_{\star}}, w_{Z^{\mathcal{H}}_{\star}}))|\mathcal{H}_1 = \mathcal{H}]}_{(\star)}.
\end{equation}
Given that $(\star) \geq -k \cdot r$ by the induction hypothesis and $\min\limits_{S \subseteq \mathcal{E}(\mathcal{H})} \textsc{EpochReg}(S;\mathcal{H})  \geq -r$ by Lemma~\ref{lemma:epoch_reg_lb_het_rewards}, the lower bound of \eqref{ineq:upper_and_lower_bound_claim_for_any_k_diff_rewards} follows. Letting $\mu \coloneq \expect_{w \sim \mathcal{F}}[w]$ and 
$W(\mathcal{H}) \coloneq \sum_{i \in I(\mathcal{H})} w_i(\mathcal{H})$ as shorthands, the regret of $\pi^{\star}$ given $\mathcal{H}$ is upper bounded by 
\begin{align*}
     \textsc{Reg}(\pi^{\star};\mathcal{H}) &= \min\limits_{S \subseteq \mathcal{E}(\mathcal{H})} \textsc{EpochReg}(S;\mathcal{H}) + \expect[\textsc{Reg}(\pi^{\star};\mathcal{H}_{Z^{\mathcal{H}}_{\star}} \cup (S_{Z^{\mathcal{H}}_{\star}}, Y_{Z^{\mathcal{H}}_{\star}}, w_{Z^{\mathcal{H}}_{\star}}))|\mathcal{H}_1 = \mathcal{H}] \\
     &\leq  \min\limits_{S \subseteq \mathcal{E}(\mathcal{H})} \textsc{EpochReg}(S;\mathcal{H}) + r_{\text{max}}\expect_{w_{Z^{\mathcal{H}}_{\star}} \sim \mathcal{F}} \Bigg[ k + \frac{k \Big(W(\mathcal{H}) + w_{Z^{\mathcal{H}}_{\star}} + k \mu  +1 \Big)}{\tildew} \Bigg]\\
     &\leq  r_{\text{max}}  \left(1 + \frac{W(\mathcal{H}) + 1}{\tildew} \right)+ r_{\text{max}} \expect_{w_{Z^{\mathcal{H}}_{\star}} \sim \mathcal{F}} \Bigg[ k + \frac{k \Big(W(\mathcal{H}) + w_{Z^{\mathcal{H}}_{\star}} + k \mu +1\Big)}{\tildew} \Bigg]\\
     &\leq r_{\text{max}} \left( k+1 + \frac{(k+1) \Big( W(\mathcal{H}) +(k+1)\mu +  1\Big)}{\tildew} \right).
\end{align*}
The first inequality uses the induction hypothesis; the second inequality uses Lemma~\ref{lemma:epoch_reg_ub_het_rewards}; the third equality uses $\expect_{w_{Z^{\mathcal{H}}_{\star}} \sim \mathcal{F}} [w_{Z^{\mathcal{H}}_{\star}}]  = \mu\geq 0$. This yields the upper bound in \eqref{ineq:upper_and_lower_bound_claim_for_any_k_diff_rewards} and finishes the induction and the proof.
\end{proof}

\begin{proof}[Proof of Lemma~\ref{lemma:epoch_reg_lb_het_rewards}.]
The epoch regret of $S$ is lower bounded by
\begin{align*}\label{ineq:epoch_reg_lower_bound_diff_rewards}
    &\textsc{EpochReg}(S;\mathcal{H}) = \Bigg(\textsc{OPT}(\mathcal{H}) - \frac{\sum_{i \in S^{\textsc{k}}} r_i w_i(\mathcal{H}) + r \sum_{i \in S^{\textsc{u}}} w_i(\mathcal{H})}{\sum_{i \in S^{\textsc{k}}} w_i(\mathcal{H}) + \sum_{i \in S^{\textsc{u}}} w_i(\mathcal{H}) + 1} \Bigg) \frac{\sum_{i \in S^{\textsc{k}}} w_i(\mathcal{H}) + \sum_{i \in S^{\textsc{u}}} w_i(\mathcal{H}) + 1}{\sum_{i \in S^{\textsc{u}}} w_i(\mathcal{H})} \nonumber\\
    &\geq \Bigg(\frac{\sum_{i \in S^{\textsc{k}}} r_i w_i(\mathcal{H}) }{\sum_{i \in S^{\textsc{k}}}  w_i(\mathcal{H}) + 1} - \frac{\sum_{i \in S^{\textsc{k}}} r_i w_i(\mathcal{H}) + r\sum_{i \in S^{\textsc{u}}} w_i(\mathcal{H})}{\sum_{i \in S^{\textsc{k}}} w_i(\mathcal{H}) +  \sum_{i \in S^{\textsc{u}}} w_i(\mathcal{H}) + 1} \Bigg) \frac{\sum_{i \in S^{\textsc{k}}} w_i(\mathcal{H}) + \sum_{i \in S^{\textsc{u}}} w_i(\mathcal{H}) + 1}{\sum_{i \in S^{\textsc{u}}} w_i(\mathcal{H})} \nonumber \\
    &= \frac{\Big(\sum_{i \in S^{\textsc{k}}} r_i w_i(\mathcal{H})  \Big)\Big(\sum_{i \in S^{\textsc{k}}} w_i(\mathcal{H}) + \sum_{i \in S^{\textsc{u}}} w_i(\mathcal{H}) + 1 \Big)}{\Big(\sum_{i \in S^{\textsc{u}}} w_i(\mathcal{H}) \Big) \Big(\sum_{i \in S^{\textsc{k}}}  w_i(\mathcal{H}) + 1 \Big) } - \frac{\sum_{i \in S^{\textsc{k}}} r_i w_i(\mathcal{H})}{\sum_{i \in S^{\textsc{u}}} w_i(\mathcal{H})} -r \nonumber\\
    &= \frac{\sum_{i \in S^{\textsc{k}}} r_i w_i(\mathcal{H})  }{\sum_{i \in S^{\textsc{u}}} w_i(\mathcal{H}) } \Bigg(1 + \frac{\sum_{i \in S^{\textsc{u}}} w_i(\mathcal{H})}{\sum_{i \in S^{\textsc{k}}} w_i(\mathcal{H}) + 1 } \Bigg) - \frac{\sum_{i \in S^{\textsc{k}}} r_i w_i(\mathcal{H})}{\sum_{i \in S^{\textsc{u}}} w_i(\mathcal{H})} -r  \geq -r.
\end{align*}
The first equality expresses the epoch regret as the regret per round of $S$ multiplied by the expected number of rounds in an epoch. The first inequality uses $\textsc{OPT}(\mathcal{H}) \geq \frac{\sum_{i \in S^{\textsc{k}}} r_i w_i(\mathcal{H}) }{\sum_{i \in S^{\textsc{k}}}  w_i(\mathcal{H}) + 1}$. The second equality multiplies each term in the brackets by $\frac{\sum_{i \in S^{\textsc{k}}} w_i(\mathcal{H}) + \sum_{i \in S^{\textsc{u}}} w_i(\mathcal{H}) + 1}{\sum_{i \in S^{\textsc{u}}} w_i(\mathcal{H})}$, cancels the attraction parameter sum $\sum_{i \in S^{\textsc{k}}} w_i(\mathcal{H}) + \sum_{i \in S^{\textsc{u}}} w_i(\mathcal{H}) + 1$ from the denominator of the second term and separates $r$. 
\end{proof}

\begin{proof}[Proof of Lemma~\ref{lemma:epoch_reg_ub_het_rewards}.]
The epoch regret is upper bounded by
\begin{align*}
    \textsc{EpochReg}(S;\mathcal{H}) &= \Bigg(\textsc{OPT}(\mathcal{H}) - \frac{\sum_{i \in S^{\textsc{k}}} r_i w_i(\mathcal{H}) + r \sum_{i \in S^{\textsc{u}}} w_i(\mathcal{H})}{\sum_{i \in S^{\textsc{k}}} w_i(\mathcal{H}) + \sum_{i \in S^{\textsc{u}}} w_i(\mathcal{H}) + 1} \Bigg) \frac{\sum_{i \in S^{\textsc{k}}} w_i(\mathcal{H}) + \sum_{i \in S^{\textsc{u}}} w_i(\mathcal{H}) + 1}{\sum_{i \in S^{\textsc{u}}} w_i(\mathcal{H})} \nonumber\\
    &\leq r_{\text{max}} \left(\frac{\sum_{i \in S^{\textsc{k}}} w_i(\mathcal{H}) + \sum_{i \in S^{\textsc{u}}} w_i(\mathcal{H}) + 1}{\sum_{i \in S^{\textsc{u}}} w_i(\mathcal{H})} \right)
    \leq  r_{\text{max}} \left(1 + \frac{\sum_{i \in I(\mathcal{H})} w_i(\mathcal{H}) + 1}{\tildew} \right).
\end{align*}
The first equality expresses the epoch regret as the regret per round of $S$ multiplied by the expected number of rounds in an epoch. The first inequality upper bounds the regret per round by $\textsc{OPT}(\mathcal{H}) - \frac{\sum_{i \in S^{\textsc{k}}} r_i w_i(\mathcal{H}) + r \sum_{i \in S^{\textsc{u}}} w_i(\mathcal{H})}{\sum_{i \in S^{\textsc{k}}} w_i(\mathcal{H}) + \sum_{i \in S^{\textsc{u}}} w_i(\mathcal{H}) + 1} \leq r_{\text{max}}$ as $\textsc{OPT}(\mathcal{H}) \leq r_{\text{max}}$ and $\frac{\sum_{i \in S^{\textsc{k}}} r_i w_i(\mathcal{H}) + r \sum_{i \in S^{\textsc{u}}} w_i(\mathcal{H})}{\sum_{i \in S^{\textsc{k}}} w_i(\mathcal{H}) + \sum_{i \in S^{\textsc{u}}} w_i(\mathcal{H}) + 1} \geq 0$. The second inequality uses that $\sum_{i \in S^{\textsc{k}}} w_i(\mathcal{H}) \leq \sum_{i \in I(\mathcal{H})} w_i(\mathcal{H})$ (as $S^{\textsc{K}} \subseteq I(\mathcal{H})$) and $\sum_{i \in S^{\textsc{u}}} w_i(\mathcal{H}) \geq \tildew$.
\end{proof}
 
\subsection{Optimal completion to set of unknown entrants (Lemma~\ref{lemma:optimal_completion_given_ell_unknown_prods_different_rewards})}\label{appendix_subsec:lemma4.2_het_rewards_proof}
To prove Lemma \ref{lemma:optimal_completion_given_ell_unknown_prods_different_rewards}, the following lemma provides an expression for the epoch regret of an assortment offering $\ell$ unknown entrants. Recall that $m_t$ is the number of unknown entrants left and that $k_t = \min(c, m_t)$.

\begin{lemma}\label{lemma:expression_for_the_epoch_regret_of_an_assortment_containg_ell_unknown_entrants}
    For any $\ell \in \{1, \ldots, k_t\}$ and any assortment $S'$ of at most $c-\ell$ known products,  $$\textsc{EpochReg}_t(U_\ell \cup S') = \frac{w_0}{\ell \tildew}\sum_{i \in S'} \textsc{SIR}_i(t) + \textsc{OPT}_t \frac{w_0}{\ell \tildew}+ (\textsc{OPT}_t-r)$$
\end{lemma}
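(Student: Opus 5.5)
The plan is to compute the epoch regret directly as a sum of per-option contributions, in the same way as the proof of Theorem~\ref{thm:optimal_policy_different_rewards_single_entrant}, but now with $\ell$ unknown entrants. Fix $\ell \in \{1,\ldots,k_t\}$ and $S' \subseteq \mathcal{I}_t$ with $|S'| \le c-\ell$. Suppose $U_\ell \cup S'$ is offered repeatedly from history $\mathcal{H}_t$. Each of the $\ell$ unknown entrants has attraction $\tildew$, so in every round some unknown entrant is purchased with probability
\[
q \;=\; \frac{\ell\tildew}{\sum_{i\in S'} w_i + \ell\tildew + w_0},
\]
independently across rounds. Hence the epoch length is geometric with mean $\tau(U_\ell\cup S';\mathcal{H}_t)=1/q$.

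For each option $i \in S' \cup \{0\}$, let $N_i$ be the number of rounds in the epoch in which $i$ is chosen. I would argue that $\mathbb{E}[N_i] = \mathbb{P}[i \text{ chosen}]\cdot \tau = w_i/(\ell\tildew)$, using the same MNL/IIA cancellation as in Lemma~\ref{lemma:expression_ratio_reward_loss_time_gain}. Formally, each non-final round is a choice among $S'\cup\{0\}$ conditional on no entrant being purchased, and the number of such rounds has mean $\tau-1$. Conditional on no entrant purchase, option $i$ is chosen with probability $w_i/(\sum_{S'}w_j+w_0)$. Multiplying gives $(\tau-1)\,w_i/(\sum_{S'}w_j+w_0) = w_i/(\ell\tildew)$. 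In particular $\mathbb{E}[N_0] = w_0/(\ell\tildew)$.

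Next I would write the epoch regret as $\textsc{OPT}_t\tau - r$ and distribute the $\textsc{OPT}_t$ charge over the rounds of the epoch:
\begin{itemize}
\item a round in which known product $i$ is chosen contributes $\textsc{OPT}_t - r_i$;
\item a round in which the outside option is chosen contributes $\textsc{OPT}_t$;
\item the final round, in which an unknown entrant is purchased, contributes $\textsc{OPT}_t - r$.
\end{itemize}
By linearity of expectation, and since $\tau = \sum_{i\in S'}\mathbb{E}[N_i] + \mathbb{E}[N_0] + 1$ exactly, this gives
\[
\textsc{EpochReg}_t(U_\ell\cup S') = \sum_{i\in S'}(\textsc{OPT}_t - r_i)\frac{w_i}{\ell\tildew} + \textsc{OPT}_t\frac{w_0}{\ell\tildew} + (\textsc{OPT}_t - r).
\]
Finally, substituting $(\textsc{OPT}_t - r_i)w_i = w_0\,\textsc{SIR}_i(t)$, which is the definition of $\textsc{SIR}_i(t)$, turns the first sum into $\frac{w_0}{\ell\tildew}\sum_{i\in S'}\textsc{SIR}_i(t)$ and yields the claim.

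The only point needing care is justifying $\mathbb{E}[N_i] = \mathbb{P}[i\text{ chosen}]\cdot\tau$. This is a Wald-type identity: the stopping time $\tau$ has finite mean, and the per-round choice indicators are i.i.d. with bounded values. It was used without comment in the earlier proofs, so I would simply cite it in the same way.
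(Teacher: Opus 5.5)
Your proposal is correct and matches the paper's proof essentially step for step. Like the paper, you compute $\mathbb{E}[N_i]=w_i/(\ell\tildew)$ for each $i\in S'\cup\{0\}$ via the MNL cancellation, charge $\textsc{OPT}_t-r_i$, $\textsc{OPT}_t$, and $\textsc{OPT}_t-r$ to known-product rounds, outside-option rounds, and the final entrant-purchase round respectively, and then substitute the definition of $\textsc{SIR}_i(t)$. Your computation of $\mathbb{E}[N_i]$ through the $\tau-1$ non-final rounds, conditioned on no entrant purchase, is simply an equivalent version of the paper's $\prob[i\text{ chosen}]\cdot\tau$.
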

\begin{proof}[Proof of Lemma~\ref{lemma:optimal_completion_given_ell_unknown_prods_different_rewards}]
By Lemma~\ref{lemma:expression_for_the_epoch_regret_of_an_assortment_containg_ell_unknown_entrants}, minimizing $\textsc{EpochReg}_t(U_\ell \cup S')$ is equivalent to minimizing the term $\sum_{i \in S'} \textsc{SIR}_i(t)$ which is achieved exactly when $S'$ consists of the $\min(c-\ell, n_t^{<0})$ known products with the lowest scaled interim regrets, i.e., $S = V^{\star}_{(\min(c-\ell, n_t^{<0}))}$. 
\end{proof}

\begin{proof}[Proof of Lemma~\ref{lemma:expression_for_the_epoch_regret_of_an_assortment_containg_ell_unknown_entrants}]
Letting $N_i$ be the number of rounds option $i \in S' \cup \{0\}$ is chosen during the current epoch if $U_{\ell} \cup S'$ is offered, 
\begin{equation}\label{eq:expected_number_choices_product_i_during_epoch_diff_rewards}
    \expect[N_i] = \prob[\text{$i$ is purchased}|U_{\ell} \cup S'] \tau(U_{\ell} \cup S'; \mathcal{I}_t) = \frac{\frac{w_i}{\sum_{j \in S'} w_j + \ell \tildew + w_0}}{\frac{\ell \tildew}{\sum_{j \in S'} w_j + \ell \tildew + w_0}} = \frac{w_i}{\ell \tildew}.
\end{equation}
For any product $i \in S$, $\pi(S)$ incurs a regret of $\textsc{OPT}_t-r_i$ during the epoch and this happens $\expect[N_i]$ rounds in expectation. Furthermore, $\pi(S)$ incurs regret of $\textsc{OPT}_t$ every time the outside option is purchased and this happens $\expect[N_0]$ rounds in expectation. Moreover, $\pi(S)$ incurs a regret of $\textsc{OPT}_t - r$ when the unknown entrant is purchased. Thus the epoch regret equals
\begin{align*}
    \textsc{EpochReg}_t(U_\ell \cup S') &= \expect \Bigg[\sum_{i \in S'} (\textsc{OPT}_t-r_i) N_i + \textsc{OPT}_1 + (\textsc{OPT}_t-r) \Bigg] \nonumber\\
    &=  \sum_{i \in S'} (\textsc{OPT}_t-r_i) \expect[N_i] + \textsc{OPT}_t \expect[N_0] + (\textsc{OPT}_t-r) \nonumber\\
    &= \sum_{i \in S'} (\textsc{OPT}_t-r_i) \frac{w_i}{\ell \tildew} + \textsc{OPT}_t  \frac{w_0}{\ell \tildew}+ (\textsc{OPT}_t-r) \nonumber\\
    &= \frac{w_0}{\ell \tildew}\sum_{i \in S'} \textsc{SIR}_i(t) + \textsc{OPT}_t \frac{w_0}{\ell \tildew}+ (\textsc{OPT}_t-r)
\end{align*}
where the third equality uses \eqref{eq:expected_number_choices_product_i_during_epoch_diff_rewards} and the fourth equality uses $(\textsc{OPT}_t-r_i) \frac{w_i}{\ell \tildew} = \frac{w_0}{\ell \tildew} \textsc{SIR}_i(t)$.
\end{proof}

\subsection{Optimal number of unknown entrants (Lemma~\ref{lemma:num_unknown_ORFA.DR_minimizes_epoch_reg_different_rewards})}\label{appendix_subsec:lemma4.3_het_rewards_proof}

To prove Lemma~\ref{lemma:num_unknown_ORFA.DR_minimizes_epoch_reg_different_rewards}, let $\textsc{EpochReg}^{\star}(\ell) = \textsc{EpochReg}_t( U_{\ell} \cup V^{\star}_{(\min(c-\ell, n_t^{<0}))})$ for ease of notation. The following lemma compares $\textsc{EpochReg}^{\star}(\ell)$ and $\textsc{EpochReg}^{\star}(\ell+1)$ when $\ell \geq c-n_t^{<0}$. This is the analogue of \eqref{ineq:ell_plus_one_smaller_epoch_regret_thanell} and Lemma~\ref{lemma:expression_ratio_reward_loss_time_gain}. 

\begin{lemma}\label{lemma:comaparison_ell_ell_plus_one_larger_than_c_minus_n}
    For any $\ell \geq c-n_t^{<0}$, $\textsc{EpochReg}^{\star}(\ell) \geq \textsc{EpochReg}^{\star}(\ell+1)$ if and only if $\textsc{OPT}_t \geq \beta_t(\ell + 1)$
\end{lemma}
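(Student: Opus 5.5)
The plan is a direct computation from the closed-form epoch regret in Lemma~\ref{lemma:expression_for_the_epoch_regret_of_an_assortment_containg_ell_unknown_entrants}. It mirrors the derivation of \eqref{ineq:ell_plus_one_smaller_epoch_regret_thanell} and Lemma~\ref{lemma:expression_ratio_reward_loss_time_gain}, with scaled interim regrets playing the role of attraction parameters. Throughout, $\ell$ and $\ell+1$ lie in $\{1,\dots,k_t\}$, so $c-\ell\ge 1$.

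\textbf{Step 1: remove the $\min$.} The hypothesis $\ell \ge c-n_t^{<0}$ gives $c-\ell \le n_t^{<0}$, and then also $c-\ell-1 \le n_t^{<0}$. Hence
\[
\min(c-\ell,n_t^{<0})=c-\ell,\qquad \min(c-\ell-1,n_t^{<0})=c-\ell-1 .
\]
So $V^{\star}_{(\min(c-\ell,n_t^{<0}))}$ is exactly the set of the $c-\ell$ known products with the lowest scaled interim regrets, and similarly for $\ell+1$. These indices are well defined because there are at least $c$ incumbents. This is the only place the hypothesis enters. For smaller $\ell$ the completion set would be capped at $n_t^{<0}$, and the clean comparison below would not hold.

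\textbf{Step 2: substitute into the closed form.} Write $A_k=\sum_{i=1}^{k}\textsc{SIR}_{(i)}(t)$. Lemma~\ref{lemma:expression_for_the_epoch_regret_of_an_assortment_containg_ell_unknown_entrants} gives
\[
\textsc{EpochReg}^{\star}(\ell)=\frac{w_0}{\tildew}\cdot\frac{A_{c-\ell}+\textsc{OPT}_t}{\ell}+(\textsc{OPT}_t-r).
\]
The last term does not depend on $\ell$, so it cancels in the difference $\textsc{EpochReg}^{\star}(\ell)-\textsc{EpochReg}^{\star}(\ell+1)$. Since $w_0/\tildew>0$, the difference is nonnegative if and only if
\[
\frac{A_{c-\ell}+\textsc{OPT}_t}{\ell}-\frac{A_{c-\ell-1}+\textsc{OPT}_t}{\ell+1}\ \ge\ 0 .
\]

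\textbf{Step 3: simplify and match $\beta_t$.} Put the two fractions over the common denominator $\ell(\ell+1)>0$ and use $A_{c-\ell}=A_{c-\ell-1}+\textsc{SIR}_{(c-\ell)}(t)$. The numerator becomes
\[
A_{c-\ell-1}+(\ell+1)\,\textsc{SIR}_{(c-\ell)}(t)+\textsc{OPT}_t .
\]
Thus the condition is equivalent to
\[
\textsc{OPT}_t \ \ge\ -\sum_{i=1}^{c-\ell-1}\textsc{SIR}_{(i)}(t)-(\ell+1)\,\textsc{SIR}_{(c-\ell)}(t),
\]
and the right-hand side is exactly $\beta_t(\ell+1)$ by definition.

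There is no real obstacle in this argument. The only point needing care is Step 1: checking that the $\min$ resolves the same way for both $\ell$ and $\ell+1$, so that the completion sets are nested prefixes of the SIR ordering. The rest is the same common-denominator algebra used in \eqref{eq:reward_loss}.
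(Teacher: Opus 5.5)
Your proposal is correct and follows the same route as the paper's proof. Like the paper, you use $c-\ell\le n_t^{<0}$ to replace the $\min$ by $c-\ell$ and $c-\ell-1$, substitute into the closed form of Lemma~\ref{lemma:expression_for_the_epoch_regret_of_an_assortment_containg_ell_unknown_entrants} so that the $\textsc{OPT}_t-r$ term cancels, and clear the positive denominators to reach exactly $\textsc{OPT}_t\ge\beta_t(\ell+1)$.
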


Given that known products with positive scaled interim regret are not included, exploring less than $c-n_t^{<0}$ unknown entrants is never optimal. This is formalized in the next lemma. 

\begin{lemma}\label{lemma:c_minus_n_unknown_entrants_are_better_than_any_smaller_number}
     For any round $t$ with $\textsc{OPT}_t > \textsc{Rev}^{\star}_{t}$ and number of unknown entrants $\ell < c-n_t^{<0}$, $\textsc{EpochReg}^{\star}(\ell) > \textsc{EpochReg}^{\star}(c-n_t^{<0}) $.
\end{lemma}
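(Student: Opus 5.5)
The plan is to work from the closed form in Lemma~\ref{lemma:expression_for_the_epoch_regret_of_an_assortment_containg_ell_unknown_entrants}, namely
$$\textsc{EpochReg}^{\star}(\ell) = \frac{w_0}{\ell \tildew}\Big(\sum_{i \in V^{\star}_{(\min(c-\ell, n_t^{<0}))}} \textsc{SIR}_i(t) + \textsc{OPT}_t\Big) + (\textsc{OPT}_t - r).$$
For $\ell < c-n_t^{<0}$ we have $\min(c-\ell,n_t^{<0}) = n_t^{<0}$. Hence the completion set is $\mathcal{I}^{<0}_t$ in full for every such $\ell$, and also for $\ell^{\dagger} := c-n_t^{<0}$. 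Write $A := \sum_{i\in\mathcal{I}^{<0}_t}\textsc{SIR}_i(t) + \textsc{OPT}_t$. Then for all $\ell \le \ell^\dagger$,
$$\textsc{EpochReg}^{\star}(\ell) = \frac{w_0}{\ell\tildew}A + (\textsc{OPT}_t - r).$$
The constant $(\textsc{OPT}_t-r)$ is common to both sides, so the claim $\textsc{EpochReg}^{\star}(\ell) > \textsc{EpochReg}^{\star}(\ell^\dagger)$ for $\ell<\ell^\dagger$ reduces to $A/\ell > A/\ell^\dagger$. Since $\ell<\ell^\dagger$, this is equivalent to $A>0$.

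So the key step is to show $A>0$, i.e.
$$\textsc{OPT}_t > -\sum_{i\in\mathcal{I}^{<0}_t}\textsc{SIR}_i(t) = \sum_{i\in \mathcal{I}^{<0}_t}(r_i-\textsc{OPT}_t)\frac{w_i}{w_0}.$$
Let $W = \sum_{i\in\mathcal{I}^{<0}_t} w_i$ and $R = \sum_{i\in\mathcal{I}^{<0}_t} r_i w_i$. The inequality becomes $\textsc{OPT}_t\, w_0 > R - \textsc{OPT}_t W$, i.e. $\textsc{OPT}_t > R/(W+w_0)$. The right-hand side is the MNL revenue of offering $\mathcal{I}^{<0}_t$. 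Because $\ell^\dagger\ge 1$ in this regime (otherwise no $\ell<\ell^\dagger$ exists in $\{1,\dots,k_t\}$), we have $n_t^{<0}\le c-1$, so $\mathcal{I}^{<0}_t$ is a feasible assortment of known products. Therefore its revenue is at most $\textsc{Rev}^\star_t$, and the hypothesis $\textsc{OPT}_t > \textsc{Rev}^\star_t$ gives the strict inequality. With $w_0=1$ the computation is identical.

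The only subtle point I expect is bookkeeping. First, one must check that the range of $\ell$ considered is $\ell\ge1$, so that $\ell<\ell^\dagger$ forces $n_t^{<0}<c$ and makes the assortment feasible. Second, one must confirm that the expression in Lemma~\ref{lemma:expression_for_the_epoch_regret_of_an_assortment_containg_ell_unknown_entrants} applies with $S'=\mathcal{I}^{<0}_t$ for both $\ell$ and $\ell^\dagger$, which holds since $|S'|=n_t^{<0}\le c-\ell^\dagger\le c-\ell$. Beyond that the argument is a one-line monotonicity in $1/\ell$.
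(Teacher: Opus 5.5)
Your proof is correct and follows the paper's argument essentially step for step. For all $\ell \le c-n_t^{<0}$ you plug the completion set $\mathcal{I}^{<0}_t$ into the closed-form epoch regret, reduce the comparison to positivity of $\sum_{i\in\mathcal{I}^{<0}_t}\textsc{SIR}_i(t)+\textsc{OPT}_t$, and conclude by noting that the MNL revenue of $\mathcal{I}^{<0}_t$ is at most $\textsc{Rev}^\star_t < \textsc{OPT}_t$; your explicit check that $n_t^{<0}\le c-\ell$ is bookkeeping the paper leaves implicit.
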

\begin{proof}[Proof of Lemma~\ref{lemma:num_unknown_ORFA.DR_minimizes_epoch_reg_different_rewards}.]
To prove the lemma it suffices to show that 
\begin{equation*}
    \textsc{EpochReg}^{\star}(\ell) \geq  \textsc{EpochReg}^{\star}(\ell_t^{\star}) \text{ for all $\ell \in \{1, \ldots, k_t\}$}
\end{equation*}
Let $\ell \in \{1, \ldots, k_t\}$. We consider cases based on whether $\ell \geq \ell_t^{\star}$, $\ell_t^{\star} > \ell \geq c-n_t^{<0}$, or $\ell < c-n_t^{<0}$. 

\paragraph{Case 1: $\ell \geq \ell_t^{\star}$.} For any $\ell' \geq \ell_t^{\star}$, it holds that $\ell' \geq \ell_t$ as $\ell_t^{\star} =\max(\ell_t, c-n_t^{<0})$. As a result, $\textsc{OPT}_t < \beta(\ell'+1)$ and Lemma~\ref{lemma:comaparison_ell_ell_plus_one_larger_than_c_minus_n} yields $\textsc{EpochReg}^{\star}(\ell'+1) \geq \textsc{EpochReg}^{\star}(\ell')$. Applying this for $\ell' \in \{\ell-1, \ldots, \ell_t^{\star}\}$ yields $\textsc{EpochReg}^{\star}(\ell) \geq  \textsc{EpochReg}^{\star}(\ell_t^{\star})$.

\paragraph{Case 2: $\ell_t^{\star} > \ell \geq c-n_t^{<0}$.} Given that $\ell_t^{\star} =\max(\ell_t, c-n_t^{<0})$, this case is only possible when $\ell_t^{\star} =\ell_t$. For any $\ell'$ such that $\ell_t > \ell'$ and $\ell' \geq c-n_t^{<0}$, $\textsc{OPT}_t \geq \beta(\ell'+1)$, and Lemma~\ref{lemma:comaparison_ell_ell_plus_one_larger_than_c_minus_n} yields $\textsc{EpochReg}^{\star}(\ell'+1) \leq \textsc{EpochReg}^{\star}(\ell')$. Applying this for $\ell' \in \{\ell, \ldots, \ell_t-1\}$ yields $$\textsc{EpochReg}^{\star}(\ell) \geq  \textsc{EpochReg}^{\star}(\ell_t) =  \textsc{EpochReg}^{\star}(\ell_t^{\star}).$$

\paragraph{Case 3: $\ell < c-n_t^{<0}$.} Then 
$\textsc{EpochReg}^{\star}(\ell) >  \textsc{EpochReg}^{\star}(c-n_t^{<0})\geq  \textsc{EpochReg}^{\star}(\ell_t^{\star})$
where the first inequality uses Lemma~\ref{lemma:c_minus_n_unknown_entrants_are_better_than_any_smaller_number} and the second inequality uses the previous case.
\end{proof}

\begin{proof}[Proof of Lemma~\ref{lemma:comaparison_ell_ell_plus_one_larger_than_c_minus_n}.]
For any $\ell' \geq  c-n_t^{<0}$, the epoch regret is expressed as 
\begin{align}\label{eq:epoch_reg_ell_prime_analytical_expressions}
    \textsc{EpochReg}^{\star}(\ell') &= \textsc{EpochReg}_t( U_{\ell'} \cup V^{\star}_{(\min(c-\ell', n_t^{<0}))}) =\textsc{EpochReg}_t( U_{\ell'} \cup V^{\star}_{(c-\ell')}) \nonumber\\
    &= \frac{\sum_{i =1}^{c-\ell'} w_0 \textsc{SIR}_{(i)}^{\star}(t) + w_0 \textsc{OPT}_t}{\ell' \tildew} + \textsc{OPT}_t -r.
\end{align}
The second equality uses $\min(c-\ell', n_t^{<0}) = c-\ell'$  (as $c-\ell' \leq n_t^{<0}$). The third equality uses  Lemma~\ref{lemma:expression_for_the_epoch_regret_of_an_assortment_containg_ell_unknown_entrants}. Using \eqref{eq:epoch_reg_ell_prime_analytical_expressions} for $\ell' \in \{\ell, \ell + 1\}$, the second term cancels; $\textsc{EpochReg}^{\star}(\ell) \geq \textsc{EpochReg}^{\star}(\ell+1)$ is equivalent to
\begin{align*}
    \frac{\sum_{i =1}^{c-\ell} w_0 \textsc{SIR}_{(i)}^{\star}(t) + w_0 \textsc{OPT}_t}{\ell \tildew} &\geq \frac{\sum_{i =1}^{c-\ell-1} w_0 \textsc{SIR}_{(i)}^{\star}(t) + w_0 \textsc{OPT}_t}{(\ell+1) \tildew}\\
    \iff \frac{\sum_{i =1}^{c-\ell}  \textsc{SIR}_{(i)}^{\star}(t) + \textsc{OPT}_t}{\ell } &\geq \frac{\sum_{i =1}^{c-\ell-1} \textsc{SIR}_{(i)}^{\star}(t) +\textsc{OPT}_t}{(\ell+1) }\\
    \iff (\ell + 1) \sum_{i =1}^{c-\ell} \textsc{SIR}_{(i)}^{\star}(t)  + (\ell + 1) \textsc{OPT}_t &\geq \ell \sum_{i =1}^{c-\ell-1}  \textsc{SIR}_{(i)}^{\star}(t)  + \ell\textsc{OPT}_t \\
    \iff \textsc{OPT}_t &\geq \sum_{i=1}^{c-\ell-1} -\textsc{SIR}_{(i)}^{\star}(t)  - (\ell + 1)\textsc{SIR}_{(c-\ell)}^{\star}(t) = \beta_t(\ell+1)
\end{align*}
\end{proof}

\begin{proof}[Proof of Lemma~\ref{lemma:c_minus_n_unknown_entrants_are_better_than_any_smaller_number}.]
For any $\ell' \leq c-n_t^{<0}$, the epoch regret is expressed as 
\begin{align}\label{eq:epoch_reg_ell_prime_less_than_c_minus_nt_idx_analytical_expression}
    \textsc{EpochReg}^{\star}(\ell') &= \textsc{EpochReg}_t( U_{\ell'} \cup V^{\star}_{(\min(c-\ell', n_t^{<0}))}) = \textsc{EpochReg}_t( U_{\ell'} \cup V^{\star}_{(n_t^{<0})}) \nonumber\\
    &= \frac{\sum_{i=1}^{n_t^{<0}} w_0 \textsc{SIR}_{(i)}^{\star}(t) + w_0 \textsc{OPT}_t}{\ell' \tildew}  + \textsc{OPT}_t-r.
\end{align}
The second equality uses $c-\ell' \geq n_t^{<0}$ (as $\ell' \leq c-n_t^{<0}$); the third equality uses Lemma~\ref{lemma:expression_for_the_epoch_regret_of_an_assortment_containg_ell_unknown_entrants}.
Applying \eqref{eq:epoch_reg_ell_prime_less_than_c_minus_nt_idx_analytical_expression} for $\ell' \in \{c-n_t^{<0}, \ell\}$, the second term cancels and $\textsc{EpochReg}^{\star}(c-n_t^{<0}) < \textsc{EpochReg}^{\star}(\ell)$ is equivalent to
\begin{align}\label{equivalent_condition_epoch_reg_ell_larger_than_epoch_reg_c_minus_n}
\frac{\sum_{i=1}^{n_t^{<0}} w_0 \textsc{SIR}_{(i)}^{\star}(t) + w_0 \textsc{OPT}_t}{(c-n_t^{<0}) \tildew} &< \frac{\sum_{i=1}^{n_t^{<0}} w_0 \textsc{SIR}_{(i)}^{\star}(t) + w_0 \textsc{OPT}_t}{\ell \tildew} \nonumber\\
    \iff \frac{\sum_{i=1}^{n_t^{<0}}  \textsc{SIR}_{(i)}^{\star}(t) + \textsc{OPT}_t}{(c-n_t^{<0}) } &< \frac{\sum_{i=1}^{n_t^{<0}} \textsc{SIR}_{(i)}^{\star}(t) + \textsc{OPT}_t}{\ell } \nonumber\\
    \iff \sum_{i=1}^{n_t^{<0}} \textsc{SIR}_{(i)}^{\star}(t)+ \textsc{OPT}_t > 0 &\iff \textsc{OPT}_t > \sum_{i=1}^{n_t^{<0}} -\textsc{SIR}_{(i)}^{\star}(t) \nonumber\\
    \iff \textsc{OPT}_t > \sum_{i=1}^{n_t^{<0}} (r_{(i)} -\textsc{OPT}_t) \frac{w_{(i)}}{w_0} &\iff \textsc{OPT}_t > \frac{\sum_{i=1}^{n_t^{<0}} r_{(i)} w_{(i)}}{\sum_{i=1}^{n_t^{<0}} w_{(i)} + w_0}
\end{align}
The third equivalence uses that $\frac{x}{\ell} < \frac{x}{c - n_t^{<0}}$ if and only if $x> 0$ (as $\ell < c - n_t^{<0}$) for $x = \sum_{i=1}^{n_t^{<0}} \textsc{SIR}_{(i)}^{\star}(t) + \textsc{OPT}_t$. 

We conclude the proof by showing \eqref{equivalent_condition_epoch_reg_ell_larger_than_epoch_reg_c_minus_n}: 
$\textsc{OPT}_t > \textsc{Rev}^{\star}_{t}$ and $\textsc{Rev}^{\star}_{t} $ is the maximum revenue from known products, thus greater than the revenue from known products with negative scaled interim regrets.
\end{proof}

\subsection{Connection to homogeneous-rewards case (Section~\ref{subsec:heterogeneous_rewards})}\label{appendix_subsec:beta_connection_to_rewards_one}

\begin{proposition}\label{prop:condition_het_rewards_reduces_to_condition_hom_rewards}
    For any instance where $r_i = 1$, $\textsc{OPT}_t \geq \beta_t(\ell)$ if and only if $\textsc{OPT}_t \geq \alpha_t(\ell)$
\end{proposition}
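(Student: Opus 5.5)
With $r_i=1$ for all products, $\textsc{SIR}_i(t)=(\textsc{OPT}_t-1)\frac{w_i}{w_0}$. Since $\textsc{OPT}_t<1$, this is strictly negative whenever $w_i>0$. Ordering by lowest scaled interim regret is therefore the same as ordering by highest attraction parameter, so $\textsc{SIR}_{(k)}(t)=(\textsc{OPT}_t-1)\,w^{\star}_{(k)}(\mathcal{I}_t)/w_0$. Substituting into the definition of $\beta_t(\ell)$ gives
$$\beta_t(\ell)=\frac{1-\textsc{OPT}_t}{w_0}\Big(W^{\star}_{(c-\ell)}(\mathcal{I}_t)+\ell\, w^{\star}_{(c-\ell+1)}(\mathcal{I}_t)\Big)=\frac{(1-\textsc{OPT}_t)\,x_\ell}{w_0},$$
where $x_\ell$ is the fictitious-assortment weight already used in the proof of Lemma~\ref{lemma:num_unknown_ORFA_minimizes_epoch_reg}, so that $\alpha_t(\ell)=x_\ell/(x_\ell+w_0)$.

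With this expression the proof is a chain of equivalences. Since $1-\textsc{OPT}_t>0$, $w_0>0$ and $x_\ell\ge 0$,
\[
\textsc{OPT}_t\ge\frac{(1-\textsc{OPT}_t)x_\ell}{w_0}
\iff \textsc{OPT}_t\,w_0\ge x_\ell-\textsc{OPT}_t\,x_\ell
\iff \textsc{OPT}_t\,(x_\ell+w_0)\ge x_\ell
\iff \textsc{OPT}_t\ge\frac{x_\ell}{x_\ell+w_0}=\alpha_t(\ell).
\]
Every step multiplies or divides by a strictly positive quantity ($w_0$ or $x_\ell+w_0$), so the direction of the inequality is preserved throughout.

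One point needs care. Products with $w_i=0$, such as the dummy incumbents, have $\textsc{SIR}_i(t)=0$. These are not in $\mathcal{I}^{<0}_t$, but they sit at the bottom of both orderings, so the identification of the $k$-th lowest SIR with the $k$-th largest weight still holds. It also remains valid for ties. I also have to confirm that $\textsc{OPT}_t<1$ holds strictly; this follows because $w_0>0$ and all rewards equal 1, as the paper notes. This bookkeeping is the only real obstacle; the algebra itself is routine.
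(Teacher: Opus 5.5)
Your proof is correct and follows the paper's argument: you substitute $\textsc{SIR}_{(k)}(t)=(\textsc{OPT}_t-1)\,w^{\star}_{(k)}(\mathcal{I}_t)/w_0$ into $\beta_t(\ell)$ and clear the positive denominators to reach $\textsc{OPT}_t\ge\alpha_t(\ell)$. You explicitly check that, because $\textsc{OPT}_t<1$, ordering by lowest scaled interim regret coincides with ordering by largest attraction parameter; the paper uses this step only implicitly, and your version also avoids a sign slip (a $-\ell$ where $+\ell$ is meant) in the paper's second displayed line.
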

\begin{proof}[Proof of Proposition~\ref{prop:condition_het_rewards_reduces_to_condition_hom_rewards}.]
        \begin{align*}
        \textsc{OPT}_t &\geq \beta_t(\ell)  \iff \textsc{OPT}_t \geq  -\sum_{i=1}^{c-\ell} \textsc{SIR}_{(i)}(t) - \ell \cdot \textsc{SIR}_{(c-\ell+1)}(t)\\
        \iff \textsc{OPT}_t &\geq  \sum_{i=1}^{c-\ell} (1 - \textsc{OPT}_t)\frac{w^{\star}_{(i)}(\mathcal{I}_t)}{w_0} - \ell \cdot (1 - \textsc{OPT}_t)\frac{w^{\star}_{(c-\ell+1)}(\mathcal{I}_t)}{w_0}\\
        \iff \textsc{OPT}_t &\Bigg(\sum_{i=1}^{c-\ell} w^{\star}_{(i)}(\mathcal{I}_t) + \ell w^{\star}_{(c-\ell+1)}(\mathcal{I}_t) + w_0\Bigg) \geq \sum_{i=1}^{c-\ell} w^{\star}_{(i)}(\mathcal{I}_t) + \ell w^{\star}_{(c-\ell+1)}(\mathcal{I}_t) \\
        \iff \textsc{OPT}_t & \ge \frac{\sum_{i=1}^{c-\ell} w^{\star}_{(i)}(\mathcal{I}_t) + \ell w^{\star}_{(c-\ell+1)}(\mathcal{I}_t)}{\sum_{i=1}^{c-\ell} w^{\star}_{(i)}(\mathcal{I}_t) + \ell w^{\star}_{(c-\ell+1)}(\mathcal{I}_t) + w_0} \iff \textsc{OPT}_t \geq \alpha_t(\ell)
    \end{align*}
\end{proof}

\section{Heterogeneous priors (Section~\ref{sec:heterogeneous_priors})}

\subsection{Characterization of the optimal policy with two entrants (Theorem~\ref{thm:characterization_of_the_optimal_policy_with_two_unknown_entrants_with_heterogeneous_priors})}\label{appendix_subsec:characterization_of_the_optimal_policy_with_two_entrants}

We prove the following theorem which is an extension of Theorem~\ref{thm:characterization_of_the_optimal_policy_with_two_unknown_entrants_with_heterogeneous_priors}. This characterizes the exact regret differences between each two of the three exploration policies $\pi^{1}$, $\pi^{2}$, and $\pi^{1,2}$ and the theorem follows directly subsequently. We include this lemma as it helps make the proofs of Propositions~\ref{corollary:stochastic_dominance} and~\ref{prop:different_priors_bernoulli} more concise. 

\begin{lemma}
\label{lemma:characterization_of_the_optimal_policy_with_two_unknown_entrants_with_heterogeneous_priors_regret_differences}
    For any instance in the class $\mathcal{C}$, the regret differences are
$$\textsc{Reg}(\pi^{1,2})-\textsc{Reg}(\pi^{1}) = \expect_{\substack{w_1 \sim \mathcal{F}_1 \\ w_2 \sim \mathcal{F}_2}} \Bigg[\Big(\frac{\tildew_2}{\tildew_1 + \tildew_2} \Big(\frac{\max(w_2, w_3)}{\tildew_1}-\frac{\max(w_1, w_3)}{\tildew_2} \Big) - \frac{w_3}{\tildew_1}  \Big)(\textsc{OPT}-1) - \frac{1}{\tildew_1+ \tildew_2} \textsc{OPT} \Bigg],$$
$$\textsc{Reg}(\pi^{1,2}) - \textsc{Reg}(\pi^{2}) = \expect_{\substack{w_1 \sim \mathcal{F}_1 \\ w_2 \sim \mathcal{F}_2}} \Bigg[\Big( \frac{\tildew_1}{\tildew_1 + \tildew_2} \Big(\frac{\max(w_1, w_3)}{\tildew_2} -\frac{\max(w_2, w_3)}{\tildew_1}\Big)-\frac{w_3}{\tildew_2} \Big)(\textsc{OPT}-1) - \frac{1}{\tildew_1 + \tildew_2} \textsc{OPT} \Bigg],$$
$$
\textsc{Reg}(\pi^{2}) - \textsc{Reg}(\pi^{1}) 
= \expect_{\substack{w_1 \sim \mathcal{F}_1 \\ w_2 \sim \mathcal{F}_2}} 
\left[ 
\left( \frac{w_3}{\tildew_1} + \frac{\max(w_2,w_3)}{\tildew_2} \right)(\textsc{OPT}-1) 
- \left( \frac{w_3}{\tildew_2} + \frac{\max(w_1,w_3)}{\tildew_1} \right)(\textsc{OPT}-1) 
\right].
$$
\end{lemma}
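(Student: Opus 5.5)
The plan is to compute the regret of each of the three policies $\pi^{\{1\}}$, $\pi^{\{2\}}$, $\pi^{\{1,2\}}$ in closed form using the per-option accounting from the proofs of Theorem~\ref{thm:explore_one_unknown_with_best} and Lemma~\ref{lemma:expression_for_the_epoch_regret_of_an_assortment_containg_ell_unknown_entrants}, and then subtract. Each policy consists of at most two exploration epochs followed by the revenue-maximizing assortment, which incurs zero regret. Within an epoch where a fixed assortment $S$ is offered, MNL gives that a known option $i$ is chosen $w_i/(\text{total unknown weight in }S)$ times in expectation. Each such choice costs $\textsc{OPT}-1$ (or $\textsc{OPT}$ for the outside option), and the final entrant purchase costs $\textsc{OPT}-1$. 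Because the epoch lengths and the choice counts depend only on cold-start parameters and on already revealed $w$'s, and $\textsc{OPT}$ is a function of $(w_1,w_2)$, I will write each regret as $\expect_{w_1,w_2}[\cdot]$ of an expression linear in $\textsc{OPT}$. This requires justifying that $\textsc{OPT}$ can be placed inside the expectation jointly with the realized counts. The justification is that the choice process in each epoch is independent of the unrevealed $w$'s, and the second epoch depends on the first revealed $w$ only through its value.

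\textbf{Single-entrant policies.} Take $\pi^{\{1\}}$. In epoch one it offers entrant 1 with incumbent 3 ($c=2$). Incumbent 3 is chosen $w_3/\tildew_1$ times, the outside option $1/\tildew_1$ times, and entrant 1 once. In epoch two it offers entrant 2 with the most attractive known product, whose weight is $\max(w_1,w_3)$. That product is chosen $\max(w_1,w_3)/\tildew_2$ times, the outside option $1/\tildew_2$ times, and entrant 2 once. This gives
\[\textsc{Reg}(\pi^{\{1\}})=\expect\Big[\big(\tfrac{w_3}{\tildew_1}+\tfrac{\max(w_1,w_3)}{\tildew_2}+2\big)(\textsc{OPT}-1)+\big(\tfrac1{\tildew_1}+\tfrac1{\tildew_2}\big)\textsc{OPT}\Big],\]
and the symmetric formula holds for $\pi^{\{2\}}$. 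Subtracting the two, the constant and outside-option terms cancel, which yields the third identity directly.

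\textbf{Joint policy.} $\pi^{\{1,2\}}$ first offers $\{1,2\}$ with no incumbent. The outside option is chosen $1/(\tildew_1+\tildew_2)$ times and no known product is chosen. Entrant 1 is the first one purchased with probability $\tildew_1/(\tildew_1+\tildew_2)$, and this event is independent of $(w_1,w_2)$. After that, the policy runs the second epoch of $\pi^{\{1\}}$. Otherwise entrant 2 is purchased first and the policy runs the second epoch of $\pi^{\{2\}}$. Hence
\[\textsc{Reg}(\pi^{\{1,2\}})=\expect\Big[\big(2+\tfrac{\tildew_1}{\tildew_1+\tildew_2}\tfrac{\max(w_1,w_3)}{\tildew_2}+\tfrac{\tildew_2}{\tildew_1+\tildew_2}\tfrac{\max(w_2,w_3)}{\tildew_1}\big)(\textsc{OPT}-1)+\tfrac{\textsc{OPT}}{\tildew_1+\tildew_2}+\tfrac{\tildew_1}{\tildew_1+\tildew_2}\tfrac{\textsc{OPT}}{\tildew_2}+\tfrac{\tildew_2}{\tildew_1+\tildew_2}\tfrac{\textsc{OPT}}{\tildew_1}\Big].\]

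\textbf{Differences.} The main algebraic step is the outside-option terms. In $\textsc{Reg}(\pi^{\{1,2\}})$ they sum to $\textsc{OPT}\big(\tfrac{1}{\tildew_1+\tildew_2}+\tfrac{\tildew_1^2+\tildew_2^2}{\tildew_1\tildew_2(\tildew_1+\tildew_2)}\big)$. Subtracting $\textsc{OPT}(1/\tildew_1+1/\tildew_2)=\textsc{OPT}\,\tfrac{(\tildew_1+\tildew_2)^2}{\tildew_1\tildew_2(\tildew_1+\tildew_2)}$ leaves $\textsc{OPT}\big(\tfrac1{\tildew_1+\tildew_2}-\tfrac{2}{\tildew_1+\tildew_2}\big)=-\tfrac{\textsc{OPT}}{\tildew_1+\tildew_2}$, which matches the stated term. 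For the $(\textsc{OPT}-1)$ coefficients in $\textsc{Reg}(\pi^{\{1,2\}})-\textsc{Reg}(\pi^{\{1\}})$, use $\tfrac{\tildew_1}{\tildew_1+\tildew_2}-1=-\tfrac{\tildew_2}{\tildew_1+\tildew_2}$ on the $\max(w_1,w_3)/\tildew_2$ term and subtract $w_3/\tildew_1$; this reproduces the first bracket. The second identity follows by swapping the roles of entrants 1 and 2. The main point to get right is the independence justification in the first paragraph, since the algebra itself is routine.
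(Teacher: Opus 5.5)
Your route is the paper's. You compute $\textsc{Reg}(\pi^{\{1\}})$, $\textsc{Reg}(\pi^{\{2\}})$ and $\textsc{Reg}(\pi^{\{1,2\}})$ by counting, epoch by epoch, the expected $w_i/W^U(S)$ choices of each known option, and then you subtract. Your three regret formulas coincide with Lemmas~\ref{lemma:regret_pi1}--\ref{lemma:regret_pi12}, and your outside-option algebra is Lemma~\ref{lemma:equality_reciprocals_h1_h2}. The first two identities follow as you say. Your conditioning remark about keeping $\textsc{OPT}$ inside the expectation is also fine; the paper leaves it implicit.

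The step that fails is the claim that subtracting the two single-entrant formulas ``yields the third identity directly.'' Your formulas give
\[
\textsc{Reg}(\pi^{\{2\}})-\textsc{Reg}(\pi^{\{1\}})=\mathbb{E}\Big[\Big(\frac{w_3}{\tilde w_2}+\frac{\max(w_2,w_3)}{\tilde w_1}-\frac{w_3}{\tilde w_1}-\frac{\max(w_1,w_3)}{\tilde w_2}\Big)(\textsc{OPT}-1)\Big],
\]
whereas the displayed third identity has $\tilde w_1$ and $\tilde w_2$ interchanged. The two right-hand sides differ by
\[
\Big(\frac{1}{\tilde w_2}-\frac{1}{\tilde w_1}\Big)\,\mathbb{E}\Big[\big((w_1-w_3)^{+}+(w_2-w_3)^{+}\big)(1-\textsc{OPT})\Big].
\]
This is nonzero for every instance in $\mathcal{C}$ with $\tilde w_1\neq\tilde w_2$, because $1-\textsc{OPT}>0$ and $\mathbb{P}[w_i>w_3]>0$ for $i\in\{1,2\}$. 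You can also see it internally: the lemma's first identity minus its second, both of which are correct, already forces the version above, since the $\textsc{OPT}$ terms cancel.

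So the third display is a transposition typo, and the paper's own proof repeats it in its last line. The correct form is the one your computation produces. It matches the third bullet of Theorem~\ref{thm:characterization_of_the_optimal_policy_with_two_unknown_entrants_with_heterogeneous_priors}, and it is what the proof of Proposition~\ref{corollary:stochastic_dominance} actually uses, namely $\mathbb{E}[((w_2-w_3)^{+}/\tilde w_1-(w_1-w_3)^{+}/\tilde w_2)(\textsc{OPT}-1)]$. You should have checked your subtraction against the display and corrected the statement, rather than asserting that they agree.
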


\begin{proof}[Proof of Theorem~\ref{thm:characterization_of_the_optimal_policy_with_two_unknown_entrants_with_heterogeneous_priors}.]
The proof follows directly by Lemma~\ref{lemma:characterization_of_the_optimal_policy_with_two_unknown_entrants_with_heterogeneous_priors_regret_differences}. 
\end{proof}
To prove Lemma~\ref{lemma:characterization_of_the_optimal_policy_with_two_unknown_entrants_with_heterogeneous_priors_regret_differences} the next lemmas (proven in Section~\ref{appendix_subsec_proofs_lemmas_regret_pi1_pi2_pi12}) characterize the regret of $\pi^{1}$, $\pi^{2}$, and $\pi^{1,2}$.

\begin{lemma}\label{lemma:regret_pi1} The regret of $\pi^{1}$, $\textsc{Reg}(\pi^{1}) $, equals
   $$\expect_{\substack{w_1 \sim \mathcal{F}_1 \\ w_2 \sim \mathcal{F}_2}}\left[\left(\frac{w_3}{\tildew_1}+\frac{\max(w_1,w_3)}{\tildew_2} \right)(\textsc{OPT}-1) + \left(\frac{1}{\tildew_1} + \frac{1}{\tildew_2} \right) \textsc{OPT}  + 2(\textsc{OPT}-1) \right].$$
\end{lemma}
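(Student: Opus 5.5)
We compute $\textsc{Reg}(\pi^{1})$ by splitting the run into its two epochs and using the per-option accounting of Lemma~\ref{lemma:expression_for_the_epoch_regret_of_an_assortment_containg_ell_unknown_entrants}. Every round is charged $\textsc{OPT}-R_t$ with the ex-post $\textsc{OPT}$ (a function of $(w_1,w_2)$). After both entrants are purchased, $\pi^{1}$ offers the optimal assortment, so the expected regret is zero from then on. Hence only the two epochs contribute.

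\emph{First epoch.} Here $\pi^{1}$ offers $\{1,3\}$, since $w_3$ is the most attractive incumbent and $c=2$. The number of rounds in which incumbent $3$ is chosen has conditional expectation $w_3/\tildew_1$, and the outside option is chosen $w_0/\tildew_1=1/\tildew_1$ times; this is the IIA computation from the proof of Theorem~\ref{thm:explore_one_unknown_with_best}. The entrant is chosen exactly once. These counts depend only on customer choices, which are independent of the realized $(w_1,w_2)$. The per-round regret is $\textsc{OPT}-1$ for a purchase and $\textsc{OPT}$ for the outside option. We may therefore condition on $(w_1,w_2)$ and multiply. The first epoch contributes $\expect[(w_3/\tildew_1)(\textsc{OPT}-1)+\textsc{OPT}/\tildew_1+(\textsc{OPT}-1)]$.

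\emph{Second epoch.} Entrant $1$ is now known with value $w_1$, and the most attractive known product has value $\max(w_1,w_3)$. The policy offers entrant $2$ together with that product. By the same IIA argument, the known product is chosen $\max(w_1,w_3)/\tildew_2$ times in expectation and the outside option $1/\tildew_2$ times, followed by one purchase of entrant $2$. The counts are again independent of $w_2$ given $w_1$, so the second epoch contributes $\expect[(\max(w_1,w_3)/\tildew_2)(\textsc{OPT}-1)+\textsc{OPT}/\tildew_2+(\textsc{OPT}-1)]$. Summing the two epochs gives the stated formula.

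The main obstacle is justifying that we can use the ex-post $\textsc{OPT}$ inside products with the random counts. This holds because the choice process before an entrant's purchase does not depend on that entrant's true value, and $w_1$ enters the second-epoch counts only deterministically. We also need to justify the liminf/interchange with infinite sums, which follows from the finite expected epoch lengths by dominated convergence, as in Lemma~\ref{lemma:lower_bound_regret_of_any_policy_total_regret_incurred_over_all_assortments_during_the_exploration_period}.
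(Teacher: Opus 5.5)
Your proposal is correct and follows essentially the same route as the paper. Both split the regret of $\pi^{1}$ into its two epochs, use the IIA count $w_i/\tildew_j$ for each known option and for the outside option (the paper packages this as Lemma~\ref{lemma:expected_number_of_purchases_of_known_entrant_until_unknown_is_purchased}), and charge $\textsc{OPT}-1$ per purchase and $\textsc{OPT}$ per no-purchase; your remarks on conditioning on $(w_1,w_2)$ so that the ex-post $\textsc{OPT}$ can multiply the expected counts, and on the limit interchange, fill in details the paper leaves implicit.
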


\begin{lemma}\label{lemma:regret_pi2} 
    The regret of $\pi^{2}$, $\textsc{Reg}(\pi^{2})$, equals
     \begin{equation*}
      \expect_{\substack{w_1 \sim \mathcal{F}_1 \\ w_2 \sim \mathcal{F}_2}}\left[\left(\frac{w_3}{\tildew_2}+\frac{\max(w_2,w_3)}{\tildew_1} \right)(\textsc{OPT}-1) + \left(\frac{1}{\tildew_1} + \frac{1}{\tildew_2} \right) \textsc{OPT}  + 2(\textsc{OPT}-1) \right].
     \end{equation*}
\end{lemma}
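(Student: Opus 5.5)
The plan is to compute $\textsc{Reg}(\pi^{2})$ directly, epoch by epoch, in the same way as the single-entrant analysis in the proof of Theorem~\ref{thm:explore_one_unknown_with_best} and Lemma~\ref{lemma:expression_for_the_epoch_regret_of_an_assortment_containg_ell_unknown_entrants}. Since $\pi^{2}$ eventually offers the revenue-maximizing assortment once both entrants are known, its regret after the second purchase is zero. It therefore suffices to sum the expected regrets of the two exploration epochs, conditioning on the realizations $(w_1,w_2)$. Each realized round then contributes $\textsc{OPT}-R_t$ with the ex-post $\textsc{OPT}$. Taking expectations at the end is legitimate because the epoch lengths and choices depend only on the cold-start parameters and on $w_2$, and $w_2$ is revealed only after the first epoch.

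\emph{First epoch.} $\pi^{2}$ offers $\{2,3\}$, the entrant together with the top incumbent $w_3$, since $c=2$. By IIA, exactly as in the computation of $\expect[N_i]$ before, the expected number of purchases of incumbent $3$ is $w_3/\tildew_2$, and the expected number of outside-option choices is $w_0/\tildew_2 = 1/\tildew_2$. The entrant is purchased exactly once. Each incumbent or entrant purchase incurs regret $\textsc{OPT}-1$, and each outside choice incurs regret $\textsc{OPT}$. So, conditional on $(w_1,w_2)$, the first-epoch regret is
\[
\frac{w_3}{\tildew_2}(\textsc{OPT}-1)+\frac{1}{\tildew_2}\textsc{OPT}+(\textsc{OPT}-1).
\]

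\emph{Second epoch.} After $w_2$ is revealed, the most attractive known product is the one with parameter $\max(w_2,w_3)$, because $w_3\ge w_4$. The policy offers entrant $1$ together with that product. The same IIA computation then gives the regret
\[
\frac{\max(w_2,w_3)}{\tildew_1}(\textsc{OPT}-1)+\frac{1}{\tildew_1}\textsc{OPT}+(\textsc{OPT}-1).
\]
Adding the two epochs and taking the expectation over $w_1\sim\mathcal{F}_1$ and $w_2\sim\mathcal{F}_2$ yields the claimed formula.

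\emph{Main obstacle.} The one point needing care is justifying the use of ex-post $\textsc{OPT}$ inside the epoch counts. Strictly, the count variables $N_i$ for the second epoch depend on $w_2$, and $\textsc{OPT}$ depends on both $w_1$ and $w_2$. I would handle this by first fixing $(w_1,w_2)$. Given this realization, the choice process is a Markov chain whose transition probabilities do not depend on $w_1$ during exploration, since the customer sees $\tildew_1$, and depend on $w_2$ only in the second epoch. The conditional expected counts are therefore the deterministic quantities above, and the regret identity holds pointwise in $(w_1,w_2)$. Finiteness of the epoch lengths follows because each epoch's entrant-purchase probability is bounded below by a positive constant.

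Finally, the lemma is symmetric to Lemma~\ref{lemma:regret_pi1}. The same argument with the roles of entrants $1$ and $2$ exchanged gives the companion formula, which serves as a check.
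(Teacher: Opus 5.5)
Your proposal is correct and matches the paper's proof, which likewise sums over the two epochs the expected known-product purchase counts $w_3/\tildew_2$ and $\max(w_2,w_3)/\tildew_1$, the outside-option counts $1/\tildew_2$ and $1/\tildew_1$ (via the IIA count lemma), and the two entrant purchases, then takes the expectation over $(w_1,w_2)$. Your extra care about conditioning on $(w_1,w_2)$ and about the epochs having finite expected length is a justification the paper leaves implicit.
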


\begin{lemma}\label{lemma:regret_pi12}
The regret of $\pi^{1,2}$, $\textsc{Reg}(\pi^{1,2})$ equals
     \begin{align*}
         \expect_{\substack{w_1 \sim \mathcal{F}_1 \\ w_2 \sim \mathcal{F}_2}}\Bigg[& \frac{1}{\tildew_1 + \tildew_2}\left( \frac{\tildew_1 \max(w_1,w_3)}{\tildew_2} + \frac{\tildew_2\max(w_2,w_3)}{\tildew_1} \right)(\textsc{OPT}-1) + \frac{1+  \nicefrac{\tildew_1}{\tildew_2} +  \nicefrac{\tildew_2}{\tildew_1}}{\tildew_1 + \tildew_2}\textsc{OPT}  + 2(\textsc{OPT}-1) \Bigg].
     \end{align*}
\end{lemma}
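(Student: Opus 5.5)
The plan is to decompose the regret of $\pi^{1,2}$ into its two epochs and compute each one with the counting argument already used in Lemma~\ref{lemma:expression_for_the_epoch_regret_of_an_assortment_containg_ell_unknown_entrants}. Under the MNL model, while an assortment containing unknown entrants is offered repeatedly, a known product $i$ is chosen $w_i/(\text{total unknown cold-start weight})$ times in expectation before the epoch ends, and the same holds for the outside option with $w_0=1$. Each purchase of a product incurs regret $\textsc{OPT}-1$, each outside-option round incurs regret $\textsc{OPT}$, and the entrant purchase that ends the epoch incurs regret $\textsc{OPT}-1$. Because the realized weights $w_1,w_2$ do not affect choice probabilities while the entrants are unknown, we can work conditionally on $(w_1,w_2)$ and take the expectation at the very end. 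This is legitimate because choices depend only on $\tildew_1,\tildew_2$ and known weights, and regret is computed against the ex-post $\textsc{OPT}$ as in Lemma~\ref{lemma:lower_bound_regret_of_any_policy_total_regret_incurred_over_all_assortments_during_the_exploration_period}.

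\emph{First epoch.} Here $\pi^{1,2}$ offers $\{1,2\}$ with no incumbents, since $c=2$. The outside option is chosen $1/(\tildew_1+\tildew_2)$ times in expectation, and the epoch ends with one entrant purchase. This contributes
\[
\frac{1}{\tildew_1+\tildew_2}\textsc{OPT}+(\textsc{OPT}-1).
\]
Entrant $1$ is the one purchased first with probability $\tildew_1/(\tildew_1+\tildew_2)$, and entrant $2$ with probability $\tildew_2/(\tildew_1+\tildew_2)$. These events are independent of the realized $w$'s.

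\emph{Second epoch.} If entrant $1$ was purchased first, $\pi^{1,2}$ offers entrant $2$ together with the most attractive known product, which has weight $\max(w_1,w_3)$. Entrant $2$ has cold-start weight $\tildew_2$, so this epoch contributes
\[
\frac{\max(w_1,w_3)}{\tildew_2}(\textsc{OPT}-1)+\frac{1}{\tildew_2}\textsc{OPT}+(\textsc{OPT}-1).
\]
The symmetric expression, with the roles of entrants $1$ and $2$ swapped, applies if entrant $2$ was purchased first. After both entrants are known, the policy offers the revenue-maximizing assortment, which incurs zero regret.

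Summing, the terms $(\textsc{OPT}-1)$ add up to $2(\textsc{OPT}-1)$. The $\max$ terms combine with weights $\tildew_1\max(w_1,w_3)/\tildew_2$ and $\tildew_2\max(w_2,w_3)/\tildew_1$, both divided by $\tildew_1+\tildew_2$. The $\textsc{OPT}$ coefficient is
\[
\frac{1}{\tildew_1+\tildew_2}\Big(1+\frac{\tildew_1}{\tildew_2}+\frac{\tildew_2}{\tildew_1}\Big).
\]
Taking expectation over $(w_1,w_2)$ then yields the stated formula.

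The main obstacle is rigor in the second epoch. The assortment offered there, and hence the regret incurred, depends on which entrant was purchased first. It also depends on the realized $w$ of that entrant, which is correlated with $\textsc{OPT}$. This must be handled by conditioning on $(w_1,w_2)$ throughout, rather than replacing $\textsc{OPT}$ by an interim expectation, and by noting that the identity of the first purchaser is independent of $(w_1,w_2)$. Finiteness of the epoch lengths, needed to interchange sums and expectations, follows from the geometric purchase times.
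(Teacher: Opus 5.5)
Your proposal is correct and follows essentially the same route as the paper. Both decompose the regret into the two epochs and count outside-option and known-product choices as weight over total unknown cold-start weight. Both weight the second epoch by the probabilities $\tildew_1/(\tildew_1+\tildew_2)$ and $\tildew_2/(\tildew_1+\tildew_2)$ of which entrant is purchased first, and take the expectation over $(w_1,w_2)$ at the end. Your explicit conditioning on the realized weights, so that the ex-post $\textsc{OPT}$ is used throughout, spells out a step the paper leaves implicit.
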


\begin{proof}[Proof of Lemma~\ref{lemma:characterization_of_the_optimal_policy_with_two_unknown_entrants_with_heterogeneous_priors_regret_differences}.]
Subtracting the expression of Lemma~\ref{lemma:regret_pi1} from the one of Lemma~\ref{lemma:regret_pi12} and rearranging terms, the difference between $\pi^{1,2}$ and $\pi^1$ equals
\begin{align*}\label{eq:reg_diss_pi12_1_first_part}
    \textsc{Reg}(\pi^{1,2})- \textsc{Reg}(\pi^{1}) 
    &= \expect_{\substack{w_1 \sim \mathcal{F}_1 \\ w_2 \sim \mathcal{F}_2}} \Bigg[\Big(\frac{\tildew_2}{\tildew_1 + \tildew_2} \Big(\frac{\max(w_2, w_3)}{\tildew_1}-\frac{\max(w_1, w_3)}{\tildew_2} \Big) - \frac{w_3}{\tildew_1}  \Big)(\textsc{OPT}-1)  \nonumber\\
     &-\left(\frac{1}{\tildew_1} + \frac{1}{\tildew_2} - \frac{1}{\tildew_1 + \tildew_2} \left(1 +  \frac{\tildew_1}{\tildew_2} +  \frac{\tildew_2}{\tildew_1} \right)\right)\textsc{OPT} \Bigg] \nonumber\\
     = &\expect_{\substack{w_1 \sim \mathcal{F}_1 \\ w_2 \sim \mathcal{F}_2}} \Bigg[\Big(\frac{\tildew_2}{\tildew_1 + \tildew_2} \Big(\frac{\max(w_2, w_3)}{\tildew_1}-\frac{\max(w_1, w_3)}{\tildew_2} \Big) - \frac{w_3}{\tildew_1}  \Big)(\textsc{OPT}-1) - \frac{1}{\tildew_1+ \tildew_2} \textsc{OPT} \Bigg]. 
\end{align*}
 The first equality separately groups all terms with $\textsc{OPT}-1$ and $\textsc{OPT}$. The last equality uses the identity $\frac{1}{\tildew_1} + \frac{1}{\tildew_2} - \frac{1}{\tildew_1 + \tildew_2} \left(1 +  \frac{\tildew_1}{\tildew_2} +  \frac{\tildew_2}{\tildew_1} \right)  = \frac{1}{\tildew_1 + \tildew_2}$ by Lemma~\ref{lemma:equality_reciprocals_h1_h2} (stated and proven right below). Due to an analogous argument replacing $\pi^1$ with $\pi^2$ and using Lemma~\ref{lemma:regret_pi2} instead of Lemma~\ref{lemma:regret_pi1}, 
$$\textsc{Reg}(\pi^{1,2}) - \textsc{Reg}(\pi^{2}) = \expect_{\substack{w_1 \sim \mathcal{F}_1 \\ w_2 \sim \mathcal{F}_2}} \Bigg[\Big( \frac{\tildew_1}{\tildew_1 + \tildew_2} \Big(\frac{\max(w_1, w_3)}{\tildew_2} -\frac{\max(w_2, w_3)}{\tildew_1}\Big)-\frac{w_3}{\tildew_2} \Big)(\textsc{OPT}-1) - \frac{1}{\tildew_1 + \tildew_2} \textsc{OPT} \Bigg].$$
Subtracting the expressions of Lemma~\ref{lemma:regret_pi2} and Lemma~\ref{lemma:regret_pi1}, the regret difference between $\pi^{2}$ and $\pi^{1}$ is
$$\textsc{Reg}(\pi^{2}) - \textsc{Reg}(\pi^{1}) = \expect_{\substack{w_1 \sim \mathcal{F}_1 \\ w_2 \sim \mathcal{F}_2}} \Bigg[ \Bigg( \frac{w_3}{\tildew_1} +\frac{\max(w_2,w_3)}{\tildew_2} \Bigg)(\textsc{OPT}-1) - \Bigg(\frac{w_3}{\tildew_2} + \frac{\max(w_1,w_3)}{\tildew_1}\Bigg) (\textsc{OPT}-1) \Bigg].$$
\end{proof}

\begin{lemma}\label{lemma:equality_reciprocals_h1_h2}
    For any $\tildew_1> 0$ and $\tildew_2 > 0$, 
    $\frac{1}{\tildew_1} + \frac{1}{\tildew_2} - \frac{1}{\tildew_1 + \tildew_2} \left(1 +  \frac{\tildew_1}{\tildew_2} +  \frac{\tildew_2}{\tildew_1} \right) = \frac{1}{\tildew_1 + \tildew_2}$.
\end{lemma}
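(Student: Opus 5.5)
The identity in Lemma~\ref{lemma:equality_reciprocals_h1_h2} is purely algebraic, so I will verify it by bringing every term over the common denominator $\tildew_1\tildew_2(\tildew_1+\tildew_2)$. The two quantities $\tildew_1,\tildew_2>0$ are the only variables, and nothing from the model is needed beyond positivity, which guarantees that all denominators are nonzero.

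First I rewrite the parenthesized factor as a single fraction:
\[
1+\frac{\tildew_1}{\tildew_2}+\frac{\tildew_2}{\tildew_1}=\frac{\tildew_1\tildew_2+\tildew_1^2+\tildew_2^2}{\tildew_1\tildew_2}.
\]
Next I combine the first two terms:
\[
\frac{1}{\tildew_1}+\frac{1}{\tildew_2}=\frac{\tildew_1+\tildew_2}{\tildew_1\tildew_2}=\frac{(\tildew_1+\tildew_2)^2}{\tildew_1\tildew_2(\tildew_1+\tildew_2)}.
\]
Subtracting, the left-hand side becomes
\[
\frac{(\tildew_1+\tildew_2)^2-\tildew_1^2-\tildew_2^2-\tildew_1\tildew_2}{\tildew_1\tildew_2(\tildew_1+\tildew_2)}.
\]

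Expanding the square leaves exactly $\tildew_1\tildew_2$ in the numerator. Cancelling $\tildew_1\tildew_2$, which is legitimate since both are positive, yields $\frac{1}{\tildew_1+\tildew_2}$, as claimed.

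There is no real obstacle here. The only point requiring care is sign bookkeeping when expanding $(\tildew_1+\tildew_2)^2$ against the three subtracted terms.
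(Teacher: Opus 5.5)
Your proof is correct: over the common denominator $\tilde{w}_1\tilde{w}_2(\tilde{w}_1+\tilde{w}_2)$ the numerator reduces to $\tilde{w}_1\tilde{w}_2$, which gives $1/(\tilde{w}_1+\tilde{w}_2)$. The paper does the same direct algebra with slightly different bookkeeping. It regroups the left-hand side as $\frac{1}{\tilde{w}_1}\bigl(1-\frac{\tilde{w}_2}{\tilde{w}_1+\tilde{w}_2}\bigr)+\frac{1}{\tilde{w}_2}\bigl(1-\frac{\tilde{w}_1}{\tilde{w}_1+\tilde{w}_2}\bigr)-\frac{1}{\tilde{w}_1+\tilde{w}_2}$, in which each of the first two terms collapses to $\frac{1}{\tilde{w}_1+\tilde{w}_2}$, so the two arguments are essentially the same.
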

\begin{proof}
Rearranging terms and factoring out $\frac{1}{\tildew_1}$ and $\frac{1}{\tildew_2}$, the left-hand side is expressed as
  $$\frac{1}{\tildew_1}\Big(1- \frac{\tildew_2}{\tildew_1 + \tildew_2} \Big) + \frac{1}{\tildew_2} \Big(1-\frac{\tildew_1}{\tildew_1 + \tildew_2} \Big) - \frac{1}{\tildew_1 + \tildew_2}= \frac{1}{\tildew_1}\frac{\tildew_1}{\tildew_1 + \tildew_2}  + \frac{1}{\tildew_2} \frac{\tildew_2}{\tildew_1 + \tildew_2}  - \frac{1}{\tildew_1 + \tildew_2}= \frac{1}{\tildew_1 + \tildew_2}.
   $$
\end{proof}

\subsection{Regret characterization of the three policies (Lemmas~\ref{lemma:regret_pi1}, ~\ref{lemma:regret_pi2}, ~\ref{lemma:regret_pi12})}\label{appendix_subsec_proofs_lemmas_regret_pi1_pi2_pi12}
Let $S^{K}$ and $S^{U}$ be the sets of known and unknown products in $S$ respectively. We define $W^{U}(S) \coloneq \sum_{i \in S^{U}} \tildew_i$ to be the sum of the attraction parameters of the unknown entrants in $S$. For an option $i \in S^{K} \cup \{0\}$, we denote by $n_i(S)$ the expected number of rounds that $i$ is chosen during the epoch if $S$ is offered repeatedly. The following lemma provides an expression for $n_i(S)$.

\begin{lemma}\label{lemma:expected_number_of_purchases_of_known_entrant_until_unknown_is_purchased}
    For any assortment $S$, and any option $i \in S^K \cup \{0\}$, $n_i(S) = \frac{w_i}{W^U(S)}$.
\end{lemma}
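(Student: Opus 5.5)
The plan is to reuse the geometric-trials argument from Theorem~\ref{thm:explore_one_unknown_with_best} and Lemma~\ref{lemma:expression_ratio_reward_loss_time_gain}. During an epoch in which $S$ is offered repeatedly, the set of known products does not change. So in every round the choice probabilities are the same MNL probabilities: option $i\in S^K\cup\{0\}$ is chosen with probability $p_i=\frac{w_i}{\sum_{j\in S^K}w_j+W^U(S)+w_0}$, and some unknown entrant is chosen with probability $p_U=\frac{W^U(S)}{\sum_{j\in S^K}w_j+W^U(S)+w_0}$. Here each unknown entrant $j$ carries its cold-start weight $\tildew_j$, and $w_0$ is the outside-option weight (equal to $1$).

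The epoch length $\tau(S)$ is therefore geometric with success probability $p_U$, so $\expect[\tau(S)]=1/p_U$. The count of option $i$ is $N_i=\sum_{t=1}^{\tau(S)}\mathbf{1}\{Y_t=i\}$, and I would compute its expectation in one of two ways.

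\emph{Via Wald's identity.} Write $N_i=\sum_{t\ge1}\mathbf{1}\{t\le\tau(S)\}\mathbf{1}\{Y_t=i\}$. The event $\{t\le\tau(S)\}$ depends only on rounds $1,\dots,t-1$, so it is independent of $Y_t$. This gives
$$n_i(S)=\expect[N_i]=\sum_{t\ge1}\prob[\tau(S)\ge t]\,\prob[Y_t=i]=\expect[\tau(S)]\,p_i .$$
A subtlety is that the last round is an entrant purchase by definition, so $\prob[Y_t=i]$ should be taken unconditionally, not conditioned on the round being a non-terminal one. The decomposition above handles this correctly, since it never conditions on $Y_t$.

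\emph{Via the embedded chain.} Alternatively, look only at the rounds in which the choice lies in $\{i\}\cup U$, where $U$ is the set of unknown entrants in $S$. By independence of irrelevant alternatives, each such round is a choice of $i$ with probability $\frac{w_i}{w_i+W^U(S)}$. The number of times $i$ is chosen before the first entrant is then geometric with mean $\frac{w_i}{W^U(S)}$.

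Either way, we get
$$n_i(S)=\frac{p_i}{p_U}=\frac{w_i}{W^U(S)}.$$
The common normalizing denominator cancels, exactly as in the computation of $o(S)$ in the proof of Theorem~\ref{thm:explore_one_unknown_with_best}.

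There is no real obstacle here. The only care needed is to justify $\expect[N_i]=p_i\,\expect[\tau(S)]$ rigorously, which is the Wald/stopping-time step above, and to note that the epoch has finite expected length because $p_U>0$. That holds since $S$ contains at least one unknown entrant and $\tildew_j>0$.
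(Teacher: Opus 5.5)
Your proposal is correct and is essentially the paper's argument. The paper likewise computes the expected epoch length $\tau(S)=\frac{\sum_{j\in S}w_j+w_0}{W^U(S)}$ and multiplies it by the per-round probability $\frac{w_i}{\sum_{j\in S}w_j+w_0}$ that option $i$ is chosen, so the common denominator cancels. Your Wald-style decomposition makes rigorous the step $n_i(S)=p_i\,\tau(S)$ that the paper takes for granted, and your embedded-chain (IIA) argument is a valid independent check.
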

\begin{proof}[Proof of Lemma~\ref{lemma:expected_number_of_purchases_of_known_entrant_until_unknown_is_purchased}]
Let $\tau(S)$ be the expected number of rounds until an unknown entrant is purchased. Given that an unknown entrant is purchased with probability $\frac{W^U(S)}{\sum_{i \in S} w_i + w_0}$, $\tau(S)  = \frac{\sum_{i \in S} w_i + w_0}{W^U(S)}$. Thus, 
$$n_i(S) = \prob[\text{$i$ is chosen}|S] \tau(S) = \frac{w_i}{\sum_{i \in S} w_i + w_0}\frac{\sum_{i \in S} w_i + w_0}{W^U(S)}  = \frac{w_i}{W^U(S)}.$$
\end{proof}

\begin{proof}[Proof of Lemma~\ref{lemma:regret_pi1}.]
An unknown product is purchased twice (once at the end of each of the two epochs), contributing $2(\opt-1)$ regret in total. By Lemma~\ref{lemma:expected_number_of_purchases_of_known_entrant_until_unknown_is_purchased}, the outside option is purchased $\frac{1}{\tildew_1}$ times in expectation in epoch one and $\frac{1}{\tildew_2}$ times in expectation in epoch two, contributing $\left(\frac{1}{\tildew_1} + \frac{1}{\tildew_2}\right) \opt$ regret in total. By Lemma~\ref{lemma:expected_number_of_purchases_of_known_entrant_until_unknown_is_purchased}, incumbent $3$ is purchased $\frac{w_3}{\tildew_1}$ times in expectation in epoch one and the most attractive known product from $1$ and $3$ is purchased $\frac{\max(w_1,w_3)}{\tildew_2}$ times in expectation in epoch two, contributing $\left(\frac{w_3}{\tildew_1}+\frac{\max(w_1,w_3)}{\tildew_2} \right)(\textsc{OPT}-1)$ regret in total. Adding the regrets and taking expectation yields the lemma. \end{proof}

\begin{proof}[Proof of Lemma~\ref{lemma:regret_pi2}]
   The proof has the same steps as for Lemma~\ref{lemma:regret_pi1} replacing entrant 1 with entrant 2.
\end{proof}

\begin{proof}[Proof of Lemma~\ref{lemma:regret_pi12}.]
An unknown product is purchased twice (once at the end of each of the two epochs), contributing $2(\opt-1)$ regret in total. By Lemma~\ref{lemma:expected_number_of_purchases_of_known_entrant_until_unknown_is_purchased}, the outside option is chosen $\frac{1}{\tildew_1 + \tildew_2}$ times in expectation in epoch one, and $\frac{1}{\tildew_2}$ times in expectation in epoch two if entrant 1 is revealed first (with probability $\frac{\tildew_1}{\tildew_1 + \tildew_2}$) and $\frac{1}{\tildew_1}$ times in expectation in epoch two if entrant 2 is revealed first (with probability $\frac{\tildew_2}{\tildew_1 + \tildew_2}$). As a result, the total regret from the outside option is $\left(\frac{1}{\tildew_1 + \tildew_2} + \frac{1}{\tildew_2} \frac{\tildew_1}{\tildew_1 + \tildew_2} + \frac{1}{\tildew_1} \frac{\tildew_2}{\tildew_1 + \tildew_2} \right) \opt$. By Lemma~\ref{lemma:expected_number_of_purchases_of_known_entrant_until_unknown_is_purchased}, the most attractive from products 1 and 3 is chosen $\frac{\max(w_1,w_3)}{\tildew_2}$ rounds in expectation if entrant $1$ is revealed first (with probability $\frac{\tildew_1}{\tildew_1 + \tildew_2}$) and the most attractive from products 2 and 3 is chosen $\frac{\max(w_2,w_3)}{\tildew_1}$ rounds in expectation if entrant $1$ is revealed first (with probability $\frac{\tildew_2}{\tildew_1 + \tildew_2}$). As a result the total regret from known products is $\left(\frac{\tildew_1}{\tildew_1 + \tildew_2}\frac{\max(w_1,w_3)}{\tildew_2} + \frac{\tildew_2}{\tildew_1 + \tildew_2}\frac{\max(w_2,w_3)}{\tildew_1} \right)(\opt -1)$. Adding the regrets, factoring out $\frac{1}{\tildew_1+\tildew_2}$, and taking an expectation yields the lemma. \end{proof}

\subsection{Stochastic dominance
(Proposition~\ref{corollary:stochastic_dominance})}\label{appendix_subsec_stochastic_dominance}
Let $\max_2(w_1,w_2,w_3,w_4)$ be the sum of the two largest parameters from the set $\{w_1, w_2, w_3, w_4\}$.

\begin{proof}[Proof of Proposition~\ref{corollary:stochastic_dominance}.]
 Given that $\mathcal{F}_1$ stochastically dominates $\mathcal{F}_2$, there exists a random variable $\overline{w}_2$, distributed according to $\mathcal{F}_2$, which is coupled with $w_1$ such that $w_1 \geq \overline{w}_2$ and $\overline{w}_2$ is independent of $w_2$. By Lemma~\ref{lemma:characterization_of_the_optimal_policy_with_two_unknown_entrants_with_heterogeneous_priors_regret_differences} and rearranging terms,
\begin{align*}
    &\textsc{Reg}(\pi^{2}) - \textsc{Reg}(\pi^{1}) = \expect_{\substack{w_1 \sim \mathcal{F}_1 \\ w_2 \sim \mathcal{F}_2}} \Bigg[ \Bigg(  \frac{\max(w_2-w_3,0)}{\tildew_1}-\frac{\max(w_1-w_3,0)}{\tildew_2} \Bigg)(\textsc{OPT}-1) \Bigg]\\
    &> \frac{1}{\tildew_1} \expect_{\substack{w_1 \sim \mathcal{F}_1 \\ w_2 \sim \mathcal{F}_2}} \left[\frac{\max(w_1-w_3,0)-\max(w_2-w_3,0)}{\max_2(w_1,w_2,w_3,w_4) +1} \right] \geq \frac{1}{\tildew_1} \expect_{\substack{\overline{w}_2 \sim \mathcal{F}_2 \\ w_2 \sim \mathcal{F}_2}} \left[\frac{\max(\overline{w}_2-w_3,0)-\max(w_2-w_3,0)}{\max_2(\overline{w}_2,w_2,w_3,w_4) +1} \right] =0
\end{align*}
which establishes the proposition. The first inequality uses Lemma~\ref{lemma:inequality_replace_h_2_with_h_1} (stated and proven below). The second inequality uses that $w_1 \geq \overline{w}_2$ and that the function $f(x) = \frac{(x-w_3)^{+} - (w_2 - w_3)^{+}}{\max_2(x,w_2,w_3,w_4) +1}$ is increasing in $x$ by Lemma~\ref{lemma:rev_diff_func_incr_in_fst_wgt} (stated and proven below). The last equality uses that $\overline{w}_2$ and $w_2$ are independent and identically distributed and that the function inside the expectation is antisymmetric.
\end{proof}

\begin{lemma}\label{lemma:inequality_replace_h_2_with_h_1} For any prior distributions $\mathcal{F}_1$ and $\mathcal{F}_2$ and $\tildew_1$ and $\tildew_2$ with $\tildew_1 > \tildew_2$,
    $$\underbrace{\expect \Bigg[ \Bigg(  \frac{\max(w_2-w_3,0)}{\tildew_1}-\frac{\max(w_1-w_3,0)}{\tildew_2} \Bigg)(\textsc{OPT}-1) \Bigg]}_{=\textsc{LHS}} > \frac{1}{\tildew_1} \expect \left[\frac{\max(w_1-w_3,0)-\max(w_2-w_3,0)}{\max_2(w_1,w_2,w_3,w_4) +1} \right] $$
\end{lemma}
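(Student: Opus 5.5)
The plan is to rewrite $\textsc{OPT}-1$ in closed form, so that both sides become expectations of the same kind and can be compared term by term. In the class $\mathcal{C}$ we have $c=2$ and $w_0=1$. Once all attraction parameters are known, the optimal assortment consists of the two most attractive products, so
$$\textsc{OPT}=\frac{\max_2(w_1,w_2,w_3,w_4)}{\max_2(w_1,w_2,w_3,w_4)+1}, \qquad \textsc{OPT}-1=-\frac{1}{\max_2(w_1,w_2,w_3,w_4)+1}.$$
Substituting this into $\textsc{LHS}$ flips the sign of the bracket and gives
$$\textsc{LHS}=\expect\left[\frac{\max(w_1-w_3,0)/\tildew_2-\max(w_2-w_3,0)/\tildew_1}{\max_2(w_1,w_2,w_3,w_4)+1}\right].$$

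Next I subtract the right-hand side. The terms involving $\max(w_2-w_3,0)/\tildew_1$ cancel exactly, which leaves
$$\textsc{LHS}-\textsc{RHS}=\left(\frac{1}{\tildew_2}-\frac{1}{\tildew_1}\right)\expect\left[\frac{\max(w_1-w_3,0)}{\max_2(w_1,w_2,w_3,w_4)+1}\right].$$
Because $\tildew_1>\tildew_2>0$, the prefactor is strictly positive. The integrand is nonnegative and bounded: the numerator is at most $\max_2$, so the integrand is at most $1$, and all expectations are finite and the cancellation is legitimate.

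The only delicate step is the strictness. The claim needs the expectation to be strictly positive, not just nonnegative. This requires $\prob_{w_1\sim\mathcal{F}_1}[w_1>w_3]>0$, which is a standing assumption of the class $\mathcal{C}$ and also a hypothesis of Proposition~\ref{corollary:stochastic_dominance}, where the lemma is applied. On the event $\{w_1>w_3\}$ the integrand is strictly positive, because the denominator is finite; that event has positive probability, so the expectation is strictly positive. As literally stated ("any prior distributions"), the lemma needs this condition, and I will state explicitly that it is used. This completes the argument.
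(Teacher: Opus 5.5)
Your proposal is correct and follows essentially the same route as the paper: substitute $\textsc{OPT}-1=-1/(\max_2(w_1,w_2,w_3,w_4)+1)$, then compare the $\max(w_1-w_3,0)$ terms using $1/\tildew_2>1/\tildew_1$. Writing $\textsc{LHS}-\textsc{RHS}$ as an exact product is just a repackaging of the paper's strict lower bound on the first term. You are also right that strictness needs $\prob_{w_1\sim\mathcal{F}_1}[w_1>w_3]>0$; the paper's proof invokes this assumption from the class $\mathcal{C}$ as well.
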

\begin{proof}[Proof of Lemma~\ref{lemma:inequality_replace_h_2_with_h_1}.]
Using $\textsc{OPT}-1 = -\frac{1}{\max_2(w_1,w_2,w_3,w_4) +1}$, the left hand side of the lemma is
\begin{equation}\label{eq:lhs}
    \textsc{LHS} = \expect_{\substack{w_1 \sim \mathcal{F}_1 \\ w_2 \sim \mathcal{F}_2}} \left[\frac{\frac{\max(w_1-w_3,0)}{\tildew_2}}{\max_2(w_1,w_2,w_3,w_4) +1} \right] -\expect_{\substack{w_1 \sim \mathcal{F}_1 \\ w_2 \sim \mathcal{F}_2}} \left[\frac{\frac{\max(w_2-w_3,0)}{\tildew_1}}{\max_2(w_1,w_2,w_3,w_4) +1} \right]
\end{equation}
The first term is lower bounded by
\begin{align}\label{ineq:lb_first_term}
    &\expect_{\substack{w_1 \sim \mathcal{F}_1 \\ w_2 \sim \mathcal{F}_2}} \left[\frac{\frac{\max(w_1-w_3,0)}{\tildew_2}}{\max_2(w_1,w_2,w_3,w_4) +1} \right] = \expect_{\substack{w_1 \sim \mathcal{F}_1 \\ w_2 \sim \mathcal{F}_2}} \left[\frac{\frac{w_1-w_3}{\tildew_2}}{\max_2(w_1,w_2,w_3,w_4) +1} \middle|w_1 > w_3\right]\prob[w_1 > w_3] \nonumber\\
    &> \expect_{\substack{w_1 \sim \mathcal{F}_1 \\ w_2 \sim \mathcal{F}_2}} \left[\frac{\frac{w_1-w_3}{\tildew_1}}{\max_2(w_1,w_2,w_3,w_4) +1} \middle|w_1 > w_3\right]\prob[w_1 > w_3]= \expect_{\substack{w_1 \sim \mathcal{F}_1 \\ w_2 \sim \mathcal{F}_2}} \left[\frac{\frac{\max(w_1-w_3,0)}{\tildew_1}}{\max_2(w_1,w_2,w_3,w_4) +1} \right]
\end{align}
where the inequality uses that $\frac{w_1-w_3}{\tildew_2}> \frac{w_1-w_3}{\tildew_1}$ conditioned on $w_1 > w_3$ since $\tildew_1 > \tildew_2$ and $\prob[w_1 > w_3] > 0$. The lemma combines \eqref{eq:lhs} and \eqref{ineq:lb_first_term} and factors out $\frac{1}{\tildew_1}$.
\end{proof}. 

\begin{lemma}\label{lemma:rev_diff_func_incr_in_fst_wgt}
    For any $w_2$, the function $f(x) \coloneq \frac{(x-w_3)^{+} - (w_2 - w_3)^{+}}{\max_2(x,w_2,w_3,w_4) +1}$ is non-decreasing in $x$.
\end{lemma}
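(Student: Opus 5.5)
The plan is to prove monotonicity of $f(x)=\frac{(x-w_3)^{+}-(w_2-w_3)^{+}}{\max_2(x,w_2,w_3,w_4)+1}$ by splitting the real line into intervals on which both the numerator and the denominator are simple (affine or constant) functions of $x$. Recall that in class $\mathcal{C}$ we have $w_3\ge w_4$. Write $N(x)=(x-w_3)^{+}-(w_2-w_3)^{+}$ and $D(x)=\max_2(x,w_2,w_3,w_4)+1$. $N$ is non-decreasing and piecewise linear with a kink at $w_3$. $D$ is also non-decreasing and piecewise linear: it is constant until $x$ exceeds the second largest of $\{w_2,w_3,w_4\}$, and afterwards it equals $x+\max(w_2,w_3)+1$ with slope $1$. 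Both functions are continuous, so $f$ is continuous and it suffices to check monotonicity on each piece.

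\emph{Case $x\le w_3$.} Here $N(x)=-(w_2-w_3)^{+}$ is a non-positive constant. The denominator $D$ is positive and non-decreasing, so $N/D$ is non-decreasing: a non-positive constant divided by an increasing positive quantity increases toward $0$.

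\emph{Case $x\ge w_3$.} Now $x\ge w_3\ge w_4$, so $x$ exceeds the second largest of $\{w_2,w_3,w_4\}$ and $D(x)=x+M+1$ with $M=\max(w_2,w_3)$. The numerator is $N(x)=x-w_3-(w_2-w_3)^{+}=x-M$. Hence
\[
f(x)=\frac{x-M}{x+M+1}=1-\frac{2M+1}{x+M+1},
\]
which is increasing in $x$ because $2M+1>0$.

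Since the two regimes meet at $x=w_3$ and $f$ is continuous there, $f$ is non-decreasing on all of $\mathbb{R}_{\ge 0}$. The main point to verify carefully is the identity $w_3+(w_2-w_3)^{+}=\max(w_2,w_3)$, together with the observation that once $x\ge w_3\ge w_4$, the top two of the four weights are $x$ and $\max(w_2,w_3)$. Both facts are immediate, so I expect no real obstacle. The only subtlety is the first regime, where $D$ may itself vary with $x$ (for instance if $w_4<x\le w_3$ and $x$ overtakes $w_2$). This is handled by the sign argument above and needs no explicit formula for $D$.
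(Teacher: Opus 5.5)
Your proof is correct, and it uses a different case decomposition from the paper's. The paper splits on the parameter ($w_2 \ge w_3$ versus $w_3 > w_2$). In each case it writes $f$ explicitly as the maximum of an increasing function of the form $\frac{x-a}{x+b+1}$ and a constant.

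You instead split on the variable ($x \le w_3$ versus $x \ge w_3$):
\begin{itemize}
\item On the lower region you use a sign argument: a non-positive constant over a positive, non-decreasing denominator is non-decreasing. This never requires identifying $D$ there.
\item On the upper region the identity $w_3 + (w_2-w_3)^+ = \max(w_2,w_3)$ gives the single formula $f(x) = \frac{x-M}{x+M+1}$ with $M=\max(w_2,w_3)$. This covers both of the paper's cases at once.
\end{itemize}

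The paper's route gives an explicit global closed form for $f$ in each case. Yours is more uniform and needs less bookkeeping, and that is not merely cosmetic here. In the paper's Case 1 the piecewise formula is misstated. For $x \ge w_2$ the denominator should be $x+w_2+1$, not $x+w_3+1$. On $[w_3, w_2)$ the function equals $\frac{x-w_2}{x+w_2+1}$ rather than being constant. The conclusion survives, since the correct expression is $\max\bigl(\frac{x-w_2}{x+w_2+1}, \frac{w_3-w_2}{w_3+w_2+1}\bigr)$, but your argument avoids these slips entirely.

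One small remark: you do not need continuity at $x=w_3$. Both regimes contain $w_3$, so $f(a) \le f(w_3) \le f(b)$ for $a \le w_3 \le b$ already gives monotonicity on the union.
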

\begin{proof}[Proof of Lemma~\ref{lemma:rev_diff_func_incr_in_fst_wgt}.]
We consider cases based on whether $w_2 \geq w_3$ or $w_3 > w_2$. 
\paragraph{Case 1: $w_2 \geq w_3$.} Using that $w_3 \geq w_4$, 
\[f(x) = \frac{(x-w_3)^{+} - (w_2 - w_3)^{+}}{\max_2(x,w_2,w_3,w_4) +1} = \frac{(x-w_3)^{+} - (w_2 - w_3)}{\max_2(x,w_2,w_3,w_4) +1} =\begin{cases}
\frac{x - w_2}{x+w_3+1}, & x \ge w_2, \\
\frac{w_3 - w_2}{w_3+w_2+1}, & x < w_2.
\end{cases} \]
Thus, $f(x) = \max\left( \frac{x - w_2}{x+w_3+1}, \frac{w_3 - w_2}{w_3+w_3+1}\right)$. $f(x)$ is non-decreasing as $\frac{x - w_2}{x+w_3+1}$ is increasing and the maximum of an increasing function and a constant is non-decreasing. 

\paragraph{Case 2: $w_3 > w_2$.} Using that $w_3 \geq w_4$,
\[
f(x) = \frac{(x-w_3)^{+} - (w_2 - w_3)^{+}}{\max_2(x,w_2,w_3,w_4) +1}
= \frac{(x-w_3)^{+}}{\max_2(x,w_2,w_3,w_4) +1}
=
\begin{cases}
\frac{x-w_3}{x+w_3+1}, & x \ge w_3, \\
0, & x < w_3.
\end{cases}
\]
Thus, $f(x) = \max\left( \frac{x-w_3}{x+w_3+1}, 0 \right)$. $f(x)$ is non-decreasing as $\frac{x-w_3}{x+w_3+1}$ is increasing and the maximum of an increasing function and a constant is non-decreasing.
\end{proof}

\subsection{Optimal policy under two Bernoulli-like priors (Proposition~\ref{prop:different_priors_bernoulli})}\label{appendix_subsec:different_priors}
Let $\mathcal{I}(\mu, p_1, p_2, w_3,w_4)$ be the instance where the attraction parameter of entrant $i \in \{1,2\}$ equals $\frac{\mu}{p_i}$ with probability $p_i$ and $0$ otherwise, the incumbents have attraction parameters $w_3$ and $w_4$ with $w_3 > w_4$, and the cold-start attraction parameter for each entrant is the mean, i.e., $\tilde{w}_1 = \tilde{w}_2 = \mu$. To prove Proposition~\ref{prop:different_priors_bernoulli}, the following lemma (proof in Appendix~\ref{appendix_subsec:proof_lemma_E9}) shows that when the prior upside probabilities $p_1$ and $p_2$ are smaller than $\frac{1}{6}$, exploring both entrants is not optimal. 

\begin{lemma}\label{lemma: exploring_one_is_better}
    For any $\mu, p_1, p_2, w_3, w_4,$ with $p_1, p_2 \in (0, \frac{1}{6})$, $w_3 < \frac{\mu}{p_1}$, $w_3 < \frac{\mu}{p_2}$, $w_3 > 2\mu$, $w_3 > 4$ and $w_4 = \frac{w_3}{2}$, $\textsc{Reg}(\pi^{1}) < \textsc{Reg}(\pi^{1,2})$ under the instance $\mathcal{I}(\mu, p_1, p_2, w_3, w_4)$.
\end{lemma}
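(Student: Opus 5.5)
The plan is to use the regret-difference formula of Lemma~\ref{lemma:characterization_of_the_optimal_policy_with_two_unknown_entrants_with_heterogeneous_priors_regret_differences} with $\tildew_1=\tildew_2=\mu$, and then check the resulting inequality realization by realization over the four outcomes of the two Bernoulli-like priors. Setting $\tildew_1=\tildew_2=\mu$ and multiplying by $2\mu>0$, the claim $\textsc{Reg}(\pi^{1})<\textsc{Reg}(\pi^{1,2})$ becomes
$$\expect\Big[\big(2w_3+\max(w_1,w_3)-\max(w_2,w_3)\big)(1-\textsc{OPT})\Big]>\expect[\textsc{OPT}].$$
With $M:=\max_2(w_1,w_2,w_3,w_4)$ we have $\textsc{OPT}=\frac{M}{M+1}$ and $1-\textsc{OPT}=\frac{1}{M+1}$. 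The target is therefore $\expect\big[D/(M+1)\big]>0$, where
$$D:=2w_3+\max(w_1,w_3)-\max(w_2,w_3)-M.$$

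Write $u_i=\mu/p_i$. The hypotheses $w_3<u_i$ and $p_i<\frac16$ give $u_i>w_3$ and $u_i>6\mu$. Using $w_4=w_3/2$, I would evaluate $D$ and $M$ in each case.
\begin{itemize}
\item Both entrants zero, with probability $(1-p_1)(1-p_2)$: here $M=\tfrac32 w_3$ and $D=\tfrac12 w_3$, a strictly positive contribution.
\item Only entrant 1 high: here $M=u_1+w_3$ and $D=0$.
\item Only entrant 2 high, with probability $p_2(1-p_1)$: here $M=u_2+w_3$ and $D=2w_3-2u_2<0$.
\item Both high, with probability $p_1p_2$: here $M=u_1+u_2$ and $D=2w_3+u_1-u_2-u_1-u_2=2w_3-2u_2$.
\end{itemize}

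The remaining work is to show that the gain from the first case dominates the two negative cases.
\begin{itemize}
\item The gain is $(1-p_1)(1-p_2)\frac{w_3/2}{\frac32 w_3+1}$. Using $w_3>4$, this is at least $\big(\tfrac56\big)^2\cdot\tfrac27$.
\item In the third case, $p_2u_2=\mu$ bounds the loss in magnitude by $(1-p_1)\frac{2\mu}{u_2+w_3+1}$. Combining $u_2>6\mu$ with $w_3>2\mu$ gives a denominator exceeding $8\mu$.
\item The fourth case is bounded by $p_1\cdot\frac{2\mu}{u_1+u_2}\le 2p_1p_1\cdot\frac{p_2}{p_1+p_2}$, which is small.
\end{itemize}

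The main obstacle is exactly this constant comparison: the crude bounds above are not quite sufficient, since they give roughly $1/4$ for the loss against about $0.2$ for the gain. I would therefore sharpen the bounds. One improvement is to keep the $+2w_3p_2$ term in the third case instead of discarding it. Another is to use both $u_2>w_3$ and $u_2>6\mu$ to make the denominator $u_2+w_3+1$ larger. A third is to exploit the factor $(1-p_1)$ on both sides. I would also use $w_3>4$ to absorb the $+1$ terms, and if needed tighten the gain using $\frac{w_3/2}{1.5w_3+1}\to\frac13$. If these refinements still fall short, the fallback is to parametrize by $x=\mu/w_3\in(0,\tfrac12)$ and verify the resulting one-variable inequality directly.
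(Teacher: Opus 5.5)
Your reduction is correct and is the paper's own. With $\tilde{w}_1=\tilde{w}_2=\mu$ the lemma is equivalent to positivity of
\[
p_1p_2\,\frac{2(w_3-u_2)}{u_1+u_2+1}+(1-p_1)p_2\,\frac{2(w_3-u_2)}{u_2+w_3+1}+(1-p_1)(1-p_2)\,\frac{w_3}{3w_3+2},
\]
and your four case values are exactly these terms. The gap is that this positivity, which is the whole quantitative content of the lemma, is never established. By your own account your estimate fails, and the refinements are listed rather than carried out. The slack is lost when you bound $2p_2(u_2-w_3)$ by $2\mu$, which discards the factor $1-w_3/u_2$. That factor equals $2/3$ at $\mu\approx w_3/2$, $u_2\approx 3w_3$, which is exactly where your bound is worst. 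Two of your suggestions cannot help uniformly. The gain factor $\frac{w_3}{3w_3+2}$ cannot be pushed toward $\frac13$, since $w_3$ may be arbitrarily close to $4$. The ``one-variable'' fallback is not one-variable, because $w_3,\mu,p_1,p_2$ all remain free.

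Keeping the $w_3$ term closes the gap, and there are two ways to finish.

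\emph{The paper's route.} First use $u_1\ge w_3$ to enlarge the denominator of the both-high term, so the two negative terms together are at least $p_2\frac{2(w_3-u_2)}{u_2+w_3+1}$. For fixed $p_2$ this is decreasing in $\mu$. Hence $\mu<w_3/2$ and $p_2<\frac16$ give the lower bound $\frac{(2p_2-1)w_3}{w_3/(2p_2)+w_3+1}\ge\frac{(p_2-1)w_3}{4w_3+1}$. The gain is at least $(1-p_2)\frac{5w_3}{18w_3+12}$. The sum is therefore at least $(1-p_2)\frac{w_3(2w_3-7)}{(4w_3+1)(18w_3+12)}$, which is positive because $w_3>4$.

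\emph{Your own route, done carefully.} Set $t=u_2/w_3>1$ and use $2\mu<w_3$. The only-entrant-2 loss is then
\[
(1-p_1)\frac{2\mu(u_2-w_3)}{u_2(u_2+w_3+1)}<(1-p_1)\frac{t-1}{t(t+1)}\le(3-2\sqrt2)(1-p_1).
\]
The gain exceeds $\frac{5}{21}(1-p_1)$. Your both-high bound is below $2p_1\cdot\frac{p_1p_2}{p_1+p_2}<\frac1{36}$. Since $\frac56\bigl(\frac5{21}-3+2\sqrt2\bigr)>\frac1{36}$, positivity follows.
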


The next lemma (proof in Appendix~\ref{appendix_subsec:proof_lemma_E10}) shows that when the most attractive incumbent's attraction parameter lies within a moderate range, either $\pi^1$ or $\pi^2$ can yield lower regret and the behavior depends on a threshold $\theta$.

\begin{lemma}\label{lemma: when_exploring_entrant_2_is_better}
    For any $\mu, p_1,w_3, w_4$ with $w_3< \frac{\mu}{p_1} < 2 w_3$, $w_3 > 2\mu$, there exists some $\theta(\mu , p_1, w_3) \in (0, p_1)$ such that $\textsc{Reg}(\pi^{2}) < \textsc{Reg}(\pi^{1})$ if $p_2 \in (p_1 - \theta( \mu , p_1, w_3), p_1)$ and $\textsc{Reg}(\pi^{2}) > \textsc{Reg}(\pi^{1})$ if $p_2 \in (0, p_1 - \theta(\mu , p_1,w_3))$ under the instance $\mathcal{I}(\mu, p_1, p_2, w_3, w_4)$. 
\end{lemma}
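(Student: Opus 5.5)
The plan is to reduce everything to an explicit one-variable function of $p_2$ and study its sign. Since $\tildew_1=\tildew_2=\mu$, the third identity of Lemma~\ref{lemma:characterization_of_the_optimal_policy_with_two_unknown_entrants_with_heterogeneous_priors_regret_differences} simplifies after cancelling the $w_3$ terms. Using $\textsc{OPT}-1=-1/(\max_2(w_1,w_2,w_3,w_4)+1)$, as in the proof of Lemma~\ref{lemma:inequality_replace_h_2_with_h_1}, it becomes
$$\mu\big(\textsc{Reg}(\pi^{2})-\textsc{Reg}(\pi^{1})\big)=\expect\left[\frac{(w_1-w_3)^+-(w_2-w_3)^+}{\max_2(w_1,w_2,w_3,w_4)+1}\right].$$
Write $a_i=\mu/p_i$. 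Since $p_2<p_1$ and $a_1>w_3$, we have $a_2>a_1>w_3$. Also $w_4<w_3<a_1$, so the top-two sum is easy to read off in each of the four Bernoulli outcomes. This gives
$$g(p_2):=p_1(1-p_2)\frac{a_1-w_3}{a_1+w_3+1}-(1-p_1)\frac{\mu-p_2w_3}{a_2+w_3+1}+p_1\frac{p_2a_1-\mu}{a_1+a_2+1},$$
and we have $\mu(\textsc{Reg}(\pi^{2})-\textsc{Reg}(\pi^{1}))=g(p_2)$. The claim then reduces to the following: $g$ has a single sign change on $(0,p_1)$, from positive near $0$ to negative near $p_1$. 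Setting $\theta=p_1-p^\ast$, where $p^\ast$ is that root, gives the lemma.

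The boundary behaviour is direct. First, $g(p_1)=0$ by symmetry, because the first two terms coincide and the last vanishes. Second, as $p_2\to0$ we have $a_2\to\infty$, so the second and third terms tend to $0$. Hence $g(0^+)=p_1(a_1-w_3)/(a_1+w_3+1)>0$, which says $\pi^1$ is strictly better for small $p_2$. Third, I will compute $g'(p_1)$ after multiplying through by $a_2+w_3+1=(\mu+p_2(w_3+1))/p_2$ and similar factors. I expect $g'(p_1)>0$ to follow from $a_1<2w_3$ and $w_3>2\mu$, which bound the relative sizes of $a_1-w_3$ and $w_3$. Together with $g(p_1)=0$, this gives $g<0$ on a left neighbourhood of $p_1$, so $\pi^2$ is better there.

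The main obstacle is uniqueness of the crossing, since continuity only yields \emph{some} root. My first attempt is to clear denominators: multiply $g$ by $(\mu+p_2(w_3+1))(\mu+p_2(a_1+1)+p_2\cdot 0)$, more precisely by $p_2$ times the two $p_2$-dependent denominators. The result is a polynomial $P(p_2)$ of degree at most $3$ with the same sign as $g$ on $(0,p_1)$. Since $P(p_1)=0$, I factor $P(p_2)=(p_1-p_2)Q(p_2)$, where $Q$ has degree at most $2$. I then aim to show that $Q$ changes sign exactly once on $(0,p_1)$, with $Q(0)>0$ and $Q(p_1)<0$; the latter is equivalent to $g'(p_1)>0$. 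For this I would check either that $Q$ is monotone there (using the hypotheses $w_3>2\mu$ and $a_1<2w_3$ to sign $Q'$), or that its discriminant and endpoint signs rule out two roots in the interval. If the algebra proves unwieldy, the fallback is to show $g$ is first decreasing and then increasing on $(0,p_1)$ (quasi-convexity). Because $g(0^+)>0$ and $g(p_1)=0$, that shape also forces exactly one interior zero.
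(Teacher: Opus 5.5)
Your plan is correct and is essentially the paper's proof. The paper also writes $\mu(\textsc{Reg}(\pi^{2})-\textsc{Reg}(\pi^{1}))$ as $(p_2-p_1)Q(p_2)$ over a positive denominator with $Q$ quadratic, and checks $Q(0)<0$ (from $\mu/p_1>w_3$) and $Q(p_1)>0$ (from $w_3>2\mu$ and $\mu/p_1<2w_3$), which is your endpoint check up to sign, i.e.\ your $g'(p_1)>0$. The paper locates the root using a positive leading coefficient and a negative constant term, but opposite endpoint signs of a polynomial of degree at most two already force exactly one crossing in $(0,p_1)$, so your monotonicity, discriminant and quasi-convexity fallbacks are unnecessary.
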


The next lemma (proof in Appendix~\ref{appendix_subsec:proof_lemma_E11}) shows that there exists an optimal policy which offers the same assortment until an unknown entrant is purchased. 

\begin{lemma}\label{lemma:one_one_the_policies_pi_1_pi_2_and_pi_12_is_optimal}
    There exists an optimal policy $\pi^{\star}$ such that $\pi^{\star} \in \{\pi^{1},\pi^{2}, \pi^{1,2}\}$. 
\end{lemma}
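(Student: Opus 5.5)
The plan is to reuse the machinery of Lemma~\ref{lemma:epoch_optimal_policy_is_optimal}, specialized to the two-entrant class $\mathcal{C}$ with heterogeneous priors. The three ingredients are: a one-epoch-lookahead decomposition, a lower bound on any policy's first-epoch regret by the best fixed assortment, and a characterization of what happens after the first purchase.

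First I handle the second epoch. Once one entrant (say entrant $j$) is purchased, only a single unknown entrant $i\neq j$ remains, with its own prior $\mathcal{F}_i$ and cold-start $\tildew_i$. I will check that the proof of Lemma~\ref{lemma:there_exists_an_optimal_policy_which_offers_the_same_assortment_until_the_entrant_is_purchased} and Theorem~\ref{thm:explore_one_unknown_with_best} never used homogeneity of priors. The assumption $\prob[w_i>w_3]>0$ supplies $\overline{\theta}>w^\star_{(c)}$ for the known set: the best known product is at least $w_3$, and with $c=2$ the second best is at least $w_4$, so a positive-probability improvement exists. Hence from any history after the first purchase, the optimal continuation offers entrant $i$ with the most attractive known product until it is purchased. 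This is exactly the second phase of $\pi^{\{1\}}$, $\pi^{\{2\}}$ and $\pi^{\{1,2\}}$. Its regret depends only on $(j,w_j)$ and not on the history within the first epoch, mirroring Lemma~\ref{lemma:the_regret_of_pi_star_depends_on_H_only_through_J_H}.

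Next I handle the first epoch. For an arbitrary policy $\pi$, I decompose its regret as in \eqref{eq:regret_pi_decomposition_epoch_and_after_epoch}: the regret until the first entrant purchase plus the expected optimal continuation value. Unlike the homogeneous case, the continuation value now depends on \emph{which} entrant was bought. For each first-epoch assortment $S$ containing some unknown entrant, I define a combined cost $G(S)=\textsc{EpochReg}(S)+\sum_{j}\prob[\text{$j$ bought first}\mid S]\,V_j$, where $V_j$ is the expected optimal continuation after $j$ is revealed. The term $\prob[j\text{ first}\mid S]=\tildew_j\mathbf 1[j\in S]/W^U(S)$ depends only on the entrant set of $S$. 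I then write $G(S)=\frac{1}{q(S)}\bigl[(\textsc{OPT}_1-\textsc{Rev}_1(S))+\sum_j p_j(S)V_j\bigr]$, where $p_j(S)$ is the per-round probability that $j$ is purchased. The argument of Lemma~\ref{lemma:epoch_regret_of_any_policy_is_at_least_the_minimum_epoch_regret_over_fixed_assortments} then goes through: the total regret of $\pi$ is $\sum_S \expect[N(S)]\bigl[(\textsc{OPT}_1-\textsc{Rev}_1(S))+\sum_j p_j(S)V_j\bigr]$. Assortments without entrants contribute nonnegatively, using $\textsc{OPT}_1>\textsc{Rev}_1$ as ensured by the assumptions. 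Writing each entrant-containing term as $G(S)\cdot q(S)\expect[N(S)]$ and using $\sum_S q(S)\expect[N(S)]=1$ when $\expect[Z]<\infty$ gives $\textsc{Reg}(\pi)\ge\min_S G(S)$. The case $\expect[Z]=\infty$ gives infinite regret. Therefore some stationary first-epoch assortment followed by the optimal continuation is optimal.

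Finally I reduce over $S$. For a fixed entrant set $U\in\{\{1\},\{2\},\{1,2\}\}$, the split probabilities and $V_j$ are independent of the known products added. By the IIA computation in Lemma~\ref{lemma:optimal_completion_given_ell_unknown_prods}, the outside option count is $w_0/W^U$ regardless of the completion, and each added incumbent contributes $(\textsc{OPT}_1-1)w_i/W^U<0$. So it is optimal to fill the remaining $2-|U|$ slots with the most attractive incumbents, i.e.\ $w_3$ when $|U|=1$. The resulting three policies are exactly $\pi^{\{1\}},\pi^{\{2\}},\pi^{\{1,2\}}$.

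The main obstacle I expect is the second step. There I must justify that the continuation values $V_j$ are well defined and finite, and that the lower-bound argument correctly accounts for continuation weights that differ across entrants. This requires carefully conditioning on the identity of the purchased entrant, which is independent of $w_j$.
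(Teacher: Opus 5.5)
Your proposal is correct and follows essentially the paper's route: your $G(S)$ is the paper's $Q(S;\mathcal{C})$ at the initial condensed history, your weighted-average lower bound is the argument of Lemma~\ref{lemmaB4:different_priors}, and your IIA completion is the paper's step (iii). The differences are small: you get the continuation values from Theorem~\ref{thm:explore_one_unknown_with_best} rather than the general recursion defining $V$, and you use $\prob[w_i>w_3]>0$ where the paper invokes the Bernoulli upsides exceeding $w_3$. One small correction: $\overline{\theta}>w^\star_{(2)}$ holds in the second epoch because the second-largest of $\{w_j,w_3,w_4\}$ is at most $w_3$ (since $w_3\ge w_4$), not because of the lower bounds you cite.
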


\begin{proof}[Proof of Proposition~\ref{prop:different_priors_bernoulli}.]
Any parameters $\mu, p_1,p_2, w_3, w_4$ with $p_2 < p_1$, $p_1 \in (0, \frac{1}{6})$, $w_3< \frac{\mu}{p_1} < 2 w_3$, $w_3 > 2\mu$, $w_3 > 4$, and $w_4 = \frac{w_3}{2}$ satisfy the conditions of Lemma~\ref{lemma: exploring_one_is_better} and Lemma~\ref{lemma: when_exploring_entrant_2_is_better}. By Lemma~\ref{lemma: exploring_one_is_better}, $\pi^{1,2}$ is not optimal. Combining this with Lemma~\ref{lemma: when_exploring_entrant_2_is_better} and Lemma~\ref{lemma:one_one_the_policies_pi_1_pi_2_and_pi_12_is_optimal}, $\pi^{1}$ is optimal if $p_2 < \theta(\mu , p_1,w_3)$ and $\pi^2$ is optimal if $p_2 > \theta(\mu , p_1,w_3)$.
\end{proof}

\subsection{Exploring both entrants is not optimal (Lemma~\ref{lemma: exploring_one_is_better})}\label{appendix_subsec:proof_lemma_E9}
\begin{proof}[Proof of Lemma~\ref{lemma: exploring_one_is_better}]
Letting $\max_2(w_1,w_2,w_3,w_4)$ be the sum of the two largest parameters from the set $\{w_1, w_2, w_3, w_4\}$, the scaled difference of the regrets of $\pi^{1,2}$ and $\pi^{1}$ is expressed and lower bounded as
\begin{align*}
 2\mu \left(\textsc{Reg}(\pi^{1,2}) - \textsc{Reg}(\pi^{1}) \right)&=  \expect_{\substack{w_1 \sim \mathcal{F}_1 \\ w_2 \sim \mathcal{F}_2}}\Bigg[ \frac{- \max(w_2, w_3) + \max(w_1, w_3) +2 w_3 -\max_2(w_1, w_2, w_3, w_4)}{\max_2(w_1, w_2, w_3, w_4) + 1}\Bigg]\\
  &=p_1 p_2 \frac{- \frac{\mu}{p_2} +\frac{\mu}{p_1}+2w_3   - (\frac{\mu}{p_1} +\frac{\mu}{p_2})}{\frac{\mu}{p_1} +\frac{\mu}{p_2} + 1}+ p_1 (1-p_2) \frac{ - w_3 +\frac{\mu}{p_1} +2w_3 - (\frac{\mu}{p_1} +w_3)}{\frac{\mu}{p_1} +w_3 + 1} \nonumber\\
     &+ (1-p_1) p_2 \frac{ - \frac{\mu}{p_2} +w_3 +2w_3 - (\frac{\mu}{p_2} +w_3)}{\frac{\mu}{p_2} +w_3 + 1}+ (1-p_1)(1-p_2) \frac{w_3 - w_4}{w_3 + w_4+1} \nonumber\\
 &= p_1 p_2 \frac{2(w_3 -\frac{\mu}{p_2})}{\frac{\mu}{p_1} +\frac{\mu}{p_2} + 1} + p_2(1-p_1) \frac{2(w_3 - \frac{\mu}{p_2})}{\frac{\mu}{p_2} + w_3 + 1} +(1-p_1)(1-p_2) \frac{w_3}{3w_3 + 2} > 0.
\end{align*}
which shows the lemma (as $\mu > 0$). The first equality expresses the regret difference between $\pi^{1,2}$ and $\pi^1$ using Lemma~\ref{lemma:characterization_of_the_optimal_policy_with_two_unknown_entrants_with_heterogeneous_priors_regret_differences}, uses $\tildew_1 = \tildew_2 = \mu$, $\textsc{OPT} = \frac{\max_2(w_1, w_2, w_3, w_4)}{\max_2(w_1, w_2, w_3, w_4)+1}$, and $\opt-1 = -\frac{1}{\max_2(w_1, w_2, w_3, w_4)+1}$ and cancels the $\mu$ term. The second equality expands the expectation over $w_1$ and $w_2$ and uses $\max(\frac{\mu}{p_1},w_3) = \frac{\mu}{p_1}$ (as $\frac{\mu}{p_1} > w_3$) and $\max(\frac{\mu}{p_2},w_3) = \frac{\mu}{p_2}$ (as $\frac{\mu}{p_2} > w_3$). The third equality cancels the term in multiplying $p_1(1-p_2)$ as $- w_3 +\frac{\mu}{p_1} +2w_3  - (\frac{\mu}{p_1} +w_3) = 0$ and uses $\frac{w_3 - w_4}{w_3 + w_4+1} = \frac{w_3}{3w_3 + 2}$ as $w_4 = \frac{w_3}{2}$. The inequality uses Lemma~\ref{lemma:regret_diff_expression_positive} (stated and proven below). \end{proof}

\begin{lemma}\label{lemma:regret_diff_expression_positive}
    For any $\mu, p_1, p_2, w_3, w_4,$ with $p_1, p_2 \in (0, \frac{1}{6})$, $w_3 < \frac{\mu}{p_1}$, $w_3 < \frac{\mu}{p_2}$, $w_3 > 2\mu$, $w_3 > 4$,
    $$ \underbrace{p_1 p_2 \frac{2(w_3 -\frac{\mu}{p_2})}{\frac{\mu}{p_1} +\frac{\mu}{p_2} + 1} + p_2(1-p_1) \frac{2(w_3 - \frac{\mu}{p_2})}{\frac{\mu}{p_2} + w_3 + 1} +(1-p_1)(1-p_2) \frac{w_3}{3w_3 + 2}}_{\eqqcolon \textsc{LHS}}  > 0$$
\end{lemma}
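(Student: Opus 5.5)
The plan is to bound each of the three terms of $\textsc{LHS}$ separately. The third term, $(1-p_1)(1-p_2)\frac{w_3}{3w_3+2}$, is the only positive one and will carry the argument. The first two terms are negative, since $w_3 < \frac{\mu}{p_2}$. The goal is to show that their combined magnitude is strictly smaller than the third term.

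\textbf{Lower bound on the third term.} Since $p_1,p_2<\frac16$, we have $(1-p_1)(1-p_2) > \frac{25}{36}$. Since $w_3>4$, we have $\frac{w_3}{3w_3+2} > \frac{4}{14}=\frac27$, because the map $x\mapsto \frac{x}{3x+2}$ is increasing. Together these give
\[
\text{third term} > \tfrac{25}{126}.
\]

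\textbf{Upper bound on the second term.} Its magnitude is
\[
\frac{2p_2(1-p_1)(\frac{\mu}{p_2}-w_3)}{\frac{\mu}{p_2}+w_3+1}
\le \frac{2(1-p_1)(\mu - p_2 w_3)}{\frac{\mu}{p_2}+w_3+1}.
\]
I will drop $-p_2w_3$ from the numerator and use $\frac{\mu}{p_2}+w_3+1 > \frac{\mu}{p_2}$ in the denominator. This gives a bound of $2p_2(1-p_1) < 2p_2 < \frac13$. That is too weak, so the bound needs to be sharpened. Writing $x=\frac{\mu}{p_2}$ and using $\mu < w_3/2$, the magnitude is at most
\[
\frac{2p_2(x-w_3)}{x+w_3+1}\le \frac{2\mu}{x+w_3+1}=\frac{2p_2\mu}{\mu+p_2(w_3+1)}.
\]
I will then optimize in the regime where this is largest. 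Using $\mu<w_3/2$ and $p_2<\frac16$ should give a bound like $2p_2\cdot\frac{\mu}{\mu}<\frac13$, which is still weak. The better route combines the first two terms.

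\textbf{Combining the first two terms.} Both negative terms have numerator $2p_2(\frac{\mu}{p_2}-w_3)=2(\mu-p_2w_3)$, with weights $p_1$ and $(1-p_1)$. Their total magnitude is therefore at most
\[
\frac{2(\mu-p_2w_3)}{\frac{\mu}{p_2}+w_3+1}
\le \frac{2(\mu-p_2w_3)p_2}{\mu}.
\]
Set $s=p_2w_3/\mu\in(0,1)$, using $w_3<\frac{\mu}{p_2}$. The bound becomes $2p_2(1-s)$. Since $w_3>2\mu$, we have $s>2p_2$, so the bound is at most $2p_2(1-2p_2)$. This is maximized at $p_2=\frac14$, but on $p_2<\frac16$ it is below $\frac13\cdot\frac23=\frac29$.

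The comparison needed is $\frac29$ against $\frac{25}{126}$, that is, $\frac{28}{126}$ against $\frac{25}{126}$. This fails by a small margin. The main obstacle is therefore tightening one of these estimates.

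\textbf{Tightening the estimate.} I would keep the $w_3+1$ in the denominator. The combined magnitude is
\[
\frac{2p_2(1-s)\mu}{\mu+p_2(w_3+1)} = \frac{2p_2(1-s)}{1+s+p_2/\mu}\le \frac{2p_2(1-s)}{1+s}.
\]
With $s>2p_2$, this is at most $\frac{2p_2(1-2p_2)}{1+2p_2}$. At $p_2=\frac16$ it equals $\frac{(1/3)(2/3)}{4/3}=\frac16$, and the function is increasing on $(0,\frac16)$. So the magnitude is below $\frac16=\frac{21}{126}<\frac{25}{126}$.

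This closes the proof. The remaining routine work is to verify monotonicity of $p\mapsto \frac{2p(1-2p)}{1+2p}$ on $(0,\frac16)$, and to check that each strict inequality is used in the correct direction.
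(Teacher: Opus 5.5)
Your argument is correct and follows the paper's route. You merge the two negative terms into $\frac{2(\mu-p_2w_3)}{\mu/p_2+w_3+1}$; this step uses $\frac{\mu}{p_1}>w_3$ to compare the denominators, which you should state explicitly. You then bound that quantity using $w_3>2\mu$ and $p_2<\frac16$, and compare it with the third term, bounded via $p_1,p_2<\frac16$ and $w_3>4$.

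The only difference is bookkeeping. You normalize by $s=p_2w_3/\mu$, drop the $+1$ in the denominator, and compare absolute constants ($\frac16<\frac{25}{126}$, which in fact only needs $w_3\ge\frac{12}{7}$). The paper instead keeps $w_3$ symbolic and reduces to $(1-p_2)\frac{w_3(2w_3-7)}{(4w_3+1)(18w_3+12)}>0$. In a write-up, delete the abandoned intermediate attempts.
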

\begin{proof} The left hand side is lower bounded by
\begin{align*}
       \textsc{LHS}  &\geq p_2 \frac{2(w_3 - \frac{\mu}{p_2})}{\frac{\mu}{p_2} + w_3 + 1} +(1-p_1)(1-p_2) \frac{w_3}{3w_3 + 2} \geq \frac{2 p_2(w_3-\frac{w_3}{2p_2})}{\frac{w_3}{2p_2} +w_3 +1}+(1-p_1)(1-p_2) \frac{w_3}{3w_3 + 2} \\
     &= \frac{(2p_2-1)w_3}{\frac{w_3}{2p_2} +w_3 +1}+(1-p_2)(1-p_1) \frac{w_3}{3w_3 + 2}\geq \frac{(p_2-1)w_3}{4w_3 +1} + (1-p_2) \frac{5 w_3}{18 w_3 + 12} \nonumber\\
     &=  (1-p_2) \left(\frac{w_3}{4w_3 +1} -\frac{5 w_3}{18 w_3 + 12} \right) = (1-p_2) \frac{w_3(2 w_3-7)}{(4w_3 + 1)(18w_3 + 12)} > 0.
\end{align*}
The first inequality uses $\frac{2(w_3 -\frac{\mu}{p_2})}{\frac{\mu}{p_1} +\frac{\mu}{p_2} + 1} \geq \frac{2(w_3 -\frac{\mu}{p_2})}{\frac{\mu}{p_2} +w_3 + 1}$ as $\frac{\mu}{p_1} \geq w_3$ and $w_3 -\frac{\mu}{p_2} \leq 0$ to lower bound the first term of the \textsc{LHS}  and expands the second term to cancel the terms multiplied by $p_1 p_2$. The second inequality uses $p_2 \frac{2(w_3 - \frac{\mu}{p_2})}{\frac{\mu}{p_2} + w_3 + 1} \geq \frac{2 p_2(w_3-\frac{w_3}{2p_2})}{\frac{w_3}{2p_2} +w_3 +1}$ since $\frac{w_3}{2} \geq \mu$ and the function $f(x) \coloneq\frac{w_3-\frac{x}{p_2}}{\frac{x}{p_2} + w_3 + 1}$ is decreasing in $x$. The third inequality uses that $2p_2-1\geq p_2-1$, $\frac{w_3}{2p_2} + w_3 \geq 4w_3$ as $p_2 < \frac{1}{6}$, and $(1-p_1) \frac{w_3}{3w_3 + 2} \geq \frac{5 w_3}{18 w_3 + 12}$ since $1-p_1 > \frac{5}{6}$ (as $p_1 < \frac{1}{6}$). The last inequality uses $2 w_3-7 > 0$ as $w_3 > 4$. 
\end{proof}

\subsection{Threshold dependence of optimal exploration strategy (Lemma~\ref{lemma: when_exploring_entrant_2_is_better})}\label{appendix_subsec:proof_lemma_E10}
\begin{lemma}\label{lemma:sign_diff_reg2_reg1_depends_on_quadratic}
There exists a quadratic $Q_{\mu, p_1,w_3}(x) = x^2 \cdot A_{\mu, p_1,w_3}  + x \cdot B_{\mu, p_1,w_3} + C_{\mu, p_1,w_3}$ such that 
    \begin{enumerate}
    \item $\textsc{Reg}(\pi^2) > \textsc{Reg}(\pi^1)$ if and only if $Q_{\mu, p_1,w_3}(p_2)< 0$,
        \item $A_{\mu, p_1,w_3} > 0$ and the quadratic $Q_{\mu, p_1,w_3}(\cdot)$ has one positive root and one negative root, and 
        \item evaluated at $x = p_1$, $Q_{\mu, p_1,w_3}$ is positive, i.e., $Q_{\mu, p_1,w_3}(p_1) > 0$.
    \end{enumerate}
\end{lemma}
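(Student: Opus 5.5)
The plan is to compute $\textsc{Reg}(\pi^2)-\textsc{Reg}(\pi^1)$ exactly on the instance $\mathcal{I}(\mu,p_1,p_2,w_3,w_4)$ with $w_4=\frac{w_3}{2}$, as a rational function of $x=p_2$. Clearing positive denominators leaves a cubic polynomial. We then extract the trivial root $x=p_1$ (where the two entrants are identical) to obtain the quadratic $Q_{\mu,p_1,w_3}$.

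\textbf{Step 1 (closed form).} Start from the third identity in Lemma~\ref{lemma:characterization_of_the_optimal_policy_with_two_unknown_entrants_with_heterogeneous_priors_regret_differences}. Use $\tildew_1=\tildew_2=\mu$; the $w_3/\tildew_i$ terms then cancel. Also use $\textsc{OPT}-1=-1/(\max_2(w_1,w_2,w_3,w_4)+1)$. This gives
\[
\mu\bigl(\textsc{Reg}(\pi^2)-\textsc{Reg}(\pi^1)\bigr)=\expect\Bigl[\tfrac{\max(w_1,w_3)-\max(w_2,w_3)}{\max_2(w_1,w_2,w_3,w_4)+1}\Bigr].
\]
For $x<p_1$ we have $\frac{\mu}{x}>\frac{\mu}{p_1}>w_3>w_4$, so the expectation can be expanded over the four Bernoulli outcomes. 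The both-zero outcome contributes $0$, and the remaining three give
\[
p_1x\,\tfrac{\mu/p_1-\mu/x}{\mu/p_1+\mu/x+1}+p_1(1-x)K+(1-p_1)x\,\tfrac{w_3-\mu/x}{\mu/x+w_3+1},
\qquad K:=\tfrac{\mu/p_1-w_3}{\mu/p_1+w_3+1}>0 .
\]

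\textbf{Step 2 (polynomial form).} Multiply the numerators and denominators of the first and third terms by $x$. Then multiply the whole expression by the positive quantity
\[
D(x)=\bigl(x(\tfrac{\mu}{p_1}+1)+\mu\bigr)\bigl((w_3+1)x+\mu\bigr).
\]
The sign of the regret difference equals the sign of a polynomial $N(x)$ of degree at most three. At $x=p_1$ the instance is symmetric in the two entrants, so $N(p_1)=0$; directly, the first term vanishes and the other two cancel because their second factor equals $-K$. Hence $N(x)=(p_1-x)\,\widetilde Q(x)$ with $\widetilde Q$ quadratic.

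Set $Q_{\mu,p_1,w_3}:=-\widetilde Q$ (or $+\widetilde Q$, whichever sign makes property 1 hold). Since $p_1-x>0$ on the relevant range $x<p_1$, $\textsc{Reg}(\pi^2)>\textsc{Reg}(\pi^1)$ holds exactly when $Q_{\mu,p_1,w_3}(p_2)<0$. This is property 1.

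\textbf{Step 3 (properties 2 and 3).} For property 2, compute $Q(0)$. Only the middle term survives at $x=0$, giving $N(0)=p_1K\mu^2>0$, so $\widetilde Q(0)=N(0)/p_1>0$ and $Q(0)<0$. If the leading coefficient $A$ is positive, then the product of the roots $C/A$ is negative, so there is one positive and one negative root.

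The leading coefficient comes from the $x^3$ coefficient of $N$, which is assembled from the first and third terms. This sign computation is where the hypotheses $w_3<\frac{\mu}{p_1}<2w_3$ and $w_3>2\mu$ should be needed, and it is the step I expect to be the main obstacle.

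For property 3, evaluate $Q(p_1)=-\widetilde Q(p_1)=N'(p_1)$. Differentiating at the symmetric point gives an explicit expression involving $K$ and the derivative of the third term at $x=p_1$. I would show it is positive using the same hypotheses, in particular $\frac{\mu}{p_1}<2w_3$. If the direct sign analysis of $A$ is messy, my first fallback is to obtain $A>0$ from limits instead. From $Q(0)<0$ and $Q(p_1)>0$ there is a root in $(0,p_1)$, so it would suffice to show that $Q$ is negative somewhere on $(-\infty,0)$ or has positive leading behaviour; I would check this by evaluating $Q$ at a convenient negative point such as $x=-\frac{\mu}{w_3+1}$, where one denominator factor vanishes and $N$ simplifies.
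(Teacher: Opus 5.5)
Your route is the paper's: expand $\mu(\textsc{Reg}(\pi^2)-\textsc{Reg}(\pi^1))$ over the four Bernoulli outcomes, clear the positive denominators, factor out $p_2-p_1$, and read off the signs of the leading coefficient, the constant term and $Q(p_1)$. Everything you actually verify is correct: the expansion, $N(p_1)=0$, the factorization, and $Q(0)=-\mu^2K<0$. (Your $Q$ is the paper's $Q_{\mu,p_1,w_3}$ divided by the positive constant $p_1(p_1(w_3+1)+\mu)$.)

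\textbf{The gap.} The proposal stops exactly where the hypotheses have to enter. $A>0$ and $Q(p_1)>0$ are announced but not shown, and they are the substance of properties 2 and 3. Both close by direct computation.

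For property 2, your three terms give the $x^3$ coefficient of $N$ as $A=\mu(w_3+1)-K(\mu+p_1)(w_3+1)+(1-p_1)w_3(\mu+p_1)/p_1$. Substituting $K=(\mu-p_1w_3)/(p_1(w_3+1)+\mu)$ gives
\[
p_1\bigl(p_1(w_3+1)+\mu\bigr)A=\mu p_1(w_3+1)\bigl(p_1(w_3+1)+\mu\bigr)+(\mu+p_1)\bigl(p_1(w_3(w_3-2\mu)+w_3-\mu)+w_3\mu\bigr).
\]
This is positive because $w_3>2\mu$, and it is exactly the paper's $A_{\mu,p_1,w_3}$.

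For property 3, write $R$ for your three-term expression from Step 1, so $N=DR$ and $R(p_1)=0$. Then $Q(p_1)=N'(p_1)=D(p_1)R'(p_1)$, with
\[
R'(p_1)=\frac{p_1\mu}{2\mu+p_1}+\frac{p_1^2\bigl(w_3(w_3-2\mu)+w_3-\mu\bigr)+\mu(2p_1w_3-\mu)}{\bigl(p_1(w_3+1)+\mu\bigr)^2}.
\]
This is positive because $w_3>2\mu$ and $\mu/p_1<2w_3$. The second numerator is the same factor the paper isolates when bounding $Q_{\mu,p_1,w_3}(p_1)$.

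\textbf{The fallback has a sign error.} Since $Q(0)<0$, $Q$ is automatically negative just left of $0$, so ``$Q$ negative somewhere on $(-\infty,0)$'' proves nothing. What you need is $Q(y)>0$ for some $y<0$.

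With that correction, the point you chose does work. At $x_0=-\mu/(w_3+1)$ only the third term of $N$ survives. It equals $(1-p_1)\,x_0\,(x_0w_3-\mu)\bigl(x_0(\mu/p_1+1)+\mu\bigr)$, a product of three negative factors; the last is negative because $\mu/p_1>w_3$. Hence $N(x_0)<0$, and so $Q(x_0)>0$.

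The pattern $Q(x_0)>0>Q(0)$, $Q(p_1)>0$ then rules out a concave or linear $Q$, forcing $A>0$. It also places one root in $(x_0,0)$ and one in $(0,p_1)$. This corrected fallback is a valid alternative to the paper's direct bound on $A$, though it relies on property 3 being proved first.
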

\begin{proof}[Proof of Lemma~\ref{lemma: when_exploring_entrant_2_is_better}.]
    Let $Q_{\mu, p_1,w_3}(x)$ be the quadratic in Lemma~\ref{lemma:sign_diff_reg2_reg1_depends_on_quadratic} and $\theta(\mu, p_1,w_3)$ be its positive root. Given that $A_{\mu, p_1,w_3} > 0$, $Q_{\mu, p_1,w_3}(p_1)$ has one positive and one negative root and $Q_{\mu, p_1,w_3}(p_1) > 0$ (by Lemma~\ref{lemma:sign_diff_reg2_reg1_depends_on_quadratic}), then $\theta(\mu, p_1,w_3) \in (0, p_1)$. 
    
    By Lemma~\ref{lemma:sign_diff_reg2_reg1_depends_on_quadratic}, $\textsc{Reg}(\pi^2) > \textsc{Reg}(\pi^1)$ when $p_2 \in (0, \theta(\mu, p_1,w_3))$ because $Q_{\mu, p_1,w_3}(p_2) < 0$, and $\textsc{Reg}(\pi^2) < \textsc{Reg}(\pi^1)$ when $p_2 \in (\theta(\mu, p_1,w_3), p_1)$ because $Q_{\mu, p_1,w_3}(p_2) > 0$ completing the proof. 
\end{proof}

\begin{proof}[Proof of Lemma~\ref{lemma:sign_diff_reg2_reg1_depends_on_quadratic}.]
Letting $\max_2(w_1, w_2, w_3, w_4)$ be the sum of two largest attraction parameters from $\{w_1, w_2, w_3, w_4\}$, the scaled difference between the regrets of $\pi^{2}$ and $\pi^{1}$ is 
\begin{align}\label{4_diff_pi1_pi2}
    &\mu\left(\textsc{Reg}(\pi^2)-\textsc{Reg}(\pi^1)\right) = \expect_{\substack{w_1 \sim \mathcal{F}_1 \\ w_2 \sim \mathcal{F}_2}} \left[\frac{ \max(w_1, w_3)- \max(w_2, w_3)}{\max_2(w_1, w_2, w_3, w_4) + 1}\right]  \nonumber\\
    &= (p_2-p_1) \frac{p_2 \cdot \mu p_1}{p_2 \cdot (\mu + p_1) + \mu p_1} +  \frac{p_2(1-p_1)(p_2 w_3  -\mu)}{p_2 (w_3+1) + \mu} -  \frac{p_1 (1-p_2)(p_1 w_3-\mu )}{p_1 (w_3 + 1)+\mu} \nonumber\\
    &= (p_2-p_1) \frac{p_2 \cdot \mu p_1}{p_2 \cdot (\mu + p_1) + \mu p_1}+\frac{(p_2-p_1) \Bigg( p_2 \cdot \Big(p_1 \big(w_3(w_3-2\mu) + w_3-\mu \big) + w_3 \mu \Big) + \mu p_1 \left(w_3 - \frac{\mu}{p_1} \right) \Bigg)}{p_2 \cdot (w_3+1)(p_1 (w_3 + 1)+\mu) + \mu (p_1 (w_3 + 1)+\mu)} \nonumber\\
    &= (p_2-p_1) \cdot \frac{p_2^2 \cdot A_{\mu, p_1,w_3} +p_2 \cdot B_{\mu, p_1,w_3} + C_{\mu, p_1,w_3}}{(p_2 \cdot (\mu + p_1) + \mu p_1)(p_2 \cdot (w_3+1)(p_1 (w_3 + 1)+\mu) + \mu (p_1 (w_3 + 1)+\mu))} 
\end{align}
\begin{align*}
    \text{for } A_{\mu, p_1,w_3} &\coloneq \mu p_1 (w_3+1)(p_1(w_3+1)+\mu) + \Big(p_1 \big(w_3(w_3-2\mu) + w_3-\mu \big) + w_3 \mu \Big)(\mu+p_1),\\
    B_{\mu, p_1,w_3} &\coloneq \mu^2 p_1 (p_1(w_3+1)+\mu) + \mu p_1 \left(w_3 - \frac{\mu}{p_1} \right)(\mu+p_1) + \mu p_1 \Big(p_1 \big(w_3(w_3-2\mu) + w_3-\mu \big) + w_3 \mu \Big),\\
C_{\mu, p_1,w_3} &\coloneq \mu^2 p_1^2 \left(w_3 - \frac{\mu}{p_1} \right). 
\end{align*}
The first equality in \eqref{4_diff_pi1_pi2} expresses the regret difference between $\pi^2$ and $\pi^1$ using Lemma~\ref{lemma:characterization_of_the_optimal_policy_with_two_unknown_entrants_with_heterogeneous_priors_regret_differences}, uses $\tildew_1 = \tildew_2= \mu$, $\textsc{OPT} = \frac{\max_2(w_1, w_2, w_3, w_4)}{\max_2(w_1, w_2, w_3, w_4)+1}$, and $\opt-1 = -\frac{1}{\max_2(w_1, w_2, w_3, w_4)+1}$ and cancels the $\mu$ term. The second equality expands the expectation over $w_1$ and $w_2$, using $\max(\frac{\mu}{p_1},w_3) = \frac{\mu}{p_1}$ (as $\frac{\mu}{p_1} > w_3$) and $\max(\frac{\mu}{p_2},w_3) = \frac{\mu}{p_2}$ (as $\frac{\mu}{p_2} > w_3$), and performs an algebraic manipulation on each fraction. The third equality uses Lemma~\ref{lemma:algebraic_manipulation} (stated and proven below). The last equality factors out $p_2-p_1$ and takes a common denominator. 

Let $Q_{\mu, p_1,w_3}(c) \coloneq x^2 \cdot A_{\mu, p_1,w_3} +x \cdot B_{\mu, p_1,w_3} + C_{\mu, p_1,w_3}$. Given that the numerator of the right hand side of \eqref{4_diff_pi1_pi2} is $Q_{\mu, p_1,w_3}(p_2)$, the denominator is positive and $p_2-p_1 < 0$, $\textsc{Reg}(\pi^2) > \textsc{Reg}(\pi^1)$ if and only if $Q_{\mu, p_1,w_3}(p_2) < 0$ showing the first lemma part. The second-degree term of $Q$, $A_{\mu, p_1,w_3}$, is positive 
$$A_{\mu, p_1,w_3} > \Big(p_1 \big(w_3(w_3-2\mu) + w_3-\mu \big) + w_3 \mu \Big)(\mu+p_1) \geq 0$$
where the first inequality uses that $\mu p_1 (w_3+1)(p_1(w_3+1)+\mu) > 0$ as $\mu, p_1, w_3 > 0$ and the second inequality uses that $w_3 > 2 \mu$ and $w_3 > \mu$. The constant term of $Q$, $C_{\mu, p_1,w_3}$, is negative $C_{\mu, p_1,w_3} = \mu^2 p_1^2 \left(w_3 - \frac{\mu}{p_1} \right) < 0$ as $\frac{\mu}{p_1}  > w_3$. As a result, $Q_{\mu, p_1,w_3}(x)$ has one positive and one negative root showing the second lemma part. The value of $Q_{\mu, p_1,w_3}$ at $p_1$ is lower bounded by
\begin{align*}
    Q_{\mu, p_1,w_3}(p_1) &= p_1^2 \cdot A_{\mu, p_1,w_3} +p_1 \cdot B_{\mu, p_1,w_3} + C_{\mu, p_1,w_3}\\
    &\geq p_1^2 \Big(p_1 \big(w_3(w_3-2\mu) + w_3-\mu \big) + w_3 \mu \Big)(\mu+p_1) \\
    &+ p_1 \left( \mu p_1 \left(w_3 - \frac{\mu}{p_1} \right)(\mu+p_1) + \mu p_1 \Big(p_1 \big(w_3(w_3-2\mu) + w_3-\mu \big) + w_3 \mu \Big) \right) + \mu^2 p_1^2 \left(w_3 - \frac{\mu}{p_1} \right)\\
    &=\left(  p_1 \Big(p_1 \big(w_3(w_3-2\mu) + w_3-\mu \big) + w_3 \mu \Big) + \mu p_1 \left(w_3 - \frac{\mu}{p_1} \right) \right) \left(p_1 (\mu + p_1) + \mu p_1 \right)\\
    &= \left(  p_1^2 \big(w_3(w_3-2\mu) + w_3-\mu \big) + \mu p_1 \left(2w_3 - \frac{\mu}{p_1} \right) \right) \left(p_1 (\mu + p_1) + \mu p_1 \right) > 0
\end{align*}
showing the third lemma part and completing the proof. The first inequality uses $\mu p_1 (w_3+1)(p_1(w_3+1)+\mu) > 0$ to lower bound $A_{\mu, p_1,w_3} $ and $\mu^2 p_1 (p_1(w_3+1)+\mu) > 0$ to lower bound $B_{\mu, p_1,w_3}$. The last inequality uses $w_3 > 2\mu$, $w_3 > \mu$, and $2 w_3 > \frac{\mu}{p_1}$.
\end{proof}

\begin{lemma}\label{lemma:algebraic_manipulation}
For any $\mu, p_1, p_1$ and $w_3$,
$$\underbrace{\frac{p_2(1-p_1)(p_2 w_3  -\mu)}{p_2 (w_3+1) + \mu} -  \frac{p_1 (1-p_2)(p_1 w_3-\mu )}{p_1 (w_3 + 1)+\mu}}_{\eqqcolon\textsc{LHS}}= \frac{(p_2-p_1) \left( p_2 \cdot D_1(\mu,p_1,w_3) + D_0(\mu,p_1,w_3)\right)}{p_2 \cdot (w_3+1)(p_1 (w_3 + 1)+\mu) + \mu (p_1 (w_3 + 1)+\mu)}
$$
where $D_1(\mu, p_1,w_3) = p_1 \big(w_3(w_3-2\mu) + w_3-\mu \big) + w_3 \mu$ and $D_0(\mu, p_1,w_3) = \mu p_1 (w_3 - \frac{\mu}{p_1})$.
\end{lemma}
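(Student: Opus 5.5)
The identity in Lemma~\ref{lemma:algebraic_manipulation} is purely algebraic, so I will verify it by clearing denominators and comparing polynomials in $p_2$. Write $K \coloneq p_1(w_3+1)+\mu$ for the second denominator. Then the right-hand side's denominator factors as $p_2(w_3+1)K + \mu K = K\,(p_2(w_3+1)+\mu)$, i.e., exactly the product of the two denominators on the left-hand side. This fixes the common denominator.

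It then suffices to show the numerator identity
\begin{equation*}
p_2(1-p_1)(p_2 w_3-\mu)\,K \;-\; p_1(1-p_2)(p_1w_3-\mu)\,(p_2(w_3+1)+\mu) \;=\; (p_2-p_1)\big(p_2 D_1 + D_0\big).
\end{equation*}
Both sides are quadratic polynomials in $p_2$ with coefficients depending on $(\mu,p_1,w_3)$. The cleanest route is to note that the left side, call it $N(p_2)$, vanishes at $p_2=p_1$: there the two terms are literally identical, since $p_1(w_3+1)+\mu = K$. Hence $p_2-p_1$ divides $N$, and $N(p_2) = (p_2-p_1)(\alpha p_2+\beta)$ for some $\alpha,\beta$.

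To find $\alpha$ and $\beta$, I will match two coefficients. The leading coefficient of $N$ in $p_2$ is
\[
(1-p_1)w_3K + p_1(p_1w_3-\mu)(w_3+1).
\]
Expanding this should give $p_1(w_3^2-2\mu w_3+w_3-\mu)+w_3\mu = D_1$; the $p_1^2$ terms cancel. For the constant term, evaluate at $p_2=0$: $N(0) = -p_1(p_1w_3-\mu)\mu$. This must equal $-p_1\beta$, so $\beta = \mu(p_1w_3-\mu) = \mu p_1(w_3-\mu/p_1) = D_0$.

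The only real obstacle is the bookkeeping in the leading-coefficient expansion, specifically checking that the $p_1^2$ contributions cancel. The two pieces are $-p_1^2w_3(w_3+1)$ coming from $(1-p_1)w_3K$, and $+p_1^2w_3(w_3+1)$ coming from the second term, so they do cancel. The remaining terms are $w_3\big(p_1(w_3+1)+\mu\big) - p_1\mu w_3 - p_1\mu(w_3+1)$, which collect to $D_1$. Once this check goes through, dividing $N$ by the common denominator yields the stated identity.
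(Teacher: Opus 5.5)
Your argument is correct, but it verifies the numerator identity differently from the paper.

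Both proofs clear denominators. Your observation that the right-hand denominator factors as $K\,(p_2(w_3+1)+\mu)$ gives exactly the paper's common denominator. From there the paper expands both products $X$ and $Y$ into eight monomials each, cancels the three common pairs, and regroups the remainder by hand to extract the factor $(p_2-p_1)$.

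You instead note that the numerator $N(p_2)$ vanishes at $p_2=p_1$, because the two terms coincide there. By the factor theorem, $N=(p_2-p_1)(\alpha p_2+\beta)$. Since $p_2-p_1$ is monic in $p_2$, you only need two coefficients: the $p_2^2$-coefficient gives $\alpha=D_1$ once the $p_1^2$ terms cancel as you checked, and the constant term gives $\beta=D_0$.

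Your route explains structurally why $p_2-p_1$ factors out and cuts the bookkeeping to two coefficients. The paper's full expansion is more laborious but checks every monomial explicitly.

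Solving $-p_1\beta=-p_1\mu(p_1w_3-\mu)$ for $\beta$ uses $p_1\neq 0$. This is harmless: it is already required for $D_0=\mu p_1(w_3-\mu/p_1)$ to be defined. Alternatively, you can argue in the polynomial ring in $p_1,\mu,w_3$, which is an integral domain.
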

\begin{proof}[Proof of Lemma~\ref{lemma:algebraic_manipulation}]
Taking a common denominator in the left hand side,
\begin{align*}\label{eq:expression_difference_second_and_third_terms}
    \textsc{LHS}&=\frac{p_2 (1-p_1) (p_2 w_3 -\mu)(p_1(w_3+1) + \mu)-p_1 (1-p_2) (p_1 w_3 -\mu)(p_2(w_3+1) + \mu)}{p_2 \cdot (w_3+1)(p_1 (w_3 + 1)+\mu) + \mu (p_1 (w_3 + 1)+\mu)} \\
    &= (p_2-p_1) \frac{p_2 \cdot D_1(\mu, p_1,w_3) + D_0(\mu, p_1,w_3)}{p_2 \cdot (w_3+1)(p_1 (w_3 + 1)+\mu) + \mu (p_1 (w_3 + 1)+\mu)}.
\end{align*}
To obtain the second equality, express the terms in the numerator as
\begin{align*}
    X &\coloneq p_2 (1-p_1) (p_2 w_3 -\mu)(p_1(w_3+1) + \mu)=(p_2^2 w_3 - p_2 \mu - p_1 p_2^2 w_3 + p_2 p_1 \mu)(p_1(w_3+1) + \mu) \\
&=  p_1 p_2^2 w_3 (w_3+1) + \mu p_2^2 w_3
- p_1 p_2 \mu (w_3+1) - p_2 \mu^2  - p_1^2 p_2^2 w_3 (w_3+1) - \mu p_1 p_2^2 w_3
+ p_1^2 p_2 \mu (w_3+1) \\
&+ p_2 p_1 \mu^2, \quad \text{and}\\
    Y &\coloneq p_1 (1-p_2) (p_1 w_3 -\mu)(p_2(w_3+1) + \mu)=(p_1^2 w_3 - p_1 \mu - p_2 p_1^2 w_3 + p_1 p_2 \mu)(p_2(w_3+1) + \mu) \\
&=  p_2 p_1^2 w_3 (w_3+1) + \mu p_1^2 w_3
- p_2 p_1 \mu (w_3+1) - p_1 \mu^2  - p_2^2 p_1^2 w_3 (w_3+1) - \mu p_2 p_1^2 w_3
+ p_2^2 p_1 \mu (w_3+1) \\
&+ p_1 p_2 \mu^2,
\end{align*}
Canceling the common terms $p_2 p_1 \mu (w_3+1)$, $p_2^2 p_1^2 w_3 (w_3+1)$, and $p_1 p_2 \mu^2$, the numerator is
\begin{align*}
    X-Y &= w_3(w_3+1) p_1 p_2 (p_2-p_1) + \mu w_3(p_2-p_1)(p_2 + p_1) - \mu^2(p_2 -p_1) -\mu p_1 p_2 w_3(p_2-p_1) \\
    &+ p_1 p_2 \mu(w_3 + 1)(p_1-p_2)\\
    &= (p_2-p_1) \Bigg( p_2 \Big(p_1 \big(w_3(w_3-2\mu) + w_3-\mu \big) + w_3 \mu \Big) + \mu p_1 \left(w_3 - \frac{\mu}{p_1} \right) \Bigg) \\
    &= (p_2-p_1)(p_2 \cdot D_1(\mu,p_1,w_3) + D_0(\mu,p_1,w_3))
\end{align*}
\end{proof}

\subsection{Optimality of one of the three policies (Lemma~\ref{lemma:one_one_the_policies_pi_1_pi_2_and_pi_12_is_optimal})}\label{appendix_subsec:proof_lemma_E11}

To prove Lemma~\ref{lemma:one_one_the_policies_pi_1_pi_2_and_pi_12_is_optimal} we first characterize the optimal policy through a set of optimal regret functions in Lemma~\ref{lemma:any_policy_which_minimizes_the_Q_function_is_optimal}, which is similar to Lemma~\ref{lemma:epoch_optimal_policy_is_optimal}. To do so, for a history $\mathcal{H} = \{(S_i, Y_i, w_{Y_i})\}_{i=1}^{t-1}$, we define $\mathcal{C}(\mathcal{H}) = \{(i, w_i): i \in I(\mathcal{H})\}$ to be the \textit{condensed history} consisting of the known products and their respective attraction parameters given $\mathcal{H}$. The set of known products and the attraction parameter of each product given a history $\mathcal{H}$ depend on $\mathcal{H}$ only through the condensed history $\mathcal{C}(\mathcal{H})$. As a result, the expected ex-post optimum, $\textsc{OPT}(\mathcal{H})$, the revenue of a set $S$ given $\mathcal{H}$, $\textsc{Rev}(S;\mathcal{H})$, and the epoch regret of a set $S$ given $\mathcal{H}$, $\textsc{EpochReg}(S;\mathcal{H})$, depend on $\mathcal{H}$ only through $\mathcal{C}(\mathcal{H})$ and we let 
$$\textsc{OPT}(\mathcal{C}(\mathcal{H})) \coloneq \textsc{OPT}(\mathcal{H}) \text{,}\quad  \textsc{Rev}(S, \mathcal{C}(\mathcal{H})) \coloneq \textsc{Rev}(S;\mathcal{H}) \text{,}\quad  \textsc{EpochReg}(S,\mathcal{C}(\mathcal{H})) \coloneq \textsc{EpochReg}(S;\mathcal{H}).$$
The sets of known products $I(\mathcal{H})$ and assortments containing an unknown entrant $\mathcal{E}(\mathcal{H})$ depend on $\mathcal{H}$ only through the condensed history; as a result $I(\mathcal{C}(\mathcal{H})) \coloneq I(\mathcal{H})$ and $\mathcal{E}(\mathcal{C}(\mathcal{H})) \coloneq \mathcal{E}(\mathcal{H})$. 

We define optimal regret functions $V(\mathcal{C})$ and $Q(S;\mathcal{C})$ for any condensed history $\mathcal{C}$ and assortment $S \in \mathcal{E}(\mathcal{C})$ containing an unknown entrant, with a recursive definition based on the number of unknown entrants remaining in $\mathcal{C}$. If there are no unknown entrants given $\mathcal{C}$, set $V(\mathcal{C}) \coloneqq 0$. Suppose we have defined $V(\mathcal{C})$ and $Q(S;\mathcal{C})$ for any condensed history $\mathcal{C}$ with at most $k$ remaining unknown entrants and $S \in \mathcal{E}(\mathcal{C})$. A condensed history $\mathcal{C}$ is \textit{terminal} if $\textsc{OPT}(\mathcal{C}) = \max\limits_{S \subseteq I(\mathcal{C}), |S| \leq c} \textsc{Rev}(S,\mathcal{C})$. For any terminal $\mathcal{C}$ with $k+1$ unknown entrants set $Q(S,\mathcal{C}) \coloneq 0$ for all $S \in \mathcal{E}(\mathcal{C})$ and $V(\mathcal{C}) \coloneq 0$. Let $S^{\textsc{u}}(\mathcal{C})$ be the subset of unknown entrants in $S$ given a condensed history $\mathcal{C}$. For any non-terminal condensed history $\mathcal{C}$ with $k + 1$ unknown entrants and $S \in \mathcal{E}(\mathcal{C})$,  $$Q(S;\mathcal{C}) \coloneqq \textsc{EpochReg}(S,\mathcal{C}) + \sum_{i \in S^{\textsc{u}}(\mathcal{C})} \frac{\tildew_i}{ \sum_{j \in S^{\textsc{u}}(\mathcal{C})} \tildew_j} \expect_{w_i \sim \mathcal{F}_i}[ V(\mathcal{C} \cup (i, w_i))] \quad \text{and} \quad V(\mathcal{C}) \coloneqq \min_{S \in \mathcal{E}(\mathcal{C})} Q(S,\mathcal{C}).$$

The following lemma (proven in Appendix~\ref{appendix_subsec:proof_lemma_4.1_het_priors}) establishes that a policy is optimal if it selects the assortment $S \in \mathcal{E}(\mathcal{C})$ minimizing $Q(S;\mathcal{C}(\mathcal{H}))$ if $\mathcal{H}$ is not terminal and the revenue-maximizing assortment otherwise. This lemma is the analogue of Lemma~\ref{lemma:epoch_optimal_policy_is_optimal} in this heterogeneous priors' model and the optimal regret function $Q(S;\mathcal{C}(\mathcal{H}))$ has a similar role as the epoch regret $\textsc{EpochReg}(S;\mathcal{H})$ in Lemma~\ref{lemma:epoch_optimal_policy_is_optimal}.

\begin{lemma}\label{lemma:any_policy_which_minimizes_the_Q_function_is_optimal}
    A policy $\pi^{\star}$ is optimal if for any history $\mathcal{H}$,
    \begin{itemize}
    \item $\pi^{\star}(\mathcal{H}) \in \argmin\limits_{S \in \mathcal{E}(\mathcal{H})}Q(S;\mathcal{C}(\mathcal{H})) $ \text{ if $\mathcal{H}$ is not terminal, and}
    \item $\pi^{\star}(\mathcal{H}) \in \argmax\limits_{S \subseteq I(\mathcal{H}), |S| \leq c} \textsc{Rev}(S; \mathcal{H})$ otherwise.
\end{itemize}
\end{lemma}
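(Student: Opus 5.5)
The plan is to mirror the proof of Lemma~\ref{lemma:epoch_optimal_policy_is_optimal}. The only change is that the epoch regret is replaced by the recursively defined function $Q$, and the ``depends only on $J(\mathcal{H})$'' argument is replaced by the fact that everything depends only on the condensed history $\mathcal{C}(\mathcal{H})$.

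The first step establishes the identity $\textsc{Reg}(\pi^{\star};\mathcal{H}) = V(\mathcal{C}(\mathcal{H}))$ for every history $\mathcal{H}$. We argue by induction on the number of unknown entrants.
\begin{itemize}
\item For terminal histories, including the base case with no unknown entrants, Lemma~\ref{lemma:any_policy_has_a_nonnegative_regret_starting_from_a_terminal_history_and_pi_star_as_a_zero_regret_starting_from_a_terminal_history} applies verbatim, since its proof never uses identical priors. It gives $\textsc{Reg}(\pi^{\star};\mathcal{H})=0=V$.
\item For non-terminal $\mathcal{H}$, $\pi^{\star}$ keeps offering sets in $\argmin_S Q(S;\mathcal{C}(\mathcal{H}))$ until the first entrant purchase. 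Throughout this epoch $\mathcal{C}$ is frozen, so the minimizing set is fixed. By the MNL property, the purchased entrant is $i$ with probability $\tildew_i/\sum_{j\in S^{\textsc{u}}}\tildew_j$, independently of the epoch length, and its parameter is $w_i\sim\mathcal{F}_i$, independent of the past.
\item We decompose as in Lemma~\ref{lemma:the_regret_of_pi_star_can_be_decomposed_into_the_first_epoch_regret_and_the_future_regret_general_form}. The epoch contributes $\textsc{EpochReg}(S,\mathcal{C})$, computed via the $q(S)\expect[N(S)]=1$ bookkeeping. Applying the induction hypothesis to the continuation then yields exactly $Q(S;\mathcal{C})$, which is minimal by the choice of $S$.
\item If $\pi^{\star}$ mixes among several minimizers, the same weighting argument still gives $V$, because the purchase identity and the future term are identical for every minimizer.
\end{itemize}
Along the way we also record that $V$ is bounded below by $-m$ and above by a finite constant, exactly as in Lemma~\ref{lemma:for_any_history_with_k_unknown_entrants_the_regret_of_pi_star_is_lower_bounded_and_upper_bounded_by_quantities_which_depend_on_k}, since those epoch-regret bounds are prior-free.

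The second step shows $\textsc{Reg}(\pi;\mathcal{H})\ge V(\mathcal{C}(\mathcal{H}))$ for an arbitrary policy $\pi$, again by induction on the number of unknown entrants. Terminal histories are handled by Lemma~\ref{lemma:any_policy_has_a_nonnegative_regret_starting_from_a_terminal_history_and_pi_star_as_a_zero_regret_starting_from_a_terminal_history}. For a non-terminal history, let $Z$ be the first entrant purchase, and split the regret into regret before $Z$ plus the continuation. By the induction hypothesis, the continuation is at least $V$ of the updated condensed history.
\begin{itemize}
\item If $\expect[Z]=\infty$, Case~1 of Lemma~\ref{lemma:epoch_regret_of_any_policy_is_at_least_the_minimum_epoch_regret_over_fixed_assortments} gives infinite regret, and the lower bound $-m$ on $V$ shows the total is $+\infty$.
\item If $\expect[Z]<\infty$, we write the total as $\sum_S \expect[N(S)]\bigl(\textsc{OPT}-\textsc{Rev}(S)\bigr)$ over all offered $S$. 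Sets without entrants contribute nonnegatively, because the history is non-terminal.
\item For each entrant-containing $S$, each offer of $S$ produces a purchase of entrant $i$ with probability $\tildew_i/(\sum_{j\in S} w_j+w_0)$. Hence the future term equals $\sum_S \expect[N(S)] \sum_{i\in S^{\textsc{u}}} \frac{\tildew_i}{\sum_{j\in S} w_j+w_0}\,\expect[V(\mathcal{C}\cup(i,w_i))]$.
\item Combining, the total is $\sum_S q(S)\expect[N(S)]\,Q(S;\mathcal{C})$ with $\sum_S q(S)\expect[N(S)]=1$. This is therefore at least $\min_S Q=V$.
\end{itemize}

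The main obstacle is the last regrouping in the second step: the per-round purchase identity has to be justified by a Wald/optional-stopping argument, since $N(S)$ depends adaptively on the past. We handle it by writing each indicator of ``$S$ offered at round $t$, $t\le Z$, and entrant $i$ bought'' and taking conditional expectations given $\mathcal{H}_t$. The continuation term is bounded below by the finite constant, which lets us exchange the sums.
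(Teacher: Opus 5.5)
Your proposal is correct and matches the paper's proof essentially step for step. The paper also proves $\textsc{Reg}(\pi^{\star};\mathcal{H}) = V(\mathcal{C}(\mathcal{H}))$ by induction on the number of unknown entrants, records $-m \leq V \leq m/\underline{h}$, and then shows $\textsc{Reg}(\pi;\mathcal{H}) \geq V(\mathcal{C}(\mathcal{H}))$ for arbitrary $\pi$. It uses the same two cases $\expect[Z]=\infty$ and $\expect[Z]<\infty$, with weights $p(S;\mathcal{H}) = q(S;\mathcal{H})\expect[N(S)]$ summing to one.

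One wording fix: when $\pi^{\star}$ mixes among minimizers, what they share is the common value $Q(S;\mathcal{C}(\mathcal{H})) = V(\mathcal{C}(\mathcal{H}))$. The future term itself is not shared, since it depends on which entrants are in $S^{\textsc{u}}$. The identity $\sum_{S} p(S;\mathcal{H})\, Q(S;\mathcal{C}(\mathcal{H})) = V(\mathcal{C}(\mathcal{H}))$ only needs the common value of $Q$.
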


\begin{proof}[Proof of Lemma~\ref{lemma:one_one_the_policies_pi_1_pi_2_and_pi_12_is_optimal}.]
Any policy which (i) offers the same assortment until an unknown entrant is purchased, (ii) always explores at least one unknown entrant until all entrants are purchased, and (iii) offers unknown entrants with the most attractive known products is $\pi^1$, $\pi^2$, or $\pi^{1,2}$. Let $\pi^{\star}$ be a policy which satisfies the conditions of Lemma~\ref{lemma:any_policy_which_minimizes_the_Q_function_is_optimal} and satisfies (i), i.e., it offers the same assortment until an unknown entrant is purchased. To prove the lemma it suffices to show that $\pi^{\star}$ explores at least one unknown entrant until all entrants have been purchased and that $\pi^{\star}$ always complements the unknown entrants with the most attractive known products.

\paragraph{(ii) $\pi^{\star}$ explores at least one unknown entrant until all entrants have been purchased. } Given that the upsides of both entrants $\frac{\mu}{p_1}$ and $\frac{\mu}{p_2}$ are greater than the the most attractive incumbent's attraction parameter $w_3$, a history is terminal if and only if both entrants have been purchased. Thus, $\pi^{\star}$ explores at least one entrant until both entrants have been purchased.  
\paragraph{(iii) $\pi^{\star}$ explores the unknown entrants with the most attractive known products. }
For any history $\mathcal{H}$, $\pi^{\star}(\mathcal{H}) \in \argmin\limits_{S \in \mathcal{E}(\mathcal{H})}Q(S;\mathcal{C}(\mathcal{H}))$. Let $S^{K}$ and $S^{U}$ be the known and unknown products in  $\pi^{\star}(\mathcal{H})$ respectively. 
Given that the to-go regret does not depend on $S^{K}$, the minimization problem reduces to minimizing $\textsc{EpochReg}(S;\mathcal{C}(\mathcal{H}))$. When a known product $i \in S^{K}$ is purchased a regret of $\textsc{OPT}-1$ is incurred; this occurs $\frac{w_i(\mathcal{H})}{\sum_{i \in S^{U}}w_i(\mathcal{H})}$ rounds in expectation. When the outside option is chosen a regret of $\textsc{OPT}$ is incurred; this occurs $\frac{w_0}{\sum_{i \in S^{U}}w_i(\mathcal{H})}$ rounds in expectation. When an unknown entrant is purchased a regret of $\textsc{OPT}-1$ is incurred; this happens once. As result, the epoch regret is
$$\textsc{EpochReg}(S,\mathcal{C}(\mathcal{H})) = (\textsc{OPT}-1)\frac{\sum_{i \in S^K} w_i(\mathcal{H})}{\sum_{i \in S^U} w_i(\mathcal{H})} + \textsc{OPT}\frac{w_0}{\sum_{i \in S^U} w_i(\mathcal{H})} + (\textsc{OPT}-1).$$
Thus minimizing $\textsc{EpochReg}(S,\mathcal{C}(\mathcal{H}))$ conditioned on $S^{U}$ is equivalent to maximizing the sum of the known products' attraction parameters $\sum_{i \in S^K} w_i(\mathcal{H})$, i.e., selecting the most attractive $c-|S^U|$ known products. 
\end{proof}

\subsection{Optimality of stationary and deterministic policies (Lemma~\ref{lemma:any_policy_which_minimizes_the_Q_function_is_optimal})}\label{appendix_subsec:proof_lemma_4.1_het_priors}

To prove Lemma~\ref{lemma:any_policy_which_minimizes_the_Q_function_is_optimal} we need several auxiliary lemmas similar to the proof of Lemma~\ref{lemma:epoch_optimal_policy_is_optimal}. Lemma~\ref{lemma:any_policy_has_a_nonnegative_regret_starting_from_a_terminal_history_and_pi_star_as_a_zero_regret_starting_from_a_terminal_history} continues to hold for the policy $\pi^{\star}$ in Lemma~\ref{lemma:any_policy_which_minimizes_the_Q_function_is_optimal} and its proof is the same. The following lemma (proven in Appendix~\ref{appendix_subsec:proof_lemma_B2B3_het_priors}) establishes that the regret of $\pi^{\star}$ can be expressed through the optimal regret function $V$.  This is analogous to Lemma~\ref{lemma:the_regret_of_pi_star_depends_on_H_only_through_J_H} as the regret depends on $\mathcal{H}$ through $\mathcal{C}(\mathcal{H})$ and to Lemma \ref{lemma:the_regret_of_pi_star_can_be_decomposed_into_the_first_epoch_regret_and_the_future_regret} as $V(\mathcal{C}(\mathcal{H}))$ corresponds to a one-epoch-lookahead decomposition.

\begin{lemma}\label{lemmaB2:different_priors}
    For any history $\mathcal{H}$, $\textsc{Reg}(\pi^{\star};\mathcal{H}) = V(\mathcal{C}(\mathcal{H}))$.
\end{lemma}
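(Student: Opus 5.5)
We prove $\textsc{Reg}(\pi^{\star};\mathcal{H}) = V(\mathcal{C}(\mathcal{H}))$ by induction on the number $k$ of unknown entrants remaining given $\mathcal{H}$, mirroring the proofs of Lemma~\ref{lemma:the_regret_of_pi_star_depends_on_H_only_through_J_H} and Lemma~\ref{lemma:the_regret_of_pi_star_can_be_decomposed_into_the_first_epoch_regret_and_the_future_regret_general_form}. For $k=0$, or more generally whenever $\mathcal{H}$ is terminal (which is a property of $\mathcal{C}(\mathcal{H})$ alone, since $\textsc{OPT}$ and $\textsc{Rev}$ depend only on the condensed history), Lemma~\ref{lemma:any_policy_has_a_nonnegative_regret_starting_from_a_terminal_history_and_pi_star_as_a_zero_regret_starting_from_a_terminal_history} gives $\textsc{Reg}(\pi^{\star};\mathcal{H})=0$. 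This matches $V(\mathcal{C}(\mathcal{H}))=0$ by definition.

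For the induction step, let $\mathcal{H}$ be non-terminal with $k+1$ unknown entrants. Let $Z$ be the first round at which $\pi^{\star}$ purchases an unknown entrant. Since $\pi^{\star}$ always offers some assortment in $\mathcal{E}(\mathcal{H})$, each round has entrant-purchase probability at least $\min_i \tildew_i/((c+1)w_{\max})$. Hence $\expect[Z]<\infty$ and $Z<\infty$ almost surely. We then split the regret at $Z$:
\[
\textsc{Reg}(\pi^{\star};\mathcal{H}) = \expect\Big[\sum_{t=1}^{Z}\textsc{OPT}(\mathcal{H}_t)-\textsc{Rev}(S_t;\mathcal{H}_t)\Big] + \expect\big[\textsc{Reg}(\pi^{\star};\mathcal{H}_{Z+1})\big].
\]
Before $Z$ the condensed history is frozen at $\mathcal{C}:=\mathcal{C}(\mathcal{H})$. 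So $\pi^{\star}$ only plays assortments in $\mathcal{A}:=\argmin_{S\in\mathcal{E}(\mathcal{C})}Q(S;\mathcal{C})$, and all per-round quantities equal their values at $\mathcal{C}$. The history $\mathcal{H}_{Z+1}$ has $k$ unknown entrants, so the induction hypothesis gives $\textsc{Reg}(\pi^{\star};\mathcal{H}_{Z+1}) = V(\mathcal{C}\cup(Y_Z,w_{Y_Z}))$.

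The crux is to write both terms as sums over the assortments $S\in\mathcal{A}$ weighted by $N(S)$, the number of rounds $S$ is offered before or at $Z$. Let $q(S)=\sum_{i\in S^{\textsc{u}}}\tildew_i/(\sum_{j\in S}w_j+w_0)$ be the probability that some unknown entrant is purchased in a round where $S$ is offered. In such a round, the specific entrant $i$ is purchased with probability $q(S)\cdot \tildew_i/\sum_{j\in S^{\textsc{u}}}\tildew_j$, and $w_i\sim\mathcal{F}_i$ is independent of everything before its revelation. We then have:
\begin{itemize}
\item The first term equals $\sum_{S\in\mathcal{A}}(\textsc{OPT}(\mathcal{C})-\textsc{Rev}(S,\mathcal{C}))\expect[N(S)] = \sum_{S\in\mathcal{A}} q(S)\expect[N(S)]\cdot \textsc{EpochReg}(S,\mathcal{C})$, using $\textsc{EpochReg}(S,\mathcal{C}) = (\textsc{OPT}(\mathcal{C})-\textsc{Rev}(S,\mathcal{C}))/q(S)$.
\item By a Wald-type counting argument, the second term equals $\sum_{S\in\mathcal{A}} q(S)\expect[N(S)]\sum_{i\in S^{\textsc{u}}}\frac{\tildew_i}{\sum_j\tildew_j}\expect_{w_i\sim\mathcal{F}_i}[V(\mathcal{C}\cup(i,w_i))]$.
\end{itemize}
Adding the two gives $\sum_{S\in\mathcal{A}} q(S)\expect[N(S)]\,Q(S;\mathcal{C})$. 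Every $S\in\mathcal{A}$ has $Q(S;\mathcal{C})=V(\mathcal{C})$, and $\sum_{S\in\mathcal{A}} q(S)\expect[N(S)]=1$ because exactly one unknown-entrant purchase occurs by $Z$. Hence the total is $V(\mathcal{C})$, which is the claim.

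The main obstacle is the Wald-type justification of the future-regret term. It requires conditioning on the round-$Z$ assortment and establishing the independence of $w_{Y_Z}$ from the past. I would handle this by summing over rounds $t$ the indicator that $S_t=S$, $t\le Z$, and $Y_t=i$. The events $\{S_t=S, t\le Z\}$ are determined by the history before round $t$, so the conditional probability of $Y_t=i$ given that history is $q(S)\tildew_i/\sum_j\tildew_j$. This gives $\prob[S_Z=S,Y_Z=i]=\expect[N(S)]\,q(S)\tildew_i/\sum_j\tildew_j$.
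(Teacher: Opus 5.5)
Your proposal is correct and follows essentially the same route as the paper: induction on the number of unknown entrants, with the terminal case handled via Lemma~\ref{lemma:any_policy_has_a_nonnegative_regret_starting_from_a_terminal_history_and_pi_star_as_a_zero_regret_starting_from_a_terminal_history}. In the non-terminal case you split at the first entrant purchase $Z$, write both the epoch term and the induction-hypothesis future term as sums over $\mathcal{A}$ weighted by $q(S)\expect[N(S)]$, and collapse to $\sum_{S} q(S)\expect[N(S)]\,Q(S;\mathcal{C})=V(\mathcal{C})$. Your round-by-round derivation of $\prob[S_Z=S,Y_Z=i]=\expect[N(S)]\,q(S)\tildew_i/\sum_{j}\tildew_j$ makes explicit a step that the paper simply asserts.
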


The following lemma (proven in Appendix~\ref{appendix_subsec:lemmaB4_her_priors_proof}) establishes that the regret of any policy $\pi$ given a history $\mathcal{H}$ is lower bounded by $V(\mathcal{C}(\mathcal{H}))$ and is the analogue of Lemma~\ref{lemma:epoch_regret_of_any_policy_is_at_least_the_minimum_epoch_regret_over_fixed_assortments}. Note that this is stronger than Lemma~\ref{lemma:epoch_regret_of_any_policy_is_at_least_the_minimum_epoch_regret_over_fixed_assortments} since it establishes the inequality in both cases. This is why an analogue of Lemma~\ref{lemma:for_any_history_with_k_unknown_entrants_the_regret_of_pi_star_is_lower_bounded_and_upper_bounded_by_quantities_which_depend_on_k} is not needed. A lemma analogous to Lemma~\ref{lemma:for_any_history_with_k_unknown_entrants_the_regret_of_pi_star_is_lower_bounded_and_upper_bounded_by_quantities_which_depend_on_k} is used in the proof of Lemma~\ref{lemmaB4:different_priors}. 

\begin{lemma}\label{lemmaB4:different_priors}
    For any policy $\pi$ and non-terminal history $\mathcal{H}$, $\textsc{Reg}(\pi;\mathcal{H}) \geq V(\mathcal{C}(\mathcal{H}))$.
\end{lemma}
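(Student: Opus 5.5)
The claim is that for any policy $\pi$ and any non-terminal history $\mathcal{H}$, $\textsc{Reg}(\pi;\mathcal{H}) \geq V(\mathcal{C}(\mathcal{H}))$. I would prove it by induction on the number $k$ of unknown entrants remaining given $\mathcal{H}$, following the proof of Lemma~\ref{lemma:epoch_optimal_policy_is_optimal}. There are three ingredients.

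\textbf{Ingredients.}
\begin{enumerate}
\item The first-epoch bound of Lemma~\ref{lemma:epoch_regret_of_any_policy_is_at_least_the_minimum_epoch_regret_over_fixed_assortments}, in a form that tracks which entrant is purchased.
\item The inductive hypothesis, applied at the random time $Z=Z^{\pi,\mathcal{H}}$ of the first entrant purchase.
\item A fractional (linear-programming-style) averaging argument over assortments, which turns a mixture of assortments into a lower bound by $\min_S Q(S;\mathcal{C})$.
\end{enumerate}

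\textbf{Base case and terminal continuations.} The base case is vacuous, since a history with no unknown entrants is terminal. In the inductive step, whenever a continuation history after $Z$ is terminal, Lemma~\ref{lemma:any_policy_has_a_nonnegative_regret_starting_from_a_terminal_history_and_pi_star_as_a_zero_regret_starting_from_a_terminal_history} gives regret at least $0 = V$. So the inductive hypothesis extends to all histories with at most $k$ unknown entrants, terminal or not.

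\textbf{Inductive step.} Decompose $\textsc{Reg}(\pi;\mathcal{H})$ as in \eqref{eq:regret_pi_decomposition_epoch_and_after_epoch}: the regret accrued up to $Z$, plus the expected continuation regret from $\mathcal{H}_Z\cup(S_Z,Y_Z,w_{Y_Z})$. The continuation history has condensed form $\mathcal{C}(\mathcal{H})\cup(Y_Z,w_{Y_Z})$. Since $w_{Y_Z}\sim\mathcal{F}_{Y_Z}$ independently of everything observed before, the inductive hypothesis bounds the continuation below by $\expect[\expect_{w\sim\mathcal{F}_{Y_Z}}V(\mathcal{C}(\mathcal{H})\cup(Y_Z,w))]$.

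Before $Z$, all state variables are frozen at their $\mathcal{H}$ values. Let $N(S)$ be the number of times $S$ is offered before $Z$, and $q(S)=\sum_{i\in S^{\textsc{u}}}\tildew_i/(\sum_{j\in S}w_j+w_0)$ the per-round entrant-purchase probability. Then:
\begin{itemize}
\item $\prob[Y_Z=i] = \sum_S \expect[N(S)]\,\tildew_i/(\sum_{j\in S}w_j+w_0)$.
\item The first-epoch regret is $\sum_S(\textsc{OPT}-\textsc{Rev}(S))\expect[N(S)]$.
\end{itemize}
Set $x_S=q(S)\expect[N(S)]$ for $S\in\mathcal{E}(\mathcal{H})$. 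Then:
\begin{itemize}
\item $\sum_S x_S=\prob[Z<\infty]$.
\item The epoch part is at least $\sum_S x_S\,\textsc{EpochReg}(S)$, dropping the nonnegative terms from $S\notin\mathcal{E}$ because $\mathcal{H}$ is non-terminal.
\item $\prob[Y_Z=i]=\sum_S x_S\,\tildew_i/\sum_{j\in S^{\textsc{u}}}\tildew_j$.
\end{itemize}
Hence the total lower bound is $\sum_S x_S\,Q(S;\mathcal{C}(\mathcal{H}))$. When $\expect[Z]<\infty$ we have $\sum x_S=1$, and this is at least $\min_S Q=V$.

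\textbf{Main obstacle: $\expect[Z]=\infty$.} Here the epoch part is $+\infty$, because some $S\notin\mathcal{E}$ with strictly positive per-round regret is offered infinitely often in expectation. I would also need the continuation terms to be bounded below, and bounded below by a quantity that does not depend on $\pi$. For this I would prove an analogue of Lemma~\ref{lemma:for_any_history_with_k_unknown_entrants_the_regret_of_pi_star_is_lower_bounded_and_upper_bounded_by_quantities_which_depend_on_k}: by the same induction, $V(\mathcal{C})\geq -k$ and $V(\mathcal{C})<\infty$. The lower bound uses the per-epoch bound \eqref{ineq:epoch_reg_lower_bound}, which carries over verbatim with heterogeneous $\tildew_i$. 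With this, $\textsc{Reg}(\pi;\mathcal{H})=\infty\geq V$. The case $\prob[Z<\infty]<1$ but $\expect[Z]$ finite cannot occur, so these two cases cover everything and close the induction.
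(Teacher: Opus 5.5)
Your proposal is correct and follows the paper's proof essentially step for step. The steps are: induction on the number of unknown entrants; splitting the regret at the first entrant purchase $Z$; the weights $q(S)\,\mathbb{E}[N(S)]$, which merge the epoch and continuation terms into $\sum_S p(S)\,Q(S;\mathcal{C}(\mathcal{H}))\ge V(\mathcal{C}(\mathcal{H}))$; and the $\mathbb{E}[Z]=\infty$ case, closed via the bounds $-k\le V<\infty$, which is exactly the paper's Lemma~\ref{lemmaB3:different_priors}.
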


\begin{proof}[Proof of Lemma~\ref{lemma:any_policy_which_minimizes_the_Q_function_is_optimal}.]
    For any policy $\pi$ it suffices to show $\textsc{Rev}(\pi^{\star};\mathcal{H}) \leq \textsc{Rev}(\pi;\mathcal{H})$ for any history $\mathcal{H}$. If $\mathcal{H}$ is terminal, then Lemma~\ref{lemma:any_policy_has_a_nonnegative_regret_starting_from_a_terminal_history_and_pi_star_as_a_zero_regret_starting_from_a_terminal_history} implies that $\textsc{Rev}(\pi^{\star};\mathcal{H}) = 0$ and $\textsc{Rev}(\pi;\mathcal{H}) \geq 0$. If $\mathcal{H}$ is non-terminal, then 
    $\textsc{Rev}(\pi^{\star};\mathcal{H}) = V(\mathcal{C}(\mathcal{H}))  \leq \textsc{Rev}(\pi;\mathcal{H})$; the equality uses Lemma~\ref{lemmaB2:different_priors} and the inequality uses Lemma~\ref{lemmaB4:different_priors}. 
\end{proof}

\subsection{Optimal policy regret depends only on the condensed history (Lemma~\ref{lemmaB2:different_priors}).}\label{appendix_subsec:proof_lemma_B2B3_het_priors}

\begin{proof}[Proof of Lemma~\ref{lemmaB2:different_priors}.]
We prove that $\textsc{Reg}(\pi^{\star};\mathcal{H}) = V(\mathcal{C}(\mathcal{H}))$ by induction on the number of unknown entrants remaining in $\mathcal{H}$. 

\paragraph{Base Case: $\mathcal{H}$ contains no unknown entrants.} Then 
$\textsc{Reg}(\pi^{\star};\mathcal{H})  = 0$ by Lemma~\ref{lemma:any_policy_has_a_nonnegative_regret_starting_from_a_terminal_history_and_pi_star_as_a_zero_regret_starting_from_a_terminal_history} as $\mathcal{H}$ is terminal and $V(\mathcal{C}(\mathcal{H})) = 0$ by the definition of $V$.

\paragraph{Induction Hypothesis:} $\textsc{Reg}(\pi^{\star};\mathcal{H}') = V(\mathcal{C}(\mathcal{H}'))$ for any history $\mathcal{H}'$ with at most $k$ unknown entrants.

\paragraph{Induction Step:} Let $\mathcal{H}$ be a history with $k+1$ unknown entrants. If $\mathcal{H}$ is terminal, then $\textsc{Reg}(\pi^{\star};\mathcal{H})  = 0$ by Lemma~\ref{lemma:any_policy_has_a_nonnegative_regret_starting_from_a_terminal_history_and_pi_star_as_a_zero_regret_starting_from_a_terminal_history} and $V(\mathcal{C}(\mathcal{H})) = 0$ by the definition of $V$. Suppose $\mathcal{H}$ is not terminal. Let $Z^{\mathcal{H}}_{\star}$ be the first round that an unknown entrant is purchased. Given that that $\pi^{\star}$ offers an unknown entrant at every round until $Z^{\mathcal{H}}_{\star}$ and an unknown entrant is purchased with probability at least $\frac{\min_{i \in \{1, \ldots, m\}} \tildew_i}{(c+1) w_{\textsc{max}}}$ where $w_{\textsc{max}} = \max\{w_i(\mathcal{H}): i \in \mathcal{N} \cup \{0\}\}$, then $\expect[Z^{\mathcal{H}}_{\star}] \leq \frac{(c+1) w_{\textsc{max}}}{\min_{i \in \{1, \ldots, m\}} \tildew_i}$. Thus, $\prob[Z^{\mathcal{H}}_{\star} < \infty] = 1$. As a result, the regret of $\pi^{\star}$ is expressed as
\begin{equation*}\label{eq:regret_pi_star_decompostion_homogeneous_priors}
    \textsc{Reg}(\pi^{\star};\mathcal{H}) = \underbrace{\expect \left[ \sum_{t=1}^{Z^{\mathcal{H}}_{\star}} \textsc{OPT}(\mathcal{H}_t) - \textsc{Rev}(S_t;\mathcal{H}_t) \middle|\mathcal{H}_1 = \mathcal{H}\right]}_{=\textsc{EpochReg}^{\star}} + \underbrace{\expect\left[\textsc{Reg}(\pi^{\star};\mathcal{H}_{Z^{\mathcal{H}}_{\star}} \cup (S_{Z^{\mathcal{H}}_{\star}}, y_{Z^{\mathcal{H}}_{\star}}, w_{Y_{Z^{\mathcal{H}}_{\star}}})) \middle|\mathcal{H}_1 = \mathcal{H} \right]}_{= \textsc{FutureReg}^{\star}} 
\end{equation*}

To express the first term, let $\mathcal{A}(\mathcal{H}) = \argmin\limits_{S \in \mathcal{E}(\mathcal{H})}Q(S;\mathcal{C}(\mathcal{H}))$ be the set of assortments minimizing the optimal regret function, $q(S;\mathcal{H}) = \frac{\sum_{i \in S \setminus I(\mathcal{H})} w_i(\mathcal{H})}{\sum_{i \in S} w_i(\mathcal{H}) + w_0}$ be the probability that an unknown entrant is purchased given $S$, and $N(S)$ be the number of rounds $S$ is offered until $Z^{\mathcal{H}}_{\star}$. Similar to equation \eqref{eq:expressing_regret_until_first_unknown_entrant_purchase} in the proof of Lemma~\ref{lemma:the_regret_of_pi_star_can_be_decomposed_into_the_first_epoch_regret_and_the_future_regret_general_form} (Appendix~\ref{appendix_subsec:proof_lemma_B9}), the regret until $Z^{\mathcal{H}}_{\star}$ can be expressed as
\begin{align*}\label{eq:regret_during_epoch}
   \textsc{EpochReg}^{\star}
    = \sum_{S \in \mathcal{A}(\mathcal{H})} \textsc{EpochReg}(S,\mathcal{C}(\mathcal{H}))  \underbrace{q(S;\mathcal{H})\expect[ N(S) |\mathcal{H}_1 = \mathcal{H}]}_{ = p(S; \mathcal{H})}
\end{align*}
The quantity $p(S; \mathcal{H}) = q(S;\mathcal{H})\expect[ N(S) |\mathcal{H}_1 = \mathcal{H}]$ is the expected number of times an unknown entrant is purchased following an offering from $S$ until $Z^{\mathcal{H}}_{\star}$ as $S$ is offered $\expect[ N(S) |\mathcal{H}_1 = \mathcal{H}]$ rounds in expectation and every time $S$ is offered, an unknown entrant is purchased independently with probability $q(S;\mathcal{H})$. Given that an unknown entrant is purchased only once during the epoch, $p(S;\mathcal{H})$ is the probability that an unknown entrant is purchased following an offering from $S$ during the epoch. Thus, the future regret is expressed as 
\begin{align*}
    \textsc{FutureReg}^{\star} &= \expect[V(\mathcal{C}(\mathcal{H}) \cup (Y_{Z^{\mathcal{H}}_{\star}}, w_{Y_{Z^{\mathcal{H}}_{\star}}}))|\mathcal{H}_1 = \mathcal{H}]\\
    &= \sum_{S \in \mathcal{A}(\mathcal{H})} p(S;\mathcal{H}) \sum_{i \in S^{\textsc{u}}(\mathcal{C}(\mathcal{H}))} \frac{\tildew_i}{\sum_{j \in S^{\textsc{u}}(\mathcal{C}(\mathcal{H})))} \tildew_j}\expect_{w_i \sim \mathcal{F}_i} [V(\mathcal{C}(\mathcal{H}) \cup (i, w_i))].
\end{align*}
The first equality uses the induction hypothesis on $\mathcal{H}' = \mathcal{H}_{Z^{\mathcal{H}}_{\star}} \cup (S_{Z^{\mathcal{H}}_{\star}}, y_{Z^{\mathcal{H}}_{\star}}, w_{Y_{Z^{\mathcal{H}}_{\star}}})$, noting that after the first purchase of the unknown entrant at round $Z^{\mathcal{H}}_{\star}$ the condensed history is augmented by $(Y_{Z^{\mathcal{H}}_{\star}}, w_{Y_{Z^{\mathcal{H}}_{\star}}})$. The second equality decomposes expected future regret over the chosen assortment $S \in \mathcal{A}(\mathcal{H})$ and the purchased entrant $i \in S^{\textsc{u}}(\mathcal{C}(\mathcal{H}))$. Therefore, the regret of $\pi^{\star}$ is expressed as
\begin{align*}
    &\textsc{Reg}(\pi^{\star};\mathcal{H}) = \textsc{EpochReg}^{\star}+ \textsc{FutureReg}^{\star}\\
    &=  \sum_{S \in \mathcal{A}(\mathcal{H})} p(S;\mathcal{H}) \Bigg[\textsc{EpochReg}(S,\mathcal{C}(\mathcal{H})) + \sum_{i \in S^{\textsc{u}}(\mathcal{C}(\mathcal{H}))} \frac{\tildew_i}{\sum_{j \in S^{\textsc{u}}(\mathcal{C}(\mathcal{H})))} \tildew_j}\expect_{w_i \sim \mathcal{F}_i} [V(\mathcal{C}(\mathcal{H}) \cup (i, w_i))] \Bigg]\\
    &=  \sum_{S \in \mathcal{A}(\mathcal{H})} p(S;\mathcal{H}) Q(S; \mathcal{C}(\mathcal{H})) = \left[  \sum_{S \in \mathcal{A}(\mathcal{H})} p(S;\mathcal{H})\right]\min_{S \in \mathcal{E}(\mathcal{C}(\mathcal{H}))} Q(S; \mathcal{C}(\mathcal{H})) = \min_{S \in \mathcal{E}(\mathcal{C}(\mathcal{H}))} Q(S; \mathcal{C}(\mathcal{H}))  = V(\mathcal{C}(\mathcal{H})).
\end{align*}
\end{proof}

\subsection{Regret of any policy is lower bounded by the value function (Lemma~\ref{lemmaB4:different_priors})}\label{appendix_subsec:lemmaB4_her_priors_proof}
To prove Lemma~\ref{lemmaB4:different_priors}, the following lemma lower and upper bounds the optimal regret function $V(\mathcal{C})$ by constants independent of $\mathcal{C}$ and is the analogue of Lemma~\ref{lemma:for_any_history_with_k_unknown_entrants_the_regret_of_pi_star_is_lower_bounded_and_upper_bounded_by_quantities_which_depend_on_k}. Given that entrants have heterogeneous priors, instead of the reciprocal of the customer prior attraction parameter $h(\mathcal{F})$ the upper bound uses the reciprocal of the minimum customer prior attraction parameter $\underline{h} = \min_{i \in \{1, \ldots, m\}} \tildew_i$. 

\begin{lemma}\label{lemmaB3:different_priors}
     For any condensed history $\mathcal{C}$, $V(\mathcal{C}) \geq -m$ and  $V(\mathcal{C})\leq \frac{m}{\underline{h}}$.
\end{lemma}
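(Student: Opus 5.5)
The plan is to mirror the proof of Lemma~\ref{lemma:for_any_history_with_k_unknown_entrants_the_regret_of_pi_star_is_lower_bounded_and_upper_bounded_by_quantities_which_depend_on_k}, proving by induction on the number $k$ of unknown entrants remaining in $\mathcal{C}$ the stronger claim $-k \le V(\mathcal{C}) \le k/\underline{h}$. Since $k \le m$, this gives the lemma.

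\textbf{Base case and terminal histories.} If $k=0$, or if $\mathcal{C}$ is terminal, then $V(\mathcal{C})=0$ by definition, and both bounds hold trivially.

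\textbf{Induction step.} Take a non-terminal $\mathcal{C}$ with $k+1$ unknown entrants. By definition, $V(\mathcal{C})=\min_{S\in\mathcal{E}(\mathcal{C})} Q(S;\mathcal{C})$. Each $Q(S;\mathcal{C})$ equals $\textsc{EpochReg}(S,\mathcal{C})$ plus a convex combination, with weights $\tildew_i/\sum_{j\in S^{\textsc{u}}}\tildew_j$, of the terms $\expect_{w_i\sim\mathcal{F}_i}[V(\mathcal{C}\cup(i,w_i))]$. Every condensed history $\mathcal{C}\cup(i,w_i)$ has $k$ unknown entrants, so by the induction hypothesis the convex combination lies in $[-k, k/\underline{h}]$. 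It therefore suffices to show
\[
-1 \le \textsc{EpochReg}(S,\mathcal{C}) \le \frac{1}{\underline{h}} \qquad \text{for every } S\in\mathcal{E}(\mathcal{C}).
\]
Once this holds for every $S$, it holds for each $Q(S;\mathcal{C})$ after adding the continuation term, and hence for the minimum $V(\mathcal{C})$.

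\textbf{Epoch-regret bounds.} I would write $\textsc{EpochReg}(S,\mathcal{C})$ as the per-round regret times the expected epoch length $(\sum_{i\in S}w_i+1)/\sum_{i\in S^{\textsc{u}}}\tildew_i$, exactly as in \eqref{ineq:epoch_reg_upper_bound} and \eqref{ineq:epoch_reg_lower_bound}. For the upper bound, using $\textsc{OPT}\le 1$ makes the per-round regret at most $1/(\sum_{i\in S}w_i+1)$, so
\[
\textsc{EpochReg}(S,\mathcal{C}) \le \frac{1}{\sum_{i\in S^{\textsc{u}}}\tildew_i} \le \frac{1}{\underline{h}},
\]
because $S$ contains at least one unknown entrant and each has $\tildew_i\ge\underline{h}$. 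For the lower bound, I would use $\textsc{OPT}(\mathcal{C})\ge \textsc{Rev}(S^{\textsc{k}},\mathcal{C})$ together with the identity $\tfrac{A}{A+1}-\tfrac{B}{B+1}=\tfrac{A-B}{(A+1)(B+1)}$. This gives $\textsc{EpochReg}(S,\mathcal{C})\ge -1/(\sum_{i\in S^{\textsc{k}}}w_i+1)\ge -1$. Neither computation depends on the priors being homogeneous.

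\textbf{Main subtlety.} The hard part is the well-definedness of the recursion, not the bounds themselves. If the support of $\mathcal{F}_i$ is infinite, the expectations over $V(\mathcal{C}\cup(i,w_i))$ must be finite, and the uniform bounds from the induction hypothesis guarantee this. A second point to check is that a terminal $\mathcal{C}$ with $k+1$ unknowns gets $V=0$, which lies inside the claimed interval. Since the statement is an inequality, the interval $[-(k+1), (k+1)/\underline{h}]$ is closed under these operations and the induction goes through.
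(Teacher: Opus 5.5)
Your proposal is correct and follows the paper's proof essentially step for step. It inducts on the number of unknown entrants with the strengthened claim $-k \le V(\mathcal{C}) \le k/\underline{h}$, controls the continuation term by the induction hypothesis, and reduces everything to the per-epoch bound $-1 \le \textsc{EpochReg}(S,\mathcal{C}) \le 1/\underline{h}$, proved exactly as in \eqref{ineq:epoch_reg_upper_bound} and \eqref{ineq:epoch_reg_lower_bound}; your $\underline{h}$ (via $\sum_{i \in S^{\textsc{u}}} \tildew_i \ge \underline{h}$) is the right constant, whereas the paper's write-up states the inductive claim with $\overline{h}$, which is a typo.
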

\begin{proof}[Proof of Lemma~\ref{lemmaB3:different_priors}.]
We show by induction on $k$ that for any history $\mathcal{H}$ with $k$ unknown entrants remaining, 
\begin{equation}\label{ineq:upper_and_lower_bound_claim_for_any_k_het_priors}
    V(\mathcal{C}) \geq -k \quad \text{ and } \quad V(\mathcal{C}) \leq \frac{k}{\overline{h}}
\end{equation}
For the base case $k = 0$, $V(\mathcal{C}) = 0$ by definition. For the induction step, suppose \eqref{ineq:upper_and_lower_bound_claim_for_any_k_het_priors} holds for $k$. If $\mathcal{C}$ is terminal, then $V(\mathcal{C}) = 0$ which satisfies the conditions. Suppose $\mathcal{C}$ is not terminal. Then by definition
\begin{equation*}\label{eq:decomposition_of_regret_of_pi_star_given_H_diff_priors}
Q(S;\mathcal{C}) = \textsc{EpochReg}(S,\mathcal{C}) + \underbrace{\sum_{i \in S^{\textsc{u}}(\mathcal{C})} \frac{\tildew_i}{ \sum_{j \in S^{\textsc{u}}(\mathcal{C})} \tildew_j} \expect_{w_i \sim \mathcal{F}_i}[ V(\mathcal{C} \cup (i, w_i))]}_{=(\star)}.
\end{equation*}
By the induction hypothesis \eqref{ineq:upper_and_lower_bound_claim_for_any_k_het_priors} , $(\star) \geq -k$ and  $(\star)\leq \frac{k}{\overline{h}}$. Given that $V(\mathcal{C}) = \min_{S \in \mathcal{E}(\mathcal{H})} Q(S;\mathcal{C})$, to finish the proof it suffices to show  $\textsc{EpochReg}(S;\mathcal{C})  \geq -1$ and $\textsc{EpochReg}(S;\mathcal{C}) \leq \frac{1}{\overline{h}}$. The former inequality is shown in the same way as \eqref{ineq:epoch_reg_lower_bound} in Lemma~\ref{lemma:for_any_history_with_k_unknown_entrants_the_regret_of_pi_star_is_lower_bounded_and_upper_bounded_by_quantities_which_depend_on_k} (Appendix~\ref{appendix_subsec:regret_of_optimal_policy_is_bounded_below_and_above}). The latter is shown in the same way as \eqref{ineq:epoch_reg_upper_bound} in Lemma~\ref{lemma:for_any_history_with_k_unknown_entrants_the_regret_of_pi_star_is_lower_bounded_and_upper_bounded_by_quantities_which_depend_on_k} (Appendix~\ref{appendix_subsec:regret_of_optimal_policy_is_bounded_below_and_above}) where the last inequality is replaced with $\frac{1}{\sum_{i \in S^{U}} w_i(\mathcal{H})} \leq \frac{1}{\overline{h}}$.
\end{proof}

\begin{proof}[Proof of Lemma~\ref{lemmaB4:different_priors}. ]
We proceed by induction on the number of unknown entrants in~$\mathcal{H}$. 
    \paragraph{Base case.}  If $\mathcal{H}$ has no unknown entrants, $\textsc{Reg}(\pi;\mathcal{H}) \geq 0$ by Lemma~\ref{lemma:any_policy_has_a_nonnegative_regret_starting_from_a_terminal_history_and_pi_star_as_a_zero_regret_starting_from_a_terminal_history} and $V(\mathcal{C}(\mathcal{H})) = 0$ by definition. 
    
    \paragraph{Induction Hypothesis.} $\textsc{Reg}(\pi;\mathcal{H}') \geq V(\mathcal{C}(\mathcal{H}')) $ for any history $\mathcal{H}'$ with at most $k$ unknown entrants. 
    
    \paragraph{Induction Step.} Let $\mathcal{H}$ be a history with $k+1$ unknown entrants. If $\mathcal{H}$ is terminal, then $\textsc{Reg}(\pi;\mathcal{H}) \geq 0$ by Lemma~\ref{lemma:any_policy_has_a_nonnegative_regret_starting_from_a_terminal_history_and_pi_star_as_a_zero_regret_starting_from_a_terminal_history} and $V(\mathcal{C}(\mathcal{H})) = 0$ by definition. If $\mathcal{H}$ is not terminal, let $Z$ be the round at which the first unknown entrant is purchased and $N(S)$ be the number of rounds $S$ is offered until $Z$. Then
\begin{align}\label{ineq:regret_pi_lower_bound_different_priors}
      &\textsc{Reg}(\pi;\mathcal{H}) = \expect\left[\sum_{t=1}^{Z} \textsc{OPT}(\mathcal{H}_t) - \textsc{Rev}(S_t; \mathcal{H}_t) \middle|\mathcal{H}_1 = \mathcal{H}\right] \nonumber\\
      &\hspace{0.65in}+\expect \left[\textsc{Reg}(\pi;\mathcal{H}_{Z} \cup (S_{Z}, Y_{Z}, w_{Y_{Z}})) \middle|\mathcal{H}_1 = \mathcal{H}, Z< \infty \right] \prob[Z < \infty] \nonumber\\
&\geq \underbrace{\sum\limits_{\substack{S \subseteq \mathcal{N}, |S| \leq c}}(\textsc{OPT}(\mathcal{H}) - \textsc{Rev}(S; \mathcal{H})) \expect[N(S)|\mathcal{H}_1 = \mathcal{H}]}_{= \textsc{EpochReg}^{\pi}(\mathcal{H})}+\underbrace{\expect [V(\mathcal{C}(\mathcal{H}) \cup ( Y_{Z}, w_{Y_{Z}}))|\mathcal{H}_1 = \mathcal{H}]}_{=\textsc{FutureReg}^{\pi}(\mathcal{H})} \prob[Z < \infty] 
\end{align}
The first term of the inequality expresses the regret until the first entrant purchase using the same steps as in equation \eqref{eq:regret_until_Z_expression_as_sum_over_assortments} in the proof of Lemma~\ref{lemma:epoch_regret_of_any_policy_is_at_least_the_minimum_epoch_regret_over_fixed_assortments} (Appendix~\ref{appendix_subsec:proof_lemma_epoch_regret_of_any_policy_is_at_least_minimum_epoch_regret_of_static_assortment_policies}). The second term of the inequality applies the induction hypothesis $\mathcal{H}' = \mathcal{H}_{Z} \cup (S_{Z}, Y_{Z}, w_{Y_{Z}})$, noting that after the first purchase of the unknown entrant at round $Z$, the condensed history is augmented by $( Y_{Z}, w_{Y_{Z}})$.

\paragraph{Case 1: $\expect[Z] = +\infty$.} We follow the same steps as in Case 1 in the proof of Lemma~\ref{lemma:epoch_regret_of_any_policy_is_at_least_the_minimum_epoch_regret_over_fixed_assortments} we conclude $\textsc{EpochReg}^{\pi}(\mathcal{H}) = +\infty$. Given that $Z = \sum\limits_{S \subseteq \mathcal{N}, |S| \leq c} N(S)$, there exists some $\tilde{S}$ with $\expect[N(\tilde{S})|\mathcal{H}_1 = \mathcal{H}] = +\infty$.
For any $S \in \mathcal{E}(\mathcal{H})$ it holds that $\expect[N(S)|\mathcal{H}_1 = \mathcal{H}] \leq \frac{(c+1) w_{\textsc{max}}}{\underline{h}}$ as an unknown entrant is purchased with probability at least $\frac{\underline{h}}{(c+1) w_{\textsc{max}}}$ where $w_{\textsc{max}} = \max\{w_i(\mathcal{H}): i \in \mathcal{N} \cup \{0\}\}$.  As a result $\tilde{S}$ contains only known products, i.e., $\tilde{S} \not \in \mathcal{E}(\mathcal{H})$. Given that $\mathcal{H}$ is non-terminal, $\textsc{OPT}(\mathcal{H}) - \textsc{Rev}(\tilde{S}, \mathcal{H}) > 0$, which together with $\expect[N(\tilde{S})|\mathcal{H}_1 = \mathcal{H}] = +\infty$ and \eqref{ineq:regret_pi_lower_bound_different_priors} implies $\textsc{EpochReg}^{\pi}(\mathcal{H}) = +\infty$. Combining this with $\textsc{FutureReg}^{\pi}(\mathcal{H}) \prob[Z < \infty] \geq -m$ by Lemma~\ref{lemmaB3:different_priors} and \eqref{ineq:regret_pi_lower_bound_different_priors} yields $\textsc{Reg}(\pi;\mathcal{H}) = +\infty$. Given that $V(\mathcal{C}(\mathcal{H})) < \infty$ (by Lemma~\ref{lemmaB3:different_priors}), $\textsc{Reg}(\pi;\mathcal{H}) \geq V(\mathcal{C}(\mathcal{H}))$.

\paragraph{Case 2: $\expect[Z] < +\infty$. } In this case, $\prob[Z < \infty] = 1$. Let $q(S;\mathcal{H}) = \frac{\sum_{i \in S \setminus I(\mathcal{H})} w_i(\mathcal{H})}{\sum_{i \in S} w_i(\mathcal{H}) + w_0}$ be the probability that an unknown entrant is purchased given $S$ and $p(S;\mathcal{H}) = q(S;\mathcal{H}) \expect[N(S)|\mathcal{H}_1 = \mathcal{H}]$ be the probability the epoch ends following an offering of $S$. By the same steps as in \eqref{ineq:lower_bound_on_the_regret_of_the_first_epoch_of_any_policy} in the proof of Lemma~\ref{lemma:epoch_regret_of_any_policy_is_at_least_the_minimum_epoch_regret_over_fixed_assortments} (Appendix~\ref{appendix_subsec:proof_lemma_epoch_regret_of_any_policy_is_at_least_minimum_epoch_regret_of_static_assortment_policies})
$$\textsc{EpochReg}^{\pi}(\mathcal{H})\geq  \sum\limits_{S \in \mathcal{E}(\mathcal{H})}\textsc{EpochReg}(S,\mathcal{C}(\mathcal{H}))p(S;\mathcal{H}).$$ 
Following the same steps as in the proof of the previous Lemma~\ref{lemmaB2:different_priors} (Appendix~\ref{appendix_subsec:proof_lemma_B2B3_het_priors}), 
$$\textsc{FutureReg}^{\pi}(\mathcal{H})= \sum_{S \in \mathcal{E}(\mathcal{H})} p(S;\mathcal{H})\sum_{i \in S^{\textsc{u}}(\mathcal{C}(\mathcal{H}))} \frac{\tildew_i}{\sum_{j \in S^{\textsc{u}}(\mathcal{C}(\mathcal{H})))} \tildew_j}\expect_{w_i \sim \mathcal{F}_i} [V(\mathcal{C}(\mathcal{H}) \cup (i, w_i))].$$
As a result, the regret of $\pi$ is lower bounded by
\begin{align*}
    &\textsc{Reg}(\pi;\mathcal{H}) \geq \textsc{EpochReg}^{\pi}(\mathcal{H}) + \textsc{FutureReg}^{\pi}(\mathcal{H})\\
    &\geq 
    \sum\limits_{S \in \mathcal{E}(\mathcal{H})}p(S;\mathcal{H}) \left[\textsc{EpochReg}(S,\mathcal{C}(\mathcal{H})) + 
    \sum_{i \in S^{\textsc{u}}(\mathcal{C}(\mathcal{H}))} \frac{\tildew_i}{\sum_{j \in S^{\textsc{u}}(\mathcal{C}(\mathcal{H})))} \tildew_j}\expect_{w_i \sim \mathcal{F}_i} [V(\mathcal{C}(\mathcal{H}) \cup (i, w_i))] \right]\\
    &= \sum\limits_{S \in \mathcal{E}(\mathcal{H})} p(S;\mathcal{H}) Q(S; \mathcal{C}(\mathcal{H})) \geq \Big(\sum\limits_{S \in \mathcal{E}(\mathcal{H})} p(S;\mathcal{H}) \Big) \Big(\min\limits_{S \in \mathcal{E}(\mathcal{H})}  Q(S; \mathcal{C}(\mathcal{H})) \Big) = \min\limits_{S \in \mathcal{E}(\mathcal{H})}  Q(S; \mathcal{C}(\mathcal{H})) = V(\mathcal{C}(\mathcal{H}))
\end{align*}
where the third equality uses $\sum\limits_{S \in \mathcal{E}(\mathcal{H})} p(S;\mathcal{H})  = 1$. This completes the induction step and the proof.
\end{proof}

\section{Single entrant and noisy observations (Section~\ref{sec:single_entrant_noisy_obs})}
\subsection{Optimal policy with a single entrant and noisy observations (Theorem~\ref{theorem:single_entrant_non-full_realizability})}\label{appendix_subsec:proof_characterization_opt_policy_single_entrant_noisy_obs}

To prove Theorem~\ref{theorem:single_entrant_non-full_realizability}, in Lemma~\ref{lemma:analogue_lemma4.1_noisy_observations} we show that any policy which minimizes the regret until the next entrant review is optimal. To do so, we partition the time horizon into \textit{epochs}; within each epoch the set of reviews for the entrant remains the same. Let $\mathcal{E} = \{ S \subseteq \mathcal{N}: |S| \leq c, 1 \in S\}$ be the set of assortments containing the entrant. A history $\mathcal{H}$ is terminal if it contains at least $k$ reviews for the entrant. Let $\tau(S;\mathcal{H})$ and $r(S;\mathcal{H})$ be (respectively) the expected number of rounds and total expected reward in an epoch if assortment $S \in \mathcal{E}$ is offered repeatedly given a non-terminal history $\mathcal{H}$. We refer to the expected regret in an epoch if $S$ is offered repeatedly given a non-terminal history $\mathcal{H}$ as \emph{epoch regret} and denote it by 
$$\textsc{EpochReg}(S;\mathcal{H}) =\textsc{OPT}(\mathcal{H}) \tau(S;\mathcal{H}) - r(S;\mathcal{H}) \quad \text{ where } \quad \textsc{OPT}(\mathcal{H}) = \expect[\textsc{OPT}|\mathcal{H}].$$
Recall that $\mathcal{F}(\mathcal{H})$ is the posterior of the entrant given a non-terminal history $\mathcal{H}$. Let $w_1(\mathcal{H}) = h(\mathcal{F}(\mathcal{H}))$ if $\mathcal{H}$ is non-terminal and $w_1(\mathcal{H}) = w_1$ otherwise be the attraction parameter of the entrant. Recall that $w_i(\mathcal{H}) = w_i$ is the attraction parameter of incumbent $i$ for $i \in \{2, \ldots, n\}$ and $\textsc{Rev}(S;\mathcal{H}) = \frac{\sum_{i \in S} w_i(\mathcal{H})}{\sum_{i \in S} w_i(\mathcal{H}) + w_0}$ is the expected revenue of $S$ given a history $\mathcal{H}$.

The following lemma establishes that a policy is optimal if it selects an assortment $S$ minimizing the epoch regret if the history is not terminal, and selects the revenue-maximizing assortment otherwise. This lemma is the analogue of Lemma~\ref{lemma:epoch_optimal_policy_is_optimal} and is proven in Appendix~\ref{appendix_subsec:proof_F1}. 

\begin{lemma}\label{lemma:analogue_lemma4.1_noisy_observations}
A policy $\pi^{\star}$ is optimal if
 \begin{itemize}
     \item $\pi^{\star}(\mathcal{H}) \in \argmin\limits_{S \in \mathcal{E}} \textsc{EpochReg}(S;\mathcal{H})$ if $\mathcal{H}$ is not terminal, and
     \item $\pi^{\star}(\mathcal{H}) \in \argmax\limits_{S \subseteq \mathcal{N}, |S| \leq c} \textsc{Rev}(S; \mathcal{H})$ otherwise.
 \end{itemize}
\end{lemma}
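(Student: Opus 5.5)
We mirror the proof of Lemma~\ref{lemma:epoch_optimal_policy_is_optimal}, with epochs now delimited by entrant reviews instead of entrant discoveries. The argument is an induction on the number of reviews still missing, $k - p(\mathcal{H})$, where $p(\mathcal{H})$ denotes the number of entrant reviews contained in $\mathcal{H}$. The claim to prove is that $\textsc{Reg}(\pi;\mathcal{H}) \geq \textsc{Reg}(\pi^{\star};\mathcal{H})$ for every history $\mathcal{H}$ and every policy $\pi$.

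\textbf{Base case.} If $\mathcal{H}$ is terminal, the entrant's parameter is fully known. Hence $\textsc{OPT}(\mathcal{H}) = \textsc{OPT}$, which equals the maximum revenue over all assortments. Every policy then has non-negative per-round expected regret, and $\pi^{\star}$ has zero. This is the analogue of Lemma~\ref{lemma:any_policy_has_a_nonnegative_regret_starting_from_a_terminal_history_and_pi_star_as_a_zero_regret_starting_from_a_terminal_history}, and it is easier here because terminality no longer requires the argument of Lemma~\ref{lemma:if_H_is_terminal_the_expected_opt_equals_the_maximum_instantaneous_revenue}.

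\textbf{Induction step.} For a non-terminal $\mathcal{H}$, let $Z$ be the round of the next entrant purchase. Within an epoch nothing about the entrant changes, so $w_1(\mathcal{H}_t)$, $\textsc{OPT}(\mathcal{H}_t)$ and $\textsc{Rev}(\cdot;\mathcal{H}_t)$ are all constant for $t \le Z$. We decompose
\[
\textsc{Reg}(\pi;\mathcal{H}) = \textsc{EpochReg}^{\pi}(\mathcal{H}) + \expect\bigl[\textsc{Reg}(\pi;\mathcal{H}_{Z+1}) \bigm| Z<\infty\bigr]\,\prob[Z<\infty].
\]
The tower property $\expect[\textsc{OPT}\mid\mathcal{H}] = \expect[\textsc{OPT}(\mathcal{H}_t)\mid\mathcal{H}]$ justifies summing $\textsc{OPT}(\mathcal{H}_t) - \textsc{Rev}(S_t;\mathcal{H}_t)$ in place of $\textsc{OPT} - R_t$.

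Two facts drive the rest. First, the law of the new review $u_{p+1}$ depends only on the current posterior, i.e.\ it is Bernoulli with the posterior mean when $p \le k-2$, and is $w_1$ otherwise. It does not depend on which assortment was offered. Second, the continuation regret of $\pi^{\star}$ depends on $\mathcal{H}_{Z+1}$ only through the review sequence. This is the analogue of Lemma~\ref{lemma:the_regret_of_pi_star_depends_on_H_only_through_J_H}, proved by the same induction, since $\pi^{\star}$'s choices depend on $\mathcal{H}$ only through the reviews. Applying the induction hypothesis to the continuation, the second term is bounded below by $\expect_{u}[\textsc{Reg}(\pi^{\star};\mathcal{H}\cup u)]\,\prob[Z<\infty]$, and this expectation is a quantity independent of $\pi$. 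The first term is bounded below by $\min_{S\in\mathcal{E}}\textsc{EpochReg}(S;\mathcal{H})$, by the argument of Lemma~\ref{lemma:epoch_regret_of_any_policy_is_at_least_the_minimum_epoch_regret_over_fixed_assortments}: write the regret as $\sum_S (\textsc{OPT}(\mathcal{H}) - \textsc{Rev}(S;\mathcal{H}))\,\expect[N(S)]$ and use $\sum_{S\in\mathcal{E}} q(S)\,\expect[N(S)] = 1$. Finally, $\pi^{\star}$ attains both bounds with equality, exactly as in Lemmas~\ref{lemma:the_regret_of_pi_star_can_be_decomposed_into_the_first_epoch_regret_and_the_future_regret_general_form} and~\ref{lemma:the_regret_of_pi_star_can_be_decomposed_into_the_first_epoch_regret_and_the_future_regret}.

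\textbf{Main obstacle.} The delicate part is that the Lemma~\ref{lemma:epoch_regret_of_any_policy_is_at_least_the_minimum_epoch_regret_over_fixed_assortments} argument needs two things to hold at every non-terminal history, not only the initial one: (a) $\textsc{OPT}(\mathcal{H}) - \textsc{Rev}(S;\mathcal{H}) > 0$ for every entrant-free $S$, so that $\prob[Z<\infty]<1$ or $\expect[Z]=\infty$ forces infinite regret; and (b) uniform bounds on $\textsc{Reg}(\pi^{\star};\cdot)$ as in Lemma~\ref{lemma:for_any_history_with_k_unknown_entrants_the_regret_of_pi_star_is_lower_bounded_and_upper_bounded_by_quantities_which_depend_on_k}.

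For (a), the posterior after any finite sequence of Bernoulli reviews is a Beta distribution with full support on $[0,1]$. Given the assumption $\overline{\theta} > w^{\star}_{(c)}(\mathcal{I}_1)$, the entrant still has positive posterior probability of strictly improving the optimum. Hence $\textsc{OPT}(\mathcal{H})$ strictly exceeds the best entrant-free revenue. I would verify this carefully, checking in particular that the support upper endpoint is preserved under Beta updating.

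For (b), I would reuse the bounds \eqref{ineq:epoch_reg_upper_bound} and \eqref{ineq:epoch_reg_lower_bound}, which give epoch regret at most $1/w_1(\mathcal{H})$ and at least $-1$. These need $w_1(\mathcal{H}) = h(\mathcal{F}(\mathcal{H}))$ to be bounded away from $0$. Since there are only finitely many review sequences of length less than $k$, $\min_{\mathcal{H}} w_1(\mathcal{H})$ is a minimum over a finite set, so it is positive provided $h$ of each posterior is positive; I would state this as an assumption if it is needed.
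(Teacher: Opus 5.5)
Your proposal is correct and follows the paper's route. Like the paper, you run an induction on the number of missing reviews that mirrors the proof of Lemma~\ref{lemma:epoch_optimal_policy_is_optimal}, using the noisy analogues of the terminal-history lemma, the dependence of $\pi^{\star}$'s regret only on the review sequence, the one-epoch-lookahead decomposition, the fixed-assortment lower bound on epoch regret, and uniform regret bounds via $\underline{h}$. Your explicit checks make precise two points the paper leaves implicit: every non-terminal history has $\textsc{OPT}(\mathcal{H})$ strictly above any incumbent-only revenue because the Beta posterior keeps full support on $[0,1]$, and the next review follows the posterior predictive rather than a Bernoulli draw with the true $w_1$.
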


\begin{proof}[Proof of Theorem~\ref{theorem:single_entrant_non-full_realizability}.]
Let $\mathcal{H}$ be a non-terminal history. For a set of at most $c-1$ incumbents $S$, let $o(S;\mathcal{H})$ be the expected number of rounds in which the outside option is chosen in the epoch if the entrant is repeatedly offered with $S$. When the outside option is chosen the reward is $0$ and the regret incurred is $\textsc{OPT}(\mathcal{H})$. When a product is chosen the reward is $1$ and the regret incurred is $\textsc{OPT}(\mathcal{H})-1$; this happens $\tau(S;\mathcal{H})-o(S;\mathcal{H})$ rounds in expectation. Thus, the expected regret until the next review is
\begin{equation}\label{eq:epoch_regret_non-full_realizability_single_entrant}
    \textsc{EpochReg}(S \cup \{1\};\mathcal{H}) = (\textsc{OPT}(\mathcal{H})-1)(\tau(S;\mathcal{H})-o(S;\mathcal{H})) +\textsc{OPT}(\mathcal{H}) o(S;\mathcal{H}).
\end{equation}
Next, we show that $o(S;\mathcal{H})$ is independent of $S$.  Let $\mathbf{u} = (u_1, \ldots, u_{\ell})$ be the reviews for the entrant in $\mathcal{H}$ and $\mathcal{F}_{\mathbf{u}}$ be the posterior distribution for the entrant given $\mathbf{u}$. Given that the entrant is purchased with probability $\frac{h(\mathcal{F}_{\mathbf{u}})}{\sum_{i \in S} w_i + h(\mathcal{F}_{\mathbf{u}}) + w_0}$ every time it is offered with $S$ in the epoch,
\begin{equation}\label{eq:expression_tau_single_entrant_noisy_obs}
    \tau(S;\mathcal{H}) = \frac{1}{\prob[\text{entrant is chosen}|S]} = \frac{\sum_{i \in S} w_i + h(\mathcal{F}_{\mathbf{u}}) + w_0}{h(\mathcal{F}_{\mathbf{u}})}.
\end{equation}
Using this we can express $o(S;\mathcal{H})$ as
\begin{equation*}
    o(S;\mathcal{H}) = \prob[\text{outside option chosen}|S] \tau(S;\mathcal{H}) 
    = \frac{\prob[\text{outside option chosen}|S]}{\prob[\text{entrant chosen}|S]}= \frac{\frac{w_0}{\sum_{i \in S} w_i  + h(\mathcal{F}_{\mathbf{u}}) + w_0} }{\frac{h(\mathcal{F}_{\mathbf{u}})}{\sum_{i \in S} w_i  + h(\mathcal{F}_{\mathbf{u}}) + w_0} }=
    \frac{w_0}{h(\mathcal{F}_{\mathbf{u}})}.
\end{equation*}
Combining this with \eqref{eq:epoch_regret_non-full_realizability_single_entrant}, minimizing $ \textsc{EpochReg}(S \cup \{1\};\mathcal{H})$ is equivalent to maximizing $\tau(S;\mathcal{H})$ (as $\textsc{OPT}(\mathcal{H}) < 1$). By \eqref{eq:expression_tau_single_entrant_noisy_obs}, $\tau(S;\mathcal{H})$ is maximized when $\sum_{i \in S} w_i$ is maximized, which occurs when $S =  S^{\star}_{(c-1)}(\mathcal{I}_1)$. 

Therefore, offering the assortment $S^{\star}_{(c-1)}(\mathcal{I}_1) \cup \{1\}$ minimizes $\textsc{EpochReg}(S;\mathcal{H})$ when $\mathcal{H}$ is not terminal. Combining this with Lemma~\ref{lemma:analogue_lemma4.1_noisy_observations}, it is optimal to offer the entrant with the most attractive $c-1$ products until the $k$-th review and offer the $c$ most attractive products in subsequent rounds, which yields the lemma.
\end{proof}

\subsection{Optimality of stationary and deterministic policies (Lemma~\ref{lemma:analogue_lemma4.1_noisy_observations})}\label{appendix_subsec:proof_F1}
Similar to the proof of Lemma~\ref{lemma:epoch_optimal_policy_is_optimal}, we use several auxiliary lemmas. The proof follows the same structure as that of Lemma~\ref{lemma:epoch_optimal_policy_is_optimal} in Appendix~\ref{appendix_sec:there_exists_optimal_policy_which_is_stationary}, where the role of a new entrant purchase is replaced with by the role of new review for the entrant.

Lemma~\ref{lemma:any_policy_has_a_nonnegative_regret_starting_from_a_terminal_history_and_pi_star_as_a_zero_regret_starting_from_a_terminal_history} continues to hold for the policy $\pi^{\star}$ in Lemma~\ref{lemma:analogue_lemma4.1_noisy_observations} and its proof (provided in Appendix~\ref{appendix_subsec:proof_lemma_B1_single_entrant_noisy_obs}) is simpler since a history can be terminal only when all products are known. For a history $\mathcal{H}$, let $p(\mathcal{H})$ be the number of times the entrant has been purchased and let $\mathcal{C}(\mathcal{H}) = (u_1, \ldots, u_{p(\mathcal{H})})$ be the condensed history consisting of the reviews obtained for the unknown entrant. The following lemma establishes that the regret of $\pi^{\star}$ given a history $\mathcal{H}$ depends on $\mathcal{H}$ only through $\mathcal{C}(\mathcal{H})$ and provides a one-epoch-lookahead decomposition for the regret of $\pi^{\star}$. This lemma is the analogue of Lemma~\ref{lemma:the_regret_of_pi_star_depends_on_H_only_through_J_H} (and it is proven in Appendix~\ref{appendix_subsec:proof_lemma_B2_single_entrant_noisy}).

\begin{lemma}\label{lemmaB2:noisy_observations}
    For any history $\mathcal{H}$, $\textsc{Reg}(\pi^{\star};\mathcal{H})$ depends on $\mathcal{H}$ only through $\mathcal{C}(\mathcal{H})$. 
\end{lemma}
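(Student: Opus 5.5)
The plan is to mirror the proof of Lemma~\ref{lemma:the_regret_of_pi_star_depends_on_H_only_through_J_H} and induct on the number of reviews still missing before the history becomes terminal, i.e.\ on $k - p(\mathcal{H})$. Throughout, the role of ``a new entrant purchase'' is played by ``a new review for the entrant.''

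\textbf{Preliminary step.} Everything relevant about a history should be a function of $\mathcal{C}(\mathcal{H})$. The incumbents' parameters are fixed. The entrant's current attraction parameter $w_1(\mathcal{H}) = h(\mathcal{F}(\mathcal{H}))$ depends only on the review vector $(u_1,\dots,u_{p(\mathcal{H})})$, because the posterior $\mathcal{F}(\mathcal{H})$ is determined by the reviews. The expected ex-post optimum $\textsc{OPT}(\mathcal{H}) = \expect[\textsc{OPT}\mid\mathcal{H}]$ is an expectation over $w_1 \sim \mathcal{F}(\mathcal{H})$, so it too depends only on the reviews. Consequently $\textsc{Rev}(S;\mathcal{H})$, $\tau(S;\mathcal{H})$, $r(S;\mathcal{H})$, and therefore $\textsc{EpochReg}(S;\mathcal{H})$ and $\min_{S\in\mathcal{E}}\textsc{EpochReg}(S;\mathcal{H})$, all depend on $\mathcal{H}$ only through $\mathcal{C}(\mathcal{H})$. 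This replaces Lemma~\ref{lemma:for_any_history_H_the_expected_ex_post_optimum_the_maximum_revenue_from_known_products_and_the_minimum_epoch_regret_depend_on_H_only_through_J(H)}, and since there is a single entrant no relabeling map is needed.

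\textbf{Base case.} Here $\mathcal{H}$ is terminal ($p(\mathcal{H}) \ge k$). The analogue of Lemma~\ref{lemma:any_policy_has_a_nonnegative_regret_starting_from_a_terminal_history_and_pi_star_as_a_zero_regret_starting_from_a_terminal_history} gives $\textsc{Reg}(\pi^\star;\mathcal{H}) = 0$, which trivially depends only on $\mathcal{C}(\mathcal{H})$. I am assuming, as in the setup, that terminality is itself determined by $p(\mathcal{H})$, which is a function of $\mathcal{C}(\mathcal{H})$.

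\textbf{Inductive step.} Take $\mathcal{H}$ non-terminal. Since $\pi^\star$ offers an assortment containing the entrant, whose purchase probability is bounded below by $w_1(\mathcal{H})/((c+1)w_{\max}) > 0$, the next review time $Z$ is a.s.\ finite with finite mean. I would then prove a one-epoch lookahead decomposition, as in Lemma~\ref{lemma:the_regret_of_pi_star_can_be_decomposed_into_the_first_epoch_regret_and_the_future_regret_general_form}:
\[
\textsc{Reg}(\pi^\star;\mathcal{H}) = \min_{S\in\mathcal{E}}\textsc{EpochReg}(S;\mathcal{H}) + \expect\bigl[\textsc{Reg}(\pi^\star;\mathcal{H}_{Z}\cup(S_Z,1,u_{p+1}))\mid \mathcal{H}_1=\mathcal{H}\bigr].
\]
The first term is derived as in \eqref{eq:expressing_regret_until_first_unknown_entrant_purchase}: within the epoch, $\textsc{OPT}(\mathcal{H}_t)$ and $\textsc{Rev}(\cdot;\mathcal{H}_t)$ are constant, and the weights $q(S)\expect[N(S)]$ sum to one. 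The post-epoch history has condensed history $(u_1,\dots,u_p,u_{p+1})$ with one fewer missing review. By the induction hypothesis its regret is $f(\mathcal{C}(\mathcal{H})\cup\{u_{p+1}\})$ for some function $f$.

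\textbf{Main obstacle.} The key point is to show that the conditional law of $u_{p+1}$ given the whole history depends only on $\mathcal{C}(\mathcal{H})$, and is independent of which assortments were offered and of the incumbent and outside choices. Non-entrant choices carry no information about $w_1$ under the model: the choice probabilities use $h(\mathcal{F}(\mathcal{H}))$, not $w_1$. Hence the posterior of $w_1$ at time $Z$ is $\mathcal{F}(\mathcal{H})$. It follows that $u_{p+1}\sim\mathrm{Bern}(w_1)$ with $w_1\sim\mathcal{F}(\mathcal{H})$ if $p+1\le k-1$, and $u_{p+1}=w_1$ otherwise. This predictive law is a function of $\mathcal{C}(\mathcal{H})$ alone, so the expectation of $f$ is also a function of $\mathcal{C}(\mathcal{H})$. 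Combined with the preliminary step, this closes the induction.
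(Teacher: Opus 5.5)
Your proposal is correct and takes essentially the paper's route: induction on the number of reviews remaining, with three ingredients. These are that the epoch regret depends only on $\mathcal{C}(\mathcal{H})$ (Lemma~\ref{lemma:B8_single_entrant_noisy}), the one-epoch lookahead decomposition (Lemma~\ref{lemma:B9_single_entrant_noisy}), and zero regret at terminal histories. Your explicit argument that $\textsc{OPT}(\mathcal{H})$ and the predictive law of the next review depend only on $\mathcal{C}(\mathcal{H})$ (choices are driven by $h(\mathcal{F}(\mathcal{H}))$, not by $w_1$) makes precise what the paper writes loosely as $\expect_{u\sim\bern(w_1)}$.
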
    
Hence, the regret of the optimal policy depends only on the reviews for the unknown entrant and we can express the regret of $\pi^{\star}$ given a history $\mathcal{H}$ as $\textsc{Reg}(\pi^{\star},\mathcal{C}(\mathcal{H})) = \textsc{Reg}(\pi^{\star};\mathcal{H})$. The next lemma (proof in Appendix~\ref{appendix_subsec:lemma_B3_single_entrant_noisy}) provides a one-epoch-lookahead decomposition of the regret of~$\pi^{\star}$, similar to Lemma~\ref{lemma:the_regret_of_pi_star_can_be_decomposed_into_the_first_epoch_regret_and_the_future_regret}.
\begin{lemma}\label{lemma:the_regret_of_pi_star_can_be_decomposed_into_the_first_epoch_regret_and_the_future_regret_single_entrant_noisy}
     For any non-terminal history $\mathcal{H}$, 
    $$\textsc{Reg}(\pi^{\star},\mathcal{C}(\mathcal{H})) = \min\limits_{ S \in \mathcal{E}}\textsc{EpochReg}(S;\mathcal{H}) + \expect_{u \sim \bern(w_1)}[ \textsc{Reg}(\pi^{\star},\mathcal{C}(\mathcal{H}) \cup \{u\})].$$
\end{lemma}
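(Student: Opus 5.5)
We prove the decomposition by mirroring the proof of Lemma~\ref{lemma:the_regret_of_pi_star_can_be_decomposed_into_the_first_epoch_regret_and_the_future_regret}. We combine a noisy-review analogue of Lemma~\ref{lemma:the_regret_of_pi_star_can_be_decomposed_into_the_first_epoch_regret_and_the_future_regret_general_form} with Lemma~\ref{lemmaB2:noisy_observations}.

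\emph{Step 1: the epoch terminates.} Fix a non-terminal $\mathcal{H}$ and let $Z$ be the first round at which $\pi^{\star}$, started from $\mathcal{H}$, obtains a new review. Since $\pi^{\star}$ always offers an assortment in $\mathcal{E}$, the entrant is purchased in every round with probability at least $h(\mathcal{F}(\mathcal{H}))/((c+1)w_{\max})>0$. Hence $\expect[Z]<\infty$ and $Z<\infty$ almost surely. This lets us split the regret into the regret accumulated through round $Z$ plus the conditional expectation of $\textsc{Reg}(\pi^{\star};\mathcal{H}_{Z}\cup(S_Z,1,u))$, where $u$ is the new review.

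\emph{Step 2: evaluate the first term.} For all $t\le Z$, no new review has arrived, so $\mathcal{C}(\mathcal{H}_t)=\mathcal{C}(\mathcal{H})$. Consequently $\textsc{OPT}(\mathcal{H}_t)$, $\textsc{Rev}(S;\mathcal{H}_t)$ and $\textsc{EpochReg}(S;\mathcal{H}_t)$ all coincide with their values at $\mathcal{H}$. Let $N(S)$ count the rounds in which $S$ is offered and let $q(S)$ be the entrant purchase probability under $S$. Exactly as in \eqref{eq:expressing_regret_until_first_unknown_entrant_purchase}, we write the first term as
\[
\sum_{S}(\textsc{OPT}(\mathcal{H})-\textsc{Rev}(S;\mathcal{H}))\,\expect[N(S)] .
\]
Multiplying and dividing each summand by $q(S)$, and using $\textsc{EpochReg}(S;\mathcal{H})=(\textsc{OPT}(\mathcal{H})-\textsc{Rev}(S;\mathcal{H}))/q(S)$, this becomes $\sum_{S}\textsc{EpochReg}(S;\mathcal{H})\,q(S)\,\expect[N(S)]$. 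Every $S$ offered by $\pi^{\star}$ lies in $\argmin_{S \in \mathcal{E}}\textsc{EpochReg}(S;\mathcal{H})$, so each $\textsc{EpochReg}(S;\mathcal{H})$ in this sum equals the minimum. Finally, $\sum_S q(S)\expect[N(S)]=1$ because exactly one purchase ends the epoch. Together these give $\min_{S\in\mathcal{E}}\textsc{EpochReg}(S;\mathcal{H})$.

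\emph{Step 3: the future term.} By Lemma~\ref{lemmaB2:noisy_observations}, $\textsc{Reg}(\pi^{\star};\mathcal{H}_Z\cup(S_Z,1,u))=\textsc{Reg}(\pi^{\star},\mathcal{C}(\mathcal{H})\cup\{u\})$, since the condensed history after round $Z$ is $\mathcal{C}(\mathcal{H})$ appended with $u$. It remains to identify the law of $u$. By the model, $u=u_{p(\mathcal{H})+1}$ is drawn as $\bern(w_1)$, or equals $w_1$ at the $k$-th review. Conditionally on $w_1$ it is independent of the purchase timing and of $(\mathcal{H}_Z,S_Z)$, because $Z$ depends only on customer choices under the fixed cold-start parameter $h(\mathcal{F}(\mathcal{H}))$. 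Hence $u$ has the same distribution as a fresh review given $\mathcal{H}$, namely $\bern(w_1)$ with $w_1\sim\mathcal{F}(\mathcal{H})$. This is what the expectation $\expect_{u\sim\bern(w_1)}$ in the statement denotes, and it yields the claim.

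\emph{Main obstacle.} The delicate point is the independence in Step 3. We must make sure that conditioning on the event that the epoch ends at a particular round, under a particular assortment, carries no information about $w_1$ beyond $\mathcal{H}$. The justification is that choice probabilities within an epoch depend only on $h(\mathcal{F}(\mathcal{H}))$ and never on $w_1$ itself. We would state this explicitly, with a short conditioning argument on $w_1$, in the same way the proof of Lemma~\ref{lemma:the_regret_of_pi_star_can_be_decomposed_into_the_first_epoch_regret_and_the_future_regret} uses that $w_{Y_Z}$ is independent of $(\mathcal{H}_Z,S_Z,Y_Z)$.
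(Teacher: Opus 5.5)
Your proposal is correct and follows essentially the same route as the paper. The paper cites the general-form decomposition of Lemma~\ref{lemma:B9_single_entrant_noisy}, together with Lemma~\ref{lemmaB2:noisy_observations} and the independence of the new review from the epoch's trajectory; your Steps 1--2 simply inline the proof of Lemma~\ref{lemma:B9_single_entrant_noisy}, and your Step 3 is more careful than the paper about the law of $u$ (the posterior predictive with $w_1\sim\mathcal{F}(\mathcal{H})$, and exact revelation at the $k$-th review), which the notation $\expect_{u \sim \bern(w_1)}$ glosses over.
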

Let $Z^{\pi, \mathcal{H}}$ be the first round at which the next review for the entrant is obtained if a policy $\pi$ starts from a history $\mathcal{H}$. The next lemma establishes that the expected regret of $\pi$ until $Z^{\pi, \mathcal{H}}$ is at least the minimum epoch regret achieved by repeatedly offering a fixed assortment. This lemma is the analogue of Lemma~\ref{lemma:epoch_regret_of_any_policy_is_at_least_the_minimum_epoch_regret_over_fixed_assortments} and is proven in Appendix~\ref{appendix_subsec:proof_lemma_B4_single_entrant_noisy}. Let $\textsc{EpochReg}^{\pi}(\mathcal{H}) = \expect\Bigg[\sum_{t=1}^{Z^{\pi, \mathcal{H}}} \textsc{OPT}(\mathcal{H}_t) - \textsc{Rev}(S_t, \mathcal{H}_t)|\mathcal{H}_1 = \mathcal{H}\Bigg]$ be the expected regret of $\pi$ until the first purchase of an unknown entrant.

\begin{lemma}\label{lemma:epoch_regret_of_any_policy_is_at_least_the_minimum_epoch_regret_over_fixed_assortments_single_entrant_noisy}
    For any policy $\pi$ and non-terminal history $\mathcal{H}$, 
    $\textsc{EpochReg}^{\pi}(\mathcal{H}) \geq \min\limits_{S \in \mathcal{E}}  \textsc{EpochReg}(S;\mathcal{H})$
    if $\expect[Z^{\pi, \mathcal{H}}] < \infty$, and $\textsc{EpochReg}^{\pi}(\mathcal{H}) = +\infty$ otherwise. 
\end{lemma}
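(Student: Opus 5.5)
We mirror the proof of Lemma~\ref{lemma:epoch_regret_of_any_policy_is_at_least_the_minimum_epoch_regret_over_fixed_assortments}, with ``purchase of an unknown entrant'' replaced by ``next review of the entrant,'' which occurs exactly when the entrant is purchased. Write $Z = Z^{\pi,\mathcal{H}}$.

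\textbf{Step 1: freeze the state on the epoch.} For every round $t \leq Z$ no new review has arrived. Hence $\mathcal{H}_t$ and $\mathcal{H}$ carry the same review vector $\mathbf{u}$. The entrant's attraction parameter $w_1(\mathcal{H}_t) = h(\mathcal{F}_{\mathbf{u}})$ is therefore constant, and so are $\textsc{OPT}(\mathcal{H}_t) = \textsc{OPT}(\mathcal{H})$ and $\textsc{Rev}(S;\mathcal{H}_t) = \textsc{Rev}(S;\mathcal{H})$. Let $N(S)$ be the number of rounds up to $Z$ in which $S$ is offered. Exactly as in \eqref{eq:regret_until_Z_expression_as_sum_over_assortments}, we rewrite
\[
\textsc{EpochReg}^{\pi}(\mathcal{H}) = \sum_{S \subseteq \mathcal{N}, |S| \leq c} \big(\textsc{OPT}(\mathcal{H}) - \textsc{Rev}(S;\mathcal{H})\big)\,\expect[N(S)\mid \mathcal{H}_1=\mathcal{H}].
\]
This uses Tonelli/Fubini. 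The terms with $S \notin \mathcal{E}$ will be shown to be nonnegative, and each $S \in \mathcal{E}$ has finite expected count, so the interchange is legitimate.

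\textbf{Step 2: the key sign fact.} For a non-terminal $\mathcal{H}$ we need $\textsc{OPT}(\mathcal{H}) > \textsc{Rev}(S;\mathcal{H})$ for every $S \notin \mathcal{E}$, that is, for every set of incumbents. Since $\textsc{Rev}(S;\mathcal{H}) \le w^\star_{(c)}$-based revenue $R_0 := \textsc{Rev}(S^\star_{(c)}(\mathcal{I}_1))$, it suffices to show $\textsc{OPT}(\mathcal{H}) > R_0$.

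Pointwise, $\textsc{OPT} \geq R_0$. Under the posterior $\mathcal{F}_{\mathbf{u}}$ there is positive probability that $w_1 > w^\star_{(c)}(\mathcal{I}_1)$, and on that event $\textsc{OPT} > R_0$ strictly. Positivity holds because Bernoulli likelihoods from a Beta prior keep full support on $(0,1)$, so the posterior retains mass near $\overline{\theta} > w^\star_{(c)}(\mathcal{I}_1)$. This gives the strict inequality.

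\textbf{Step 3: case split.}

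\emph{Case $\expect[Z] = \infty$.} Some $\tilde S$ has $\expect[N(\tilde S)] = \infty$. Any $S \in \mathcal{E}$ has $\expect[N(S)] \le (\sum_{i\in S} w_i + h(\mathcal{F}_{\mathbf{u}}) + w_0)/h(\mathcal{F}_{\mathbf{u}}) < \infty$, because each offering yields a purchase of the entrant with that fixed positive probability. This requires $h(\mathcal{F}_{\mathbf{u}}) > 0$, which we assume as in the single-entrant model. Hence $\tilde S \notin \mathcal{E}$. By Step 2 its coefficient is strictly positive, and all other coefficients of sets outside $\mathcal{E}$ are nonnegative. The terms for $S \in \mathcal{E}$ are bounded below by $-\textsc{Rev}\cdot\expect[N(S)]$, which is finite. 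Therefore $\textsc{EpochReg}^\pi(\mathcal{H}) = +\infty$.

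\emph{Case $\expect[Z] < \infty$.} Set $q(S) = h(\mathcal{F}_{\mathbf{u}})/(\sum_{i\in S\setminus\{1\}} w_i + h(\mathcal{F}_{\mathbf{u}}) + w_0)$ for $S \in \mathcal{E}$.
\begin{itemize}
\item Drop the nonnegative terms with $S \notin \mathcal{E}$.
\item Multiply and divide each remaining term by $q(S)$.
\item Use $\sum_{S\in\mathcal{E}} q(S)\,\expect[N(S)] = 1$. This holds by a Wald-type argument: this sum is the expected number of entrant purchases up to $Z$, which equals exactly one because $Z<\infty$ a.s.
\item Bound the convex combination below by its minimum.
\item Identify $(\textsc{OPT}(\mathcal{H}) - \textsc{Rev}(S;\mathcal{H}))/q(S) = \textsc{OPT}(\mathcal{H})\tau(S;\mathcal{H}) - r(S;\mathcal{H}) = \textsc{EpochReg}(S;\mathcal{H})$.
\end{itemize}

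The main obstacle is Step 2's strictness. There we must be careful that non-terminality, together with $\overline{\theta} > w^\star_{(c)}(\mathcal{I}_1)$, really forces $\textsc{OPT}(\mathcal{H}) > \max_{S\subseteq\mathcal{I}_1}\textsc{Rev}(S;\mathcal{H})$ under every reachable posterior.
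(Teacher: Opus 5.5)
Your proposal is correct and follows the paper's proof. That proof reruns the argument of Lemma~\ref{lemma:epoch_regret_of_any_policy_is_at_least_the_minimum_epoch_regret_over_fixed_assortments}: freeze the state within the epoch, rule out $\expect[Z]=\infty$ because entrant-containing assortments have finite expected counts, and use $\sum_{S\in\mathcal{E}} q(S)\expect[N(S)]=1$ to bound the epoch regret below by a convex combination. Your Step 2 also fills in a point the paper leaves implicit. Since ``terminal'' now means having $k$ reviews rather than $\textsc{OPT}(\mathcal{H})$ equalling the best known revenue, the strict positivity of $\textsc{OPT}(\mathcal{H})-\textsc{Rev}(S;\mathcal{H})$ for incumbent-only $S$, which the $\expect[Z]=\infty$ case needs, must come from the full support of the Beta posterior together with $\overline{\theta}>w^{\star}_{(c)}(\mathcal{I}_1)$, exactly as you argue.
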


The next lemma (proof in Appendix~\ref{appendix_subsec:regret_of_optimal_policy_is_bounded_below_and_above_single_entrant_noisy}) establishes that the regret of $\pi^{\star}$ for any history $\mathcal{H}$ is bounded from below and from above by constants independent of $\mathcal{H}$. This lemma is the analogue of Lemma~\ref{lemma:for_any_history_with_k_unknown_entrants_the_regret_of_pi_star_is_lower_bounded_and_upper_bounded_by_quantities_which_depend_on_k}. Let $\underline{h} = \min\limits_{(x,y): x+ y \leq k-1} h(\mathrm{Beta}(a + x, b + y))$ be the minimum possible value of the attraction parameter of the entrant until the $k$-th review. Given that the entrant can have a prior $h(\mathrm{Beta}(a + x, b + y))$ for any $(x,y)$ with $x+ y \leq k-1$, instead of the reciprocal of the customer prior attraction parameter $h(\mathcal{F})$ the upper bound uses the reciprocal of the minimum customer prior attraction parameter $\underline{h} = \min\limits_{(x,y): x+ y \leq k-1} h(\mathrm{Beta}(a + x, b + y))$.

\begin{lemma}\label{lemmaB3:noisy_observations}
     For any history $\mathcal{H}$, $\textsc{Reg}(\pi^{\star};\mathcal{H}) \geq -m$ and $\textsc{Reg}(\pi^{\star};\mathcal{H}) \leq \frac{m}{\underline{h}}$.
\end{lemma}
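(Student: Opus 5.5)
I will prove Lemma~\ref{lemmaB3:noisy_observations} by induction on the number $j = k - p(\mathcal{H})$ of reviews still missing before the history becomes terminal. This mirrors the proof of Lemma~\ref{lemma:for_any_history_with_k_unknown_entrants_the_regret_of_pi_star_is_lower_bounded_and_upper_bounded_by_quantities_which_depend_on_k}. The induction hypothesis is
$$-j \leq \textsc{Reg}(\pi^{\star};\mathcal{H}) \leq \frac{j}{\underline{h}}$$
for every history with $j$ missing reviews. Since there is a single entrant, $j \leq k$, so the stated bounds follow with $m$ read as the number of epochs, which is at most $k$. (If the statement literally means $m=1$, I would instead bound by $k$ and $k/\underline{h}$ and note that this is what the induction yields.)

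\textbf{Base case.} A terminal history gives $\textsc{Reg}(\pi^{\star};\mathcal{H}) = 0$, by the analogue of Lemma~\ref{lemma:any_policy_has_a_nonnegative_regret_starting_from_a_terminal_history_and_pi_star_as_a_zero_regret_starting_from_a_terminal_history} stated for this setting.

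\textbf{Induction step.} For a non-terminal $\mathcal{H}$, Lemma~\ref{lemma:the_regret_of_pi_star_can_be_decomposed_into_the_first_epoch_regret_and_the_future_regret_single_entrant_noisy} splits the regret into two terms:
$$\textsc{Reg}(\pi^{\star};\mathcal{H}) = \min_{S\in\mathcal{E}}\textsc{EpochReg}(S;\mathcal{H}) + \mathbb{E}\big[\textsc{Reg}(\pi^{\star},\mathcal{C}(\mathcal{H})\cup\{u\})\big].$$
The second term involves histories with one fewer missing review, so the induction hypothesis bounds it between $-(j-1)$ and $(j-1)/\underline{h}$. It therefore suffices to show that every $S\in\mathcal{E}$ satisfies
$$-1 \leq \textsc{EpochReg}(S;\mathcal{H}) \leq \frac{1}{\underline{h}}.$$

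\textbf{Epoch bounds.} Write $\textsc{EpochReg}(S;\mathcal{H})$ as the per-round regret $\textsc{OPT}(\mathcal{H}) - \textsc{Rev}(S;\mathcal{H})$ times the expected epoch length
$$\frac{\sum_{i\in S} w_i(\mathcal{H}) + w_0}{w_1(\mathcal{H})}.$$
\begin{itemize}
\item For the upper bound, use $\textsc{OPT}(\mathcal{H})\leq 1$. Then $1-\textsc{Rev}(S;\mathcal{H}) = w_0/(\sum_{i\in S} w_i(\mathcal{H})+w_0)$, and the denominators cancel, leaving $w_0/w_1(\mathcal{H}) \leq 1/\underline{h}$ with $w_0=1$. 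This holds because $w_1(\mathcal{H}) = h(\mathrm{Beta}(a+x,b+y))$ for some $x+y\leq k-1$, hence $w_1(\mathcal{H}) \geq \underline{h}$.
\item For the lower bound, use $\textsc{OPT}(\mathcal{H}) \geq \textsc{Rev}(S^{\textsc{k}};\mathcal{H})$, where $S^{\textsc{k}} = S\setminus\{1\}$ is the set of incumbents in $S$. This holds because $\textsc{OPT}$ is at least the best revenue from incumbents, whose parameters are known. The identity $\frac{A}{A+1}-\frac{B}{B+1} = \frac{A-B}{(A+1)(B+1)}$ then gives
$$\textsc{EpochReg}(S;\mathcal{H}) \geq -\frac{1}{\sum_{i\in S^{\textsc{k}}} w_i + 1} \geq -1,$$
exactly as in \eqref{ineq:epoch_reg_lower_bound}.
\end{itemize}

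\textbf{Main obstacle.} The only delicate points are that the entrant's current weight need not be $\tildew$, which is handled by the uniform lower bound $\underline{h}$ (positive, since the minimum is over finitely many pairs $(x,y)$), and that $\textsc{OPT}(\mathcal{H})$ must dominate the known-incumbent revenue. Both follow as above, so the induction closes.
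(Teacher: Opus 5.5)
Your proposal is correct and follows essentially the same route as the paper: induction on the number of reviews remaining, the one-epoch-lookahead decomposition, and per-epoch bounds $-1 \le \textsc{EpochReg}(S;\mathcal{H}) \le 1/\underline{h}$, derived exactly as in the noiseless case with the cold-start weight replaced by $\underline{h}$. Your remark about $m$ is apt: the paper's own proof actually establishes the bounds $-k$ and $k/\underline{h}$ in terms of the number of remaining reviews, and the downstream argument only needs finiteness together with a uniform lower bound.
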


\begin{proof}[Proof of Lemma~\ref{lemma:analogue_lemma4.1_noisy_observations}.]
The proof combines Lemma~\ref{lemma:any_policy_has_a_nonnegative_regret_starting_from_a_terminal_history_and_pi_star_as_a_zero_regret_starting_from_a_terminal_history}, Lemma~\ref{lemmaB2:noisy_observations},  Lemma~\ref{lemma:the_regret_of_pi_star_can_be_decomposed_into_the_first_epoch_regret_and_the_future_regret_single_entrant_noisy},  Lemma~\ref{lemma:epoch_regret_of_any_policy_is_at_least_the_minimum_epoch_regret_over_fixed_assortments_single_entrant_noisy}, and  Lemma~\ref {lemmaB3:noisy_observations} and follows the same steps as the proof of Lemma~\ref{lemma:epoch_optimal_policy_is_optimal} with the following changes: 
\begin{itemize}
\item Lemma~\ref{lemma:the_regret_of_pi_star_depends_on_H_only_through_J_H} is replaced with Lemma~\ref{lemmaB2:noisy_observations}, Lemma~\ref{lemma:the_regret_of_pi_star_can_be_decomposed_into_the_first_epoch_regret_and_the_future_regret} is replaced with Lemma~\ref{lemma:the_regret_of_pi_star_can_be_decomposed_into_the_first_epoch_regret_and_the_future_regret_single_entrant_noisy}, Lemma~\ref{lemma:epoch_regret_of_any_policy_is_at_least_the_minimum_epoch_regret_over_fixed_assortments} is replaced with Lemma~\ref{lemma:epoch_regret_of_any_policy_is_at_least_the_minimum_epoch_regret_over_fixed_assortments_single_entrant_noisy}, and Lemma~\ref{lemma:for_any_history_with_k_unknown_entrants_the_regret_of_pi_star_is_lower_bounded_and_upper_bounded_by_quantities_which_depend_on_k} is replaced with Lemma~\ref {lemmaB3:noisy_observations}.
\item Instead of showing $\textsc{Reg}(\pi;\mathcal{H}) \geq \textsc{Reg}(\pi^{\star};\mathcal{H})$ by induction on the number of unknown entrants in $\mathcal{H}$, we show $\textsc{Reg}(\pi;\mathcal{H}) \geq \textsc{Reg}(\pi^{\star};\mathcal{H})$ by induction on the number of entrant reviews renaming in $\mathcal{H}$ until the entrant is known. 
    \item Instead of $Z$ being the first round an unknown entrant is purchased, $Z$ is the first round a new review for the entrant is obtained. 
    \item In the future regret term $\expect[\textsc{Reg}(\pi;\mathcal{H}_{Z} \cup (S_{Z}, Y_{Z}, w_{Y_{Z}}))|\mathcal{H}_1 = \mathcal{H}, Z< \infty]$, the unknown entrant purchase is replaced with the next entrant review, i.e.,
    $\expect_{u \sim \bern(w_1)}[\textsc{Reg}(\pi;\mathcal{H}_{Z} \cup (S_{Z}, 1, u))|\mathcal{H}_1 = \mathcal{H}, Z< \infty]$.
    Similarly, for $\pi^{\star}$, $\expect_{w \sim \mathcal{F}}[\textsc{Reg}(\pi^{\star},J(\mathcal{H}) \cup \{w\})]$ is replaced with $\expect\limits_{u\sim \bern(w_1)} [\textsc{Reg}(\pi^{\star}, \mathcal{C}(\mathcal{H}) \cup \{u\})]$.
\end{itemize}
\end{proof}

\subsection{Any policy yields non-negative regret given a terminal history (Lemma~\ref{lemma:any_policy_has_a_nonnegative_regret_starting_from_a_terminal_history_and_pi_star_as_a_zero_regret_starting_from_a_terminal_history})}\label{appendix_subsec:proof_lemma_B1_single_entrant_noisy_obs}
\begin{proof}[Updated proof of Lemma~\ref{lemma:any_policy_has_a_nonnegative_regret_starting_from_a_terminal_history_and_pi_star_as_a_zero_regret_starting_from_a_terminal_history}.]
We show that Lemmas \ref{lemma:if_H_1_is_terminal_H_t_is_terminal_with_probability_1} and \ref{lemma:if_H_is_terminal_the_expected_opt_equals_the_maximum_instantaneous_revenue} continue to hold. Given Lemmas \ref{lemma:if_H_1_is_terminal_H_t_is_terminal_with_probability_1} and \ref{lemma:if_H_is_terminal_the_expected_opt_equals_the_maximum_instantaneous_revenue}, the updated proof of Lemma~\ref{lemma:any_policy_has_a_nonnegative_regret_starting_from_a_terminal_history_and_pi_star_as_a_zero_regret_starting_from_a_terminal_history} follows the same steps as the proof described in Appendix~\ref{appendix_subsec:proof_lemma_any_policy_has_nonegative_regret_given_terminal_history}. 

If $\mathcal{H}_1$ is terminal, the entrant has at least $k$ purchases; thus for any round $t$ the entrant has at least $k$ purchases, i.e., $\mathcal{H}_t$ is terminal. Hence, Lemma \ref{lemma:if_H_1_is_terminal_H_t_is_terminal_with_probability_1} holds. If $\mathcal{H}$ is terminal, the entrant's attraction parameter is known and as a result $\textsc{OPT}(\mathcal{H}) = \textsc{OPT} = \max\limits_{S \subseteq \mathcal{N}, |S| \leq c}\textsc{Rev}(S;\mathcal{H})$.
Hence, Lemma \ref{lemma:if_H_is_terminal_the_expected_opt_equals_the_maximum_instantaneous_revenue} holds.
\end{proof}

\subsection{Optimal policy regret depends only on entrant reviews (Lemma~\ref{lemmaB2:noisy_observations})}\label{appendix_subsec:proof_lemma_B2_single_entrant_noisy}
The proof follows a similar structure to the proof of Lemma~\ref{lemma:the_regret_of_pi_star_depends_on_H_only_through_J_H} (in Appendix~\ref{appendix_subsec:proof_lemma_optimal_regret_depends_only_on_known_products_attraction_parameters}). The following lemma establishes that the epoch regret of any assortment $S$ and history $\mathcal{H}$ depends on $\mathcal{H}$ only through the entrant reviews $\mathcal{C}(\mathcal{H})$. This lemma is a simpler analogue of Lemma~\ref{lemma:for_any_history_H_the_expected_ex_post_optimum_the_maximum_revenue_from_known_products_and_the_minimum_epoch_regret_depend_on_H_only_through_J(H)}. 
\begin{lemma}\label{lemma:B8_single_entrant_noisy}
    For any non-terminal history $\mathcal{H}$ and assortment $S \in \mathcal{E}$, $\textsc{EpochReg}(S;\mathcal{H})$ depends on $\mathcal{H}$ only through $\mathcal{C}(\mathcal{H})$.
\end{lemma}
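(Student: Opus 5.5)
We will show that each ingredient of $\textsc{EpochReg}(S;\mathcal{H}) = \textsc{OPT}(\mathcal{H})\,\tau(S;\mathcal{H}) - r(S;\mathcal{H})$ is a function of the review vector $\mathcal{C}(\mathcal{H}) = (u_1,\ldots,u_{p(\mathcal{H})})$ alone. With a single entrant and fixed incumbents, the only history-dependent quantities are the entrant's current attraction parameter $w_1(\mathcal{H}) = h(\mathcal{F}(\mathcal{H}))$ and the posterior $\mathcal{F}(\mathcal{H})$ used in $\textsc{OPT}(\mathcal{H})$. So the plan is: first show that the posterior depends on $\mathcal{H}$ only through the reviews, then express $\tau$, $r$ and $\textsc{OPT}(\mathcal{H})$ through $w_1(\mathcal{H})$ and the posterior.

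\textbf{The posterior.} We claim $\mathcal{F}(\mathcal{H}) = \mathcal{F}_{\mathbf{u}}$, where $\mathbf{u} = \mathcal{C}(\mathcal{H})$. The rounds in which the entrant was not purchased, the assortments offered, and the choices $Y_s$ made in them carry no information about the true $w_1$. The reason is that the choice probabilities depend only on the current $w_1(t)$, which is itself a deterministic function of past reviews. Consequently the likelihood of $\mathcal{H}$ given $w_1$ factors as a term independent of $w_1$ times $\prod_j w_1^{u_j}(1-w_1)^{1-u_j}$. Hence the posterior is $\mathrm{Beta}(a+\sum_j u_j,\, b+p-\sum_j u_j)$, and $w_1(\mathcal{H}) = h(\mathcal{F}_{\mathbf{u}})$. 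Since $\mathcal{H}$ is non-terminal, $p(\mathcal{H}) \le k-1$, so $w_1$ is not yet revealed and this Beta form applies.

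\textbf{The expected ex-post optimum.} We have $\textsc{OPT}(\mathcal{H}) = \expect_{w_1\sim\mathcal{F}_{\mathbf{u}}}\bigl[W^\star_{(c)}/(W^\star_{(c)}+w_0)\bigr]$, where $W^\star_{(c)}$ is taken over $\{w_1\}\cup\{w_i\}_{i\ge 2}$. The incumbent parameters are fixed, so this depends on $\mathcal{H}$ only through $\mathbf{u}$.

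\textbf{The epoch quantities $\tau$ and $r$.} Within an epoch no new review arrives, so $w_1(\mathcal{H}_t)$ is constant and equal to $h(\mathcal{F}_{\mathbf{u}})$. Offering $S\in\mathcal{E}$ repeatedly then gives
\[
\tau(S;\mathcal{H}) = \frac{\sum_{i\in S\setminus\{1\}}w_i + h(\mathcal{F}_{\mathbf{u}}) + w_0}{h(\mathcal{F}_{\mathbf{u}})},
\]
as computed in \eqref{eq:expression_tau_single_entrant_noisy_obs}. Similarly, $r(S;\mathcal{H}) = \bigl(\sum_{i\in S\setminus\{1\}}w_i + h(\mathcal{F}_{\mathbf{u}})\bigr)/h(\mathcal{F}_{\mathbf{u}})$, by the same argument as Lemma~\ref{lemma:expected_number_of_purchases_of_known_entrant_until_unknown_is_purchased}. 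Both are functions of $\mathbf{u}$ and the fixed incumbent weights, so combining the three pieces yields the lemma.

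\textbf{Main obstacle.} The only delicate step is the posterior claim: we must justify that the policy's (possibly randomized) assortment choices and the non-purchase outcomes do not affect the posterior. We would handle this by writing out the joint likelihood round by round. Each factor is either a policy randomization term or an MNL choice probability computed from $w_1(t)$, a function of earlier reviews, and neither involves the true $w_1$. The only exception is the review factor $\bern(w_1)$, so these terms cancel in Bayes' rule. Everything else is direct substitution.
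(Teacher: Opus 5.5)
Your proposal is correct and follows essentially the paper's route. The paper also observes that the posterior, and hence $w_1(\mathcal{H}) = h(\mathcal{F}(\mathcal{C}(\mathcal{H})))$, depends only on the reviews. It then writes the epoch regret by counting expected choices: $w_i/w_1(\mathcal{C}(\mathcal{H}))$ for each known product, $w_0/w_1(\mathcal{C}(\mathcal{H}))$ for the outside option, and one entrant purchase. That is the same computation as your closed forms for $\tau$ and $r$.

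Your version is slightly more careful on two points the paper leaves implicit: you justify the posterior claim via the likelihood factorization, and you check explicitly that $\textsc{OPT}(\mathcal{H})$ depends only on $\mathcal{C}(\mathcal{H})$.
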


\begin{proof}[Proof of Lemma~\ref{lemma:B8_single_entrant_noisy}.]
The entrant posterior given $\mathcal{H}$ depends on $\mathcal{H}$ only through the reviews $\mathcal{C}(\mathcal{H})$; as a result we denote it by $\mathcal{F}(\mathcal{C}(\mathcal{H}))$. The customer's attraction parameter for the entrant is $w_1(\mathcal{C}(\mathcal{H})) = h(\mathcal{F}(\mathcal{C}(\mathcal{H})))$. If the platform offers a set $S = S^{K} \cup \{1\}$ containing the entrant repeatedly, every known product $i \in S^{K}$ is purchased $\frac{w_i}{w_1(\mathcal{C}(\mathcal{H}))}$ rounds in expectation yielding a regret of $\textsc{OPT}-1$ per round, the outside option is chosen $\frac{w_0}{w_1(\mathcal{C}(\mathcal{H}))}$ rounds in expectation yielding a regret of $\textsc{OPT}$ per round, and the entrant is chosen once yielding a regret of $\textsc{OPT}-1$. As a result the epoch regret is 
$$\textsc{EpochReg}(S;\mathcal{H}) = (\textsc{OPT}-1)\frac{\sum_{i \in S^k}w_i}{w_1(\mathcal{C}(\mathcal{H}))} + \textsc{OPT}\frac{w_0}{w_1(\mathcal{C}(\mathcal{H}))}  + (\textsc{OPT}-1).$$
Given that the right hand side depends on $\mathcal{H}$ only through $\mathcal{C}(\mathcal{H})$ the lemma follows.
\end{proof}

The next lemma provides a one-epoch-lookahead decomposition for the regret of $\pi^{\star}$. For a history $\mathcal{H}$, let $Z^{\mathcal{H}}_{\star}$ be the first round the next entrant review is obtained if $\pi^{\star}$ starts from a history $\mathcal{H}$. This is the analogue of Lemma~\ref{lemma:the_regret_of_pi_star_can_be_decomposed_into_the_first_epoch_regret_and_the_future_regret_general_form}. Let $\mathcal{H}_{Z^{\mathcal{H}}_{\star}}$ and $S_{Z^{\mathcal{H}}_{\star}}$ be the history and the assortment offered at round $Z^{\mathcal{H}}_{\star}$ respectively. 

\begin{lemma}\label{lemma:B9_single_entrant_noisy}
    For any non-terminal history $\mathcal{H}$, the regret of $\pi^{\star}$ can be decomposed as
     $$\textsc{Reg}(\pi^{\star};\mathcal{H}) = \min\limits_{S \in \mathcal{E}} \textsc{EpochReg}(S;\mathcal{H}) \nonumber + \expect_{u \sim \bern(w_1)}[\textsc{Reg}(\pi^{\star};\mathcal{H}_{Z^{\mathcal{H}}_{\star}} \cup (S_{Z^{\mathcal{H}}_{\star}}, 1, u))|\mathcal{H}_1 = \mathcal{H}].$$
\end{lemma}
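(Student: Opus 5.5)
The plan mirrors the proof of Lemma~\ref{lemma:the_regret_of_pi_star_can_be_decomposed_into_the_first_epoch_regret_and_the_future_regret_general_form}, with an entrant review taking the place of an unknown-entrant purchase. Fix a non-terminal history $\mathcal{H}$. The first step is to show that $Z^{\mathcal{H}}_{\star}<\infty$ almost surely. Since $\mathcal{H}$ is non-terminal, $\pi^{\star}$ offers some $S\in\mathcal{E}$ in every round until the next review. In each such round the entrant is chosen with probability at least $\underline{h}/((c+1)w_{\max})$, where $w_{\max}$ is the largest attraction parameter among $\mathcal{N}\cup\{0\}$ given $\mathcal{H}$. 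This gives $\expect[Z^{\mathcal{H}}_{\star}]<\infty$. Using this, I split the regret of $\pi^{\star}$ into the regret accumulated through round $Z^{\mathcal{H}}_{\star}$ and the regret accumulated after it. By the tower property and the strong Markov structure of the history process, the second part equals the expected regret of $\pi^{\star}$ started from $\mathcal{H}_{Z^{\mathcal{H}}_{\star}}\cup(S_{Z^{\mathcal{H}}_{\star}},1,u)$, where $u$ is the newly generated review, drawn as $\bern(w_1)$ conditional on the true $w_1$. Averaging over the posterior of $w_1$ then gives exactly the second term in the statement. Before any of this, I need $\textsc{Reg}(\pi^{\star};\cdot)$ to be finite so that the split is legitimate. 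This follows from the bounds of Lemma~\ref{lemmaB3:noisy_observations}, or more simply from the fact that the per-round regret is at most one and the epochs have finite expected length.

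Next I would handle the first-epoch term. For every round $t\le Z^{\mathcal{H}}_{\star}$, the history $\mathcal{H}_t$ has the same review vector as $\mathcal{H}$, so $\mathcal{C}(\mathcal{H}_t)=\mathcal{C}(\mathcal{H})$. Hence the posterior, the entrant's attraction parameter $w_1(\mathcal{H}_t)=h(\mathcal{F}(\mathcal{C}(\mathcal{H})))$, $\textsc{OPT}(\mathcal{H}_t)$, $\textsc{Rev}(S;\mathcal{H}_t)$, and, by Lemma~\ref{lemma:B8_single_entrant_noisy}, $\textsc{EpochReg}(S;\mathcal{H}_t)$ all coincide with their values at $\mathcal{H}$. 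It follows that $\pi^{\star}$ only ever offers assortments in $\mathcal{A}(\mathcal{H})=\argmin_{S\in\mathcal{E}}\textsc{EpochReg}(S;\mathcal{H})$.

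Let $N(S)$ be the number of rounds in which $S$ is offered, and let $q(S;\mathcal{H})$ be the probability that the entrant is chosen when $S$ is offered. I would then repeat the chain of equalities in \eqref{eq:expressing_regret_until_first_unknown_entrant_purchase}. Write the first-epoch regret as $\sum_{S\in\mathcal{A}(\mathcal{H})}(\textsc{OPT}(\mathcal{H})-\textsc{Rev}(S;\mathcal{H}))\expect[N(S)]$. Multiply and divide each term by $q(S;\mathcal{H})$, and use $\textsc{EpochReg}(S;\mathcal{H})=(\textsc{OPT}(\mathcal{H})-\textsc{Rev}(S;\mathcal{H}))/q(S;\mathcal{H})$, which holds because $1/q$ is the expected epoch length. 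Then use $\sum_{S}q(S;\mathcal{H})\expect[N(S)]=1$, since exactly one review occurs in the epoch almost surely. Together these collapse the sum to $\min_{S\in\mathcal{E}}\textsc{EpochReg}(S;\mathcal{H})$.

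I expect the main obstacle to be justifying the replacement of $\textsc{OPT}$ by $\textsc{OPT}(\mathcal{H}_t)$ inside the regret sum. The true $w_1$ is correlated with the reviews, so this requires conditioning. During the epoch, the choice process depends on $w_1$ only through the fixed value $h(\mathcal{F}(\mathcal{C}(\mathcal{H})))$, not through the true $w_1$. The rounds before $Z^{\mathcal{H}}_{\star}$ therefore carry no information about $w_1$, and $\expect[\textsc{OPT}\mid\mathcal{H}_t]=\expect[\textsc{OPT}\mid\mathcal{H}]$ for $t\le Z^{\mathcal{H}}_{\star}$. This lets the tower property be applied round by round, and it also justifies the claim that the review $u$ generated at round $Z^{\mathcal{H}}_{\star}$ has the predictive law $\bern(w_1)$ with $w_1\sim\mathcal{F}(\mathcal{C}(\mathcal{H}))$.
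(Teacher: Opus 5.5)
Your proposal is correct and follows essentially the same route as the paper, which reruns the proof of Lemma~\ref{lemma:the_regret_of_pi_star_can_be_decomposed_into_the_first_epoch_regret_and_the_future_regret_general_form} with two changes: the bound $\expect[Z^{\mathcal{H}}_{\star}]<(c+1)w_{\max}/\underline{h}$, and the next review $u$ replacing the entrant purchase in the future-regret term. Your explicit argument that $\textsc{OPT}(\mathcal{H}_t)=\textsc{OPT}(\mathcal{H})$ throughout the epoch fills in a step the paper leaves implicit. One caution: do not justify finiteness via Lemma~\ref{lemmaB3:noisy_observations}, because the paper proves that lemma using the present one, which would make the argument circular. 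Use your direct argument instead: per-round regret is bounded by one, there are at most $k$ epochs of finite expected length, and the regret is exactly zero once the history is terminal.
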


\begin{proof}[Proof of Lemma~\ref{lemma:B9_single_entrant_noisy}.]
The proof follows the same steps as the proof of Lemma~\ref{lemma:the_regret_of_pi_star_can_be_decomposed_into_the_first_epoch_regret_and_the_future_regret_general_form} (in Appendix~\ref{appendix_subsec:proof_lemma_B9}) with two modifications. First, $\expect[Z^{\mathcal{H}}_{\star}] < \frac{(c+1) w_{\text{max}}}{h(\mathcal{F})}$ is replaced with $\expect[Z^{\mathcal{H}}_{\star}] < \frac{(c+1) w_{\text{max}}}{\underline{h}}$ as the entrant is purchased with probability at least $\frac{\underline{h}}{(c+1) w_{\text{max}}}$ at every round where $w_{\text{max}} = \max\{w_i(\mathcal{H}): i \in \mathcal{N} \cup \{0\}\}$. Second, in the future regret term $\expect[\textsc{Reg}(\pi^{\star};\mathcal{H}_{Z^{\mathcal{H}}_{\star}} \cup (S_{Z^{\mathcal{H}}_{\star}}, Y_{Z^{\mathcal{H}}_{\star}}, w_{Y_{Z^{\mathcal{H}}_{\star}}}))|\mathcal{H}_1 = \mathcal{H}]$, the unknown entrant purchase         is replaced with the next entrant review, i.e., 
        $\expect_{u \sim \bern(w_1)}[\textsc{Reg}(\pi^{\star};\mathcal{H}_{Z^{\mathcal{H}}_{\star}} \cup (S_{Z^{\mathcal{H}}_{\star}}, 1, u))|\mathcal{H}_1 = \mathcal{H}]$.
\end{proof}

\begin{proof}[Proof of Lemma~\ref{lemmaB2:noisy_observations}.]
To prove the lemma it suffices to show that 
\begin{equation}\label{preoperty:regret_of_pi_star_is_the_for_any_two_histories_with_the_same_set_of_known_products_attraction_parameters_single_entrant_noisy}
    \textsc{Reg}(\pi^{\star};\mathcal{H}^1) = \textsc{Reg}(\pi^{\star};\mathcal{H}^2) \text{ for any $\mathcal{H}^1$ and $\mathcal{H}^2$ with $\mathcal{C}(\mathcal{H}^1) =\mathcal{C}(\mathcal{H}^2)$}.
\end{equation}
We prove \eqref{preoperty:regret_of_pi_star_is_the_for_any_two_histories_with_the_same_set_of_known_products_attraction_parameters_single_entrant_noisy} by induction on the number of reviews remaining until the entrant is known given $\mathcal{H}^1$ and $\mathcal{H}^2$. 

\paragraph{Base case.} The entrant has at least $k$ reviews under $\mathcal{H}^1$ and $\mathcal{H}^2$; thus both $\mathcal{H}^1$ and $\mathcal{H}^2$ are terminal. Therefore,  $\textsc{Reg}(\pi^{\star};\mathcal{H}^1) = \textsc{Reg}(\pi^{\star};\mathcal{H}^2) = 0$ by Lemma~\ref{lemma:any_policy_has_a_nonnegative_regret_starting_from_a_terminal_history_and_pi_star_as_a_zero_regret_starting_from_a_terminal_history}.

\paragraph{Induction step.}
Suppose \eqref{preoperty:regret_of_pi_star_is_the_for_any_two_histories_with_the_same_set_of_known_products_attraction_parameters_single_entrant_noisy} holds for any two histories $\tilde{\mathcal{H}}^1$ and $\tilde{\mathcal{H}}^2$ with $\mathcal{C}(\tilde{\mathcal{H}}^1) = \mathcal{C}(\tilde{\mathcal{H}}^2)$ and at most $k$ unknown entrants left given $\tilde{\mathcal{H}}^1$ and $\tilde{\mathcal{H}}^2$. Let $\mathcal{H}^1$ and $\mathcal{H}^2$ be such that $\mathcal{C}(\mathcal{H}^1) = \mathcal{C}(\mathcal{H}^2)$ and there are $k+1$ unknown entrants left given $\mathcal{H}^1$ and $\mathcal{H}^2$. We consider cases based on whether $\mathcal{H}^1$ is terminal. 
\begin{itemize}
    \item \textbf{$\mathcal{H}^1$ is terminal.} Then the entrant is known under $\mathcal{H}^1$. Thus, the entrant is also known under $\mathcal{H}^2$. As a result, both both $\mathcal{H}^1$ and $\mathcal{H}^2$ are terminal and Lemma~\ref{lemma:any_policy_has_a_nonnegative_regret_starting_from_a_terminal_history_and_pi_star_as_a_zero_regret_starting_from_a_terminal_history} implies $\textsc{Reg}(\pi^{\star};\mathcal{H}^1) = \textsc{Reg}(\pi^{\star};\mathcal{H}^2) = 0$
    \item  \textbf{$\mathcal{H}^1$ is not terminal} This case is completed following the same steps as in the case where $\mathcal{H}^1$ is not terminal in the proof of Lemma~\ref{lemma:the_regret_of_pi_star_depends_on_H_only_through_J_H} (Appendix~\ref{appendix_subsec:proof_lemma_optimal_regret_depends_only_on_known_products_attraction_parameters}) with several small modifications. First, Lemma~\ref{lemma:for_any_history_H_the_expected_ex_post_optimum_the_maximum_revenue_from_known_products_and_the_minimum_epoch_regret_depend_on_H_only_through_J(H)} is replaced with Lemma~\ref{lemma:B8_single_entrant_noisy} and Lemma~\ref{lemma:the_regret_of_pi_star_can_be_decomposed_into_the_first_epoch_regret_and_the_future_regret_general_form} is replaced with Lemma~\ref{lemma:B9_single_entrant_noisy}. Second, the set of known products' attraction parameter $J(\mathcal{H}^1)$ and $J(\mathcal{H}^1)$ are replaced with the entrant reviews $\mathcal{C}(\mathcal{H}^1)$ and $\mathcal{C}(\mathcal{H}^2)$. Third, instead of $Z^1$ being the first round at which an unknown entrant is purchased starting from $\mathcal{H}^1$, $Z^1$ is the first round at which the next entrant review is obtained starting from $\mathcal{H}^1$. Fourth, similar to the proof of Lemma~\ref{lemma:B9_single_entrant_noisy} the term $\expect\limits_{\mathcal{H}_{Z^1}, S_{Z^1}, Y_{Z^1}, w_{Y_{Z^1}}} [\textsc{Reg}(\pi^{\star};\mathcal{H}_{Z^1} \cup (S_{Z^1}, Y_{Z^1}, w_{Y_{Z^1}}))|\mathcal{H}_1 = \mathcal{H}^1] $
    is replaced by 
    $\expect_{u \sim \bern(w_1)}[\textsc{Reg}(\pi^{\star};\mathcal{H}_{Z^{1}} \cup (S_{Z^{1}}, 1, u))|\mathcal{H}_1 = \mathcal{H}^1]$. Finally, the future regrets after the first unknown entrant purchase given $\mathcal{H}^1$ and $\mathcal{H}^2$, $\expect\limits_{w \sim \mathcal{F}} [f(J(\mathcal{H}^1) \cup \{w\})]$ and $\expect\limits_{w \sim \mathcal{F}} [f(J(\mathcal{H}^1) \cup \{w\})]$, are replaced with the future regrets after the next entrant review given $\mathcal{H}^1$ and $\mathcal{H}^2$, $\expect\limits_{u \sim \bern(w_1)} [f(\mathcal{C}(\mathcal{H}^1) \cup \{u\})]$ and $\expect\limits_{u \sim \bern(w_1)} [f(\mathcal{C}(\mathcal{H}^1) \cup \{u\})]$.
\end{itemize}
\end{proof}

\subsection{Extended one-epoch-lookahead decomposition (Lemma~\ref{lemma:the_regret_of_pi_star_can_be_decomposed_into_the_first_epoch_regret_and_the_future_regret_single_entrant_noisy})}\label{appendix_subsec:lemma_B3_single_entrant_noisy}
\begin{proof}[Proof of Lemma~\ref{lemma:the_regret_of_pi_star_can_be_decomposed_into_the_first_epoch_regret_and_the_future_regret_single_entrant_noisy}.]
The proof uses the same steps as the proof of Lemma~\ref{lemma:the_regret_of_pi_star_can_be_decomposed_into_the_first_epoch_regret_and_the_future_regret} (in Appendix~\ref{appendix_subsec:proof_lemma_one_epoch_lookahead_decomposition_optimal_policy}) with several modifications. First, Lemma~\ref{lemma:the_regret_of_pi_star_depends_on_H_only_through_J_H} is replaced with Lemma~\ref{lemmaB2:noisy_observations} and Lemma~\ref{lemma:the_regret_of_pi_star_can_be_decomposed_into_the_first_epoch_regret_and_the_future_regret_general_form} is replaced with Lemma~\ref{lemma:B9_single_entrant_noisy}. Second, $\textsc{Reg}(\pi^{\star},J(\mathcal{H}))$ is replaced with $\textsc{Reg}(\pi^{\star},\mathcal{C}(\mathcal{H}))$. Third, as in the proof of Lemma~\ref{lemma:B9_single_entrant_noisy}, in the future regret term $\expect[\textsc{Reg}(\pi^{\star};\mathcal{H}_{Z^{\mathcal{H}}_{\star}} \cup (S_{Z^{\mathcal{H}}_{\star}}, Y_{Z^{\mathcal{H}}_{\star}}, w_{Y_{Z^{\mathcal{H}}_{\star}}}))|\mathcal{H}_1 = \mathcal{H}]$, the unknown entrant purchase         is replaced with the next entrant review, i.e., 
        $\expect_{u \sim \bern(w_1)}[\textsc{Reg}(\pi^{\star};\mathcal{H}_{Z^{\mathcal{H}}_{\star}} \cup (S_{Z^{\mathcal{H}}_{\star}}, 1, u))|\mathcal{H}_1 = \mathcal{H}]$. Similarly, the term $\expect_{w \sim \mathcal{F}}[\textsc{Reg}(\pi^{\star}, J(\mathcal{H}) \cup \{w\})|\mathcal{H}_1 = \mathcal{H}]$ is replaced with $\expect_{u \sim \bern(w_1)}[\textsc{Reg}(\pi^{\star}, \mathcal{C}(\mathcal{H}) \cup \{u\})|\mathcal{H}_1 = \mathcal{H}]$.
\end{proof}

\subsection{Static assortment policies minimize epoch regret (Lemma~\ref{lemma:epoch_regret_of_any_policy_is_at_least_the_minimum_epoch_regret_over_fixed_assortments_single_entrant_noisy})}\label{appendix_subsec:proof_lemma_B4_single_entrant_noisy}

\begin{proof}[Proof of Lemma~\ref{lemma:epoch_regret_of_any_policy_is_at_least_the_minimum_epoch_regret_over_fixed_assortments_single_entrant_noisy}.]
The proof follows the same steps as the one of Lemma~\ref{lemma:epoch_regret_of_any_policy_is_at_least_the_minimum_epoch_regret_over_fixed_assortments} (in Appendix~\ref{appendix_subsec:proof_lemma_epoch_regret_of_any_policy_is_at_least_minimum_epoch_regret_of_static_assortment_policies}) with two modifications. First, the upper bound $\expect[N(S)|\mathcal{H}_1 = \mathcal{H}\Big] \leq \frac{\sum_{i\in S} w_i(\mathcal{H}) + w_0}{h(\mathcal{H})}$ for $S \in \mathcal{E}(\mathcal{H})$ is replaced with $\expect[N(S)|\mathcal{H}_1 = \mathcal{H}\Big] \leq \frac{\sum_{i\in S} w_i(\mathcal{H}) + w_0}{\underline{h}}$ for $S \in \mathcal{E}$ as the entrant is purchased with probability at least $\frac{\underline{h}}{\sum_{i\in S} w_i(\mathcal{H}) + w_0}$ each time $S$ is offered. Second, $\mathcal{E}(\mathcal{H})$ is replaced with $\mathcal{E}$.
\end{proof}

\subsection{Optimal policy regret is bounded independently of the history (Lemma~\ref{lemmaB3:noisy_observations})}\label{appendix_subsec:regret_of_optimal_policy_is_bounded_below_and_above_single_entrant_noisy}
\begin{proof}[Proof of Lemma~\ref{lemmaB3:noisy_observations}.]
The proof follows the same steps as the proof of Lemma~\ref{lemma:for_any_history_with_k_unknown_entrants_the_regret_of_pi_star_is_lower_bounded_and_upper_bounded_by_quantities_which_depend_on_k} (in Appendix~\ref{appendix_subsec:regret_of_optimal_policy_is_bounded_below_and_above}) with several modifications. First, Lemma~\ref{lemma:the_regret_of_pi_star_can_be_decomposed_into_the_first_epoch_regret_and_the_future_regret_general_form} is replaced with Lemma~\ref{lemma:B9_single_entrant_noisy}. Second, instead of showing $\textsc{Reg}(\pi^{\star};\mathcal{H}) \geq -k$ and $\textsc{Reg}(\pi^{\star};\mathcal{H}) \leq \frac{k}{h(\mathcal{F})}$ for any history $\mathcal{H}$ with $k$ unknown entrants, we show $\textsc{Reg}(\pi^{\star};\mathcal{H}) \geq -k$ and $\textsc{Reg}(\pi^{\star};\mathcal{H}) \leq \frac{k}{\underline{h}}$ for any history $\mathcal{H}$ with $k$ reviews remaining until the entrant is known. Third, as in the proof of Lemma~\ref{lemma:B9_single_entrant_noisy}, in the future regret term $\expect[\textsc{Reg}(\pi^{\star};\mathcal{H}_{Z^{\mathcal{H}}_{\star}} \cup (S_{Z^{\mathcal{H}}_{\star}}, Y_{Z^{\mathcal{H}}_{\star}}, w_{Y_{Z^{\mathcal{H}}_{\star}}}))|\mathcal{H}_1 = \mathcal{H}]$, the unknown entrant purchase is replaced with the next entrant review, i.e., $\expect_{u \sim \bern(w_1)}[\textsc{Reg}(\pi^{\star};\mathcal{H}_{Z^{\mathcal{H}}_{\star}} \cup (S_{Z^{\mathcal{H}}_{\star}}, 1, u))|\mathcal{H}_1 = \mathcal{H}]$. Similarly, the term $\expect_{w \sim \mathcal{F}}[\textsc{Reg}(\pi^{\star}, J(\mathcal{H}) \cup \{w\})|\mathcal{H}_1 = \mathcal{H}]$ is replaced with $\expect_{u \sim \bern(w_1)}[\textsc{Reg}(\pi^{\star}, \mathcal{C}(\mathcal{H}) \cup \{u\})|\mathcal{H}_1 = \mathcal{H}]$. Fourth, in inequality \eqref{ineq:epoch_reg_upper_bound}, the last inequality $\frac{1}{\sum_{i \in S^{\textsc{u}}} w_i(\mathcal{H})} \leq \frac{1}{h(\mathcal{H})}$ is replaced with $\frac{1}{\sum_{i \in S^{\textsc{u}}} w_i(\mathcal{H})} \leq \frac{1}{\underline{h}}$. \end{proof}

\end{document}